\documentclass[11pt,letterpaper]{article}
\usepackage[english]{babel}
\usepackage{enumitem}
\usepackage{amsmath,amssymb,amsthm,mathrsfs}
\usepackage[noadjust]{cite}
\usepackage{caption}

\usepackage[letterpaper,left=2.55cm,right=2.55cm,top=2.3cm,bottom=2.55cm]{geometry}

\usepackage{xcolor}
\usepackage{hyperref}
\hypersetup{colorlinks=true,linkcolor=blue,citecolor=blue}

\newtheorem{lemma}{Lemma}[section]
\newtheorem{proposition}{Proposition}[section]
\newtheorem{theorem}{Theorem}[section]
\newtheorem{corollary}{Corollary}[section]

\theoremstyle{definition}
\newtheorem{definition}{Definition}[section]
\newtheorem{fact}{Fact}[section]
\newtheorem{notation}{Notation}[section]
\newtheorem{example}{Example}[section]

\theoremstyle{remark}
\newtheorem{remark}{Remark}[section]

\DeclareMathOperator{\injtp}{\check{\otimes}\hspace{0.3mm}}

\DeclareMathOperator{\altp}{\breve{\otimes}\hspace{0.2mm}}

\DeclareMathOperator{\prtp}{\widehat{\otimes}\hspace{0.3mm}}

\DeclareMathOperator{\intp}{\overset{\Gamma}{\otimes}\hspace{0.3mm}}

\newcommand{\tre}{\hspace{0.3mm}}
\newcommand{\quattro}{\hspace{0.4mm}}
\newcommand{\cinque}{\hspace{0.5mm}}
\newcommand{\sei}{\hspace{0.6mm}}

\newcommand{\otto}{\hspace{0.8mm}}

\newcommand{\dieci}{\hspace{1mm}}
\newcommand{\dodici}{\hspace{1.2mm}}

\newcommand{\msei}{\hspace{-0.6mm}}

\newcommand{\mdodici}{\hspace{-1.2mm}}

\newcommand{\fin}{\hspace{0.3mm}}

\newcommand{\defi}{\mathrel{\mathop:}=}
\newcommand{\ifed}{=\mathrel{\mathop:}}

\newcommand{\afhy}{\mathsf{H}_{[\tr=1]}}

\newcommand{\ltrc}{\mathscr{L}_1(\hh)}

\newcommand{\btrc}{\mathcal{B}(\trc)}
\newcommand{\noro}{\|_{[1]}}
\newcommand{\normo}{\|\cdot\|_{[1]}}

\newcommand{\ptph}{\mathfrak{E}\hspace{0.3mm}}
\newcommand{\ptpj}{\mathfrak{G}\hspace{0.3mm}}
\newcommand{\ptphj}{\ptph\prtp\ptpj}

\newcommand{\st}{\mathrm{st}}
\newcommand{\stbhj}{\st(\bophj)}
\newcommand{\stbhjs}{\st(\bophj)_{\mathrm{s}}}
\newcommand{\stal}{\mho}
\newcommand{\stals}{\mho_{\mathrm{s}}}
\newcommand{\stald}{\mho_{D}}

\newcommand{\prot}{\mathfrak{Q}}

\newcommand{\indm}{\mathtt{I}}
\newcommand{\indn}{\mathtt{J}}

\newcommand{\dom}{\mathrm{dom}}

\newcommand{\vbf}{\vartheta}
\newcommand{\svbf}{\breve{\vbf}}
\newcommand{\vlm}{\Theta}

\newcommand{\aih}{\mathfrak{J}_{\mathcal{H}}}
\newcommand{\aij}{\mathfrak{J}_{\hspace{-0.3mm}\mathcal{J}}}

\newcommand{\isot}{\mathfrak{I}\hspace{0.3mm}}
\newcommand{\isoh}{\hspace{0.5mm}\mathsf{h}\hspace{0.2mm}}
\newcommand{\isoj}{\hspace{0.5mm}\mathsf{j}\hspace{0.3mm}}
\newcommand{\ops}{\mathtt{S}}
\newcommand{\opt}{\mathtt{T}}
\newcommand{\opc}{\mathtt{C}}

\newcommand{\ent}{\mathscr{E}}
\newcommand{\entvw}{\mathscr{E}_{\vv}^{\ww}\hspace{-0.5mm}}
\newcommand{\entex}{\mathscr{E}_{\mathrm{e}}}
\newcommand{\enthe}{\mathscr{H}}

\newcommand{\hjo}{(\hh\otimes\jj)_{\star}}
\newcommand{\pustahjo}{\mathcal{P}\hjo}

\newcommand{\repa}{\Re\mathrm{e}}
\newcommand{\kre}{K_{\mbox{\tiny $\Re$}}}
\newcommand{\fre}{F_{\mbox{\tiny $\Re$}}}

\newcommand{\erre}{\mathbb{R}}
\newcommand{\errep}{\mathbb{R}^{\mbox{\tiny $+$}}}

\newcommand{\erreps}{\mathbb{R}_{\hspace{0.3mm}\ast}^{\mbox{\tiny $+$}}}
\newcommand{\errems}{\mathbb{R}_{\hspace{0.3mm}\ast}^{\mbox{\tiny $-$}}}
\newcommand{\erres}{\mathbb{R}_\ast^{\phantom{\mbox{\tiny $i$}}}}
\newcommand{\toro}{\mathbb{T}}
\newcommand{\rati}{\mathbb{Q}}
\newcommand{\itg}{\mathbb{Z}}

\newcommand{\trsp}{\mathsf{T}}

\newcommand{\balp}{\ball\big(\thptj\big)_{\hspace{-0.3mm}\mbox{\tiny $+$}}}

\newcommand{\nep}{{\mathtt{N}_{\epsilon}}}

\newcommand{\xjl}{x_{jl}}
\newcommand{\xjln}{x_{j(n)\hspace{0.4mm}l(n)}}
\newcommand{\jln}{(j(n),l(n))}
\newcommand{\jlm}{(j(m),l(m))}
\newcommand{\kmn}{(k(n),m(n))}

\newcommand{\iset}{\mathcal{I}}
\newcommand{\sba}{\mathfrak{e}}
\newcommand{\cf}{\mathfrak{c}}

\newcommand{\tr}{\mathrm{tr}}
\newcommand{\card}{\mathrm{card}}

\newcommand{\ran}{\mathrm{ran}\hspace{0.3mm}}
\newcommand{\srank}{\mathrm{srank}\hspace{0.3mm}}
\newcommand{\sra}{\mathsf{s}\hspace{0.3mm}}
\newcommand{\scfs}{\mathrm{scfs}\hspace{0.3mm}}

\newcommand{\opd}{\mathsf{Opd}\hspace{0.3mm}\big(\thptj\big)}
\newcommand{\opds}{\mathsf{Opd}\hspace{0.3mm}\big(\thsptjs\big)}
\newcommand{\hopd}{\mathsf{HOpd}\hspace{0.3mm}\big(\thsptjs\big)}

\newcommand{\optrn}{\mbox{\scriptsize $\|\hspace{0.3mm}\cdot\hspace{0.3mm}\ptrn$-}\hspace{0.4mm}\mathtt{Osd}\hspace{0.3mm}}
\newcommand{\optrntr}{\mbox{\scriptsize $\ptrntr{\hspace{0.5mm}\cdot\hspace{0.5mm}}$-}\hspace{0.4mm}\mathtt{Osd}\hspace{0.3mm}}
\newcommand{\hoptrn}{\mbox{\scriptsize $\|\hspace{0.3mm}\cdot\hspace{0.3mm}\ptrn$-}\hspace{0.4mm}\mathtt{OHsd}\hspace{0.3mm}}

\newcommand{\paih}{\mathscr{I}_1(\hh)}
\newcommand{\paij}{\mathscr{I}_1(\jj)}

\newcommand{\kac}{\varkappa_C}
\newcommand{\kacn}{\varkappa_{C_n}}

\newcommand{\bothptj}{\bo\big(\trc,\trcjj;\thptj\big)}

\newcommand{\con}{\mathscr{C}}
\newcommand{\ves}{\mathcal{V}}

\newcommand{\pode}{\mathsf{Pod}\hspace{0.3mm}\big(\thsptjs\big)}

\newcommand{\act}{\mathsf{a}_{V}}

\newcommand{\eik}{\eee^{\ima\tre 2\pi k\argo}}
\newcommand{\eikx}{\eee^{-\ima\tre 2\pi kx}}
\newcommand{\eit}{\eee^{\ima\tre 2\pi\vartheta}}
\newcommand{\eix}{\eee^{\ima\tre 2\pi x}}

\newcommand{\PP}{R}

\newcommand{\bfo}{\omega}

\newcommand{\elluno}{\ell^{\mbox{\tiny $1$}}}
\newcommand{\elldue}{\ell^{\mbox{\tiny $2$}}}

\newcommand{\hsjth}{\mathcal{B}_2(\jj;\hh)}
\newcommand{\trjth}{\mathcal{B}_1(\jj;\hh)}

\newcommand{\pnh}[1]{{\left\vert\kern-0.25ex\left\vert #1\right\vert\kern-0.25ex\right\vert}^{\hspace{-0.3mm}\sotimp}}

\newcommand{\Dl}{\Delta_l}
\newcommand{\Den}{D^{\mbox{\tiny $(\enne)$}}\hspace{-0.3mm}}
\newcommand{\pien}{\varpi^{\mbox{\tiny $(\enne)$}}\hspace{-0.3mm}}

\newcommand{\neek}{\phi_k}
\newcommand{\neel}{\phi_l}
\newcommand{\nffk}{\psi_k}
\newcommand{\nffl}{\psi_l}

\newcommand{\nhekl}{\widehat{\hspace{-0.7mm}\phantom{.}\neek\neel}}

\newcommand{\nhfkl}{\widehat{\nffk\nffl}}
\newcommand{\nhekk}{\widehat{\neek\neek}}
\newcommand{\nhfll}{\widehat{\hspace{-0.5mm}\phantom{.}\nffl\nffl}}

\newcommand{\cast}{\mathrm{C}^\ast}
\newcommand{\wst}{w^\ast\hspace{-0.5mm}}
\newcommand{\clcowst}{\overline{\mathrm{co}}^{\hspace{0.4mm}w^\ast}\hspace{-0.8mm}}
\newcommand{\wstcl}{\mbox{$\wst$-$\hspace{0.3mm}\cl$}}
\newcommand{\wstli}{\mbox{$\wst$--}\lim}
\newcommand{\awst}{\mathcal{A}w^\ast\hspace{-0.5mm}}
\newcommand{\awstcl}{\mbox{$\awst$-$\hspace{0.3mm}\cl$}}
\newcommand{\awstli}{\mbox{$\awst$--}\lim}
\newcommand{\ewst}{ew^\ast\hspace{-0.5mm}}
\newcommand{\ewstcl}{\mbox{$\ewst$-$\hspace{0.3mm}\cl$}}

\newcommand{\cwst}{cw^\ast\hspace{-0.5mm}}
\newcommand{\clcocws}{\overline{\mathrm{co}}^{\hspace{0.4mm}cw^\ast}\hspace{-0.8mm}}
\newcommand{\cwstcl}{\mbox{$\cwst$-$\hspace{0.3mm}\cl$}}
\newcommand{\cwstclb}{\overline{\shjb}^{\hspace{0.4mm}cw^\ast}\hspace{-0.8mm}}
\newcommand{\cwstli}{\mbox{$\cwst$--}\lim}

\newcommand{\hj}{\hh_j}
\newcommand{\jl}{\jj_l}

\newcommand{\pij}{\pi_j}
\newcommand{\pil}{\varpi_l}
\newcommand{\pim}{\pi_m}
\newcommand{\pin}{\varpi_n}
\newcommand{\pijl}{\pij\otimes\pil}
\newcommand{\pimn}{\pim\otimes\pin}
\newcommand{\cjlmn}{C_{jl}^{mn}}
\newcommand{\cjl}{C_{jl}^{jl}}

\newcommand{\nuno}{{\mathsf{n}_1}}
\newcommand{\ndue}{{\mathsf{n}_2}}
\newcommand{\nn}{\mathsf{n}}

\newcommand{\idpil}{I\otimes\pil}

\newcommand{\trnsum}{\mbox{\scriptsize $\|\hspace{0.2mm}\cdot\hspace{0.2mm}\trn$--}\hspace{-0.5mm}\sum}
\newcommand{\ttrnsum}{\mbox{\scriptsize $\|\hspace{0.2mm}\cdot\hspace{0.2mm}\ttrn$--}\hspace{-0.5mm}\sum}
\newcommand{\ptrnsum}{\mbox{\scriptsize $\|\hspace{0.2mm}\cdot\hspace{0.2mm}\ptrn$--}\hspace{-0.5mm}\sum}
\newcommand{\ptrnsumb}{\mbox{\scriptsize $\|\hspace{0.2mm}\cdot\hspace{0.2mm}\ptrn$--}\hspace{-2.3mm}\sum}
\newcommand{\ptrntrsum}{\mbox{\scriptsize $\ptrntr{\hspace{0.5mm}\cdot\hspace{0.5mm}}$--}\hspace{-0.5mm}\sum}

\newcommand{\vv}{\mathcal{V}}
\newcommand{\ww}{\mathcal{W}}
\newcommand{\zz}{\mathcal{Z}}
\newcommand{\tzz}{\widetilde{\mathcal{Z}}}

\newcommand{\lath}{\Lambda_{\theta}}
\newcommand{\linv}{\lin(\vv)}
\newcommand{\linvw}{\lin(\vv,\ww;\ccc)}
\newcommand{\linvwad}{\lin(\vv,\ww;\ccc)^\prime}
\newcommand{\linvwz}{\lin(\vv,\ww;\zz)}
\newcommand{\linvwc}{\lin(\vv,\ww;\ccc)}
\newcommand{\linvawz}{\lin\big(\vv\altp\ww;\zz\big)}

\newcommand{\vvad}{\mathcal{V}^\prime}
\newcommand{\wwad}{\mathcal{W}^\prime}

\newcommand{\vvww}{\vv\otimes\ww}
\newcommand{\trcvv}{\mathcal{B}_1(\vv)}
\newcommand{\trcww}{\mathcal{B}_1(\ww)}
\newcommand{\trcvvs}{\mathcal{B}_1(\vv)_{\hspace{-0.3mm}\mbox{\tiny $\mathbb{R}$}}}
\newcommand{\trcwws}{\mathcal{B}_1(\ww)_{\hspace{-0.3mm}\mbox{\tiny $\mathbb{R}$}}}
\newcommand{\trcvw}{\mathcal{B}_1(\vv\otimes\ww)}
\newcommand{\trcvwsa}{\mathcal{B}_1(\vv\otimes\ww)_{\hspace{-0.3mm}\mbox{\tiny $\mathbb{R}$}}}
\newcommand{\vevw}{{}_{\mbox{\tiny $\hspace{0.8mm}\vv$}}^{\mbox{\tiny $\ww$}}\hspace{-0.3mm}\|}
\newcommand{\vwptrnor}{\vevw\cdot\ptrn}

\newcommand{\jjm}{\jj^{\mbox{\tiny $(\emme)$}}\hspace{-0.2mm}}
\newcommand{\trcjjm}{\mathcal{B}_1(\jjm)}
\newcommand{\trchjm}{\mathcal{B}_1(\hh\otimes\jjm)}
\newcommand{\vem}{{}^{\mbox{\tiny $\emme$}}\hspace{-0.2mm}\|}
\newcommand{\mptrnor}{\vem\cdot\ptrn}

\newcommand{\trcjl}{\mathcal{B}_1(\jl)}
\newcommand{\trchjl}{\mathcal{B}_1(\hh\otimes\jl)}
\newcommand{\stahjl}{\mathcal{D}(\hh\otimes\jl)}

\newcommand{\identi}{\mathfrak{f}\hspace{0.5mm}}
\newcommand{\iden}{\widehat{\mathfrak{f}}\hspace{0.5mm}}

\newcommand{\sinft}{\mbox{\tiny $\infty$}}

\newcommand{\clcow}{\overline{\mathrm{co}}^{\hspace{0.4mm}ew^\ast}\hspace{-0.8mm}}
\newcommand{\clcoaw}{\overline{\mathrm{co}}^{\hspace{0.4mm}\mathcal{A}w^\ast}\hspace{-0.8mm}}

\newcommand{\clwsub}{\overline{\subh}^{\hspace{0.4mm}ew^\ast}\hspace{-1.2mm}}
\newcommand{\clawsub}{\overline{\subh}^{\hspace{0.4mm}\mathcal{A}w^\ast}\hspace{-1.2mm}}
\newcommand{\clawsubid}{\overline{\identi(\subh)}^{\hspace{0.4mm}\mathcal{A}w^\ast}\hspace{-1.2mm}}

\newcommand{\ext}{\mathrm{ext}}

\newcommand{\pail}{[\hspace{-0.5mm}[}
\newcommand{\pair}{\hspace{0.3mm}]\hspace{-0.5mm}]}

\newcommand{\pax}{\pail K,\hspace{0.3mm}\cdot\hspace{0.7mm}\pair}

\newcommand{\paxa}{\pail\hspace{0.7mm}\cdot\hspace{0.3mm},\xia\pair}
\newcommand{\padigu}{\pail\hspace{0.7mm}\cdot\hspace{0.3mm},\digu\pair}

\newcommand{\fu}{\Gamma_{\hspace{-0.4mm}L}}
\newcommand{\fk}{\Gamma_{\hspace{-0.4mm}K}}
\newcommand{\bif}{\gamma_{\hspace{-0.2mm}L}}
\newcommand{\digu}{\xi_{C}}
\newcommand{\digub}{\Xi_{C}}
\newcommand{\digud}{\Xi_{D}}
\newcommand{\xid}{\xi_{D}}
\newcommand{\xidi}{\xi_{D_{i}}}
\newcommand{\xia}{\xi_{A}}

\newcommand{\nof}{\|_{\gamma}}
\newcommand{\norf}{\|\cdot\nof}

\newcommand{\nog}{\|_{\Gamma}}
\newcommand{\norg}{\|\cdot\nog}

\newcommand{\dunog}{\|^{\ast}_{\Gamma}}
\newcommand{\dunorg}{\|\cdot\dunog}

\newcommand{\nodg}{\|_{\digamma}}
\newcommand{\nordg}{\|\cdot\nodg}

\newcommand{\cph}{\mathcal{C}(\hh)}
\newcommand{\cpj}{\mathcal{C}(\jj)}
\newcommand{\cphj}{\mathcal{C}(\hhjj)}
\newcommand{\shj}{\mathcal{S}(\hhjj)}
\newcommand{\shjt}{\widetilde{\mathcal{S}}(\hhjj)}
\newcommand{\stahjb}{\stahj_{\sph}}
\newcommand{\shjb}{\stahj_{\ball}}

\newcommand{\alghj}{\mathcal{A}(\hhjj)}
\newcommand{\alghjd}{\mathcal{A}(\hhjj)^\ast}
\newcommand{\alghjg}{\alghj_{\Gamma}}
\newcommand{\alghjgd}{\alghj_{\Gamma}^{\ast}}
\newcommand{\halghj}{{\alghj\hspace{-0.4mm}\widehat{\phantom{x}}}}

\newcommand{\alghjgc}{\overline{\alghj_{\Gamma}}}
\newcommand{\alghjgcd}{\overline{\alghj_{\Gamma}}^{\ast}}

\newcommand{\ball}{\mathsf{B}}
\newcommand{\sph}{\mathsf{S}}

\newcommand{\ethtj}{\nb(\trc,\trcjj)}
\newcommand{\ethtjs}{\nb(\trcsa,\trcsajj)}
\newcommand{\eshsj}{\nb(\sta,\stajj)}

\newcommand{\ntr}[1]{{\left\vert\kern-0.25ex\left\vert\kern-0.25ex\left\vert #1
\right\vert\kern-0.25ex\right\vert\kern-0.25ex\right\vert}}

\newcommand{\ptrntr}[1]{{\left\vert\kern-0.25ex\left\vert\kern-0.25ex\left\vert #1
\right\vert\kern-0.25ex\right\vert\kern-0.25ex\right\vert}_{1}^{\hspace{-0.3mm}\sotimp}}

\newcommand{\ptrnortr}{\ptrntr{\hspace{0.6mm}\cdot\hspace{0.6mm}}}
\newcommand{\ptrnortrli}{\mbox{\scriptsize $\ptrntr{\hspace{0.5mm}\cdot\hspace{0.5mm}}$--}\lim}

\newcommand{\ima}{\mathrm{i}}
\newcommand{\ccc}{\mathbb{C}}

\newcommand{\sgn}{\mathrm{sgn}\hspace{0.2mm}}

\newcommand{\don}{D_1}
\newcommand{\dtw}{D_2}

\newcommand{\dpl}{D^{\mbox{\tiny $+$}}\hspace{-0.3mm}}
\newcommand{\dmi}{D^{\mbox{\tiny $-$}}\hspace{-0.3mm}}

\newcommand{\tkg}{\mbox{\small $t_{k}\hspace{-0.6mm}>\hspace{-0.5mm}0$}}
\newcommand{\tkl}{\mbox{\small $t_{k}\hspace{-0.6mm}<\hspace{-0.5mm}0$}}

\newcommand{\mesp}{\mathscr{X}}

\newcommand{\salga}{\mathscr{A}}
\newcommand{\mespy}{\mathscr{Y}}
\newcommand{\bora}{\mathscr{B}(\mathscr{X})}
\newcommand{\boray}{\mathscr{B}(\mathscr{Y})}
\newcommand{\cobora}{\overline{\mathscr{B}(\mathscr{X})}^{\hspace{0.3mm}\mu}}

\newcommand{\nset}{\mathcal{N}}
\newcommand{\mnset}{\mathcal{F}}
\newcommand{\meset}{\mathcal{G}}
\newcommand{\nmu}{\mathscr{N}_\mu}
\newcommand{\comu}{\overline{\mu}}
\newcommand{\bose}{\mathcal{E}}
\newcommand{\posp}{\mathfrak{P}}
\newcommand{\borap}{\mathscr{B}(\posp)}
\newcommand{\borapx}{\mathscr{B}_\posp(\mathscr{X})}
\newcommand{\borapxy}{\mathscr{B}_\mespy(f(\mathscr{X}))}

\newcommand{\cmp}{\mathsf{c}}

\newcommand{\suse}{\mathscr{S}}

\newcommand{\bhobj}{\bop\hspace{-0.2mm}\otimes\hspace{-0.2mm}\bopjj}
\newcommand{\bhabj}{\bop\altp\bopjj}

\newcommand{\thotj}{\trc\otimes\trcjj}
\newcommand{\thatj}{\trc\altp\trcjj}
\newcommand{\thptj}{\trc\prtp\trcjj}
\newcommand{\thptjd}{\big(\thptj\big)^\ast}
\newcommand{\thsotjs}{\trcsa\otimes\trcsajj}
\newcommand{\thsatjs}{\trcsa\altp\trcsajj}
\newcommand{\thsptjs}{\trcsa\prtp\trcsajj}
\newcommand{\thptjsa}{\big(\trc\prtp\trcjj\big)_{\hspace{-0.3mm}\mbox{\tiny $\mathbb{R}$}}}

\newcommand{\tjpth}{\trcjj\prtp\trc}
\newcommand{\tjspths}{\trcsajj\prtp\trcsa}

\newcommand{\neigh}{\mathcal{O}}
\newcommand{\neighx}{\mathcal{O}_{\xi}}
\newcommand{\xp}{\xi^\prime}
\newcommand{\neighxp}{\mathcal{O}_{\xp}}
\newcommand{\neighc}{{\neigh^{\mathsf{c}}}}
\newcommand{\neighp}{{\neigh^{\prime}}}
\newcommand{\chne}{\chi_\neigh}
\newcommand{\chnec}{\chi_\neighc}

\newcommand{\df}{\mathsf{d}_{\bsb,\hspace{0.2mm}\bsc}}

\newcommand{\rkn}{r_{k;\hspace{0.2mm}n}}
\newcommand{\Xkn}{X_{k;\hspace{0.2mm}n}}
\newcommand{\Ykn}{Y_{k;\hspace{0.2mm}n}}

\newcommand{\imme}{\mathfrak{j}\hspace{0.3mm}}
\newcommand{\immer}{\mathfrak{j}_{\hspace{0.3mm}\mbox{\tiny $\mathbb{R}$}}}
\newcommand{\immers}{\mathsf{k}\hspace{0.2mm}}
\newcommand{\nimm}{\mathsf{K}\hspace{0.2mm}}

\newcommand{\aekm}{\eta_m^k}
\newcommand{\afkm}{\chi_m^k}
\newcommand{\cekn}{\phi_n^k}
\newcommand{\cfkn}{\psi_n^k}

\newcommand{\tkm}{t_m^k}
\newcommand{\ukn}{u_n^k}

\newcommand{\heemn}{\widehat{\aekm\cekn}}
\newcommand{\hffmn}{\widehat{\afkm\cfkn}}

\newcommand{\ka}{\mathsf{K}}
\newcommand{\je}{\mathsf{J}}
\newcommand{\elle}{\mathsf{L}}
\newcommand{\emme}{\mathsf{M}}
\newcommand{\enne}{\mathsf{N}}

\newcommand{\optp}{\mathfrak{C}}
\newcommand{\opts}{\mathfrak{R}}

\newcommand{\clnor}{\cl_{\|\hspace{0.3mm}\cdot\hspace{0.3mm}\|}}
\newcommand{\cltrnor}{\cl_{\|\hspace{0.3mm}\cdot\hspace{0.3mm}\trn}}

\newcommand{\clispattrn}{\overline{\mathrm{span}}^{\hspace{0.3mm}\|\hspace{0.3mm}\cdot\hspace{0.3mm}\ttrn\hspace{-0.4mm}}}
\newcommand{\clispat}{\overline{\mathrm{span}}^{\hspace{0.3mm}\sotim}\hspace{-0.3mm}}
\newcommand{\clispaptrn}{\overline{\mathrm{span}}^{\hspace{0.3mm}\|\hspace{0.3mm}\cdot\hspace{0.3mm}\ptrn\hspace{-0.4mm}}}
\newcommand{\clispap}{\overline{\mathrm{span}}^{\hspace{0.3mm}\sotimp}\hspace{-0.3mm}}

\newcommand{\vp}{\varpi}
\newcommand{\va}{\mathfrak{a}}
\newcommand{\vc}{\mathfrak{c}}
\newcommand{\hvac}{\widehat{\hspace{0.3mm}\va\vc\hspace{0.2mm}}}
\newcommand{\vac}{|\va\rangle\hspace{-0.3mm}\langle\vc|}
\newcommand{\vak}{\mathfrak{a}_k}
\newcommand{\vck}{\mathfrak{c}_k}
\newcommand{\vek}{\mathfrak{e}_k}
\newcommand{\hvack}{\widehat{\hspace{-0.4mm}\phantom{.}\vak\vck}}

\newcommand{\vack}{|\vak\rangle\hspace{-0.3mm}\langle\vck|}

\newcommand{\hva}{\widehat{\hspace{0.3mm}\va\hspace{0.2mm}}}
\newcommand{\hvaa}{\widehat{\hspace{0.3mm}\va\va\hspace{0.2mm}}}
\newcommand{\vaa}{|\va\rangle\hspace{-0.3mm}\langle\va|}
\newcommand{\vaan}{|\va_n\rangle\hspace{-0.3mm}\langle\va_n|}
\newcommand{\veek}{|\vek\rangle\hspace{-0.3mm}\langle\vek|}

\newcommand{\vamn}{|\va_m\rangle\hspace{-0.3mm}\langle\va_n|}

\newcommand{\vccn}{|\vc_\enne\rangle\hspace{-0.3mm}\langle\vc_\enne|}

\newcommand{\fr}{\mathscr{F}}
\newcommand{\frh}{\fr(\hh)}
\newcommand{\frsah}{\fr(\hh)_{\mbox{\tiny $\mathbb{R}$}}}
\newcommand{\frj}{\fr(\jj)}
\newcommand{\frsaj}{\fr(\jj)_{\mbox{\tiny $\mathbb{R}$}}}
\newcommand{\frhj}{\fr(\hhjj)}

\newcommand{\bs}{\mathfrak{S}}
\newcommand{\bsd}{{\mathfrak{S}^\ast}}
\newcommand{\norbs}{\|\cdot\|_\mathfrak{S}}
\newcommand{\bsb}{\mathfrak{V}}
\newcommand{\bsc}{\mathfrak{Z}}
\newcommand{\bsbd}{{\mathfrak{V}^\ast}}

\newcommand{\sbs}{\mathfrak{B}}

\newcommand{\bim}{\lambda}
\newcommand{\sotim}{\mbox{\tiny $\otimes$}}
\newcommand{\sotima}{\mbox{\tiny $\breve{\otimes}$}}
\newcommand{\sotimp}{\mbox{\tiny $\widehat{\otimes}$}}
\newcommand{\Bm}{\Lambda}
\newcommand{\Bim}{\Lambda^{\hspace{-0.5mm}\sotim\hspace{-0.3mm}}}
\newcommand{\Bima}{\Lambda^{\hspace{-0.5mm}\sotima\hspace{-0.3mm}}}
\newcommand{\Bimp}{\Lambda^{\hspace{-0.5mm}\sotimp\hspace{-0.3mm}}}

\newcommand{\lispa}{\mathrm{span}\hspace{0.5mm}}
\newcommand{\clispa}{\overline{\mathrm{span}}\hspace{0.5mm}}

\newcommand{\lispar}{\mathrm{span}_{\hspace{0.3mm}\mbox{\tiny $\mathbb{R}$}}\hspace{-0.2mm}}
\newcommand{\clispar}{\overline{\mathrm{span}}_{\hspace{0.3mm}\mbox{\tiny $\mathbb{R}$}}\hspace{-0.2mm}}

\newcommand{\lispra}{\mathrm{span}_{\hspace{0.3mm}\mathbb{Q}}}
\newcommand{\lispranb}{\lispra\{\nb(\trco,\trcjjo)\}}

\newcommand{\co}{\mathrm{co}\hspace{0.3mm}}
\newcommand{\clco}{\overline{\mathrm{co}}\hspace{0.3mm}}
\newcommand{\clcon}{\overline{\mathrm{co}}^{\hspace{0.3mm}\sotim}\hspace{-0.3mm}}
\newcommand{\prm}{\mathscr{P}}

\newcommand{\clcop}{\overline{\mathrm{co}}^{\hspace{0.3mm}\sotimp}\hspace{-0.3mm}}

\newcommand{\rah}{\rangle_{\hh}}
\newcommand{\raj}{\rangle_{\jj}}
\newcommand{\rahaj}{\rangle_{\hh\altp\hspace{-0.3mm}\jj}}
\newcommand{\scap}{\langle\hspace{0.5mm}\cdot\hspace{0.6mm},\cdot\hspace{0.5mm}\rangle}
\newcommand{\scaph}{\langle\hspace{0.5mm}\cdot\hspace{0.6mm},\cdot\hspace{0.5mm}\rangle_{\hh}}
\newcommand{\scaphj}{\langle\hspace{0.5mm}\cdot\hspace{0.6mm},\cdot\hspace{0.5mm}\rangle_{\hh\otimes\jj}}
\newcommand{\scaphaj}{\langle\hspace{0.5mm}\cdot\hspace{0.6mm},\cdot\hspace{0.5mm}\rangle_{\hh\altp\hspace{-0.3mm}\jj}}

\newcommand{\no}{\|\cdot\|}

\newcommand{\notri}{\ntr{\hspace{0.6mm}\cdot\hspace{0.6mm}}}

\newcommand{\nops}{\|\cdot\|_{\mathcal{B}}}
\newcommand{\norps}{\|_{\mathcal{B}}}
\newcommand{\bnorps}{\big\|_{\mathcal{B}}}

\newcommand{\subc}{\mathcal{C}}
\newcommand{\subh}{\mathcal{X}}
\newcommand{\subj}{\mathcal{Y}}
\newcommand{\subhj}{\mathcal{Z}}

\newcommand{\nb}{\theta}
\newcommand{\snb}{\breve{\theta}}
\newcommand{\slma}{\breve{\Theta}}

\newcommand{\ub}{\widehat{\theta}}

\newcommand{\hph}{\widehat{\phi}}
\newcommand{\hphk}{\widehat{\phi_k}}
\newcommand{\hpph}{\widehat{\phi\phi}}
\newcommand{\hp}{\widehat{\psi}}
\newcommand{\hpu}{\widehat{\psi_1}}
\newcommand{\hpd}{\widehat{\psi_2}}
\newcommand{\hpp}{\widehat{\psi\psi}}

\newcommand{\hep}{\widehat{\eta\phi}}
\newcommand{\ep}{|\eta\rangle\hspace{-0.3mm}\langle\phi|}

\newcommand{\chp}{|\chi\rangle\hspace{-0.3mm}\langle\psi|}
\newcommand{\ecpp}{|\eta\otimes\chi\rangle\hspace{-0.3mm}\langle\phi\otimes\psi|}
\newcommand{\pck}{|\psi_k\rangle\hspace{-0.3mm}\langle\chi_k|}
\newcommand{\pjp}{|\phi\rangle\hspace{-0.3mm}\langle J\hspace{0.3mm}\psi|}
\newcommand{\hcp}{\widehat{\chi\psi}}

\newcommand{\ink}{{\mbox{\tiny $(k)$}}}
\newcommand{\inkn}{{\mbox{\tiny $(k(n))$}}}
\newcommand{\sjk}{s_j^\ink}
\newcommand{\tlk}{t_l^\ink}
\newcommand{\ejk}{\eta_j^\ink}
\newcommand{\pjk}{\phi_j^\ink}
\newcommand{\plk}{\psi_l^\ink}
\newcommand{\clk}{\chi_l^\ink}
\newcommand{\epjk}{\big|\eta_j^\ink\big\rangle\hspace{-0.3mm}\big\langle\phi_j^\ink\big|}
\newcommand{\pclk}{\big|\psi_l^\ink\big\rangle\hspace{-0.3mm}\big\langle\chi_l^\ink\big|}

\newcommand{\rknn}{r_{k(n)}}
\newcommand{\sjmk}{s_{j(m)}^\ink}
\newcommand{\tlmk}{t_{l(m)}^\ink}
\newcommand{\epjmk}{\big|\eta_{j(m)}^\ink\big\rangle\hspace{-0.3mm}\big\langle\phi_{j(m)}^\ink\big|}
\newcommand{\pclmk}{\big|\psi_{l(m)}^\ink\big\rangle\hspace{-0.3mm}\big\langle\chi_{l(m)}^\ink\big|}
\newcommand{\sjmkn}{s_{j(m(n))}^\inkn}
\newcommand{\tlmkn}{t_{l(m(n))}^\inkn}
\newcommand{\epjmkn}{\big|\eta_{j(m(n))}^\inkn\big\rangle\hspace{-0.3mm}\big\langle\phi_{j(m(n))}^\inkn\big|}
\newcommand{\pclmkn}{\big|\psi_{l(m(n))}^\inkn\big\rangle\hspace{-0.3mm}\big\langle\chi_{l(m(n))}^\inkn\big|}

\newcommand{\hhpu}{|\psi_1\rangle\hspace{-0.3mm}\langle\psi_1|}
\newcommand{\phk}{|\phi_k\rangle\hspace{-0.3mm}\langle\phi_k|}
\newcommand{\phikl}{|\phi_k\rangle\hspace{-0.3mm}\langle\phi_l|}

\newcommand{\psimn}{|\psi_m\rangle\hspace{-0.3mm}\langle\psi_n|}
\newcommand{\psikl}{|\psi_k\rangle\hspace{-0.3mm}\langle\psi_l|}

\newcommand{\epn}{|\eta_n\rangle\hspace{-0.3mm}\langle\phi_n|}
\newcommand{\pcn}{|\psi_n\rangle\hspace{-0.3mm}\langle\chi_n|}

\newcommand{\epk}{|\eta_k\rangle\hspace{-0.3mm}\langle\phi_k|}

\newcommand{\phipsikl}{|\phi_k\otimes\psi_k\rangle\hspace{-0.3mm}\langle\phi_l\otimes\psi_l|}

\newcommand{\cpl}{|\chi_l\rangle\hspace{-0.3mm}\langle\psi_l|}

\newcommand{\bo}{\mathcal{B}}

\newcommand{\hh}{\mathcal{H}}
\newcommand{\trhh}{\tr_{\hh}}

\newcommand{\hhd}{\mathcal{H}^\ast}

\newcommand{\trc}{\mathcal{B}_1(\hh)}
\newcommand{\trcsa}{\mathcal{B}_1(\hh)_{\mbox{\tiny $\mathbb{R}$}}}
\newcommand{\trcd}{\mathcal{B}_1(\hh)^\ast}

\newcommand{\trchjsa}{\mathcal{B}_1(\hh\otimes\jj)_{\mbox{\tiny $\mathbb{R}$}}}
\newcommand{\trcp}{\mathcal{B}_1(\hh)_{\mbox{\tiny $+$}}}
\newcommand{\trcjjp}{\mathcal{B}_1(\jj)_{\mbox{\tiny $+$}}}

\newcommand{\bop}{\mathcal{B}(\hh)}
\newcommand{\bopsa}{\mathcal{B}(\mathcal{H})_{\mbox{\tiny $\mathbb{R}$}}}
\newcommand{\bhh}{\bo(\hh;\hh)}
\newcommand{\sta}{\mathcal{D}(\hh)}
\newcommand{\stajj}{\mathcal{D}(\jj)}
\newcommand{\pusta}{\mathcal{P}(\hh)}
\newcommand{\pustajj}{\mathcal{P}(\jj)}
\newcommand{\pustahj}{\mathcal{P}(\hh\otimes\jj)}

\newcommand{\trco}{\mathcal{B}_1(\hh)_0}

\newcommand{\bophj}{\mathcal{B}(\hhjj)}
\newcommand{\bophjsa}{\mathcal{B}(\hhjj)_{\mbox{\tiny $\mathbb{R}$}}}
\newcommand{\bophjg}{\bophj_{\Gamma}}

\newcommand{\bopvw}{\mathcal{B}(\vv\otimes\ww)}

\newcommand{\dbop}{\mathcal{B}(\hh)^\ast}
\newcommand{\dbopjj}{\mathcal{B}(\jj)^\ast}
\newcommand{\dbophj}{\mathcal{B}(\hhjj)^\ast}

\newcommand{\jj}{\mathcal{J}}

\newcommand{\jjd}{\mathcal{J}^\ast}

\newcommand{\deta}{\langle\eta,\cdot\hspace{0.5mm}\rangle_{\hh}}
\newcommand{\dchi}{\langle\chi,\cdot\hspace{0.5mm}\rangle_{\hspace{-0.4mm}\jj}}
\newcommand{\etap}{\eta^\ast}
\newcommand{\chip}{\chi^\ast}

\newcommand{\argo}{(\hspace{0.3mm}\cdot\hspace{0.3mm})}

\newcommand{\rhh}{\rangle_{\hh}}

\newcommand{\rjj}{\rangle_{\hspace{-0.4mm}\jj}}

\newcommand{\rhj}{\rangle_{\hh\otimes\jj}}

\newcommand{\trcjj}{\mathcal{B}_1(\jj)}
\newcommand{\trcsajj}{\mathcal{B}_1(\jj)_{\mbox{\tiny $\mathbb{R}$}}}
\newcommand{\hhjj}{\hh\otimes\jj}
\newcommand{\trjj}{\tr_{\hspace{-0.3mm}\jj}}
\newcommand{\trchj}{\mathcal{B}_1(\hh\otimes\jj)}
\newcommand{\trchjd}{\mathcal{B}_1(\hh\otimes\jj)^\ast}
\newcommand{\trchjp}{\mathcal{B}_1(\hh\otimes\jj)_{\mbox{\tiny $+$}}}
\newcommand{\bopjj}{\mathcal{B}(\jj)}
\newcommand{\trcjjo}{\mathcal{B}_1(\jj)_0}

\newcommand{\htrchj}{{\mathcal{B}_1(\hh\otimes\jj)\hspace{-0.4mm}\widehat{\phantom{x}}}}
\newcommand{\htrchjs}{{\mathcal{B}_1(\hh\otimes\jj)\hspace{-0.4mm}\widehat{\phantom{x}}_{\hspace{-1.7mm}\mbox{\tiny $\mathbb{R}$}}}}

\newcommand{\htrcjh}{{\mathcal{B}_1(\jj\otimes\hh)\hspace{-0.4mm}\widehat{\phantom{x}}}}
\newcommand{\htrcjhs}{{\mathcal{B}_1(\jj\otimes\hh)\hspace{-0.4mm}\widehat{\phantom{x}}_{\hspace{-1.7mm}\mbox{\tiny $\mathbb{R}$}}}}

\newcommand{\stahj}{\mathcal{D}(\hh\otimes\jj)}
\newcommand{\stahjr}{\mathcal{D}(\hh\otimes\jj)_r}
\newcommand{\stavw}{\mathcal{D}(\vv\otimes\ww)}

\newcommand{\hstahj}{{\mathcal{D}(\hh\otimes\jj)\hspace{-0.4mm}\widehat{\phantom{x}}}}
\newcommand{\hpustahj}{{\mathcal{P}(\hh\otimes\jj)\hspace{-0.4mm}\widehat{\phantom{x}}}}

\newcommand{\bopj}{\mathcal{B}(\jj)}

\newcommand{\ltrcjjau}{\mathscr{L}_1(\hh,\jj;\hau)}
\newcommand{\ftrcjj}{\mathscr{F}_1(\hh,\jj)}

\newcommand{\lthtjthj}{\mathscr{L}_1(\hh,\jj;\hhjj)}

\newcommand{\lin}{\mathcal{L}}
\newcommand{\linhh}{\lin(\hh)}
\newcommand{\linjj}{\lin(\jj)}
\newcommand{\linhj}{\lin(\hh,\jj;\ccc)}
\newcommand{\linhjd}{\lin(\hhd,\jjd;\ccc)}

\newcommand{\linhjad}{\lin(\hh,\jj;\ccc)^\prime}

\newcommand{\bfshj}{\bo(\hh,\jj;\ccc)}

\newcommand{\linhau}{\lin(\hau)}

\newcommand{\sesta}{{\mathcal{D}(\hh\otimes\jj)_{\hspace{-0.1mm}\mbox{\tiny $\mathsf{se}$}\hspace{-0.2mm}}}}
\newcommand{\sestap}{{\mathcal{P}(\hh\otimes\jj)_{\hspace{-0.1mm}\mbox{\tiny $\mathsf{se}$}\hspace{-0.2mm}}}}

\newcommand{\dsta}{{\mathcal{D}(\hh\otimes\jj)_{\hspace{-0.1mm}\mbox{\tiny $\mathsf{ds}$}\hspace{-0.2mm}}}}
\newcommand{\fsta}{{\mathcal{D}(\hh\otimes\jj)_{\hspace{-0.1mm}\mbox{\tiny $\mathsf{f\hspace{0mm}s}$}\hspace{-0.2mm}}}}
\newcommand{\ndsta}{{\mathcal{D}(\hh\otimes\jj)\hspace{-0.4mm}\widehat{\hspace{0.2mm}\mbox{\tiny $\pm$}}}}
\newcommand{\csta}{{\mathcal{D}(\hh\otimes\jj)_{\hspace{-0.1mm}\mbox{\tiny $\mathsf{cs}$}\hspace{-0.2mm}}}}

\newcommand{\trn}{\|_1}
\newcommand{\trnor}{\|\cdot\trn}

\newcommand{\tcn}{\|_{\mathrm{tr}}}
\newcommand{\trano}{\|\cdot\tcn}

\newcommand{\trf}{\tr\argo}

\newcommand{\norb}{\|_{(1)}}
\newcommand{\norbil}{\|\cdot\norb}

\newcommand{\tnori}{\|_{\hspace{-0.3mm}\sinft}^{\hspace{-0.3mm}\sotim}}
\newcommand{\btnori}{{\big\|}_{\hspace{-0.3mm}\sinft}^{\hspace{-0.3mm}\sotim}}
\newcommand{\tnormi}{\|\cdot\tnori}

\newcommand{\ttrn}{\|_{1}^{\hspace{-0.3mm}\sotim}}
\newcommand{\ttrnor}{\|\cdot\ttrn}
\newcommand{\ttrnorli}{\mbox{\scriptsize $\|\hspace{0.2mm}\cdot\hspace{0.2mm}\ttrn$--}\lim}
\newcommand{\trnorli}{\mbox{\scriptsize $\|\hspace{0.2mm}\cdot\hspace{0.2mm}\trn$--}\lim}

\newcommand{\cl}{\mathrm{cl}\hspace{0.2mm}}
\newcommand{\clttrnor}{\cl_{\|\hspace{0.3mm}\cdot\hspace{0.3mm}\ttrn}}

\newcommand{\ptrn}{\|_{1}^{\hspace{-0.3mm}\sotimp}}
\newcommand{\ptrnor}{\|\cdot\ptrn}
\newcommand{\ptrnorli}{\mbox{\scriptsize $\|\hspace{0.2mm}\cdot\hspace{0.2mm}\ptrn$--}\lim}
\newcommand{\clptrnor}{\cl_{\|\hspace{0.3mm}\cdot\hspace{0.3mm}\ptrn}}

\newcommand{\bptrn}{\big\|_{1}^{\hspace{-0.3mm}\sotimp}}

\newcommand{\trnb}{\big\|_{1}}

\newcommand{\elledue}{\mathrm{L\hspace{-0.2mm}}^{\mbox{\tiny $2$}}}
\newcommand{\elleds}{\elledue([0,1])}
\newcommand{\lem}{\beta}
\newcommand{\elledsc}{\elledue([0,1],\lem;\ccc)}

\newcommand{\de}{\mathrm{d}\hspace{0.2mm}}
\newcommand{\eee}{\mathrm{e}}

\newcommand{\hau}{\mathcal{K}}

\newcommand{\bopp}{\mathcal{B}(\hh)_{\mbox{\tiny $+$}}}

\newcommand{\trcau}{\mathcal{B}_1(\hau)}

\newcommand{\nori}{\|_{\sinft}}
\newcommand{\normi}{\|\cdot\|_{\sinft}}
\newcommand{\sotinj}{\mbox{\tiny $\injtp$}}
\newcommand{\innori}{\|_{\hspace{-0.3mm}\sinft}^{\hspace{-0.3mm}\sotinj}}
\newcommand{\binnori}{{\big\|}_{\hspace{-0.3mm}\sinft}^{\hspace{-0.3mm}\sotinj}}
\newcommand{\innormi}{\|\cdot\innori}

\newcommand{\pp}{\mathsf{p}}

\newcommand{\normp}{\|\cdot\|_{\pp}}

\newcommand{\funs}{\hspace{0.3mm}\mathfrak{s}\hspace{0.3mm}}
\newcommand{\funt}{\hspace{0.3mm}\mathfrak{t}\hspace{0.3mm}}

\newcommand{\xu}{X_{\mbox{\tiny $\Re$}}}
\newcommand{\xd}{X_{\mbox{\tiny $\Im$}}}
\newcommand{\yu}{Y_{\mbox{\tiny $\Re$}}}
\newcommand{\yd}{Y_{\mbox{\tiny $\Im$}}}
\newcommand{\xuk}{X_{k;\hspace{0.2mm}\mbox{\tiny $\Re$}}}
\newcommand{\xdk}{X_{k;\hspace{0.2mm}\mbox{\tiny $\Im$}}}
\newcommand{\yuk}{Y_{k;\hspace{0.2mm}\mbox{\tiny $\Re$}}}
\newcommand{\ydk}{Y_{k;\hspace{0.2mm}\mbox{\tiny $\Im$}}}

\newcommand{\rko}{r_{k;\hspace{0.2mm}1}^\epsilon}
\newcommand{\rlt}{r_{l;\hspace{0.2mm}2}^\epsilon}
\newcommand{\rkj}{r_{k;\hspace{0.2mm}j}^\epsilon}
\newcommand{\xko}{X_{k;\hspace{0.2mm}1}^\epsilon}
\newcommand{\xlt}{X_{l;\hspace{0.2mm}2}^\epsilon}
\newcommand{\xkj}{X_{k;\hspace{0.2mm}j}^\epsilon}
\newcommand{\yko}{Y_{k;\hspace{0.2mm}1}^\epsilon}
\newcommand{\ylt}{Y_{l;\hspace{0.2mm}2}^\epsilon}
\newcommand{\ykj}{Y_{k;\hspace{0.2mm}j}^\epsilon}

\newcommand{\pkm}{p_{k;\hspace{0.2mm}m}}
\newcommand{\qkn}{q_{k;\hspace{0.2mm}n}}
\newcommand{\pikm}{\pi_{k;\hspace{0.2mm}m}}
\newcommand{\varpikn}{\varpi_{k;\hspace{0.2mm}n}}

\newcommand{\apl}{A_{\mbox{\tiny $+$}}}
\newcommand{\aplu}{a_{\mbox{\tiny $+$}}}
\newcommand{\depl}{D_{\mbox{\tiny $+$}}}
\newcommand{\ami}{A_{\mbox{\tiny $-$}}}
\newcommand{\demi}{D_{\mbox{\tiny $-$}}}
\newcommand{\amin}{a_{\mbox{\tiny $-$}}}

\newcommand{\xjp}{x_{j;\mbox{\tiny $+$}}}
\newcommand{\xjm}{x_{j;\mbox{\tiny $-$}}}
\newcommand{\yjp}{y_{j;\mbox{\tiny $+$}}}
\newcommand{\yjm}{y_{j;\mbox{\tiny $-$}}}
\newcommand{\rjp}{\rho_{j;\mbox{\tiny $+$}}}
\newcommand{\rjm}{\rho_{j;\mbox{\tiny $-$}}}
\newcommand{\sjp}{\sigma_{j;\mbox{\tiny $+$}}}
\newcommand{\sjm}{\sigma_{j;\mbox{\tiny $-$}}}

\newcommand{\tjpp}{s_{j;\mbox{\tiny $++$}}}
\newcommand{\tjpm}{s_{j;\mbox{\tiny $+-$}}}
\newcommand{\tjmp}{s_{j;\mbox{\tiny $-+$}}}
\newcommand{\tjmm}{s_{j;\mbox{\tiny $--$}}}

\newcommand{\nat}{\mathbb{N}}

\newcommand{\trdd}{\mathsf{d}_{1,1}}

\newcommand{\supp}{\mathrm{supp}\hspace{0.2mm}}

\begin{document}

\title{The cross states of a composite quantum system: separability and entanglement in any Hilbert space dimension}

\author{Paolo Aniello$^{1,2}$\thanks{Email: paolo.aniello@na.infn.it}   \vspace{2mm}
\\
\small \it   $^1$Dipartimento di Fisica ``Ettore Pancini'', Universit\`a di Napoli ``Federico II'',  \\
\small \it   Complesso Universitario di Monte S.\ Angelo, via Cintia, I-80126 Napoli, Italy  \vspace{1mm}
\\
\small \it   $^2$Istituto Nazionale di Fisica Nucleare, Sezione di Napoli,  \\
\small \it   Complesso Universitario di Monte S.\ Angelo, via Cintia, I-80126 Napoli, Italy}

\date{}

\maketitle


\begin{abstract}
\noindent
We introduce a class of states of a composite quantum system --- the so-called \emph{cross states}
--- that turn to play a major role in the theory of entanglement for a genuinely infinite-dimensional
bipartite system. In the case where at least one of the Hilbert spaces of the bipartition is
finite-dimensional, all states are cross states, whereas, in the genuinely infinite-dimensional setting
where the dimension of \emph{both} Hilbert spaces is \emph{not} finite, the cross states form a
trace-norm dense, convex, proper subset of the set of all states. In the latter case, the cross states
can be regarded as those physical states that possess a finite amount of entanglement; accordingly, all
separable states are of this kind. We prove that, for any Hilbert space dimension, the separable states
can be characterized as those cross states that minimize a suitable norm, i.e., the \emph{projective norm}
associated with the projective tensor product of two trace classes; all other cross states are
density operators belonging to the projective tensor product space. This is a generalization of the
classical \emph{cross norm criterion} of separability. Finally, we define an extended real-valued
\emph{entanglement function} and study its main properties. Coherently with the interpretation of cross
states as finitely entangled states, this function is finite, and coincides with the projective norm,
precisely on the cross states of the system.
\end{abstract}


\section{Introduction}
\label{intro}

Entanglement is one of the most distinguishing features of quantum versus classical mechanics~\cite{Horodecki,Guhne}.
Among the manifold consequences of this very fundamental aspect of quantum theory, it is worth mentioning
the applications in the context of quantum information science~\cite{Nielsen}. However, to capture the intimate
essence of entanglement is a highly nontrivial task, and it turns out that actually, in order to obtain a profound
understanding of this exquisitely quantum phenomenon, a rigorous mathematical approach cannot be dispensed with.
This fact, on the one hand, probably explains why a finite-dimensional setting --- where several intricacies
associated with the machinery of infinite-dimensional Hilbert spaces become much milder, or simply disappear
--- is often assumed in the literature~\cite{Aubrun,Dariano}. On the other hand, whereas certain quantum systems
are sufficiently well described by resorting to an effective finite-dimensional model, the understanding of entanglement
of physical states in \emph{any} Hilbert space dimension is of central importance, e.g., in the context of quantum
measurement theory or in the study of open quantum systems~\cite{Holevo,Heino,Busch}.

Recall that, in the standard formulation of quantum mechanics~\cite{Holevo,Heino,Busch,Prugovecki,Emch,Moretti},
the two main families of fundamental physical entities --- i.e., \emph{states} and \emph{observables}
--- can be realized as suitable classes of linear operators on a separable complex Hilbert space $\hh$.
Specifically, the (normal, or $\sigma$-additive) quantum states on $\hh$ form a distinguished subset of the complex
Banach space $\trc$ of trace class operators --- namely, the convex set $\sta$ of \emph{density operators} ---
whereas the elements of the selfadjoint part of the $\cast$-algebra $\bop$ of all bounded operators on $\hh$
play the role of (bounded) observables. The pairing between states (density operators) and observables (bounded
selfadjoint operators), which provides the relevant measurement probability distributions and expectation values,
is implemented by the \emph{trace} functional.

This picture becomes somewhat more complicated as soon as one considers the mathematical modeling of \emph{two mutually
interacting quantum systems}. In this setting, according to one of the fundamental axioms of quantum mechanics, the
description of the resulting \emph{bipartite composite quantum system} involves the \emph{tensor product} $\hhjj$ of two
(separable, complex) Hilbert spaces $\hh$ and $\jj$~\cite{Holevo,Heino,Busch,Prugovecki,Moretti}, and it is precisely at
this stage that entanglement enters into the game. If one restricts to considering \emph{pure states} (i.e., rank-one
projections) only, since these states can be described in terms of \emph{state vectors} in the Hilbert space $\hhjj$,
then the intricacies of quantum entanglement are mainly \emph{conceptual} rather than \emph{mathematical}, because
the notion of tensor product of Hilbert spaces is fairly simple and straightforward~\cite{Prugovecki,Busch,Moretti}.
The \emph{separable pure states} are precisely the rank-one projections associated with (normalized nonzero) vectors
in $\hhjj$ of the elementary factorized form $\phi\otimes\psi$, whereas all other pure states are \emph{entangled}.

If, instead, one is interested in considering the \emph{mixed states} of the composite system as well (e.g.,
in the case where the \emph{decoherence}~\cite{Breuer} effects cannot be neglected), then the relevant operator spaces
associated with the product Hilbert space $\hhjj$ must be properly defined and studied. It is an interesting fact,
incidentally, that a precise and complete definition of separability/entanglement, in the general case of mixed states,
seems to have appeared rather late in the literature~\cite{Werner}. According to this (physically grounded) definition,
the \emph{separable mixed states} are either product states of the form $\rho\otimes\sigma$ --- where $\rho$, $\sigma$
are density operators --- or suitable (generalized) statistical \emph{mixtures} of such states; all other states are
entangled. This definition is also coherent with the central role played by \emph{convexity}~\cite{Gudder} in the
context of quantum theory.

Once again, one may avoid most mathematical complications by simply assuming that the Hilbert spaces $\hh$ and
$\jj$ are finite-dimensional. However, as previously observed, there are cases where one is forced to leave
this relatively safe playground and face the most general situation. The main aim of the present contribution
is to achieve a complete characterization of the convex set of all \emph{separable} (i.e., non-entangled) states
of a bipartite quantum system --- for any Hilbert space dimension of the subsystems --- in terms of the
\emph{most fundamental structures} involved in the mathematical formalization of a composite quantum
system, i.e., the tensor product of the relevant operator (Banach) spaces and the canonically associated
norms~\cite{Grothendieck-res,Grothendieck-book,Defant,Ryan,DFS,Kubrusly}. As a byproduct, we will also obtain
a natural `measure of entanglement'.

Let us briefly outline the fundamental problem and the main ideas at the basis of our work.

Our primary scope is to characterize the convex set $\sesta$ of all separable states of a bipartite quantum system,
whose carrier Hilbert space $\hhjj$ is the tensor product of the `local' Hilbert spaces $\hh$ and $\jj$.
This set is defined as the \emph{closed convex hull} of the set of product states, i.e.,
\begin{equation}
\sesta\defi\clco\{\rho\otimes\sigma\colon\rho\in\sta,\ \sigma\in\stajj\} \fin .
\end{equation}
Here, the convex hull is supposed to be closed wrt the trace norm $\ttrnor$ of the ambient Banach space
$\trchj$ --- the trace class (or Schatten 1-class) on $\hhjj$ --- the bipartite states $\stahj$ belong to.

At this point, it is natural to consider, at first, the \emph{algebraic tensor product} $\thatj$ of
the `local' (or factor) trace classes $\trc$ and $\trcjj$, that can be identified with the complex
linear span of the elementary tensor products $\rho\otimes\sigma$  of `local' density operators.
It can be easily shown that, by taking the completion, wrt the norm $\ttrnor$, of this linear space
one obtains precisely the Banach space $\trchj$, i.e., the \emph{bipartite trace class}.

It should be noted, however, that, if one regards the factor trace classes $\trc$ and $\trcjj$ as
\emph{abstract} Banach spaces, then completing the algebraic tensor product $\thatj$ wrt the trace norm
$\ttrnor$ is not a very natural choice. In fact, in order to obtain a Banach space out of the algebraic tensor
product of two abstract Banach spaces the most natural choices turn out to be the \emph{projective tensor product}
and the \emph{injective tensor product}~\cite{Grothendieck-res,Grothendieck-book,Defant,Ryan,DFS,Kubrusly}.
Both types of tensor products will play some role in the following; however, in some sense, the first of these
two options is a particularly natural choice when dealing with trace class operators, because the trace class
$\trc$ itself can be regarded as --- i.e., it is isomorphic to --- the projective tensor product of two copies
of the Hilbert space $\hh$, that we will denote by $\hh\prtp\hh$.

Let us then focus on the projective tensor product $\thptj$, that we will call the \emph{cross trace class};
i.e., the Banach space completion of the algebraic tensor product $\thatj$ wrt the \emph{projective norm}
--- or \emph{greatest cross norm} --- $\ptrnor$ (not to be confused with the trace norm $\ttrnor$ of $\trchj$)
associated with the trace norm $\trnor$ of the `local' trace classes $\trc$ and $\trcjj$. It has been first
discovered by Rudolph~\cite{Rudolph,Rudolph-further}, who considered the case where both the Hilbert spaces
of the bipartition $\hhjj$ are finite-dimensional, that a density operator $D\in\stahj$ is separable iff
$\|D\ptrn=1$; otherwise stated, that the convex set $\sesta$ is the intersection of the unit sphere in
$\thptj$ with the set $\stahj$ of all density operators on $\hhjj$. This important discovery has
attracted a certain interest in the characterization of entanglement by means of the projective
norm~\cite{Sokoli,Jivulescu}. A fundamental step forward in the understanding of the central role of the
projective norm in the study of entanglement is due to Arveson~\cite{Arveson}, who, using elegant methods
of convex analysis and of the theory of operator algebras,  has proved --- among other things --- that
Rudolph's characterization of separability remains true in the more general case where at least one of
the Hilbert spaces of the bipartition $\hhjj$ (or, more generally, of a multipartite Hilbert space) is
finite-dimensional. It is then natural to wonder what happens in the most general case where no restriction
on the Hilbert space dimension is assumed. This is a nontrivial issue, because it turns out that, actually,
there is a sharp difference between the case where the at least one of the Hilbert spaces of the bipartition
is finite-dimensional and the \emph{genuinely infinite-dimensional case}.

In fact, the following observation is of central importance: Whereas, if at least one of the Hilbert spaces
$\hh$ and $\jj$ is finite-dimensional, the \emph{cross} trace class $\thptj$ can be identified --- \emph{as a set},
or as a linear space --- with the trace class $\trchj$, in the case where $\dim(\hh)=\dim(\jj)=\infty$, instead,
$\thptj$ can be realized \emph{as a proper subset} (or, more precisely, as a proper linear subspace) of $\trchj$.

Specifically, one can prove (and we will actually provide proofs of) the following facts:

\begin{enumerate}

\item In the case where $\dim(\hh)<\infty$ and/or $\dim(\jj)<\infty$, the norms $\ttrnor$ (trace norm) and
$\ptrnor$ (projective norm) are \emph{mutually equivalent} on the linear space $\trchj=\thptj$ (set equality,
or isomorphism of linear spaces, only). Otherwise stated, the Banach space $\thptj$ can be regarded as a
\emph{renorming} of the trace class $\trchj$. However, whereas `all bipartite states are equal' as far as the
trace norm is concerned --- namely, for every density operator $D\in\stahj$, $\|D\ttrn=1$ --- the projective norm
is able to \emph{efficiently detect entanglement}; precisely, $\|D\ptrn\ge 1$, and $D$ is entangled iff $\|D\ptrn>1$.

\item In the case where $\dim(\hh)=\dim(\jj)=\infty$ (i.e., in the \emph{genuinely infinite-dimensional} setting),
instead, even if the projective tensor product $\thptj$ admits a natural continuous embedding in $\trchj$ ---
actually, it is a \emph{dense} (wrt the trace norm) linear subspace of $\trchj$ --- $\thptj\subsetneq\trchj$.
Accordingly, the projective norm $\ptrnor$ and the restriction of the trace norm $\ttrnor$ to the cross trace
class $\thptj$ are \emph{not} equivalent (the latter being simply pointwise dominated by the former). In particular,
focusing on the set $\hstahj$ of all \emph{cross states} --- namely, the convex set of bipartite states defined
by $\hstahj\defi\stahj\cap\thptj$ --- one can prove that, in the genuinely infinite-dimensional setting,
$\hstahj\subsetneq\stahj$ (precisely, the cross states form a $\ttrnor$-dense proper subset of $\stahj$),
and the projective norm $\ptrnor$ is \emph{unbounded} on cross states.

\end{enumerate}

In spite of our last remark, and of the consequent technical subtleties, the projective norm turns
out to be still a venerable tool in the genuinely infinite-dimensional case, because:

\begin{enumerate}
\setcounter{enumi}{2}

\item The set $\sesta$ of all separable states is (strictly) contained in the convex set $\hstahj$
of all cross states, of which, actually, it is a distinguished convex subset.

\item In fact, the projective norm $\ptrnor$ is able to characterize the separability of bipartite states,
for any dimension of the `local' Hilbert spaces $\hh$ and $\jj$; i.e., $D\in\sesta$ iff $\|D\ptrn=1$.
This is the core fact of the \emph{Extended Cross Norm Criterion} of separability (ECNC), which is one of
the main results of our present contribution. As a byproduct of our investigation, we also get several
insights into the relations between the \emph{bipartite} trace class $\trchj$ and the \emph{cross} trace
class $\thptj$, and we obtain some important duality relations involving the Banach space $\thptj$.

\item The norm $\ptrnor$ (as well as the strictly related \emph{Hermitian projective norm} $\ptrnortr$,
see below) can be regarded as a \emph{measure of entanglement}. Accordingly, the cross states $\hstahj$
can be regarded as those bipartite states that possess a \emph{finite amount of entanglement}. In the
case where $\dim(\hh)<\infty$ and/or $\dim(\jj)<\infty$, actually all states enjoy this property ---
i.e., $\hstahj=\stahj$ --- whereas, in the genuinely infinite-dimensional setting, this is no longer true.

\end{enumerate}

The picture of cross states as finitely entangled states is supported by the following further important
points:

\begin{enumerate}
\setcounter{enumi}{5}

\item We can define an extended real-valued \emph{entanglement function} $\ent\colon\stahj\rightarrow[1,+\infty]$.
It turns out that, if restricted to the convex set $\hstahj$ of all cross states, it coincides with the projective
norm $\ptrnor$; hence, it is able to characterize quantum entanglement. We will also prove that --- in the genuinely
infinite-dimensional case where $\dim(\hh)=\dim(\jj)=\infty$, and \emph{in that case only} --- the value $+\infty$
is actually attained by $\ent$. Precisely, $\ent(D)=+\infty$ iff $D\in\stahj$ is \emph{not} a cross state. We also
show that, as one should expect, the entanglement function $\ent$ has a regular behavior wrt to restriction to
subspaces, is invariant wrt `local' unitary transformations and is well-behaved wrt the action of suitable
`local quantum maps'; i.e., such a map, by acting on a cross state, cannot increase the entanglement of this state.

\item We next prove that if $\ent(D)<+\infty$ --- i.e., in the case where $D$ is a cross state --- then the value
$\ent(D)$ of the entanglement function may be estimated, in principle with arbitrary precision, by measuring a suitable
kind of quantum observable, a so-called \emph{entanglement witness} (in this regard, however, we warn the reader
that our definition of an entanglement witness differs, even if in a nonessential way, from the convention usually
adopted in the literature~\cite{Guhne,Horodecki-bis,Terhal}).

\end{enumerate}

In order to prove our main results, after deriving several important properties of the cross trace class
(e.g., we prove that $\thptj$ is a Banach algebra), we exploit two different kinds of mathematical tools:
certain types of \emph{barycentric decompositions} of separable density operators, and some methods of
functional analysis and of the theory of operator algebras. In the first case, we have been influenced
by ideas and results of Holevo, Shirokov and Werner~\cite{HoShiWe,HoShiWe-bis,Holevo-EBC}; in the second
case, by Arveson's fundamental paper~\cite{Arveson}. It is worth mentioning, however, that Grothendieck's
groundbreaking contributions --- in particular, his celebrated \emph{R\'esum\'e}~\cite{Grothendieck-res}
--- lie at the foundations of the modern metric theory of tensor products, hence, of the applications
we are studying here.

Hoping to have conveyed some of the key points --- or, at least, the \emph{flavor} --- of the issues
we are going to investigate, it is now worth spending a few words about the meaning of our work in the
context of our present understanding of quantum theory. Even if the characterization of
separability/entanglement in terms of the projective norm may not be of much practical use, because,
in many cases of interest, this quantity is not analytically computable --- and, accordingly, various
related computable quantities have been proposed that give rise to \emph{partial} separability
criteria~\cite{Rudolph-properties,Rudolph-computable,Chen,Aniello_CI} (namely, sufficient conditions
for a bipartite quantum state to be entangled or, equivalently, necessary conditions for separability)
--- we believe that our work may shed new light on the \emph{very essence} of entanglement. In fact,
to the best of our knowledge, the projective norm is the only mathematical tool stemming \emph{directly}
(and in a natural way) from the tensor product structure that is able to provide us with a simple
characterization of separable versus entangled states, for any dimension of the Hilbert spaces forming
a composite quantum system.

In our present contribution, mainly for the sake of simplicity, we will consider the bipartite entanglement
only; we will consider the general multipartite case elsewhere.

The paper is organized as follows. In Section~\ref{preliminaries}, we collect some of the main notations
and some basic facts that will be used in the rest of the paper. Next, in Section~\ref{tensor}, we introduce
some mathematical tools that will be fundamental for the characterization of separable states; in particular,
the projective tensor product of trace classes and two `canonical decompositions' of a cross trace class operator.
The following three sections contain our main results. In particular, in Section~\ref{sepsta}, we focus on the notion
of separability of quantum states, and discuss the barycentric decompositions that characterize separable states.
In Section~\ref{further}, we complete the characterization of separable states in terms of the projective norms.
In particular, in Subsection~\ref{ecnc}, we condense in a unique statement --- the Extended Cross Norm Criterion
--- some of our main findings. Since the characterization of separability in terms of the projective norms crucially
depends on the dimension of the Hilbert spaces involved in the bipartition $\hhjj$, this summary should help the
reader to have a synopsis of the various (essentially three) possible cases: $\dim(\hh\otimes\jj)<\infty$, the case
where precisely one of the Hilbert spaces $\hh$ and $\jj$ is infinite-dimensional, and $\dim(\hh)=\dim(\jj)=\infty$.
Finally, in Section~\ref{entanglement}, we introduce the entanglement function and study its main properties.

The paper is addressed to a readership from diverse backgrounds, that ideally may include both physicists and
mathematicians. Therefore, throughout the paper, the reader will find some claims --- we refer to as \emph{facts},
because they are mainly well known --- that are particularly relevant for our purposes, and hence deserve to be
properly emphasized. These facts, that are stated without proofs (that can be found in standard references, or
easily reconstructed by the reader), may be more or less familiar to physicists, or to mathematicians, depending
on their backgrounds. For the sake of conciseness, we will also omit the proofs of all those results that can
reasonably be proved by the reader, without much effort, using the information provided in the paper.

\section{Preliminaries and main notations}
\label{preliminaries}

In this section, we will establish our main conventions and notations, and recall some
preliminary facts. All technical facts recalled in the following can be found by the
reader in standard references on operator theory and functional-analytic aspects of quantum
mechanics~\cite{Busch,Prugovecki,Emch,Moretti,Weidmann,Reed,Pedersen,Conway-bis,Conway-FA,Conway,Blanchard},
the theory of operator algebras~\cite{Moretti,Murphy,Kadison-EL,Kadison},
Banach space theory~\cite{Conway-FA,Conway,Lindenstrauss,Megginson,Fabian,Narici,Hytonen},
convex analysis~\cite{Rockafellar,Aliprantis,Simon_conv,Alfsen}, measure theory~\cite{Folland-RA,Bogachev,Cohn}
(especially on metric and Polish spaces~\cite{Parthasarathy,Kechris,Srivastava}, and the theory
of standard Borel spaces~\cite{Kechris,Srivastava,Raja}).
Everywhere in the paper, we will denote by $\erres$, $\errep$, $\erreps$, $\errems$ the sets of nonzero,
non-negative, strictly positive and strictly negative real numbers, respectively.

\subsection{Operators on Hilbert spaces}
\label{operators}

Coherently with the notation adopted in the introduction, we will denote by $\hh$
--- or by $\jj$, $\hau$ --- a separable complex Hilbert space. The scalar product
$\scap\equiv\scaph$ on $\hh$ is supposed to be conjugate-linear in its \emph{first}
argument. Given a subspace $\vv$ of $\hh$, we denote by $\vv^\perp$ its
\emph{orthogonal complement}, i.e., the closed subspace
$\{\phi\in\hh\colon\langle\phi,\psi\rangle=0 ,\ \forall\cinque\psi\in\vv\}$.
We adopt the symbol $I$ for the \emph{identity operator} on $\hh$ (or $\jj$, $\hau$).
We denote by $\trc$ the (separable) \emph{complex} Banach space of all \emph{trace class operators}
on $\hh$ and  by $\trcsa$ the \emph{real} Banach space of all \emph{selfadjoint} trace class operators.
We denote by $\trf\colon\trc\rightarrow\ccc$ the \emph{trace} functional, and by $\trnor\equiv\trano$
the \emph{trace norm} (on $\trc$ or $\trcsa$); i.e., for every $S\in\trc$,
$\|S\trn\defi\tr(|S|)=\|S^\ast\trn$, with $|S|\in\trc$ denoting the \emph{absolute value} of
$S$ ($|S|\equiv\sqrt{S^\ast S}$). The set $\trcp\subset\trcsa$ of all \emph{positive} trace class
operators is a norm-closed convex subset of $\trc$. The Banach space \emph{dual} $\trcd$ of $\trc$
is identified --- via the pairing $\trc\times\bop\ni(S,B)\mapsto\tr(SB)$ --- with $\bop$, i.e.,
the complex Banach space of all \emph{bounded operators} on $\hh$, endowed with the standard
\emph{operator norm} $\normi$. The convex set of all \emph{positive} bounded operators will be denoted
by $\bopp$. The trace class $\trc$ is a \emph{two-sided $\ast$-ideal in $\bop$} and, for any
$B\in\bop$ and $S\in\trc$, we have that $\|B\tre S\trn,\|SB\trn\le\|B\nori\|S\trn$; moreover,
for every $S\in\trc\subset\bop$, $\|S\nori\le\|S\trn$. Let us collect a few more facts concerning
these norms and the Banach space $\trc$.

\begin{fact} \label{fabop}
For the operator norm $\normi$ on $\bop$, we have:
\begin{equation}
\|B\nori\defi\sup\{\|B\tre\psi\|\colon\|\psi\|=1\}=\sup\{|\langle\phi,B\tre\psi\rangle|\colon
\|\phi\|=\|\psi\|=1\} \fin , \quad \forall\cinque B\in\bop \fin .
\end{equation}
In particular, if $B\in\bopsa$, then $\|B\nori=\sup\{|\langle\psi,B\tre\psi\rangle|\colon\|\psi\|=1\}$.
\end{fact}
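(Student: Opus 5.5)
The two equalities will be proved separately, and then the selfadjoint case.

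\emph{First equality.} It is just the definition of the operator norm, written via unit vectors. For $\psi\neq0$, homogeneity gives $\|B\tre\psi\|=\|\psi\|\,\|B(\psi/\|\psi\|)\|$. Hence the supremum of $\|B\tre\psi\|/\|\psi\|$ over $\psi\neq 0$ equals the supremum over $\|\psi\|=1$. If $\hh=\{0\}$, both sides are zero by convention.

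\emph{Second equality.} Set $s\defi\sup\{|\langle\phi,B\tre\psi\rangle|\colon\|\phi\|=\|\psi\|=1\}$.
\begin{itemize}
\item By Cauchy--Schwarz, $|\langle\phi,B\tre\psi\rangle|\le\|B\tre\psi\|$ for every unit vector $\phi$, so $s\le\|B\nori$.
\item Conversely, fix a unit vector $\psi$ with $B\tre\psi\neq0$ and take $\phi=B\tre\psi/\|B\tre\psi\|$. Then $\langle\phi,B\tre\psi\rangle=\|B\tre\psi\|$, so $\|B\tre\psi\|\le s$. This bound also holds trivially when $B\tre\psi=0$.
\end{itemize}
Taking the supremum over $\psi$ gives $\|B\nori\le s$, hence equality.

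\emph{Selfadjoint case.} This is the only step that is not immediate. Let
\[
m\defi\sup\{|\langle\psi,B\tre\psi\rangle|\colon\|\psi\|=1\}.
\]
Clearly $m\le\|B\nori$, by the second equality with $\phi=\psi$. For the reverse inequality we use polarization. Since $B$ is selfadjoint, each $\langle\xi,B\tre\xi\rangle$ is real, and
\[
4\,\repa\langle\phi,B\tre\psi\rangle=\langle\phi+\psi,B(\phi+\psi)\rangle-\langle\phi-\psi,B(\phi-\psi)\rangle .
\]
Each term on the right is bounded in absolute value by $m\|\phi\pm\psi\|^2$. By the parallelogram law, for unit vectors $\phi,\psi$,
\[
4\,|\repa\langle\phi,B\tre\psi\rangle|\le m\big(\|\phi+\psi\|^2+\|\phi-\psi\|^2\big)=4m .
\]
Replace $\phi$ by $e^{\ima\alpha}\phi$, with $\alpha$ chosen so that $\langle e^{\ima\alpha}\phi,B\tre\psi\rangle$ is real and nonnegative. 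This shows $|\langle\phi,B\tre\psi\rangle|\le m$ for all unit vectors $\phi,\psi$. By the second equality, $\|B\nori\le m$.

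The main point to get right is this phase adjustment together with the parallelogram bound. The rest is routine.
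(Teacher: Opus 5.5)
Your proof is correct. The paper gives no proof of this Fact: it quotes it as standard, with pointers to operator-theory textbooks. Your argument (Cauchy--Schwarz plus the choice $\phi=B\psi/\|B\psi\|$ for the second expression, then polarization, the parallelogram law and a phase rotation of $\phi$ for the selfadjoint case) is exactly the standard proof one would supply.

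Selfadjointness is genuinely used in only one place: the Hermitian symmetry $\langle\psi,B\phi\rangle=\overline{\langle\phi,B\psi\rangle}$ that makes your polarization identity hold. It is worth stating this explicitly, rather than only noting that $\langle\xi,B\xi\rangle$ is real.
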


\begin{fact} \label{trnsubm}
The trace norm $\trnor$ is \emph{submultiplicative}; i.e., for any $S,T\in\trc$,
$\|S\tre T\trn\le\|S\trn\tre\|T\trn$.
\end{fact}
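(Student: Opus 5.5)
The plan is to reduce submultiplicativity to two standard inequalities: a Hölder-type bound $\|B\tre S\trn\le\|B\nori\|S\trn$, already recalled just before Fact~\ref{fabop}, and the comparison $\|T\nori\le\|T\trn$ for trace class operators, also recalled there. Given $S,T\in\trc$, the product $S\tre T$ belongs to $\trc$ because $\trc$ is a two-sided ideal in $\bop$ and $T\in\trc\subset\bop$. Applying the first bound with $B=T$ acting on the right, in the form $\|S\tre T\trn\le\|S\trn\|T\nori$, and then the second, we obtain
\begin{equation*}
\|S\tre T\trn\le\|S\trn\tre\|T\nori\le\|S\trn\tre\|T\trn \fin ,
\end{equation*}
which is the claim.

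For completeness, I would also indicate why the two ingredients hold, since the argument rests entirely on them. For $\|T\nori\le\|T\trn$, take the polar decomposition $T=U|T|$; then $\|T\nori=\||T|\nori$, and this equals the largest singular value of $T$, which is at most the sum $\tr(|T|)$ of all singular values. For the ideal inequality, I would use the variational formula
\begin{equation*}
\|X\trn=\sup\{|\tr(XC)|\colon C\in\bop,\ \|C\nori\le1\} \fin ,
\end{equation*}
which follows from the duality $\trcd\cong\bop$ stated above. Then $|\tr(S\tre T\tre C)|=|\tr(S\tre(TC))|\le\|S\trn\|TC\nori\le\|S\trn\|T\nori\|C\nori$, and taking the supremum over $C$ gives $\|S\tre T\trn\le\|S\trn\|T\nori$.

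The only delicate point is the estimate $|\tr(S\tre A)|\le\|S\trn\|A\nori$ for $S\in\trc$ and $A\in\bop$. I would prove it using the singular value decomposition $S=\sum_k s_k|\eta_k\rangle\langle\phi_k|$ with orthonormal families $\{\eta_k\}$ and $\{\phi_k\}$. Then $\tr(S\tre A)=\sum_k s_k\langle\phi_k,A\eta_k\rangle$, and each term satisfies $|\langle\phi_k,A\eta_k\rangle|\le\|A\nori$ by Fact~\ref{fabop}. Summing gives $|\tr(S\tre A)|\le\|A\nori\sum_k s_k=\|A\nori\|S\trn$. All other steps are routine.
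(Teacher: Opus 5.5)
Your proof is correct and matches the paper's intended derivation: the paper states this as a standard fact without proof, right after recalling the two ingredients you combine, namely $\|SB\|_1\le\|B\|_\infty\|S\|_1$ for $B\in\mathcal{B}(\mathcal{H})$ and $\|T\|_\infty\le\|T\|_1$, which give $\|ST\|_1\le\|S\|_1\|T\|_\infty\le\|S\|_1\|T\|_1$. Your supplementary justifications of these two inequalities, via the singular value decomposition and the duality $\mathcal{B}_1(\mathcal{H})^\ast\cong\mathcal{B}(\mathcal{H})$, are also sound.
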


\begin{fact} \label{prolico}
Every element $S$ of $\trc$ can be expressed as a linear combination of (at most) four positive
trace class operators $S_1,\ldots,S_4\in\trcp$: $S=S_1 - S_2 + \ima\sei(S_3 - S_4)$, where, for
$S\in\trcsa$, $S_3 - S_4=0$. In particular, the  operators $S_1,\ldots,S_4\in\trcp$ can be taken
of the form
\begin{equation} \label{nota}
S_1=\frac{1}{4}\sei(|S+S^\ast|+S+S^\ast) \fin , \quad
S_2=\frac{1}{4}\sei(|S+S^\ast|-S-S^\ast) \fin ,
\end{equation}
\begin{equation} \label{notb}
S_3=\frac{1}{4}\sei(|S-S^\ast|-\ima\sei (S-S^\ast)) \fin , \quad
S_4=\frac{1}{4}\sei(|S-S^\ast|+\ima\sei (S-S^\ast)) \fin ,
\end{equation}
and, in this case, we have that $S_1\tre S_2=S_2\tre S_1=0=S_3\tre S_4=S_4\tre S_3$. Thus, if
$S=S_1 - S_2\in\trcsa$, then, with such a choice of the positive trace class operators $S_1$
and $S_2$, we have that $|S|=S_1 + S_2$, $\|S\trn\defi\tr(|S|)=\tr(S_1)+\tr(S_2)$ and
$S_1\tre S_2=S_2\tre S_1=0$.
\end{fact}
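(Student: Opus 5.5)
The decomposition is built directly from the Hermitian and anti-Hermitian parts of $S$. Given $S\in\trc$, set
\[
A\defi\tfrac12\sei(S+S^\ast), \qquad B\defi\tfrac{1}{2\ima}\sei(S-S^\ast).
\]
Both $A$ and $B$ lie in $\trcsa$, because $\trc$ is a $\ast$-ideal and the adjoint preserves the trace class. Moreover $S=A+\ima\sei B$, so it suffices to treat a selfadjoint trace class operator and then apply the result to $A$ and $B$ separately.

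For selfadjoint $T\in\trcsa$, I would use the spectral theorem for compact selfadjoint operators:
\[
T=\sum_k t_k\,|\phi_k\rangle\hspace{-0.3mm}\langle\phi_k|,
\]
with $\{\phi_k\}$ orthonormal, $t_k\in\erres$ and $\sum_k|t_k|<\infty$. Then I would define
\[
T_+\defi\tfrac12(|T|+T)=\sum_{\tkg}t_k\,|\phi_k\rangle\hspace{-0.3mm}\langle\phi_k|, \qquad
T_-\defi\tfrac12(|T|-T)=-\sum_{\tkl}t_k\,|\phi_k\rangle\hspace{-0.3mm}\langle\phi_k|.
\]
These are positive trace class operators. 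They have mutually orthogonal supports, so $T_+T_-=T_-T_+=0$. They also satisfy $T=T_+-T_-$ and $|T|=T_++T_-$. Taking traces gives
\[
\|T\trn=\tr(|T|)=\tr(T_+)+\tr(T_-).
\]

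Next I take $T=A$ and $T=B$. Since $|A|=\tfrac12|S+S^\ast|$, the operators $A_\pm$ are exactly $S_1$ and $S_2$ of \eqref{nota}. Similarly, $B=-\tfrac{\ima}{2}(S-S^\ast)$ gives $|B|=\tfrac12|S-S^\ast|$. Hence $B_+=\tfrac14(|S-S^\ast|-\ima(S-S^\ast))=S_3$ and $B_-=S_4$, which is \eqref{notb}.

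The orthogonality relations follow from the previous step. If $S\in\trcsa$, then $B=0$, so $S_3=S_4=0$ and the final claims are the selfadjoint case already proved.

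The only point requiring some care is checking that $|T|$ is given by the spectral formula $\sum_k|t_k|\,|\phi_k\rangle\hspace{-0.3mm}\langle\phi_k|$. This holds because that operator is positive and its square is $T^\ast T=T^2$, so it is the positive square root of $T^\ast T$. The remaining steps are routine verification.
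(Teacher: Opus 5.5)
Your argument is correct. Splitting $S$ into the selfadjoint parts $\tfrac12(S+S^\ast)$ and $\tfrac{1}{2\mathrm{i}}(S-S^\ast)$, and then taking the Jordan decomposition $T_\pm=\tfrac12(|T|\pm T)$ of each via the spectral theorem, is the standard argument that yields exactly the stated $S_1,\dots,S_4$ (using $|cX|=|c|\,|X|$, $T_+T_-=T_-T_+=0$ and $|T|=T_++T_-$). The paper states this as a well-known fact without giving a proof, so there is no alternative route to compare against.
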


\begin{fact} \label{sivadea}
Every trace class operator $S\in\trc$ admits a decomposition of the form
\begin{equation} \label{sivalde}
S=\sum_n s_n\tre\epn \fin , \quad (\epn)\tre\psi\defi\langle\phi_n,\psi\rangle\sei\eta_n \fin ,
\ \psi\in\hh \fin ,
\end{equation}
where $s_n\ge 0$, for every $n$, and $\{\eta_n\}$, $\{\phi_n\}$ are orthonormal systems in $\hh$;
if $\dim(\hh)=\infty$, the sum --- whenever not finite --- is absolutely convergent wrt the
trace norm $\trnor$, and, in fact, $\sum_n s_n=\tr(|S|)=\|S\trn$.
\end{fact}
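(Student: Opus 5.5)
This is the singular value decomposition for compact operators, applied to a trace class operator; I would argue it directly in three steps.

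\emph{Step 1: polar decomposition and spectral theorem.} Start from the polar decomposition $S=V|S|$, where $|S|=\sqrt{S^\ast S}$ and $V$ is a partial isometry that maps $\overline{\ran|S|}$ isometrically onto $\overline{\ran S}$. Since $S\in\trc$, the operator $|S|$ is positive and trace class, hence compact. The spectral theorem for compact selfadjoint operators then gives an orthonormal system $\{\phi_n\}$ in $\overline{\ran|S|}$ and numbers $s_n>0$ (the nonzero eigenvalues, counted with multiplicity) such that
\[
|S|=\sum_n s_n\tre|\phi_n\rangle\langle\phi_n| .
\]
If the system is infinite, this sum converges in operator norm.

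\emph{Step 2: the decomposition of $S$.} Put $\eta_n\defi V\phi_n$. Because $V$ is isometric on $\overline{\ran|S|}$, the family $\{\eta_n\}$ is again orthonormal. Applying $S$ to an arbitrary vector $\psi\in\hh$ gives
\[
S\tre\psi=V|S|\tre\psi=\sum_n s_n\langle\phi_n,\psi\rangle\tre\eta_n ,
\]
which is exactly the form (\ref{sivalde}).

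\emph{Step 3: the norm identity and trace-norm convergence.} The trace of $|S|$ computed in the eigenbasis gives $\tr(|S|)=\sum_n s_n$, so $\|S\trn=\sum_n s_n<\infty$ by the definition of the trace norm. For the convergence claim, note that each rank-one term has $\|\epn\trn=\|\eta_n\|\tre\|\phi_n\|=1$. Hence $\sum_n\|s_n\epn\trn=\sum_n s_n<\infty$, so the series converges absolutely in the Banach space $\trc$. Its trace-norm limit also converges in operator norm, because $\normi\le\trnor$. That operator-norm limit coincides with $S$ by Step 2, so the trace-norm limit is $S$ as well.

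The only substantive ingredient is the spectral theorem for the compact positive operator $|S|$ together with the polar decomposition. Everything else is bookkeeping. The one point needing care is identifying the trace-norm limit with $S$, and the domination $\normi\le\trnor$ settles that.
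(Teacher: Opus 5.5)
Your proof is correct. The paper gives no proof of its own: it lists this as a standard fact and points to textbook references. Your route is the standard textbook argument. You take the polar decomposition $S=V|S|$, apply the spectral theorem to the compact positive operator $|S|$, and set $\eta_n=V\phi_n$. You then get absolute trace-norm convergence from $\sum_n s_n=\tr(|S|)<\infty$, and you correctly identify the trace-norm limit with the strong limit from Step~2 via $\|\cdot\|_{\infty}\le\|\cdot\|_{1}$.
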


The set of all trace class operators on $\hh$ admitting a \emph{finite} decomposition of
the form~\eqref{sivalde} is a dense linear subspace of $\trc$, that coincides with the set
$\frh$ of \emph{finite rank operators} on $\hh$, and $\frh=\lispa\{\ep\colon \eta,\phi\in\hh\}$;
see, e.g., Chapter~{6} of~\cite{Weidmann}. $\frh$ is a $\normi$-dense linear subspace of the
Banach space $\cph$ of \emph{compact operators} on $\hh$; endowed with the operator norm, $\cph$
is a $\cast$-subalgebra of $\bop$~\cite{Murphy}.

\begin{remark}[The singular value decomposition] \label{sivadeb}
By Fact~\ref{sivadea}, every $S\in\trc$, with $S\neq 0$, admits a \emph{singular value decomposition} (SVD)
of the form $S=\sum_k s_k\tre\epk$, where $s_k> 0$, for every value of the index $k$, and $\{\eta_k\}$,
$\{\phi_k\}$ are orthonormal systems in $\hh$; the sum, whenever not finite, is absolutely convergent
wrt the trace norm, and $\sum_k s_k=\tr(|S|)=\|S\trn$ (whereas, $\tr(S)=\sum_k s_k\tre\langle\phi_k,\eta_k\rangle$).
The set $\{s_k\}$ of all \emph{singular values} of $S$ coincides with the set of the \emph{nonzero eigenvalues}
of the \emph{positive operator} $|S|\neq 0$, each eigenvalue being counted according to its multiplicity.
We stress that, compared to decomposition~\eqref{sivalde}, the SVD is a \emph{sheer decomposition} of
the trace class operator $S\neq 0$; i.e., \emph{it does not contain any zero term}.
\end{remark}

\begin{fact} \label{norsub}
The linear spaces $\frh$, $\cph $ of finite rank operators and of compact operators on $\hh$ ---
$\frh\subset\cph\subset\bop=\trcd$ --- are \emph{norming subspaces} of $\bop=\trcd$ for $\trc$; i.e.,
for every $S\in\trc$,
\begin{equation} \label{renorsub}
\|S\trn=\sup\{|\tr(SF)|\colon F\in\frh, \ \|F\nori=1\}=\sup\{|\tr(SK)|\colon K\in\cph, \ \|K\nori=1\} \fin .
\end{equation}
\end{fact}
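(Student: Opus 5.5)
The approach is to prove the two equalities in \eqref{renorsub} through the singular value decomposition of Remark~\ref{sivadeb}. Since $\frh\subset\cph\subset\bop$, and since $|\tr(SB)|\le\|SB\trn\le\|S\trn\|B\nori$ for every $B\in\bop$, both suprema are bounded above by $\|S\trn$. Moreover, the supremum over $\frh$ is at most the supremum over $\cph$. It therefore suffices to show that the supremum over the unit-norm finite rank operators is at least $\|S\trn$. For $S=0$ this is trivial: any rank-one $F$ with $\|F\nori=1$ gives $0$, and $\hh\neq\{0\}$ is tacitly assumed.

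For $S\neq0$, write the SVD $S=\sum_k s_k\tre\epk$, with $s_k>0$ and $\{\eta_k\}$, $\{\phi_k\}$ orthonormal systems, so that $\sum_k s_k=\|S\trn$. For each $N$, consider the finite rank operator
\begin{equation*}
F_N\defi\sum_{k\le N}|\phi_k\rangle\hspace{-0.3mm}\langle\eta_k| \fin .
\end{equation*}
This is a partial isometry mapping $\lispa\{\eta_k\}_{k\le N}$ onto $\lispa\{\phi_k\}_{k\le N}$, hence $\|F_N\nori=1$. The key computation is the trace:
\begin{equation*}
\tr(SF_N)=\sum_{k\le N}\langle\eta_k,S\tre\phi_k\rangle=\sum_{k\le N}s_k \fin .
\end{equation*}
To evaluate it, compute in an orthonormal basis extending $\{\eta_k\}_{k\le N}$. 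Since $\ran F_N\subset\lispa\{\phi_k\}$, we get $SF_N\eta_k=S\phi_k=s_k\eta_k$, using the orthonormality of $\{\phi_k\}$, and $SF_N$ vanishes on the orthogonal complement.

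It then remains to observe that, when the SVD is infinite, $\sum_{k\le N}s_k\to\|S\trn$ as $N\to\infty$, so the supremum over $\frh$ is at least $\|S\trn$. When the SVD is finite, a single $F_N$ already attains the value $\|S\trn$. Combining the two inequalities yields both equalities.

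The only mildly delicate point is the trace computation. It requires $\tr(SF_N)$ to be independent of the chosen basis, which holds because $SF_N$ is trace class by the ideal property recalled above. It also requires that the absolutely convergent SVD series may be applied termwise to $\phi_k$; this follows since trace-norm convergence implies operator-norm convergence, as $\|\cdot\nori\le\|\cdot\trn$.
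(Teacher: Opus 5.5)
Your proof is correct and complete. The paper records this as a standard fact without giving a proof, and your argument is the standard one: the upper bound comes from $|\tr(SB)|\le\|S\|_1\|B\|_\infty$, and the matching lower bound comes from testing against the finite-rank partial isometries $F_N=\sum_{k\le N}|\phi_k\rangle\langle\eta_k|$, which are the truncations of the adjoint of the partial isometry in the polar decomposition of $S$.
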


The set of \emph{density operators} on $\hh$ --- the unit trace, positive trace class operators,
here denoted by $\sta$ --- is a norm-closed convex subset of $\trc$, the intersection of $\trcp$
with the closed set of unit trace operators. We will identify $\sta$ with the set of all the
(normal, or $\sigma$-additive) \emph{states} of a quantum system with Hilbert space
$\hh$~\cite{Emch,Holevo,Heino,Moretti}. The convex set $\sta$ contains the set $\pusta$ of all rank-one
projections on $\hh$, i.e., the so-called \emph{pure} states. The dual space of $\trcsa$ --- i.e., $\bopsa$
the \emph{real} Banach space of all \emph{selfadjoint} bounded operators --- will be regarded as the space
of (bounded) quantum \emph{observables}.

\begin{fact} \label{fabop-bis}
For every $B\in\bopsa$, $\|B\nori=\sup\{|\tr(PB)|\colon P\in\pusta\}=\sup\{|\tr(DB)|\colon D\in\sta\}$.
\end{fact}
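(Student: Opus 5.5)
The plan is to prove the two equalities separately, starting from the characterization of the operator norm for selfadjoint operators in Fact~\ref{fabop}.

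\emph{First equality.} For $P\in\pusta$ write $P=|\psi\rangle\langle\psi|$ with $\|\psi\|=1$. Then $\tr(PB)=\langle\psi,B\tre\psi\rangle$. Hence
\[
\sup\{|\tr(PB)|\colon P\in\pusta\}=\sup\{|\langle\psi,B\tre\psi\rangle|\colon\|\psi\|=1\},
\]
and the right-hand side equals $\|B\nori$ by the selfadjoint case of Fact~\ref{fabop}.

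\emph{Second equality.} Since $\pusta\subset\sta$, the supremum over $\sta$ is at least the supremum over $\pusta$, which is $\|B\nori$. For the reverse bound, take any $D\in\sta$. Using $\|DB\trn\le\|B\nori\|D\trn$ and $\|D\trn=\tr(D)=1$, we get
\[
|\tr(DB)|\le\|DB\trn\le\|B\nori .
\]
Here the first inequality is $|\tr(T)|\le\|T\trn$ for $T\in\trc$.

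If one prefers not to rely on $|\tr(T)|\le\|T\trn$, there is an alternative route. Expand $D=\sum_n p_n|\psi_n\rangle\langle\psi_n|$ spectrally, with $p_n\ge0$ and $\sum_n p_n=1$; this is Fact~\ref{sivadea} with $\eta_n=\phi_n$ for positive $D$. The series converges in trace norm and $\tr(\cdot\,B)$ is continuous, so $\tr(DB)=\sum_n p_n\langle\psi_n,B\psi_n\rangle$. This is a convex combination of numbers of modulus at most $\|B\nori$, which gives the same bound.

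The only point requiring any care is the justification of $|\tr(T)|\le\|T\trn$, or equivalently of exchanging the trace with the trace-norm convergent series. I expect this to be the main, though minor, obstacle. Both follow from the standard duality $\trcd\cong\bop$ recalled above: the trace pairing satisfies $|\tr(SB)|\le\|S\trn\|B\nori$.
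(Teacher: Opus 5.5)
Your proof is correct. The paper records this as a standard Fact without proof, and your argument is the standard reconstruction: you use $\tr(PB)=\langle\psi,B\psi\rangle$ to reduce the first supremum to the selfadjoint case of Fact~\ref{fabop}, then combine $\pusta\subset\sta$ with $|\tr(DB)|\le\|D\|_1\|B\|_\infty=\|B\|_\infty$ for the second (the spectral-decomposition variant is fine but unnecessary, since the trace-pairing inequality already gives the bound directly).
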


\begin{fact}[The standard topology of $\sta$~\cite{Aniello_CSP}] \label{statop}
The weak and the strong operator topologies on $\sta$ (inherited from the space of bounded operators $\bop$),
as well as the topologies induced on $\sta$ by the metrics associated with the Schatten $\pp$-norms
$\normp$, $1\le\pp\le\infty$, all coincide. This unique topology will be called the \emph{standard topology}
of $\sta$.
\end{fact}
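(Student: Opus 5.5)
The plan is to show that all the listed topologies on $\sta$ coincide with the trace-norm topology. By the inclusions $\|\cdot\|_{\pp}\le\|\cdot\|_{\mathsf{q}}$ for $\mathsf{q}\le\pp$ (with $\|S\nori\le\|S\trn$ as the extreme case), the trace-norm topology is the finest of the Schatten topologies. It is also finer than the strong operator topology, since $\|(S-T)\psi\|\le\|S-T\nori\|\psi\|$. The strong operator topology is in turn finer than the weak one. So it suffices to prove one implication: on $\sta$, weak operator convergence of a net $D_\alpha\to D$ implies $\|D_\alpha-D\trn\to 0$. All the topologies are then squeezed between the weak operator and the trace-norm topology, hence coincide.

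For this implication I would use a finite-rank truncation argument. Fix $\epsilon>0$. Choose a finite-rank orthogonal projection $P$, spanned by eigenvectors of $D$, such that $\tr(D(I-P))<\epsilon$. Weak operator convergence gives convergence of the finitely many matrix elements $\langle e_i,D_\alpha e_j\rangle$ for $e_i,e_j\in\ran P$. Hence $\|PD_\alpha P-PDP\trn\to0$, because all norms on a finite-dimensional operator space are equivalent. In particular $\tr(D_\alpha P)\to\tr(DP)$, and since $\tr(D_\alpha)=\tr(D)=1$, also $\tr(D_\alpha(I-P))\to\tr(D(I-P))<\epsilon$. This use of the unit trace is the key point; it fails for general positive trace-class nets.

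It remains to control the off-diagonal blocks. Write
\[
D_\alpha-D=P(D_\alpha-D)P+P(D_\alpha-D)(I-P)+(I-P)(D_\alpha-D)P+(I-P)(D_\alpha-D)(I-P).
\]
For a positive trace-class operator $T$, one has $\|(I-P)T(I-P)\trn=\tr(T(I-P))$. For the off-diagonal block, a Cauchy--Schwarz/H\"older estimate gives $\|PT(I-P)\trn\le\|T^{1/2}P\|_2\,\|T^{1/2}(I-P)\|_2\le\tr(T)^{1/2}\tr(T(I-P))^{1/2}$, using Fact~\ref{trnsubm} in its Hilbert--Schmidt form. Applying these bounds to $T=D_\alpha$ and $T=D$ yields $\limsup_\alpha\|D_\alpha-D\trn\le C\sqrt{\epsilon}$ for a universal constant $C$. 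Since $\epsilon$ is arbitrary, this gives trace-norm convergence.

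The main obstacle is this tail and off-diagonal control, i.e.\ making the tightness coming from the trace constraint work uniformly along the net. The estimate above should settle it. Alternatively, one could invoke the Gr\"umm/Simon-type result that weak convergence of positive operators together with convergence of traces implies trace-norm convergence. Finally, I would note that all these topologies are metrizable on $\sta$ via the Schatten metrics, so arguing with sequences instead of nets would also suffice.
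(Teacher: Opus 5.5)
Your proof is correct and complete. The paper gives no proof of this statement: it is recorded as a Fact and attributed to \cite{Aniello_CSP}, so there is no in-paper argument to compare yours with. Your first paragraph is exactly the squeeze the paper later uses in the proof of Proposition~\ref{protop}, where this Fact supplies the one nontrivial implication. Your truncation argument proves that implication directly. You also correctly identify that $\tr(D)=1$ for the limit is what makes it work. Without it, mass can escape to infinity, as in Example~\ref{exaclo}: there $D_l\to 0\notin\sta$ in the weak operator topology while $\|D_l\|_1=1$.

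Two small corrections. First, the inequality you need is $\|AB\|_1\le\|A\|_2\,\|B\|_2$, which is H\"older's (Cauchy--Schwarz) inequality for Schatten classes, not Fact~\ref{trnsubm}. The latter only gives $\|AB\|_1\le\|A\|_1\|B\|_1$, which is useless here because $T^{1/2}(I-P)$ need not be trace class. H\"older follows at once from the polar decomposition $AB=U|AB|$:
\[
\|AB\|_1=\tr(U^\ast AB)=\langle A^\ast U,B\rangle_{\mathrm{HS}}\le\|A\|_2\,\|B\|_2 .
\]
Second, your closing justification for using sequences is circular for the weak and strong operator topologies. Their metrizability on $\sta$ ``via the Schatten metrics'' is the conclusion being proved. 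The correct reason is that norm-bounded subsets of $\bop$ are metrizable in these topologies because $\hh$ is separable. Your net argument does not need this remark anyway.
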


\subsection{A few facts and notations concerning normed and Banach spaces}
\label{basps}

A norm on a, real or complex, vector space $\bsb$ --- in particular, a Banach or a Hilbert space ---
will be usually denoted by the symbols $\no$ or $\notri$ (with or without some complement, in
order to further specialize their meaning), and, whenever no risk of confusion may occur, the
normed space $(\bsb,\no)$ will be simply denoted by $\bsb$. The symbols $\ball(\bsb)$, $\sph(\bsb)$
will denote, respectively, the \emph{closed unit ball} and the \emph{unit sphere} centered at the
origin of the normed space $\bsb$; i.e., we put
$\ball(\bsb)\defi\{x\in\bsb\colon \|x\|\le 1\}$ and $\sph(\bsb)\defi\{x\in\bsb\colon \|x\|= 1\}$.
We say that $(\bsb,\notri)$ is a \emph{renorming} of $(\bsb,\no)$ if the norms $\no$ and $\notri$
are equivalent.

The (continuous) \emph{dual space} of a normed space $\bsb$ --- which is always a Banach space --- will be
denoted by $\bsb^\ast\equiv(\bsb,\no)^\ast$ (unless another symbol be more suitable for the occasion),
and, for the time being, we will denote by $\no_\bsbd$ the norm of the Banach space $\bsbd$; precisely,
setting $\|\xi\|_\bsbd\defi\sup\{\|\xi(x)\|\colon x\in\bsb,\ \|x\|\le 1\}$, the norm $\no_\bsbd$ is called
the \emph{dual norm}  of $\no$. If $\bsb$ is a Banach space itself, then $\bsbd$ is also called the
\emph{Banach space dual} of $\bsb$.

If $\bs$ is a linear subspace of a normed space $\bsb$, then, for every bounded functional
$\xi\in\bsb^\ast$, $\xi_0\equiv\xi\vert_\bs\in\bs^\ast$ (i.e., the restriction $\xi_0$ of $\xi$ to
$\bs$ is a bounded functional), and $\|\xi_0\|_\bsd\le\|\xi\|_\bsbd$. Here, keeping $\xi_0\in\bs^\ast$
fixed, the inequality is saturated by some element $\xi$ of $\bsb^\ast$ such that $\xi_0=\xi\vert_\bs$.
Indeed, conversely, by the Hahn-Banach (norm-preserving, continuous extension) theorem --- see, e.g.,
Theorem~{1.9.6} of~\cite{Megginson}, or Corollary~{2.3} of~\cite{Fabian} --- every element $\xi_0$ of
$\bs^\ast$ admits a \emph{Hahn-Banach extension} $\xi\in\bsb^\ast$; i.e., there is a functional
$\xi\in\bs^\ast$ such that $\xi\vert_\bs=\xi_0$ and $\|\xi\|_\bsbd=\|\xi_0\|_\bsd$. An element
of $\bs^\ast$ may admit more than one Hahn-Banach extension. However, if $\bs$ is a \emph{norm-dense}
linear subspace of $\bsb$, then every element of $\bs^\ast$ will admit \emph{exactly one} continuous
extension, so that $\bs^\ast$ is isomorphic to the Banach space $\bsb^\ast$, where the isomorphism
of Banach spaces is implemented by the map $\bsb^\ast\ni\xi\mapsto\xi\vert_\bs\in\bs^\ast$, and
we will simply write $\bs^\ast=\bsb^\ast$.

The \emph{$\wst$-topology} (namely, the weak$^{\ast}$ topology) on the dual $\bsb^\ast$ of a
normed space $\bsb$ --- also called the \emph{$\sigma(\bsb^\ast,\bsb)$-topology} --- is defined
as the \emph{initial topology} (or inverse image topology) induced by the family of maps
$\{\bsb^\ast\ni\xi\mapsto\xi(x)\in\ccc\colon x\in\bsb\}$; i.e., the weakest topology on $\bsb^\ast$
such that all these maps are continuous~\cite{Pedersen,Conway-bis,Conway-FA,Conway,Megginson,Fabian,Narici}.
Equivalently, the $\sigma(\bsb^\ast,\bsb)$-topology can be defined as the initial topology
induced by the family of semi-norms $\{\bsb^\ast\ni\xi\mapsto|\xi(x)|\in\errep\colon x\in\bsb\}$,
and $\bsb^\ast$ is a locally convex topological vector space wrt this topology (see Chapter~{2}
of~\cite{Megginson} and Chapter~{8} of~\cite{Narici}). The $\sigma(\bsb^\ast,\bsb)$-topology,
by Theorem~{2.6.2} of~\cite{Megginson}, is a completely regular --- i.e., Tychonoff or
$\mathsf{T}_{3\frac{1}{2}}$ --- and locally convex subtopology of the weak topology (hence, of
the norm topology) of $\bsb^\ast$, that is, of the $\sigma(\bsb^\ast,\bsb^{\ast\ast})$-topology.
A net $\{\xi_i\}\subset\bsb^\ast$ converges to some $\xi\in\bsb^\ast$ wrt the
$\sigma(\bsb^\ast,\bsb)$-topology iff $\lim_i \xi_i(x)=\xi(x)$, for all $x\in\bsb$.

\begin{fact} \label{wstatop}
Let $\bsb_0$ be a norm-dense subset of the normed space $\bsb$, or, also, a norm-dense subset of
$\sph(\bsb)$. Then, a \emph{norm-bounded} net $\{\xi_i\}\subset\bsb^\ast$ converges to $\xi\in\bsb^\ast$
wrt the $\sigma(\bsb^\ast,\bsb)$-topology iff $\lim_i \xi_i(x_0)=\xi(x_0)$, for all $x_0\in\bsb_0$.
\end{fact}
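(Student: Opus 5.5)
One implication is immediate. Weak$^\ast$ convergence means $\lim_i\xi_i(x)=\xi(x)$ for every $x\in\bsb$, so in particular for every $x_0\in\bsb_0$.

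For the converse, I would use a standard $\epsilon/3$ argument that relies on norm-boundedness. Put $M\defi\sup_i\|\xi_i\|_\bsbd<\infty$ and $M'\defi\max\{M,\|\xi\|_\bsbd\}$. Fix $x\in\bsb$ and $\epsilon>0$.

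First, suppose $\bsb_0$ is norm-dense in $\bsb$. Choose $x_0\in\bsb_0$ with $\|x-x_0\|<\epsilon/(3M'+1)$. Then
\begin{equation*}
|\xi_i(x)-\xi(x)|\le|\xi_i(x-x_0)|+|\xi_i(x_0)-\xi(x_0)|+|\xi(x_0-x)|\le 2M'\|x-x_0\|+|\xi_i(x_0)-\xi(x_0)| \fin .
\end{equation*}
By hypothesis, the middle term is eventually smaller than $\epsilon/3$, so eventually $|\xi_i(x)-\xi(x)|<\epsilon$. Since $x$ and $\epsilon$ were arbitrary, $\xi_i\to\xi$ in the $\sigma(\bsb^\ast,\bsb)$-topology.

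Second, suppose $\bsb_0$ is norm-dense only in $\sph(\bsb)$. I would reduce this case to the previous one by homogeneity. Let $\widetilde{\bsb}_0\defi\{\lambda x_0\colon\lambda\in\ccc,\ x_0\in\bsb_0\}$ (real $\lambda$ in the real case). By linearity of the functionals, convergence on $\bsb_0$ implies convergence on $\widetilde{\bsb}_0$.

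It remains to check that $\widetilde{\bsb}_0$ is norm-dense in $\bsb$. If $x=0$, there is nothing to prove, since $0\in\widetilde{\bsb}_0$. If $x\neq0$, approximate $x/\|x\|\in\sph(\bsb)$ by some $x_0\in\bsb_0$ within $\delta$. Then $\|x-\|x\|x_0\|\le\|x\|\delta$, which is arbitrarily small. The first case then applies verbatim to $\widetilde{\bsb}_0$.

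No step is a genuine obstacle. The only points needing care are to use the uniform bound $M$, which is where the norm-boundedness hypothesis enters, and to perform the rescaling in the sphere case. Without boundedness the statement fails in general, because convergence on a dense set does not control the error term $\xi_i(x-x_0)$.
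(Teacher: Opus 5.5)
Your argument is correct: the $\epsilon/3$ estimate with the uniform bound $\sup_i\|\xi_i\|$ handles the dense-subset case, and rescaling by $\|x\|$ reduces the sphere case to it. The paper states this as a standard fact without giving a proof, and your argument is precisely the routine one it leaves to the reader.
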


\begin{fact} \label{wstatop-bis}
Let $\bs$ be a norm-dense linear subspace of the normed space $\bsb$, and let $\sbs$ a \emph{norm-bounded}
subset of $\bs^\ast=\bsb^\ast$ (every element of $\bs^\ast$ being identified with its continuous extension
to $\bsb^\ast$). Then, the relative topology of $\sbs$ wrt the $\sigma(\bs^\ast,\bs)$-topology
coincides with the relative topology of $\sbs$ wrt the $\sigma(\bsb^\ast,\bsb)$-topology.
\end{fact}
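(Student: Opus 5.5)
The plan is to compare the two topologies directly through their defining families of seminorms. Throughout, every $\xi\in\bs^\ast$ is identified with its unique continuous extension in $\bsb^\ast$, as in Section~\ref{basps}. Put $M\defi\sup\{\|\xi\|_\bsbd\colon \xi\in\sbs\}<\infty$, which is finite because $\sbs$ is norm-bounded.

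First, I will show that the $\sigma(\bs^\ast,\bs)$-topology is coarser than the $\sigma(\bsb^\ast,\bsb)$-topology on all of $\bsb^\ast$, and hence also on $\sbs$. The former is the initial topology induced by the seminorms $\xi\mapsto|\xi(x)|$ with $x\in\bs$. Since $\bs\subset\bsb$, each of these seminorms is also one of the seminorms defining the latter topology. So every $\sigma(\bs^\ast,\bs)$-open set is $\sigma(\bsb^\ast,\bsb)$-open, and the same holds for the relative topologies on $\sbs$.

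Next comes the converse inclusion, which is the only step where density of $\bs$ and boundedness of $\sbs$ are needed. I will argue with neighbourhood bases. Fix $\xi\in\sbs$ and a basic relative $\sigma(\bsb^\ast,\bsb)$-neighbourhood
\[
U=\{\eta\in\sbs\colon |\eta(x_k)-\xi(x_k)|<\epsilon,\ k=1,\ldots,n\},
\]
with $x_1,\ldots,x_n\in\bsb$ and $\epsilon>0$. By density, choose $y_k\in\bs$ with $\|x_k-y_k\|<\epsilon/(4M+1)$. Then define the relative $\sigma(\bs^\ast,\bs)$-neighbourhood
\[
W=\{\eta\in\sbs\colon |\eta(y_k)-\xi(y_k)|<\epsilon/2,\ k=1,\ldots,n\}.
\]
For $\eta\in W$, the key estimate is
\[
|\eta(x_k)-\xi(x_k)|\le|\eta(y_k)-\xi(y_k)|+\big(\|\eta\|_\bsbd+\|\xi\|_\bsbd\big)\|x_k-y_k\|<\epsilon/2+2M\epsilon/(4M+1)<\epsilon .
\]
Hence $W\subset U$. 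So every relative $\sigma(\bsb^\ast,\bsb)$-open subset of $\sbs$ is a neighbourhood, in the relative $\sigma(\bs^\ast,\bs)$-topology, of each of its points, and is therefore open in that topology.

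Combining the two inclusions gives equality of the relative topologies. Alternatively, one can argue with nets and invoke Fact~\ref{wstatop} with $\bsb_0=\bs$. There is a subtlety on that route: a net $\{\xi_i\}\subset\sbs$ converges to some $\xi\in\sbs$ in one topology iff it does in the other, but this alone only shows that the two relative topologies have the same convergent nets. That is enough to identify the topologies, since topologies are determined by their convergent nets. The neighbourhood argument above avoids this point entirely, and is the version I would write.

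The main obstacle is the approximation step: it fails without the uniform bound $M$. Boundedness of $\sbs$ is exactly what keeps the error term $2M\|x_k-y_k\|$ under control.
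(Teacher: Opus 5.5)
Your proof is correct. The first inclusion follows directly from the seminorm description. In the converse, your estimate uses the uniform bound $M$, and the choice $\epsilon/(4M+1)$ also handles $M=0$. It shows that every basic relative $\sigma(\bsb^\ast,\bsb)$-neighbourhood of $\xi\in\sbs$ contains a relative $\sigma(\bs^\ast,\bs)$-neighbourhood of $\xi$.

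The paper gives no proof of its own; it lists this among the standard facts. The alternative you sketch is also complete. Apply Fact~\ref{wstatop} to nets in $\sbs$, which are automatically norm-bounded. This suffices because two topologies on the same set with the same convergent nets coincide, so the "subtlety" you flag is no obstruction. Your neighbourhood argument is simply that same approximation estimate written out at the level of neighbourhoods.
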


Given norms $\no,\notri\colon\bsb\rightarrow\errep$ on the vector space $\bsb$, we will say that
the norm $\no$ is \emph{dominated} --- or \emph{majorized} --- by the norm $\notri$ if, for every
$x\in\bsb$, $\|x\|\le\ntr{x}$.

\begin{fact} \label{fanonotri}
Let $\no,\notri\colon\bsb\rightarrow\errep$ be norms on the vector space $\bsb$, and
suppose that the norm $\no$ is dominated by the norm $\notri$. Then, the dual space
$\bsb^\ast\equiv(\bsb^\ast,\no^\ast)=(\bsb,\no)^\ast$ of $(\bsb,\no)$, with
$\no^\ast\equiv\no_\bsbd$, is a linear subspace of
$\bsb^\circledast\equiv\big(\bsb^\circledast,\notri^\circledast\big)=(\bsb,\notri)^\ast$,
and the $\sigma(\bsb^\ast,\bsb)$-topology coincides with the subspace topology of
$\bsb^\ast\subset\bsb^\circledast$ wrt the $\sigma(\bsb^\circledast,\bsb)$-topology of
$\bsb^\circledast$.
\end{fact}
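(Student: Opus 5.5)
The plan is to prove the two assertions separately: the linear inclusion of the dual spaces, and then the identity of the two topologies. Both follow directly from the definitions recalled in Subsection~\ref{basps}.

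\emph{Inclusion of duals.} Let $\xi\in\bsb^\ast$, i.e., $\xi$ is a linear functional on $\bsb$ with $|\xi(x)|\le\|\xi\|^\ast\,\|x\|$ for all $x\in\bsb$. Since $\no$ is dominated by $\notri$, we get $|\xi(x)|\le\|\xi\|^\ast\,\ntr{x}$. Hence $\xi$ is also $\notri$-bounded, so $\xi\in\bsb^\circledast$, with $\notri^\circledast$-norm of $\xi$ at most $\|\xi\|^\ast$. Elements of both duals are linear functionals on the same vector space $\bsb$, and the vector operations are pointwise in both. Therefore the set-theoretic inclusion $\bsb^\ast\subset\bsb^\circledast$ is an inclusion of linear subspaces; in fact the inclusion map is a norm contraction.

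\emph{Identity of the topologies.} I will use the characterization of the weak$^\ast$ topologies as initial topologies.
\begin{itemize}
\item The $\sigma(\bsb^\circledast,\bsb)$-topology is the initial topology on $\bsb^\circledast$ induced by the evaluation maps $e_x\colon\bsb^\circledast\ni\xi\mapsto\xi(x)\in\ccc$, for $x\in\bsb$.
\item The $\sigma(\bsb^\ast,\bsb)$-topology is the initial topology on $\bsb^\ast$ induced by the restrictions $e_x\vert_{\bsb^\ast}$, again for $x\in\bsb$.
\end{itemize}
It is a standard fact of general topology that the subspace topology inherited from an initial topology is the initial topology of the restricted family of maps. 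Concretely, a subbasis of the relative topology consists of the sets $e_x^{-1}(U)\cap\bsb^\ast=(e_x\vert_{\bsb^\ast})^{-1}(U)$, with $U\subset\ccc$ open. This is exactly a subbasis of $\sigma(\bsb^\ast,\bsb)$, so the two topologies coincide.

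Equivalently, one can argue with nets, as in the convergence criterion recalled in Subsection~\ref{basps}. A net $\{\xi_i\}\subset\bsb^\ast$ converges to $\xi\in\bsb^\ast$ in either topology iff $\xi_i(x)\to\xi(x)$ for every $x\in\bsb$. The test vectors $x$ range over the same set $\bsb$ in both cases, even though the norms differ.

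No real obstacle is expected. The only point requiring care is to note that the test space $\bsb$ is the same vector space in both weak$^\ast$ topologies, so changing the norm from $\no$ to $\notri$ does not change the family of evaluation functionals.
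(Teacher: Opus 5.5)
Your proof is correct. The paper states this as a standard fact without proof, and your argument is exactly the routine verification it leaves to the reader: the domination of the norms makes every functional bounded for the smaller norm also bounded for the dominating norm, which gives the linear inclusion of the duals; and both weak$^\ast$ topologies are initial topologies for the same family of evaluation maps at points of the common vector space, so the relative topology coincides with the initial topology of the restricted maps.
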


In the following, we will deal with Banach spaces, denoted by the symbols $\bsb$, $\bsc$
and $\bs$.

Given a real or complex Banach space $\bsb$, a countable --- i.e., at most denumerable --- and
ordered set of vectors $\{\sba_i\}_{i\in\iset}\subset\bsb$ is called a \emph{Schauder basis}
if, for every $x\in\bsb$, there is a \emph{unique} set of scalars $\{\cf_i(x)\}_{i\in\iset}$,
such that $x=\sum_{i\in\iset}\cf_i(x)\sei\sba_i$, where the sum, whenever not finite, converges
wrt the norm of $\bsb$. Every Banach space admitting a Schauder basis is \emph{separable},
because, for every countable subset $\bsb_0$ of $\bsb$, the norm-closed linear span of
$\bsb_0$  is separable (see, e.g., Proposition~{1.12.1} of~\cite{Megginson}).

Assuming that the reader is familiar with the basic facts concerning the \emph{absolutely convergent}
series in a Banach space (see, e.g., Chapter~{1} of~\cite{Megginson}), and the \emph{double} sequences
and series of real numbers (see, e.g., Chapter~{7} of~\cite{Ghorpade}), we recall a few facts about the
\emph{double} series in a Banach space $\bsb$. We say that a \emph{double sequence} $\{x_{jl}\}\subset\bsb$
is \emph{absolutely summable} --- or that the \emph{double series} $\sum_{jl}\xjl$ is \emph{absolutely convergent}
--- if any of the following \emph{equivalent conditions} is satisfied:
\begin{itemize}

\item every `row series' $\sum_l \xjl$ is absolutely convergent --- $\sum_l \|\xjl\|<\infty$ ---
and $\sum_j \sum_l \|\xjl\|<\infty$;

\item every `column series' $\sum_j \xjl$ is absolutely convergent  and $\sum_l \sum_j \|\xjl\|<\infty$;

\item the double (real) series $\sum_{jl} \|x_{jl}\|$ converges \textit{\`a la Pringsheim};

\item $\sum_n \|\xjln\|<\infty$, for \emph{some} bijection $\nat\ni n\mapsto\jln\in\nat\times\nat$;

\item $\sum_n \|\xjln\|<\infty$, for \emph{every} bijection $\nat\ni n\mapsto\jln\in\nat\times\nat$.

\end{itemize}

\begin{fact} \label{facdou}
Let the double sequence $\{x_{jl}\}\subset\bsb$ be absolutely summable. Then, the double series
$\sum_{jl} x_{jl}$ is convergent \textit{\`a la Pringsheim} to some vector $v\in\bsb$ --- in symbols,
$v=\sum_{jl} x_{jl}$ --- namely, for every $\epsilon>0$ there is some $\nep\in\nat$ such that
$m,n>\nep\implies\|v-\sum_{j=1}^m \sum_{l=1}^n x_{jl}\|<\epsilon$. Moreover, we have that
\begin{equation} \label{reldou}
v=\sum_{jl} \xjl=\sum_j\sum_l \xjl=\sum_l\sum_j \xjl=\sum_n \xjln \fin ,
\end{equation}
for \emph{every} bijection $\nat\ni n\mapsto\jln\in\nat\times\nat$.
\end{fact}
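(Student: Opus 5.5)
We prove Fact~\ref{facdou} by first establishing convergence of a single rearranged series and then showing that every other mode of summation has the same limit. Put $M\defi\sum_j\sum_l\|\xjl\|<\infty$. Fix any bijection $\nat\ni n\mapsto\jln\in\nat\times\nat$. By the listed equivalent conditions, $\sum_n\|\xjln\|<\infty$. Completeness of $\bsb$ then gives a limit $v\defi\sum_n\xjln$, and this limit is independent of the chosen bijection. Indeed, in a Banach space an absolutely convergent series converges unconditionally: for two bijections, the partial sums differ by finitely many terms whose indices lie outside a large finite set $F\subset\nat\times\nat$, and these contribute at most $\sum_{(j,l)\notin F}\|\xjl\|$, which tends to zero.

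The key estimate is a tail bound. Given $\epsilon>0$, choose a finite set $F_\epsilon\subset\nat\times\nat$ with $\sum_{(j,l)\notin F_\epsilon}\|\xjl\|<\epsilon$; this exists because $\sum_n\|\xjln\|$ converges. Choose $\nep$ so that $F_\epsilon\subset\{1,\ldots,\nep\}^2$. For $m,n>\nep$, the rectangle $R_{mn}=\{1,\ldots,m\}\times\{1,\ldots,n\}$ contains $F_\epsilon$. Write $v$ as the sum over $R_{mn}$ plus the sum over its complement, which is legitimate by unconditional convergence. This gives
\begin{equation*}
\Big\|v-\sum_{j=1}^m\sum_{l=1}^n\xjl\Big\|\le\sum_{(j,l)\notin R_{mn}}\|\xjl\|<\epsilon \fin ,
\end{equation*}
which is exactly convergence \textit{\`a la Pringsheim}.

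For the iterated sums, each row series $v_j\defi\sum_l\xjl$ converges absolutely, with $\|v_j\|\le\sum_l\|\xjl\|$. Hence $\sum_j v_j$ converges absolutely as well. The partial sum $\sum_{j=1}^m v_j$ is the limit as $n\to\infty$ of the rectangular partial sums $\sum_{j\le m}\sum_{l\le n}\xjl$. Letting $n\to\infty$ in the Pringsheim estimate therefore gives $\|v-\sum_{j=1}^m v_j\|\le\epsilon$ for $m>\nep$, so $\sum_j\sum_l\xjl=v$. The column-wise sum $\sum_l\sum_j\xjl$ is handled symmetrically.

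The only delicate step is justifying the splitting of $v$ into a finite part and a tail bounded by the tail of norms. This follows directly from unconditional convergence of absolutely convergent series in Banach spaces, by a Cauchy-criterion argument using completeness. The remaining steps are routine.
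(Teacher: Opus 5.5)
Your proof is correct and complete. The paper gives no proof of this statement: it is one of the ``facts'' the author recalls without proof, pointing to standard references (Chapter~1 of Megginson for absolutely convergent series in Banach spaces, Chapter~7 of Ghorpade for double series of reals). Your argument is exactly the standard one those references supply: unconditional convergence of absolutely convergent series, then a tail estimate over rectangles containing $F_\epsilon$, then letting $n\to\infty$ to obtain the iterated sums.

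One small streamlining for the splitting step: you never need to sum over the infinite complement of the rectangle $R_{mn}=\{1,\ldots,m\}\times\{1,\ldots,n\}$. Take $N$ large enough that the first $N$ terms of your fixed enumeration cover $R_{mn}$. The $N$-th partial sum then differs from the rectangle sum by a finite sum of terms indexed outside $R_{mn}$, whose norm is at most $\sum_{(j,l)\notin R_{mn}}\|x_{jl}\|$. Letting $N\to\infty$ gives the Pringsheim bound directly.
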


The previous notions and results about double series extend, in a straightforward way, to \emph{multiple}
series in a Banach space.

From this point onwards, the standard norm of bounded linear or bilinear maps between generic (real or complex)
Banach spaces will be usually denoted by $\nops$, unless another symbol be more appropriate. Clearly, this is
a Banach space norm itself, but using the symbol $\nops$ --- rather than the bare symbol $\no$ (or $\no^\ast$
for the associated dual norm) --- is a way to emphasize the fact that we are dealing precisely with the standard
norm of a Banach space of \emph{bounded linear maps}. The meaning, however, should be always clear from the context.

E.g., $\bo(\bsb;\bs)$ denotes the Banach space of all bounded linear maps from the Banach space
$\bsb$ into the Banach space $\bs$, endowed with the norm $\nops$ defined in the usual way.
Analogously, given (real or complex) Banach spaces $\bsb$, $\bsc$ and $\bs$, and a bounded bilinear
map $\bim\colon\bsb\times\bsc\rightarrow\bs$, we set:
\begin{equation}
\|\bim\norps\defi\sup\{\|\bim(x,y)\|\colon x\in\bsb,\ y\in\bsc,\ \|x\|,\|y\|\le 1\} \fin ,
\quad \bim\in\bo(\bsb,\bsc;\bs) \fin .
\end{equation}
Clearly, here the symbol $\bo(\bsb,\bsc;\bs)$ denotes the Banach space of all bounded bilinear maps
from $\bsb\times\bsc$ into $\bs$, endowed with the norm $\nops$ defined above. In particular, we will
consider the case where $\bsb=\hh$, $\bsc=\jj$ (Hilbert spaces) and $\bs=\ccc$, i.e., the complex Banach
space $\bfshj$ of all bounded bilinear forms on $\hh\times\jj$.

We say that a bilinear map $\bim\colon\bsb\times\bsc\rightarrow\bs$ is a \emph{bilinear isometry}
if $\|\bim(x,y)\|=\|x\|\tre\|y\|$, for all $x\in\bsb$ and $y\in\bsc$. Clearly, if $\bim$ is a
bilinear isometry, then it is bounded and $\|\bim\norps= 1$.

Apart from the previously introduced operator norm $\normi$ on $\bop\equiv\bhh$ (Subsection~\ref{operators}),
we will specialize the symbol $\nops$ only in the remarkable case where all Banach spaces involved are trace
classes of Hilbert space operators; see below.

Further notations, generalizing in a straightforward way the previously introduced symbols, are summarized
in the following Table~\ref{fursymb}.

\begin{center}
\begin{tabular}{|c|c|}
\hline
Symbol
&
Meaning
\\
\hline
$\ltrc\equiv\btrc\equiv\bo(\trc;\trc)$
&
bounded linear operators on $\trc$
\\
$\ftrcjj\equiv\bo(\trc,\trcjj;\ccc)$
&
bounded bilinear forms on $\trc\times\trcjj$
\\
$\ltrcjjau\equiv\bo(\trc,\trcjj;\trcau)$
&
bounded bil.\ maps from $\trc\times\trcjj$ into $\trcau$
\\
$\bo(\trc,\trc;\bs)$
&
bounded bilinear maps from $\trc\times\trc$ into $\bs$
\\
$\linv$, $\linvwc$
&
linear maps on $\vv$, bilinear forms on $\vv\times\ww$
\\
$\linvwz$
&
bilinear maps from $\vv\times\ww$ into $\zz$
\\
\hline
\end{tabular}
\captionsetup{type=table}
\captionof{table}{Further symbols used in the paper ($\vv$, $\ww$, $\zz$ are real or complex vector spaces).}
\label{fursymb}
\end{center}

The norm of the Banach space $\ltrc$ will denoted by $\normo$ (defined as the norm $\nops$).

The norms of the Banach spaces of \emph{bilinear forms} $\ftrcjj$ and of \emph{bilinear maps}
$\ltrcjjau$ will all be denoted by $\norbil$.

A (real or complex) Banach space $\bs$ is said to have the \emph{approximation property} if,
for every compact subset $\mathfrak{K}$ of $\bs$ and every $\epsilon>0$, there is a finite rank
operator $F\colon\bs\rightarrow\bs$ such that $\|x- Fx\|<\epsilon$, for all $x\in\mathfrak{K}$;
see, Chapter~{4} of~\cite{Ryan}. $\bs$ is said to have the \emph{Radon-Nikod\'ym property}
if the Radon-Nikod\'ym theorem holds for suitable $\bs$-valued measures; see Chapter~{5} of~\cite{Ryan}.

\begin{fact} \label{appran}
The Banach space $\trc$ has both the approximation property and the Radon-Nikod\'ym property.
\end{fact}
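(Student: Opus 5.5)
We treat the two properties separately, working with the standard model $\trc\cong\hh\prtp\overline{\hh}$ (equivalently, the dual of the compact operators $\cph$ via the trace pairing).

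\emph{Approximation property.} We use the finite-rank truncations of $\trc$. Fix an orthonormal basis $\{e_k\}$ of $\hh$ and let $P_n$ be the orthogonal projection onto $\lispa\{e_1,\ldots,e_n\}$. Define $F_n\colon\trc\rightarrow\trc$ by $F_n S\defi P_n S P_n$. Each $F_n$ has finite rank, since its range sits in the $n^2$-dimensional space of operators supported on $\ran P_n$. Moreover, $\|F_n\|_{[1]}\le 1$, by the ideal inequalities $\|BS\trn,\|SB\trn\le\|B\nori\|S\trn$.

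The key step is pointwise convergence $F_nS\rightarrow S$ in trace norm. By Fact~\ref{sivadea}, $S$ is a trace-norm limit of finite-rank operators, and finite-rank operators are finite sums of $\ep$. So by uniform boundedness of the $F_n$ it suffices to treat a single rank-one operator. For these,
\begin{equation}
\big\|\,|P_n\eta\rangle\langle P_n\phi|-\ep\,\big\|_1\le\|P_n\eta-\eta\|\,\|\phi\|+\|\eta\|\,\|P_n\phi-\phi\|\rightarrow 0 \fin .
\end{equation}
An equicontinuous family, here bounded by $1$, that converges pointwise converges uniformly on compact sets: cover the compact set $\mathfrak{K}$ by finitely many $\epsilon/3$-balls and apply the pointwise estimate at their centres. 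This gives the approximation property; in fact it shows $\trc$ has a Schauder basis (e.g. the matrix units in a square-by-square ordering), and a basis already implies the approximation property. If $\dim(\hh)<\infty$ the statement is trivial.

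\emph{Radon--Nikod\'ym property.} We invoke the standard theorem (Dunford--Pettis) that a separable dual Banach space has the RNP. It therefore suffices to check two things. First, $\trc$ is separable: finite-rank operators with vectors from a countable dense set, with rational-complex coefficients, are dense. Second, $\trc$ is isometrically a dual space: $\trc\cong\cph^\ast$ via $S\mapsto\tr(S\,\cdot)$, which is the trace-class/compact duality, with isometry given by Fact~\ref{norsub}. Alternatively, one could show directly that bounded $\trc$-valued martingales converge, but citing the dual-space theorem is cleaner.

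The main obstacle is only the careful justification of the duality $\cph^\ast=\trc$, namely surjectivity: each bounded functional on $\cph$ is given by a trace-class operator. We would obtain this by restricting the functional to rank-one operators, which yields a bounded sesquilinear form and hence an operator $S$, and then bounding $\|P_nSP_n\trn$ uniformly via Fact~\ref{norsub}. Since this is textbook material (Schatten), we cite it rather than reprove it.
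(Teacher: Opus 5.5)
The paper records this as a standard fact without proof. Your argument is correct and is the standard justification: the uniformly bounded finite-rank truncations $S\mapsto P_nSP_n$ converge pointwise, hence uniformly on compact sets, which gives the approximation property; and the Dunford--Pettis theorem applied to the separable dual space $\trc\cong\cph^{\ast}$ gives the Radon--Nikod\'ym property.

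The only overstatement is your aside that the truncations already show the square-ordered matrix units form a Schauder basis. They control only the partial sums at the indices $n^2$, and you would also need a uniform bound on the intermediate partial-sum projections, such as $S\mapsto P_nSP_n+P_kSQ_{n+1}$ with $Q_{n+1}$ the projection onto $\ccc e_{n+1}$. That bound is easy to get, and the aside is not needed for your proof.
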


\subsection{Linear spans, convex cones, convex hulls}
\label{lin-con}

A subset $\con$ of a real or complex vector space $\ves$ is called a \emph{convex cone} if,
for any $p,q\in\erreps$ and $v,w\in\con$, $p\tre v+q\tre w\in\con$. A convex cone $\con$ is
said to be \emph{pointed} if $\con\cap(-\con)=\{0\}$. E.g., the set $\trcp$ of all positive trace
class operators on $\hh$ is a pointed convex cone in $\trc$; analogously, $\bopp$ is a pointed
convex cone in $\bop$.

Given a nonempty subset $\subh$ of a complex Banach space $(\bs,\no)$, we will denote by
$\co(\subh)$ the \emph{convex hull} of $\subh$, and by $\clco(\subh)$ the \emph{closed convex hull},
i.e., the smallest norm-closed convex set containing $\subh$. As is well known,
$\clco(\subh)=\cl(\co(\subh))\equiv\clnor(\co(\subh))$. Moreover, it turns out that
\begin{equation} \label{reclco}
\clco(\subh)=\clco\big(\clnor(\subh)\big) .
\end{equation}

\begin{remark} \label{recloco}
Given a dual Banach space $\bsd$, endowed with the $\wst$-topology, we will also consider
the $\wst$-closed convex hull of a nonempty subset $\subj$ of $\bsd$. Since $\bsd$,
endowed with the $\wst$-topology, is a \emph{topological vector space}~\cite{Megginson,Fabian,Narici},
then, as it happens for the norm-closed convex hull, we have that $\clcowst(\subj)=\wstcl(\co(\subj))$,
i.e., the $\wst$-closed convex hull of $\subj$ coincides with the $\wst$-closure of the convex hull of
$\subj$ (see, e.g., Theorem~{2.2.9\hspace{0.6mm}-(i)} of~\cite{Megginson}).
\end{remark}

Analogously, $\lispa(\subh)$ denotes the complex \emph{linear span}
of $\subh\subset\bs$, and $\clispa(\subh)$ the \emph{closed linear span}, namely, the smallest norm-closed
linear subspace of $\bs$ containing $\subh$; moreover, $\clispa(\subh)=\cl(\lispa(\subh))$, and
$\clispa(\subh)=\clispa(\cl(\subh))$. Further useful relations are the following:
$\lispa(\subh)=\co(\ccc\sei\subh)$; $\lispa(\co(\subh))=\lispa(\subh)=\co(\lispa(\subh))$, and hence
$\clispa(\clco(\subh))=\clispa(\co(\subh))=\clispa(\subh)=\clco(\lispa(\subh))=\clco(\clispa(\subh))$.

\begin{fact} \label{infconv}
If $\subc$ is a closed convex subset  of a complex Banach space $\bs$, then every norm-convergent,
countably infinite convex combination $\sum_{n=1}^\infty p_n \tre x_n$ of elements $\{x_n\}_{n\in\nat}$
of $\subc$ ($p_n>0$, $\sum_{n=1}^\infty p_n=1$) is contained in $\subc$. Hence, given a subset $\subh$
of $\bs$, every $\no$-convergent, countably infinite convex combination $\sum_{n=1}^\infty p_n \tre x_n$
of elements $\{x_n\}_{n\in\nat}$ of $\subh$ belongs to $\clco(\subh)$.
\end{fact}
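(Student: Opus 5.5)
The statement is Fact~\ref{infconv}, which has two parts. First, a norm-convergent countably infinite convex combination of points of a closed convex set $\subc$ stays in $\subc$. Second, such a combination of points of an arbitrary $\subh$ lies in $\clco(\subh)$. The plan is to reduce the infinite combination to finite convex combinations by renormalizing partial sums, and then pass to the limit using closedness.

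Fix $x=\sum_{n=1}^\infty p_n\tre x_n$ with $x_n\in\subc$, $p_n>0$ and $\sum_n p_n=1$. Let $P_N\defi\sum_{n=1}^N p_n$, which is strictly positive and tends to $1$. Define the renormalized partial sums
\begin{equation*}
y_N\defi P_N^{-1}\sum_{n=1}^N p_n\tre x_n \fin .
\end{equation*}
Each $y_N$ is a finite convex combination of elements of $\subc$, so $y_N\in\subc$ by convexity. Since the series converges in norm, the partial sums $s_N\defi\sum_{n=1}^N p_n\tre x_n$ converge to $x$. Because $P_N\to 1$, we get
\begin{equation*}
\|y_N-x\|\le |P_N^{-1}-1|\tre\|s_N\|+\|s_N-x\|\to 0 \fin ,
\end{equation*}
where $\|s_N\|$ is bounded because the sequence $(s_N)$ converges. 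Hence $x$ is a norm limit of points of $\subc$, and $x\in\subc$ because $\subc$ is closed.

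For the second part, apply the first part with $\subc=\clco(\subh)$. This set is closed and convex and contains $\subh$, so the given combination of elements of $\subh$ belongs to it.

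There is no real obstacle. The only point needing care is that one may not simply say "the partial sums lie in $\subc$", because they are not convex combinations. The renormalization by $P_N$, together with norm convergence of the series, handles this.
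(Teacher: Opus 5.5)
Your proof is correct: each renormalized partial sum $P_N^{-1}\sum_{n=1}^N p_n x_n$ is a finite convex combination, hence lies in $\subc$, and it converges in norm to the sum of the series because $P_N\to 1$ and the partial sums converge. Closedness of $\subc$ then finishes the first claim, and the second follows by taking $\subc=\clco(\subh)$. The paper states this as a standard fact without proof, and your renormalization argument is the standard way to justify it.
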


\begin{fact} \label{propro}
$\sta=\clco(\pusta)$.
\end{fact}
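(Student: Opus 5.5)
We prove Fact~\ref{propro}, i.e.\ $\sta=\clco(\pusta)$, by establishing both inclusions, using the spectral decomposition of a density operator together with Fact~\ref{infconv}.

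\emph{The inclusion $\clco(\pusta)\subset\sta$.} Every rank-one projection is positive with unit trace, so $\pusta\subset\sta$. The set $\sta$ is convex. It is also $\trnor$-closed, being the intersection of the closed cone $\trcp$ with the closed affine set $\{S\colon\tr(S)=1\}$; the latter is closed because $|\tr(S)|\le\|S\trn$, so the trace is $\trnor$-continuous. Hence $\sta$ is a closed convex set containing $\pusta$, and therefore contains its closed convex hull.

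\emph{The inclusion $\sta\subset\clco(\pusta)$.} Let $D\in\sta$. Since $D$ is positive and compact, the spectral theorem for compact selfadjoint operators gives
\[
D=\sum_n p_n\tre|\phi_n\rangle\hspace{-0.3mm}\langle\phi_n| ,
\]
where $p_n>0$, $\{\phi_n\}$ is an orthonormal system, and $\sum_n p_n=\tr(D)=1$. This is the SVD of Remark~\ref{sivadeb} with $\eta_n=\phi_n$, which is available because $D=|D|$. If the sum is finite, $D\in\co(\pusta)$ directly. Otherwise, the series converges absolutely in trace norm, since each term has trace norm $p_n$ and $\sum_n p_n=1$. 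It is thus a norm-convergent, countably infinite convex combination of elements of $\pusta$, so by Fact~\ref{infconv} it lies in $\clco(\pusta)$.

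The only step needing care is trace-norm convergence of the spectral series, which was handled by the estimate above.
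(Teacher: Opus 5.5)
Your proof is correct and follows the route the paper intends. The paper states this as a Fact without proof, but it places Fact~\ref{infconv} immediately before it and uses the same spectral-decomposition reasoning in the proof of Theorem~\ref{pronatp}: the trace-norm convergent spectral decomposition writes each $D\in\sta$ as a countable convex combination of elements of $\pusta$, and the reverse inclusion holds because $\sta$ is a $\trnor$-closed convex set containing $\pusta$.
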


Recall that an element $x$ of a convex subset $\con$ of a real or complex vector space
$\ves$ (in particular, of a Banach space $\bs$) is said to be an \emph{extreme point}
of $\con$ if there \emph{do not} exist $y,z\in\con$, with $y\neq z$, and $t\in(0,1)$
such that $x=ty+(1-t)z$; namely, if $x$ \emph{does not lie between any two distinct points
of} $\ves$. We will denote the set of all extreme points of $\con$ by $\ext(\con)$.

\begin{fact} \label{facextp}
If $\bsb$ is a, real or complex, normed space,
then the extreme points of its closed unit ball belong to its unit sphere:
$\ext(\ball(\bsb))\subset\sph(\bsb)\defi\{x\in\bsb\colon \|x\|= 1\}$.
\end{fact}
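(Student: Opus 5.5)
Fact~\ref{facextp} says that every extreme point $x$ of $\ball(\bsb)$ satisfies $\|x\|=1$. Since $\|x\|\le 1$ holds automatically for $x\in\ball(\bsb)$, the plan is to argue by contradiction: assume $\|x\|<1$ and write $x$ as a proper convex combination of two distinct points of the ball. The case $x=0$ must be handled separately, because the argument for $x\neq 0$ uses $x/\|x\|$.

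\emph{Case $x=0$.} We have $\bsb\neq\{0\}$, since otherwise the unit sphere is empty and the statement is only meaningful in the trivial sense. Pick any $y\in\bsb$ with $\|y\|=1$. Then
\begin{equation*}
0=\tfrac12\sei y+\tfrac12\sei(-y),
\end{equation*}
where $y\neq -y$ and both $y$ and $-y$ lie in $\ball(\bsb)$. Hence $0$ is not an extreme point of $\ball(\bsb)$.

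\emph{Case $0<\|x\|<1$.} Let $u\defi x/\|x\|$, so that $u\in\sph(\bsb)$, and write $r\defi\|x\|$. Choose $\delta>0$ with $r+\delta\le 1$ and $r-\delta\ge 0$, for instance $\delta\defi\min\{r,1-r\}$. Set
\begin{equation*}
y\defi(r+\delta)\sei u \fin , \qquad z\defi(r-\delta)\sei u \fin .
\end{equation*}
Both $y$ and $z$ lie in $\ball(\bsb)$, they are distinct because $\delta>0$ and $u\neq 0$, and
\begin{equation*}
x=\tfrac12\sei y+\tfrac12\sei z \fin .
\end{equation*}
So $x$ is not an extreme point of $\ball(\bsb)$.

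In both cases a point of $\ball(\bsb)$ with norm strictly less than $1$ fails to be extreme. Therefore $\ext(\ball(\bsb))\subset\sph(\bsb)$. The same argument works over $\erre$ and over $\ccc$, since only real scalars are used. There is no real obstacle here; the only care needed is to treat $x=0$ separately.
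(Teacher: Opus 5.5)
The paper records this as a standard fact and gives no proof, so there is no argument of the paper to compare with. Your proof is correct.

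The one point to fix is your treatment of $\bsb=\{0\}$. There the statement is not vacuously true but false: $\ball(\bsb)=\{0\}$, its unique point is trivially extreme, and $\sph(\bsb)=\varnothing$. So $\bsb\neq\{0\}$ is a genuine implicit hypothesis and should be stated as such, rather than dismissed as ``meaningful only in the trivial sense.''

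Once that hypothesis is in place, you can also drop the case split. Fix any $v\in\sph(\bsb)$ and put $\epsilon\defi 1-\|x\|>0$. Then $x\pm\epsilon v$ are distinct points of $\ball(\bsb)$, and $x=\tfrac12(x+\epsilon v)+\tfrac12(x-\epsilon v)$.
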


\begin{fact} \label{extdens}
The extreme points of the convex set $\sta$ form precisely the set $\pusta$ of all rank-one projections
on $\hh$ (pure states); i.e., $\ext(\sta)=\pusta$.
\end{fact}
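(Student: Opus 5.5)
The statement is Fact~\ref{extdens}: $\ext(\sta)=\pusta$. The plan is to prove the two inclusions separately, using the spectral decomposition of density operators given by Fact~\ref{sivadea} (for positive $D$ one may take $\eta_n=\phi_n$).

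\emph{Every pure state is extreme.} Let $P=|\psi\rangle\langle\psi|$ with $\|\psi\|=1$. Suppose $P=t\rho+(1-t)\sigma$ with $\rho,\sigma\in\sta$ and $t\in(0,1)$.
\begin{itemize}
\item For $\phi\perp\psi$ we have $0=\langle\phi,P\phi\rangle=t\langle\phi,\rho\phi\rangle+(1-t)\langle\phi,\sigma\phi\rangle$. Both terms are non-negative, so each vanishes.
\item Positivity of $\rho$ then gives $\rho^{1/2}\phi=0$, hence $\rho\phi=0$ for all $\phi\in\{\psi\}^\perp$.
\item So $\rho$ vanishes on $\{\psi\}^\perp$. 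Being selfadjoint, $\rho$ maps $\hh$ into $(\{\psi\}^\perp)^\perp=\ccc\psi$, so $\rho=c\,|\psi\rangle\langle\psi|$.
\item The unit-trace condition forces $c=1$, so $\rho=P$, and likewise $\sigma=P$.
\end{itemize}
Hence $P$ does not lie strictly between two distinct states.

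\emph{Every extreme point is pure.} Let $D\in\sta$ and write its spectral decomposition $D=\sum_n s_n|\phi_n\rangle\langle\phi_n|$, with $s_n>0$, $\sum_n s_n=1$, and $\{\phi_n\}$ orthonormal. Suppose at least two eigenvalues are nonzero.
\begin{itemize}
\item Set $t=s_1\in(0,1)$, $\rho=|\phi_1\rangle\langle\phi_1|$, and $\sigma=(1-s_1)^{-1}\sum_{n\ge2}s_n|\phi_n\rangle\langle\phi_n|$.
\item The series defining $\sigma$ converges in trace norm, so $\sigma$ is positive with unit trace, i.e.\ $\sigma\in\sta$.
\item We have $\rho\ne\sigma$, because $\sigma\phi_1=0$ while $\rho\phi_1=\phi_1$.
\end{itemize}
Then $D=t\rho+(1-t)\sigma$ is a nontrivial convex combination, so $D$ is not extreme. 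Therefore an extreme $D$ has exactly one nonzero eigenvalue, equal to $1$, and is a rank-one projection.

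\emph{Main obstacle.} The only delicate step is the first bullet list: passing from $\langle\phi,\rho\phi\rangle=0$ to $\rho\phi=0$. This is handled by the Cauchy–Schwarz inequality for the positive form $(\phi,\chi)\mapsto\langle\phi,\rho\chi\rangle$, or equivalently by $\|\rho^{1/2}\phi\|^2=\langle\phi,\rho\phi\rangle$. The rest is routine bookkeeping with trace-norm convergent series, as guaranteed by Fact~\ref{sivadea}.
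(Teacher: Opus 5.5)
Your proof is correct and complete. Both inclusions are handled properly, including the one delicate step $\langle\phi,\rho\phi\rangle=0\Rightarrow\rho\phi=0$, which you justify via $\|\rho^{1/2}\phi\|^2=\langle\phi,\rho\phi\rangle$.

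The paper gives no proof of this Fact; it is quoted as standard background. Your argument for $\ext(\sta)\subset\pusta$, which splits off one rank-one spectral projection, is the same device the paper itself uses in its final corollary to prove $\ext(\hstahj)=\hpustahj$.

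One small citation point, which does not affect correctness. The eigen-expansion $D=\sum_n s_n|\phi_n\rangle\langle\phi_n|$ is really the spectral theorem for compact positive operators. Alternatively, it follows from Fact~\ref{sivadea} by noting that $D=|D|$ for $D\ge 0$.
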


\begin{fact} \label{factext}
Let $\subh,\subj$ be nonempty convex subsets of $\ves$. Then,
$\subh\subset\subj\implies\ext(\subj)\cap\subh\subset\ext(\subh)$.
\end{fact}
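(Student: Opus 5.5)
Fact~\ref{factext} is a purely convex-geometric claim, so I will argue straight from the definition of extreme point, with no topological or dimensional input. Take any $x\in\ext(\subj)\cap\subh$. We must show $x\in\ext(\subh)$, and I will argue by contradiction.

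Suppose $x$ is not extreme in $\subh$. Then there are $y,z\in\subh$ with $y\neq z$ and $t\in(0,1)$ such that $x=ty+(1-t)z$. Since $\subh\subset\subj$, the same $y,z$ lie in $\subj$. They are still distinct, and $x$ is the same proper convex combination of them. So $x$ lies strictly between two distinct points of $\subj$, which contradicts $x\in\ext(\subj)$. Hence $x\in\ext(\subh)$, and since $x$ was arbitrary, $\ext(\subj)\cap\subh\subset\ext(\subh)$.

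There is no genuine obstacle here. The one point to get right is the direction of the inclusion: an extreme point of the larger set is extreme in any smaller convex set that contains it, because a witness of non-extremality in $\subh$ is automatically a witness in $\subj$. The reverse inclusion fails in general, so the statement correctly includes only the intersection with $\subh$. Convexity of $\subh$ and $\subj$ is used only so that $\ext(\cdot)$ is defined as in the paper; the argument itself never needs it.
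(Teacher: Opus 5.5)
Your proof is correct: a pair $y\neq z$ in $\subh$ with $x=ty+(1-t)z$, $t\in(0,1)$, is automatically such a pair in $\subj$, so $x$ cannot be extreme in $\subj$. The paper states this as a Fact without proof, and your argument is the standard one it leaves to the reader; you are also right that convexity of $\subh$ and $\subj$ is never used.
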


\subsection{The algebraic tensor product}
\label{algtens}

There are two main equivalent approaches to the definition of the algebraic tensor product
of two vector spaces; see~\cite{Ryan} and~\cite{Kubrusly}, respectively (also see Chapter~{4}
of~\cite{Kubrusly}, for an analysis of various approaches). In the approach adopted, e.g.,
in~\cite{Ryan} --- i.e., the \emph{linear-bilinear approach} which will now briefly sketched ---
the so-called \emph{universal property} is a consequence of the definition rather than part
of it (as it happens in the approach adopted in~\cite{Kubrusly}). For the sake of definiteness,
we consider the case of complex vector spaces; the case of real vector spaces is analogous.

Let $\vv$, $\ww$ be (complex) vector spaces, and let us denote by $\vvad$, $\wwad$ the \emph{algebraic}
dual spaces of $\vv$ and $\ww$, respectively. We will denote by $\linvw$ the vector space of all
bilinear forms on $\vv\times\ww$. Given any pair of vectors $v\in\vv$ and $w\in\ww$, we can
define a linear functional $v\altp w\colon\linvw\rightarrow\ccc$, by putting
$\big(v\altp w\big)(\bfo)\defi\bfo(v,w)$, for all $\bfo\in\linvw$.
Then, the (standard realization of the) \emph{algebraic tensor product} of $\vv$ and $\ww$
is the linear subspace $\vv\altp\ww$ of $\linvwad$ --- i.e., of the algebraic dual of
$\linvw$ --- defined as follows
\begin{equation}
\vv\altp\ww\defi\lispa\big\{v\altp w\big\}\equiv
\lispa\big\{v\altp w\in\linvwad\colon v\in\vv, \ w\in\ww\big\} \fin ,
\end{equation}
together with the bilinear map $\snb\colon\vv\times\ww\ni (v,w)\mapsto v\altp w\in\vv\altp\ww$,
the so-called (standard realization of the) \emph{natural bilinear map} on $\vv\times\ww$.
Taking into account that, for every $c\in\ccc$, $c\tre(v\altp w)=(c\tre v)\altp w$,
every element of $\vv\altp\ww$ can be expressed as a finite sum
$\sum_{k=1}^n v_k\altp w_k$ of \emph{elementary tensors}.
By the \emph{universal property of algebraic tensor products} --- see Proposition~{1.4}
of~\cite{Ryan} --- for every bilinear map $\theta\in\linvwz$ from $\vv\times\ww$ into a complex
vector space $\zz$, there is a unique linear map $\Theta\colon\vv\altp\ww\rightarrow\zz$
such that $\theta=\Theta\circ\snb$ (universality relation) --- i.e.,
\begin{equation} \label{relunip}
\theta(v,w)=\Theta\big(v\altp w\big) , \quad
\forall\cinque v\in\vv,\ \forall\cinque w\in\ww
\end{equation}
--- and, moreover, the one-to-one correspondence $\theta\leftrightarrow\Theta$ is an isomorphism
between the vector spaces $\linvwz$ and $\linvawz$.

If $\lispa(\ran(\theta))=\zz$ and, moreover, $\Theta$ is a linear isomorphism of $\vv\altp\ww$ onto
$\zz$ --- i.e., since, by~\eqref{relunip}, $\ran(\Theta)=\lispa(\ran(\theta))=\zz$, if $\Theta$
is injective --- then $\theta\in\linvwz$ automatically enjoys the crucial \emph{universal property};
i.e., given any bilinear map $\lambda\colon\vv\times\ww\rightarrow\tzz$ (where $\tzz$ is a
complex vector space), there is a unique linear map $\lath\colon\zz\rightarrow\tzz$ such that
$\lambda=\lath\circ\theta$. Precisely, denoting by $\Lambda\colon\vv\altp\ww\rightarrow\tzz$
the unique linear map such that $\lambda(v,w)=\Lambda\big(v\altp w\big)$, for all $v\in\vv$ and
$w\in\ww$, we have that $\lath=\Lambda\circ\Theta^{-1}\circ\theta$. Conversely, if $\theta\in\linvwz$
enjoys the universal property (including the uniqueness condition of the linear map associated
with a bilinear map on $\vv\times\ww$), then $\lispa(\ran(\theta))=\zz$ and, moreover, $\Theta$
is a linear isomorphism of $\vv\altp\ww$ onto $\zz$. In fact, in this case, we have that
\begin{equation}
\snb=\slma\circ\theta=\slma\circ\Theta\circ\snb \ \ \mbox{and} \ \
\theta=\Theta\circ\snb=\Theta\circ\slma\circ\theta \fin ,
\end{equation}
where $\Theta\colon\vv\altp\ww\rightarrow\zz$ and $\slma\colon\zz\rightarrow\vv\altp\ww$ are
uniquely determined linear maps such that
\begin{equation}
\dom(\Theta)=\vv\altp\ww=\lispa\big(\ran\big(\snb\big)\big)=\ran\big(\big(\slma\big)\big), \ \
\dom\big(\slma\big)=\zz\supset\lispa(\ran(\theta))=\ran(\Theta) \fin .
\end{equation}
Here, by the uniqueness of the linear map $\slma$, we conclude that, actually,
$\zz=\lispa(\ran(\theta))$. Hence, $\Theta$ is a linear isomorphism of $\vv\altp\ww$ onto $\zz$
and $\slma=\Theta^{-1}$.

Therefore, it is natural to set the following:

\begin{definition} \label{defreatp}
If $\theta\in\linvwz$ is such that $\lispa(\ran(\theta))=\zz$ and, moreover, the unique linear map
$\Theta\colon\vv\altp\ww\rightarrow\zz$ such that $\theta=\Theta\circ\snb$ is injective (hence, a
linear isomorphism of $\vv\altp\ww$ onto $\zz$), we say that the pair $(\zz,\theta\in\linvwz)$ is
\emph{(a realization of) the algebraic tensor product} of $\vv$ and $\ww$, and we will
usually put $v\otimes w\equiv\theta(v,w)$ and $\vv\altp\ww\equiv\zz$. The bilinear map
\begin{equation}
\nb\colon\vv\times\ww\ni (v,w)\mapsto v\otimes w\in\zz\equiv\vv\altp\ww
\end{equation}
is called \emph{(a realization of) the natural bilinear map} on $\vv\times\ww$.
\end{definition}

\begin{remark}
We will not adhere to a standard usage, in the mathematical literature, of denoting by
$\vv\otimes\ww$ the algebraic tensor product space, because, when considering the case
where $\vv$ and $\ww$ are Banach or Hilbert spaces, this notation will be used for indicating
the completion of (any realization of) the algebraic tensor product space $\vv\altp\ww$ wrt to
some natural norm. This is the case, e.g., of the tensor product $\hhjj$ of two Hilbert spaces.
\end{remark}

Another relevant fact, for our purposes, concerns a suitable realization of the algebraic tensor
product of spaces of linear operators.

\begin{fact} \label{facteplin}
Let $\hh$, $\jj$ and $\hau$ be complex vector spaces, and suppose that the pair
\begin{equation}
\big(\hau\equiv\hh\altp\jj,\hh\times\jj\ni(\phi,\psi)\mapsto\vbf(\phi,\psi)
\equiv\phi\otimes\psi\in\hau\big),
\end{equation}
--- with $\hau\equiv\hh\altp\jj=\lispa(\ran(\vbf))=\lispa\big(\big\{\phi\otimes\psi
\colon\phi\in\hh,\ \psi\in\jj\big\}\big)$ --- be a realization
of the algebraic tensor product of $\hh$ and $\jj$. Then, the following claims hold true:
\begin{itemize}

\item For every pair $(A\in\linhh,B\in\linjj)$, there is a unique linear map $A\otimes B\in\linhau$
such that $\big(A\otimes B\big)(\phi\otimes\psi)=(A\phi)\otimes(B\psi)$, for all $\phi\in\hh$ and
$\psi\in\jj$.

\item For every pair $(\vv\subset\linhh,\ww\subset\linjj)$ of linear subspaces $\vv$, $\ww$ of
$\linhh$ and $\linjj$, the pair
\begin{equation}
\big(\zz, \theta\colon\vv\times\ww\ni (A,B)\mapsto A\otimes B\in\zz\big)
\end{equation}
--- with $\zz=\lispa(\ran(\theta))=\lispa\{A\otimes B\in\linhau\colon
A\in\vv\subset\linhh,\ B\in\ww\subset\linjj\}$ ---
is a realization of the algebraic tensor product of $\vv$ and $\ww$.

\end{itemize}
\end{fact}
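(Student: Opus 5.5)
Both claims of Fact~\ref{facteplin} will be proved directly from Definition~\ref{defreatp} and the universal property recalled just before it.

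For the first claim, given $A\in\linhh$ and $B\in\linjj$, I consider the map $\hh\times\jj\ni(\phi,\psi)\mapsto (A\phi)\otimes(B\psi)\in\hau$. It is bilinear, because $A$, $B$ are linear and $\vbf$ is bilinear. Since $(\hau,\vbf)$ is a realization of the algebraic tensor product, the universal property gives a unique linear map $\Lambda_\vbf\colon\hau\rightarrow\hau$ with $\Lambda_\vbf(\phi\otimes\psi)=(A\phi)\otimes(B\psi)$, and I set $A\otimes B\defi\Lambda_\vbf$. Uniqueness is also immediate from the spanning property: $\hau=\lispa(\ran(\vbf))$, so a linear map is determined by its values on elementary tensors.

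For the second claim, the map $\theta(A,B)=A\otimes B$ is bilinear into $\linhau$. Bilinearity follows from uniqueness in the first claim: for instance, $(A_1+cA_2)\otimes B$ and $A_1\otimes B+c\,A_2\otimes B$ agree on all $\phi\otimes\psi$, hence coincide. By its very definition $\zz=\lispa(\ran(\theta))$. According to Definition~\ref{defreatp}, it therefore remains only to show that the induced linear map $\Theta\colon\vv\altp\ww\rightarrow\zz$ is injective. Equivalently, I must show that if $\sum_{k=1}^n A_k\otimes B_k=0$ in $\linhau$, then $\sum_k A_k\altp B_k=0$.

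The injectivity is the main step, and I plan to prove it by the standard linear-independence reduction. Choose the $B_k$ linearly independent; this is possible by rewriting the sum over a basis of $\lispa\{B_1,\ldots,B_n\}$ and absorbing the coefficients into the $A_k$. The goal is then to show that every $A_k=0$.

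Fix $\phi\in\hh$ and a linear functional $f\in\hh'$. Bilinearity of $\vbf$ lets me apply the universal property to $(\chi,\psi)\mapsto f(\chi)\psi$, which yields a linear "slice" map $R_f\colon\hau\rightarrow\jj$ with $R_f(\chi\otimes\psi)=f(\chi)\psi$. Applying $R_f$ to $\sum_k(A_k\phi)\otimes(B_k\psi)=0$ gives $\sum_k f(A_k\phi)\,B_k\psi=0$ for all $\psi\in\jj$, that is, $\sum_k f(A_k\phi)B_k=0$ in $\linjj$. Linear independence of the $B_k$ then forces $f(A_k\phi)=0$ for every $f$ and every $\phi$. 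Since algebraic functionals separate points (via a Hamel basis), $A_k\phi=0$, so $A_k=0$.

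Hence $\sum_k A_k\altp B_k=0$, which proves that $\Theta$ is injective. The only delicate point in this argument is the existence of the slice maps $R_f$; it relies solely on the universal property of $(\hau,\vbf)$, so no further obstacle is expected.
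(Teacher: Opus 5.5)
Your proof is correct and complete. The paper gives no proof of this statement; it records it as a \emph{Fact} and defers to standard references, so there is no argument of the paper's to compare yours with.

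Your route is the standard reconstruction:
\begin{itemize}
\item existence of $A\otimes B$ from the universal property of $(\hau,\vbf)$;
\item uniqueness from $\hau=\lispa(\ran(\vbf))$;
\item bilinearity of $\theta$ from that uniqueness;
\item injectivity of $\Theta$ by reducing to linearly independent $B_k$ and applying the slice maps $R_f$.
\end{itemize}

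The only non-constructive ingredient is that algebraic functionals separate points, which needs a Hamel basis; this is harmless. In the paper's applications $\hh$ is a Hilbert space, so there you could take $f=\langle\eta,\cdot\,\rangle$ instead.
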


\subsection{Probability measures and standard Borel spaces}
\label{measures}

Let $(\mesp,\bora,\mu)$ be a \emph{Borel probability measure space}; i.e., let $\mesp$ be a topological space,
$\bora$ the associated Borel $\sigma$-algebra and $\mu$ a probability measure on $\bora$. We will denote by
$\prm(\mesp)$ the collection of all such measures.

Consider the set $\nmu\defi\big\{\nset\in\bora\colon\mu\big(\nset\big)=0\big\}$ of all
\emph{$\mu$-null} subsets in $\bora$. Then, the set $\cobora\supset\bora$ defined by
$\cobora\defi\big\{\meset=\bose\cup\mnset\colon\bose\in\bora,\
\mbox{$\mnset\subset\nset$ for some $\nset\in\nmu$}\big\}$
is a $\sigma$-algebra, and every element $\meset\in\cobora$ is said to be a \emph{$\mu$-measurable}
set. Moreover, there is a unique extension of $\mu$ to a probability measure $\comu$ on $\cobora$,
which is called the \emph{completion} of $\mu$; see Theorem~{1.9} of~\cite{Folland-RA} (also
see~\cite{Kechris}, Chapter~{II}, Section~{17.A}). We say that a $\mu$-measurable set
$\meset\in\cobora$ is \emph{of full measure} if $\comu(\meset)=1$.
The \emph{support} of the measure $\mu\in\prm(\mesp)$ is defined as the set
\begin{equation} \label{defsupp}
\supp(\mu)\defi\{\xi\in\mesp\colon\mbox{$\mu(\neigh)>0$, for every open neighborhood $\neigh$ of $\xi$}\}.
\end{equation}
Otherwise stated, $\supp(\mu)$ is the complement of the union of all points admitting a $\mu$-null open
neighborhood. It is a \emph{closed} subset of the topological space $\mesp$, as it is clear from the following:

\begin{fact} \label{faequi}
The set $\supp(\mu)^\cmp=\mesp\setminus\supp(\mu)$ is the union of all $\mu$-null open subsets of $\mesp$.
Therefore, $\supp(\mu)$ is the intersection of all closed subsets of $\mesp$ of full measure.
\end{fact}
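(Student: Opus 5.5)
The first claim follows from the definition~\eqref{defsupp}, read through complements. A point $\xi\in\mesp$ lies in $\supp(\mu)^\cmp$ exactly when some open neighborhood $\neigh$ of $\xi$ has $\mu(\neigh)=0$. Each such $\neigh$ is a $\mu$-null open set.

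\emph{Step 1: the two inclusions.} Every point of $\supp(\mu)^\cmp$ therefore lies in some $\mu$-null open set, so $\supp(\mu)^\cmp$ is contained in the union of all $\mu$-null open subsets. Conversely, suppose $\neigh$ is a $\mu$-null open set and $\xi\in\neigh$. Then $\neigh$ is itself an open neighborhood of $\xi$ with $\mu(\neigh)=0$, so $\xi\notin\supp(\mu)$. Hence that union is contained in $\supp(\mu)^\cmp$. In particular $\supp(\mu)^\cmp$ is open, so $\supp(\mu)$ is closed.

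\emph{Step 2: passing to closed sets.} Taking complements, the open $\mu$-null sets correspond bijectively, via $\neigh\mapsto\neighc$, to the closed sets $\mnset$ with $\mu(\mnset)=1$. Here I use that closed sets are Borel, so $\mu(\neighc)=1-\mu(\neigh)$. For closed Borel sets, being of full measure just means $\mu(\mnset)=1$, since $\comu$ extends $\mu$. By De Morgan, the complement of the union of all $\mu$-null open sets is the intersection of all closed sets of full measure. By Step~1, that complement is $\supp(\mu)$.

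There is no genuine obstacle. The only point needing care is that "of full measure" is defined via the completion $\comu$; this is harmless because closed sets already belong to $\bora$, where $\comu$ agrees with $\mu$. I deliberately avoid claiming that $\supp(\mu)$ itself has full measure, which would need second countability or a similar hypothesis not assumed here.
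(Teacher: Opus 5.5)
Your proof is correct. Step~1 is just definition~\eqref{defsupp} read through complements. Step~2 is the De Morgan passage that the word ``Therefore'' in the statement points to; it uses $\mu(\mathcal{O}^{\mathsf c})=1-\mu(\mathcal{O})$ and the fact that closed sets are Borel, so $\comu$ and $\mu$ agree on them.

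The paper records this as a well-known fact and gives no proof, so there is nothing further to compare. You are also right not to claim $\mu(\supp(\mu))=1$ at this stage. The paper asserts that only under second countability, in Fact~\ref{famesp}.
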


\begin{remark} \label{remdir}
Let $\delta_\xi\in\prm(\mesp)$ be the Dirac (point mass) measure concentrated at $\xi\in\mesp$.
If the topological space $\mesp$ is $\mathsf{T}_2$ (Hausdorff), then, for every other point
$\xp\neq\xi$ in $\mesp$, there are open neighborhoods $\neighx$, $\neighxp$ of $\xi$ and $\xp$,
respectively, such that $\neighx\cap\neighxp=\varnothing$; hence, $\delta_\xi(\neighx)=1$ and
$\delta_\xi(\neighxp)=0$ ($\neighxp$ is contained in the complement of $\neighx$), so that
$\supp(\delta_\xi)=\{\xi\}$. Moreover, as is well known, a topological space $\mesp$ is $\mathsf{T}_1$
(Frech\'et) iff every singleton $\{\xi\}\subset\mesp$ is closed. Therefore, if, for every
$\xi\in\mesp$, there exists a Borel probability measure $\mu_\xi$ such that $\supp(\mu_\xi)=\{\xi\}$,
then the topological space $\mesp$ must be $\mathsf{T}_1$.
\end{remark}

Note that $\supp(\mu)$ may be empty. However, this will not be our case, because we will deal
with \emph{second countable} topological spaces.

\begin{fact} \label{famesp}
Let the topological space $\mesp$ be second countable. Then, the probability measure $\mu$ is
\emph{not locally measure zero}; i.e., $\supp(\mu)\neq\varnothing$. In fact, $\mu(\supp(\mu)^\cmp)=0$,
so that $\supp(\mu)\neq\varnothing$ and $\mu(\supp(\mu))=1$. Precisely, in this case $\supp(\mu)$ is
the smallest closed subset of $\mesp$ of full measure. In particular, if $\supp(\mu)$ is a singleton
--- i.e., if $\supp(\mu)=\{\xi\}$, for some $\xi\in\mesp$ --- then $\mu$ is the Dirac measure $\delta_\xi$
at $\xi$.
\end{fact}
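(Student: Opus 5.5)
The plan is to reduce everything to one observation: in a second countable space, the union of \emph{all} $\mu$-null open sets is itself $\mu$-null. The remaining claims then follow quickly from Fact~\ref{faequi}.

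First, fix a countable base $\{\mathcal{U}_k\}_{k\in\nat}$ of the topology of $\mesp$. By Fact~\ref{faequi}, $\supp(\mu)^\cmp$ is the union of all $\mu$-null open sets $\neigh$. Each such $\neigh$ is a union of base elements $\mathcal{U}_k\subset\neigh$. Every such $\mathcal{U}_k$ is $\mu$-null by monotonicity, since $\mu(\mathcal{U}_k)\le\mu(\neigh)=0$. Hence $\supp(\mu)^\cmp$ is the union of those $\mathcal{U}_k$ that are $\mu$-null. This is a countable union of null sets, so countable subadditivity gives $\mu(\supp(\mu)^\cmp)=0$. This countability reduction, which is a Lindelöf-type argument, is the only point where second countability is used. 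It is the main step, since without it an uncountable union of null open sets need not be null.

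Next, $\supp(\mu)$ is closed, hence Borel, and $\mu(\supp(\mu))=1-\mu(\supp(\mu)^\cmp)=1$. In particular $\supp(\mu)\neq\varnothing$, because $\mu(\varnothing)=0$. For minimality, let $F$ be any closed set of full measure. Then $F^\cmp$ is a $\mu$-null open set, so $F^\cmp\subset\supp(\mu)^\cmp$ by Fact~\ref{faequi}, i.e.\ $\supp(\mu)\subset F$. Since $\supp(\mu)$ is itself a closed set of full measure, it is the smallest one.

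Finally, suppose $\supp(\mu)=\{\xi\}$. Then $\{\xi\}$ is closed, hence Borel, and $\mu(\{\xi\})=1$ by the previous step. For any $\bose\in\bora$ there are two cases:
\begin{itemize}
\item if $\xi\in\bose$, then $1\ge\mu(\bose)\ge\mu(\{\xi\})=1$;
\item if $\xi\notin\bose$, then $\mu(\bose)\le\mu(\mesp\setminus\{\xi\})=0$.
\end{itemize}
Thus $\mu(\bose)=\delta_\xi(\bose)$ for every Borel set $\bose$, i.e.\ $\mu=\delta_\xi$.
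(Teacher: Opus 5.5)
Your proof is correct and complete. The countable-base reduction shows that $\supp(\mu)^\cmp$ is a countable union of $\mu$-null basic open sets, and this is the standard argument; the minimality and Dirac-measure conclusions then follow as you describe, and you rightly get closedness of $\{\xi\}$ from closedness of the support, so no separation axiom is needed. The paper states this as a well-known Fact without proof, so there is no alternative route in the paper to compare against.
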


Recall that a \emph{Polish space} $\posp$ is a separable, completely metrizable topological
space (see~\cite{Kechris}, Chapter~{I}, Section~{3}); namely, a topological space homeomorphic
to a complete metric space having a countable dense subset. Note that, in particular, a
Polish space is second countable. We will consider probability measures on the Borel space
$(\posp,\borap)$, where $\borap$ is the natural Borel structure on $\posp$ (i.e., the Borel
structure associated with the norm topology). More generally, a \emph{standard Borel space}
--- see~\cite{Kechris}, Chapter~{II}, Section~{12.B}, Chapter~{3} of~\cite{Srivastava},
or~\cite{Raja}, Chapter~{V}, Section~{1} --- is a measurable space $(\mesp,\salga)$ that is
isomorphic to $(\posp,\borap)$, for some Polish space $\posp$; equivalently, $(\mesp,\salga)$
is a standard Borel space if there is a Polish topology on $\mesp$ and $\salga=\bora$ is the Borel
$\sigma$-algebra associated with this topology. E.g., consider a Borel space of the form
$\big(\mesp\subset\posp,\borapx\defi\{\bose\cap\mesp\colon\bose\in\borap\}\big)$, where
$\posp$ is a Polish space. Note that $\borapx$ is precisely the Borel $\sigma$-algebra
associated with the subspace topology of $\mesp$ as a subset of $\posp$. It turns out that
$\big(\mesp,\bora\equiv\borapx\big)$ is a standard Borel space iff $\mesp$ is a \emph{Borel}
subset of $\posp$ (\cite{Raja}, Chapter~{V}, Section~{1}); actually, \emph{every} standard
Borel space is, up to isomorphisms, of this form (this is the so-called \emph{Ramsey-Mackey theorem};
see Theorem~{3.3.22} of~\cite{Srivastava}, where a standard Borel space is defined precisely
as a measurable space isomorphic to a Borel subset of a Polish space).
The motivation for using the term \emph{standard} rests on the following result:

\begin{fact} \label{faposp}
A \emph{standard} Borel space $(\mesp,\salga=\bora)$ is completely characterized --- up to
Borel isomorphisms --- by its cardinality; i.e., two standard Borel spaces are isomorphic iff
they have the same cardinality (Kuratowski's theorem). In particular, the set $\mesp$ is finite,
countable or has the power of the continuum; in the latter case, it is Borel isomorphic to the
unit interval $[0,1]$ (endowed with the Borel $\sigma$-algebra associated with the standard topology).

If $(\mesp,\bora)$, $(\mespy,\boray)$ are standard Borel spaces and $f\colon\mesp\rightarrow\mespy$
is an injective Borel map, then, $f$ maps Borel sets to Borel sets; i.e., for every $\bose\in\bora$,
$f(\bose)\in\boray$. Therefore, the mapping $\mesp\ni\xi\mapsto f(\xi)\in f(\mesp)$ --- where $f(\mesp)$
is endowed with the Borel $\sigma$-algebra $\borapxy$ associated with the subspace topology --- is a
Borel isomorphism.
\end{fact}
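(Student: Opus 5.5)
Fact~\ref{faposp} bundles two classical results of descriptive set theory, Kuratowski's isomorphism theorem and the Lusin--Souslin theorem. I would prove the second claim first, because the first one depends on it.

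\emph{Reduction.} By the Ramsey--Mackey description recalled above, I may assume that $\mesp$ is a Borel subset of a Polish space $\posp$, carrying the relative Borel structure. Likewise I may assume $\mespy$ is a Borel subset of a Polish space $\mathfrak{Q}$. Every Borel subset of $\mesp$ is then a Borel subset of $\posp$, so it suffices to show the following. If $f\colon\bose\to\mathfrak{Q}$ is injective and Borel, with $\bose\subset\posp$ Borel, then $f(\bose)$ is Borel.

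\emph{Lusin--Souslin step.}
\begin{enumerate}
\item I would use the standard topology-refinement lemma. There is a finer Polish topology on $\posp$, with the same Borel sets, in which $\bose$ is clopen. One can further refine it so that $f$ becomes continuous: make the preimages of a countable base of $\mathfrak{Q}$ clopen, countably many at a time, and take the topology generated by the union.
\item Next write $\bose$ as a continuous injective image of a closed subset $F$ of the Baire space $\nat^\nat$.
\item This reduces the problem to $g=f\circ h$ continuous and injective on $F$. For each finite sequence $s$, let $A_s=g(F\cap N_s)$, where $N_s$ is the basic clopen set of sequences extending $s$. The sets $A_s$ are analytic.
\item By injectivity, the $A_s$ with $s$ of the same length are pairwise disjoint. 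The Lusin separation theorem (disjoint analytic sets are separated by a Borel set) then yields Borel sets $B_s\supset A_s$. These can be chosen pairwise disjoint at each level, decreasing along extensions, and satisfying $B_s\subset\overline{A_s}$.
\item Finally show $g(F)=\bigcap_n\bigcup_{|s|=n}B_s$. The inclusion $\subset$ is clear. For $\supset$, a point in the intersection determines a unique branch $x$. Continuity of $g$, closedness of $F$ and the condition $B_s\subset\overline{A_s}$ force $x\in F$ and $g(x)=y$.
\end{enumerate}
The closing remark of the Fact then follows at once: a Borel bijection onto $f(\mesp)$ whose direct images of Borel sets are Borel has a Borel inverse.

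\emph{Kuratowski's theorem.} If $\mesp$ is countable, every singleton is closed, hence Borel. The Borel structure is therefore the discrete one, and isomorphism holds exactly when the cardinalities agree. If $\mesp$ is uncountable, I would show $\mesp\cong[0,1]$ by a Borel Schr\"oder--Bernstein argument.
\begin{enumerate}
\item Embed $\posp$ homeomorphically into the Hilbert cube $[0,1]^\nat$. The cube is Borel-isomorphic to $2^{\nat\times\nat}\cong 2^\nat$, by binary expansions after discarding a countable set, and $2^\nat$ embeds into $[0,1]$. This gives a Borel injection $\mesp\to[0,1]$.
\item Conversely, the perfect set property for uncountable Borel (indeed analytic) sets provides a homeomorphic copy of the Cantor set $2^\nat$ inside $\mesp$. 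Since $[0,1]$ injects Borel-measurably into $2^\nat$, this gives a Borel injection $[0,1]\to\mesp$.
\item By the Lusin--Souslin step, both injections map Borel sets to Borel sets. The usual back-and-forth partition of the Schr\"oder--Bernstein proof is therefore made of Borel pieces, and the resulting bijection is a Borel isomorphism.
\end{enumerate}
The same argument shows that two uncountable standard Borel spaces are isomorphic. The dichotomy ``finite, countable, or of the power of the continuum'' comes out along the way.

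The main obstacle is the Lusin--Souslin step, and specifically the separation theorem for analytic sets together with the bookkeeping that makes the $B_s$ decreasing and contained in $\overline{A_s}$. The perfect set property for uncountable Borel sets is a second nontrivial input, which I would take from the Cantor--Bendixson and Souslin theory in~\cite{Kechris}.
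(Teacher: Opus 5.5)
Your proposal is correct. The paper states this as a Fact without proof and defers to~\cite{Kechris,Srivastava}, and your sketch reproduces the standard argument found there: Lusin--Souslin via the separation theorem for analytic sets, then Kuratowski's theorem via the perfect set property and a Borel Schr\"oder--Bernstein construction.

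One small slip: in $[0,1]^\nat$ the set of points with some coordinate having two binary expansions is uncountable. So the ``discard a countable set'' correction must be done coordinatewise, by taking the product of a Borel isomorphism $[0,1]\cong 2^\nat$. Alternatively, take the product of any Borel injection $[0,1]\to 2^\nat$, which is all your Schr\"oder--Bernstein step needs.
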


\section{Tools: tensor products, cross trace classes and cross states}
\label{tensor}

We will now recall some salient facts about the tensor products of Hilbert spaces and of Hilbert
space operators~\cite{Prugovecki,Moretti,Kubrusly}. Next, we will introduce the projective tensor
product~\cite{Grothendieck-res,Grothendieck-book,Defant,Ryan,DFS,Kubrusly} of trace classes, and
the related notions of cross trace class operator and of cross state of a bipartite quantum system.
The notion of cross state will allow us to obtain, in the subsequent sections, remarkable characterizations
of separability and entanglement for a bipartite quantum system.

\subsection{Introducing the relevant tensor products: Hilbert spaces and operators}
\label{introtens}

Let us first recall that the the tensor product $\hh\otimes\jj$ of two separable complex Hilbert
spaces $\hh$ and $\jj$ can be introduced as follows~\cite{Moretti,Kubrusly}. For every pair of vectors
$\phi\in\hh$ and $\psi\in\jj$, we can define a bilinear form $\phi\otimes\psi$ on $\hhd\times\jjd$,
where $\hhd$, $\jjd$ are the (topological) \emph{dual spaces} of $\hh$ and $\jj$, respectively; i.e.,
we set
\begin{equation} \label{deftepo}
\phi\otimes\psi\colon\hhd\times\jjd\ni\big(\deta,\dchi\big)\mapsto
\langle\eta,\phi\rhh\,\langle\chi,\psi\rjj\in\ccc \fin , \quad \eta\in\hh \fin ,\ \chi\in\jj .
\end{equation}
Clearly, the Hilbert space $\hh$ (or $\jj$) is identified with its dual space $\hhd$ (respectively,
with $\jjd$) via the usual antilinear isomorphism $\aih\colon\hh\ni\eta\mapsto\etap\in\hhd$, where
$\etap\equiv\deta\colon\hh\rightarrow\ccc$ (respectively, $\aij\colon\jj\ni\chi\mapsto\chip\equiv\dchi\in\jjd$),
and we have:
$\big(\phi\otimes\psi\big)(\etap,\chip)=\etap(\phi)\otto\chip(\psi)=\langle\eta,\phi\rhh\,\langle\chi,\psi\rjj$.

Note that putting $\lispa\big\{\phi\otimes\psi\big\}\equiv\lispa\big\{\phi\otimes\psi\in\linhjd\colon
\phi\in\hh,\ \psi\in\jj\big\}$, where $\linhjd$ is the complex vector space of all bilinear forms on
$\hhd\times\jjd$, the map
\begin{equation} \label{defvbf}
\vbf\colon\hh\times\jj\ni(\phi,\psi)\mapsto\phi\otimes\psi\in\lispa\big\{\phi\otimes\psi\big\}
\equiv\lispa\big\{\phi\otimes\psi\in\linhjd\colon\phi\in\hh,\ \psi\in\jj\big\}
\end{equation}
is a (bounded) bilinear map. This observation allows us identify the pair
$\big(\lispa\big\{\phi\otimes\psi\big\},\vbf\big)$ with the \emph{algebraic tensor product} of
the vector spaces $\hh$ and $\jj$, as defined in the theory of tensor products of abstract vector
spaces --- in particular, normed and Banach spaces --- recall Subsection~\ref{algtens}, and see,
e.g., Chapter~{1} of~\cite{Ryan}. In fact, for any $\phi\in\hh$ and $\psi\in\jj$, let
$\phi\altp\psi\colon\linhj\rightarrow\ccc$ be the linear form --- on the complex vector space
$\linhj$ of all bilinear forms on $\hh\times\jj$ --- defined by
\begin{equation}
\big(\phi\altp\psi\big)(\bfo)\defi\bfo(\phi,\psi) \fin , \quad \forall\cinque\bfo\in\linhj \fin .
\end{equation}
Then, the pair formed by the (complex) linear span
\begin{equation}
\lispa\big\{\phi\altp\psi\big\}\equiv\lispa\big\{\phi\altp\psi\in\linhjad\colon
\phi\in\hh,\ \psi\in\jj\big\}
\end{equation}
and by the bilinear map
$\svbf\colon\hh\times\jj\ni(\phi,\psi)\mapsto\phi\altp\psi\in\lispa\big\{\phi\altp\psi\big\}$
is the standard realization of the algebraic tensor product of $\hh$ and $\jj$. Then, recalling
Definition~\ref{defreatp}, we have:

\begin{proposition} \label{proatps}
The pair $\big(\lispa\big\{\phi\otimes\psi\big\},\vbf\colon\hh\times\jj\rightarrow
\lispa\big\{\phi\otimes\psi\big\}\big)$ is a realization
of the algebraic tensor product of $\hh$ and $\jj$. In fact, the mapping
\begin{equation}
\big\{\phi\altp\psi\colon\phi\in\hh,\ \psi\in\jj\big\}\ni\eta\altp\chi\mapsto
\eta\otimes\chi\in\big\{\phi\otimes\psi\colon\phi\in\hh,\ \psi\in\jj\big\}
\end{equation}
extends to a linear isomorphism $\vlm\colon\lispa\big(\ran\big(\svbf\big)\big)=\lispa\big\{\phi\altp\psi\big\}
\rightarrow\lispa\big\{\phi\otimes\psi\big\}=\lispa\big(\ran(\vbf)\big)$ that satisfies the universality
relation $\vbf=\vlm\circ\svbf$.
\end{proposition}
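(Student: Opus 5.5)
By Definition~\ref{defreatp}, I need two things. First, $\lispa(\ran(\vbf))=\lispa\{\phi\otimes\psi\}$, which holds by construction. Second, the unique linear map $\vlm\colon\lispa\{\phi\altp\psi\}\rightarrow\lispa\{\phi\otimes\psi\}$ with $\vbf=\vlm\circ\svbf$ must be injective.

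Existence and uniqueness of $\vlm$ come directly from the universal property of the algebraic tensor product (Proposition~1.4 of~\cite{Ryan}, recalled in Subsection~\ref{algtens}). The map $\vbf$ is bilinear, because each $\phi\otimes\psi$ is a bilinear form on $\hhd\times\jjd$ whose values depend bilinearly on $(\phi,\psi)$. Hence there is a unique linear $\vlm$ with $\vlm(\phi\altp\psi)=\phi\otimes\psi$, namely $\vlm\big(\sum_k\phi_k\altp\psi_k\big)=\sum_k\phi_k\otimes\psi_k$. Surjectivity is immediate, since the range contains all the generators $\phi\otimes\psi$.

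The only real step is injectivity. I will show that $\sum_{k=1}^n\phi_k\otimes\psi_k=0$ as a bilinear form on $\hhd\times\jjd$ forces $\sum_{k=1}^n\phi_k\altp\psi_k=0$ in $\linhjad$.

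I would first reduce to a normal form. Using bilinearity of $\svbf$, rewrite $u=\sum_k\phi_k\altp\psi_k$ so that the $\phi_k$ are linearly independent, by expressing the $\phi_k$ in a basis of their finite-dimensional span and regrouping. Applying the same regrouping to the image gives $\vlm(u)=\sum_k\phi_k\otimes\psi_k$ with the same vectors.

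Since the $\phi_k$ are independent and finite in number, Gram--Schmidt (or a biorthogonal system inside their span) produces vectors $\eta_j\in\hh$ with $\langle\eta_j,\phi_k\rhh=\delta_{jk}$. Evaluating $\vlm(u)=0$ at $(\eta_j^\ast,\chi^\ast)$ gives
\[
\sum_k\delta_{jk}\langle\chi,\psi_k\rjj=\langle\chi,\psi_j\rjj=0
\]
for every $\chi\in\jj$. Hence $\psi_j=0$ for every $j$, and therefore $u=\sum_k\phi_k\altp 0=0$.

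This proves injectivity, so $\vlm$ is a linear isomorphism satisfying $\vbf=\vlm\circ\svbf$, and by Definition~\ref{defreatp} the pair is a realization of the algebraic tensor product.

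The main obstacle is making sure that the regrouping step is applied consistently on both sides, which follows from the linearity of $\vlm$. Using the Riesz identification $\aih,\aij$ to test against all of $\hhd\times\jjd$ is harmless, because every functional there has the form $\eta^\ast$.
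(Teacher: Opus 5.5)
Your proof is correct. The paper itself gives no proof of Proposition~\ref{proatps}. Its only justification is the remark that $\vbf$ is bilinear. Through the universal property, that remark yields existence, uniqueness and surjectivity of $\vlm$, but not injectivity, which is the one substantive requirement of Definition~\ref{defreatp}. Your argument supplies exactly that step: reduce to linearly independent first factors, then evaluate $\vlm(u)$ against a biorthogonal system $\{\eta_j\}$ paired with arbitrary $\chi$. This is the standard route in the linear--bilinear approach of Chapter~1 of \cite{Ryan}, so there is no alternative argument in the paper to compare it with.

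Three small points of presentation:
\begin{enumerate}
\item State the regrouping with fresh names. Pick a basis $e_1,\ldots,e_m$ of $\lispa\{\phi_1,\ldots,\phi_n\}$, write $\phi_k=\sum_j c_{kj}\, e_j$, and set $\tilde\psi_j\defi\sum_k c_{kj}\,\psi_k$. Then $u=\sum_j e_j\altp\tilde\psi_j$ by bilinearity of $\svbf$, and $\vlm(u)=\sum_j e_j\otimes\tilde\psi_j$ by linearity of $\vlm$. Evaluating at $(\eta_i^\ast,\chi^\ast)$ with $\chi=\tilde\psi_i$ gives $\tilde\psi_i=0$.
\item Treat separately the trivial case where all $\phi_k$ vanish. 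There is then no basis to choose, and $u=0$ directly.
\item Your closing remark about the Riesz identification is unnecessary. You only ever evaluate $\vlm(u)$ at the specific points $(\eta_j^\ast,\chi^\ast)$.
\end{enumerate}
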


By Proposition~\ref{proatps}, we can define the algebraic tensor product of the Hilbert spaces
$\hh$ and $\jj$ as the pair
\begin{equation}
\big(\hh\altp\jj\defi\lispa\big\{\phi\otimes\psi\in\linhjd\colon \phi\in\hh,\ \psi\in\jj\big\},
\vbf\colon\hh\times\jj\rightarrow\hh\altp\jj\big) \fin .
\end{equation}

At this point, the complex vector space $\hh\altp\jj$ can be endowed with a (well defined) natural
scalar product by setting $\langle\phi_1\otimes\psi_1 , \phi_2\otimes\psi_2\rahaj=\langle\phi_1,
\phi_2\rah\sei\langle\psi_1, \psi_2\raj$,
and then extending the mapping $\scaphaj\colon(\hh\altp\jj)\times(\hh\altp\jj)\rightarrow\ccc$
--- by antilinearity in the first argument and by linearity in the second one --- so obtaining
a sesquilinear form on $\hh\altp\jj$. Finally, the tensor product Hilbert space
$\hh\otimes\jj\equiv(\hh\otimes\jj,\scaphj)$ is obtained as the Hilbert space completion of
this space; see Section~{10.2.1} of~\cite{Moretti} (or, Section~{9.2} of~\cite{Kubrusly})
for all technical details. Clearly, the application
$\vbf\colon\hh\times\jj\ni(\phi,\psi)\mapsto\phi\otimes\psi\in\hh\otimes\jj$
is a bounded bilinear map --- which, with a slight abuse, we denote by the same symbol adopted
for the map~\eqref{defvbf} --- and can be regarded, again with a slight abuse, as a (realization of)
the natural bilinear map on $\hh\times\jj$.

\begin{fact}[The Schmidt decomposition of a bipartite vector] \label{schmdec}
Every \emph{nonzero} vector $\va\in\hhjj$ can be expressed as a sum (norm-convergent, if infinite)
of the form
\begin{equation} \label{schmdecon}
\va=\sum_{n=1}^{\sra} a_n\tre (\phi_n\otimes\psi_n) \fin , \quad
1\le\sra\equiv\srank(\va)\le\emme\equiv\min\{\dim(\hh),\dim(\jj)\}\le\infty \fin ,
\end{equation}
where $\{\phi_n\}_{n=1}^\sra$, $\{\psi_n\}_{n=1}^\sra$ are orthonormal systems in the Hilbert spaces
$\hh$ and $\jj$, respectively, and $\{a_n\}_{n=1}^\sra\equiv\scfs(\va)$ is a set of \emph{strictly positive
real numbers} --- the so-called \emph{Schmidt coefficients} of $\va$ --- that is uniquely determined
when arranged in a weakly decreasing order $a_1\ge a_2\ge\cdots$. In particular, the cardinality
$\sra\equiv\srank(\va)\le\min\{\dim(\hh),\dim(\jj)\}$ of the set $\scfs(\va)$ --- the so-called
\emph{Schmidt rank} of $\va$ --- is uniquely determined.
\end{fact}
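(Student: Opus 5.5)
The plan is to reduce the Schmidt decomposition of $\va\in\hhjj$ to the singular value decomposition of a trace class (indeed Hilbert--Schmidt) operator, via the canonical identification of $\hhjj$ with the Hilbert--Schmidt operators from $\jj$ to $\hh$. The steps, in order, are as follows.

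First, I define the conjugate-linear-in-$\psi$ correspondence on elementary tensors,
\[
\phi\otimes\psi\mapsto|\phi\rangle\langle\psi|\colon\jj\to\hh .
\]
I extend it to $\hh\altp\jj$ by linearity; Proposition~\ref{proatps} guarantees that this is well defined. I then check that it preserves scalar products, namely
\[
\langle\phi_1\otimes\psi_1,\phi_2\otimes\psi_2\rangle=\tr\big((|\phi_1\rangle\langle\psi_1|)^\ast\,|\phi_2\rangle\langle\psi_2|\big),
\]
where the right-hand side is the Hilbert--Schmidt inner product. Hence the correspondence extends by continuity to an isometry $\Upsilon\colon\hhjj\to\hsjth$.

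Second, fix orthonormal bases $\{e_j\}$ of $\hh$ and $\{f_l\}$ of $\jj$. Then $\{e_j\otimes f_l\}$ is an orthonormal basis of $\hhjj$, and it is mapped onto the standard orthonormal basis $\{|e_j\rangle\langle f_l|\}$ of $\hsjth$. Consequently $\Upsilon$ is unitary.

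Third, for $\va\neq0$ the operator $A=\Upsilon(\va)$ is compact and nonzero. Applying the singular value decomposition (Fact~\ref{sivadea} and Remark~\ref{sivadeb}, valid verbatim for compact operators between two Hilbert spaces via $|A|=\sqrt{A^\ast A}$) gives
\[
A=\sum_n a_n\,|\phi_n\rangle\langle\psi_n| ,
\]
with $a_n>0$ and $\{\phi_n\}\subset\hh$, $\{\psi_n\}\subset\jj$ orthonormal. The sum converges in Hilbert--Schmidt norm, since $\sum_n a_n^2=\|A\|_2^2<\infty$. Applying $\Upsilon^{-1}$ yields the norm-convergent expansion~\eqref{schmdecon}.

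Fourth, for the bound on the Schmidt rank: the number of terms equals $\mathrm{rank}(A)$. It is at most $\dim\hh$ because the $\phi_n$ are orthonormal in $\hh$, and at most $\dim\jj$ because the $\psi_n$ are orthonormal in $\jj$.

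Fifth, uniqueness. Given any expansion of the form~\eqref{schmdecon}, I apply $\Upsilon$ and compute
\[
AA^\ast=\sum_n a_n^2\,|\phi_n\rangle\langle\phi_n| .
\]
So the multiset $\{a_n^2\}$ is exactly the multiset of nonzero eigenvalues of the compact positive operator $AA^\ast$, counted with multiplicity. This determines $\{a_n\}$ once the coefficients are arranged in weakly decreasing order, and with it the cardinality $\sra$.

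The main technical point is not a hard obstacle. It lies in the first two steps: well-definedness and surjectivity of $\Upsilon$, together with the fact that the SVD used in the paper for $\trc$ holds for compact operators from $\jj$ to $\hh$. For the latter I would invoke the spectral theorem for the compact positive operator $A^\ast A$ on $\jj$, set $\phi_n=a_n^{-1}A\psi_n$, and check orthonormality directly.
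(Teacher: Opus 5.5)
Your overall strategy (identify $\hhjj$ unitarily with $\hsjth$ and read off the Schmidt decomposition from the singular value decomposition) is the right one. The paper lists this as a standard Fact without proof, but Remark~\ref{tenhil} records exactly this identification. However, your first step fails as written.

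The assignment $(\phi,\psi)\mapsto|\phi\rangle\langle\psi|$ is conjugate-linear in $\psi$. So it is not bilinear, and Proposition~\ref{proatps} (through the universal property) does not apply. In fact no map on $\hh\altp\jj$ whatsoever can satisfy $\phi\otimes\psi\mapsto|\phi\rangle\langle\psi|$. For $\phi,\psi\neq 0$ one has $(\ima\phi)\otimes\psi=\phi\otimes(\ima\psi)$, while your prescription sends the two sides to $\ima|\phi\rangle\langle\psi|$ and $-\ima|\phi\rangle\langle\psi|$ respectively.

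Consistently with this, the scalar-product identity you claim is false:
$\tr\big((|\phi_1\rangle\langle\psi_1|)^\ast|\phi_2\rangle\langle\psi_2|\big)=\langle\phi_1,\phi_2\rangle\langle\psi_2,\psi_1\rangle=\langle\phi_1,\phi_2\rangle\overline{\langle\psi_1,\psi_2\rangle}$.
What you need is $\langle\phi_1\otimes\psi_1,\phi_2\otimes\psi_2\rangle=\langle\phi_1,\phi_2\rangle\langle\psi_1,\psi_2\rangle$.

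The repair is the one built into~\eqref{canoma}: fix a complex conjugation $J$ on $\jj$ and use $\phi\otimes\psi\mapsto|\phi\rangle\langle J\psi|$. This assignment is bilinear, so it extends linearly to $\hh\altp\jj$. Since $J$ is antiunitary,
$\tr\big((|\phi_1\rangle\langle J\psi_1|)^\ast|\phi_2\rangle\langle J\psi_2|\big)=\langle\phi_1,\phi_2\rangle\langle J\psi_2,J\psi_1\rangle=\langle\phi_1,\phi_2\rangle\langle\psi_1,\psi_2\rangle$,
so you do obtain an isometry $\Upsilon$.

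With $J$ in place, the rest of your argument goes through with only bookkeeping changes:
\begin{itemize}
\item The product basis is sent to $\{|e_j\rangle\langle Jf_l|\}$, which is still an orthonormal basis of $\hsjth$.
\item A singular value decomposition $\Upsilon(\va)=\sum_n a_n|\phi_n\rangle\langle\chi_n|$ pulls back to $\va=\sum_n a_n\,\phi_n\otimes J\chi_n$. Setting $\psi_n\defi J\chi_n$ gives an orthonormal system, because $J$ preserves orthonormality.
\item For uniqueness, $\Upsilon$ applied to a given expansion is $A=\sum_n a_n|\phi_n\rangle\langle J\psi_n|$, and again $AA^\ast=\sum_n a_n^2|\phi_n\rangle\langle\phi_n|$.
\end{itemize}
Your proof of the singular value decomposition for compact operators $\jj\to\hh$ via the spectral theorem for $A^\ast A$ is correct. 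So are the Hilbert--Schmidt convergence argument, the rank bound, and the identification of $\{a_n^2\}$ with the nonzero spectrum of $AA^\ast$ counted with multiplicity.
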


\begin{remark} \label{exschmdec}
By Fact~\ref{schmdec}, every vector $\va\in\hhjj$ admits an expansion of the form
\begin{equation} \label{exschmdecon}
\va=\sum_{m=1}^{\emme} a_m\tre (\phi_m\otimes\psi_m) \fin , \quad
\emme\equiv\min\{\dim(\hh),\dim(\jj)\}\le\infty \fin ,
\end{equation}
where $a_m\ge 0$, and $\{\phi_m\}_{m=1}^\emme$, $\{\psi_m\}_{m=1}^\emme$ are orthonormal
systems in $\hh$ and $\jj$, respectively. We will call any expansion of the form~\eqref{exschmdecon}
an \emph{extended Schmidt decomposition} of the vector $\va$.
\end{remark}

At this point, the tensor product of two bounded operators $A\in\bop$, $B\in\bopj$ can be defined
in the standard way (see Section~{10.2.2} of~\cite{Moretti}, Section~{9.2} of~\cite{Kubrusly} and
Section~{6.3} of~\cite{Murphy}), i.e., $A\otimes B$ is the unique bounded linear operator on the
Hilbert space $(\hh\otimes\jj,\scaphj)$ such that
$\big(A\otimes B\big)(\phi\otimes\psi)=(A\phi)\otimes(B\psi)$, for every pair of vectors
$\phi\in\hh$ and $\psi\in\jj$, and --- denoting by $\tnormi$ the norm of the Banach space $\bophj$
--- $\|A\otimes B\tnori=\|A\nori\tre \|B\nori$ (the operator norm $\tnormi$ is a \emph{cross norm});
moreover, $(A\otimes B)^\ast=A^\ast\otimes B^\ast$ and $(A\otimes B)(E\otimes F)= (AE)\otimes(BF)$.

\begin{remark} \label{reboplin}
Regarding $\bop$, $\bopj$ as linear subspaces of $\linhh$ and $\linjj$, respectively, then,
by Fact~\ref{facteplin}, putting
\begin{equation}
\bop\altp\bopjj\defi\lispa\{A\otimes B\colon A\in\bop,\ B\in\bopjj\} \fin ,
\end{equation}
the pair
\begin{equation}
\big(\bop\altp\bopjj, \bop\times\bopjj\ni (A,B)\mapsto A\otimes B\in\bop\altp\bopjj)
\end{equation}
is a realization of the algebraic tensor product of $\bop$ and $\bopjj$. Here, with a
slight abuse, we are identifying the linear operator $A\otimes B\in\linhh\altp\linjj$
--- uniquely defined on the algebraic tensor product $\hh\altp\jj$, according to the first
claim in Fact~\ref{facteplin} --- with its continuous extension to $\hhjj$ (see Section~{9.2}
of~\cite{Kubrusly}, and in particular Theorem~{9.8} therein, for a more precise, but also
more cumbersome, notation). By the second claim in Fact~\ref{facteplin}, the algebraic
tensor product of any pair of linear subspaces of $\linhh$ and $\linjj$, respectively,
can be realized analogously.
\end{remark}

Let us now focus on the tensor product of trace class operators. By the last assertion in
Remark~\ref{reboplin}, setting
\begin{equation} \label{lisp}
\thatj\defi\lispa\{S\otimes T\colon S\in\trc,\ T\in\trcjj\} \fin ,
\end{equation}
the pair $\big(\thatj,\trc\times\trcjj\ni (S,T)\mapsto S\otimes T\in\thatj\big)$ is (a realization of)
the \emph{algebraic tensor product} of the trace classes $\trc$ and $\trcjj$.

With a standard abuse, in the following we will usually denote the algebraic tensor product of
$\trc$ and $\trcjj$ by $\thatj$ \emph{tout court}; i.e., by the symbol of the tensor product space.

Analogous considerations will be understood for the algebraic tensor product
\begin{equation}
\thsatjs\defi\lispar\{S\otimes T\colon S\in\trcsa,\ T\in\trcsajj\}
\end{equation}
of the \emph{real} Banach spaces $\trcsa$ and $\trcsajj$ (\emph{selfadjoint} trace class operators).

\begin{fact} \label{factrc}
If $S\in\trc$ and $T\in\trcjj$, then $S\otimes T\in\trchj$ --- hence, $\thatj$ is a linear
subspace of $\trchj$ --- and $\tr(S\otimes T)=\tr(S)\sei\tr(T)$; moreover,
$|S\otimes T|=|S|\otimes |T|$ so that the trace norm $\ttrnor$ of the Banach space $\trchj$
is a cross norm, namely,
\begin{equation} \label{crono}
\|S\otimes T\ttrn\defi\tr(|S\otimes T|)=\tr(|S|\otimes |T|)=\|S\trn \tre \|T\trn \fin , \quad
\forall\cinque S\in\trc \fin ,\  \forall\cinque T\in\trcjj \fin .
\end{equation}
\end{fact}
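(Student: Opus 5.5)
The plan is to reduce everything to the singular value decompositions of $S$ and $T$ (Remark~\ref{sivadeb}). First, for rank-one operators $S=|\eta\rangle\langle\phi|$ and $T=|\chi\rangle\langle\psi|$, the defining property $(A\otimes B)(\phi'\otimes\psi')=(A\phi')\otimes(B\psi')$, together with linearity and continuity, gives
\[
|\eta\rangle\langle\phi|\otimes|\chi\rangle\langle\psi| = |\eta\otimes\chi\rangle\langle\phi\otimes\psi| .
\]
This is a rank-one operator on $\hhjj$, hence trace class. Now write $S=\sum_k s_k\,|\eta_k\rangle\langle\phi_k|$ and $T=\sum_l t_l\,|\chi_l\rangle\langle\psi_l|$, with $s_k,t_l>0$ and orthonormal systems $\{\eta_k\},\{\phi_k\}\subset\hh$ and $\{\chi_l\},\{\psi_l\}\subset\jj$. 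Since $\|A\otimes B\tnori=\|A\nori\|B\nori$, the map $(A,B)\mapsto A\otimes B$ is jointly norm-continuous, and the series converge in trace norm, hence in operator norm. Therefore
\[
S\otimes T=\sum_{k,l} s_k t_l\,|\eta_k\otimes\chi_l\rangle\langle\phi_k\otimes\psi_l| ,
\]
with convergence in operator norm.

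The families $\{\eta_k\otimes\chi_l\}$ and $\{\phi_k\otimes\psi_l\}$ are orthonormal systems in $\hhjj$, by the formula for the scalar product on elementary tensors. The double family $\{s_kt_l\}$ is absolutely summable, with $\sum_{kl}s_kt_l=\|S\trn\|T\trn<\infty$. By Fact~\ref{facdou}, the double series converges absolutely in the trace norm of $\trchj$, since each term has trace norm $s_kt_l$. Its trace-norm limit coincides with the operator-norm limit, because $\normi\le\trnor$. So $S\otimes T\in\trchj$, and the displayed expansion is an SVD of $S\otimes T$, with singular values $\{s_kt_l\}$. Remark~\ref{sivadeb} then gives $\|S\otimes T\ttrn=\sum_{kl}s_kt_l=\|S\trn\|T\trn$. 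The subspace claim follows at once, since $\thatj$ is spanned by such products and $\trchj$ is a linear space.

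For $|S\otimes T|=|S|\otimes|T|$, I would argue algebraically:
\[
(|S|\otimes|T|)^2=|S|^2\otimes|T|^2=S^\ast S\otimes T^\ast T=(S\otimes T)^\ast(S\otimes T),
\]
using $(A\otimes B)^\ast=A^\ast\otimes B^\ast$ and the product rule. Moreover $|S|\otimes|T|\ge0$: it equals $(|S|^{1/2}\otimes|T|^{1/2})^\ast(|S|^{1/2}\otimes|T|^{1/2})$. Uniqueness of the positive square root then yields the identity. As a consistency check, this also gives $\tr(|S|\otimes|T|)=\|S\otimes T\ttrn$ via the trace formula below.

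For $\tr(S\otimes T)=\tr(S)\tr(T)$, I would compute the trace of each rank-one term, $\langle\phi_k\otimes\psi_l,\eta_k\otimes\chi_l\rangle=\langle\phi_k,\eta_k\rangle\langle\psi_l,\chi_l\rangle$. The trace functional is trace-norm continuous and the series is absolutely convergent, so by Fact~\ref{facdou} I can sum over $k$ and $l$ separately. This gives $\big(\sum_k s_k\langle\phi_k,\eta_k\rangle\big)\big(\sum_l t_l\langle\psi_l,\chi_l\rangle\big)=\tr(S)\tr(T)$, again by Remark~\ref{sivadeb}.

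The main point of care is identifying the abstractly defined bounded operator $S\otimes T$ with the trace-norm limit of the double series. I handle this with the operator-norm continuity of $\otimes$ together with the inequality $\normi\le\ttrnor$ on $\trchj$. The degenerate cases $S=0$ or $T=0$ are trivial.
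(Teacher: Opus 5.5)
Your proof is correct. The paper gives no proof to compare against: it lists this statement among the standard \emph{facts} stated without proof. Your argument is a complete verification that uses only tools the paper has already set up, namely Remark~\ref{sivadeb}, Fact~\ref{facdou} and the cross-norm property of the operator norm. It proceeds by taking SVDs of both factors, then using the operator-norm joint continuity of $(A,B)\mapsto A\otimes B$ together with $\|\cdot\|_\infty\le\|\cdot\|_1$ to identify $S\otimes T$ with the trace-norm sum of an absolutely summable double series. The identity $|S\otimes T|=|S|\otimes|T|$ then follows from the positive square-root uniqueness argument.

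One small point: your claim that the double expansion ``is an SVD'' of $S\otimes T$ deserves a line of justification. Orthonormality of $\{\eta_k\otimes\chi_l\}$ gives $(S\otimes T)^\ast(S\otimes T)=\sum_{kl}s_k^2t_l^2\,|\phi_k\otimes\psi_l\rangle\langle\phi_k\otimes\psi_l|$, which supplies it. You do not actually need that step, though. The cross-norm equality $\|S\otimes T\ttrn=\tr(|S|\otimes|T|)=\tr(|S|)\,\tr(|T|)$ already follows, non-circularly, from your square-root identity and your trace formula applied to $|S|$ and $|T|$.
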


\begin{fact} \label{factdenso-bis}
Given trace class operators $X\in\trc$ and $Y\in\trcjj$, if $X\otimes Y\in\trchjp$, then
$X\otimes Y=S\otimes T$, for some $S\in\trcp$ and $T\in\trcjjp$.
\end{fact}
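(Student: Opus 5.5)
We have to prove Fact~\ref{factdenso-bis}: if $X\in\trc$, $Y\in\trcjj$ and $X\otimes Y\in\trchjp$, then $X\otimes Y=S\otimes T$ with $S\in\trcp$ and $T\in\trcjjp$.

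The degenerate case is disposed of first. If $X=0$ or $Y=0$, then $X\otimes Y=0=0\otimes 0$, which has the required form. From now on both $X$ and $Y$ are nonzero.

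The idea is to test the positivity of $X\otimes Y$ against product vectors. For all $\phi\in\hh$ and $\psi\in\jj$,
\begin{equation*}
0\le\langle\phi\otimes\psi,(X\otimes Y)(\phi\otimes\psi)\rangle=\langle\phi,X\phi\rangle\,\langle\psi,Y\psi\rangle \fin .
\end{equation*}
Since $Y\neq 0$, the polarization identity shows that the quadratic form $\psi\mapsto\langle\psi,Y\psi\rangle$ is not identically zero. So we can fix $\psi_0$ with $c\defi\langle\psi_0,Y\psi_0\rangle\neq0$. Then $c\,\langle\phi,X\phi\rangle\in\errep$ for every $\phi$. Write $c=|c|\eee^{\ima\alpha}$. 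It follows that $\eee^{\ima\alpha}X$ has non-negative quadratic form, so it is a positive operator; this uses the standard fact that a bounded operator with real quadratic form is selfadjoint, again by polarization.

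Set $S\defi\eee^{\ima\alpha}X\in\trcp$ and $T\defi\eee^{-\ima\alpha}Y$, so that $X\otimes Y=S\otimes T$ by bilinearity. It remains to show that $T$ is positive. Since $S\neq0$ is positive, we can choose $\phi_0$ with $\langle\phi_0,S\phi_0\rangle>0$. Applying the displayed inequality to $S\otimes T$ gives $\langle\phi_0,S\phi_0\rangle\langle\psi,T\psi\rangle\ge0$ for every $\psi$. Hence $\langle\psi,T\psi\rangle\ge0$ for all $\psi$, and $T\in\trcjjp$.

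There is no real obstacle in this argument. The only points needing care are the fact that a nonzero operator has a nonzero quadratic form, which uses polarization and hence complex Hilbert spaces, and the fact that a bounded operator with non-negative quadratic form is positive and selfadjoint. Both are standard consequences of polarization in complex Hilbert spaces.
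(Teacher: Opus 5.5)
Your argument is correct and complete. Testing positivity of $X\otimes Y$ on product vectors $\phi\otimes\psi$ is valid. So is extracting the phase $\mathrm{e}^{\mathrm{i}\alpha}$ from a nonzero value $\langle\psi_0,Y\psi_0\rangle$ (which exists by polarization). Using a vector $\phi_0$ with $\langle\phi_0,S\phi_0\rangle>0$ to show that $\langle\psi,T\psi\rangle$ is real and non-negative is also valid, and you handle the degenerate case $X=0$ or $Y=0$ properly.

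The paper gives no proof of this statement; it is listed among the standard ``facts'' stated without proof, so there is no argument of the authors to compare against.

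For what it is worth, an equally short route within the paper's own toolkit uses partial traces instead of quadratic forms. If $X\otimes Y\ge 0$ is nonzero, then $\mathrm{tr}(X)\,\mathrm{tr}(Y)=\mathrm{tr}(X\otimes Y)>0$. Moreover, $\trjj(X\otimes Y)=\mathrm{tr}(Y)\,X$ and $\trhh(X\otimes Y)=\mathrm{tr}(X)\,Y$ are positive, so $S=\mathrm{tr}(Y)\,X/\mathrm{tr}(X\otimes Y)$ and $T=\mathrm{tr}(X)\,Y$ do the job.

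That version relies on the trace-class hypothesis and on positivity of partial traces. Yours is purely Hilbert-space-theoretic and works verbatim for arbitrary bounded $X$ and $Y$.
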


\begin{fact} \label{factdenso}
Given nonzero positive trace class operators $S,X\in\trcp$ and $T,Y\in\trcjjp$, we have:
$S\otimes T=X\otimes Y$ iff $S=r\tre X$ and $T=r^{-1}\tre Y$, where $r=\tr(Y)/\tr(T)=\tr(S)/\tr(X)>0$.
Then, given any density operators $\rho,\sigma\in\sta$ and $\tau,\omega\in\stajj$,
$\rho\otimes\sigma=\tau\otimes\omega$ iff $\rho=\tau$ and $\sigma=\omega$.
\end{fact}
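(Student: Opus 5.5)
The statement has two parts. First, for nonzero positive $S,X\in\trcp$ and $T,Y\in\trcjjp$, we have $S\otimes T=X\otimes Y$ iff $S=r\tre X$ and $T=r^{-1}\tre Y$ with $r=\tr(Y)/\tr(T)=\tr(S)/\tr(X)>0$. Second, for density operators, $\rho\otimes\sigma=\tau\otimes\omega$ iff $\rho=\tau$ and $\sigma=\omega$. The "if" direction of the first part is immediate from bilinearity: $(r\tre X)\otimes(r^{-1}\tre Y)=X\otimes Y$.

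For the converse, the plan is to recover each factor from the product by a partial contraction. This uses only the elementary identity
\[
\langle\phi\otimes\psi,(A\otimes B)(\phi^\prime\otimes\psi^\prime)\rangle=\langle\phi,A\phi^\prime\rangle\tre\langle\psi,B\psi^\prime\rangle,
\]
valid for all $\phi,\phi^\prime\in\hh$ and $\psi,\psi^\prime\in\jj$.

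First compare the traces. Taking the trace of $S\otimes T=X\otimes Y$ and using Fact~\ref{factrc} gives $\tr(S)\tr(T)=\tr(X)\tr(Y)$. All four traces are strictly positive, because a nonzero positive trace class operator has positive trace. Hence $r\defi\tr(S)/\tr(X)=\tr(Y)/\tr(T)>0$, and the two expressions for $r$ in the statement agree.

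Next, fix an orthonormal basis $\{\psi_k\}$ of $\jj$. For arbitrary $\phi,\phi^\prime\in\hh$, sum the identity above over $\psi=\psi^\prime=\psi_k$. This sum converges absolutely because $T,Y$ are trace class, and it yields
\[
\langle\phi,S\phi^\prime\rangle\tr(T)=\langle\phi,X\phi^\prime\rangle\tr(Y).
\]
Therefore $S=(\tr(Y)/\tr(T))\tre X=r\tre X$. Symmetrically, summing over a basis of $\hh$ gives $T\tr(S)=Y\tr(X)$, so $T=r^{-1}\tre Y$. Equivalently, one can simply substitute $S=r\tre X$ back into the equation and cancel, using $X\neq0$.

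Finally, the density operator case is the special case in which all traces equal $1$. Then $r=1$, so $\rho=\tau$ and $\sigma=\omega$; the converse implication is trivial.

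No step is a real obstacle. The only point needing care is the justification of the partial-trace summation, namely that $\sum_k\langle\psi_k,T\psi_k\rangle=\tr(T)$ holds in any orthonormal basis, which is exactly the definition of the trace.
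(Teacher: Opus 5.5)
Your proof is correct. The paper lists this among the facts stated without proof. Your argument (compare traces, then recover each factor by contracting against an orthonormal basis of the other space) is the same partial-trace mechanism the paper uses later, in the inverse map \eqref{mapc} of Lemma~\ref{lemtop}, where $\rho=\trjj(\rho\otimes\sigma)$ and $\sigma=\trhh(\rho\otimes\sigma)$. You also rightly read the density-operator clause with $\rho,\tau\in\sta$ and $\sigma,\omega\in\stajj$, which is what the slightly mistyped statement intends.
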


As previously noted, since, for $S\in\trc$ and $T\in\trcjj$, the bounded operator $S\otimes T$
is contained in the trace class $\trchj$ of the Hilbert space $\hh\otimes\jj$, then $\trc\altp\trcjj$
is (identified with) a linear subspace of $\trchj$, and then can be endowed with the norm $\ttrnor$ of this
space, i.e., $\|{\textstyle \sum_k S_k\otimes T_k}\ttrn\defi\tr(|{\textstyle \sum_k S_k\otimes T_k}|)$;
namely, with the \emph{natural} or \emph{spatial} norm of the algebraic tensor product $\thatj$ (these terms
refer to the fact that this is a restriction of the norm $\ttrnor$ of the \emph{natural ambient space}
$\trchj$ containing $\thatj$). Clearly, for $S\in\trcsa$ and $T\in\trcsajj$, $S\otimes T\in\trchjsa$,
and $\thsatjs$ is a linear subspace of $\trchjsa$.

\subsection{The natural tensor product of trace classes}
\label{natens}

By Fact~\ref{factrc}, the norm $\ttrnor$ of the Banach space $\trchj$ is a \emph{cross norm},
and the algebraic tensor product $\trc\altp\trcjj$ can be endowed with the natural
restriction of this norm (the natural, or spatial, norm); i.e.,
$\|S\otimes T\ttrn = \|S\trn \tre \|T\trn$, for all $S\in\trc$ and $T\in\trcjj$. We now
consider the Banach space completion of the normed space $(\thatj,\ttrnor)$; namely, we put
\begin{equation} \label{natp}
\trc\otimes\trcjj\defi\clispa\{S\otimes T\colon S\in\trc,\ T\in\trcjj\}
=\clttrnor\big(\trc\altp\trcjj\big) \fin ,
\end{equation}
where $\clispa\equiv\clispat\equiv\clispattrn$ ($\ttrnor$-closed linear span). Clearly,
$\trc\otimes\trcjj$ is a Banach subspace of the the \emph{bipartite trace class} $\trchj$,
i.e., with the trace class of the bipartite Hilbert space $\hhjj$ (but also see
Theorem~\ref{pronatp} below, for a more precise characterization). Analogously, we can define
the following subspace of the real Banach space $\trchjsa$:
\begin{equation} \label{rnatp}
\trcsa\otimes\trcsajj\defi\clispar\{S\otimes T\colon S\in\trcsa,\ T\in\trcsajj\}
=\clttrnor\big(\trcsa\altp\trcsajj\big) \fin .
\end{equation}

\begin{definition}
We call the Banach space $(\trc\otimes\trcjj,\ttrnor)$ defined by~\eqref{natp} the \emph{natural}
--- or \emph{spatial} --- tensor product of the trace classes $\trc$ and $\trcjj$. Analogously,
$\trcsa\otimes\trcsajj$, endowed with (the restriction of) the trace norm $\ttrnor$, is called
the \emph{natural} (or \emph{spatial}) tensor product of $\trcsa$ and $\trcsajj$.
\end{definition}

Given any nonempty subset $\subhj$ of $\trc\otimes\trcjj$, coherently with the previously introduced
notation, we denote by $\clispa(\subhj)\equiv\clispat(\subhj)$ the $\ttrnor$-closed linear span of
$\subhj$; i.e., the closed subspace $\clttrnor(\lispa(\subhj))$ of $\trc\otimes\trcjj$.

Note that the application
\begin{equation} \label{appnb}
\nb\colon\trc\times\trcjj\ni (S,T) \mapsto S\otimes T\in\trchj
\end{equation}
is a bounded bilinear map --- i.e., $\nb\in\lthtjthj\equiv\bo(\trc,\trcjj;\trchj)$ --- the
\emph{natural bilinear map} on $\trc\times\trcjj$; precisely, by relation~\eqref{crono},
$\nb$ is a bilinear isometry --- $\|\nb(S,T)\ttrn=\|S\otimes T\ttrn=\|S\trn\tre\|T\trn$,
for all $(S,T)\in\trc\times\trcjj$, so that
\begin{equation}
\|\nb\norb\defi\sup_{\substack{0\neq S\in\trc \\ 0\neq T\in\trcjj}}
\frac{\|\nb(S,T)\ttrn}{\|S\trn\tre\|T\trn}
=1 \fin .
\end{equation}

\begin{notation} \label{notsit}
Given any pair $\subh$, $\subj$ of nonempty subsets of $\trc$ and $\trcjj$, respectively, in
the following we set
\begin{equation}
\nb(\subh,\subj)\defi\{\nb(S,T)=S\otimes T\colon S\in\subh, \ T\in\subj\} \fin ,
\end{equation}
whereas we will \emph{refrain from} denoting this set by $\subh\otimes\subj$ (see Notation~\ref{notensa} below).
\end{notation}

\begin{remark}
Recall that by a well-known result --- see, e.g., Theorem~{6.2} of~\cite{Kubrusly} --- the boundedness
of a bilinear map $\bim\colon\bsb\times\bsc\rightarrow\bs$, where $\bsb$, $\bsc$, $\bs$ are complex
Banach spaces, is equivalent to its \emph{joint} continuity wrt the metric
\begin{equation}
\df((x,w),(y,z))\defi\max\{\|x-y\|,\|w-z\|\} \fin , \quad x,y\in\bsb \fin ,
\ w,z\in\bsc
\end{equation}
--- i.e., wrt the product topology --- or, also, to is \emph{separate} continuity. Therefore, the
natural bilinear map on $\trc\times\trcjj$ is continuous wrt the product topology; i.e., wrt the metric
\begin{equation}
\trdd((X,W),(Y,Z))\defi\max\{\|X-Y\trn,\|W-Z\trn\} \fin , \quad X,Y\in\trc  \fin ,
\ W,Z\in\trcjj \fin .
\end{equation}
\end{remark}

\begin{notation} \label{notensa}
Let $\subh$, $\subj$ (nonempty) \emph{subsets} of $\trc$ and $\trcjj$, respectively. We put
\begin{equation}
\subh\altp\subj\defi\co(\nb(\subh,\subj)) \fin , \quad
\subh\otimes\subj\defi\clco(\nb(\subh,\subj))=\clttrnor\big(\subh\altp\subj\big) \fin ;
\end{equation}
i.e., $\subh\otimes\subj$ is the $\ttrnor$-closed convex hull
$\clco(\nb(\subh,\subj))\equiv\clcon(\nb(\subh,\subj))$ of the set $\nb(\subh,\subj)$.
\end{notation}

Privileging the \emph{convex} structures wrt to the linear ones in Notation~\ref{notensa} is essentially
motivated by the applications we have in mind. However, it is worth observing that, in the case where
$\subh=\trc$ and $\subj=\trcjj$, Notation~\ref{notensa} is consistent with our previous definition
of the algebraic tensor product $\trc\altp\trcjj$ and of the natural tensor product $\trc\otimes\trcjj$
as $\lispa\big(\nb(\trc,\trcjj)\big)$ and $\clispa\big(\nb(\trc,\trcjj)\big)$, respectively; see
Corollary~\ref{corlispa} below.

\begin{proposition} \label{procoh}
For every pair $\subh$, $\subj$ of (nonempty) subsets of $\trc$ and $\trcjj$, respectively,
\begin{equation} \label{renb}
\lispa(\nb(\subh,\subj))=\lispa\big(\nb(\lispa(\subh),\lispa(\subj))\big) \fin ,
\end{equation}
\begin{equation} \label{renb-bis}
\lispa(\nb(\subh,\subj))=\co\big(\nb(\lispa(\subh),\lispa(\subj))\big)
\ifed\lispa(\subh)\altp\lispa(\subj)
\end{equation}
and
\begin{equation} \label{renb-tris}
\subh\altp\subj\defi\co(\nb(\subh,\subj))=\co\big(\nb(\co(\subh),\co(\subj))\big)
\ifed\co(\subh)\altp\co(\subj) \fin ,
\end{equation}
so that
\begin{equation} \label{recli}
\clispa(\nb(\subh,\subj))=\clispa\big(\nb(\lispa(\subh),\lispa(\subj))\big) \fin ,
\end{equation}
\begin{equation} \label{relisco}
\clispa(\nb(\subh,\subj))=\clco\big(\nb(\lispa(\subh),\lispa(\subj))\big)
\ifed\lispa(\subh)\otimes\lispa(\subj)
\end{equation}
and
\begin{align}
\subh\otimes\subj\defi\clco(\nb(\subh,\subj))
& =
\clttrnor\big(\co\big(\nb(\co(\subh),\co(\subj))\big)\big)
\nonumber \\ \label{relsua}
& =
\clco\big(\nb(\co(\subh),\co(\subj))\big)\ifed\co(\subh)\otimes\co(\subj) \fin .
\end{align}
Moreover, we have that
\begin{equation} \label{relsub}
\subh\otimes\subj\defi\clco(\nb(\subh,\subj))=\clco\big(\nb(\clco(\subh),\clco(\subj))\big)
\ifed\clco(\subh)\otimes\clco(\subj)
\end{equation}
--- where $\clco(\subh)=\cltrnor(\co(\subh))$, $\clco(\subj)=\cltrnor(\co(\subj))$ ---
and
\begin{equation} \label{relsuc}
\clispa\big(\nb(\subh,\subj)\big)=\clispa\big(\nb(\clco(\subh),\clco(\subj))\big)
=\clispa(\subh\otimes\subj) \fin .
\end{equation}
\end{proposition}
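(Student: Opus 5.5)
The plan is to use only the bilinearity and continuity of the natural bilinear map $\nb$, together with elementary facts about spans and convex hulls. First I prove the algebraic identities. For \eqref{renb}, the inclusion $\subseteq$ is trivial. Conversely, an element $\nb(\sum_i a_i S_i,\sum_j b_j T_j)$ with $S_i\in\subh$, $T_j\in\subj$ equals $\sum_{ij}a_ib_j\tre S_i\otimes T_j$ by bilinearity, so it already lies in $\lispa(\nb(\subh,\subj))$; taking spans gives the reverse inclusion. For \eqref{renb-bis}, the set $\nb(\lispa(\subh),\lispa(\subj))$ is closed under multiplication by complex scalars, since $c\,(S\otimes T)=(cS)\otimes T$. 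Hence its convex hull coincides with its linear span: a finite linear combination $\sum c_k x_k$ can be rewritten as $\sum (1/n)(n c_k x_k)$, and each $n c_k x_k$ stays in the set. Combining this with \eqref{renb} gives \eqref{renb-bis}. For \eqref{renb-tris}, the inclusion $\subseteq$ is again trivial. For the converse, bilinearity gives
\[
\Big(\sum_i p_iS_i\Big)\otimes\Big(\sum_j q_jT_j\Big)=\sum_{ij}p_iq_j\tre S_i\otimes T_j ,
\]
and this is a convex combination, because $p_iq_j\ge0$ and $\sum_{ij}p_iq_j=1$. So $\nb(\co(\subh),\co(\subj))\subset\co(\nb(\subh,\subj))$, and taking convex hulls finishes the argument.

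Next I pass to closures. Relations \eqref{recli}, \eqref{relisco} and \eqref{relsua} follow by applying $\clttrnor$ to \eqref{renb}, \eqref{renb-bis} and \eqref{renb-tris}. Here I use $\clco(\cdot)=\clttrnor(\co(\cdot))$ and $\clispa(\cdot)=\clttrnor(\lispa(\cdot))$.

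The main step is \eqref{relsub}. Its inclusion $\subseteq$ is trivial. For the converse, I take $S\in\clco(\subh)=\cltrnor(\co(\subh))$ and $T\in\clco(\subj)$. I choose sequences $S_n\in\co(\subh)$ with $S_n\to S$ and $T_n\in\co(\subj)$ with $T_n\to T$ in trace norm. The estimate
\[
\|S\otimes T-S_n\otimes T_n\ttrn\le\|S-S_n\trn\|T\trn+\|S_n\trn\|T-T_n\trn ,
\]
which comes from the cross-norm property \eqref{crono}, shows that $S_n\otimes T_n\to S\otimes T$. Each $S_n\otimes T_n$ lies in $\nb(\co(\subh),\co(\subj))\subset\co(\nb(\subh,\subj))$ by \eqref{renb-tris}. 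Therefore
\[
\nb(\clco(\subh),\clco(\subj))\subset\clco(\nb(\subh,\subj)),
\]
and taking the closed convex hull of both sides gives \eqref{relsub}. This continuity argument is the only non-algebraic point, and the cross-norm estimate makes it straightforward.

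Finally, \eqref{relsuc} follows from \eqref{relsub} together with the general identity $\clispa(\clco(\cdot))=\clispa(\cdot)$ recalled in Subsection~\ref{lin-con}. Indeed,
\[
\clispa(\subh\otimes\subj)=\clispa\big(\clco(\nb(\subh,\subj))\big)=\clispa(\nb(\subh,\subj)).
\]
Applying the same identity to \eqref{relsub} also gives $\clispa(\nb(\subh,\subj))=\clispa(\nb(\clco(\subh),\clco(\subj)))$.
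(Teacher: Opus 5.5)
Your proof is correct. Bilinearity of $\nb$ gives the algebraic identities \eqref{renb}--\eqref{renb-tris}, and taking $\ttrnor$-closures gives \eqref{recli}--\eqref{relsua}. The cross-norm estimate (i.e., joint continuity of $\nb$) gives the nontrivial inclusion in \eqref{relsub}, and $\clispa(\clco(\cdot))=\clispa(\cdot)$ yields \eqref{relsuc}.

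The paper omits this proof as routine. Your argument is exactly the one its preparatory material is set up for: the continuity remark on $\nb$ and the span/convex-hull identities of Subsection~\ref{lin-con}.
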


\begin{corollary} \label{corlispa}
Let $\subh$, $\subj$ be linear subspaces of $\trc$ and $\trcjj$, respectively. Then,
we have:
\begin{equation}
\subh\otimes\subj\defi\clco\big(\nb(\subh,\subj)\big)=\clispa(\nb(\subh,\subj)) .
\end{equation}
\end{corollary}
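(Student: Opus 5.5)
This follows directly from Proposition~\ref{procoh}, in particular relation~\eqref{relisco}. Since $\subh$ and $\subj$ are linear subspaces, $\lispa(\subh)=\subh$ and $\lispa(\subj)=\subj$. Substituting into~\eqref{relisco} gives
\begin{equation*}
\clispa(\nb(\subh,\subj))=\clco\big(\nb(\lispa(\subh),\lispa(\subj))\big)=\clco\big(\nb(\subh,\subj)\big)\ifed\subh\otimes\subj \fin ,
\end{equation*}
which is the claim.

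If one prefers not to invoke~\eqref{relisco} as a black box, the argument behind it is short. The key step is the algebraic identity $\lispa(\nb(\subh,\subj))=\co(\nb(\subh,\subj))$ for linear subspaces, i.e.~\eqref{renb-bis} with $\lispa(\subh)=\subh$ and $\lispa(\subj)=\subj$. The inclusion $\supset$ is trivial. For $\subset$, take a finite linear combination $\sum_{k=1}^n c_k\, S_k\otimes T_k$ with $S_k\in\subh$, $T_k\in\subj$. Rewrite it as $\sum_{k=1}^n \frac{1}{n}\,(n c_k S_k)\otimes T_k$. Each $n c_k S_k$ lies in $\subh$ because $\subh$ is a subspace, so the sum is a convex combination of elements of $\nb(\subh,\subj)$, using bilinearity of $\nb$.

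Taking $\ttrnor$-closures of both sides then gives the statement. This uses $\clispa(\cdot)=\cl(\lispa(\cdot))$ and $\clco(\cdot)=\cl(\co(\cdot))$, both recalled in Subsection~\ref{lin-con}.

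There is no real obstacle. The only point requiring care is the rescaling trick that absorbs arbitrary complex coefficients into the first factor; it relies on $\subh$ being closed under scalar multiplication, which is exactly where the hypothesis that $\subh$ and $\subj$ are linear subspaces is used.
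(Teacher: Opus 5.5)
Your proposal is correct and follows the paper's route. The corollary is an immediate consequence of relation~\eqref{relisco} in Proposition~\ref{procoh}, using $\lispa(\subh)=\subh$ and $\lispa(\subj)=\subj$, and your unpacking of~\eqref{renb-bis} through the rescaling $c_k\, S_k\otimes T_k=\frac{1}{n}(n c_k S_k)\otimes T_k$ is also sound.
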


\begin{corollary} \label{proten}
Every element of $\trc\altp\trcjj$ can be expressed as a  (finite) linear combination of states of the form
$\rho\otimes\sigma$; i.e.,
\begin{equation} \label{relispa}
\trc\altp\trcjj=\lispa\big(\nb(\sta,\stajj)\big)=
\lispa\{\rho\otimes\sigma\colon \rho\in\sta,\ \sigma\in\stajj\}.
\end{equation}
Therefore, we have:
\begin{equation} \label{reclispa}
\trc\otimes\trcjj=\clispa\big(\nb(\sta,\stajj)\big) \fin .
\end{equation}
\end{corollary}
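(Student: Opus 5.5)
The plan is to reduce the statement to Proposition~\ref{procoh} with $\subh=\sta$ and $\subj=\stajj$, together with the linear-algebraic fact that the density operators span the trace class. The key observation is that $\lispa(\sta)=\trc$ and $\lispa(\stajj)=\trcjj$. This follows from Fact~\ref{prolico}. Every $S\in\trc$ is a combination $S=S_1-S_2+\ima\sei(S_3-S_4)$ with $S_1,\ldots,S_4\in\trcp$. Each nonzero $S_i$ can be written as $S_i=\tr(S_i)\,\rho_i$ with $\rho_i\defi S_i/\tr(S_i)\in\sta$, while vanishing terms are simply dropped. Hence $S$ is a complex linear combination of at most four density operators, and the same holds for $\trcjj$.

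With this in hand, I would apply relation~\eqref{renb} of Proposition~\ref{procoh}:
\begin{equation*}
\lispa\big(\nb(\sta,\stajj)\big)=\lispa\big(\nb(\lispa(\sta),\lispa(\stajj))\big)=\lispa\big(\nb(\trc,\trcjj)\big)\fin .
\end{equation*}
By definition~\eqref{lisp}, the right-hand side is exactly $\thatj$, which proves~\eqref{relispa}. Alternatively, one can argue directly by bilinearity of $\nb$. For an elementary tensor $S\otimes T$, expand $S=\sum_i c_i\rho_i$ and $T=\sum_j d_j\sigma_j$; then $S\otimes T=\sum_{i,j}c_i d_j\,\rho_i\otimes\sigma_j$. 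Since every element of $\thatj$ is a finite sum of elementary tensors, the inclusion $\thatj\subset\lispa(\nb(\sta,\stajj))$ follows, and the reverse inclusion is trivial.

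For~\eqref{reclispa}, I would take $\ttrnor$-closures of both sides of~\eqref{relispa}. The closure of the left-hand side is $\trc\otimes\trcjj$ by definition~\eqref{natp}, and the closure of the right-hand side is $\clispa(\nb(\sta,\stajj))$, since $\clispa(\cdot)=\cl(\lispa(\cdot))$. One could equally invoke~\eqref{recli} directly.

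There is no real obstacle here. The only point requiring care is handling zero components in Fact~\ref{prolico} so that the normalization $S_i/\tr(S_i)$ makes sense. This is resolved either by omitting those terms, or by noting that $0=\rho-\rho$ for any fixed state $\rho$.
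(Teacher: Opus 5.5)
Your proposal is correct and matches the paper's intended argument: the corollary is stated as an immediate consequence of Proposition~\ref{procoh}, obtained by feeding $\lispa(\sta)=\trc$ and $\lispa(\stajj)=\trcjj$ (from Fact~\ref{prolico}, after normalizing the nonzero positive parts) into relations~\eqref{renb} and~\eqref{recli}. Your handling of the vanishing components and of the closure step via definition~\eqref{natp} covers everything that is needed.
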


\begin{corollary} \label{corsta}
Since $\sta=\clco(\pusta)$ and $\stajj=\clco(\pustajj)$, then
\begin{equation} \label{reldepu}
\sta\otimes\stajj\defi\clco\big(\nb(\sta,\stajj)\big)=
\clco\big(\nb(\pusta,\pustajj)\big)\ifed\pusta\otimes\pustajj
\end{equation}
and
\begin{equation} \label{charnat}
\trc\otimes\trcjj=\clispa\big(\nb(\sta,\stajj)\big)=
\clispa\big(\nb(\pusta,\pustajj)\big) \fin ,
\end{equation}
\begin{equation} \label{charnat-bis}
\trc\otimes\trcjj=\clispa(\sta\otimes\stajj)
=\clispa(\pusta\otimes\pustajj) \fin .
\end{equation}
\end{corollary}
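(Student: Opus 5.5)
The statement to prove is Corollary~\ref{corsta}. The plan is to derive it directly from Proposition~\ref{procoh} and Corollary~\ref{proten}, with $\subh$, $\subj$ taken to be sets of states.

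First, for~\eqref{reldepu}, take $\subh=\pusta$ and $\subj=\pustajj$ in relation~\eqref{relsub} of Proposition~\ref{procoh}. This gives
\begin{equation}
\clco\big(\nb(\pusta,\pustajj)\big)=\clco\big(\nb(\clco(\pusta),\clco(\pustajj))\big) \fin .
\end{equation}
By Fact~\ref{propro}, $\clco(\pusta)=\sta$ and $\clco(\pustajj)=\stajj$, so the right-hand side is $\clco(\nb(\sta,\stajj))$. By Notation~\ref{notensa}, this is exactly the identity $\sta\otimes\stajj=\pusta\otimes\pustajj$.

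Second, for~\eqref{charnat}, the first equality is relation~\eqref{reclispa} of Corollary~\ref{proten}. For the second equality, apply relation~\eqref{relsuc} of Proposition~\ref{procoh}, again with $\subh=\pusta$ and $\subj=\pustajj$. Together with Fact~\ref{propro}, this gives $\clispa(\nb(\pusta,\pustajj))=\clispa(\nb(\sta,\stajj))$.

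Finally, for~\eqref{charnat-bis}, the last equality in~\eqref{relsuc} says $\clispa(\nb(\subh,\subj))=\clispa(\subh\otimes\subj)$. Using it once with $(\subh,\subj)=(\sta,\stajj)$ and once with $(\subh,\subj)=(\pusta,\pustajj)$ turns the two expressions in~\eqref{charnat} into $\clispa(\sta\otimes\stajj)$ and $\clispa(\pusta\otimes\pustajj)$, respectively. (Alternatively, the two agree directly by~\eqref{reldepu}.)

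There is no real obstacle, since all the work is done by Proposition~\ref{procoh}. The only point needing care is that the closures in Fact~\ref{propro} are taken in the trace norm of $\trc$ and of $\trcjj$. These are exactly the closures $\cltrnor$ appearing in~\eqref{relsub}, so the substitution is legitimate.
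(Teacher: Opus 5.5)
Your proof is correct and is essentially the paper's argument. The paper gives no separate proof: it treats the corollary as an immediate consequence of relations~\eqref{relsub} and~\eqref{relsuc} of Proposition~\ref{procoh}, combined with Fact~\ref{propro} and relation~\eqref{reclispa} of Corollary~\ref{proten}, which is exactly the substitution you carry out.
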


We are now almost ready to derive the --- previously announced --- more precise characterization of
the natural tensor product $\trc\otimes\trcjj$. We only need to note a further technical fact.

\begin{lemma} \label{lefr}
Every operator of the form $\hvac\defi\vac$, for some vectors $\va,\vc\in\hhjj$, belongs to
the natural tensor product $\trc\otimes\trcjj$. Precisely, we have that
\begin{equation} \label{inclis}
\hvac\in\clispa\Big\{\hep\otimes\hcp\colon \eta,\phi\in\hh,\ \chi,\psi\in\jj\Big\}\subset
\trc\otimes\trcjj \fin .
\end{equation}
Moreover, for the linear space $\frhj=\lispa\big\{\hvac\colon\va,\vc\in\hhjj\big\}$ of all
\emph{finite rank operators} on $\hhjj$, we have:
\begin{equation} \label{inclfr}
\clttrnor(\frhj)=\clispa\Big\{\hep\otimes\hcp\colon \eta,\phi\in\hh,\ \chi,\psi\in\jj\Big\} \fin .
\end{equation}
In particular, putting $\hva\equiv\hvaa\defi{\vaa}$, $\va\in\hhjj$, we have that
\begin{equation} \label{inclis-bis}
\hva\in\clispar\Big\{\hph\otimes\hp\colon \phi\in\hh,\ \psi\in\jj\Big\}\subset
\trcsa\otimes\trcsajj \fin ,
\end{equation}
with $\hph\equiv\hpph$, $\hp\equiv\hpp$, and
\begin{equation} \label{inclfr-bis}
\clispar(\pustahj)=\clispar\Big\{\hph\otimes\hp\colon \phi\in\hh,\ \psi\in\jj\Big\} \fin ,
\end{equation}
where $\pustahj$ is the set of all \emph{rank-one projections} on $\hhjj$.
\end{lemma}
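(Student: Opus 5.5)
The plan is to expand $\va$ and $\vc$ in product vectors and then pass the resulting expansion of $\hvac$ to the $\ttrnor$-limit. The basic identity is that, for $\eta,\phi\in\hh$ and $\chi,\psi\in\jj$,
\begin{equation*}
|\eta\otimes\chi\rangle\langle\phi\otimes\psi|=\hep\otimes\hcp\fin .
\end{equation*}
To check it, apply both sides to an elementary vector $\alpha\otimes\beta$; both give $\langle\phi,\alpha\rangle\langle\psi,\beta\rangle\,\eta\otimes\chi$. Both sides are bounded, and $\hh\altp\jj$ is dense in $\hhjj$, so the two operators coincide. The product vectors are used here rather than the Schmidt vectors, because the rank-one operators built from them are exactly elementary tensors of rank-one operators.

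Next, expand $\va$ and $\vc$ via Fact~\ref{schmdec}: $\va=\sum_m a_m\,\eta_m\otimes\chi_m$ and $\vc=\sum_n c_n\,\phi_n\otimes\psi_n$, both norm-convergent. Let $\va_M,\vc_N$ denote the partial sums. By sesquilinearity, $|\va_M\rangle\langle\vc_N|$ is a finite linear combination of operators of the form in the display above, so it lies in $\lispa\{\hep\otimes\hcp\}$. To pass to the limit, use the standard estimate $\||x\rangle\langle y|\ttrn=\|x\|\,\|y\|$. This gives
\begin{equation*}
\|\hvac-|\va_M\rangle\langle\vc_N|\ttrn\le\|\va-\va_M\|\,\|\vc\|+\|\va_M\|\,\|\vc-\vc_N\|\to0\fin ,
\end{equation*}
which proves the membership in~\eqref{inclis}. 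The inclusion of that closed span in $\trc\otimes\trcjj$ holds because each $\hep\otimes\hcp$ belongs to $\nb(\trc,\trcjj)$, and the right-hand side of~\eqref{natp} is a closed linear span containing these.

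For~\eqref{inclfr}, the inclusion ``$\subset$'' follows from the first part, by linearity and closedness. For ``$\supset$'', each generator $\hep\otimes\hcp$ equals $|\eta\otimes\chi\rangle\langle\phi\otimes\psi|$ by the display above, so it lies in $\frhj$. Hence the span of the generators lies in $\frhj$, and its closure lies in $\clttrnor(\frhj)$.

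The selfadjoint statements~\eqref{inclis-bis} and~\eqref{inclfr-bis} are where the main (mild) obstacle lies: one must produce \emph{real} combinations of $\hph\otimes\hp$. First, $\hva$ is the $\ttrnor$-limit of $|\va_M\rangle\langle\va_M|$, as above. Each term $|\eta_m\otimes\chi_m\rangle\langle\eta_n\otimes\chi_n|$ is combined with its adjoint, with coefficients $a_ma_n$ real. Then apply the polarization identity to obtain a real combination of rank-one projections onto product vectors, for instance
\begin{equation*}
|x\rangle\langle y|+|y\rangle\langle x|=\tfrac12\big(|x+y\rangle\langle x+y|-|x-y\rangle\langle x-y|\big)\fin ,
\end{equation*}
taking $x=\eta_m\otimes\chi_m$ and $y=\eta_n\otimes\chi_n$. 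However, $x\pm y$ is not a product vector. The fix is to polarize \emph{separately in each factor}, using
\begin{equation*}
|\eta\rangle\langle\phi|=\tfrac14\sum_{k=0}^{3}\ima^k\,|\phi+\ima^k\eta\rangle\langle\phi+\ima^k\eta|\fin ,
\end{equation*}
and the analogous identity in $\jj$. This writes $\hep\otimes\hcp$ as a complex combination of terms $\hph\otimes\hp$. Summing a term with its adjoint, and using that each $\hph\otimes\hp$ is selfadjoint, the coefficients become real parts, so the result is a real combination. This proves~\eqref{inclis-bis}. For~\eqref{inclfr-bis}, ``$\supset$'' holds since $\hph\otimes\hp$ is the projection onto $\phi\otimes\psi$ (after normalization), and ``$\subset$'' follows from~\eqref{inclis-bis}.
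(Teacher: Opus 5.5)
Your proof is correct. The paper omits its own proof of this lemma, so there is no official argument to compare against, but your route is the natural one and matches how the paper works elsewhere. It uses the identity $|\eta\otimes\chi\rangle\langle\phi\otimes\psi|=\hep\otimes\hcp$, a Schmidt expansion, and $\ttrnor$-continuity, exactly as in~\eqref{xexphva} of Example~\ref{xexchapu} and in the proof of Theorem~\ref{theine}.

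One displayed formula is wrong as written: your polarization identity. Expanding $|\phi+c\,\eta\rangle\langle\phi+c\,\eta|=|\phi\rangle\langle\phi|+c\,|\eta\rangle\langle\phi|+\overline{c}\,|\phi\rangle\langle\eta|+|c|^2\,|\eta\rangle\langle\eta|$ and taking $c=\ima^k$ with weights $\ima^k$, the coefficient of $|\eta\rangle\langle\phi|$ is $\tfrac14\sum_k\ima^{2k}=0$. The coefficient of $|\phi\rangle\langle\eta|$ is $1$, so your sum $\tfrac14\sum_{k=0}^{3}\ima^k|\phi+\ima^k\eta\rangle\langle\phi+\ima^k\eta|$ equals $|\phi\rangle\langle\eta|$. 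The correct form is $|\eta\rangle\langle\phi|=\tfrac14\sum_{k=0}^{3}\ima^k\,|\eta+\ima^k\phi\rangle\langle\eta+\ima^k\phi|$. This does not affect your argument, since all you use is that every rank-one operator is a complex combination of four operators of the form $\hph$.

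Minor points:
\begin{itemize}
\item Handle $\va=0$ or $\vc=0$ separately, since Fact~\ref{schmdec} concerns nonzero vectors.
\item State the inclusion $\clispar\{\hph\otimes\hp\}\subset\trcsa\otimes\trcsajj$ in~\eqref{inclis-bis} explicitly; it follows from~\eqref{rnatp} just as in the complex case.
\item Drop the remark contrasting ``product vectors'' with ``Schmidt vectors'', since the Schmidt terms $\eta_m\otimes\chi_m$ are themselves product vectors.
\end{itemize}
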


\begin{theorem} \label{pronatp}
For every pair of (separable, complex) Hilbert spaces $\hh$ and $\jj$, the natural tensor
product $\trc\otimes\trcjj$ coincides with the whole bipartite trace class $\trchj$. Indeed,
we have that
\begin{align}
\trc\otimes\trcjj
& =
\clispa\{\rho\otimes\sigma\colon \rho\in\sta,\ \sigma\in\stajj\}
\nonumber \\
& =
\clispa\{\pi\otimes\vp\colon \pi\in\pusta,\ \vp\in\pustajj\}
\nonumber \\ \label{idts}
& =
\clispa\Big\{\hep\otimes\hcp\colon \eta,\phi\in\hh,\ \chi,\psi\in\jj\Big\}
= \trchj \fin ,
\end{align}
where $\hep\equiv\ep$. Analogously, we have:
\begin{align}
\thsotjs
& =
\clispar\{\pi\otimes\vp\colon \pi\in\pusta,\ \vp\in\pustajj\}
\nonumber \\  \label{idts-bis}
& =
\clispar\{\rho\otimes\sigma\colon\rho\in\sta,\ \sigma\in\stajj\}
=\trchjsa \fin .
\end{align}
\end{theorem}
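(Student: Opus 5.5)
The first three equalities in~\eqref{idts} are essentially already available. Corollary~\ref{proten} gives $\trc\otimes\trcjj=\clispa\{\rho\otimes\sigma\}$, and Corollary~\ref{corsta} gives the same space as the closed span of $\pi\otimes\vp$ with $\pi,\vp$ pure. For the equality with $\clispa\{\hep\otimes\hcp\}$, I will argue by two inclusions. First, each $\hep\otimes\hcp$ is an elementary tensor $S\otimes T$ with $S\in\trc$ and $T\in\trcjj$, so it lies in $\trc\otimes\trcjj$. Conversely, every $\pi\otimes\vp$ is of the form $\hph\otimes\hp$, a special case of $\hep\otimes\hcp$. Hence all four closed spans coincide.

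The substantive claim is the last equality, $\clispa\{\hep\otimes\hcp\}=\trchj$. I will obtain it from Lemma~\ref{lefr}, relation~\eqref{inclfr}, which identifies the left-hand side with $\clttrnor(\frhj)$, the trace-norm closure of the finite rank operators on $\hhjj$. It then suffices to recall that $\frhj$ is trace-norm dense in $\trchj$. This follows from the singular value decomposition (Fact~\ref{sivadea} and Remark~\ref{sivadeb}) applied on $\hhjj$. Given $X=\sum_k s_k\tre|\va_k\rangle\langle\vc_k|$ with $\sum_k s_k=\|X\ttrn<\infty$, the truncations are finite rank, and since each rank-one term $|\va_k\rangle\langle\vc_k|$ with unit vectors has trace norm $1$, they satisfy $\|X-\sum_{k\le N}s_k\tre|\va_k\rangle\langle\vc_k|\ttrn\le\sum_{k>N}s_k\to 0$. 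Therefore $\clttrnor(\frhj)=\trchj$, and the chain~\eqref{idts} closes.

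For the real version~\eqref{idts-bis}, I will run the same argument over $\erre$. The inclusions $\clispar\{\pi\otimes\vp\}\subset\clispar\{\rho\otimes\sigma\}\subset\thsotjs\subset\trchjsa$ are immediate: pure states are states, and $\rho\otimes\sigma$ is selfadjoint trace class. For the reverse direction I start from an arbitrary $X\in\trchjsa$. Its spectral decomposition $X=\sum_k x_k\tre\widehat{\va_k}$, with $x_k$ real, $\widehat{\va_k}$ rank-one projections and $\sum_k|x_k|=\|X\ttrn$, converges in trace norm, so $X\in\clispar(\pustahj)$. By~\eqref{inclfr-bis}, $\clispar(\pustahj)=\clispar\{\hph\otimes\hp\}$. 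Each $\hph\otimes\hp$ with $\phi,\psi\neq0$ is a positive multiple of a product $\pi\otimes\vp$ of pure states, so this space equals $\clispar\{\pi\otimes\vp\}$. This closes the real chain as well.

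The only step requiring real work is already packaged in Lemma~\ref{lefr}: approximating a rank-one operator $\vac$ on $\hhjj$ by linear combinations of $\hep\otimes\hcp$. If I had to supply it, I would expand $\va$ and $\vc$ in finite truncations of product orthonormal bases, or in Schmidt decompositions (Fact~\ref{schmdec}). I would then use the estimate $\||\va\rangle\langle\vc|-|\va'\rangle\langle\vc'|\ttrn\le\|\va-\va'\|\tre\|\vc\|+\|\va'\|\tre\|\vc-\vc'\|$, together with the identity $|\phi\otimes\psi\rangle\langle\eta\otimes\chi|=\widehat{\phi\eta}\otimes\widehat{\psi\chi}$, where $\widehat{\phi\eta}\equiv|\phi\rangle\langle\eta|$ as in the notation $\hep\equiv\ep$. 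Given the lemma, I do not expect any obstacle in the theorem itself.
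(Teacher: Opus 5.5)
Your proposal is correct and follows the paper's proof. Both arguments rest on Corollary~\ref{corsta}, the trace-norm density of finite-rank operators via the singular value decomposition, relations~\eqref{inclfr} and~\eqref{inclfr-bis} of Lemma~\ref{lefr}, and the spectral decomposition of selfadjoint trace class operators. The only minor difference is that you close the real chain~\eqref{idts-bis} by a cycle of inclusions. This spares you the paper's separate use of the spectral decomposition of density operators, together with continuity of the natural bilinear map, to handle the $\rho\otimes\sigma$ span.
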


\begin{proof}
The first two equalities in~\eqref{idts} hold by relation~\eqref{charnat} in Corollary~\ref{corsta}.
Moreover, since --- as a consequence, say, of the \emph{singular value decomposition}
(Remark~\ref{sivadeb}) of a trace class operator ---
\begin{equation}
\trchj=\clttrnor(\frhj) \fin ,
\end{equation}
by relation~\eqref{inclfr} in Lemma~\ref{lefr} one must conclude that
\begin{equation}
\trchj=\clispa\big\{\hep\otimes\hcp\colon \eta,\phi\in\hh,\ \chi,\psi\in\jj\big\}
\subset\clttrnor(\trc\altp\trcjj)  \fin ,
\end{equation}
and hence $\clttrnor(\trc\altp\trcjj)=\trc\otimes\trcjj\subset\trchj\subset\clttrnor(\trc\altp\trcjj)$;
so that, actually,
\begin{equation}
\trchj=\clispa\big\{\hep\otimes\hcp\colon \eta,\phi\in\hh,\ \chi,\psi\in\jj\big\}=
\trc\otimes\trcjj \fin .
\end{equation}
Analogously, by the spectral decomposition of a selfadjoint trace class operator (that converges wrt
the trace norm) and by relation~\eqref{inclfr-bis}, we have:
\begin{align}
\trchjsa=\clispar(\pustahj)
& =
\clispar\{\pi\otimes\vp\colon \pi\in\pusta,\ \vp\in\pustajj\}
\nonumber \\ \label{idts-tris}
& \subset
\clttrnor(\trcsa\altp\trcsajj)=\thsotjs \fin .
\end{align}
Then, since $\thsotjs\subset\trchjsa$, the inclusion in relation~\eqref{idts-tris} is actually
an equality. It only remains to prove that the second equality in~\eqref{idts-bis} holds too.
In fact, this equality is an immediate consequence of the relation
\begin{align}
\lispar\{\rho\otimes\sigma\colon\rho\in\sta,\ \sigma\in\stajj\}
& \subset
\clispar\{\pi\otimes\vp\colon \pi\in\pusta,\ \vp\in\pustajj\}
\nonumber \\
& \subset
\clispar\{\rho\otimes\sigma\colon\rho\in\sta,\ \sigma\in\stajj\} \fin ,
\end{align}
where the first inclusion holds by the spectral decomposition of a density operator and by the
continuity of the natural bilinear map~\eqref{appnb}.
\end{proof}

\begin{remark} \label{rebop}
The fact that $\thotj=\trchj$, for any dimension of $\hh$ and $\jj$, is nontrivial. Indeed, it is
known (see~\cite{Kadison}, Example~{11.1.5}, Example~{11.1.6} and Exercise~{11.5.7}) that the \emph{natural,
or spatial, tensor product} $\bhobj$ --- i.e., the Banach space completion of the algebraic tensor product
$\bhabj$ wrt the norm $\tnormi$ of $\bophj$, also called the \emph{tensor product of the (concrete, or
represented) $\cast$-algebras} $\bop$ and $\bopjj$ (see, e.g., Section~{11.1} of~\cite{Kadison}) ---
coincides with the whole Banach space $\bophj$ iff $\min\{\dim(\hh),\dim(\jj)\}<\infty$, and in such a case
we actually have that $\bop\altp\bopjj=\bhobj=\bophj$; if $\dim(\hh)=\dim(\jj)=\infty$, instead,
$\bhobj$ is a \emph{proper} Banach subspace of $\bophj$. In the case where $\dim(\hh)=\dim(\jj)=\infty$, to
obtain the full Banach space $\bophj$, one needs to consider the so-called \emph{von~Neumann tensor product} of
the `local' von~Neumann algebras $\bop$ and $\bopjj$ (see Definition~{10.34} of~\cite{Moretti}, and the subsequent
discussion; also see Section~{11.1} of~\cite{Kadison}); i.e., the closure of the algebraic tensor product $\bhabj$
wrt the strong (equivalently, the weak) operator topology of $\bophj$. If $\min\{\dim(\hh),\dim(\jj)\}<\infty$,
the spatial tensor product and the von~Neumann tensor product coincide (Exercise~{11.2.2} of~\cite{Kadison}).
\end{remark}

Now, let $\bs$ be a complex Banach space, and let $\Bim\colon\trc\otimes\trcjj\rightarrow\bs$
be a bounded linear map. Then,
\begin{equation} \label{nalin}
\bim\defi\Bim\circ\nb\colon\trc\times\trcjj\rightarrow\bs
\end{equation}
is a bounded bilinear map, the bilinear map \emph{induced by} $\Bim$.

\begin{definition} \label{defnatlin}
A bounded bilinear map $\bim\colon\trc\times\trcjj\rightarrow\bs$ is said to be \emph{naturally linearizable}
if is of the form~\eqref{nalin}, for some \emph{bounded} linear map $\Bim\colon\trc\otimes\trcjj\rightarrow\bs$.
\end{definition}

\begin{proposition} \label{probim}
Let $\bs$ be a complex Banach space, and let $\bim\colon\trc\times\trcjj\rightarrow\bs$
be a bounded bilinear map. Then, there exists a unique linear map
$\Bima\colon\trc\altp\trcjj\rightarrow\bs$ such that
\begin{equation} \label{unipro}
\bim (S,T)=\Bima(S\otimes T) \fin , \quad \forall\cinque S\in\trc \fin ,\  \forall\cinque T\in\trcjj \fin .
\end{equation}
The bounded bilinear map $\bim$ is naturally linearizable iff the linear map $\Bima\colon\thatj\rightarrow\bs$
is bounded, $\thatj$ being endowed with (the restriction of) the norm $\ttrnor$. Moreover, if $\bim$ is naturally
linearizable, then the bounded linear map $\Bim\colon\trc\otimes\trcjj\rightarrow\bs$ --- the so-called
\emph{natural linearization} of (or \emph{natural linear map} inducing) $\bim$ --- is unique (i.e.,
the unique bounded extension of $\Bima$) and $\|\bim\norps\le\|\Bim\norps$.
\end{proposition}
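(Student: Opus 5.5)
The plan is to use the universal property of the algebraic tensor product (Subsection~\ref{algtens}) to get $\Bima$, and then a density/extension argument for the equivalence and the uniqueness.

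\emph{Existence and uniqueness of $\Bima$.} The pair $\big(\thatj,(S,T)\mapsto S\otimes T\big)$ is a realization of the algebraic tensor product of $\trc$ and $\trcjj$, by Remark~\ref{reboplin} and Fact~\ref{facteplin}. The universal property, in the form given after Definition~\ref{defreatp}, therefore yields a unique linear map $\Bima\colon\thatj\rightarrow\bs$ with $\bim(S,T)=\Bima(S\otimes T)$ for all $S\in\trc$, $T\in\trcjj$. Uniqueness is also immediate: $\thatj$ is the linear span of the elementary tensors, so a linear map on it is determined by its values on them.

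\emph{The equivalence.} Suppose first that $\bim$ is naturally linearizable, so $\bim=\Bim\circ\nb$ with $\Bim$ bounded on $\trc\otimes\trcjj$. Then $\Bim|_{\thatj}$ is linear and agrees with $\bim$ on elementary tensors. By uniqueness, $\Bim|_{\thatj}=\Bima$, and hence $\Bima$ is bounded for $\ttrnor$.

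Conversely, suppose $\Bima$ is bounded on $(\thatj,\ttrnor)$. By definition~\eqref{natp}, $\thatj$ is $\ttrnor$-dense in $\trc\otimes\trcjj$, and $\bs$ is complete. So $\Bima$ extends uniquely to a bounded linear map $\Bim$ on $\trc\otimes\trcjj$ (which equals $\trchj$ by Theorem~\ref{pronatp}), with the same norm, and $\Bim\circ\nb=\bim$.

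\emph{Uniqueness of $\Bim$ and the norm bound.} Any bounded $\Bim'$ with $\Bim'\circ\nb=\bim$ restricts on $\thatj$ to $\Bima$, again by uniqueness. By continuity and density, $\Bim'=\Bim$, so $\Bim$ is the unique bounded extension of $\Bima$.

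For the norm inequality, $\nb$ is a bilinear isometry by~\eqref{crono}. Hence, for $\|S\trn,\|T\trn\le 1$,
\[
\|\bim(S,T)\|=\|\Bim(S\otimes T)\|\le\|\Bim\norps\,\|S\otimes T\ttrn\le\|\Bim\norps .
\]
Taking the supremum gives $\|\bim\norps\le\|\Bim\norps$.

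No step is a serious obstacle. The only point needing care is that the universal-property linear map is unique on $\thatj$ itself, which is exactly what ties $\Bim|_{\thatj}$ to $\Bima$.
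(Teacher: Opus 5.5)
Your proof is correct and follows the route the paper intends. The paper leaves this proof to the reader, but its Remark~\ref{natcan} uses exactly your identification $\Bim|_{\thatj}=\Bima$, obtained from uniqueness of the algebraic linearization. The remaining steps (existence of $\Bim$ from $\ttrnor$-density of $\thatj$ in $\trc\otimes\trcjj$ plus completeness of $\bs$, and the norm bound from the cross-norm property~\eqref{crono}) are the routine ones the paper has in mind.
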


\subsection{The projective tensor products}
\label{protenso}

We will denote by $\trc\prtp\trcjj$ the \emph{projective tensor product}~\cite{Grothendieck-book,Defant,Ryan,DFS,Kubrusly}
of the Banach spaces $\trc$ and $\trcjj$; i.e., the Banach space completion of the algebraic tensor product
$\trc\altp\trcjj$ wrt the norm
\begin{equation} \label{pronorm}
\|C\ptrn\defi\inf\big\{{\textstyle \sum_k \|S_k\trn\tre\|T_k\trn}\colon {\textstyle\sum_k
S_k\otimes T_k}=C \big\} , \quad C\in\trc\altp\trcjj \fin .
\end{equation}
Here, the infimum is taken over all (finite) decompositions of $C$ into elementary tensors. Note that,
by its definition, the norm~\eqref{pronorm} is the largest possible (sub-)cross norm on $\trc\altp\trcjj$,
and extends by continuity to a norm on $\trc\prtp\trcjj$ --- the so-called \emph{projective norm}
--- that will be still denoted by $\ptrnor$. Since $\|S\trn=\|S^\ast\trn$, for every $S\in\trc$,
it is clear that
\begin{equation} \label{reladj}
\|C\ptrn = \|C^\ast\ptrn \fin , \quad \forall\cinque C\in\trc\altp\trcjj \fin .
\end{equation}
This property actually extends to the full Banach space $\trc\prtp\trcjj$ (see Corollary~\ref{corcro} below).

\begin{remark} \label{tenhil}
As is well known, the Hilbert space tensor product $\hhjj$ is isomorphic to the Hilbert
space $\hsjth$ of all Hilbert-Schmidt operators from $\jj$ into $\hh$, and this isomorphism
can be implemented by extending --- by linearity, and, next, by continuity --- the mapping
\begin{equation} \label{canoma}
\phi\otimes\psi\mapsto\pjp \fin , \quad J\colon\jj\rightarrow\jj ,
\end{equation}
where $J$ is any \emph{complex conjugation} (an involutive antiunitary operator) on $\jj$.
Notice that the mapping~\eqref{canoma} admits a unique linear extension by the universal
property of the algebraic tensor product $\hh\altp\jj$, because the application
$\hh\times\jj\ni (\phi,\psi)\mapsto\pjp\in\hsjth$ is a bilinear map.
It is known~\cite{Grothendieck-book,Defant,Ryan,DFS,Kubrusly} --- also see~\cite{Schatten},
Theorem~{5.12} in Section~{10} of Chapter~{V} --- that, analogously, the Banach space
completion $\hh\prtp\jj$ of the algebraic tensor product $\hh\altp\jj$ wrt the projective norm
\begin{equation} \label{vpronorm}
\pnh{\va}\defi\inf\big\{{\textstyle \sum_k \|\phi_k\trn\tre \|\psi_k\trn}\colon
{\textstyle\sum_k \phi_k\otimes\psi_k}=\va \big\} , \quad \va\in\hh\altp\jj \fin ,
\end{equation}
is isomorphic to $\trjth$, the Banach space of trace class operators from $\jj$ into $\hh$.
Again, the isomorphism of Banach spaces $\hh\prtp\jj\cong\trjth$ can be implemented by extending
(by linearity, and, next, by continuity wrt the appropriate norms) the mapping~\eqref{canoma}.
\end{remark}

Analogously, we will denote by $\thsptjs$ the projective tensor product of the \emph{real}
Banach spaces $\trcsa$ and $\trcsajj$, i.e., the Banach space completion of $\thsatjs$
wrt the cross norm
\begin{equation} \label{defptrntr}
\ptrntr{C}\defi\inf\big\{{\textstyle \sum_k \|S_k\trn\tre\|T_k\trn}\colon
{\textstyle\sum_k S_k\otimes T_k}=C,\  S_k\otimes T_k\in\ethtjs \big\}  \fin ,
\end{equation}
for every $C\in\thsatjs$, which is the largest possible (sub-)cross norm on $\thsatjs$.
Here, recall that $\ethtjs\defi\big\{S\otimes T\colon S\in\trcsa,\ T\in\trcsajj\big\}$.
Clearly, since the infimum in~\eqref{defptrntr} is extended over all (finite) decompositions
of $C$ via \emph{selfadjoint} elementary tensors only, we have that
\begin{equation} \label{basine}
\|C\ptrn\le\ptrntr{C} \fin ,\quad \forall\cinque C\in\thsatjs \fin .
\end{equation}
We will call the norm $\ptrnortr$ on $\thsptjs$ --- obtained extending by continuity the norm~\eqref{defptrntr}
defined on the algebraic tensor product $\thsatjs$ --- the \emph{Hermitian projective norm}.

\begin{remark} \label{remajor}
By what previously observed, it follows easily that the cross norms $\ptrnor$ and $\ptrnortr$ majorize
any (sub-)cross norm on $\thptj$ and $\thsptjs$, respectively.
\end{remark}

It turns out that, actually, the norms $\ptrnor$ and $\ptrnortr$ are \emph{equivalent} on $\thsatjs$.

\begin{proposition} \label{pronorp}
For every trace class operator $C\in\trc\altp\trcjj$, $\|C\ttrn\le\|C\ptrn$. Moreover,
\begin{equation} \label{inenor}
\|C\ttrn\le\|C\ptrn\le\ptrntr{C}\le 2\tre\|C\ptrn \fin , \quad \forall\cinque C\in\thsatjs \fin .
\end{equation}
\end{proposition}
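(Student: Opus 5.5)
We prove the three inequalities in~\eqref{inenor} one at a time, each by working directly with finite decompositions into elementary tensors.

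\emph{First inequality.} Let $C\in\thatj$ and take any finite decomposition $C=\sum_k S_k\otimes T_k$. The triangle inequality for $\ttrnor$ together with the cross norm property~\eqref{crono} of Fact~\ref{factrc} give
\[
\|C\ttrn\le\sum_k\|S_k\otimes T_k\ttrn=\sum_k\|S_k\trn\tre\|T_k\trn \fin .
\]
Taking the infimum over all such decompositions yields $\|C\ttrn\le\|C\ptrn$. For $C\in\thsatjs$ this is the first inequality in~\eqref{inenor}.

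\emph{Second inequality.} This is relation~\eqref{basine}. Every decomposition into selfadjoint elementary tensors is admissible in~\eqref{pronorm}, so the infimum defining $\ptrntr{C}$ is taken over a smaller set than the one defining $\|C\ptrn$.

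\emph{Third inequality.} This is the only step with content. Let $C\in\thsatjs$ and $\epsilon>0$, and choose a complex decomposition $C=\sum_{k=1}^n S_k\otimes T_k$ with $\sum_k\|S_k\trn\tre\|T_k\trn<\|C\ptrn+\epsilon$. Since $C=C^\ast$ and $(S\otimes T)^\ast=S^\ast\otimes T^\ast$, we may symmetrize:
\[
C=\tfrac12\sum_k\big(S_k\otimes T_k+S_k^\ast\otimes T_k^\ast\big) \fin .
\]
Write $S_k=A_k+\ima B_k$ and $T_k=E_k+\ima F_k$, where $A_k=(S_k+S_k^\ast)/2$, $B_k=(S_k-S_k^\ast)/(2\ima)$, and similarly for $E_k,F_k$; all four belong to the selfadjoint trace classes. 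Expanding,
\[
S_k\otimes T_k+S_k^\ast\otimes T_k^\ast=2\big(A_k\otimes E_k-B_k\otimes F_k\big) \fin ,
\]
because the cross terms $\ima(A_k\otimes F_k+B_k\otimes E_k)$ cancel between the two summands. Hence
\[
C=\sum_k\big(A_k\otimes E_k-B_k\otimes F_k\big)
\]
is a decomposition into selfadjoint elementary tensors. Each of $A_k,B_k,E_k,F_k$ has trace norm at most that of $S_k$ or $T_k$, since $\|S^\ast\trn=\|S\trn$. Therefore
\[
\ptrntr{C}\le\sum_k\big(\|A_k\trn\tre\|E_k\trn+\|B_k\trn\tre\|F_k\trn\big)\le 2\sum_k\|S_k\trn\tre\|T_k\trn<2\big(\|C\ptrn+\epsilon\big) \fin .
\]
Letting $\epsilon\to0$ gives $\ptrntr{C}\le2\tre\|C\ptrn$.

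The only point requiring care is that the symmetrization $C=\tfrac12(C+C^\ast)$ produces a decomposition that is still finite and selfadjoint; the identity above verifies this explicitly.
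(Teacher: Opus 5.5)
Your argument is correct. The paper states this proposition without proof, but your key step for the constant $2$ is exactly the device the paper itself uses in the proof of Corollary~\ref{corcro} (for~\eqref{eqhtrchjs}) and in Remark~\ref{renewenta}: symmetrize an almost-optimal complex decomposition via $C=\tfrac12(C+C^\ast)$, then split each factor into its selfadjoint real and imaginary parts so that the cross terms cancel.
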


\subsection{Immersion maps, cross trace class operators and cross states}
\label{immermaps}

Since both the Banach spaces $\trc\prtp\trcjj$ and $\trc\otimes\trcjj$ are completions of the linear
space $\trc\altp\trcjj$ (wrt the norms $\ptrnor$ and $\ttrnor$, respectively) and, moreover, the
projective norm $\ptrnor$ majorizes the trace norm $\ttrnor$ on $\trc\altp\trcjj$, there is a natural
(linear, injective) \emph{immersion map}
\begin{equation}  \label{setimme}
\imme\colon\thptj\rightarrow\trc\otimes\trcjj=\trchj \fin .
\end{equation}
Specifically, the immersion map $\imme$ is (well) defined by $\imme(C)\defi C$, for all $C\in\trc\altp\trcjj$,
and
\begin{equation} \label{defimme}
\{C_n\}\subset\trc\altp\trcjj \fin , \ \ptrnorli_n C_n=\optp\in\trc\prtp\trcjj
\implies \imme(\optp)\defi\ttrnorli_n C_n \fin .
\end{equation}
We stress that the limit on the rhs of~\eqref{defimme} does exist, because, by the inequality
$\|C\ttrn\le\|C\ptrn$, for every $C\in\trc\altp\trcjj$ (Proposition~\ref{pronorp}), the sequence
$\{C_n\}$ is $\ttrnor$-Cauchy. Moreover, by the same inequality and by the continuity of the
norm(s), we have:
\begin{equation} \label{norimme}
\|\imme(\optp)\ttrn=\lim_n \|C_n\ttrn\le \lim_n \|C_n\ptrn=\|\optp\ptrn \fin .
\end{equation}

Analogously, taking into account that the Hermitian projective norm $\ptrnortr$ majorizes
the trace norm $\ttrnor$ on $\thsatjs$, there is a natural ($\erre$-linear, injective)
\emph{immersion map}
\begin{equation}
\immer\colon\thsptjs\rightarrow\thsotjs=\trchjsa \fin ,
\end{equation}
defined by $\immer(C)\defi C$, for all $C\in\thsatjs$, and
\begin{equation} \label{defimmer}
\{C_n\}\subset\trchjsa \fin , \ \ptrnortrli_n C_n=\opts\in\thsptjs
\implies \immer(\opts)\defi\ttrnorli_n C_n \fin .
\end{equation}
Here, since the antilinear map $\trchj\ni A\mapsto A^\ast\in\trchj$ is an isometry (in particular, it
is continuous), the limit $C\equiv\ttrnorli_n C_n$ is indeed contained in $\trchjsa$; i.e., $C=C^\ast$.
Then, by the first two inequalities in~\eqref{inenor},
\begin{equation} \label{norimmer}
\|\immer(\opts)\ttrn=\lim_n \|C_n\ttrn\le\lim_n \|C_n\ptrn
\le\lim_n \ptrntr{C_n}=\ptrntr{\opts} \fin .
\end{equation}

Finally, since the norm $\ptrnortr$ dominates the norm $\ptrnor$ on $\thsatjs$, there is
a natural ($\erre$-linear, injective) \emph{immersion map}
\begin{equation}
\immers\colon\thsptjs\rightarrow\thptj \fin ,
\end{equation} \label{mapimmers}
defined by $\immers(C)\defi C$, for all $C\in\thsatjs$, and
\begin{align}
\{C_n\}\subset\thsatjs \fin ,
& \
\ptrnortrli_n C_n = \opts\in\thsptjs
\nonumber \\  \label{defimmers}
& \implies
\immers(\opts)\defi\ptrnorli_n C_n \fin ;
\end{align}
hence, by the second inequality in~\eqref{inenor},
\begin{equation} \label{norimmers}
\|\immers(\opts)\ptrn=\lim_n\|{C_n}\ptrn\le \lim_n \ptrntr{C_n}=\ptrntr{\opts} \fin  .
\end{equation}
Moreover, by the first inequality in~\eqref{norimmer}, we also have that
\begin{equation}
\|\immer(\opts)\ttrn=\lim_n \|C_n\ttrn\le\lim_n \|C_n\ptrn=\|\immers(\opts)\ptrn \fin .
\end{equation}

\begin{remark}
With a slight abuse, we are considering a $\erre$-linear map --- i.e., $\immers$
(as well as the maps $\nimm\colon\trchjsa\rightarrow\trchj$ and~\eqref{recir} defined
in Proposition~\ref{prvarlin} below) --- from a \emph{real} vector space to a
\emph{complex} one. Full mathematical rigor is restored by simply regarding the
latter as a real vector space (by field restriction).
\end{remark}

The previously introduced immersion maps allow us to `compare' the norms $\ttrnor$, $\ptrnor$,
$\ptrnortr$.

\begin{proposition} \label{prvarlin}
The linear maps
\begin{equation}
\imme\colon\trc\prtp\trcjj\rightarrow\trchj, \quad \immer\colon\thsptjs\rightarrow\trchjsa
\end{equation}
and
\begin{equation}
\immers\colon\thsptjs\rightarrow\thptj
\end{equation}
are bounded, and
$\|\imme\norps=\|\immer\norps=\|\immers\norps=1$; therefore, in particular, we have:
\begin{equation} \label{majonor}
\|\imme(\optp)\ttrn\le\|\optp\ptrn \fin , \quad \forall\cinque\optp\in\thptj \fin .
\end{equation}
The norm $\ptrnortr$ \emph{dominates} the norm $\ptrnor$ and is \emph{dominated by} $2\tre\ptrnor$ on
$\thsptjs$. Precisely, denoting by
\begin{equation}
\nimm\colon\trchjsa\rightarrow\trchj
\end{equation}
the natural (isometric) immersion of $\trchjsa$ into $\trchj$, we have:
\begin{equation} \label{dominor}
\|\immer(\opts)\ttrn=\|\nimm\circ\immer(\opts)\ttrn
\le\|\immers(\opts)\ptrn\le\ptrntr{\opts}\le 2\tre\|\immers(\opts)\ptrn \fin ,
\quad \forall\cinque\opts\in\thsptjs \fin .
\end{equation}
Therefore, the norms $\ptrnor\circ\immers$ and $\ptrnortr$  on $\thsptjs$ are \emph{equivalent}.

Moreover, we have that
\begin{equation} \label{recir}
\imme\circ\immers=\nimm\circ\immer\colon\thsptjs\rightarrow\trchj \fin .
\end{equation}
\end{proposition}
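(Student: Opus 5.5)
The plan is to get all statements from the inequalities of Proposition~\ref{pronorp} on the dense subspaces $\thatj$ and $\thsatjs$, and then pass to completions by continuity, using the definitions \eqref{defimme}, \eqref{defimmer}, \eqref{defimmers} of the immersion maps.

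\textbf{Step 1 (well-definedness, linearity and boundedness).} Take a $\ptrnor$-Cauchy sequence $\{C_n\}\subset\thatj$. By $\|C\ttrn\le\|C\ptrn$ it is also $\ttrnor$-Cauchy. Its $\ttrnor$-limit depends only on $\optp$: interleaving two approximating sequences of the same $\optp$ gives a single Cauchy sequence. Linearity then follows from linearity of limits, and \eqref{norimme} yields $\|\imme(\optp)\ttrn\le\|\optp\ptrn$, hence $\|\imme\norps\le1$. The same argument, based on the chain \eqref{inenor}, gives $\|\immer\norps\le 1$ and $\|\immers\norps\le 1$. For equality of the norms, evaluate on a product state $\rho\otimes\sigma$, with $\rho\in\sta$ and $\sigma\in\stajj$, which is selfadjoint. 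Its three norms all equal $1$: $\|\rho\otimes\sigma\ttrn=1$ by \eqref{crono}, and the cross-norm inequalities together with $\|\cdot\ttrn\le\|\cdot\ptrn\le\ptrntr{\cdot}\le\|\rho\trn\|\sigma\trn=1$ pin everything to $1$. So each operator norm is exactly $1$.

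\textbf{Step 2 (the chain \eqref{dominor}).} For $\opts\in\thsptjs$, choose $C_n\in\thsatjs$ with $\ptrntr{C_n-\opts}\to0$. Then $C_n\to\immers(\opts)$ in $\ptrnor$ and $C_n\to\immer(\opts)$ in $\ttrnor$. Apply \eqref{inenor} to each $C_n$ and take limits; continuity of the norms gives
\[
\|\immer(\opts)\ttrn\le\|\immers(\opts)\ptrn\le\ptrntr{\opts}\le2\|\immers(\opts)\ptrn .
\]
The equality with $\|\nimm\circ\immer(\opts)\ttrn$ holds because $\nimm$ is isometric. This proves the equivalence of $\ptrnor\circ\immers$ and $\ptrnortr$. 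In the statement's sense of domination, $\ptrnortr$ dominates $\ptrnor$ and is dominated by $2\ptrnor$, where $\ptrnor$ is transported via $\immers$; $\immers$ is injective on the completion, as the statement asserts.

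\textbf{Step 3 (the identity \eqref{recir}).} Both sides are continuous $\erre$-linear maps from $(\thsptjs,\ptrnortr)$ into $\trchj$. On the dense subspace $\thsatjs$ both reduce to $C\mapsto C$. So they agree everywhere. Concretely, with $C_n\to\opts$, both $\imme(\immers(\opts))$ and $\nimm(\immer(\opts))$ equal $\ttrnorli_n C_n$; the first uses that $\imme$ is continuous and $C_n\to\immers(\opts)$ in $\ptrnor$.

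\textbf{Main obstacle.} The routine parts are the limits. The only genuinely nontrivial input is the last inequality $\ptrntr{C}\le2\|C\ptrn$ in \eqref{inenor}, which is already granted by Proposition~\ref{pronorp}. So I expect the main care to be in Step 1, making sure the limits are well defined and independent of the approximating sequence. The injectivity asserted in the text is not needed for boundedness, norms or \eqref{recir}, and I would not attempt it here.
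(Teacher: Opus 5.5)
Your proposal is correct and follows essentially the argument the paper sketches in the discussion preceding the statement. There, \eqref{norimme}, \eqref{norimmer}, \eqref{norimmers} and the chain \eqref{dominor} are obtained by applying Proposition~\ref{pronorp} along approximating sequences in $\thatj$ (resp.\ $\thsatjs$) and passing to the limit; your product-state test supplies the fact, left implicit in the paper, that the bound $1$ is attained.

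One small remark: you need not take the injectivity of $\immers$ from the text. It follows at once from your own inequality $\ptrntr{\opts}\le 2\|\immers(\opts)\ptrn$, and this is exactly what makes $\ptrnor\circ\immers$ a norm rather than a seminorm.
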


The range of the immersion map $\imme$ is a \emph{dense} linear subspace (containing the algebraic
tensor product $\trc\altp\trcjj$) of the Banach space $\trchj$, and we put
\begin{equation} \label{dehtrchj}
\htrchj\defi\imme\big(\trc\prtp\trcjj\big)\subset\trchj \fin .
\end{equation}
Analogously, the range of the immersion map $\immer$ is a \emph{dense} linear subspace (containing
the $\erre$-linear span $\thsatjs$) of the real Banach space $\trchjsa$; we put
\begin{equation} \label{dehtrchjs}
\htrchjs\defi\immer\big(\thsptjs\big)\subset\trchjsa \fin .
\end{equation}
Clearly, $\htrchjs$ can be identified with a subset of $\htrchj$, because
\begin{align}
\htrchjs\defi\immer\big(\thsptjs\big)
& =
\nimm\circ\immer\big(\thsptjs\big)
\nonumber \\
& =
\imme\circ\immers\big(\thsptjs\big)
\nonumber \\ \label{idhtrchjs}
& \subset
\imme\big(\trc\prtp\trcjj\big)\ifed\htrchj \fin ,
\end{align}
where the second equality is just a natural identification obtained via the isometric immersion map
$\nimm\colon\trchjsa\rightarrow\trchj$, and next we have used relation~\eqref{recir}.

We also introduce the set
\begin{equation}  \label{dehstahj}
\hstahj\defi\stahj\cap\htrchj \fin ,
\end{equation}
which is a convex subset of the linear space $\htrchj$ containing, in particular, all
(finite) convex combinations of elementary tensors of the form $\rho\otimes\sigma$,
with $\rho\in\sta$ and $\sigma\in\stajj$.

\begin{definition}
We will call the trace class operators in $\htrchj$, the selfadjoint trace class operators in $\htrchjs$
and the density operators in $\hstahj$, respectively, the \emph{cross trace class operators}, the
\emph{selfadjoint cross trace class operators} and the \emph{cross states} (or \emph{cross density operators})
wrt the bipartition $\hhjj$.
\end{definition}

By its definition~\eqref{dehtrchjs} and by the inclusion relation $\htrchjs\subset\htrchj$ (see~\eqref{idhtrchjs}),
we know that the set of selfadjoint cross trace class operators must satisfy the relation
\begin{equation} \label{inchtrchjs}
\htrchjs\subset\trchjsa\cap\htrchj=\big\{C\in\htrchj\colon C=C^\ast\big\} \fin ,
\end{equation}
and, hence, for the set of cross states we have:
\begin{equation} \label{inchstahj}
\stahj\cap\htrchjs\subset\stahj\cap\htrchj\ifed\hstahj \fin .
\end{equation}
We will soon prove (see equation~\eqref{eqhtrchjs} in Corollary~\ref{corcro} below) that
in~\eqref{inchtrchjs} --- and, hence, in~\eqref{inchstahj} --- the inclusion relation is
actually an equality. To this end, and for our later purposes, we need to establish some
useful decompositions of cross trace class operators, which is our next task.

\subsection{The standard decomposition of a cross trace class operator}
\label{staco}

In the following, we will consider suitable decompositions of an element $\optp$ of the complex
Banach space $\trc\prtp\trcjj$ --- or of the real Banach space $\thsptjs$ --- of the general form
of a finite, or countably infinite, sum
\begin{equation} \label{gendec}
\optp=\sum_k c_k \tre (S_k\otimes T_k) \fin ,
\end{equation}
with $c_k\in\erre$, $0\neq S_k\otimes T_k\in\ethtj$ (respectively, $0\neq S_k\otimes T_k\in\ethtjs$).
The sum, if not finite, is supposed to be \emph{absolutely convergent} wrt the projective norm
$\ptrnor$ (for $\optp\in\thsptjs$, wrt the Hermitian projective norm $\ptrnortr$, or wrt the
equivalent norm $\ptrnor\circ\immers$); i.e.,
$\sum_k\tre |c_k|\tre\|S_k\otimes T_k\ptrn=\sum_k\tre |c_k|\tre\|S_k\trn\tre\|T_k\trn<\infty$,
so that
\begin{equation} \label{inegen}
\|\optp\ptrn\le\sum_k\tre |c_k|\tre\|S_k\otimes T_k\ptrn=\sum_k |c_k|
\tre\|S_k\trn\tre\|T_k\trn <\infty  \fin .
\end{equation}
Clearly, if $\optp\in\thsptjs$ and, for every value of the index $k$, $S_k\otimes T_k\in\ethtjs$
in~\eqref{gendec}, then an analogous inequality holds for $\ptrntr{\optp}$ too.

We will call an expansion of $\optp\in\thptj$ (or of $\optp\in\thsptjs$) of the form~\eqref{gendec} a
\emph{simple-tensor decomposition}.

\begin{remark} \label{remabs}
Since, for every $S\in\trc$ and $T\in\trcjj$, $\|S\otimes T\ptrn=\|S\trn\tre\|T\trn=\|S\otimes T\ttrn$,
it is clear that a series of the form $\sum_k c_k \tre (S_k\otimes T_k)$, with $c_k\in\erre$ and
$S_k\otimes T_k\in\ethtj$, is absolutely convergent wrt the projective norm $\ptrnor$ iff it is absolutely
convergent wrt the trace norm $\ttrnor$. Moreover, since the immersion map $\imme\colon\thptj\rightarrow\trchj$
is continuous (Proposition~\ref{prvarlin}), we have:
\begin{equation} \label{converel}
\optp=\ptrnsum_k c_k \tre (S_k\otimes T_k)\implies
\ttrnsum_k c_k \tre (S_k\otimes T_k)=\imme(\optp)\in\trchj \fin .
\end{equation}
Analogously, the series $\sum_k c_k\tre (S_k\otimes T_k)$, with
$c_k\in\erre$ and $S_k\otimes T_k\in\ethtjs$, is absolutely convergent wrt the Hermitian projective norm
$\ptrnortr$ iff it is absolutely convergent wrt the trace norm $\ttrnor$, and $\opts=\ptrntrsum_k c_k\tre
(S_k\otimes T_k)\implies\ttrnsum_k c_k \tre (S_k\otimes T_k)=\immer(\opts)\in\trchjsa$.
\end{remark}

\begin{remark} \label{remzer}
Some of the coefficients $\{c_k\}$ in~\eqref{gendec} may well be zero, even if $\optp\neq 0$, but
the corresponding zero terms in $\{c_k \tre (S_k\otimes T_k)\}$ are still to be envisioned as part
of the decomposition, because they may play some role in those arguments where this decomposition is
involved. A simple-tensor decomposition of $\optp\neq 0$, of the general form~\eqref{gendec}, will be
said to be \emph{sheer} if it does not contain any zero term.
\end{remark}

\begin{definition} \label{optimality}
A decomposition of $\optp\in\thptj$ of the form~\eqref{gendec} is called $\ptrnor$-\emph{optimal}
(or, for the sake of conciseness, simply \emph{optimal}) --- alternatively, in the case where,
for every value of the index $k$, $S_k\otimes T_k\in\ethtjs$ in~\eqref{gendec} (hence,
$\optp\in\thsptjs$), $\ptrnortr$-\emph{optimal} --- if the norm $\|\optp\ptrn$ (respectively,
$\ptrntr{\optp}$) is given precisely by
\begin{equation} \label{optcond}
\sum_k |c_k|\tre\|S_k\trn\tre\|T_k\trn\in\errep  \fin ;
\end{equation}
i.e., if the first inequality in~\eqref{inegen} (respectively, the inequality
$\ptrntr{\optp}\le\sum_k\tre |c_k|\tre\|S_k\trn\tre\|T_k\trn$) is saturated.
In this case, $\optp$ is said to be $\ptrnor$-\emph{optimally decomposable}, or simply
\emph{optimally decomposable} (respectively, $\optp\in\thsptjs$ is said to be
$\ptrnortr$-\emph{optimally decomposable}).
\end{definition}

\begin{definition} \label{norming}
A simple-tensor decomposition of the form~\eqref{gendec} --- with the scalar coefficients $\{c_k\}$
and the operators $\{S_k\otimes T_k\}$ \emph{subject to specific requirements} --- of a \emph{generic}
element $\optp$ of a certain subset $\suse$ of $\trc\prtp\trcjj$ (or of $\thsptjs$), will be said
to be a \emph{$\ptrnor$-norming decomposition} (respectively, a \emph{$\ptrnortr$-norming decomposition}),
if the norm $\|\optp\ptrn$ (or $\ptrntr{\optp}$) of $\optp\in\suse$ can be obtained as
\begin{equation}
\inf\big\{{\textstyle \sum_k |c_k|\tre\|S_k\trn\tre\|T_k\trn}\colon {\textstyle\sum_k
c_k \tre (S_k\otimes T_k)} = \optp \big\} \fin ,
\end{equation}
where the infimum is extended over all expansions of $\optp$ of the given specified form.
\end{definition}

\begin{theorem} \label{thdecop}
Every element $\optp$ of the complex Banach space $\big(\trc\prtp\trcjj,\ptrnor\big)$ admits a simple-tensor
decomposition of the form
\begin{equation} \label{decop}
\optp=\sum_k r_k \tre (X_k\otimes Y_k) \fin , \quad r_k\ge 0 \fin , \
X_k\otimes Y_k\in\ethtj \fin , \ \|X_k\trn\tre\|Y_k\trn=1 \fin ,
\end{equation}
where the sum is either finite or --- in the case where the Hilbert space $\hh\otimes\jj$ is
infinite-dimensional --- possibly countably infinite. In the latter case, the series is
supposed to converge absolutely wrt to the projective norm $\ptrnor$, and, in fact, the
non-negative real sequence $\{r_k\}$ is assumed to be summable, so that
\begin{equation}
\|\optp\ptrn\le\sum_k \|r_k \tre (X_k\otimes Y_k)\ptrn=
\sum_k r_k \tre\|X_k\trn\tre \|Y_k\trn=\sum_k r_k<\infty \fin .
\end{equation}
Moreover, for every $\optp\in\trc\prtp\trcjj$, we have that
\begin{equation} \label{pronorm-bis}
\|\optp\ptrn=\inf\big\{{\textstyle \sum_k r_k}\colon {\textstyle\sum_k
r_k \tre (X_k\otimes Y_k)}=\optp \big\} \fin ,
\end{equation}
where the infimum is taken over all possible expansions of $\optp$ of the general form~\eqref{decop};
namely, the decomposition~\eqref{decop} is $\ptrnor$-norming.

Analogously, every element $\opts$ of the real Banach space $(\thsptjs,\ptrnortr)$ admits a simple-tensor
decomposition of the form
\begin{equation} \label{decopsa}
\opts=\sum_k r_k \tre (X_k\otimes Y_k) \fin , \quad r_k\ge 0 \fin , \
X_k\otimes Y_k\in\ethtjs \fin , \ \|X_k\trn\tre\|Y_k\trn=1 \fin ,
\end{equation}
absolute convergence --- that holds because $\sum_k r_k<\infty$ --- and all other properties as
above, being understood wrt to the Hermitian projective norm $\ptrnortr$; in particular, the
decomposition~\eqref{decopsa} is $\ptrnortr$-norming.
\end{theorem}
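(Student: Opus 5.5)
The plan is to use the standard fact about projective tensor products: every element of the completion is an absolutely convergent series of elementary tensors, with total cost arbitrarily close to the norm. After that, normalize each term.

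\emph{Step 1: a norm-approximating series.} Fix $\optp\in\thptj$ and $\epsilon>0$. Since $\trc\altp\trcjj$ is $\ptrnor$-dense, choose $C_0,C_1,C_2,\ldots\in\trc\altp\trcjj$ with
\[
\|\optp-C_0\ptrn<\epsilon/2, \qquad \|\optp-\textstyle\sum_{j=0}^n C_j\ptrn<\epsilon\tre 2^{-(n+2)} .
\]
Then $\|C_0\ptrn\le\|\optp\ptrn+\epsilon/2$ and $\|C_j\ptrn<\epsilon\tre 2^{-(j+1)}$ for $j\ge1$. By the definition~\eqref{pronorm} of the projective norm on the algebraic tensor product, each $C_j$ has a finite decomposition $C_j=\sum_k S_{jk}\otimes T_{jk}$ with
\[
\textstyle\sum_k\|S_{jk}\trn\|T_{jk}\trn<\|C_j\ptrn+\epsilon\tre 2^{-(j+2)} .
\]
Concatenating all these finite families gives a countable family $\{S_k\otimes T_k\}$ with $\sum_k\|S_k\trn\|T_k\trn\le\|\optp\ptrn+2\epsilon<\infty$. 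This family is absolutely summable in $\ptrnor$, and its sum is $\optp$: the partial sums over completed blocks are exactly $\sum_{j\le n}C_j\to\optp$, and absolute convergence makes the full series converge to the same limit.

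\emph{Step 2: normalization.} Drop the terms with $S_k\otimes T_k=0$. For the remaining terms set
\[
r_k=\|S_k\trn\tre\|T_k\trn,\qquad X_k=S_k/\|S_k\trn,\qquad Y_k=T_k/\|T_k\trn .
\]
Then $r_k(X_k\otimes Y_k)=S_k\otimes T_k$ by bilinearity, $\|X_k\trn\|Y_k\trn=1$, and $\sum_k r_k\le\|\optp\ptrn+2\epsilon$. This is a decomposition of the form~\eqref{decop}; if $\optp=0$, take the single term $r_1=0$ with any normalized tensor.

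\emph{Step 3: finite sums and the norming property.} If $\hhjj$ is finite-dimensional, the algebraic tensor product is finite-dimensional, so $\optp\in\thatj$ and already has a finite decomposition. For~\eqref{pronorm-bis}, the inequality $\|\optp\ptrn\le\sum_k r_k$ for any expansion of the form~\eqref{decop} follows from the triangle inequality and continuity of the norm, as in~\eqref{inegen}. Since $\epsilon$ is arbitrary, Step 2 gives the reverse inequality.

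\emph{Real case.} The argument is identical, working in $\thsatjs$ with $\ptrntr{\cdot}$ and the selfadjoint decompositions of~\eqref{defptrntr}. Normalizing a selfadjoint $S$ by $\|S\trn$ keeps it selfadjoint, so every $X_k\otimes Y_k$ stays in $\ethtjs$.

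The main obstacle is Step 1, specifically the bookkeeping showing that the concatenated series actually converges to $\optp$ in the completion, not merely that its block partial sums do. Absolute summability settles this, since all rearrangements and groupings of an absolutely convergent series in a Banach space have the same sum; this is the same principle as Fact~\ref{facdou}.
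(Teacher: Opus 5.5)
Your proposal is correct and follows the paper's route. The paper simply cites Grothendieck's representation theorem (Proposition~2.8 of Ryan), which gives an absolutely convergent series $\sum_k S_k\otimes T_k$ with $\sum_k\|S_k\trn\|T_k\trn$ arbitrarily close to $\|\optp\ptrn$, and then renormalizes exactly as in your Step~2; you reprove that theorem with the standard telescoping density argument, and likewise in the real case. One harmless slip: your choices only give $\|C_j\ptrn<3\epsilon\tre 2^{-(j+2)}$, not $\epsilon\tre 2^{-(j+1)}$, but the total cost still stays below $\|\optp\ptrn+2\epsilon$, and any bound of the form $\|\optp\ptrn+c\,\epsilon$ would suffice.
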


\begin{proof}
In the case where $\dim(\hh\otimes\jj)<\infty$ and/or $\optp\in\trc\altp\trcjj$, the statement
is clear. Let us then assume that $\dim(\hh\otimes\jj)=\infty$ and
$\optp\in(\trc\prtp\trcjj)\setminus(\trc\altp\trcjj)$. By a classical result of
Grothendieck~\cite{Grothendieck-res} --- also see Proposition~{2.8} of~\cite{Ryan}, Proposition~{1.1.4}
of~\cite{DFS}, or Theorem~{7.9} of~\cite{Kubrusly} --- for every $\optp$ of this kind there is
a sequence of the form $\{S_k\otimes T_k\}$, with $S_k\in\trc$ and $T_k\in\trcjj$, such that
\begin{equation} \label{decop-bis}
\sum_k\tre \|S_k\trn\tre \|T_k\trn<\infty \quad \mbox{and} \quad
\optp=\sum_k S_k\otimes T_k \fin ,
\end{equation}
where, by the first of these relations, the series must converge wrt to the projective norm $\ptrnor$.
Moreover, it turns out that
\begin{equation}
\|\optp\ptrn\defi\inf\big\{{\textstyle \sum_k \|S_k\trn\tre\|T_k\trn}\colon {\textstyle\sum_k
S_k\otimes\tre T_k}=\optp \big\} \fin ,
\end{equation}
where the infimum is taken over all decompositions of $\optp$ of the form~\eqref{decop-bis}. Finally, by
suitably re-expressing the sequence $\{S_k\otimes T_k\}$ in the form $\{r_k \tre (X_k\otimes Y_k)\}$,
as specified in~\eqref{decop} (so that $r_k=\|S_k\trn\tre\|T_k\trn$), the statement follows.
An analogous result holds for the decomposition~\eqref{decopsa}, because the previously cited result
of Grothendieck applies to the projective tensor product of any pair of real or complex Banach spaces.
\end{proof}

\begin{remark} \label{obvrem}
In decomposition~\eqref{decop} --- or~\eqref{decopsa} --- we do not distinguish, in general,
between the various representations of the elementary tensor product $X_k\otimes Y_k$.
However, we can always assume, without loss of generality (wrt the condition that
$\|X_k\otimes Y_k\ttrn=\|X_k\trn\tre\|Y_k\trn=1$), that in~\eqref{decop} --- or in~\eqref{decopsa}
--- $\|X_k\trn=\|Y_k\trn=1$; moreover, in decomposition~\eqref{decopsa}, we can always suppose that
$X_k\in\trcsa$ and $Y_k\in\trcsajj$.
\end{remark}

\begin{remark} \label{remimme}
We stress that \emph{any} series $\sum_k r_k \tre (X_k\otimes Y_k)$ of the form specified
in~\eqref{decop} must converge in the Banach space $\thptj$ (assuming that $\sum_k r_k<\infty$),
because in such a case it is absolutely convergent wrt the projective norm $\ptrnor$, so that
the sequence of the partial sums is  $\ptrnor$-Cauchy. As noted in Remark~\ref{remabs}, it is
also absolutely convergent wrt the trace norm $\ttrnor$, and
$\imme(\optp)=\imme(\sum_k r_k \tre (X_k\otimes Y_k))=\sum_k r_k \tre (X_k\otimes Y_k)$, where
the latter sum converges in $\trchj$. Analogous facts hold true in the case of the real Banach
spaces $\thsptjs$ and $\trchjsa$.
\end{remark}

\begin{corollary} \label{corsep}
The Banach spaces $\thptj$ and $\thsptjs$ are separable.
\end{corollary}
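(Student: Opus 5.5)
The plan is to show that each of the two Banach spaces is the closure, in its own norm, of the span of a countable set. Separability then follows: the rational (or real-rational) linear combinations of that countable set form a countable dense subset. The natural starting point is the expansion of Theorem~\ref{thdecop}. Every $\optp\in\thptj$ is a $\ptrnor$-absolutely convergent series $\sum_k r_k\tre(X_k\otimes Y_k)$, so $\optp$ is a $\ptrnor$-limit of elements of $\thatj$. Hence it suffices to approximate an arbitrary elementary tensor $S\otimes T$, in projective norm, by elements of a fixed countable set.

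Since $\trc$ and $\trcjj$ are separable (Subsection~\ref{operators}), I will fix countable $\trnor$-dense subsets $\subh_0\subset\trc$ and $\subj_0\subset\trcjj$. Given $S\in\trc$, $T\in\trcjj$, I pick $S_0\in\subh_0$, $T_0\in\subj_0$ close to them. The key estimate uses the cross-norm property $\|A\otimes B\ptrn\le\|A\trn\tre\|B\trn$ (immediate from definition~\eqref{pronorm}):
\begin{equation*}
\|S\otimes T-S_0\otimes T_0\ptrn\le\|S-S_0\trn\tre\|T\trn+\|S_0\trn\tre\|T-T_0\trn \fin ,
\end{equation*}
which can be made arbitrarily small. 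From this, the $\rati+\ima\rati$-linear span of $\nb(\subh_0,\subj_0)$ is countable and $\ptrnor$-dense in $\thatj$. Since $\thatj$ is $\ptrnor$-dense in $\thptj$ by construction of the completion, this span is dense in $\thptj$.

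For the real space $\thsptjs$ I will repeat the argument with countable dense subsets of $\trcsa$ and $\trcsajj$, which are separable as closed subsets of separable metric spaces. The only point requiring care is the analogous estimate for $\ptrnortr$. Here $S-S_0$ and $T-T_0$ are selfadjoint, so the decomposition $S\otimes T-S_0\otimes T_0=(S-S_0)\otimes T+S_0\otimes(T-T_0)$ is admissible in definition~\eqref{defptrntr}, and the same bound holds. Alternatively, one can invoke the norm equivalence of Proposition~\ref{prvarlin}.

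No step is a real obstacle; the most delicate point is just checking that the $\ptrnortr$ estimate uses only selfadjoint elementary tensors.
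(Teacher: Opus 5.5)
Your proof is correct, but it takes a more economical route than the paper. The paper starts from the possibly infinite standard decomposition $\optp=\sum_k r_k\,(X_k\otimes Y_k)$ of Theorem~\ref{thdecop} and approximates all terms simultaneously. It chooses rational $r_{k;n}$ and elements of the countable dense sets $\trco$, $\trcjjo$ with errors of order $2^{-k}/n$. These weights make each approximating series absolutely convergent and within $s/n$ of $\optp$ in $\ptrnor$, and each approximant lies in the $\ptrnor$-closure of the $\rati$-span of $\nb(\trco,\trcjjo)$.

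You instead reduce to finite sums using only the $\ptrnor$-density of $\thatj$ in its completion. There no uniform control over infinitely many terms is needed, and the cross-norm bound for a difference of two elementary tensors is enough. So your argument needs neither Grothendieck's series representation nor the weighted bookkeeping. Your opening appeal to Theorem~\ref{thdecop} is in fact superfluous, as you note yourself.

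Your treatment of $\ptrnortr$ is also right: the splitting $(S-S_0)\otimes T+S_0\otimes(T-T_0)$ uses only selfadjoint factors, so it is admissible in \eqref{defptrntr}. The paper's version has the mild advantage of working directly with the standard decompositions used throughout. Logically, yours is the standard and cleaner proof that the projective tensor product of separable Banach spaces is separable.
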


\begin{proof}
By Theorem~\ref{thdecop}, a generic element $\optp$ of the Banach space $\trc\prtp\trcjj$ admits
a decomposition of the form~\eqref{decop}, which, for mere economy of this proof, in the following will
be assumed to be \emph{countably infinite}, possibly putting $r_k=0$ and choosing a whatever pair of
trace class operators $X_k\in\trc$, $Y_k\in\trcjj$ (with $\|X_k\trn=\|Y_k\trn=1$), for $k\ge\enne$,
for some suitable $\enne\in\nat$.

Now, let $\trco$, $\trcjjo$ be any pair of \emph{countable dense subsets} of the separable
Banach spaces $\trc$ and $\trcjj$, respectively. For every $n\in\nat$, we set
\begin{equation} \label{convser}
\optp_n=\sum_{k=1}^\infty \rkn \tre (\Xkn\otimes\Ykn) \fin , \quad 0\le\rkn\in\rati \fin , \
\Xkn\in\trco \fin ,\  \Ykn\in\trcjjo \fin ,
\end{equation}
where the non-negative sequence $\{\rkn\tre \|\Xkn\trn\tre \|\Ykn\trn\colon k\in\nat\}$ is supposed to be summable
--- so that the series will converge, wrt to the projective norm $\ptrnor$, to an element of $\trc\prtp\trcjj$
--- but this time we are \emph{not} assuming that $\|\Xkn\trn\tre\|\Ykn\trn=1$. By construction, the vector
$\optp_n$ belongs to the $\ptrnor$-closure of a \emph{countable} subset of $\trc\prtp\trcjj$; i.e., of the
(countable) $\rati$-linear span
\begin{equation}
\lispranb = \lispra\{S\otimes T\colon S\in\trco,\ T\in\trcjjo\} \fin .
\end{equation}

Let us show that the sequence $\{\optp_n\}$ --- subject to the specified conditions --- can be constructed
in such a way as to converge to a generic element $\optp$ (expressed by a countably infinite expansion of
the form~\eqref{decop}, with $\|X_k\trn=\|Y_k\trn=1$) of $\trc\prtp\trcjj$. To this end, we will assume that
\begin{equation} \label{esta}
|r_k-\rkn|<\frac{s}{n2^{k+1}} \quad \left(\Longrightarrow\
\rkn<r_k+\frac{s}{n2^{k+1}}\right) \fin ,
\end{equation}
with $s\equiv\sum_k r_k<\infty$, and
\begin{equation} \label{estb}
\|X_k-\Xkn\trn<\frac{3}{7n2^{k+1}} \ \left(\Longrightarrow\
\|\Xkn\trn<\|X_k\trn+\frac{3}{7n2^{k+1}}\right) , \quad
\|Y_k-\Ykn\trn<\frac{3}{7n2^{k+1}}  \fin ,
\end{equation}
where we recall once again that $\|X_k\trn=\|Y_k\trn=1$. Clearly, these assumptions are legitimate
by the density of the rationals $\rati$ in $\erre$, and of countable the sets $\trco$, $\trcjjo$
in $\trc$ and $\trcjj$, respectively. Note moreover that, since
\begin{equation}
0\le\rkn \tre \|\Xkn\trn\tre\|\Ykn\trn<\left(r_k+\frac{s}{n2^{k+1}}\right)\left(1+\frac{3}{7n2^{k+1}}\right)^2,
\end{equation}
where we have used the inequalities in brackets in~\eqref{esta} and~\eqref{estb}, then, by an elementary estimate,
\begin{equation}
\sum_{k=1}^\infty \|\rkn \tre (\Xkn\otimes\Ykn)\ptrn=\sum_{k=1}^\infty\rkn\tre\|\Xkn\trn\tre\|\Ykn\trn
< \frac{2s\tre(n+1)}{n} \fin ;
\end{equation}
hence, for every $n\in\nat$, the series in~\eqref{convser} is indeed absolutely convergent wrt to the
projective norm $\ptrnor$. At this point, observe that
\begin{align}
\|\optp-\optp_n\ptrn & \le\|{\textstyle\sum_k (r_k-\rkn) \tre (X_k\otimes Y_k)}\ptrn
+ \|{\textstyle\sum_k \rkn \tre ((X_k-\Xkn)\otimes Y_k)}\ptrn
\nonumber \\
& +
\|{\textstyle\sum_k \rkn \tre (\Xkn\otimes (Y_k-\Ykn))}\ptrn
\nonumber \\
& \le
{\textstyle\sum_k |r_k-\rkn|\tre\|X_k\trn\tre\|Y_k\trn} +
{\textstyle\sum_k \rkn\tre\|X_k-\Xkn\trn\tre\|Y_k\trn}
\nonumber \\
& +
{\textstyle\sum_k \rkn\tre\|X_k\trn\tre \|Y_k-\Ykn\trn}<\frac{s}{n} \fin.
\end{align}
Therefore, eventually we see that $\ptrnorli_n \optp_n=\optp$, and hence
\begin{equation}
\optp\in\clptrnor\big(\lispranb\big) ,
\end{equation}
which proves that the Banach space $\trc\prtp\trcjj$ is separable, as claimed.

The proof in case of the real Banach space $\thsptjs$ is analogous.
\end{proof}

Another relevant consequence of Theorem~\ref{thdecop} is that the cross trace class operators $\htrchj$
and the \emph{selfadjoint} cross trace class operators $\htrchjs$ admit an elementary characterization
in terms of simple tensors; moreover, they are, in  natural way, Banach spaces isomorphic to --- and,
therefore, will be identified with --- $\thptj$ and $\thsptjs$, respectively. For the sake of simplicity,
with a slight abuse, we will denote the norms of the two pairs of isomorphic Banach spaces by the same
symbols $\ptrnor$ and $\ptrnortr$.

\begin{corollary}[The standard decomposition] \label{corcro}
The set $\htrchj$ of all cross trace class operators wrt the bipartition $\hhjj$ --- i.e., the range
of the immersion map $\imme\colon\trc\prtp\trcjj\rightarrow\trchj$ --- is characterized as the complex
linear space  consisting of every trace class operator $C$ on $\hhjj$ that admits an absolutely
$\ttrnor$-convergent simple-tensor decomposition of the form
\begin{equation} \label{stardeco}
C=\sum_k r_k \tre (X_k\otimes Y_k) \fin ,
\end{equation}
with $r_k\ge 0$ $(\sum_k r_k<\infty)$, $X_k\otimes Y_k\in\ethtj$ and $\|X_k\trn\tre\|Y_k\trn=1$.
Moreover, here one can always assume that $X_k$ and $Y_k$ are rank-one partial isometries, i.e.,
\begin{equation}
X_k=\epk \fin , \ \ Y_k=\pck \fin , \quad
\mbox{for some normalized vectors $\eta_k,\phi_k\in\hh$ and $\psi_k,\chi_k\in\jj$.}
\end{equation}

The linear space $\htrchj$ is a selfadjoint subset of the bipartite trace class $\trchj$, i.e.,
\begin{equation}
C={\textstyle\sum_k r_k \tre (X_k\otimes Y_k)}\in\htrchj\implies
C^\ast={\textstyle\sum_k r_k \tre (X_k^\ast\otimes Y_k^\ast)}\in\htrchj \fin ,
\end{equation}
and becomes a separable Banach space --- isomorphic to $\trc\prtp\trcjj$ --- if endowed with the norm
\begin{equation} \label{pronorm-tris}
\|C\ptrn\defi\inf\big\{{\textstyle \sum_k r_k}\colon {\textstyle\sum_k r_k \tre
(X_k\otimes Y_k)}=C \big\} \fin , \quad C\in\htrchj \fin ,
\end{equation}
where the infimum is extended over all absolutely $\ttrnor$-convergent decompositions of $C$ of
the form~\eqref{stardeco}; to evaluate the norm $\|C\ptrn$, we can assume that $X_k$ and $Y_k$
therein are, in particular, rank-one partial isometries. Any such a simple-tensor decomposition
will converge absolutely to $C$ wrt the norm $\ptrnor$ of $\htrchj$, as well. Moreover, for every
$C\in\htrchj$, $\|C\ptrn=\|C^\ast\ptrn$.

The set $\htrchjs$ of all \emph{selfadjoint} cross trace class operators wrt the bipartition $\hhjj$
--- i.e., the range of the immersion map $\immer\colon\thsptjs\rightarrow\trchjsa$ --- coincides with
the real linear space of all selfadjoint trace class operators $C$ on $\hhjj$ that admit an absolutely
$\ttrnor$-convergent decomposition of the form
\begin{equation} \label{stardecosa}
C=\sum_k r_k \tre (X_k\otimes Y_k) \fin , \quad r_k\ge 0  \fin , \
X_k\otimes Y_k\in\ethtjs \fin , \ \|X_k\trn\tre\|Y_k\trn=1 \fin ,
\end{equation}
where the absolute convergence holds because $\sum_k r_k<\infty$. The linear space $\htrchjs$
becomes a separable (real) Banach space --- isomorphic to $\thsptjs$ --- if endowed with the norm
\begin{equation} \label{pronorm-sa}
\ptrntr{C}\defi\inf\big\{{\textstyle \sum_k r_k}\colon {\textstyle\sum_k r_k \tre
(X_k\otimes Y_k)}=C,\ X_k\otimes Y_k\in\ethtjs \big\} \fin ,
\end{equation}
where the infimum is extended over all absolutely $\ttrnor$-convergent decompositions of $C$ of
the specified form, and any such a simple-tensor decomposition will converge absolutely to $C$
wrt the norm $\ptrnortr$ of $\htrchjs$, as well.

Moreover, the following relation holds:
\begin{equation} \label{eqhtrchjs}
\htrchjs=\trchjsa\cap\htrchj=\big\{C\in\htrchj\colon C=C^\ast\big\} \fin .
\end{equation}

The complex Banach space $(\htrchj,\ptrnor)$ is closed wrt the standard product of operators
(i.e., composition), and --- once endowed with this product and with the adjoining map
$C\mapsto C^\ast$ --- becomes a Banach $\ast$-algebra.
\end{corollary}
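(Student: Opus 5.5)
The plan is to transport everything from $\thptj$ to its image under the injective immersion $\imme$, using Theorem~\ref{thdecop} and Remark~\ref{remimme}.

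\emph{Characterization of $\htrchj$.} Given $\optp\in\thptj$, write a decomposition~\eqref{decop}. By Remark~\ref{remimme} the same series converges absolutely in $\ttrnor$ to $\imme(\optp)$, so every $C\in\htrchj$ has the form~\eqref{stardeco}. Conversely, a series~\eqref{stardeco} with $\sum_k r_k<\infty$ converges absolutely in $\thptj$ to some $\optp$, and continuity of $\imme$ (Proposition~\ref{prvarlin}) gives $\imme(\optp)=C$.

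\emph{Reduction to rank-one partial isometries.} Take SVDs $X_k=\sum_i s_i\,|\eta_i\rangle\langle\phi_i|$ and $Y_k=\sum_j t_j\,|\psi_j\rangle\langle\chi_j|$, where $\sum_i s_i=\|X_k\trn$ and $\sum_j t_j=\|Y_k\trn$. Then $X_k\otimes Y_k=\sum_{ij}s_it_j(\cdots\otimes\cdots)$ is an absolutely summable double series whose weights sum to $\|X_k\trn\|Y_k\trn=1$. Re-indexing the resulting triple series (Fact~\ref{facdou}) produces a decomposition of the required rank-one form with the same total weight $\sum r_k$. Hence the infimum is unchanged.

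\emph{Norm formula and Banach space structure.} Since $\imme$ is injective, define $\|C\ptrn\defi\|\imme^{-1}(C)\ptrn$; this makes $\imme$ an isometric isomorphism onto $\htrchj$. Formula~\eqref{pronorm-tris} then follows from~\eqref{pronorm-bis}, because decompositions of $C$ converging in trace norm correspond bijectively to decompositions of $\imme^{-1}(C)$. This correspondence is the one subtle point. Given a $\ttrnor$-convergent series for $C$ with $\sum r_k<\infty$, it converges in $\thptj$ to some $\optp'$ with $\imme(\optp')=C$, and injectivity forces $\optp'=\imme^{-1}(C)$. Separability follows from Corollary~\ref{corsep}.

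\emph{Adjoints.} Take adjoints termwise: $\|X_k^\ast\trn=\|X_k\trn$, and the adjoint map is $\ttrnor$-continuous. So $C^\ast=\sum_k r_k X_k^\ast\otimes Y_k^\ast\in\htrchj$, with equal weight sums, giving $\|C^\ast\ptrn\le\|C\ptrn$ and, by symmetry, equality.

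\emph{Selfadjoint part.} The characterization of $\htrchjs$ via~\eqref{stardecosa} and the norm~\eqref{pronorm-sa} are obtained in exactly the same way from the real part of Theorem~\ref{thdecop}.

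\emph{Proof of~\eqref{eqhtrchjs}.} The inclusion $\subset$ is~\eqref{inchtrchjs}. For $\supset$, let $C=C^\ast\in\htrchj$ with decomposition~\eqref{stardeco}. Then $C=\tfrac12(C+C^\ast)=\sum_k \tfrac{r_k}{2}(X_k\otimes Y_k+X_k^\ast\otimes Y_k^\ast)$. Write $X=A+\ima B$ and $Y=E+\ima F$ with $A,B,E,F$ selfadjoint. Then
\[
X\otimes Y+X^\ast\otimes Y^\ast=2(A\otimes E-B\otimes F).
\]
Each of $A,B,E,F$ has trace norm at most $\|X\trn$ or $\|Y\trn$ respectively. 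This yields an absolutely convergent selfadjoint simple-tensor series with total weight at most $2\sum r_k$, so $C\in\htrchjs$. This is the step needing care, and the computation above settles it.

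\emph{Banach $\ast$-algebra structure.} For $C=\sum_k r_kX_k\otimes Y_k$ and $D=\sum_l q_lU_l\otimes V_l$, define
\[
CD=\sum_{kl}r_kq_l\,(X_kU_l)\otimes(Y_kV_l).
\]
This is valid because the double series converges absolutely in trace norm and operator multiplication is $\ttrnor$-continuous. By Fact~\ref{trnsubm}, $\|X_kU_l\trn\|Y_kV_l\trn\le1$, so $CD\in\htrchj$ with weight at most $\sum r_k\sum q_l$. Taking infima gives $\|CD\ptrn\le\|C\ptrn\|D\ptrn$. The involution is isometric as shown above, so $\htrchj$ is a Banach $\ast$-algebra.

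I expect the main obstacle to be the correspondence between trace-norm and projective-norm convergent decompositions underlying~\eqref{pronorm-tris}. The injectivity of $\imme$ resolves it.
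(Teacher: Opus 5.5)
Your proposal is correct and follows essentially the same route as the paper: you transport Theorem~\ref{thdecop} through the immersion $\imme$ via Remark~\ref{remimme}, re-index singular value decompositions to reach rank-one partial isometries, take adjoints termwise, split into selfadjoint parts to obtain~\eqref{eqhtrchjs}, and use the absolutely convergent double series of products for submultiplicativity (your use of $\tfrac12(C+C^\ast)$ is only a cosmetic variant of the paper's observation that the anti-Hermitian series must vanish). Like the paper, you take the injectivity of $\imme$ as given, and that is what makes the norm~\eqref{pronorm-tris} well defined.
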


\begin{proof}
The first assertion follows immediately from Theorem~\ref{thdecop}, taking into account
Remark~\ref{remimme}. Note that, by the singular value decomposition of a trace class
operator (Remark~\ref{sivadeb}), we can expand the operators $X_k\in\trc$, $Y_k\in\trcjj$ of
decomposition~\eqref{stardeco}, where we put $\|X_k\trn=\|Y_k\trn=1$, in terms of rank-one
partial isometries; precisely,
\begin{equation} \label{sivadxy}
X_k=\sum_j\sjk\sei\epjk \fin , \quad  Y_k=\sum_l\tlk\sei\pclk \fin ,
\end{equation}
where --- \emph{for each fixed value} of the index $k$ --- we have:
$\sum_j\sjk=\|X_k\trn=1=\|Y_k\trn=\sum_l\tlk$, with $\sjk,\tre\tlk>0$, and
$\big\{\ejk\big\},\big\{\pjk\big\}\subset\hh$ and $\big\{\plk\big\},\big\{\clk\big\}\subset\jj$
are orthonormal systems. Thus, the decomposition~\eqref{stardeco} of the cross trace class operator
$C\in\htrchj$ can be re-formulated as follows:
\begin{align}
C
& =
\sum_k r_k \tre (X_k\otimes Y_k)
\nonumber \\
& =
\sum_k\sum_j\sum_l r_k\sei\sjk\tre\tlk\tre\big(\epjk\otimes\pclk\big)
\nonumber \\
& =
\sum_{k}\sum_{m}r_k\sei\sjmk\sei\tlmk\tre\big(\epjmk\otimes\pclmk\big) \fin .
\nonumber \\ \label{decopis}
& =
\sum_n\rknn\sei\sjmkn\sei\tlmkn\tre\big(\epjmkn\otimes\pclmkn\big) \fin .
\end{align}
Here, a few comments are in order:
\begin{itemize}

\item The second equality in~\eqref{decopis} is obtained by considering the (possibly countably
infinite) sum in decomposition~\eqref{stardeco} as convergent wrt the trace norm $\ttrnor$, and
then exploiting the singular value decompositions~\eqref{sivadxy} of $X_k\in\trc$ and $Y_k\in\trcjj$
and the continuity of the natural bilinear form $\nb\colon\trc\times\trcjj\rightarrow\trchj$. Therefore,
the iterated sum $\sum_k\sum_j\sum_l\ldots$ converges wrt the norm $\ttrnor$.

\item Note that, for each fixed value of the index $k$, the double sequence (wrt the indices $j,l$)
\begin{equation}
\big\{\sjk\tre\tlk\tre\big(\epjk\otimes\pclk\big)\big\}
\end{equation}
is absolutely summable, because $\sjk\tre\tlk>0$, $\sum_j\sum_l\sjk\tre\tlk=1$ and
\begin{equation}
\left\|\epjk\otimes\pclk\right\ttrn=1=\left\|\epjk\otimes\pclk\right\ptrn .
\end{equation}
Therefore, we can convert the iterated sum $\sum_j\sum_l\ldots$  on the second line of~\eqref{decopis}
into a single sum $\sum_{m}\ldots$ by taking any bijection $m\mapsto\jlm$ (Fact~\ref{facdou}).

\item Next, since $r_k\tre\sjmk\sei\tlmk\ge 0$ and $\sum_k\sum_m r_k\tre\sjmk\sei\tlmk=
\sum_k r_k\sum_j\sum_l\sjk\tre\tlk=\sum_k r_k<\infty$, according to Fact~\ref{facdou}
we can convert the iterated sum $\sum_k\sum_m\ldots$ on the third line of~\eqref{decopis}
into a single sum $\sum_{n}\ldots$ by choosing any bijection $n\mapsto\kmn$.

\end{itemize}

Eventually, by suitably renaming the coefficients and the indices in the expansion on the last line
of~\eqref{decopis}, we get a decomposition of the form
\begin{equation} \label{decofin}
C=\sum_n c_n (\epn\otimes\pcn) \fin , \ \
c_n=\rknn\sei\sjmkn\sei\tlmkn\ge 0 \fin , \ \|\eta_n\|=\cdots=\|\chi_n\|=1 \fin ,
\end{equation}
where the sum converges absolutely wrt to the trace norm $\ttrnor$ and, in fact,
\begin{align}
\sum_n c_n
& =
\sum_n\rknn\sei\sjmkn\sei\tlmkn
\nonumber \\
& \label{eqcoefs}
=
\sum_{k}\sum_{m}r_k\sei\sjmk\sei\tlmk=\sum_k\sum_j\sum_l r_k\sei\sjk\tre\tlk
=\sum_k r_k<\infty \fin .
\end{align}

Let us prove the second assertion of the theorem. By~\eqref{pronorm-bis}, the norm on $\htrchj$
defined by~\eqref{pronorm-tris} coincides with the image, via the immersion map
$\imme\colon\trc\prtp\trcjj\rightarrow\trchj$, of the projective norm $\ptrnor$.
Hence, the linear space $\htrchj$, endowed with the norm~\eqref{pronorm-tris}, becomes a Banach space
isomorphic to $\trc\prtp\trcjj$. Moreover, since, for every $C\in\htrchj$, the decomposition~\eqref{stardeco}
can always be transformed into a decomposition of the special form~\eqref{decofin}, whose coefficients
$\{c_n\}$ are related to the coefficients $\{r_k\}$ of the more general decomposition~\eqref{stardeco}
in such a way that relation~\eqref{eqcoefs} holds --- i.e., $\sum_n c_n=\sum_k r_k$ --- we can assume
that the trace class operators $X_k$, $Y_k$ in~\eqref{pronorm-tris} are, in particular, rank-one partial
isometries; namely, we have:
\begin{equation}
\|C\ptrn=\inf\big\{{\textstyle \sum_k r_k}\colon {\textstyle\sum_k r_k \tre
(\epk\otimes\pck)}=C, \ r_k\ge 0,\ \|\eta_k\|=\cdots=\|\chi_k\|=1 \big\} \fin .
\end{equation}

As noted in Remark~\ref{remabs} (and next in Remark~\ref{remimme}) --- since, for every $S\in\trc$
and $T\in\trcjj$, $\|S\otimes T\ptrn=\|S\trn\tre\|T\trn=\|S\otimes T\ttrn$ --- a
series of the general form $\sum_k c_k \tre (S_k\otimes T_k)$, with $c_k\in\erre$ and $S_k\otimes T_k\in\ethtj$,
is absolutely convergent wrt the projective norm $\ptrnor$ iff it is absolutely convergent wrt the trace
norm $\ttrnor$, and, if $C=\ptrnsum_k c_k \tre (S_k\otimes T_k)$, then $\ttrnsum_k c_k\tre (S_k\otimes T_k)=C$.
Therefore, in particular, the $\ttrnor$-convergent simple-tensor decomposition~\eqref{stardeco}
must converge absolutely to $C$ wrt the norm $\ptrnor$ of $\htrchj$, as well.

Moreover, taking into account that the adjoining map $\trchj\ni C\mapsto C^\ast\in\trchj$ is an isometry (in
particular, it is continuous), for any $\ttrnor$-convergent expansion of $C\in\htrchj$ of the form~\eqref{stardeco},
$C^\ast=\sum_k r_k\tre (X_k\otimes Y_k)^\ast=\sum_k\tre r_k\tre (X_k^\ast\otimes Y_k^\ast)$, with
$\|X_k^\ast\trn\tre\|Y_k^\ast\trn=\|X_k\trn\tre\|Y_k\trn=1$. Thus, $C^\ast$ belongs to $\htrchj$, as well,
and, by definition~\eqref{pronorm-tris} of the norm $\ptrnor$, it is clear that $\|C\ptrn=\|C^\ast\ptrn$,
for every $C\in\htrchj$.

The assertion of the theorem regarding the characterization of the Banach space $\htrchjs$ follows, once again,
from Theorem~\ref{thdecop} and Remark~\ref{remimme}.

We now prove relation~\eqref{eqhtrchjs}. We already know that $\htrchjs\subset\trchjsa\cap\htrchj$
(recall relation~\eqref{inchtrchjs}). It is then sufficient to show that the reverse inclusion holds
too; i.e., that $C\in\htrchj, \ C=C^\ast\implies C\in\htrchjs$. Let $\sum_k r_k \tre (X_k\otimes Y_k)$
be an expansion of $C$ as specified in~\eqref{stardeco}. Note that each term of the form $X\otimes Y$
in this expansion (for notational convenience, here we are dropping the index $k$) can be expressed as
\begin{equation}
(\xu + \ima \sei\xd)\otimes(\yu + \ima \sei\yd) = \xu\otimes\yu - \xd\otimes\yd + \ima\tre
(\xu\otimes\yd+\xd\otimes\yu),
\end{equation}
where $\xu\defi\frac{1}{2}(X+X^\ast)$, $\xd\defi\frac{1}{2\tre\ima}(X-X^\ast)\in\trcsa$, and
$\yu,\yd\in\trcsajj$ are defined analogously. Thus, $C$ can be expressed as the sum of two
series, the first one containing all terms of the form $\xu\otimes\yu - \xd\otimes\yd$ ---
i.e., restoring the index $k$, the (absolutely convergent wrt each of the norms $\ttrnor$,
$\ptrnor$ and $\ptrnortr$) series
\begin{equation} \label{sapart}
\sum_k r_k \tre (\xuk\otimes\yuk - \xdk\otimes\ydk) \fin ,
\end{equation}
where $(\xuk\otimes\yuk),(\xdk\otimes\ydk)\in\ethtjs$ --- while the second one all terms of the form
$\ima\tre(\xu\otimes\yd+\xd\otimes\yu)$. By imposing the condition that $C=C^\ast$, we see that
the second sum must be zero, so that $C$ can be expressed by the series~\eqref{sapart} alone
(where, of course, the trace class operators $\xuk\otimes\yuk$, $\xdk\otimes\ydk$ can be suitably
renormalized by modifying the associated scalar coefficients), and hence, by the previously obtained
characterization of the Banach space $\htrchjs$, we conclude that indeed $C\in\htrchjs$, as we wished
to prove.

Let us eventually prove that $(\htrchj,\ptrnor)$ is, in natural way, a Banach algebra. We only need
to show that
\begin{equation}
C_1,C_2\in\htrchj \implies \mbox{$C_1\tre C_2\equiv C_1\circ C_2\in\htrchj$ and
$\|C_1\tre C_2\ptrn\le \|C_1\ptrn\tre\|C_2\ptrn$} \fin ;
\end{equation}
i.e., that the Banach space $(\htrchj,\ptrnor)$ is closed wrt to composition of operators
and its norm is submultiplicative. To this end, for any $\epsilon>0$, let us pick decompositions
(of the form~\eqref{stardeco})
\begin{equation} \label{deccjs}
C_j= \sum_k \rkj \tre (\xkj\otimes \ykj) \fin , \quad j=1,2 \fin ,
\end{equation}
of the cross trace class operators $C_1$ and $C_2$ (converging wrt both the norms $\ttrnor$ and $\ptrnor$),
such that
\begin{equation}
\sum_k \rkj \le\|C_j\ptrn + \epsilon  \fin , \quad j=1,2 \fin ,
\end{equation}
condition that can always be satisfied according to definition~\eqref{pronorm-tris}. Note that the
\emph{double series}
\begin{equation}
\sum_{kl} \rko\tre\rlt\big((\xko\xlt)\otimes(\yko\ylt)\big)
\end{equation}
is absolutely convergent --- wrt the both the norms $\ttrnor$ and $\ptrnor$ --- to some cross trace
class operator $C_\epsilon\in\htrchj$; indeed, $\sum_{kl}\rko\tre\rlt=\sum_k\rko\sum_l\rlt\le
(\|C_1\ptrn + \epsilon)(\|C_2\ptrn + \epsilon)$ and
\begin{align}
\|(\xko\xlt)\otimes(\yko\ylt)\ptrn
& =
\|(\xko\xlt)\otimes(\yko\ylt)\ttrn
\nonumber \\ \label{estima}
& =
\|\xko\xlt\trn\tre\|\yko\ylt\trn\le\|\xko\trn\tre\|\xlt\trn\tre\|\yko\trn\tre\|\ylt\trn= 1 \fin .
\end{align}
We claim that $C\equiv C_\epsilon=C_1\tre C_2$ (for any $\epsilon>0$).

Indeed, since (by Fact~\ref{trnsubm}) $\|A_1\tre A_2\ttrn\le\|A_1\ttrn\tre \|A_2\ttrn$, for all
$A_1,A_2\in\trchj$, then the linear maps (of right and left multiplication by a fixed trace class
operator)
\begin{equation} \label{limaps}
\trchj\ni A_1\mapsto A_1\tre A_2\in\trchj \ \ \mbox{and} \ \
\trchj\ni A_2\mapsto A_1\tre A_2\in\trchj
\end{equation}
are bounded. By the continuity of the \emph{first} linear map in~\eqref{limaps}, for any $C_1,C_2\in\htrchj$,
and exploiting the expansion~\eqref{deccjs} with $j=1$, we obtain that
\begin{equation} \label{relcca}
C_1\tre C_2=\bigg(\ttrnsum_k \rko\big(\xko\otimes\yko\big)\bigg) C_2
=\ttrnsum_k \rko\big((\xko\otimes\yko)\tre C_2\big),
\end{equation}
where, by the continuity of the \emph{second} linear map and by the expansion~\eqref{deccjs} with $j=2$,
we have:
\begin{align}
(\xko\otimes\yko)\sei C_2
& =
(\xko\otimes\yko)\dieci\ptrnsum_l\rlt\big(\xlt\otimes\ylt\big)
\nonumber \\ \label{relccb}
& =
\ttrnsum_l\rlt\big((\xko\xlt)\otimes(\yko\ylt)\big) .
\end{align}
Next, by relations~\eqref{relcca} and~\eqref{relccb}, we find that
\begin{align}
C_1\tre C_2
& =
\ttrnsum_k \rko\big((\xko\otimes\yko)\tre C_2\big)
\nonumber\\
& =
\ttrnsum_k \rko\dieci\ttrnsum_l\rlt\big((\xko\xlt)\otimes(\yko\ylt)\big)
\nonumber \\
& =
\ttrnsum_{kl} \rko\tre\rlt\big((\xko\xlt)\otimes(\yko\ylt)\big)
\nonumber \\ \label{douser}
& =
\ptrnsum_{kl} \rko\tre\rlt\big((\xko\xlt)\otimes(\yko\ylt)\big)\in\htrchj \fin .
\end{align}
Here, the iterated series on the second line is equal to the double series on the third line
by the absolute convergence of the latter (recall Fact~\ref{facdou}), actually wrt both the
norms $\ttrnor$ and $\ptrnor$. Therefore, we conclude that $C_1\tre C_2\in\htrchj$ and, for
every $\epsilon >0$,
$\|C_1\tre C_2\ptrn\le\sum_k \rko \sum_l\rlt\le(\|C_1\ptrn + \epsilon)(\|C_2\ptrn + \epsilon)$.
Hence, $\|C_1\tre C_2\ptrn\le\|C_1\ptrn\tre \|C_2\ptrn$, and the proof is now complete.
\end{proof}

\begin{remark} \label{remide}
In the light of Corollary~\ref{corcro}, the `abstract' complex Banach space $\trc\prtp\trcjj$
will be henceforth identified with the range $\htrchj$ of the immersion map $\imme$. Precisely,
from now on we will denote by $\trc\prtp\trcjj$ --- rather than by $\htrchj$ --- the Banach space
of all trace class operators of the form~\eqref{stardeco}, the \emph{cross trace class operators},
endowed with the norm $\ptrnor$ defined by~\eqref{pronorm-tris}, that will be called the
\emph{projective (cross) norm}. Once again, we stress  that, in the simple-tensor
decomposition~\eqref{stardeco}, absolute convergence is understood wrt the norm $\ptrnor$ or,
equivalently, wrt the trace norm $\ttrnor$, because the (absolute) summability and the sum itself
of the series do not depend on which if the two norms is considered (Remark~\ref{remimme}). With
an analogous identification, from now on we will denote by $\thsptjs$ the real vector space
$\htrchjs$ (the range of the immersion map $\immer$) --- i.e., the real vector space (recall
relation~\eqref{eqhtrchjs})
\begin{equation}
\thptjsa\defi\big\{C\in\thptj\equiv\htrchj\colon C=C^\ast\big\}
\end{equation}
of all selfadjoint cross trace class operators --- endowed with the \emph{Hermitian projective (cross) norm}
$\ptrnortr$, defined by~\eqref{pronorm-sa}. Therefore, skipping from now on all natural immersion maps, the
inequalities~\eqref{majonor} and~\eqref{dominor} relating the various norms on $\thptj$ and $\thsptjs$ can
be rewritten, respectively, as
\begin{equation} \label{inenorm}
\|C\ttrn\le\|C\ptrn \fin , \quad \forall\cinque C\in\thptj\equiv\htrchj ,
\end{equation}
and
\begin{equation} \label{inenorms}
\|C\ttrn\le\|C\ptrn\le\ptrntr{C}\le 2\tre\|C\ptrn \fin , \quad
\forall\cinque C\in\thsptjs\equiv\htrchjs \fin .
\end{equation}
\end{remark}

\begin{definition} \label{destade}
Any expansion of a cross trace class operator $C\in\trc\prtp\trcjj\equiv\htrchj$ of the
form~\eqref{stardeco} will be called a \emph{standard decomposition} of $C$.

Any expansion of a \emph{selfadjoint} cross trace class operator $C\in\thsptjs\equiv\htrchjs$
of the form~\eqref{stardeco} --- with $r_k\ge 0$ and $X_k\otimes Y_k\in\ethtjs$ ($\sum_k r_k<\infty$,
$\|X_k\trn\tre\|Y_k\trn=1$) --- will be called a \emph{Hermitian standard decomposition} of $C$.

A standard decomposition of $C\in\trc\prtp\trcjj$ (or a Hermitian standard decomposition of $C\in\thsptjs$)
of the form~\eqref{stardeco}, with $C\neq 0$, is said to be \emph{sheer} if it does not contain zero
summands; i.e., if $r_k>0$, for every $k$.
\end{definition}

\subsection{The cross trace class and optimal decompositions}
\label{thecross}

Taking into account the important Remark~\ref{remide}, we set the following:

\begin{definition} \label{defide}
The Banach spaces
\begin{equation}
\big(\thptj\equiv\htrchj,\ptrnor\big) \ \ \mbox{and} \ \ \big(\thsptjs\equiv\htrchjs,\ptrnortr\big)
\end{equation}
where the norms $\ptrnor$, $\ptrnortr$ are defined by~\eqref{pronorm-tris} and~\eqref{pronorm-sa},
will be called the \emph{cross trace class} and the \emph{selfadjoint cross trace class} of the
bipartite Hilbert space $\hhjj$, respectively.
\end{definition}

\begin{proposition}
The Banach spaces
\begin{equation}
\thptj\equiv\htrchj \ \ \mbox{and} \ \ \tjpth\equiv\htrcjh
\end{equation}
are isomorphic. This isomorphism is implemented by the (well defined) transposition map
\begin{equation} \label{detran}
\trsp\colon\thptj\ni C=\sum_k r_k \tre (X_k\otimes Y_k)\mapsto
\sum_k r_k \tre (Y_k\otimes X_k)\ifed\trsp(C)\in\tjpth \fin ,
\end{equation}
where $C=\sum_k r_k \tre (X_k\otimes Y_k)$ is any standard decomposition of the cross trace class operator
$C$ (and the mapping in~\eqref{detran} does not depend on the choice of this decomposition).

An analogous isomorphism holds for the Banach spaces $\thsptjs\equiv\htrchjs$ and $\tjspths\equiv\htrcjhs$.
\end{proposition}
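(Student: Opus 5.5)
The key difficulty is well-definedness: a cross trace class operator $C\in\thptj$ has many standard decompositions, and the formula~\eqref{detran} must not depend on the choice. I will avoid working with decompositions directly. Instead I will build $\trsp$ on the abstract projective tensor products via the universal property, and then transport it through the identifications of Remark~\ref{remide}.

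First step. Consider the bilinear map $\trc\times\trcjj\ni(X,Y)\mapsto Y\otimes X\in\tjpth$. It is bounded with norm $1$, since $\|Y\otimes X\ptrn=\|X\trn\tre\|Y\trn$. By the universal property of the algebraic tensor product (Subsection~\ref{algtens}), it induces a unique linear map $\trsp_0\colon\thatj\rightarrow\trcjj\altp\trc$ with $\trsp_0(X\otimes Y)=Y\otimes X$.

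For $C=\sum_{k=1}^n X_k\otimes Y_k$ we have $\|\trsp_0(C)\ptrn\le\sum_k\|X_k\trn\tre\|Y_k\trn$. Taking the infimum over finite decompositions in~\eqref{pronorm} gives $\|\trsp_0(C)\ptrn\le\|C\ptrn$. The same argument with the roles of $\hh$ and $\jj$ exchanged gives the reverse inequality, and $\trsp_0$ is invertible with inverse the analogous flip. Hence $\trsp_0$ is an isometric linear bijection of $(\thatj,\ptrnor)$ onto $(\trcjj\altp\trc,\ptrnor)$. It extends by continuity to an isometric isomorphism $\trsp$ of the completions.

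Second step: show that this abstract $\trsp$ is given by~\eqref{detran} for every standard decomposition. Let $C=\sum_k r_k\tre(X_k\otimes Y_k)$ be any standard decomposition. By Corollary~\ref{corcro} and Remark~\ref{remimme}, it converges absolutely in $\ptrnor$, so $C$ is the $\ptrnor$-limit of its partial sums in $\thatj$. By continuity of $\trsp$,
$$\trsp(C)=\ptrnorli_n\sum_{k\le n} r_k\tre(Y_k\otimes X_k).$$
The series $\sum_k r_k\tre(Y_k\otimes X_k)$ is itself a standard decomposition in $\tjpth$. By Remark~\ref{remabs} its $\ptrnor$-sum coincides with its $\ttrnor$-sum, that is, with the concrete operator on $\jj\otimes\hh$.

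Independence of the decomposition is then automatic: the left-hand side $\trsp(C)$ depends only on $C$. The point where care is needed is that the identification $\htrchj\cong\thptj$ goes through the injective immersion $\imme$. Two standard decompositions with the same $\ttrnor$-sum therefore represent the same element of the abstract projective tensor product. This is exactly the injectivity of $\imme$ asserted in Subsection~\ref{immermaps}, and it is the step I would double-check.

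Selfadjoint case. The same construction works with $\trcsa$, $\trcsajj$ and the norm $\ptrntr{\cdot}$, since~\eqref{defptrntr} is symmetric under the flip and selfadjoint elementary tensors map to selfadjoint ones. Alternatively, one can observe that $\trsp$ commutes with the adjoint, because $(Y\otimes X)^\ast=Y^\ast\otimes X^\ast$. Hence $\trsp$ maps $\thsptjs=\{C=C^\ast\}$ onto $\tjspths$, and preserves Hermitian standard decompositions and thus $\ptrntr{\cdot}$.
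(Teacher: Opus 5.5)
Your argument is correct. The paper states this proposition without proof, so the useful comparison is with the concrete argument that the paper's operator realization suggests.

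Let $W\colon\hhjj\rightarrow\jj\otimes\hh$ be the swap unitary, $W(\phi\otimes\psi)=\psi\otimes\phi$. Then $A\mapsto WAW^\ast$ is a $\ttrnor$-isometric bijection of $\trchj$ onto $\mathcal{B}_1(\jj\otimes\hh)$, and $W(X\otimes Y)W^\ast=Y\otimes X$. By $\ttrnor$-continuity, every standard decomposition $C=\sum_k r_k\tre(X_k\otimes Y_k)$ is sent to the standard decomposition $WCW^\ast=\sum_k r_k\tre(Y_k\otimes X_k)$.

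Thus the right-hand side of~\eqref{detran} is simply $WCW^\ast$, which depends only on $C$. Standard decompositions, and Hermitian standard decompositions, of $C$ and of $\trsp(C)$ correspond bijectively with the same coefficients. So~\eqref{pronorm-tris} and~\eqref{pronorm-sa}, together with $\trsp(C^\ast)=\trsp(C)^\ast$ and~\eqref{eqhtrchjs}, give both isometries at once.

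Your route makes the isometry immediate from the infimum over finite decompositions and the universal property. The price is that transporting the abstract map to the concrete space $\htrchj$ rests on the injectivity of $\imme$, which is exactly the step you flag.

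That step is not automatic: two comparable norms on a vector space need not induce an injective map between their completions. The paper only asserts it in Subsection~\ref{immermaps}, but it does hold here. Two justifications:

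\begin{itemize}
\item The duality $\thptj\cong\big(\cph\injtp\cpj\big)^\ast$ invoked in Theorem~\ref{neothe} pairs an abstract element $\optp$ with $K\in\cph\altp\cpj$ through $\tr(\imme(\optp)K)$, as in the derivation of~\eqref{repro}. Hence $\imme(\optp)=0$ forces $\optp=0$.
\item The trace norm dominates the injective norm on $\thatj$. So $\imme$ followed by the natural map $\trchj\rightarrow\trc\injtp\trcjj$ is the canonical map $\thptj\rightarrow\trc\injtp\trcjj$. That map is injective because $\trc$ has the approximation property (Fact~\ref{appran}).
\end{itemize}

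The swap-unitary argument does not need this fact at all, because well-definedness is settled inside $\trchj$ before any completion enters.
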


Let us now derive some direct consequences of Corollary~\ref{corcro}.

\begin{proposition} \label{prolinc}
Every cross trace class operator --- wrt the bipartition $\hhjj$ --- can be expressed as a
linear combination of (at most) two selfadjoint cross trace class operators; i.e., the
following relation holds:
\begin{align}
\thptj
& =
\thsptjs + \ima\sei\big(\thsptjs\big)
\nonumber \\ \label{lincom}
& =
\big\{C_1 +\ima\sei C_2\colon C_1,C_2\in\thsptjs\big\} \fin .
\end{align}
\end{proposition}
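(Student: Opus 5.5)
The plan is to prove the two inclusions in~\eqref{lincom} separately, using only the closure of $\thptj\equiv\htrchj$ under adjoints established in Corollary~\ref{corcro}, together with relation~\eqref{eqhtrchjs}.

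\textbf{The inclusion ``$\supset$''.} By relation~\eqref{eqhtrchjs}, $\thsptjs=\{C\in\thptj\colon C=C^\ast\}$ is a subset of the complex linear space $\thptj$. Hence, for $C_1,C_2\in\thsptjs$, the operator $C_1+\ima\sei C_2$ is a complex linear combination of elements of $\thptj$, and therefore lies in $\thptj$.

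\textbf{The inclusion ``$\subset$''.} Given $C\in\thptj$, take the Cartesian decomposition
\begin{equation}
C_1\defi\tfrac{1}{2}\sei(C+C^\ast) \fin , \quad C_2\defi\tfrac{1}{2\tre\ima}\sei(C-C^\ast) \fin ,
\end{equation}
so that $C=C_1+\ima\sei C_2$ and $C_1=C_1^\ast$, $C_2=C_2^\ast$. Corollary~\ref{corcro} states that $\htrchj$ is a selfadjoint subset of $\trchj$, i.e., $C^\ast\in\thptj$; explicitly, if $C=\sum_k r_k\tre(X_k\otimes Y_k)$ is a standard decomposition, then $C^\ast=\sum_k r_k\tre(X_k^\ast\otimes Y_k^\ast)$. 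Since $\thptj$ is a complex linear space, both $C_1$ and $C_2$ belong to $\thptj$. Being selfadjoint, they belong to $\thsptjs$ by~\eqref{eqhtrchjs}.

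\textbf{Where the difficulty lies.} The argument itself has no real obstacle. The nontrivial ingredient is relation~\eqref{eqhtrchjs}: a selfadjoint cross trace class operator admits a \emph{Hermitian} standard decomposition. This was already proved in Corollary~\ref{corcro} by splitting each term $X\otimes Y$ into real and imaginary parts. One could instead bypass~\eqref{eqhtrchjs} and construct the decomposition directly: write $X_k=X_{k;\mathrm{R}}+\ima X_{k;\mathrm{I}}$ and $Y_k=Y_{k;\mathrm{R}}+\ima Y_{k;\mathrm{I}}$ with selfadjoint parts, expand each product, and check that the resulting series for $C_1$ and $C_2$ converge absolutely in $\ptrnortr$. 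This holds because $\|X_{k;\mathrm{R}}\trn,\|X_{k;\mathrm{I}}\trn\le\|X_k\trn$, and similarly for $Y_k$. Relying on the corollary is the cleaner route.
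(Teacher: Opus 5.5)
Your proof is correct and matches the paper's approach. The paper writes out no proof and simply presents the proposition as a direct consequence of Corollary~\ref{corcro}; the intended argument is yours: use the Cartesian decomposition $C=\frac{1}{2}(C+C^\ast)+\mathrm{i}\,\frac{1}{2\mathrm{i}}(C-C^\ast)$, the closure of the cross trace class under adjoints, and relation~\eqref{eqhtrchjs}. Your alternative route via the selfadjoint parts of $X_k$ and $Y_k$ is just the argument the paper itself uses to establish~\eqref{eqhtrchjs}.
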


\begin{proposition} \label{prcrost}
For the set $\hstahj$ of all cross states wrt the bipartition $\hhjj$, the following
relation holds:
\begin{equation} \label{eqhstahj}
\hstahj\defi\stahj\cap\thptj=\stahj\cap\thsptjs \fin .
\end{equation}
Moreover, $\hstahj$ is a $\ptrnor$-closed convex subset of $\thptj$ and a $\ptrnortr$-closed
convex subset of $\thsptjs$.
\end{proposition}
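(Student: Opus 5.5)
The equality~\eqref{eqhstahj} should follow directly from relation~\eqref{eqhtrchjs} in Corollary~\ref{corcro}. Every density operator is selfadjoint, so $\stahj\subset\trchjsa$, and therefore
\begin{equation}
\stahj\cap\thptj=\stahj\cap\trchjsa\cap\thptj=\stahj\cap\thsptjs \fin ,
\end{equation}
where, after the identifications of Remark~\ref{remide}, $\thsptjs$ is precisely the set $\{C\in\thptj\colon C=C^\ast\}$. Convexity is also immediate: $\stahj$ is convex and $\thptj$, $\thsptjs$ are linear spaces, so the intersection is convex.

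The closedness claims reduce to a norm comparison. I would first prove that $\hstahj$ is $\ptrnor$-closed in $\thptj$. Take a sequence $\{D_n\}\subset\hstahj$ converging in $\ptrnor$ to some $C\in\thptj$. By~\eqref{inenorm} we have $\|D_n-C\ttrn\le\|D_n-C\ptrn\to 0$, so $D_n\to C$ in the trace norm of $\trchj$. The set $\stahj$ is $\ttrnor$-closed in $\trchj$, since it is the intersection of the closed cone $\trchjp$ with the closed affine hyperplane $\{\tr=1\}$, the trace functional being $\ttrnor$-continuous. Hence $C\in\stahj$, and $C\in\thptj$ by assumption, so $C\in\hstahj$. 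Because $\thptj$ is a metric space, sequential closedness is enough.

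For the selfadjoint cross trace class I would argue the same way. By~\eqref{inenorms}, $\ttrnor\le\ptrntr{\cdot}$ on $\thsptjs$, so a $\ptrnortr$-convergent sequence in $\hstahj$ with limit $C\in\thsptjs$ also converges in trace norm. Its limit therefore lies in $\stahj\cap\thsptjs$, which equals $\hstahj$ by the first part.

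There is no real obstacle here. The only point needing care is that the closedness is claimed \emph{relative to} $\thptj$ (respectively $\thsptjs$), where the limit is already known to be a cross operator. This is essential: in the genuinely infinite-dimensional case $\hstahj$ is not $\ttrnor$-closed in $\trchj$, so the domination of the trace norm by the projective norms is exactly what makes the argument work.
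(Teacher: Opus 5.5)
Your proof is correct and follows the paper's argument. The equality comes from relation~\eqref{eqhtrchjs}, and $\ptrnor$-closedness comes from the domination $\ttrnor\le\ptrnor$ combined with the $\ttrnor$-closedness of $\stahj$. The only minor difference is that for $\ptrnortr$-closedness you use the domination $\ttrnor\le\ptrnortr$ directly, whereas the paper invokes the equivalence of $\ptrnor$ and $\ptrnortr$ on $\thsptjs$; either step is immediate.
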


\begin{proof}
In fact, by relation~\eqref{eqhtrchjs}, we have that
\begin{align}
\stahj\cap\htrchjs
& =
\stahj\cap\trchjsa\cap\htrchj
\nonumber\\
& =
\stahj\cap\htrchj\ifed\hstahj \fin .
\end{align}
Thus, with the identifications $\thptj\equiv\htrchj$ and $\thsptjs\equiv\htrchjs$,
the first assertion follows. Now, let $\{D_n\}_{n\in\nat}$ be a $\ptrnor$-convergent
sequence in $\hstahj$. Clearly, by its definition, $\hstahj$ is a convex subset of
$\thptj$, and --- since the trace norm $\ttrnor$ is dominated by the projective norm
$\ptrnor$ on $\thptj$, and, moreover, $\stahj$ is a $\ttrnor$-closed subset of $\trchj$
--- we have that
\begin{equation}
\ttrnorli D_n=\ptrnorli D_n=D\in\stahj\cap\thptj\ifed\hstahj \fin .
\end{equation}
Therefore, $\hstahj$ is a $\ptrnor$-closed subset of $\thptj$. Recall that the projective norm
and the Hermitian projective norm are equivalent on $\thsptjs$, hence, a subset of $\thsptjs$ is
$\ptrnortr$-closed iff it is $\ptrnor$-closed.
\end{proof}

Another noteworthy fact about the convex set $\hstahj$ is that it is $\ttrnor$-dense in $\stahj$.
To clarify this point, let us consider the following subset of the Hilbert space $\hhjj$:
\begin{equation}
\hjo\defi\{\va\in\hhjj\setminus\{0\}\colon 1\le\srank(\va)<\infty\} \fin ,
\end{equation}
where $\srank(\va)$ is the Schmidt rank of $\va\neq 0$ (Fact~\ref{schmdec}). Clearly, by the
Schmidt decomposition of a vector in $\hhjj$, it is clear that $\hjo$ is norm-dense
in $\hhjj$. We also consider the associated subset $\pustahjo$ of $\pustahj$, defined by
\begin{equation}
\pustahjo\defi\{\vaa\in\pustahj\colon\va\in\hjo,\ \|\va\|=1\} \fin .
\end{equation}

\begin{proposition}
$\stahj=\clco(\pustahjo)=\clttrnor(\hstahj)$.
\end{proposition}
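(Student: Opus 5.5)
We establish the chain of inclusions
\[
\clco(\pustahjo)\subset\clttrnor(\hstahj)\subset\stahj\subset\clco(\pustahjo),
\]
which forces all three sets to coincide.

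\textbf{Step 1: finite Schmidt rank pure states are cross states.} Let $\va\in\hjo$ with $\|\va\|=1$, and write its Schmidt decomposition (Fact~\ref{schmdec}) as $\va=\sum_{n=1}^{\sra}a_n\tre(\phi_n\otimes\psi_n)$ with $\sra<\infty$. Then
\[
\vaa=\sum_{m,n=1}^{\sra}a_m a_n\tre\big(|\phi_m\rangle\hspace{-0.3mm}\langle\phi_n|\otimes|\psi_m\rangle\hspace{-0.3mm}\langle\psi_n|\big).
\]
This is a finite sum of elementary tensors, so it lies in $\thatj\subset\thptj$. Being also a density operator, $\vaa\in\hstahj$. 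Since $\hstahj$ is convex (Proposition~\ref{prcrost}), we get $\co(\pustahjo)\subset\hstahj$. Taking trace-norm closures gives $\clco(\pustahjo)\subset\clttrnor(\hstahj)$.

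\textbf{Step 2: the middle inclusion.} We have $\hstahj\subset\stahj$ by definition, and $\stahj$ is $\ttrnor$-closed. Hence $\clttrnor(\hstahj)\subset\stahj$.

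\textbf{Step 3: the closing inclusion.} By Fact~\ref{propro} applied to $\hhjj$, $\stahj=\clco(\pustahj)$. By relation~\eqref{reclco}, it therefore suffices to show
\[
\pustahj\subset\clttrnor(\pustahjo),
\]
since then $\clco(\pustahj)\subset\clco\big(\clttrnor(\pustahjo)\big)=\clco(\pustahjo)$.

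To prove this, take a unit vector $\va\in\hhjj$. Truncate its Schmidt decomposition to get $\va_N=\sum_{n\le N}a_n\tre(\phi_n\otimes\psi_n)$, and normalize to $\va_N'=\va_N/\|\va_N\|$. Then $\va_N'\in\hjo$ and $\va_N'\to\va$ in norm. Finally, use the elementary estimate for unit vectors
\[
\big\||\va\rangle\hspace{-0.3mm}\langle\va|-|\vc\rangle\hspace{-0.3mm}\langle\vc|\big\ttrn\le 2\tre\|\va-\vc\|,
\]
obtained by writing the difference as $|\va\rangle\hspace{-0.3mm}\langle\va-\vc|+|\va-\vc\rangle\hspace{-0.3mm}\langle\vc|$ and using the cross property of rank-one operators. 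This shows that the projections onto $\va_N'$ converge in trace norm to $\vaa$.

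\textbf{Main obstacle.} No step is genuinely hard. The only point needing care is Step~3: the normalization $\|\va_N\|\to1$ and the trace-norm continuity of $\va\mapsto\vaa$ on the unit sphere. Both are routine.
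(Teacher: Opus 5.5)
Your proof is correct and follows the route the paper indicates. The paper gives no formal proof of this proposition; it relies on the norm density of finite-Schmidt-rank vectors obtained by truncating Schmidt decompositions, on the fact that such pure states are finite sums of elementary tensors and hence cross states, and on $\stahj=\clco(\pustahj)$. Your explicit trace-norm bound $2\|\va-\vc\|$ for the difference of two rank-one projections supplies the one continuity step the paper leaves implicit.
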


According to a well-known result~\cite{Reed}, the bipartite trace class $\trchj$ is a two-sided
$\ast$-ideal in $\bophj$: $\mbox{$K\in\trchj$, $L,M\in\bophj$}\implies\mbox{$K^\ast\in\trchj$
and $LKM\in\trchj$}$. For the cross trace class $\thptj\equiv\htrchj\subset\trchj\subset\bophj$,
a weaker --- yet important --- property holds; namely, $C\in\thptj\implies C^\ast\in\thptj$
(Corollary~\ref{corcro}) and, moreover, one can prove the following result (and the consequent
Corollary~\ref{corbou}):

\begin{proposition} \label{probou}
Given any cross trace class operator $C\in\thptj\equiv\htrchj$, and any bounded operators
$A,E\in\bop$ and $B,F\in\bopjj$, the trace class operator $(A\otimes B)\tre C(E\otimes F)$
is contained in the cross trace class $\thptj$ too, and the following relation holds:
\begin{equation} \label{relbou}
\|(A\otimes B)\tre C(E\otimes F)\ptrn\le \|A\nori \|B\nori\tre \|E\nori\tre \|F\nori\tre \|C\ptrn .
\end{equation}
Moreover, if, in particular, $C\in\thsptjs$, $E=A^\ast$ and $F=B^\ast$, then the selfadjoint
trace class operator $(A\otimes B)\tre C(E\otimes F)=(A\otimes B)\tre C(A\otimes B)^\ast$ is
contained in the selfadjoint cross trace class $\thsptjs$ too, and
\begin{equation} \label{relbou-bis}
\ptrntr{(A\otimes B)\tre C(A\otimes B)^\ast}\le (\|A\nori\tre \|B\nori)^2\sei \ptrntr{C} .
\end{equation}
\end{proposition}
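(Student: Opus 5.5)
We will work directly with the standard decompositions of Corollary~\ref{corcro}. Given $C\in\thptj$ and $\epsilon>0$, pick a standard decomposition $C=\sum_k r_k\tre(X_k\otimes Y_k)$, with $r_k\ge 0$, $\|X_k\trn\tre\|Y_k\trn=1$ and $\sum_k r_k\le\|C\ptrn+\epsilon$. The series converges absolutely with respect to $\ttrnor$.

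The map $\trchj\ni K\mapsto (A\otimes B)\tre K(E\otimes F)\in\trchj$ is bounded, since $\trchj$ is a two-sided ideal with $\|LKM\ttrn\le\|L\tnori\|K\ttrn\|M\tnori$. Therefore it may be applied term by term. Using the product rule $(A\otimes B)(X\otimes Y)(E\otimes F)=(AXE)\otimes(BYF)$, we get
\begin{equation*}
(A\otimes B)\tre C(E\otimes F)=\ttrnsum_k r_k\tre\big((AX_kE)\otimes(BY_kF)\big) \fin .
\end{equation*}

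By the ideal inequalities of Subsection~\ref{operators}, each term satisfies
\[
\|AX_kE\trn\tre\|BY_kF\trn\le\|A\nori\|E\nori\|B\nori\|F\nori.
\]
Hence the series is absolutely summable with respect to $\ptrnor$ (Remark~\ref{remabs}). After renormalising each nonzero term so that its two factors have unit product norm, with the coefficient absorbing the product of the norms, we obtain a standard decomposition of the form~\eqref{stardeco}. Thus $(A\otimes B)\tre C(E\otimes F)\in\thptj$, and by~\eqref{pronorm-tris} its norm is at most $\|A\nori\|B\nori\|E\nori\|F\nori(\|C\ptrn+\epsilon)$. Letting $\epsilon\to 0$ gives~\eqref{relbou}. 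The only point requiring care is that the $\ttrnor$-limit of the transformed series coincides with the element of $\thptj$ it defines. This is exactly Remark~\ref{remimme} and Remark~\ref{remide}: the sums with respect to the two norms agree.

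For the selfadjoint claim, take $C\in\thsptjs$ and a Hermitian standard decomposition with $X_k\in\trcsa$, $Y_k\in\trcsajj$ and $\sum_k r_k\le\ptrntr{C}+\epsilon$ (Corollary~\ref{corcro}, \eqref{pronorm-sa}, Remark~\ref{obvrem}). With $E=A^\ast$ and $F=B^\ast$, the factors $AX_kA^\ast$ and $BY_kB^\ast$ are again selfadjoint, so the transformed series is a Hermitian standard decomposition after renormalisation. Moreover $\|AX_kA^\ast\trn\le\|A\nori^2\|X_k\trn$, and similarly for the $\jj$-factor. The same estimate as above then yields~\eqref{relbou-bis}. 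Membership in $\thsptjs$ also follows independently from~\eqref{eqhtrchjs}, since the operator is selfadjoint and lies in $\thptj$.

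I expect no real obstacle here. The step that deserves the most attention is justifying the termwise application of the two-sided multiplication and the identification of limits, and both are covered by the earlier remarks.
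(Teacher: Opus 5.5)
Your proof is correct. The paper states Proposition~\ref{probou} without proof, and your termwise argument on an $\epsilon$-optimal (Hermitian) standard decomposition is the route its machinery supports at that point; it is the same technique used for the Banach-algebra part of Corollary~\ref{corcro} and in Proposition~\ref{proimm}.

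The limit-identification step you flag is actually automatic. Corollary~\ref{corcro} characterizes $\thptj$ and $\thsptjs$, and bounds $\ptrnor$ and $\ptrnortr$, directly through absolutely $\ttrnor$-convergent (Hermitian) standard decompositions. The terms with $AX_kE=0$ or $BY_kF=0$ can simply be dropped.
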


\begin{corollary} \label{corbou}
For any $C\in\thptj$ and $L,M\in\bop\altp\bopjj$, $L\tre CM\in\thptj$.
\end{corollary}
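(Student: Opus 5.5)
Corollary~\ref{corbou} follows from Proposition~\ref{probou} together with the linear structure of $\thptj$. By the definition of the algebraic tensor product $\bop\altp\bopjj$ (Remark~\ref{reboplin}), every $L,M\in\bop\altp\bopjj$ can be written as finite sums of elementary tensors:
\begin{equation}
L=\sum_{i=1}^{p} A_i\otimes B_i \fin , \qquad M=\sum_{j=1}^{q} E_j\otimes F_j \fin ,
\end{equation}
with $A_i,E_j\in\bop$ and $B_i,F_j\in\bopjj$.

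Since operator composition on $\bophj$ is bilinear, we expand
\begin{equation}
L\tre CM=\sum_{i=1}^{p}\sum_{j=1}^{q}(A_i\otimes B_i)\tre C\tre(E_j\otimes F_j) \fin .
\end{equation}
Here $C\in\thptj\subset\trchj\subset\bophj$, so every product is a well-defined bounded operator. By Proposition~\ref{probou}, each summand $(A_i\otimes B_i)\tre C\tre(E_j\otimes F_j)$ lies in the cross trace class $\thptj$. Moreover, $\thptj\equiv\htrchj$ is a complex linear subspace of $\trchj$: it is the range of the linear map $\imme$, as noted in Corollary~\ref{corcro} and Remark~\ref{remide}. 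A finite sum of elements of $\thptj$ therefore stays in $\thptj$, which gives $L\tre CM\in\thptj$.

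If a quantitative bound is wanted, the triangle inequality for $\ptrnor$ combined with~\eqref{relbou} gives
\begin{equation}
\|L\tre CM\ptrn\le\Big(\sum_i\|A_i\nori\|B_i\nori\Big)\Big(\sum_j\|E_j\nori\|F_j\nori\Big)\|C\ptrn \fin ,
\end{equation}
and one may take the infimum over all representations of $L$ and $M$.

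Nothing in this argument is a real obstacle. The only point to check is that the decompositions of $L$ and $M$ are finite, so no convergence issue arises. That finiteness is exactly why the statement is restricted to the algebraic tensor product $\bop\altp\bopjj$ and not extended to all of $\bophj$.
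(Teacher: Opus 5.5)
Your proof is correct and is essentially the argument the paper intends: the paper states Corollary~\ref{corbou} without proof as a direct consequence of Proposition~\ref{probou}. That consequence is obtained exactly as you do, by expanding $L$ and $M$ into finitely many elementary tensors and using that $\thptj$ is a linear subspace of $\trchj$.
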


Among all standard decompositions --- see Definition~\ref{destade} --- of a cross trace class operator,
there are some that deserve a special focus. Therefore, coherently with Definition~\ref{optimality},
and adopting the new notations established in Remark~\ref{remide} and Definition~\ref{defide} for
the Banach spaces of cross trace class operators, we set the following:

\begin{definition} \label{destanda}
A standard decomposition $\sum_k r_k \tre (X_k\otimes Y_k)$ of $C\in\trc\prtp\trcjj\equiv\htrchj$
--- with $r_k\ge 0$ and $X_k\otimes Y_k\in\ethtj$ ($\sum_k r_k<\infty$, $\|X_k\trn\tre\|Y_k\trn=1$) ---
will be said to be $\ptrnor$-\emph{optimal} (or, simply, \emph{optimal}) if $\|C\ptrn=\sum_k r_k$,
and a cross trace class operator admitting an optimal standard decomposition will be called
$\ptrnor$-\emph{optimally decomposable} (or simply \emph{optimally decomposable}). We will denote
by $\optrn(C)$ the --- possibly empty --- set of all $\ptrnor$-optimal standard decompositions of
$C$, and by
\begin{equation}
\opd\defi\big\{C\in\thptj\colon\optrn(C)\neq\varnothing\big\}
\end{equation}
the set of all $\ptrnor$-optimally decomposable operators in $\thptj$.

A \emph{Hermitian} standard decomposition $\sum_k r_k \tre (X_k\otimes Y_k)$ of $C\in\thsptjs\equiv\htrchjs$
--- with $r_k\ge 0$ and $X_k\otimes Y_k\in\ethtjs$ ($\sum_k r_k<\infty$, $\|X_k\trn\tre\|Y_k\trn=1$) ---
will be said to be $\ptrnortr$-\emph{optimal} if $\ptrntr{C}=\sum_k r_k$; moreover, a selfadjoint cross
trace class operator admitting a $\ptrnortr$-optimal (Hermitian) standard decomposition will be called
\emph{$\ptrnortr$-optimally decomposable}. We will denote by $\optrntr(C)$ the --- possibly empty ---
set of all $\ptrnortr$-optimal (Hermitian) standard decompositions of $C$, and by
\begin{equation}
\opds\defi\big\{C\in\thsptjs\colon\optrntr(C)\neq\varnothing\big\}
\end{equation}
the set of all $\ptrnortr$-optimally decomposable operators in $\thsptjs$.
\end{definition}

\begin{remark} \label{relineal}
It is clear that $\opd$ and $\opds$ are \emph{lineal} subsets of $\thptj$ and $\thsptjs$, respectively;
namely, every complex or real multiple of a vector in $\opd$ or $\opds$, respectively, belongs to the
same set.
\end{remark}

\begin{remark}
A selfadjoint cross trace class operator $C\in\thsptjs$ admits both Hermitian standard decompositions and
standard decompositions that are \emph{not} Hermitian. In general, a $\ptrnortr$-optimal Hermitian standard
decomposition (if it exists) will \emph{not} be optimal \emph{tout court}; in particular, in the case where
$\ptrntr{C}>\|C\ptrn$, a $\ptrnortr$-optimal Hermitian standard decomposition --- if it exists --- \emph{cannot}
be optimal. Therefore, if $C$ is $\ptrnortr$-optimally decomposable, then it may not be optimally decomposable,
and \emph{vice versa}; but the picture becomes somewhat simpler if we restrict to \emph{Hermitian} standard
decompositions only (see Proposition~\ref{proptimals} below).
\end{remark}

\begin{proposition} \label{proptimals}
Let $C\in\thsptjs$. The following facts hold true:
\begin{enumerate}[label=\tt{(O\arabic*)}]

\item \label{opta}
If $C$ admits an optimal --- i.e., a $\ptrnor$-optimal --- \emph{Hermitian} standard decomposition,
then this decomposition is also $\ptrnortr$-optimal and $\ptrntr{C}=\|C\ptrn$.

\item \label{optb}
Clearly, if, conversely, $C$ admits a $\ptrnortr$-optimal (Hermitian) standard decomposition and,
moreover, $\ptrntr{C}=\|C\ptrn$, then this decomposition is also optimal.

\item \label{optc}
By the previous two points, the set
\begin{align}
\hoptrn(C) \defi \big\{
&
{\textstyle \sum_k r_k \tre (X_k\otimes Y_k)}\colon r_k\ge 0,\
X_k\otimes Y_k\in\ethtjs,\ {\textstyle \sum_k r_k}<\infty,
\nonumber \\ \label{sethoptrn}
&
\|X_k\trn\tre\|Y_k\trn=1, \ {\textstyle \sum_k r_k \tre (X_k\otimes Y_k)}=C,
\ \|C\ptrn={\textstyle \sum_k r_k}\big\} \fin ,
\end{align}
consisting of all optimal \emph{Hermitian} standard decompositions of $C$, is empty if $\ptrntr{C}>\|C\ptrn$;
while, if $\hoptrn(C)$ is nonempty, then it must coincide with the set $\optrntr(C)$ of all $\ptrnortr$-optimal
(Hermitian) standard decompositions of $C$ and $\ptrntr{C}=\|C\ptrn$. Therefore, either $\hoptrn(C)$ is empty
or $\hoptrn(C)=\optrntr(C)\neq\varnothing$, and, in the latter case, $\ptrntr{C}=\|C\ptrn$.

\end{enumerate}
\end{proposition}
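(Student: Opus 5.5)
The argument will use only three ingredients: the inequality $\|C\ptrn\le\ptrntr{C}$ from~\eqref{inenorms}, the definition~\eqref{pronorm-sa} of $\ptrntr{\cdot}$ as an infimum over Hermitian standard decompositions, and the definition of $\ptrnor$-optimality.

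For \ref{opta}, let $C=\sum_k r_k\tre(X_k\otimes Y_k)$ be a Hermitian standard decomposition, so $X_k\otimes Y_k\in\ethtjs$, $\|X_k\trn\tre\|Y_k\trn=1$ and $\sum_k r_k<\infty$. Assume it is $\ptrnor$-optimal, i.e.\ $\|C\ptrn=\sum_k r_k$. This decomposition is one of the competitors in the infimum~\eqref{pronorm-sa}, so $\ptrntr{C}\le\sum_k r_k=\|C\ptrn$. The reverse inequality $\|C\ptrn\le\ptrntr{C}$ is~\eqref{inenorms}. Hence $\ptrntr{C}=\|C\ptrn=\sum_k r_k$, which says exactly that the decomposition is $\ptrnortr$-optimal.

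For \ref{optb}, let the Hermitian decomposition satisfy $\ptrntr{C}=\sum_k r_k$, and assume $\ptrntr{C}=\|C\ptrn$. Then $\|C\ptrn=\sum_k r_k$. Since a Hermitian standard decomposition is in particular a standard decomposition, it is $\ptrnor$-optimal by Definition~\ref{destanda}.

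For \ref{optc}, first suppose $\hoptrn(C)\neq\varnothing$. By \ref{opta}, $\ptrntr{C}=\|C\ptrn$ and $\hoptrn(C)\subset\optrntr(C)$. Because $\ptrntr{C}=\|C\ptrn$ now holds, \ref{optb} gives $\optrntr(C)\subset\hoptrn(C)$, so the two sets coincide and are nonempty. Taking the contrapositive of the first step, if $\ptrntr{C}>\|C\ptrn$ then $\hoptrn(C)=\varnothing$. Since~\eqref{inenorms} excludes $\ptrntr{C}<\|C\ptrn$, the stated dichotomy follows.

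There is no real obstacle in this argument. The only point requiring care is that the infimum in~\eqref{pronorm-sa} really ranges over all absolutely convergent Hermitian standard decompositions, including infinite ones, so that any such decomposition bounds $\ptrntr{C}$ from above. This is guaranteed by Corollary~\ref{corcro}.
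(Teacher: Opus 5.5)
Your proof is correct, and it is the argument the paper intends: the paper gives no written proof, signalling it only with ``Clearly'' in \ref{optb} and ``By the previous two points'' in \ref{optc}. The essential ingredient is the sandwich $\|C\ptrn\le\ptrntr{C}\le\sum_k r_k$, valid for every Hermitian standard decomposition by~\eqref{inenorms} and~\eqref{pronorm-sa}, and you apply it to obtain all three points exactly as intended.
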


In the light of the previous result, we set the following:

\begin{definition}
A selfadjoint cross trace class operator $C\in\thsptjs$ is said to be \emph{Hermitian-optimally decomposable}
if the set $\hoptrn(C)$ defined by~\eqref{sethoptrn} is nonempty, i.e., if it admits an optimal Hermitian
standard decomposition (equivalently, by Proposition~\ref{proptimals}, if $C$ is $\ptrnortr$-optimally decomposable
and $\ptrntr{C}=\|C\ptrn$; also, if $\hoptrn(C)=\optrntr(C)\neq\varnothing$). We will denote by
\begin{align}
\hopd
& \defi
\big\{C\in\thsptjs\colon\hoptrn(C)\neq\varnothing\big\}
\nonumber \\
& \dodici =
\big\{C\in\opds\colon\ptrntr{C}=\|C\ptrn \big\}
\nonumber \\
& \dodici =
\big\{C\in\opds\colon\hoptrn(C)=\optrntr(C)\big\}
\nonumber \\ \label{multichar}
& \dodici \subset
\opds
\end{align}
the (lineal) set of all Hermitian-optimally decomposable selfadjoint cross trace class operators.
\end{definition}

The various equivalent characterizations of $\hopd$ in~\eqref{multichar} follow easily from
Proposition~\ref{proptimals}; precisely, the characterization on the second line follows from~\ref{opta}
and~\ref{optb}, whereas the characterization on the third line follows from~\ref{optc}.

\begin{proposition} \label{proexods}
If both $\hh$ and $\jj$ are finite-dimensional Hilbert spaces --- $\enne\equiv\dim(\hhjj)<\infty$ ---
then every linear operator in $\thptj$ is $\ptrnor$-optimally decomposable, and every selfadjoint linear
operator in $\thsptjs$ is $\ptrnortr$-optimally decomposable. In the first case, there is a $\ptrnor$-optimal
standard decomposition of the linear operator consisting of, at most, $2\enne^2+1$ simple tensors; in the
second case, a $\ptrnortr$-optimal Hermitian standard decomposition consisting of, at most, $\enne^2+1$
simple tensors.
\end{proposition}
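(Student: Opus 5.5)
The plan is to use the standard finite-dimensional argument: compactness plus Carathéodory's theorem. Put $\enne\equiv\dim(\hhjj)<\infty$. Then $\thatj$ is already complete and coincides with $\trchj$, so $\thptj$ is the space of all linear operators on $\hhjj$, a complex vector space of complex dimension $\enne^2$, hence of real dimension $2\enne^2$. Consider the set
\begin{equation*}
K\defi\{X\otimes Y\colon X\in\trc,\ Y\in\trcjj,\ \|X\trn=\|Y\trn=1\}=\nb\big(\sph(\trc),\sph(\trcjj)\big).
\end{equation*}
The unit spheres are compact, because the spaces are finite-dimensional, and $\nb$ is continuous. Hence $K$ is compact. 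It is also balanced, since $cX\otimes Y\in K$ for $|c|=1$. By Theorem~\ref{thdecop} (or simply by the definition~\eqref{pronorm} with the normalization of Remark~\ref{obvrem}), the closed unit ball of $\ptrnor$ is the closure of $\co(K)$.

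The first step is to show that $\co(K)$ is itself compact, hence closed. For this I would use Carathéodory's theorem in the real space of dimension $2\enne^2$: every point of $\co(K)$ is a convex combination of at most $2\enne^2+1$ points of $K$. So $\co(K)$ is the continuous image of the compact set $\Delta_{2\enne^2}\times K^{2\enne^2+1}$, where $\Delta_{2\enne^2}$ is the standard simplex, and is therefore compact. It follows that the unit ball equals $\co(K)$.

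The second step treats a given $C\neq 0$. The operator $C/\|C\ptrn$ lies in the unit ball, so it can be written as $\sum_{k=1}^{m}p_k\,(X_k\otimes Y_k)$ with $m\le 2\enne^2+1$, $p_k\ge0$, $\sum_k p_k=1$ and $X_k\otimes Y_k\in K$. Setting $r_k=p_k\|C\ptrn$ gives a standard decomposition whose coefficients sum to exactly $\|C\ptrn$. This is optimal and has at most $2\enne^2+1$ terms. The case $C=0$ is trivial.

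The Hermitian case is handled the same way. Take $K_{\mathbb R}=\nb(\sph(\trcsa),\sph(\trcsajj))$, which is compact and symmetric ($-K_{\mathbb R}=K_{\mathbb R}$). The real space $\thsptjs=\trchjsa$ has real dimension $\enne^2$, so Carathéodory gives at most $\enne^2+1$ points. The unit ball of $\ptrnortr$ is the closure of $\co(K_{\mathbb R})$, by the definition~\eqref{defptrntr} and the symmetry of $K_{\mathbb R}$, and it equals $\co(K_{\mathbb R})$ by the same compactness argument.

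The only delicate point is identifying the unit ball with $\overline{\co}(K)$, and in particular absorbing the scalar coefficients. In a decomposition $\sum_k S_k\otimes T_k$, each term is $\|S_k\trn\|T_k\trn$ times an element of $K$ (for nonzero terms). In the complex case no sign or phase issue arises, since phases can be moved into $X_k$. In the real case the coefficients are real and of either sign, and the symmetry of $K_{\mathbb R}$ absorbs the signs. Once this identification is in place, $\|C\|\le1$ implies that $C$ is a limit of elements of $\co(K)$, which gives the inclusion needed.
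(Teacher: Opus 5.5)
Your argument is correct. It differs from the paper's proof in one respect. The paper works with the compact set $\nb(\ball(\trc),\ball(\trcjj))$ and shows that the unit ball of $\thptj$ equals its convex hull. It then invokes Milman's partial converse of the Krein--Milman theorem to place $\ext(\ball(\thptj))$ inside $\nb(\sph(\trc),\sph(\trcjj))$. Finally, it uses the Minkowski--Carath\'eodory theorem to write a unit-norm $C$ as a convex combination of at most $2\enne^2+1$ extreme points.

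You instead work directly with the normalized tensors $K=\nb(\sph(\trc),\sph(\trcjj))$ and identify the unit ball with $\clco(K)=\co(K)$. Plain Carath\'eodory then gives the same bound, so your route needs no extreme-point theory. The paper's route gives, as a byproduct, information about the extreme points of the unit ball, which it mentions parenthetically.

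One small point should be made explicit. Balancedness (respectively symmetry) of $K$ is what puts $0$ in $\co(K)$. You need this to place a decomposition with $\sum_k r_k<1$ inside $\co(K)$. Alternatively, since you only apply the identification to $C$ with $\|C\ptrn=1$, you can avoid it: for near-optimal decompositions, $C/\sum_k r_k\in\co(K)$, and these points converge to $C$.
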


\begin{proof}
By Remark~\ref{relineal}, it is sufficient to prove the result for any $C\in\thptj$ such that
$\|C\ptrn=1$ (respectively, for any $C\in\thsptjs$ such that $\ptrntr{C}=1$). Then, let $C$ belong
to the unit \emph{sphere} $\sph\big(\thptj\big)\defi\big\{C\in\thptj\colon\|C\ptrn=1\big\}$
in the $\enne^2$-dimensional \emph{complex} linear space $\bophj=\thptj$ (isomorphism of
linear spaces), which, regarded as a \emph{real} linear space, by field restriction, has
dimension $2\enne^2$. It is known --- see, e.g., Proposition~{2.2} of~\cite{Ryan} (also
see Proposition~{7.D} of~\cite{Kubrusly}) --- that the closed unit \emph{ball} in $\thptj$
coincides with the closed convex hull of its subset $\nb(\ball(\trc),\ball(\trcjj))$, where
the closure is taken wrt any norm topology, and $\ball(\trc)$, $\ball(\trcjj)$ are the
closed unit balls in $\trc$ and $\trcjj$, respectively; i.e.,
\begin{equation}
\ball\big(\thptj\big)=\clco\big(\nb(\ball(\trc),\ball(\trcjj))\big)=
\co\big(\nb(\ball(\trc),\ball(\trcjj))\big) .
\end{equation}
The second equality holds because $\enne\equiv\dim(\hhjj)<\infty$, hence the set
$\nb(\ball(\trc),\ball(\trcjj))$ and --- as a consequence of Carath\'eodory's theorem
(see, e.g., Theorem~{17.2} of~\cite{Rockafellar}) --- also its convex hull are compact
(wrt to any norm on $\bophj$). Now --- by Milman's partial converse of the Krein-Milman
Theorem (see Theorem~{2.10.15} of~\cite{Megginson}), and since the extreme points of the
unit ball $\ball(\bsb)$ of a Banach space $\bsb$ must be contained in the unit sphere
$\sph(\bsb)$ (Fact~\ref{facextp}) --- $\ext\big(\ball\big(\thptj\big)\big)\subset
\nb(\ball(\trc),\ball(\trcjj))\cap\sph(\thptj)=\nb(\sph(\trc),\sph(\trcjj))$
(one can actually prove that the extreme points of the unit ball of $\thptj$ are given
by $\ext\big(\ball\big(\thptj\big)\big)=\nb(\paih,\paij)\subset\sph\big(\thptj\big)$,
where $\paih$, $\paij$ are sets of rank-one partial isometries on $\hh$ and $\jj$, respectively).
Therefore, by the Minkowski-Carath\'eodory theorem (see, e.g., Theorem~{8.11} of~\cite{Simon_conv}),
and considering that the \emph{real} dimension of the linear space $\bophj=\thptj$ is $2\enne^2$,
$C$ can be expressed as a convex combination of --- at most, $2\enne^2+1$ --- simple tensors, i.e.,
\begin{equation}
C=\sum_{k=1}^{\ka} r_k \tre (X_k\otimes Y_k) \fin , \quad 1\le\ka\le 2\enne^2+1 \fin ,
\end{equation}
with $r_k> 0$, $\sum_k r_k=1=\|C\ptrn$ and $X_k\otimes Y_k\in\ext\big(\ball\big(\thptj\big)\big)
\subset\nb(\sph(\trc),\sph(\trcjj))$. Therefore, $C$ admits a $\ptrnor$-optimal standard decomposition
consisting of, at most, $2\enne^2+1$ simple tensors. The case of the $\enne^2$-dimensional \emph{real}
linear space $\bophjsa=\thsptjs$ is analogous.
\end{proof}

\begin{corollary} \label{corhop}
If both $\hh$ and $\jj$ are finite-dimensional Hilbert spaces, then
\begin{align}
\hopd
&=
\big\{C\in\thsptjs\colon\ptrntr{C}=\|C\ptrn\big\}
\nonumber \\
& =
\big\{C\in\thsptjs\colon\hoptrn(C)=\optrntr(C)\big\} .
\end{align}
\end{corollary}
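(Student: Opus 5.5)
This is a direct combination of Proposition~\ref{proexods} with the characterizations of $\hopd$ in~\eqref{multichar}, so the plan is to prove the two displayed equalities separately, each as a pair of inclusions.

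\emph{First equality.} By~\eqref{multichar} we already have
\begin{equation*}
\hopd=\big\{C\in\opds\colon\ptrntr{C}=\|C\ptrn\big\} ,
\end{equation*}
so the inclusion of $\hopd$ in $\{C\in\thsptjs\colon\ptrntr{C}=\|C\ptrn\}$ is immediate. For the reverse inclusion, take $C\in\thsptjs$ with $\ptrntr{C}=\|C\ptrn$. Since $\dim(\hh)$ and $\dim(\jj)$ are finite, Proposition~\ref{proexods} shows that $C$ is $\ptrnortr$-optimally decomposable, i.e.\ $C\in\opds$. Then point~\ref{optb} of Proposition~\ref{proptimals} shows that any $\ptrnortr$-optimal Hermitian standard decomposition of $C$ is also $\ptrnor$-optimal, so $\hoptrn(C)\neq\varnothing$ and $C\in\hopd$.

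\emph{Second equality.} Again using~\eqref{multichar}, we have $\hopd=\{C\in\opds\colon\hoptrn(C)=\optrntr(C)\}$. In finite dimensions Proposition~\ref{proexods} gives $\opds=\thsptjs$, so this set equals $\{C\in\thsptjs\colon\hoptrn(C)=\optrntr(C)\}$. To be careful, the reverse inclusion can also be checked directly. If $\hoptrn(C)=\optrntr(C)$, then $\optrntr(C)\neq\varnothing$ by Proposition~\ref{proexods}, so $\hoptrn(C)$ is nonempty and $C\in\hopd$.

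The only step that is not a pure rewriting is the use of Proposition~\ref{proexods}, which requires both Hilbert spaces to be finite-dimensional; this is exactly the hypothesis here. The remaining work is bookkeeping with Proposition~\ref{proptimals}. The one point needing care is the second equality, where we must make sure that the condition $\hoptrn(C)=\optrntr(C)$ is not satisfied with both sides empty. Nonemptiness of $\optrntr(C)$, supplied by Proposition~\ref{proexods}, excludes that case.
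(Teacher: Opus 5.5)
Your proposal is correct and follows the paper's route. The corollary is just the characterizations~\eqref{multichar} of $\hopd$ with $\opds$ replaced by $\thsptjs$, a replacement justified by Proposition~\ref{proexods}. Your explicit check that $\optrntr(C)\neq\varnothing$, which rules out the vacuous case $\hoptrn(C)=\optrntr(C)=\varnothing$, is the point that makes the substitution legitimate in the second equality.
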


\subsection{Natural isometric embeddings and some consequences}
\label{natisom}

Given closed subspaces $\vv$, $\ww$ of the Hilbert spaces $\hh$ and $\jj$, respectively, it is
quite natural to wonder how the Banach spaces $\trcvv\prtp\trcww$, $\trcvvs\prtp\trcwws$ are
related to $\thptj$ and $\thsptjs$. Before stating the main results, we start with some useful
preliminary facts.

First note that, for any closed subspace $\vv$ of the Hilbert space $\hh$, and every trace class operator
$S\in\trcvv$, there is a natural (isometric, linear) embedding of $S$ in $\trc$ --- let us still denote it by
the same symbol $S$ --- that is defined by putting $S\phi=S(\phi_1 +\phi_2)\defi S\phi_1$, where $\phi=\phi_1 +\phi_2$
is the decomposition of a generic vector $\phi\in\hh$ associated with the orthogonal sum decomposition
$\hh=\vv\oplus\vv^\perp$, where $\vv^\perp$ is the orthogonal complement of $\vv$ (which is a closed subspace of
$\hh$ too). Denoting by $\pi_1$, $\pi_2=I-\pi_1$ the orthogonal projections onto $\vv$ and $\vv^\perp$, respectively,
we see that $S\phi=S\tre\pi_1\phi=\pi_1 S\phi$ --- thus, $[S,\pi_1]=0=[S,I-\pi_1]=[S,\pi_2]$ and $\pi_2 S\tre\pi_2=S\tre\pi_2=0$
--- and hence $S=\pi_1 S\tre\pi_1+\pi_1 S\tre\pi_2+\pi_2 S\tre\pi_1+\pi_2 S\tre\pi_2=\pi_1 S\tre\pi_1+\pi_2 S\tre\pi_2
=\pi_1 S\tre\pi_1$.

Given any closed subspace $\ww$ of the Hilbert space $\jj$, we then also get a natural (injective,
linear) embedding of $\trcvv\altp\trcww$ into $\thatj$, that is isometric wrt the trace norms of the
Banach spaces $\trcvw$ and $\trchj$. Analogously, we have a natural (injective, linear) embedding of
$\trcvvs\altp\trcwws$ into $\thsatjs$, that is isometric wrt the trace norms of the Banach spaces
$\trcvwsa$ and $\trchjsa$.

\begin{proposition} \label{proimm}
Let $\vv$, $\ww$ be closed subspaces of the Hilbert spaces $\hh$ and $\jj$, respectively, and let
$\pi\in\bop$, $\varpi\in\bopjj$ be the orthogonal projections onto $\vv$ and $\ww$. The natural
--- injective, linear --- embedding map of $\trcvv\altp\trcww$ into $\trc\altp\trcjj$ is isometric
wrt the projective norms of these spaces. Hence, it extends to a linear isometry $\isot$ of the
Banach space $\trcvv\prtp\trcww$ into the Banach space $\thptj$; moreover, for ever $C\in\thptj$,
$(\pi\otimes\varpi)\tre C(\pi\otimes\varpi)$ is a cross trace class operator too, and
\begin{align}
\ran(\isot)
& =
\big\{C\in\thptj\colon C=(\pi\otimes\varpi)\tre C (\pi\otimes\varpi)\big\}
\nonumber \\
& =
\big\{(\pi\otimes\varpi)\tre C (\pi\otimes\varpi)\colon C\in\thptj\big\} \fin .
\end{align}
The linear map $\prot\colon\thptj\ni C\mapsto(\pi\otimes\varpi)\tre C (\pi\otimes\varpi)\in\thptj$
is a continuous projection; hence, $\ran(\prot)=\ran(\isot)$ is a complemented subspace of $\thptj$.

An analogous result holds for the Banach spaces $\trcvvs\prtp\trcwws$ and $\thsptjs$.
\end{proposition}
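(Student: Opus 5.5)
The plan is to prove the isometry on the algebraic level first, using the compression by $\pi\otimes\varpi$ as a contractive left inverse of the embedding. Then I will extend by continuity and identify the range with the help of Proposition~\ref{probou}.

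\emph{Step 1: isometry on the algebraic tensor product.} Let $C\in\trcvv\altp\trcww$, regarded via the natural embedding as an element of $\thatj$. Any finite decomposition $C=\sum_k S_k\otimes T_k$ with $S_k\in\trcvv$, $T_k\in\trcww$ is also admissible in $\trc\altp\trcjj$, and the embeddings $\trcvv\hookrightarrow\trc$, $\trcww\hookrightarrow\trcjj$ are trace-norm isometric. Hence the projective norm computed in $\thptj$ is at most the one computed in $\trcvv\prtp\trcww$.

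For the reverse inequality, take an arbitrary finite decomposition $C=\sum_k S_k\otimes T_k$ with $S_k\in\trc$, $T_k\in\trcjj$. Since $C=\pi_1 C\pi_1$ in the sense described before the statement, we get
\[
C=(\pi\otimes\varpi)C(\pi\otimes\varpi)=\sum_k (\pi S_k\pi)\otimes(\varpi T_k\varpi).
\]
Here $\pi S_k\pi$ restricts to an element of $\trcvv$, $\varpi T_k\varpi$ restricts to an element of $\trcww$, and $\|\pi S_k\pi\trn\le\|S_k\trn$, $\|\varpi T_k\varpi\trn\le\|T_k\trn$. Taking the infimum over all such decompositions gives the reverse inequality. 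So the embedding is isometric, and it extends uniquely to a linear isometry $\isot$ between the completions.

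\emph{Step 2: properties of $\prot$.} By Proposition~\ref{probou}, $\prot(C)=(\pi\otimes\varpi)C(\pi\otimes\varpi)$ lies in $\thptj$ and $\|\prot(C)\ptrn\le\|C\ptrn$. So $\prot$ is a bounded linear map, and it is idempotent because $\pi$ and $\varpi$ are projections. It follows that $\ran(\prot)=\{C\colon C=\prot(C)\}$, which is a closed complemented subspace. This gives the equality between the two set descriptions in the statement.

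\emph{Step 3: identifying $\ran(\isot)$ with $\ran(\prot)$.} For one inclusion, $\isot$ maps algebraic tensors into $\ran(\prot)$, $\isot$ is continuous, and $\ran(\prot)$ is closed; hence $\ran(\isot)\subset\ran(\prot)$.

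For the other inclusion, take $C\in\ran(\prot)$ with a standard decomposition $C=\sum_k r_k(X_k\otimes Y_k)$ as in Corollary~\ref{corcro}. Applying the continuous map $\prot$ termwise gives
\[
C=\sum_k r_k\big((\pi X_k\pi)\otimes(\varpi Y_k\varpi)\big).
\]
This series converges absolutely in $\thptj$, and each term is the image under $\isot$ of an element of $\trcvv\altp\trcww$ of norm at most $r_k$. The corresponding series $\sum_k r_k\,(\pi X_k\pi)\otimes(\varpi Y_k\varpi)$ therefore converges absolutely in the Banach space $\trcvv\prtp\trcww$. Since $\isot$ is continuous, its image is $C$, and so $C\in\ran(\isot)$.

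The main subtlety I expect is in Step 3. One must check that the termwise application of $\prot$ is legitimate, which follows from the continuity established in Step 2. One must also make sure that the compressions are read as operators on $\vv$ and $\ww$, so that the series genuinely lives in $\trcvv\prtp\trcww$.

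\emph{Self-adjoint case.} The real case is handled by the same argument. One uses $\ptrnortr$ in place of $\ptrnor$ and Hermitian decompositions throughout. The required contraction estimate is \eqref{relbou-bis}, applied with $A=\pi$, $B=\varpi$; it preserves self-adjointness because $\pi$ and $\varpi$ are self-adjoint.
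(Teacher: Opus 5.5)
Your proposal is correct and follows essentially the same route as the paper. Compression by $\pi\otimes\varpi$ gives equality of the two projective norms on $\trcvv\altp\trcww$, Proposition~\ref{probou} makes $\prot$ a bounded idempotent, and the range is identified by compressing a $\ptrnor$-convergent simple-tensor series termwise. The only cosmetic difference is how you get $\ran(\isot)\subset\ran(\prot)$: you use density together with closedness of $\ran(\prot)$, whereas the paper checks $\isot(\opc)=(\pi\otimes\varpi)\,\isot(\opc)\,(\pi\otimes\varpi)$ directly on a series representation of $\opc$.
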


\begin{proof}
Let $\isoh\colon\trcvv\rightarrow\trc$, $\isoj\colon\trcww\rightarrow\trcjj$ be the natural linear
isometric embeddings of $\trcvv$, $\trcww$ into $\trc$ and $\trcjj$, respectively. Let us denote
by $\pi\in\bop$, $\varpi\in\bopjj$ the orthogonal projections onto the closed subspaces $\vv$, $\ww$
of $\hh$ and $\jj$, respectively. Then, we have:
\begin{align}
\ran(\isoh)
& =
\{S\in\trc\colon S=\pi\tre S\tre\pi\}=\{\pi\tre S\tre\pi\colon S\in\trc\} \fin ,
\nonumber \\ \label{relisohj}
\ran(\isoj)
& =
\{T\in\trcjj\colon T=\varpi\sei T\tre\varpi\}=\{\varpi\sei T\tre\varpi\colon T\in\trcjj\} \fin .
\end{align}

Let us define the linear map $\isoh\otimes\isoj\colon\trcvv\altp\trcww\rightarrow\thptj$ as
the standard tensor product of the linear maps $\isoh$ and $\isoj$ (see, e.g., Section~{3.3}
of~\cite{Kubrusly}), i.e.,
$\big(\isoh\otimes\isoj\big)(\ops\otimes\opt)\defi\isoh(\ops)\otimes\isoj(\opt)$, for all
$\ops\in\trcvv$ and $\opt\in\trcww$. Here, with a slight abuse, we have assumed the codomain of
the map $\isoh\otimes\isoj$ to be $\thptj$ rather than $\thatj$. Since the maps $\isoh$ and
$\isoj$ are injective, then $\isoh\otimes\isoj$ is injective too, and the linear subspace
$\ran(\isoh\otimes\isoj)$ of $\thatj$ can be identified with --- i.e., is linearly isomorphic
to --- $\ran(\isoh)\altp\ran(\isoj)=\trcvv\altp\trcww$. Thus, by natural embedding,
$\trcvv\altp\trcww$ can be identified with the linear subspace of $\trc\altp\trcjj$ consisting of
all operators of the latter space \emph{living} on the closed subspace $\vvww$ of $\hhjj$; i.e.,
$\trcvv\altp\trcww=\ran(\isoh)\altp\ran(\isoj)\equiv\ran(\isoh\otimes\isoj)=
\big\{C\in\trc\altp\trcjj\colon C=(\pi\otimes\varpi)\tre C(\pi\otimes\varpi)\big\}$.
Note that, by this remark, denoting by $\vwptrnor$ the projective norm of the algebraic tensor product
$\trcvv\altp\trcww$, it is clear that $\|C\ptrn\le \vevw C\ptrn$, for all $C\in\trcvv\altp\trcww$,
because the evaluation of the projective norm $\vevw C\ptrn$ involves decompositions of $C$ into
elementary tensors \emph{living on the subspace $\vvww$ only}. Let us prove that the reverse
inequality holds too. In fact, for every $C\in\trcvv\altp\trcww$, if $C=\sum_k S_k\otimes T_k$
is any finite decomposition, then we also have that $C=\sum_k(\pi\otimes\varpi)\tre(S_k\otimes T_k)
(\pi\otimes\varpi)$; hence:
\begin{align}
\|C\ptrn
& \defi
\inf\big\{{\textstyle \sum_k \|S_k\trn\tre \|T_k\trn}\colon {\textstyle\sum_k
S_k\otimes T_k}=C=(\pi\otimes\varpi)\tre C(\pi\otimes\varpi) \big\}
\nonumber \\
& \dodici \ge
\inf\big\{{\textstyle \sum_k \|\pi\tre S_k\tre \pi\trn\tre \|\varpi\sei T_k\tre\varpi\trn}
\colon {\textstyle\sum_k S_k\otimes T_k}=C=(\pi\otimes\varpi)\tre C(\pi\otimes\varpi) \big\}
\nonumber \\
& \dodici =
\inf\big\{{\textstyle \sum_k \|S_k\trn\tre \|T_k\trn} \colon {\textstyle\sum_k
\tre S_k\otimes T_k}=C ,\ S_k=\pi\tre S_k\tre\pi, \ T_k=\varpi\sei T_k\tre\varpi\big\}
= \vevw C\ptrn \fin ,
\end{align}
where the infimum on the first line is extended over \emph{all} (finite) decompositions of $C$
into elementary tensors, and, for obtaining the second line, we have used the fact that, for any
$S\in\trc$ and $T\in\trcjj$, $\|\pi\tre S\tre\pi\trn\le(\|\pi\nori)^2\tre\|S\trn=\|S\trn$ and
$\|\varpi\sei T\tre\varpi\trn\le\|T\trn$.

Therefore, actually, for every $C\in\ran(\isoh\otimes\isoj)\equiv\trcvv\altp\trcww$, we have that
$\vevw C\ptrn=\|C\ptrn$; i.e., the linear map $\isoh\otimes\isoj\colon\trcvv\altp\trcww\rightarrow\thptj$
is isometric wrt the norms $\vwptrnor$ and $\ptrnor$. Hence, by continuous extension we obtain an
isometric embedding --- a linear isometry --- of $\trcvv\prtp\trcww$ into $\trc\prtp\trcjj$. Summarizing,
we have proved that there is a linear isometry $\isot\colon\trcvv\prtp\trcww\rightarrow\thptj$ such that
$\isot(\ops\otimes\opt)=\isoh(\ops)\otimes\isoj(\opt)$, for all $\ops\in\trcvv$ and $\opt\in\trcww$.
Hence, for every $\opc=\sum_k\ops_k\otimes\opt_k\in\trcvv\prtp\trcww$ --- convergence wrt the norm
$\vwptrnor$ being understood --- we have that
\begin{align}
\isot(\opc)=\ptrnsum_k\isoh(\ops_k)\otimes\isoj(\opt_k)
& =
\ptrnsum_k\big(\pi\isoh(\ops_k)\tre\pi\big)\otimes\big(\varpi\isoj(\opt_k)\tre\varpi\big)
\nonumber \\
& =
(\pi\otimes\varpi)\tre\big({\textstyle \ptrnsum_k\isoh(\ops_k)\otimes\isoj(\opt_k)}\big)
\tre(\pi\otimes\varpi)
\nonumber \\
& =
(\pi\otimes\varpi)\sei\isot(\opc)\sei(\pi\otimes\varpi) \fin .
\end{align}
Here, the first equality holds by the continuity of $\isot$, and the second equality holds by
relations~\eqref{relisohj}, whereas, for the third equality, observe that, by Proposition~\ref{probou},
the linear map
\begin{equation} \label{defprot}
\prot\colon\thptj\ni C\mapsto(\pi\otimes\varpi)\tre C (\pi\otimes\varpi)\in\thptj
\end{equation}
is well defined --- i.e., $(\pi\otimes\varpi)\tre C (\pi\otimes\varpi)\in\thptj$ --- and bounded. Therefore,
$\ran(\isot)\subset\big\{(\pi\otimes\varpi)\tre C (\pi\otimes\varpi)\colon C\in\thptj\big\}
=\big\{C\in\thptj\colon C=(\pi\otimes\varpi)\tre C (\pi\otimes\varpi)\big\}$. Let us show that
the reverse inclusion relation holds too. Indeed, for every $C=\ptrnsum_k S_k\otimes T_k\in\thptj$
such that $C=(\pi\otimes\varpi)\tre C (\pi\otimes\varpi)$, we have:
\begin{align}
C=\ptrnsum_k S_k\otimes T_k
& =
(\pi\otimes\varpi)\tre\big({\textstyle \ptrnsum_k S_k\otimes T_k}\big)\tre(\pi\otimes\varpi)
\nonumber \\
& =
\ptrnsum_k\big(\pi\tre S_k\tre\pi\big)\otimes\big(\varpi\tre T_k\tre\varpi\big)
\nonumber \\
& =
\ptrnsum_k\isoh(\ops_k)\otimes\isoj(\opt_k)
=\isot\big({\textstyle \sum_k\ops_k\otimes\opt_k}\big) \fin ,
\end{align}
for some sets $\{\ops_k\}\subset\trcvv$, $\{\opt_k\}\subset\trcww$ --- note:
$\pi\tre S_k\tre\pi\in\ran(\isoh)$, $\varpi\tre T_k\tre\varpi\in\ran(\isoj)$ ---
satisfying $\|\ops_k\trn=\|\isoh(\ops_k)\trn=\|\pi\tre S\tre\pi\trn\le\|S\trn$ and
$\|\opt_k\trn=\|\isoj(\opt_k)\trn=\|\varpi\sei T\tre\varpi\trn\le\|T\trn$, so that
$\sum_k\ops_k\otimes\opt_k$ is absolutely convergent wrt the norm $\vwptrnor$ of
$\trcvv\prtp\trcww$.

Now, let us consider the bounded linear map $\prot\colon\thptj\rightarrow\thptj$ defined
by~\eqref{defprot}.
Clearly, $\prot^2=\prot$; hence, $\prot$ is a continuous projection. Therefore, by a well-known
result, the closed subspace $\ran(\prot)=\ran(\isot)$ of $\thptj$ is complemented (see Corollary~{3.2.15}
of~\cite{Megginson}, or Remark~{5.3} of~\cite{Kubrusly}).

The case of the Banach spaces $\trcvvs\prtp\trcwws$ and $\thsptjs$ is analogous.
\end{proof}

\begin{proposition} \label{profidec}
Let $\pi\in\bop$, $\varpi\in\bopjj$ be the orthogonal projections onto finite-dimensional
subspaces $\vv$, $\ww$ of $\hh$ and $\jj$, respectively. Then, every cross trace class
operator $C\in\thptj$ such that $C=(\pi\otimes\varpi)\tre C(\pi\otimes\varpi)$ is optimally
decomposable. Analogously, every selfadjoint cross trace class operator $C\in\thsptjs$ such
that $C=(\pi\otimes\varpi)\tre C(\pi\otimes\varpi)$ is $\ptrnortr$-optimally decomposable.
\end{proposition}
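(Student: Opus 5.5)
The plan is to reduce Proposition~\ref{profidec} to the finite-dimensional result, Proposition~\ref{proexods}, by means of the isometric embedding of Proposition~\ref{proimm}.

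Let $C\in\thptj$ satisfy $C=(\pi\otimes\varpi)\tre C(\pi\otimes\varpi)$. By Proposition~\ref{proimm}, $C$ lies in $\ran(\isot)$. Hence there is a unique $\opc\in\trcvv\prtp\trcww$ with $\isot(\opc)=C$ and $\vevw\opc\ptrn=\|C\ptrn$.

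Since $\dim(\vv\otimes\ww)<\infty$, Proposition~\ref{proexods} applied to the pair $\vv$, $\ww$ yields a finite $\vwptrnor$-optimal standard decomposition
\begin{equation}
\opc=\sum_{k=1}^{\ka} r_k\tre(\ops_k\otimes\opt_k) \fin ,
\end{equation}
with $r_k\ge0$, $\|\ops_k\trn\tre\|\opt_k\trn=1$ and $\sum_k r_k=\vevw\opc\ptrn$.

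Next we apply $\isot$, using its linearity and the relation $\isot(\ops\otimes\opt)=\isoh(\ops)\otimes\isoj(\opt)$. This gives
\begin{equation}
C=\sum_{k=1}^{\ka} r_k\tre\big(\isoh(\ops_k)\otimes\isoj(\opt_k)\big) \fin .
\end{equation}
Because $\isoh$ and $\isoj$ are trace-norm isometries, we have $\|\isoh(\ops_k)\trn\tre\|\isoj(\opt_k)\trn=1$. So this is a standard decomposition of $C$ in $\thptj$. Its coefficient sum is
\begin{equation}
\sum_k r_k=\vevw\opc\ptrn=\|C\ptrn \fin ,
\end{equation}
where the last equality holds because $\isot$ is an isometry. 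The decomposition is therefore $\ptrnor$-optimal, so $C\in\opd$.

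The selfadjoint case follows the same route, using the real analogue of Proposition~\ref{proimm} (an isometry $\trcvvs\prtp\trcwws\to\thsptjs$ for the Hermitian projective norms) together with the second half of Proposition~\ref{proexods}.

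The only point needing care is whether the real analogue of Proposition~\ref{proimm} genuinely holds for $\ptrnortr$. Proposition~\ref{proimm} only asserts it as ``analogous'', so I would re-check its proof. The key step is compressing an arbitrary Hermitian finite decomposition $C=\sum_k S_k\otimes T_k$ by $\pi$ and $\varpi$. This compression preserves selfadjointness, since $(\pi S\pi)^\ast=\pi S^\ast\pi$, and it does not increase trace norms. So the same infimum argument goes through, giving equality of the norms on the algebraic tensor product and then, by continuity, on the completion.

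A further subtlety is that $\isot$ maps $\trcvv\prtp\trcww$ onto exactly the operators fixed by the compression $C\mapsto(\pi\otimes\varpi)\tre C(\pi\otimes\varpi)$, not merely into them. This surjectivity is also provided by Proposition~\ref{proimm}.
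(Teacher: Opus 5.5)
Your proposal is correct and follows the paper's proof essentially step for step. You identify $C$ with its preimage under the isometry $\isot$ of Proposition~\ref{proimm}, take a finite optimal decomposition in $\trcvv\prtp\trcww$ from Proposition~\ref{proexods}, and push it forward to an optimal decomposition of $C$. Your additional check of the Hermitian case (compression by $\pi$ and $\varpi$ keeps each factor selfadjoint and does not increase trace norms) supplies the detail that the paper leaves under ``analogously''.
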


\begin{proof}
By Proposition~\ref{proimm} --- given any pair $\pi\in\bop$, $\varpi\in\bopjj$ of orthogonal
projections --- if $C\in\thptj$ is such that $C=(\pi\otimes\varpi)\tre C(\pi\otimes\varpi)$,
then --- putting $\vv=\ran(\pi)$ and $\ww=\ran(\varpi)$, and considering linear isometry
$\isot\colon\trcvv\prtp\trcww\rightarrow\thptj$ --- we can identify $C$ with the unique element
$\opc$ of $\trcvv\prtp\trcww$ such that $\isot(\opc)=C$. Therefore, we have:
\begin{equation}
\|C\ptrn=\|(\pi\otimes\varpi)\tre C(\pi\otimes\varpi)\ptrn=
\vevw (\pi\otimes\varpi)\tre C(\pi\otimes\varpi)\ptrn \fin .
\end{equation}
Here, $\vwptrnor$ denotes the norm of the projective tensor product $\trcvv\prtp\trcww$, and, with
a slight abuse of notation, $C=(\pi\otimes\varpi)\tre C(\pi\otimes\varpi)$ is regarded as an element
of both $\thptj$ and $\trcvv\prtp\trcww$. If the subspaces $\vv$, $\ww$ are finite-dimensional, then,
by Proposition~\ref{proexods}, there is an optimal standard decomposition $\sum_k r_k\tre (X_k\otimes Y_k)$
of the operator $(\pi\otimes\varpi)\tre C(\pi\otimes\varpi)\in\trcvv\prtp\trcww$; i.e.,
$\sum_k r_k=\vevw (\pi\otimes\varpi)\tre C(\pi\otimes\varpi)\ptrn=\|C\ptrn$. Then, by the isometric
embedding of $\trcvv\prtp\trcww$ in $\thptj$, we obtain an optimal decomposition of $C$.
\end{proof}

We have argued that, if $\dim(\hhjj)<\infty$, then both $\opd=\thptj$ and $\opds=\thsptjs$
(see Proposition~\ref{proexods}); moreover, by Corollary~\ref{corhop}, the set $\hopd$
coincides with $\big\{C\in\thsptjs\colon\ptrntr{C}=\|C\ptrn\big\}$. Now, as a consequence
of Proposition~\ref{profidec}, we obtain the following natural generalization of this fact:

\begin{corollary}
The sets $\opd$ and $\opds$ are norm-dense in $\thptj$ and $\thsptjs$, respectively.
$\hopd$ is a norm-dense subset of $\big\{C\in\thsptjs\colon\ptrntr{C}=\|C\ptrn\big\}$.
\end{corollary}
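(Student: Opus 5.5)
The plan is to approximate a given cross trace class operator by its compressions onto finite-dimensional product subspaces, and then invoke Proposition~\ref{profidec}. Fix orthonormal bases of $\hh$ and $\jj$, and let $\pi_n\in\bop$ and $\varpi_n\in\bopjj$ be the orthogonal projections onto the spans of the first $n$ basis vectors (in finite dimension these eventually equal $I$). For $C\in\thptj$ put
\[
C_n\defi(\pi_n\otimes\varpi_n)\tre C(\pi_n\otimes\varpi_n).
\]
By Proposition~\ref{probou}, $C_n\in\thptj$. Since $C_n=(\pi_n\otimes\varpi_n)\tre C_n(\pi_n\otimes\varpi_n)$ and $\ran(\pi_n),\ran(\varpi_n)$ are finite-dimensional, Proposition~\ref{profidec} gives $C_n\in\opd$. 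If $C$ is selfadjoint, then $C_n\in\thsptjs$, and the same proposition gives $C_n\in\opds$.

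It remains to show that $\|C-C_n\ptrn\to 0$ (respectively $\ptrntr{C-C_n}\to 0$). Take a standard decomposition $C=\sum_k r_k\tre(X_k\otimes Y_k)$ with $\|X_k\trn=\|Y_k\trn=1$ (Corollary~\ref{corcro}). In the selfadjoint case take a Hermitian one, with $X_k,Y_k$ selfadjoint, so that the compressions below remain selfadjoint. By the boundedness of the map $\prot$ of Proposition~\ref{proimm},
\[
C_n=\sum_k r_k\tre(\pi_nX_k\pi_n)\otimes(\varpi_nY_k\varpi_n),
\]
and therefore
\[
\|C-C_n\ptrn\le\sum_k r_k\big(\|X_k-\pi_nX_k\pi_n\trn+\|Y_k-\varpi_nY_k\varpi_n\trn\big).
\]
The same bound holds for $\ptrntr{\cdot}$, because the difference telescopes into selfadjoint simple tensors. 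Each bracket tends to zero: for a trace class $X$ we have $\|X-\pi_nX\pi_n\trn\to0$ when $\pi_n\to I$ strongly, which follows by approximating $X$ in trace norm by finite rank operators. Each bracket is also bounded by $4$, and $\sum_k r_k<\infty$, so dominated convergence for series yields convergence. This proves the density of $\opd$ in $\thptj$ and of $\opds$ in $\thsptjs$.

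For the last claim, let $C\in\thsptjs$ with $\ptrntr{C}=\|C\ptrn$. The natural candidates are again the compressions $C_n\in\opds$, and by the computation above $\ptrntr{C_n}\to\ptrntr{C}$ and $\|C_n\ptrn\to\|C\ptrn$. However, $C_n\in\hopd$ requires the exact equality $\ptrntr{C_n}=\|C_n\ptrn$ (Corollary~\ref{corhop}, applied inside $\trcvv\prtp\trcww$ via the isometry of Proposition~\ref{proimm}, which preserves both norms). This exact equality is the main obstacle.

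My first attempt would be to show that compression by $\pi\otimes\varpi$ preserves the equality $\ptrntr{\cdot}=\|\cdot\ptrn$. The argument would take a nearly $\ptrnor$-optimal standard decomposition of $C$ whose value is close to $\ptrntr{C}$. It would then try to show that the compressed decomposition can be Hermitized without increasing its cost. The device for this is the splitting into parts $\xu,\xd$ used in the proof of~\eqref{eqhtrchjs}, combined with the fact that $C_n^\ast=C_n$ lets one average a decomposition with its adjoint.

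If that fails, I would replace $C_n$ by a modified approximant. The idea is to truncate a near-optimal Hermitian decomposition of $C$ and then perturb it slightly, inside the finite-dimensional space, towards an element of $\{\ptrntr{\cdot}=\|\cdot\ptrn\}$ there. The elements of $\nb(\trcsa,\trcsajj)$ and their positive-coefficient combinations attaining equality are the natural targets. The perturbation would be controlled so that it vanishes as $n\to\infty$.
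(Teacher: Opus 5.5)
Your argument for the first two assertions is correct and follows the intended route. The paper simply presents the corollary as a consequence of Proposition~\ref{profidec}. Your compressions $C_n$ live on finite-dimensional product subspaces, so they lie in $\opd$ (resp.\ $\opds$), and your dominated-convergence bound gives $\|C-C_n\ptrn\to0$ and $\ptrntr{C-C_n}\to0$.

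The genuine gap is the third assertion, which you do not prove. Take $C\in\thsptjs$ with $\ptrntr{C}=\|C\ptrn$. Proposition~\ref{profidec} and Corollary~\ref{corhop} reduce $C_n\in\hopd$ to the \emph{exact} equality $\ptrntr{C_n}=\|C_n\ptrn$. Your estimates only show that the defect $\ptrntr{C_n}-\|C_n\ptrn$ tends to $0$. The paper's one-line appeal to Proposition~\ref{profidec} does not address this point either.

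Neither of your fallbacks closes the gap.

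\emph{The first fallback targets a false statement:} the equality set is not stable under local compressions. Let $f_0,f_1$ be an orthonormal basis of $\ccc^2$. Take unit vectors $u_1,\dots,u_N\in\ccc^2$ and weights $w_k>0$ with $\sum_k w_k|u_k\rangle\langle u_k|=I$. Let $V\colon\ccc^2\to\ccc^N$ be the isometry $V\xi=\sum_k\sqrt{w_k}\,\langle u_k,\xi\rangle\, e_k$. Consider $D=\sum_k c_k\,|u_k\rangle\langle u_k|\otimes\varpi_k$, with $c_k\in\erre$ and $\varpi_k$ rank-one projections. Then $D=(V^\ast\otimes I)\,C\,(V\otimes I)$, where $C=\sum_k (c_k/w_k)\,|e_k\rangle\langle e_k|\otimes\varpi_k$. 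This $C$ lies in $\hopd$. Indeed, $K=\sum_k\sgn(c_k)\,|e_k\rangle\langle e_k|\otimes\varpi_k$ has $\|K\nog\le1$ by~\eqref{fornog} and Cauchy--Schwarz. Also $\tr(CK)=\sum_k|c_k|/w_k$, which is the cost of the obvious Hermitian decomposition of $C$. Hence $\|C\ptrn=\ptrntr{C}=\sum_k|c_k|/w_k$.

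Now choose $D=\Phi^+=|v\rangle\langle v|$ with $v=(f_0\otimes f_0+f_1\otimes f_1)/\sqrt2$. Via $\Phi^+=\frac14(I\otimes I+\sigma_x\otimes\sigma_x-\sigma_y\otimes\sigma_y+\sigma_z\otimes\sigma_z)$, it is a signed sum of ten products of rank-one projections, and $w_k\equiv1/5$ works. The compression $(VV^\ast\otimes I)\,C\,(VV^\ast\otimes I)$ is then a copy of $\Phi^+$, with the same norms by Proposition~\ref{proimm}. Here $\|\Phi^+\ptrn=2$ by Proposition~\ref{xproveca-bis}, while $\ptrntr{\Phi^+}\ge3$. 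To see the latter, pair any Hermitian standard decomposition with $4\Phi^+-I$. Its expectation values on product unit vectors lie in $[-1,1]$, so each selfadjoint unit simple tensor pairs to at most $1$ in modulus, whereas $\tr(\Phi^+(4\Phi^+-I))=3$.

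Averaging with the adjoint does not help. It replaces $X\otimes Y$ by $\xu\otimes\yu-\xd\otimes\yd$, which already doubles the cost for $X=Y=|f_0\rangle\langle f_1|$. This is exactly the loss recorded by the factor $2$ in~\eqref{inenorms}.

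\emph{The second fallback restates the claim.} Truncating a near-optimal Hermitian decomposition gives a small but positive defect. Perturbing to defect zero would require a bound on the distance to $\{\ptrntr{\cdot}=\|\cdot\ptrn\}$ in terms of the defect that is \emph{uniform in the dimension} of the support. This is needed because your supports grow as the accuracy improves, and finite-dimensional compactness only gives a dimension-dependent modulus.

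Exactness comes for free only when $C$ already admits an optimal Hermitian decomposition, since sub-sums of optimal decompositions are optimal by the triangle inequality. So what is really needed is that the finitely supported elements of $\{\ptrntr{\cdot}=\|\cdot\ptrn\}$ be dense in that set. That idea is missing from your proposal.
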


\begin{proposition} \label{proproj}
For every cross trace class operator $C\in\thptj$, and for every pair $\pi\in\bop$, $\varpi\in\bopjj$
of orthogonal projection operators, $(\pi\otimes\varpi)\tre C(\pi\otimes\varpi)$ is a cross trace class
operator too, and the following inequality holds:
\begin{equation} \label{reproj-bis}
\|(\pi\otimes\varpi)\tre C(\pi\otimes\varpi)\ptrn\le \|C\ptrn \fin .
\end{equation}

Let $\hh=\oplus_j\hj$, $\jj=\oplus_l\jl$ be orthogonal sum decompositions of the Hilbert spaces
$\hh$ and $\jj$, and let $\pij\in\bop$, $\pil\in\bopjj$ be the orthogonal projection operators onto
the closed subspaces $\hj\subset\hh$ and $\jl\subset\jj$, respectively. Then, for every cross trace
class operator $C\in\thptj$, $(\pijl)\tre C(\pimn)$ is a cross trace class operator too, and we have:
\begin{equation} \label{trine}
\|C\ptrn\le\sum_{jlmn}\left\|\cjlmn\right\ptrn , \quad
\mbox{where $\cjlmn\defi(\pijl)\tre C(\pimn)\in\thptj$.}
\end{equation}

If, in particular, the previous orthogonal sum decompositions of the Hilbert spaces $\hh$ and $\jj$
are finite --- i.e., if $\hh=\oplus_{j=1}^\nuno\hj$ and $\jj=\oplus_{l=1}^\ndue\jl$, with
$\nuno,\ndue\in\nat$ --- then we have that
\begin{equation} \label{trineb}
\frac{1}{\nn}\sum_{jlmn}\left\|\cjlmn\right\ptrn\le\|C\ptrn\le
\sum_{jlmn}\left\|\cjlmn\right\ptrn = \sum_{j,\tre m=1}^\nuno\otto\sum_{l,\tre n=1}^\ndue
\left\|\cjlmn\right\ptrn ,
\end{equation}
where
\begin{equation} \label{defnn}
\nn\defi\card\big\{\cjlmn\neq 0\colon j,m=1,\ldots,\nuno, \ l,n=1,\ldots,\ndue\big\}
\le(\nuno\ndue)^2 \fin ;
\end{equation}
hence, in the case where $[C,\pijl]=0$, for all $j\in\{1,\ldots,\nuno\}$ and $l\in\{1,\ldots,\ndue\}$
--- equivalently, for $C=\sum_{jl} \cjl$, with $\cjl\defi(\pijl)\tre C(\pijl)$ --- $\nn\le\nuno\ndue$
and we have:
\begin{equation} \label{trinec}
\frac{1}{\nuno\ndue}\sum_{jl}\left\|\cjl\right\ptrn\le\frac{1}{\nn}\sum_{jl}\left\|\cjl\right\ptrn\le
\|C\ptrn\le\sum_{jl}\left\|\cjl\right\ptrn .
\end{equation}
\end{proposition}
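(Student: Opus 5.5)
The first claim follows from Proposition~\ref{probou}, taking $A=E=\pi$ and $B=F=\varpi$. Since orthogonal projections have operator norm at most $1$, relation~\eqref{relbou} immediately gives $(\pi\otimes\varpi)\tre C(\pi\otimes\varpi)\in\thptj$ together with inequality~\eqref{reproj-bis}. In the same way, Proposition~\ref{probou} with $A=\pij$, $B=\pil$, $E=\pim$, $F=\pin$ shows that each $\cjlmn$ is a cross trace class operator, with $\|\cjlmn\ptrn\le\|C\ptrn$.

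For the upper bound in~\eqref{trine}, I will show that $C=\sum_{jlmn}\cjlmn$, where the sum converges in $\ptrnor$ whenever the right-hand side of~\eqref{trine} is finite (otherwise there is nothing to prove). The plan is to use a standard decomposition $C=\sum_k r_k(X_k\otimes Y_k)$ and the two-sided resolutions of the identity $X=\sum_{jm}\pij X\pim$ and $Y=\sum_{ln}\pil Y\pin$. In finite sums these identities are exact. For the infinite case I will argue with $\ttrnor$-convergence: the family $\sum_{j\le J,m\le M}\pij X\pim$ converges to $X$ in trace norm as $J,M\to\infty$, since $P_JXP_M\to X$ in trace norm for $P_J\to I$ strongly, $X$ being trace class. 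Consequently the multiple series $\sum_{jlmn}\cjlmn$ converges, in the Pringsheim sense, to $C$ in trace norm. If $\sum\|\cjlmn\ptrn<\infty$, the series is absolutely $\ptrnor$-convergent to some element of $\thptj$. Because the immersion $\imme$ is continuous, this limit coincides with the trace-norm limit, which is $C$. The triangle inequality then gives~\eqref{trine}. This identification of limits via the immersion is the one delicate point, and it is handled exactly as in Remark~\ref{remimme}.

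For the finite case~\eqref{trineb}, the upper bound is the finite version of the argument above. For the lower bound, each nonzero term satisfies $\|\cjlmn\ptrn\le\|C\ptrn$ by~\eqref{relbou}. Summing over the $\nn$ nonzero terms gives $\sum\|\cjlmn\ptrn\le\nn\|C\ptrn$, while $\nn\le(\nuno\ndue)^2$ is obvious. If $[C,\pijl]=0$ for all $j,l$, then $\cjlmn=(\pijl)(\pimn)C=0$ unless $(j,l)=(m,n)$, because the projections $\pijl$ are mutually orthogonal. Thus only diagonal terms survive, so $\nn\le\nuno\ndue$, and~\eqref{trinec} follows directly from~\eqref{trineb}. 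The equivalence with $C=\sum_{jl}\cjl$ holds because $\sum_{jl}\pijl=I$.
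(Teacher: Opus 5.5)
Your proof is correct and follows the paper's route. Proposition~\ref{probou} gives the first assertion, the membership of each $C_{jl}^{mn}$, and the bound $\sum_{jlmn}\|C_{jl}^{mn}\ptrn\le\mathsf{n}\,\|C\ptrn$; the upper bound comes from identifying the absolutely $\ptrnor$-convergent series $\sum_{jlmn}C_{jl}^{mn}$ with $C$ through a weaker topology. The only difference is that the paper identifies this limit via strong-operator convergence of $\sum_{jl}\pi_j\otimes\varpi_l=I$, whereas you use trace-norm convergence of the rectangular partial sums $(P_J\otimes Q_L)\,C\,(P_M\otimes Q_N)\to C$, with $Q_L=\sum_{l\le L}\varpi_l$. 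That convergence holds for every trace class $C$, so you do not need the detour through a standard decomposition.
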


\begin{proof}
The first assertion follows immediately from Proposition~\ref{probou} --- see relation~\eqref{relbou}
--- taking into account the fact that $\|\pi\nori=\|\varpi\nori=1$.
Regarding the second assertion (relation~\eqref{trine}), note that --- since $\sum_{jl}\pijl=I$,
where the double sum, whenever infinite, converges (unconditionally, i.e., for every ordering
of the index set) wrt the strong operator topology of $\bophj$  --- for every $\va\in\hhjj$ we have:
\begin{align}
C\tre\va = C \sum_{mn}\big((\pimn)\tre\va\big)
& =
\sum_{mn}\big(C(\pimn)\tre\va\big)
\nonumber \\ \label{sameli}
& =
\sum_{jl}\sum_{mn}\big((\pijl)\tre C(\pimn)\tre\va\big)=\sum_{jl}\sum_{mn}\big(\cjlmn\va\big) .
\end{align}
Here, the second equality holds because $C$ is a bounded operator, and $\cjlmn\defi(\pijl)\tre C(\pimn)$
belongs to $\thptj$. Thus, by relation~\eqref{sameli},  the iterated sums $\sum_{jl}\sum_{mn}\cjlmn$ ---
whenever infinite --- converge (unconditionally) to $C$ wrt the strong operator topology.

If $\sum_{jlmn}\|\cjlmn\ptrn=\infty$, relation~\eqref{trine} is obvious. Let us suppose
that $\sum_{jlmn}\|\cjlmn\ptrn<\infty$, instead, so that the sum $\sum_{jlmn}\cjlmn$ --- whenever infinite ---
is absolutely (hence, unconditionally) convergent wrt the projective norm $\ptrnor$. Hence, it will converge
unconditionally, wrt the (subspace topology induced by the) strong operator topology --- that is weaker than
the projective norm topology --- to the same limit, which, by relation~\eqref{sameli}, must be precisely $C$;
i.e.,
\begin{equation}
\ptrnsumb_{jlmn}\cjlmn= \ptrnsumb_{jl}\ptrnsumb_{mn}\cjlmn=\sum_{jl}\sum_{mn}\cjlmn=C ,
\end{equation}
where the first series converges absolutely, whence we immediately get relation~\eqref{trine}.

Assuming, now, that both the orthogonal sum decompositions $\hh=\oplus_j\hj$ and $\jj=\oplus_l\jl$
are \emph{finite} --- specifically, that they involve $\nuno$ and $\ndue$ subspaces, respectively
--- the inequality $\|\cjlmn\ptrn=\|(\pijl)\tre C(\pimn)\ptrn\le\|C\ptrn$ also entails that
$\sum_{jlmn}\|\cjlmn\ptrn\le\nn\|C\ptrn$, where the number $\nn\le(\nuno\ndue)^2$ is defined
by~\eqref{defnn}. It follows that relation~\eqref{trineb} and --- in the case where $[C,\pijl]=0$,
with $j\in\{1,\ldots,\nuno\}$, $l\in\{1,\ldots,\ndue\}$ ---~\eqref{trinec} hold too.
\end{proof}

\begin{remark}
By relation~\eqref{reproj-bis} in Proposition~\ref{proproj}, and taking into account Proposition~\ref{proimm},
for every pair of orthogonal projections $\pi\in\bop$ and $\varpi\in\bopjj$, setting $\vv=\ran(\pi)$ and
$\ww=\ran(\varpi)$,
\begin{equation} \label{reproj-quater}
\vevw (\pi\otimes\varpi)\tre C(\pi\otimes\varpi)\ptrn=\|(\pi\otimes\varpi)\tre
C(\pi\otimes\varpi)\ptrn\le \|C\ptrn , \quad \forall\cinque C\in\thptj \fin ,
\end{equation}
where $\vwptrnor$ is the norm of the projective tensor product $\trcvv\prtp\trcww$.
\end{remark}

\subsection{The bipartite trace class vs the cross trace class}
\label{tcvsctc}

Exploiting the tools developed so far, we are now able to obtain further information about the Banach
space $\trc\prtp\trcjj$ of all cross trace class operators, with a special focus on its relation with
the \emph{bipartite trace class} $\trchj$, that coincides, as we have shown in Subsection~\ref{natens}
(see Theorem~\ref{pronatp}), with the natural (or spatial) tensor product $\trc\otimes\trcjj$ of the
trace classes $\trc$ and $\trcjj$, defined by~\eqref{natp}. Eventually, it will turn out that there is
a sharp distinction between the case where at least one of the Hilbert spaces of the bipartition $\hhjj$
is finite-dimensional and the \emph{genuinely infinite-dimensional} setting, i.e., the case where
$\dim(\hh)=\dim(\jj)=\infty$.
Let us focus, at first, on the case where \emph{at most one} of the Hilbert spaces
$\hh$ and $\jj$ is infinite-dimensional.

\begin{proposition} \label{profoc}
The normed vector space $\big(\thatj,\ptrnor\big)$ is complete --- otherwise stated,
$\thatj=\trc\prtp\trcjj$ --- iff $\hh$ and/or $\jj$ are finite-dimensional.
In the case where $\emme=\min\{\dim(\hh),\dim(\jj)\}<\infty$, every element $C$ of
$\thptj=\thatj$ admits a (finite) standard decomposition of the form
\begin{equation} \label{decopfin}
C=\sum_{k=1}^{\ka} r_k \tre (X_k\otimes Y_k) \fin ,
\end{equation}
with $\ka=\emme^2$, $r_k\ge 0$, $X_k\otimes Y_k\in\ethtj$ and $\|X_k\trn\tre\|Y_k\trn=1$;
moreover, $\{X_k\}_{k=1}^\ka$, if $\dim(\hh)=\min\{\dim(\hh),\dim(\jj)\}<\infty$
--- or $\{Y_k\}_{k=1}^\ka$, if $\dim(\jj)=\min\{\dim(\hh),\dim(\jj)\}<\infty$ --- can be
assumed to be any fixed (maximal) linearly independent set and, in particular, any fixed
orthonormal basis wrt the Hilbert-Schmidt scalar product.

Analogously, $\thsatjs=\thsptjs$ --- iff $\hh$ and/or $\jj$ are finite-dimensional, and, in the
case where $\emme=\min\{\dim(\hh),\dim(\jj)\}<\infty$, every element $C$ of the real Banach space
$\thsptjs=\thsatjs$ admits a (finite) standard decomposition of the form~\eqref{decopfin}, with
$\ka=\emme^2$, $r_k\ge 0$, $X_k\otimes Y_k\in\ethtjs$ and $\|X_k\trn\tre\|Y_k\trn=1$; the set
$\{X_k\}_{k=1}^\ka$, if $\dim(\hh)=\min\{\dim(\hh),\dim(\jj)\}<\infty$ --- or $\{Y_k\}_{k=1}^\ka$,
if $\dim(\jj)=\min\{\dim(\hh),\dim(\jj)\}<\infty$ --- can be assumed to be any fixed orthonormal
basis wrt the real Hilbert-Schmidt scalar product.
\end{proposition}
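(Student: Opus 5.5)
In the case $\emme=\min\{\dim(\hh),\dim(\jj)\}<\infty$ the plan is to show that $\ptrnor$ on $\thatj$ is equivalent to a complete norm coming from a finite direct sum. The case $\dim(\hh)=\emme$ is treated below; the other is symmetric, by the transposition map.

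\emph{Coordinates.} The space $\trc$ has complex dimension $\ka=\emme^2$. Fix a linearly independent (e.g.\ Hilbert--Schmidt orthonormal) basis $\{E_k\}_{k=1}^\ka$ of $\trc$, with dual functionals $\varphi_k(S)=\tr(E_k^\ast S)$. Every $C\in\thatj$ can then be written uniquely as $C=\sum_k E_k\otimes T_k$ with $T_k\in\trcjj$. Indeed, expanding each $S_i$ in $C=\sum_i S_i\otimes T_i'$ in the basis gives existence. Uniqueness follows from the slice maps: the bilinear map $(S,T)\mapsto\varphi_k(S)\tre T$ induces a linear map $\varphi_k\otimes\mathrm{id}$ on $\thatj$ sending $C$ to $T_k$.

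\emph{Norm equivalence.} By the triangle inequality, $\|C\ptrn\le\sum_k\|E_k\trn\|T_k\trn$. Conversely, for any finite decomposition $C=\sum_i S_i\otimes T_i'$,
\[
\|T_k\trn=\Big\|\sum_i\varphi_k(S_i)\tre T_i'\Big\trn\le\|E_k\nori\sum_i\|S_i\trn\|T_i'\trn,
\]
and taking the infimum gives $\|T_k\trn\le\|E_k\nori\tre\|C\ptrn$. Hence $C\mapsto(T_1,\ldots,T_\ka)$ is a linear homeomorphism of $(\thatj,\ptrnor)$ onto the Banach space $\trcjj^{\ka}$, so $\thatj$ is complete and $\thatj=\thptj$.

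\emph{Finite decomposition.} Put $r_k=\|E_k\trn\|T_k\trn$, $X_k=E_k/\|E_k\trn$ and $Y_k=T_k/\|T_k\trn$. When $T_k=0$, take $r_k=0$ and any unit $Y_k$. This gives exactly~\eqref{decopfin} with $\ka=\emme^2$, for an arbitrary fixed basis.

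\emph{Selfadjoint case.} Work over $\erre$ with a real Hilbert--Schmidt orthonormal basis of $\trcsa$. For $C=C^\ast$ the coefficients $T_k=\tr_{\hh}\big((E_k\otimes I)C\big)$ are selfadjoint; equivalently, $C=\sum_k E_k\otimes T_k^\ast$, and uniqueness forces $T_k=T_k^\ast$. Since $\ptrnortr$ is equivalent to $\ptrnor$ on $\thsatjs$ (Proposition~\ref{pronorp}), completeness carries over.

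For the converse, assume $\dim(\hh)=\dim(\jj)=\infty$. Choose orthonormal sequences $\{e_k\}\subset\hh$ and $\{f_k\}\subset\jj$, and put $P_k=|e_k\rangle\langle e_k|$, $Q_k=|f_k\rangle\langle f_k|$ and
\[
C=\sum_k 2^{-k}(P_k\otimes Q_k).
\]
This series is absolutely convergent, so $C\in\thsptjs\subset\thptj$ by Corollary~\ref{corcro}. Suppose $C=\sum_{i=1}^n S_i\otimes T_i$ were a finite sum.

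Consider the slice maps $L_m\colon\trchj\to\trc$, $L_m(A)=\tr_{\jj}\big((I\otimes Q_m)A\big)$. Each $L_m$ is $\ttrnor$-bounded as a partial trace against a bounded operator, and on elementary tensors $L_m(S\otimes T)=\langle f_m,Tf_m\rangle\tre S$. Applying $L_m$ to the series for $C$ termwise (allowed by continuity) gives $L_m(C)=2^{-m}P_m$. Applying it to the finite representation gives $L_m(C)\in\lispa\{S_1,\ldots,S_n\}$. But the $P_m$ are infinitely many linearly independent operators, which cannot all lie in an $n$-dimensional space, a contradiction. Thus $\thatj\subsetneq\thptj$, i.e.\ $\thatj$ is a proper dense subspace of a Banach space and so is not complete. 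The same selfadjoint $C$ shows $\thsatjs\subsetneq\thsptjs$.

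The main technical point is justifying the slice maps: $\varphi_k\otimes\mathrm{id}$ must be well defined on $\thatj$ (which follows from the universal property) and bounded by $\|E_k\nori\tre\|C\ptrn$. In the infinite-dimensional part, $L_m$ must be continuous for the trace norm, so that it commutes with the infinite sum defining $C$.
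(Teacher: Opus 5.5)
Your proof is correct. The paper gives no proof of Proposition~\ref{profoc}: it is stated and then used, e.g.\ in the proofs of Theorems~\ref{theine} and~\ref{thside}. So there is no official route to compare against, and your argument supplies the missing proof.

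In the finite-dimensional half, the slice estimate $\|T_k\trn\le\|E_k\nori\|C\ptrn$ gives a linear homeomorphism $(\thatj,\ptrnor)\cong\trcjj^{\ka}$, and this is the right mechanism. In the selfadjoint part, your remark that completeness ``carries over'' tacitly needs $\thsatjs$ to be $\ptrnor$-closed in $\thatj$. Your own computation supplies this: a real Hilbert--Schmidt orthonormal basis of $\trcsa$ is also a complex basis of $\trc$, and under the coordinate map $\thsatjs$ corresponds exactly to the closed subspace $\trcsajj^{\ka}$.

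There are two cosmetic adjustments.
\begin{itemize}
\item The functionals $\varphi_k(S)=\tr(E_k^\ast S)$ are the coordinate functionals only for a Hilbert--Schmidt orthonormal basis. For an arbitrary basis, use the dual basis $\{F_k\}$ with $\tr(F_k^\ast E_l)=\delta_{kl}$, and replace $\|E_k\nori$ by $\|F_k\nori$.
\item To get $X_k$ equal to the prescribed basis elements rather than their normalizations, move the normalization into the second factor: $X_k=E_k$, $Y_k=T_k/(\|E_k\trn\|T_k\trn)$.
\end{itemize}

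In the infinite-dimensional half, your slice-map argument with $C=\sum_k 2^{-k}(P_k\otimes Q_k)$ is elementary and self-contained. It proves the strict inclusion $\thatj\subsetneq\thptj$. This is a different and much easier statement than the strict inclusion $\thptj\subsetneq\trchj$ of Corollary~\ref{corsubne}, whose proof uses Arveson's norm estimates.
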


Our next step is to argue that, in the case where $\min\{\dim(\hh),\dim(\jj)\}<\infty$, the
cross trace class on $\hhjj$ is the trace class \emph{tout court} --- i.e., $\thptj=\trchj$
(where a set equality, or an isomorphism of linear spaces, is understood) --- and, moreover,
the projective norm and the trace norm are equivalent on this linear space; otherwise stated,
that the Banach space $\big(\thptj,\ptrnor\big)$ can be regarded as a \emph{renorming}~\cite{Guirao}
of the trace class $\big(\trchj,\ttrnor\big)$.

\begin{theorem} \label{theine}
Assume that $\emme=\min\{\dim(\hh),\dim(\jj)\}<\infty$.

Then, we have that
\begin{equation} \label{coinca}
\trchj=\trc\otimes\trcjj=\trc\prtp\trcjj=\trc\altp\trcjj
\end{equation}
and, hence,
\begin{equation}
\hstahj\defi\stahj\cap\thptj=\stahj \fin .
\end{equation}
The norms $\ttrnor$ and $\ptrnor$ are equivalent on the linear space $\trchj=\thptj$.
Indeed, for every $A\in\trchj=\thptj$, the following inequalities hold:
\begin{equation} \label{estgen}
\|A\ttrn\le\|A\ptrn\le\enne\sei\|A\ttrn , \quad
\mbox{where $\enne=\min\big\{4\emme,\emme^2\big\}$.}
\end{equation}
For the selfadjoint trace class operators and for the density operators on $\hhjj$, in particular,
stronger estimates hold, i.e.,
\begin{equation} \label{estsa}
\|A\ttrn\le\|A\ptrn\le 2\emme\sei\|A\ttrn , \quad \forall\cinque A\in\trchjsa \fin ,
\end{equation}
\begin{equation} \label{estdens}
1=\|D\ttrn\le\|D\ptrn\le\emme \fin , \quad \forall\cinque D\in\stahj=\hstahj \fin .
\end{equation}
Actually, in relation~\eqref{estdens}, both inequalities can be saturated; in fact,
$\|\stahj\ptrn=[1,\emme]$.

Analogously, we have that
\begin{equation} \label{coincb}
\trchjsa=\trcsa\otimes\trcsajj=\trcsa\prtp\trcsajj=\trcsa\altp\trcsajj \fin .
\end{equation}
The norms $\ttrnor$, $\ptrnor$ and $\ptrnortr$ are mutually equivalent on $\trchjsa=\thsptjs$,
because, for every $A\in\trchjsa=\thsptjs$, we have that
\begin{equation} \label{estsa-bis}
\|A\ttrn\le\|A\ptrn\le\ptrntr{A}\le 2\tre\|A\ptrn\le 4\emme\sei\|A\ttrn ;
\end{equation}
in particular, for every density operator $D\in\stahj=\hstahj$,
\begin{equation} \label{estdens-bis}
1=\|D\ttrn\le\|D\ptrn\le\ptrntr{D}\le 2\tre\|D\ptrn\le 2\emme \fin .
\end{equation}
\end{theorem}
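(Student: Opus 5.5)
The plan is to reduce to the case $\dim(\hh)=\emme<\infty$, derive every algebraic statement from one explicit finite expansion of an arbitrary $A\in\trchj$, and get the norm bounds from rank-one estimates. The reduction uses the transposition isomorphism $\trsp$ between $\thptj$ and $\tjpth$, which preserves both the trace norm and the projective norm. Fix an orthonormal basis $\{e_j\}_{j=1}^\emme$ of $\hh$.

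\emph{Coincidence of the spaces.} Every $A\in\trchj$ can be written as
\begin{equation*}
A=\sum_{j,m=1}^{\emme}|e_j\rangle\hspace{-0.3mm}\langle e_m|\otimes A_{jm},
\end{equation*}
where $A_{jm}\in\trcjj$ is the operator defined by $\langle\chi,A_{jm}\psi\rangle=\langle e_j\otimes\chi,A(e_m\otimes\psi)\rangle$. Each $A_{jm}$ is the compression $V_j^\ast A V_m$, with $V_j\colon\psi\mapsto e_j\otimes\psi$ an isometry. Hence $\|A_{jm}\trn\le\|A\ttrn$. This is a finite sum of elementary tensors, so $A\in\thatj$, which gives \eqref{coinca} via Theorem~\ref{pronatp} and Proposition~\ref{profoc}. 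It also gives $\hstahj=\stahj$. Since $\||e_j\rangle\hspace{-0.3mm}\langle e_m|\trn=1$, the same expansion yields $\|A\ptrn\le\emme^2\|A\ttrn$.

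\emph{Bounds through positive operators.} The core step is to show $\|P\ptrn\le\emme\tr(P)$ for every $P\in\trchjp$. First consider a rank-one projection onto a unit vector $\va$ with Schmidt decomposition $\va=\sum_{n\le\emme}a_n\phi_n\otimes\psi_n$ (Fact~\ref{schmdec}). Expanding gives
\begin{equation*}
\vaa=\sum_{n,n'}a_na_{n'}\,|\phi_n\rangle\hspace{-0.3mm}\langle\phi_{n'}|\otimes|\psi_n\rangle\hspace{-0.3mm}\langle\psi_{n'}|,
\end{equation*}
so $\|\vaa\ptrn\le(\sum_n a_n)^2\le\emme\sum_na_n^2=\emme$ by Cauchy--Schwarz. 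For general $P$, use its spectral decomposition, which converges absolutely in trace norm and hence in $\ptrnor$ (Remark~\ref{remabs}); this gives $\|P\ptrn\le\emme\tr(P)$. For selfadjoint $A=A_+-A_-$ (Fact~\ref{prolico}) we get $\|A\ptrn\le\emme(\tr A_++\tr A_-)=\emme\|A\ttrn$, which is stronger than \eqref{estsa}. For general $A$, write $A=\mathrm{Re}A+\ima\,\mathrm{Im}A$ with $\|\mathrm{Re}A\ttrn,\|\mathrm{Im}A\ttrn\le\|A\ttrn$; this gives $\|A\ptrn\le 2\emme\|A\ttrn\le4\emme\|A\ttrn$. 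Combined with the $\emme^2$ bound this proves \eqref{estgen}, and the lower bounds come from \eqref{inenorm}. The density bound $\|D\ptrn\le\emme$ is the positive case. The Hermitian statements \eqref{coincb}, \eqref{estsa-bis} and \eqref{estdens-bis} then follow from \eqref{inenorms} together with the selfadjoint analogue of the expansion, splitting $A_{jm}$ into real and imaginary parts paired with the corresponding Hermitian combinations of matrix units.

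\emph{Saturation, the main obstacle.} Product states give $\|D\ptrn=1$. For the value $\emme$, I would take the maximally entangled state $D_\Omega=|\Omega\rangle\hspace{-0.3mm}\langle\Omega|$ with $\Omega=\emme^{-1/2}\sum_je_j\otimes f_j$, where $\{f_j\}$ is an orthonormal system in $\jj$. To get a lower bound I would use a dual witness: $W=\sum_{j,m}|e_m\rangle\hspace{-0.3mm}\langle e_j|\otimes|f_m\rangle\hspace{-0.3mm}\langle f_j|$. One checks that $\tr((X\otimes Y)W)=\tr(XY^{\trsp})$, with $Y^{\trsp}$ the transpose in the chosen bases compressed to the span of $\{f_j\}$. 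This gives $|\tr((X\otimes Y)W)|\le\|X\trn\|Y\trn$ on elementary tensors, hence $|\tr(CW)|\le\|C\ptrn$ for all $C\in\thptj$ by passing to standard decompositions. Then $\tr(D_\Omega W)=\emme$ forces $\|D_\Omega\ptrn\ge\emme$, so $\|D_\Omega\ptrn=\emme$. Finally, along the segment $D_t=(1-t)\rho\otimes\sigma+tD_\Omega$ the map $t\mapsto\|D_t\ptrn$ is continuous, so every value in $[1,\emme]$ is attained. This gives $\|\stahj\ptrn=[1,\emme]$.
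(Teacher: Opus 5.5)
Your proof is correct, and it takes a genuinely different, more self-contained route than the paper.

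\emph{Coincidence of the spaces.} The paper takes a singular value decomposition of $A$ and Schmidt-expands each pair of singular vectors. This yields $\emme^2$ trace-norm convergent series $A_{mn}$ of elementary tensors, and the paper then relies on Proposition~\ref{profoc} to identify $\thptj$ with $\trc\altp\trcjj$. Your block decomposition $A=\sum_{j,m}|e_j\rangle\langle e_m|\otimes V_j^\ast AV_m$ exhibits $A$ directly as a sum of $\emme^2$ elementary tensors, with the same bound $\emme^2\|A\ttrn$. It also proves the finite-decomposition part of Proposition~\ref{profoc} along the way.

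\emph{The density-operator bound and saturation.} The paper does not prove $\|D\ptrn\le\emme$ or $\|\stahj\ptrn=[1,\emme]$ itself; it imports both from Theorem~14.1 of Arveson's paper. You replace this citation with the Schmidt decomposition plus Cauchy--Schwarz for the upper bound, and the witness $W$ for the lower bound. Note that $W=\emme\,|\Omega\rangle\langle\Omega|$ is exactly the operator $E_\enne$ of Example~\ref{xexchapu} with $\enne=\emme$. Your estimate on elementary tensors amounts to $\|W\nog\le 1$, so you are anticipating the paper's later formula $\|\hva\ptrn=(\sum_l a_l)^2$ for pure states.

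\emph{Constants.} You use $\tr(A_+)+\tr(A_-)=\|A\ttrn$, whereas the paper bounds each coefficient separately by $\|A\ttrn$. This gives $\|A\ptrn\le\emme\|A\ttrn$ for selfadjoint $A$, hence $\min\{2\emme,\emme^2\}$ in \eqref{estgen} and $2\emme$ in \eqref{estsa-bis}, all stronger than stated. The paper's citation buys brevity and a link to Arveson's multipartite setting; yours buys an elementary, complete proof with sharper constants.

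\emph{Two justifications to adjust.} First, Remark~\ref{remabs} does not apply to the spectral series of a positive $P$, because its terms are not elementary tensors. Use instead either the uniform bound $\|\,|\va_j\rangle\langle\va_j|\,\ptrn\le\emme$, which gives absolute $\ptrnor$-convergence to the same limit by continuity of the immersion, or the norm equivalence you already have from the $\emme^2$ estimate. Second, $\trsp$ is a priori defined only on $\thptj$. The reduction to $\dim(\hh)=\emme$ is therefore better phrased via the swap unitary on all of $\trchj$, or simply by the symmetry of your argument.
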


\begin{proof}
Assume that $\emme\equiv\dim(\hh)<\infty$ and $\dim(\hh)\le\dim(\jj)\le\infty$ (obviously,
the case where $\dim(\jj)\le\dim(\hh)\le\infty$ and $\dim(\jj)=\emme<\infty$ is analogous).

Let us first prove relations~\eqref{coinca} and~\eqref{coincb}. To this end, it is sufficient
to show that --- with our initial assumption --- $\trchj\subset\thptj$. In fact, given any trace
class operator $A\in\trchj$, $A\neq 0$, we can write the singular value decomposition (recall
Remark~\ref{sivadeb})
\begin{equation}
A=\sum_{k} s_k\hvack \fin , \quad s_k>0, \ \hvack\equiv\vack
\end{equation}
--- that converges wrt the (bipartite) trace norm $\ttrnor$ --- where $\{\vak\}$, $\{\vck\}$ are
orthonormal systems in $\hhjj$. Now, let us consider \emph{extended} Schmidt decompositions of the
vectors $\vak,\vck\in\hhjj$ (recall Remark~\ref{exschmdec}), i.e.,
$\vak=\sum_{m=1}^\emme \tkm \big(\aekm\otimes\afkm\big)$ and
$\vck=\sum_{n=1}^\emme \ukn \big(\cekn\otimes\cfkn\big)$,
where $\{\aekm\}_{m=1}^\emme$, $\{\cekn\}_{n=1}^\emme$ are orthonormal bases in $\hh$,
$\{\afkm\}_{m=1}^\emme$, $\{\cfkn\}_{n=1}^\emme$ are orthonormal systems in $\jj$,
and $\tkm$, $\ukn$ are non-negative real numbers such that
$\sum_{m=1}^\emme \big(\tkm\big)^2=1=\sum_{n=1}^\emme \big(\ukn\big)^2\Longrightarrow
0\le\tkm,\ukn\le 1$.
At this point --- putting, as usual, $\hep\equiv\ep$ --- we can write
\begin{align}
A
& =
\ttrnsum_{k}s_k \sei \Big({\textstyle \sum_{m,\tre n=1}^\emme \tkm\tre\ukn}
\Big(\heemn\otimes\hffmn\Big)\Big)
\nonumber \\
& =
\sum_{m,\tre n=1}^\emme\ttrnsum_{k} s_k\sei \tkm\tre\ukn \Big(\heemn\otimes\hffmn\Big)
=\sum_{m,\tre n=1}^\emme A_{mn} \fin ,
\end{align}
where the trace class operator $A_{mn}\in\trchj$,
\begin{equation} \label{compa}
A_{mn}\defi\ttrnsum_{k} s_k\sei \tkm\tre\ukn \Big(\heemn\otimes\hffmn\Big) ,
\end{equation}
is contained in $\thptj$, as well, because
\begin{equation} \label{estco}
0<\sum_{k} s_k\sei \tkm\tre\ukn\le\sum_{k} s_k=\|A\ttrn ;
\end{equation}
hence, the series in~\eqref{compa} converges (absolutely) wrt the projective norm $\ptrnor$ too.

Therefore, a generic element of $\trchj$ can always be expressed as the sum of a finite
number of elements of $\thptj=\thatj$, so that $\trchj\subset\thptj$; hence, actually,
$\trchj=\thptj$. It follows that $\trchjsa\subset\thptj$, and hence --- recalling
relation~\eqref{eqhtrchjs} and Remark~\ref{remide} --- $\trchjsa=\trchjsa\cap\thptj=\thsptjs$,
as well.

Note that for every $A\in\trchj$ --- since, from~\eqref{compa} and~\eqref{estco}, one easily obtains
the inequality $\|A_{mn}\ptrn\le\sum_{k} s_k\sei\tkm\tre\ukn\le\|A\ttrn$ --- the  following simple
estimate holds:
\begin{equation} \label{estgena}
\|A\ptrn=\big\|{\textstyle\sum_{m,\tre n=1}^\emme A_{mn}}\bptrn\le
{\textstyle\sum_{m,\tre n=1}^\emme \|A_{mn}\ptrn}\le\emme^2\tre\|A\ttrn.
\end{equation}
However, in the case where $\emme\ge 5$ or $A=A^\ast$ --- and, in particular, for a density operator ---
stronger estimates hold. In fact, by Theorem~{14.1} of~\cite{Arveson}, relation~\eqref{estdens} holds
true and, actually, $\|\stahj\ptrn=[1,\emme]$. From this result, we easily conclude that relation~\eqref{estsa}
holds too. In fact, every $A\in\trchjsa$ can be expressed as a linear combination of two positive operators
(Fact~\ref{prolico}), i.e., $A=\apl\msei - \ami\msei = \aplu\tre\depl\msei - \amin\tre\demi$,
where $\apl\msei + \ami\msei=|A|$, $\apl\tre\ami=0=\ami\tre\apl$, $\aplu,\amin\in\errep$,
$\depl,\demi\in\stahj$, $\apl\msei=\aplu\tre\depl$ and $\ami\msei=\amin\tre\demi$; hence:
$\aplu,\amin\le\|A\ttrn\defi\tr(|A|)=\tr(\apl\msei + \ami\msei)=\aplu\msei +\amin$.
It follows that $\|A\ptrn\le\aplu\tre\|\depl\ptrn + \amin\tre\|\demi\ptrn\le\|A\ttrn\big(\|\depl\ptrn
+ \|\demi\ptrn\big)\le 2\emme\sei\|A\ttrn$.

Therefore, both relations~\eqref{estsa} and~\eqref{estdens} hold true, and --- since,
for every $A\in\trchjsa$, $\ptrntr{A}\le 2\|A\ptrn$ --- the estimates~\eqref{estsa-bis}
and~\eqref{estdens-bis} immediately follow.

Finally from the estimate~\eqref{estsa} --- since every $A\in\trchj$ can be written as
$A=A_1+\ima \sei A_2$, where $A_1$, $A_2$ are \emph{selfadjoint} trace class operators
in $\trchj$ such that $\|A_1\ttrn,\|A_2\ttrn\le\|A\ttrn$ (e.g.,
$\|A_1\ttrn\le\frac{1}{2}(\|A\ttrn+\|A^\ast\ttrn)=\|A\ttrn$) --- we see that
\begin{equation} \label{estgenb}
\|A\ptrn\le\|A_1\ptrn + \|A_2\ptrn\le 2\emme\sei(\|A_1\ttrn + \|A_2\ttrn)
\le 4\emme\sei\|A\ttrn .
\end{equation}
From~\eqref{estgena} and~\eqref{estgenb}, we now conclude that relation~\eqref{estgen} holds true.
\end{proof}

\begin{corollary} \label{corsemif}
Assume that $\emme=\min\{\dim(\hh),\dim(\jj)\}<\infty$. In this case, the Banach space dual $\big(\thptj\big)^\ast$
of $\thptj$ can be identified --- via the trace functional --- with the algebraic tensor product $\bop\altp\bopjj$,
endowed with a norm that is equivalent to the standard operator norm $\tnormi$. In fact, with our initial
assumption, $(\bop\altp\bopjj,\tnormi)$ is a Banach space, that coincides with $\bophj$, and can be identified
--- via the trace functional --- with a renorming of $\big(\thptj\big)^\ast$. Therefore, in particular,
every element of $\big(\thptj\big)^\ast$ can be expressed in the form
\begin{equation} \label{forbofu}
\sum_{j=1}^{\je} \tr\big(\argo (A_j\otimes B_j)\big)\colon\thptj\rightarrow\ccc \fin , \quad
A_j\in\bop \fin ,\ B_j\in\bopjj \fin ,
\end{equation}
with $\je\le\ka=\emme^2$.
\end{corollary}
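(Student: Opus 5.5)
The identification of the dual will rest on Theorem~\ref{theine} together with the Banach space duality $\trchj^\ast=\bophj$ and Remark~\ref{rebop}. Since $\emme<\infty$, Theorem~\ref{theine} gives $\thptj=\trchj$ as linear spaces, with the norms $\ptrnor$ and $\ttrnor$ equivalent via $\|A\ttrn\le\|A\ptrn\le\enne\|A\ttrn$. Consequently the two spaces have the same continuous linear functionals. Concretely, a functional $\xi$ on $\thptj$ is $\ptrnor$-bounded iff it is $\ttrnor$-bounded. By the standard duality $\trchj^\ast\cong\bophj$ implemented by the trace, every such $\xi$ is of the form $\tr(\argo\,L)$ for a unique $L\in\bophj$.

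The dual norm $\|\xi\|^\ast=\sup\{|\tr(CL)|\colon\|C\ptrn\le1\}$ then satisfies
\begin{equation*}
\enne^{-1}\|L\tnori\le\|\xi\|^\ast\le\|L\tnori .
\end{equation*}
Both bounds come from dualizing the norm inequalities: the upper bound uses $\|C\ttrn\le\|C\ptrn$, and the lower bound uses $\|C\ptrn\le\enne\|C\ttrn$. Hence $\thptjd$, transported via the trace, is a renorming of $(\bophj,\tnormi)$.

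Next I would use Remark~\ref{rebop}. When $\min\{\dim(\hh),\dim(\jj)\}<\infty$ one has $\bop\altp\bopjj=\bhobj=\bophj$, so $(\bop\altp\bopjj,\tnormi)$ is complete and coincides with $\bophj$. This yields the first two claims. Alternatively, I could avoid citing the $\cast$-algebraic fact by a direct argument, given here for $\dim(\hh)=\emme$. Fix an orthonormal basis $\{e_i\}_{i=1}^\emme$ of $\hh$ and the matrix units $E_{ij}=|e_i\rangle\langle e_j|$. Every $L\in\bophj$ can then be written as $L=\sum_{i,j=1}^\emme E_{ij}\otimes B_{ij}$, where $B_{ij}\in\bopjj$ is defined by $\langle\psi,B_{ij}\chi\rangle=\langle e_i\otimes\psi,L(e_j\otimes\chi)\rangle$. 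Boundedness of $B_{ij}$ follows from $\|B_{ij}\nori\le\|L\tnori$. The identity itself is checked on product vectors $e_j\otimes\chi$, which span a dense subspace.

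This explicit decomposition also gives the final representation~\eqref{forbofu}. It expresses every $L$, and hence every functional $\tr(\argo\,L)$, as a sum of $\emme^2$ elementary tensors $A_j\otimes B_j$, so $\je\le\ka=\emme^2$. The case $\dim(\jj)=\emme$ is symmetric.

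The main obstacle is only to keep the identifications straight: $\thptjd$ with $\trchj^\ast$ via equivalence of norms, and the latter with $\bophj$. For this I would invoke Fact~\ref{fanonotri} together with the equivalence of norms, which makes the dual spaces coincide as sets with equivalent dual norms. The matrix-unit decomposition is routine.
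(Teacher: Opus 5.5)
Your proposal is correct and follows essentially the paper's route: the norm equivalence of Theorem~\ref{theine} identifies $\big(\thptj\big)^\ast$, via the trace, with a renorming of $\bophj=\trchj^\ast$, and the equality $\bop\altp\bopjj=\bophj$ with at most $\emme^2$ elementary tensors then gives the representation~\eqref{forbofu}. The only differences are that you prove that equality directly by the matrix-unit decomposition instead of citing the operator-algebra fact behind Remark~\ref{rebop}, and that you write down the dual-norm constants $\enne^{-1}\|L\tnori\le\|\xi\|^\ast\le\|L\tnori$, which the paper only makes explicit later in Theorem~\ref{proreno}.
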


\begin{proof}
Suppose that $\emme=\min\{\dim(\hh),\dim(\jj)\}<\infty$. Then, by Theorem~\ref{theine}, the
linear spaces $\thatj$, $\thptj$ and $\trchj$ coincide, and, moreover, the norms $\ptrnor$
and $\ttrnor$ are equivalent (hence, they induce the same topology). Therefore,
$\big(\thptj\big)^\ast$ can be identified (by means of the trace functional) --- as a linear space
--- with $\bophj$, and, as a Banach space, with a \emph{renorming} of the dual $\bophj=\trchjd$ of
$\trchj$ (this is a standard fact of the theory of renormings of Banach spaces; see, e.g.,
Chapter~{3} of~\cite{Guirao}).

Let us now focus on $\bophj$. It is a standard fact of the theory of operator algebras
(see~\cite{Kadison}, Section~{11.1.5} and Section~{11.1.6}) that --- in the case where at
least one of the Hilbert spaces $\hh$, $\jj$ is finite-dimensional --- denoting by
$\bop\otimes\bopjj$ the $\cast$-algebraic tensor product of $\bop$ and $\bopjj$, we have
that $\bop\altp\bopjj=\bop\otimes\bopjj=\bophj$; moreover, every element of $\bophj$ is of
the form $\sum_{j=1}^{\je} A_j\otimes B_j$, where $A_j\in\bop$, $B_j\in\bopjj$ and
$\je\le\ka=\emme^2$, with $\emme=\min\{\dim(\hh),\dim(\jj)\}<\infty$. (For further
details, see our previous Remark~\ref{rebop}.)

Eventually, combining all the preceding facts, we conclude that, with our previous assumptions,
every element of the Banach space $\big(\thptj\big)^\ast$ can be expressed in the form~\eqref{forbofu}.
\end{proof}

\begin{remark} \label{reminjec}
In the case where $\emme=\min\{\dim(\hh),\dim(\jj)\}<\infty$, we have an isomorphism of Banach spaces
\begin{equation} \label{isobasps}
\bop\injtp\bopjj\cong\big(\thptj\big)^\ast ,
\end{equation}
where $\bop\injtp\bopjj$ is the \emph{injective tensor product}~\cite{Defant,Ryan,DFS,Kubrusly}
of the Banach spaces $\bop$ and $\bopjj$; i.e., the Banach space completion of the algebraic
tensor product $\bop\altp\bopjj$ wrt the \emph{injective norm} $\innormi$. For every
$\sum_j A_j\otimes B_j\in\bop\altp\bopjj$, we have the following expression:
\begin{align}
\big\|{\textstyle\sum_j A_j\otimes B_j}\binnori
& \defi
\sup\big\{\big|{\textstyle\sum_j\funs (A_j)\tre\funt(B_j)}\big|\colon \funs\in\dbop,\ \funt\in\dbopjj,
\ \|\funs\norps=\|\funt\norps=1\big\}
\nonumber \\
& \dodici =
\sup\big\{\big|{\textstyle\sum_j\tr(SA_j)\tre\tr(TB_j)}\big|\colon S\in\trc,\ T\in\trcjj,
\nonumber \\ \label{norinje}
& \hspace{15.4mm}
\|S\trn=\|T\trn=1\big\} .
\end{align}
Here, we have used the fact that the unit balls $\ball(\trc)$ and $\ball(\trcjj)$ --- regarded
as subsets of the biduals $\trc^{\ast\ast}$ and $\trcjj^{\ast\ast}$ --- are \emph{norming sets}
for $\trc^{\ast}=\bop$ and $\trcjj^{\ast}=\bopjj$, respectively; see Subsection~{4.1}
in Chapter~{I} of~\cite{Defant}, or formula~{(3.3)} in Section~{3.1} of~\cite{Ryan}.
Note that the previous expression~\eqref{norinje} of the injective norm, as well as
all our further arguments below, \emph{do not} depend on the specific (finite) representation of an
element of the algebraic tensor product $\bop\altp\bopjj$. It turns out that, with our initial assumption
($\min\{\dim(\hh),\dim(\jj)\}<\infty$), $\bop\altp\bopjj$ --- which, as a linear space, coincides
with $\bophj$ (Corollary~\ref{corsemif}) --- is already complete wrt the injective norm $\innormi$.
In fact, by points~{(b1)} and~{(c1)} in Theorem~{7.17} of~\cite{Kubrusly} (also see the standard
isometric embeddings on the third line of~{(3.4)} in Section~{3.1} of~\cite{Ryan}), if $\bop\altp\bopjj$
is endowed with the norm $\innormi$, then the mapping
\begin{equation} \label{mapiso}
\bop\altp\bopjj\ni\sum_j A_j\otimes B_j\mapsto\sum_j\tr\big(\argo A_j\big)B_j\in\bo(\trc;\bopjj)
\end{equation}
is an isometric embedding, which extends to an isometric embedding of the injective tensor product
$\bop\injtp\bopjj$ into $\bo(\trc;\bopjj)$. Note that in~\eqref{mapiso}, and in~\eqref{mapiso-bis}
below, the sums are supposed to be \emph{finite}. Now --- assuming, e.g., that $\dim(\jj)<\infty$
(the case where $\dim(\hh)<\infty$ is analogous) --- it is easy to see that the map~\eqref{mapiso}
is surjective; hence, actually, $(\bop\altp\bopjj,\innormi)$ is a Banach space isomorphic to
$\bo(\trc;\bopjj)$. At this point, it is sufficient to recall that the map
\begin{equation} \label{mapiso-bis}
\bo(\trc;\bopjj)\ni\sum_j\tr\big(\argo A_j\big)B_j\mapsto
\sum_j \tr\big(\argo (A_j\otimes B_j)\big)\in\big(\thptj\big)^\ast
\end{equation}
is an isomorphism of Banach spaces (see Section~{2.2} of~\cite{Ryan}; also, point~{(e)}
in Remark~{7.13} of~\cite{Kubrusly})); hence, by composing the maps~\eqref{mapiso}
and~\eqref{mapiso-bis}, the isomorphism of Banach spaces~\eqref{isobasps} holds.
It is worth observing that, by this isomorphism and by relation~\eqref{norinje}, we have:
\begin{equation} \label{norinje-bis}
\big\|{\textstyle\sum_j A_j\otimes B_j}\binnori=\sup\big\{\big|{\textstyle\sum_j
\tr(C(A_j\otimes B_j))}\big|\colon C\in\thptj,\ \|C\ptrn=1\big\} .
\end{equation}
By this expression of the injective norm $\innormi$ and by the estimate~\eqref{estgen}, we see that
\begin{equation}
\enne^{-1}\big\|{\textstyle\sum_j A_j\otimes B_j}\btnori\le\big\|{\textstyle\sum_j
A_j\otimes B_j}\binnori\le\big\|{\textstyle\sum_j A_j\otimes B_j}\btnori , \quad
\mbox{where $\enne=\min\big\{4\emme,\emme^2\big\}$.}
\end{equation}
Thus, as stated in Corollary~\ref{corsemif}, the operator norm $\tnormi$ and the injective
norm $\innormi$ --- which can be identified with the norm of $\big(\thptj\big)^\ast$ --- are
equivalent on the linear space $\bop\altp\bopjj$; i.e., the injective tensor product
$\bop\injtp\bopjj\cong\big(\thptj\big)^\ast$ is a renorming of the Banach space
$(\bop\altp\bopjj,\tnormi)=\bophj$.
\end{remark}

By Theorem~\ref{theine}, we are now able to derive two further results characterizing the
trace class versus the \emph{cross} trace class of the Hilbert space $\hhjj$, in the
genuinely infinite-dimensional setting.

\begin{corollary} \label{unbonor}
In the case where $\dim(\hh)=\dim(\jj)=\infty$, the projective norm $\ptrnor$ --- and, hence, the
Hermitian projective norm $\ptrnortr$ --- is unbounded on the set $\hstahj$ of cross states.
\end{corollary}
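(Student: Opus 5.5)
The plan is to reduce to the finite-dimensional lower bound in Theorem~\ref{theine} (the saturation $\|\stahj\ptrn=[1,\emme]$, from Arveson's result) and then embed finite-dimensional blocks isometrically using Proposition~\ref{proimm}. Since $\dim(\hh)=\dim(\jj)=\infty$, for every $\emme\in\nat$ we can choose $\emme$-dimensional subspaces $\vv\subset\hh$ and $\ww\subset\jj$, with orthogonal projections $\pi$ and $\varpi$. Applying Theorem~\ref{theine} to the bipartition $\vv\otimes\ww$, where $\min\{\dim\vv,\dim\ww\}=\emme$, gives a density operator $D_\emme\in\mathcal{D}(\vv\otimes\ww)$ with $\vevw D_\emme\ptrn=\emme$. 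A concrete choice is the maximally entangled pure state $\vaa$ with $\va=\emme^{-1/2}\sum_{n=1}^{\emme}\phi_n\otimes\psi_n$; I expect this to be the state attaining the upper end of $[1,\emme]$.

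Next, let $\isot\colon\trcvv\prtp\trcww\rightarrow\thptj$ be the linear isometry of Proposition~\ref{proimm}. Then $\isot(D_\emme)$ is the natural embedding of $D_\emme$ into $\trchj$: it is positive, has unit trace, and lies in $\thptj$. Hence $\isot(D_\emme)\in\stahj\cap\thptj=\hstahj$. Because $\isot$ is isometric, $\|\isot(D_\emme)\ptrn=\vevw D_\emme\ptrn=\emme$. Letting $\emme\to\infty$ shows that $\ptrnor$ is unbounded on $\hstahj$.

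For the Hermitian projective norm, every $D\in\hstahj$ lies in $\thsptjs$ by Proposition~\ref{prcrost}, and \eqref{inenorms} gives $\ptrntr{D}\ge\|D\ptrn$. So $\ptrnortr$ is unbounded on $\hstahj$ as well.

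The only substantive ingredient is the lower bound $\vevw D_\emme\ptrn\ge\emme$ for some state on an $\emme\times\emme$ system. I take this from Theorem~\ref{theine}, i.e.\ Theorem~14.1 of Arveson, as stated earlier in the paper. If a self-contained argument were wanted, I would pair with the swap-type operator $F=\sum_{j,l}\widehat{\phi_j\phi_l}\otimes\widehat{\psi_j\psi_l}$, whose trace pairing is bounded by $1$ on products $X\otimes Y$ with $\|X\trn=\|Y\trn=1$. The bound on elementary tensors is $|\tr((X\otimes Y)F)|=|\tr(XY^{\trsp})|\le 1$, with the transpose taken with respect to the chosen bases. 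Since $\tr(\vaa F)=\emme$, this gives $\|\vaa\ptrn\ge\emme$.
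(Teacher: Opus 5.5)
Your proposal is correct and follows the paper's proof: it isometrically embeds, via Proposition~\ref{proimm}, a finite-dimensional block carrying a state with $\|D\ptrn=\emme$ (from Theorem~\ref{theine}), the only cosmetic difference being that the paper truncates only $\jj$ to $\jjm$ and keeps all of $\hh$, whereas you truncate both factors. Your optional self-contained lower bound is also sound (your $F$ equals $\emme\vaa$, the partial transpose of the swap rather than the swap itself), and it is exactly the witness $E_\enne$ used in Example~\ref{xexchapu}.
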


\begin{proof}
Let $\{\psi_m\}_{m=1}^\infty$ be an orthonormal basis in the Hilbert space $\jj$, and, given any
$\emme\in\nat$, with $\emme\ge 2$, put $\jjm\defi\lispa\{\psi_m\}_{m=1}^\emme$. By Proposition~\ref{proimm},
the space $\trc\prtp\trcjjm=\trchjm$ can be (isometrically) identified with a linear subspace of $\thptj$.
Denoting by $\mptrnor$ the projective norm of $\trc\prtp\trcjjm$, by the assertion of Theorem~\ref{theine}
that follows relation~\eqref{estdens}, there is some density operator $D\in\trc\prtp\trcjjm\subset\thptj$
such that $\vem D\ptrn=\emme$. Since $\| D\ptrn=\vem D\ptrn=\emme$, then, by the arbitrariness of $\emme\ge 2$,
we conclude that the projective norm $\ptrnor$ --- hence, the Hermitian projective norm $\ptrnortr$ ---
is unbounded on $\hstahj$.
\end{proof}

To prove the second announced consequence of Theorem~\ref{theine} --- i.e., to show that, in the
\emph{genuinely infinite-dimensional setting}, there are states that are \emph{not} cross states
--- we first note a technical point that follows immediately from Proposition~\ref{proproj}.

\begin{lemma} \label{lemosd}
Let $\{\jl\}_{l=1}^\enne$, $\enne\in\nat$, be a set of mutually orthogonal closed subspaces of the the
Hilbert space $\jj$, and let $\pil\in\bopjj$ be the orthogonal projection operator onto the closed
subspace $\jl$. For every cross trace class operator $C\in\thptj$, $(\idpil)\tre C(\idpil)$ is a cross trace class
operator too, and, if $C=\sum_{l=1}^\enne(\idpil)\tre C(\idpil)$, then we have that
\begin{equation}
\enne^{-1}\sum_{l=1}^\enne\|C_l\ptrn\le\|C\ptrn\le\sum_{l=1}^\enne\|C_l\ptrn , \quad
\mbox{where $C_l\defi(\idpil)\tre C(\idpil)\in\thptj$.}
\end{equation}
\end{lemma}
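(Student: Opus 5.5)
The plan is to obtain Lemma~\ref{lemosd} as a direct specialization of Proposition~\ref{proproj}, taking the trivial one-piece decomposition of $\hh$ and a finite orthogonal decomposition of $\jj$ built from the subspaces $\jl$.

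Put $\jj_{\enne+1}\defi\big(\oplus_{l=1}^\enne\jl\big)^\perp$, with projection $\varpi_{\enne+1}=I-\sum_{l=1}^\enne\pil$. Then $\jj=\oplus_{l=1}^{\enne+1}\jl$ is a finite orthogonal sum decomposition (if $\jj_{\enne+1}=\{0\}$ we simply drop it). On the $\hh$ side we take the trivial decomposition $\hh=\hh$, so that $\nuno=1$ and $\pi_1=I$.

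The first claim, that $(\idpil)\tre C(\idpil)\in\thptj$ with $\|(\idpil)\tre C(\idpil)\ptrn\le\|C\ptrn$, is immediate from relation~\eqref{reproj-bis} of Proposition~\ref{proproj} applied with the projections $\pi=I$ and $\varpi=\pil$.

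For the two-sided estimate, assume $C=\sum_{l=1}^\enne C_l$ with $C_l=(\idpil)\tre C(\idpil)$. By mutual orthogonality of the $\pil$, the blocks of $C$ in this decomposition are
\[
(I\otimes\varpi_{l'})\tre C\tre(I\otimes\varpi_{n'})=\delta_{l'n'}\tre C_{l'} ,
\]
where $C_{\enne+1}=0$. Consequently $[C,I\otimes\pil]=0$ for every $l$, and the hypothesis of~\eqref{trinec} is met with $\nuno=1$ and $\ndue=\enne+1$.

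I would not cite~\eqref{trinec} verbatim, since its constant $\nuno\ndue=\enne+1$ is slightly too large. Instead I would use the middle inequality with $\nn$ from~\eqref{defnn}. Here $\nn$ counts the nonzero blocks, and there are at most $\enne$ of them, because the $(\enne+1)$-th diagonal block vanishes. Hence $\nn\le\enne$, which yields $\enne^{-1}\sum_l\|C_l\ptrn\le\nn^{-1}\sum_l\|C_l\ptrn\le\|C\ptrn$.

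The degenerate case in which all $C_l=0$ is trivial. A more direct route to the lower bound is simply to sum the $\enne$ inequalities $\|C_l\ptrn\le\|C\ptrn$ from the first claim. The upper bound is just the triangle inequality for the finite sum $C=\sum_l C_l$.

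There is no real obstacle here. The only point that needs care is getting the constant exactly $\enne^{-1}$ rather than $(\enne+1)^{-1}$, which the direct summation argument above settles.
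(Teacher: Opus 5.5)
Your proof is correct and follows the paper's route: the paper gives no separate argument and only says the lemma follows immediately from Proposition~\ref{proproj}. Concretely, the first claim is relation~\eqref{reproj-bis} with $\pi=I$ and $\varpi=\pil$, and the two-sided bound is the same reasoning as in~\eqref{trineb}--\eqref{trinec}, namely summing $\|C_l\ptrn\le\|C\ptrn$ and applying the triangle inequality. Your completion of $\{\jl\}$ by its orthogonal complement, to get $\enne^{-1}$ rather than $(\enne+1)^{-1}$, is handled correctly, though your direct summation argument already makes that detour unnecessary.
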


\begin{corollary} \label{corsubne}
In the case where $\dim(\hh)=\dim(\jj)=\infty$, the following strict set inclusions hold:
$\thptj\subsetneq\trchj$ and, in particular, $\hstahj\subsetneq\stahj$.
\end{corollary}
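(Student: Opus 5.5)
We will construct an explicit density operator $D\in\stahj$ that is not a cross state. By Proposition~\ref{prcrost}, $\hstahj=\stahj\cap\thptj$, so $D\notin\hstahj$ gives both $D\notin\thptj$ and $\hstahj\subsetneq\stahj$. The construction glues together, as an infinite direct sum, the finite-dimensional states of large projective norm supplied by Theorem~\ref{theine}; Lemma~\ref{lemosd} then forces the projective norm of the sum to be infinite.

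\emph{Step 1: choose orthogonal blocks.} Since $\dim(\jj)=\infty$, we partition an orthonormal basis of $\jj$ into infinitely many finite blocks, and let $\jl$, $l\in\nat$, be the span of the $l$-th block. We take $\dim(\jl)=\emme_l$ with $\emme_l\to\infty$; for instance $\emme_l=2^{l}$ would do. Let $\pil$ be the projection onto $\jl$. We also fix closed subspaces $\hh_l\subset\hh$ with $\dim(\hh_l)=\emme_l$; these need not be orthogonal to one another. By Theorem~\ref{theine}, applied to $\hh_l\otimes\jl$, together with the isometric embedding of Proposition~\ref{proimm}, there are density operators $D_l$ living on $\hh_l\otimes\jl$ with $\|D_l\ptrn=\emme_l$. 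These are, for instance, the maximally entangled states, and the value $\emme_l$ is exactly the saturation of \eqref{estdens}.

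\emph{Step 2: define the state.} Set $D\defi\sum_l p_l D_l$, where $p_l>0$, $\sum_l p_l=1$, and $\sum_l p_l\emme_l=\infty$. With $\emme_l=2^l$ we may take $p_l\propto l^{-2}$. The series converges in trace norm, so $D\in\stahj$. By construction $(I\otimes\pil)D(I\otimes\pil)=p_lD_l$ for every $l$.

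\emph{Step 3: a lower bound on the projective norm.} Suppose, to the contrary, that $D\in\thptj$. For each $\enne$, let $\varpi_0$ be the projection onto $(\oplus_{l\le\enne}\jl)^\perp$. By Proposition~\ref{proproj}, compressing by $I\otimes(I-\varpi_0)$ does not increase the projective norm, so $C_\enne\defi\sum_{l\le\enne}p_lD_l$ satisfies $\|C_\enne\ptrn\le\|D\ptrn$.

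The difficulty is that Lemma~\ref{lemosd} only gives the weak lower bound $\enne^{-1}\sum_{l\le\enne}p_l\emme_l$, because of the factor $\enne^{-1}$. We avoid this by using only a single block at a time: compression by $I\otimes\pil$ gives directly
\begin{equation}
p_l\emme_l=\|p_lD_l\ptrn\le\|D\ptrn .
\end{equation}
So it suffices to choose the weights so that $p_l\emme_l$ is unbounded. With $\emme_l=2^l$ and $p_l=c\,l^{-2}$ we get $p_l\emme_l=c\,2^l/l^2\to\infty$, which contradicts $\|D\ptrn<\infty$. Hence $D\notin\thptj$, and this single block-by-block bound settles the main obstacle.

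The only remaining check is the identity $\|p_lD_l\ptrn=p_l\emme_l$ inside the big space, which holds because the embedding of Proposition~\ref{proimm} is isometric.
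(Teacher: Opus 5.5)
Your proof is correct and follows the paper's construction: the same block decomposition $\jj=\oplus_l\jl$, block states of maximal projective norm obtained from Theorem~\ref{theine} and embedded isometrically via Proposition~\ref{proimm}, and a contradiction with the compression bound~\eqref{reproj-bis}. The only difference is the lower-bound step. You compress onto a single block and get $p_l\emme_l\le\|D\ptrn$ directly, whereas the paper compresses onto the first $\enne$ blocks and invokes the averaged bound of Lemma~\ref{lemosd}, which is never stronger (an average cannot exceed its largest term) --- so your shortcut would also work with the paper's own weights, since there $2^{-l}\|\Dl\ptrn=2^l$. Note that what your argument actually needs is $\sup_l p_l\emme_l=\infty$, not the divergence of $\sum_l p_l\emme_l$ stated in Step~2; your concrete choice of weights gives both.
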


\begin{proof}
Let us explicitly construct a suitable density operator $D\in\stahj\setminus\hstahj$. To this end,
given any orthonormal basis $\{\psi_n\}_{n=1}^\infty$ in $\jj$, we put
$\jl\defi\lispa\{\psi_{j(l)},\ldots,\psi_{k(l)}\}$, $l\in\nat$, with
\begin{equation}
j(l)\defi\sum_{n=0}^{l-1} 2^{2n}=\frac{2^{2l}-1}{3}\in\{1,5,\ldots\} \fin ,
\ k(l)\defi\sum_{n=1}^l 2^{2n}=4\tre j(l)=\frac{4\big(2^{2l}-1\big)}{3}\in\{4,20,\ldots\} \fin ,
\end{equation}
so that $\dim(\jl)=k(l)-j(l)+1=2^{2l}$ and $\jj=\oplus_{l=1}^\infty\jl$. By the assertion of
Theorem~\ref{theine} that follows relation~\eqref{estdens}, we can choose some density
operator $\Dl\in\stahjl\subset\trchjl=\trc\prtp\trcjl$ --- which, by Proposition~\ref{proimm},
can be isometrically embedded in $\thptj$ --- such that $\|\Dl\ptrn=\dim(\jl)=2^{2l}$.
Let us then select a density operator $D$ in the $\ttrnor$-closed convex hull
$\clco(\{\Dl\})\subset\stahj$; precisely, we set
\begin{equation} \label{defid}
D\defi\ttrnsum_{l=1}^\infty 2^{-l}\tre \Dl \in\stahj \fin .
\end{equation}
We will now argue that $D$ cannot belong to the set $\hstahj$ of all cross states wrt
the bipartition $\hhjj$. In fact, putting $\Den\defi\sum_{l=1}^\enne 2^{-l}\tre \Dl\in\thptj$,
for any $\enne\in\nat$, we have that
\begin{equation}
2^{-l}\tre \|\Dl\ptrn=2^l \ \ \mbox{and} \ \
\|\Den\ptrn\ge\enne^{-1}\tre{\textstyle\sum_{l=1}^\enne 2^{-l}\tre \|\Dl\ptrn}=\enne^{-1}\tre
{\textstyle\sum_{l=1}^\enne 2^l}=\big(2^{\enne+1}-2\big)/\enne \fin .
\end{equation}
Here, the inequality $\|\Den\ptrn\ge\enne^{-1}\sum_{l=1}^\enne 2^{-l}\tre \|\Dl\ptrn$ follows
directly from Lemma~\ref{lemosd} (with $C=\Den$), because, by construction,
$2^{-l}\tre \Dl=D_l=(\idpil)D(\idpil)=(\idpil)\Den(\idpil)$, where $\pil\in\bopjj$ is
the orthogonal projection operator onto the closed subspace $\jl$ of $\jj$. Hence, we have:
\begin{equation}
\|\Den\ptrn\ge\big(2^{\enne+1}-2\big)/\enne\ge 2^\enne/\enne \fin , \ \forall\cinque\enne\in\nat
\implies \lim_{\enne}\|\Den\ptrn=\infty \fin .
\end{equation}
Therefore, we must conclude that the series~\eqref{defid} cannot converge wrt the projective norm,
but this fact alone does not entail that $D\not\in\hstahj$. To prove that the density operator $D$
\emph{cannot be} a cross state, we will argue by contradiction. Indeed, first note that
$\Den=(I\otimes\pien)D(I\otimes\pien)$, where $\pien$ is the projection operator defined by
$\pien\defi\sum_{l=1}^\enne\pil$; i.e., the orthogonal projection onto $\oplus_{l=1}^\enne\jl$.
Next, suppose that $D\in\hstahj$. Then, by relation~\eqref{reproj-bis} --- with the obvious
identifications $\pi\equiv I$ and $\varpi\equiv\pien$ --- we should have that
\begin{equation}
\|D\ptrn\ge\|(I\otimes\pien)D(I\otimes\pien)\ptrn=\|\Den\ptrn\ge 2^\enne/\enne \fin , \quad
\forall\cinque\enne\in\nat \fin .
\end{equation}
Thus, we find a contradiction, and hence $D=\ttrnorli_{\enne}\Den\in\stahj\setminus\hstahj$.
\end{proof}

\begin{remark}
The state $D\in\stahj\setminus\hstahj$ constructed in the proof of Corollary~\ref{corsubne}
--- see~\eqref{defid} --- is \emph{not pure}. In Section~\ref{entanglement} --- see, in particular,
Example~\ref{xexchapu} --- we will show that, in the genuinely infinite-dimensional setting, there
also exist \emph{pure states} which are \emph{not} cross states.
\end{remark}

\subsection{The signed decomposition of a selfadjoint cross trace class operator}
\label{sigdec}

We will now introduce a suitable simple-tensor decomposition, of the general form~\eqref{gendec}, of an
element of the \emph{selfadjoint} cross trace class $\thsptjs$, where we relax the assumption that the
scalar coefficients are non-negative --- like in the standard decomposition~\eqref{stardecosa} ---
\emph{but}, instead, we assume the \emph{positivity} of the operators involved in the elementary tensor
product. Specifically, since these operators are also assumed to be normalized, we require them to be
\emph{density} operators, i.e., to belong to the set $\eshsj$.

\begin{lemma} \label{lemtrtr}
For every $S\in\trc$, $|\tr(S)|\le\|S\trn$, and $|\tr(S)|=\|S\trn$ iff, for some $c\in\toro$, $c\tre S\ge 0$;
i.e., $\tr(S)=\|S\trn$ iff $S\in\trcp$.
\end{lemma}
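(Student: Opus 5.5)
The plan is to reduce everything to the singular value decomposition of Remark~\ref{sivadeb}. For $S=0$ there is nothing to prove, so let $S\neq 0$ and write $S=\sum_k s_k\,\epk$ with $s_k>0$ and $\{\eta_k\}$, $\{\phi_k\}$ orthonormal systems. Since the sum converges in trace norm and the trace is continuous, $\tr(S)=\sum_k s_k\langle\phi_k,\eta_k\rangle$. The Cauchy--Schwarz inequality gives $|\langle\phi_k,\eta_k\rangle|\le 1$, hence $|\tr(S)|\le\sum_k s_k=\|S\trn$. This proves the inequality.

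Next, the equality case. If $c\,S\ge 0$ for some $c\in\toro$, then $\|S\trn=\|cS\trn=\tr(cS)=|\tr(S)|$, because for a positive operator $|cS|=cS$. For the converse, suppose $|\tr(S)|=\|S\trn$ and choose $c\in\toro$ with $c\,\tr(S)=|\tr(S)|$. Then
\[
\sum_k s_k\,\repa\big(c\langle\phi_k,\eta_k\rangle\big)=\sum_k s_k ,
\]
and each term satisfies $\repa(c\langle\phi_k,\eta_k\rangle)\le 1$. Since every $s_k>0$, this forces $c\langle\phi_k,\eta_k\rangle=1$ for every $k$. This is the step where sheerness of the SVD is essential: no term can be ignored. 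Because $\|\phi_k\|=\|\eta_k\|=1$, equality in Cauchy--Schwarz gives $\eta_k=\bar c\,\phi_k$. Therefore $cS=\sum_k s_k\,|\phi_k\rangle\langle\phi_k|$, which is a trace-norm-convergent sum of positive operators and hence positive, since $\trcp$ is closed.

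Finally, the last equivalence. If $\tr(S)=\|S\trn$, then $|\tr(S)|=\|S\trn$. When $S\neq0$, the argument above applies with $c=1$, because $\tr(S)=|\tr(S)|>0$, so $S\ge 0$. The converse is immediate. The only mild obstacle is the unit-modulus bookkeeping, namely that $\eta_k=\bar c\,\phi_k$ holds with the same $c$ for all $k$. This follows directly, since $c$ was fixed before examining the individual terms.
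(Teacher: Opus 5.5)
The paper states Lemma~\ref{lemtrtr} without proof, leaving it to the reader. Your argument is correct and is the standard singular-value-decomposition proof. You use the sheerness of the SVD exactly where it is needed: it forces $c\langle\phi_k,\eta_k\rangle=1$ for every $k$, and hence $\eta_k=\bar c\,\phi_k$.

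The one step to tidy is the justification of $\mathrm{tr}(S)=\sum_k s_k\langle\phi_k,\eta_k\rangle$. Appealing to the trace-norm continuity of the trace is slightly circular, because that continuity is usually derived from the very inequality $|\mathrm{tr}(S)|\le\|S\|_1$ you are proving. You have two clean alternatives:
\begin{itemize}
\item cite the formula directly from Remark~\ref{sivadeb}, where the paper records it;
\item compute $\mathrm{tr}(S)=\sum_n\langle e_n,Se_n\rangle$ in an orthonormal basis $\{e_n\}$ and interchange the two sums.
\end{itemize}
The interchange is legitimate because $\sum_k\sum_n s_k\,|\langle e_n,\eta_k\rangle|\,|\langle\phi_k,e_n\rangle|\le\sum_k s_k<\infty$, by Cauchy--Schwarz and Bessel's inequality. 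Parseval's identity then gives $\sum_n\langle\phi_k,e_n\rangle\langle e_n,\eta_k\rangle=\langle\phi_k,\eta_k\rangle$.
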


\begin{lemma} \label{lemposit}
If $X\in\trc$ and $Y\in\trcjj$ are such that $X\otimes Y\ge 0$ and $\|X\otimes Y\ttrn=1$, then
$X\otimes Y=\rho\otimes\sigma$, for some (uniquely determined) density operators $\rho\in\sta$
and $\sigma\in\stajj$.
\end{lemma}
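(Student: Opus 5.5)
The plan is to reduce the claim to two facts: a factorization of positive elementary tensors, and the equality case of $|\tr(S)|\le\|S\trn$ in Lemma~\ref{lemtrtr}.

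Since $X\otimes Y\ge 0$, Fact~\ref{factdenso-bis} gives $X\otimes Y=S\otimes T$ for some $S\in\trcp$ and $T\in\trcjjp$. Both $S$ and $T$ are nonzero: otherwise $X\otimes Y=0$, which contradicts $\|X\otimes Y\ttrn=1$.

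Next, Fact~\ref{factrc} gives
\[
1=\|S\otimes T\ttrn=\|S\trn\tre\|T\trn=\tr(S)\tr(T),
\]
where the last equality uses $\|S\trn=\tr(S)$ for positive $S$ (Lemma~\ref{lemtrtr}), and likewise for $T$. Set $\rho\defi S/\tr(S)$ and $\sigma\defi T/\tr(T)$. These are density operators, and
\[
\rho\otimes\sigma=\frac{S\otimes T}{\tr(S)\tr(T)}=S\otimes T=X\otimes Y .
\]

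For uniqueness, suppose $\rho\otimes\sigma=\tau\otimes\omega$ with $\rho,\tau\in\sta$ and $\sigma,\omega\in\stajj$. The second part of Fact~\ref{factdenso} then forces $\rho=\tau$ and $\sigma=\omega$.

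No step is a real obstacle here; the only thing to check is that Fact~\ref{factdenso-bis} really yields positive factors with nonzero norm. If one preferred to avoid Fact~\ref{factdenso-bis}, there is a direct route. Take unit vectors $\phi$ and $\psi$ with $\langle\phi,X\phi\rangle\neq0$ and $\langle\psi,Y\psi\rangle\neq0$; such vectors exist by polarization, since $X$ and $Y$ are nonzero. Positivity of $X\otimes Y$ gives $\langle\phi,X\phi'\rangle\langle\psi,Y\psi\rangle$-type relations. From these one shows $X=cX_0$ and $Y=c^{-1}Y_0$ with $X_0$, $Y_0$ positive and $c\in\ccc\setminus\{0\}$. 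This is more work, so I would rely on the Fact.
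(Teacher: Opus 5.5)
Your proof is correct and follows the argument the paper intends. The paper states this lemma without proof, as an immediate consequence of Fact~\ref{factdenso-bis} (positive factors), Fact~\ref{factrc} with Lemma~\ref{lemtrtr} (normalization $\mathrm{tr}(S)\,\mathrm{tr}(T)=1$), and Fact~\ref{factdenso} (uniqueness), which is exactly the chain you use; the optional ``direct route'' you sketch at the end is too vague to stand as a proof, but you rightly do not rely on it.
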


\begin{theorem}[The signed decomposition] \label{thside}
The real Banach space $\thsptjs$ of selfadjoint cross trace class operators is characterized
as the set of all trace class operators on $\hhjj$ of the form
\begin{equation} \label{decosa}
C=\sum_k t_k \tre (\rho_k\otimes\sigma_k) \fin , \quad t_k\in\erre
\ ({\textstyle\sum_k |t_k|}<\infty)\fin , \ \rho_k\otimes\sigma_k\in\eshsj \fin ,
\end{equation}
where the (possibly countably infinite) sum is absolutely convergent wrt to the Hermitian projective
norm $\ptrnortr$ --- equivalently, wrt the projective norm $\ptrnor$ or the trace norm $\ttrnor$ --- and
\begin{equation}
\tr(C)=\sum_k t_k \fin , \quad \|C\ttrn\le\|C\ptrn\le\ptrntr{C}\le\sum_k |t_k| \fin .
\end{equation}
In fact, if $C=\sum_j\tre r_j \tre (X_j\otimes Y_j)$ is any Hermitian standard decomposition
of $C\in\thsptjs$, then $C$ admits a decomposition of the form~\eqref{decosa} such that
$\sum_k\tre |t_k|=\sum_j\tre r_j$ (and conversely).  Therefore, decomposition~\eqref{decosa} is
$\ptrnortr$-norming and, if $C$ is $\ptrnortr$-optimally decomposable (in particular, Hermitian-optimally
decomposable), then it admits a decomposition of the form~\eqref{decosa} satisfying $\ptrntr{C}=\sum_k |t_k|$
(in particular, $\ptrntr{C}=\|C\ptrn=\sum_k |t_k|$). Moreover, one can always assume --- still keeping the
previous properties --- that in decomposition~\eqref{decosa} the density operators $\{\rho_k\}$, $\{\sigma_k\}$ are,
specifically, pure states; i.e., $\rho_k=\pi_k\in\pusta$ and $\sigma_k=\varpi_k\in\pustajj$, for every $k$.

If $C\in\thsptjs$ is \emph{not} $\ptrnortr$-optimally
decomposable, then each expansion of $C$ of the form~\eqref{decosa} must contain both (strictly) positive
and (strictly) negative scalar coefficients.

In the case where $\emme=\min\{\dim(\hh),\dim(\jj)\}<\infty$, for every $C\in\thsptjs$, there is
a finite decomposition of the form~\eqref{decosa} consisting of, at most, $4\emme^2$ summands.

Let $\PP\in\thsptjs$. The following conditions are equivalent:

\begin{enumerate}[label=\tt{(P\arabic*)}]

\item \label{posa}
$\PP$ admits a decomposition of the form~\eqref{decosa}, that is, simultaneously, a
\emph{Hermitian standard} decomposition; i.e., if $\PP$ is
of the form
\begin{equation} \label{decpos}
\PP=\sum_k r_k \tre (\rho_k\otimes \sigma_k) \fin , \quad
r_k\ge 0 \ ({\textstyle\sum_k r_k}<\infty), \ \rho_k\otimes\sigma_k\in\eshsj \fin .
\end{equation}

\item \label{posb}
$\PP$ is positive, Hermitian-optimally decomposable and such that $\|\PP\ptrn=\|\PP\ttrn$.

\item \label{posb-bis}
$\PP$ is positive, $\ptrnortr$-optimally decomposable and such that $\ptrntr{\PP}=\|\PP\ttrn$.

\item \label{posc}
$\PP$ is positive, optimally decomposable and such that $\|\PP\ptrn=\|\PP\ttrn$.

\item \label{posd}
$\PP$ is positive, optimally decomposable and such that $\ptrntr{\PP}=\|\PP\ptrn=\|\PP\ttrn$.

\end{enumerate}
If any of the equivalent conditions~{\ref{posa}--\ref{posd}} holds, then any expansion of $\PP$
of the form~\eqref{decpos} is an optimal Hermitian standard decomposition; thus, it is, at the same time,
an optimal standard decomposition and a $\ptrnortr$-optimal (Hermitian) standard decomposition, and
\begin{equation} \label{norpode}
\tr(\PP)=\|\PP\ttrn=\|\PP\ptrn=\ptrntr{\PP}=\sum_k r_k \fin .
\end{equation}
\end{theorem}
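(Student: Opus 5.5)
The plan rests on one conversion: every selfadjoint elementary tensor $X\otimes Y$ with $\|X\trn=\|Y\trn=1$ can be rewritten as a signed combination of products of density operators, with the same total weight. By Fact~\ref{prolico}, write $X=a_+\rho_+-a_-\rho_-$ and $Y=b_+\sigma_+-b_-\sigma_-$, with $\rho_\pm\in\sta$, $\sigma_\pm\in\stajj$, $a_\pm,b_\pm\ge 0$ and $a_++a_-=\|X\trn=1=b_++b_-$. Expanding the product gives four terms $\pm a_\alpha b_\beta(\rho_\alpha\otimes\sigma_\beta)$ whose absolute coefficients sum to $(a_++a_-)(b_++b_-)=1$.

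Take a Hermitian standard decomposition $C=\sum_j r_j(X_j\otimes Y_j)$ (Corollary~\ref{corcro}, Remark~\ref{obvrem}). Substituting the four-term expansion and reindexing the absolutely summable family via Fact~\ref{facdou} produces~\eqref{decosa} with $\sum_k|t_k|=\sum_j r_j$. For the converse, any series~\eqref{decosa} is already a Hermitian standard decomposition: absorb the sign into $\rho_k$ by setting $r_k=|t_k|$ and $X_k=\sgn(t_k)\rho_k$. Hence the set described in~\eqref{decosa} coincides with $\thsptjs$, and the decomposition is $\ptrnortr$-norming. The remaining basic properties follow directly:
\begin{itemize}
\item $\tr(C)=\sum_k t_k$ by trace-norm continuity of the trace.
\item The norm inequalities follow from~\eqref{inenorms}.
\item An optimal decomposition transfers to an optimal one.
\item Pure states can be arranged by spectrally decomposing each $\rho_k,\sigma_k$, a convex expansion that preserves $\sum|t_k|$.
\item In the case $\emme<\infty$, the finite decomposition with $\emme^2$ terms from Proposition~\ref{profoc}, multiplied by $4$, gives at most $4\emme^2$ summands.
\end{itemize}

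For the claim on non-optimally decomposable $C$, I argue by contrapositive. If some expansion has coefficients of a single sign, then $\pm C$ is a positive combination of product states, and $\sum|t_k|=|\tr C|\le\|C\ttrn\le\ptrntr{C}\le\sum|t_k|$. Every inequality in this chain is then an equality, so that expansion is $\ptrnortr$-optimal.

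For the equivalences, start from~\ref{posa}. It gives $\PP\ge0$ and the chain $\sum r_k=\tr\PP=\|\PP\ttrn\le\|\PP\ptrn\le\ptrntr{\PP}\le\sum r_k$, so all quantities coincide. This yields~\ref{posd} and, through~\eqref{norpode}, the optimality assertions; \ref{posb}, \ref{posb-bis} and~\ref{posc} also follow from it, using Proposition~\ref{proptimals}. The main obstacle is the reverse direction, e.g.\ \ref{posc}$\Rightarrow$\ref{posa}. Given an optimal decomposition $\PP=\sum_k r_k(X_k\otimes Y_k)$ with $\sum r_k=\|\PP\ptrn=\|\PP\ttrn=\tr\PP$, we have
\begin{equation}
\sum_k r_k=\tr\PP=\sum_k r_k\tr(X_k)\tr(Y_k).
\end{equation}
Since $|\tr(X_k)\tr(Y_k)|\le 1$, this forces $\tr(X_k)\tr(Y_k)=1$ whenever $r_k>0$, i.e.\ $\tr(X_k\otimes Y_k)=\|X_k\otimes Y_k\ttrn$. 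Lemma~\ref{lemtrtr} then gives $X_k\otimes Y_k\ge0$, and Lemma~\ref{lemposit} gives $X_k\otimes Y_k=\rho_k\otimes\sigma_k$, which is~\ref{posa}. For~\ref{posb-bis}, apply the same argument to a $\ptrnortr$-optimal Hermitian decomposition, whose weights sum to $\ptrntr{\PP}=\tr\PP$. Condition~\ref{posb} implies~\ref{posb-bis} by Proposition~\ref{proptimals}.
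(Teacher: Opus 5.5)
Your proposal is correct and follows essentially the same route as the paper's proof. The shared steps are the four-term conversion of each selfadjoint simple tensor via Fact~\ref{prolico}, sign absorption for the converse, and, for \ref{posc}$\Rightarrow$\ref{posa}, saturation of $|\tr(X_k)\tr(Y_k)|\le 1$ followed by Lemmas~\ref{lemtrtr} and~\ref{lemposit}. The only differences are cosmetic: you prove the two-sign claim by contrapositive, and you deduce \ref{posa} directly from \ref{posb-bis} rather than passing through \ref{posc}.
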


\begin{proof}
We now prove the first part of the theorem, concerning the characterization of the selfadjoint
cross trace class $\thsptjs$ and the related facts. Let $C=\sum_j\tre r_j \tre (X_j\otimes Y_j)$
be a Hermitian standard decomposition of $C\in\thsptjs$. Let us express each of the selfadjoint
trace class operators $X_j\in\trcsa$, $Y_j\in\trcsajj$ as a (real) linear combination of two density
operators --- by Fact~\ref{prolico} --- i.e.,
\begin{equation} \label{sumdens}
X_j = \xjp\cinque\rjp\msei + \xjm\cinque\rjm \fin , \quad Y_j = \yjp\tre\sjp \msei + \yjm\tre\sjm \fin ,
\end{equation}
where $\xjp\msei=\frac{1}{2}\tr(X_j+|X_j|)\ge 0,\ldots,\yjm\msei=\frac{1}{2}\tr(Y_j-|Y_j|)\le 0$,
$\rjp,\rjm\in\sta$, $\sjp,\sjm\in\stajj$ and
\begin{align}
& \xjp\cinque\rjp=\frac{1}{2}\tre(X_j+|X_j|)\ge 0 \fin , \quad
\xjm\cinque\rjm=\frac{1}{2}\tre(X_j-|X_j|)\le 0 \fin ,
\nonumber \\
& \yjp\cinque\sjp=\frac{1}{2}\tre(Y_j+|Y_j|)\ge 0 \fin , \quad
\yjm\tre\sjm=\frac{1}{2}\tre(Y_j-|Y_j|)\le 0 \fin ;
\end{align}
hence: $\xjp=0 \ \mbox{for $X_j+|X_j|=0$}, \ldots, \yjm=0 \ \mbox{for $Y_j-|Y_j|=0$}$, and
$(\xjp\cinque\rjp)(\xjm\cinque\rjm)=0=(\yjp\tre\sjp)(\yjm\tre\sjm)$,
\begin{equation} \label{relabso}
|X_j|=\xjp\cinque\rjp\msei - \xjm\cinque\rjm \fin , \quad
|Y_j|=\yjp\tre\sjp \msei - \yjm\tre\sjm \fin .
\end{equation}
(Clearly, if, say, $\xjp=0$, then $\rjp\in\sta$ can be chosen arbitrarily, etc.)
Next, let us put
\begin{equation}
\tjpp\msei\defi\xjp\cinque\yjp\ge 0  \fin , \ \ \tjpm\msei\defi\xjp\cinque\yjm\le 0  \fin , \ \
\tjmp\msei\defi\xjm\cinque\yjp\le 0  \fin , \ \ \tjmm\msei\defi\xjm\cinque\yjm\ge 0  \fin ,
\end{equation}
so that
\begin{equation} \label{sumdens-bis}
X_j\otimes Y_j=\tjpp(\rjp\otimes\sjp)+\tjpm(\rjp\otimes\sjm)+
\tjmp(\rjm\otimes\sjm)+\tjmm(\rjm\otimes\sjm) \fin ,
\end{equation}
and --- since $\|X_j\trn=\tr(|X_j|)=\tr(\xjp\tre\rjp)-\tr(\xjm\tre\rjm)=\xjp\msei -\xjm$,
$\|Y_j\trn=\yjp\msei -\yjm$ (by~\eqref{relabso}) --- we have:
\begin{equation}
1=\|X_j\otimes Y_j\ttrn=\|X_j\trn\tre\|Y_j\trn=(\xjp\msei -\xjm)(\yjp\msei -\yjm)=
|\tjpp|+|\tjpm|+|\tjmp|+|\tjmm| \fin .
\end{equation}
By the last relation, we see that
\begin{equation}
0\le\sum_j\tre r_j=\sum_j\tre r_j \tre (|\tjpp|+|\tjpm|+|\tjmp|+|\tjmm|) \fin .
\end{equation}
Eventually, by applying the expressions~\eqref{sumdens-bis} to the Hermitian standard decomposition
of $C$, and by suitably renaming the coefficients and their indices --- i.e., introducing a new set
of coefficients $\{t_k\}\subset\erre$ (e.g., if $j=1,2,\ldots$, we can put $t_{4j-3}=r_1\tre\tjpp$,
$t_{4j-2}=r_1\tre\tjpm$, $t_{4j-1}=r_1\tre\tjmp$ and $t_{4j}=r_1\tre\tjmm$)
---  we obtain a relation of the form
\begin{equation}
C=\sum_j r_j \tre (X_j\otimes Y_j)=\sum_k t_k \tre (\rho_k\otimes \sigma_k) \fin ,
\quad t_k\in\erre \fin , \ \rho_k\in\sta \fin ,\ \sigma_k\in\stajj \fin ,
\end{equation}
where $0\le\sum_k\tre |t_k|=\sum_j\tre r_j \tre (|\tjpp|+|\tjpm|+|\tjmp|+|\tjmm|)=\sum_j\tre r_j<\infty$.
Therefore, it is proven that decomposition~\eqref{decosa}, with the specified properties, holds, and
converges absolutely wrt the Hermitian projective norm; it also converges (to the same limit) wrt the
projective norm, and wrt the trace norm, so that $\tr(C)=\sum_k t_k \tre\tr(\rho_k\otimes \sigma_k)=\sum_k t_k$.
(Conversely, from every decomposition of $C\in\thsptjs$ of the form~\eqref{decosa} --- setting,
by convention, $\sgn(0)\equiv 1$ --- we immediately get the Hermitian standard decomposition
$C=\sum_k\tre|t_k|\big((\sgn(t_k)\sei\rho_k)\otimes\sigma_k\big)$.)

Now, take any expansion of $C\in\thsptjs$ of the form~\eqref{decosa} and, exploiting the spectral
decomposition of a density operator, express each product state $\rho_k\otimes \sigma_k$ as a
convex combination of \emph{pure} states, i.e.,
\begin{equation} \label{expust}
\rho_k\otimes \sigma_k= \sum_{m,n} \pkm\tre\qkn\tre (\pikm\otimes\varpikn) \fin ,
\end{equation}
where $\pikm\in\pusta$, $\varpikn\in\pustajj$, $\pkm,\qkn>0$ and, for every $k$, $\sum_m \pkm=1=\sum_n\qkn$;
hence, the --- possibly countably infinite --- sum is absolutely convergent wrt the Hermitian projective norm
(equivalently, wrt the projective norm or the trace norm). Then, by suitably renaming the coefficients and their
indices, we get a decomposition of the form
\begin{equation} \label{decpusta}
C=\sum_l s_l \tre (\pi_l\otimes\varpi_l) \fin , \quad s_l\in\erre \fin , \ \pi_l\in\pusta \fin ,\
\varpi_l\in\pustajj \fin .
\end{equation}
Since decomposition~\eqref{decosa} is $\ptrnortr$-norming and
$\|\rho_k\otimes \sigma_k\ttrn=1=\sum_{m,n}\tre\pkm\tre\qkn\tre\|\pikm\otimes\varpikn)\ttrn$, then
decomposition~\eqref{decpusta} is $\ptrnortr$-norming too; moreover, if $D$ is $\ptrnortr$-optimally decomposable
(in particular, Hermitian-optimally decomposable), then it admits a decomposition of the form~\eqref{decpusta}
satisfying $\ptrntr{C}=\sum_l |s_l|$ (in particular, $\ptrntr{C}=\|C\ptrn=\sum_l |s_l|$).

Let $C\in\thsptjs$ be \emph{not} $\ptrnortr$-optimally decomposable (hence, in particular, $C\neq 0$).
For every decomposition of the form $C=\sum_k\tre t_k \tre (\rho_k\otimes \sigma_k)$, we have that
$0<\ptrntr{C}<\sum_k |t_k|$ (otherwise, by the first part of the proof, the second strict inequality
would be violated by some decomposition of $C$ of this form). By this fact, and by Lemma~\ref{lemtrtr},
we see that
\begin{equation}
|{\textstyle \sum_k t_k}|=|\tr(C)|\le\|C\ttrn\le\ptrntr{C} < {\textstyle \sum_k |t_k|}
\implies \{t_k\}\cap\erreps\neq\varnothing\neq\{t_k\}\cap\errems \fin .
\end{equation}

By Proposition~\ref{profoc}, if $\emme=\min\{\dim(\hh),\dim(\jj)\}<\infty$, then every $C\in\thsptjs$
admits a finite Hermitian standard decomposition consisting of, at most, $\emme^2$ summands. Hence,
arguing as above, one concludes that $C$ admits a finite decomposition of the form~\eqref{decosa}
consisting of, at most, $4\emme^2$ summands (because $X\otimes Y\in\ethtjs$ can be expressed as a
real linear combination of, at most, four density operators).

We next prove the second part of the theorem, i.e., the equivalence of properties~{\ref{posa}--\ref{posd}}.
If $\PP\in\thsptjs$ is of the form~\eqref{decpos}, then it is positive --- in fact, for every $D\in\stahj$
(in particular, for every pure state in $\pustahj$), since decomposition~\eqref{decpos} converges wrt
the trace norm, we have that $\tr(D\PP)=\sum_k\tre r_k \tre \tr(D(\rho_k\otimes \sigma_k))\ge 0$ ---
and, by relations~\eqref{inenorms},
\begin{equation}
\sum_k r_k=\tr(\PP)=\|\PP\ttrn\le\|\PP\ptrn\le\ptrntr{\PP}\le
\sum_k r_k\tre\|\rho_k\otimes \sigma_k\ttrn=\sum_k\tre r_k \fin ,
\end{equation}
so that, actually, $\ptrntr{\PP}=\|\PP\ptrn=\|\PP\ttrn=\tr(\PP)=\sum_k\tre r_k$; i.e., the
decomposition~\eqref{decpos} is both an optimal and a $\ptrnortr$-optimal Hermitian standard
decomposition of $\PP$ and $\ptrntr{\PP}=\|\PP\ptrn=\|\PP\ttrn$. Thus, condition~\ref{posa} implies
all properties~{\ref{posb}--\ref{posd}}. Moreover, condition~\ref{posb} implies~\ref{posb-bis},
because, if $\PP$ is Hermitian-optimally decomposable and $\|\PP\ptrn=\|\PP\ttrn$, then, by
point~\ref{opta} of Proposition~\ref{proptimals},  every optimal Hermitian standard decomposition
of $\PP$ is also $\ptrnortr$-optimal and $\ptrntr{\PP}=\|\PP\ptrn=\|\PP\ttrn$. Next, if
$\PP\in\thsptjs$ is $\ptrnortr$-optimally decomposable and $\ptrntr{\PP}=\|\PP\ttrn$, then ---
by~\eqref{inenorms} ($\ptrntr{\PP}\ge\|\PP\ptrn\ge\|\PP\ttrn$) --- $\ptrntr{\PP}=\|\PP\ptrn=\|\PP\ttrn$,
so that every $\ptrnortr$-optimal Hermitian standard decomposition of $\PP$ is also optimal
\emph{tout court}. Therefore, condition~\ref{posb-bis} implies~\ref{posc}. Let us finally prove
that condition~\ref{posc} implies~\ref{posa}, and hence conditions~{\ref{posa}--\ref{posd}} are
mutually equivalent (because
\ref{posa}$\implies$\ref{posb}$\implies$\ref{posb-bis}$\implies$\ref{posc}$\implies$\ref{posa}$\implies$\ref{posd}
and~\ref{posd} trivially implies~\ref{posc}).

In fact, if $\PP\in\thsptjs$ is positive and optimally decomposable, with $\PP\neq 0$
(otherwise, there is nothing to prove), then, picking any optimal standard decomposition
$\PP=\sum_k\tre r_k \tre (X_k\otimes Y_k)$, with $r_k> 0$ (for every $k$) --- i.e, a \emph{sheer}
decomposition --- we have:
\begin{align}
0<\|\PP\ttrn=\tr(\PP)
=
\sum_k r_k \sei\tr(X_k)\sei\tr(Y_k)
& =
\big|\textstyle{\sum_k r_k \sei\tr(X_k)\sei\tr(Y_k)}\big|
\nonumber\\
& \le
\sum_k r_k \sei|\tr(X_k)|\sei|\tr(Y_k)|
\nonumber\\ \label{estds}
& \le
\sum_k r_k \sei\|X_k\trn\tre\|Y_k\trn
=\sum_k r_k=\|\PP\ptrn \fin .
\end{align}
Here, the second inequality follows from the fact that $|\tr(X_k)|\le\|X_k\trn$ and
$|\tr(Y_k)|\le\|Y_k\trn$; see the first assertion of Lemma~\ref{lemtrtr}.

Now, if, moreover, we assume that $\|\PP\ptrn=\|\PP\ttrn$, then the inequalities on
the second and third lines of~\eqref{estds} must be saturated; i.e., we must have that
$\sum_k r_k \sei\tr(X_k)\sei\tr(Y_k)=\sum_k r_k$, or, equivalently,
\begin{equation}
\sum_k r_k \sei\tr(X_k\otimes Y_k)=\sum_k r_k \fin .
\end{equation}
Taking into account that $r_k> 0$ (for every $k$), this relation holds iff
$\sum_k\tre p_k \sei\tr(X_k\otimes Y_k)=1$, where
$|\tr(X_k\otimes Y_k)|=|\tr(X_k)\sei\tr(Y_k)|\le\|X_k\trn\tre\|Y_k\trn=1$
and we have put $p_k\defi r_k/(\sum_k\tre r_k)>0$ ($\sum_k\tre p_k=1$).
Since the extreme points
of the closed unit disc $\{z\in\ccc\colon |z|\le 1\}$ form the unit circle $\toro$, this last condition
is equivalent to requiring that, for every $k$, $\tr(X_k\otimes Y_k)=1=\|X_k\otimes Y_k\ttrn$; namely,
by the second assertion of Lemma~\ref{lemtrtr}, that, for every $k$, $X_k\otimes Y_k$ is a positive
selfadjoint trace class operator.

Summarizing, by the previous arguments, if $\PP\in\thsptjs$, with $\PP\neq 0$ (otherwise, there is
nothing to prove), is positive, optimally decomposable and, moreover, $\|\PP\ptrn=\|\PP\ttrn$, then,
given any sheer optimal standard decomposition $\PP=\sum_k\tre r_k \tre (X_k\otimes Y_k)$, the simple
tensor $X_k\otimes Y_k\in\ethtj$, for every $k$, must be a positive selfadjoint trace class operator;
i.e., since $\|X_k\otimes Y_k\ttrn=1$, by Lemma~\ref{lemposit} we must have that
\begin{equation}
X_k\otimes Y_k = \rho_k\otimes \sigma_k \fin ,\ \mbox{for some $\rho_k\in\sta$, $\sigma_k\in\stajj$}.
\end{equation}
Therefore, condition~\ref{posc} implies~\ref{posa}, as we wanted to prove, and hence
conditions~{\ref{posa}--\ref{posd}} are mutually equivalent.

Note that, if condition~\ref{posa}, or any of the other equivalent conditions~{\ref{posb}--\ref{posd}},
is satisfied, with $\PP\neq 0$, then every sheer optimal standard decomposition of $\PP$ is of the
form~\eqref{decosa}, provided that we do not distinguish between the various representations of the
same elementary tensor $\rho_k\otimes\sigma_k$ --- specifically, a decomposition of the form~\eqref{decpos},
with $r_k=t_k> 0$ (i.e., a decomposition of the form~\eqref{decosa}, with strictly positive coefficients)
--- and, moreover, by our previous arguments, every expansion of $\PP$ of the form~\eqref{decpos} is
an optimal Hermitian standard decomposition, so that relation~\eqref{norpode} holds. Also, using
essentially the same arguments as above, based on an estimate for the norm $\ptrnortr$ analogous
to~\eqref{estds}, one proves that, if one of the equivalent conditions~{\ref{posa}--\ref{posd}} is
verified, then every sheer $\ptrnortr$-optimal Hermitian standard decomposition of $\PP\neq 0$ is of
the form~\eqref{decpos} (with strictly positive coefficients $\{r_k\}$).
\end{proof}

\begin{definition} \label{deside}
An expansion of a selfadjoint cross trace class operator $C\in\thsptjs$ of the form~\eqref{decosa}
will be called a \emph{signed decomposition} of $C$. Specifically, if in decomposition~\eqref{decosa}
the density operators $\{\rho_k\}$, $\{\sigma_k\}$ are pure states --- i.e., for every $k$,
$\rho_k=\pi_k\in\pusta$, $\sigma_k=\varpi_k\in\pustajj$ --- then it  will be called a
\emph{pure-state signed decomposition}. If a (positive) selfadjoint operator $\PP\in\thsptjs$
admits a decomposition of the form~\eqref{decpos}, then it will be called \emph{positively decomposable},
and an expansion of the form~\eqref{decpos} will be called a \emph{positive decomposition} of $\PP$. In
the case where $C\neq 0$ (or $\PP\neq 0$), a signed decomposition (alternatively, a positive decomposition)
of $C$ (respectively, of $\PP$) is said to be \emph{sheer} if it does not contain zero summands.
The set of all positively decomposable operators in $\thsptjs$ we will denoted by $\pode$.
\end{definition}

\begin{remark} \label{reside}
By Theorem~\ref{thside}, a positive cross trace class operator $\PP\in\thsptjs$ is positively
decomposable iff it is optimally decomposable and such that $\|\PP\ptrn=\|\PP\ttrn$. From the
proof of the theorem, it is also clear that, if $\PP\in\thsptjs$, with $\PP\neq 0$, is positively
decomposable, then every sheer optimal standard decomposition --- as well as every sheer
$\ptrnortr$-optimal Hermitian standard decomposition  --- of $\PP$ is actually a signed decomposition
of the form~\eqref{decosa} (provided that we do not distinguish between different representations
of the same elementary tensor product operator), \emph{with strictly positive coefficients}, i.e.,
a (sheer) positive decomposition of the form~\eqref{decpos}; also, every signed decomposition
$\PP=\sum_k\tre t_k\tre(\rho_k\otimes \sigma_k)$ such that $\sum_k |t_k|=\|\PP\ptrn$, must be
a positive decomposition (indeed, if $\PP$ is positively decomposable, then we have that
$\|\PP\ptrn=\|\PP\ttrn\implies\|\PP\ttrn=\tr(\PP)=\sum_k t_k \le\sum_k |t_k|=
\|\PP\ptrn=\|\PP\ttrn\implies\{t_k\}\cap\errems=\varnothing$).
\end{remark}

\begin{proposition}[Characterization of the cross trace class] \label{prolinc-bis}
Every selfadjoint cross trace class operator --- wrt the bipartition $\hhjj$ --- can be expressed
as a real linear combination of (at most) two positively decomposable cross trace class operators;
hence, every cross trace class operator can be expressed as a complex linear combination of (at most)
four positively decomposable cross trace class operators. Precisely, the following relations hold:
\begin{align}
\thsptjs
& =
\pode - \pode
\nonumber \\ \label{lincom-bis}
& =
\big\{R_1 - R_2\colon R_1,R_2\in\pode\big\}
\end{align}
and
\begin{align}
\thptj
& =
\thsptjs + \ima\sei\big(\thsptjs\big)
\nonumber \\ \label{lincom-tris}
& =
\big\{(R_1 - R_2) +\ima\sei(R_3 - R_4)\colon R_1,R_2,R_3,R_4\in\pode\big\} \fin .
\end{align}
\end{proposition}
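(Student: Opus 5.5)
The plan is to start from the signed decomposition of Theorem~\ref{thside} and split it according to the sign of the coefficients. Let $C\in\thsptjs$, and pick a signed decomposition $C=\sum_k t_k\tre(\rho_k\otimes\sigma_k)$ as in~\eqref{decosa}, with $\sum_k|t_k|<\infty$. Set
\begin{equation*}
R_1\defi\sum_{k\colon t_k>0} t_k\tre(\rho_k\otimes\sigma_k) \fin , \qquad
R_2\defi\sum_{k\colon t_k<0} |t_k|\tre(\rho_k\otimes\sigma_k) \fin .
\end{equation*}
Both series are subseries of an absolutely convergent series with respect to $\ptrnortr$, equivalently $\ptrnor$ or $\ttrnor$. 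Hence they converge absolutely in $\thsptjs$, and $C=R_1-R_2$ follows by rearranging absolutely convergent sums.

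By construction, each $R_i$ has the form~\eqref{decpos}: non-negative coefficients with finite sum and product states $\rho_k\otimes\sigma_k\in\eshsj$. So $R_i$ satisfies condition~\ref{posa} of Theorem~\ref{thside}, which means $R_i\in\pode$ by Definition~\ref{deside}. If one of the index sets is empty, the corresponding $R_i$ is $0$. This is trivially positively decomposable (take the empty sum), or one can replace $R_1,R_2$ by $R_1+\rho\otimes\sigma$ and $R_2+\rho\otimes\sigma$ for some fixed product state. This proves $\thsptjs\subset\pode-\pode$.

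The reverse inclusion $\pode-\pode\subset\thsptjs$ is immediate, because $\pode\subset\thsptjs$ by definition and $\thsptjs$ is a real linear space.

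For~\eqref{lincom-tris}, the first equality is Proposition~\ref{prolinc}. Substituting~\eqref{lincom-bis} for each of the two selfadjoint parts gives the second equality. The converse inclusion again holds because $\thptj$ is a complex linear space containing $\thsptjs$.

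The only step needing care is the convergence of the split subseries, and this is routine given absolute convergence. Everything substantive is already contained in Theorem~\ref{thside}.
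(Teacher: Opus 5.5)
Your proof is correct and follows the paper's argument: both split a signed decomposition $C=\sum_k t_k\tre(\rho_k\otimes\sigma_k)$ into its positive-coefficient and negative-coefficient parts to get $C=R_1-R_2$ with $R_1,R_2\in\pode$, and then use Proposition~\ref{prolinc} to obtain~\eqref{lincom-tris}. Your remarks on the convergence of the subseries and on the empty-index case (where $0\in\pode$ via a zero coefficient) are details the paper leaves implicit.
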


\begin{proof}
For every $C\in\thsptjs$, with $C\neq 0$ (the case where $C=0$ being trivial), we can re-express
any signed decomposition of $C$ as follows:
\begin{align}
C=\sum_k t_k \tre (\rho_k\otimes\sigma_k)
& =
\sum_{\tkg} t_k \tre (\rho_k\otimes\sigma_k) -
\sum_{\tkl} |t_k| \tre (\rho_k\otimes\sigma_k)
\nonumber \\
& =
R_1 - R_2 \fin , \quad  R_1,R_2\in\pode \fin ,
\end{align}
which proves the characterization~\eqref{lincom-bis} of the selfadjoint cross trace class.
Then, by Proposition~\ref{prolinc}, relation~\eqref{lincom-tris} follows immediately.
\end{proof}

\begin{proposition}[Characterization of cross states] \label{prdecdens}
The set $\hstahj$ of cross states wrt to the bipartition $\hhjj$ is a convex subset of $\trchj$,
that is characterized as the class of all positive trace class operators on $\hhjj$ that admit
a signed decomposition of the form
\begin{equation} \label{decodens}
D=\sum_k t_k \tre (\rho_k\otimes \sigma_k) \fin , \quad t_k\in\erres \
({\textstyle\sum_k |t_k|}<\infty) \fin , \ \rho_k\in\sta \fin ,\ \sigma_k\in\stajj \fin ,
\end{equation}
where the (possibly countably infinite) sum is absolutely convergent wrt to the Hermitian projective
norm $\ptrnortr$ --- equivalently, wrt the projective norm $\ptrnor$ or the trace norm $\ttrnor$ --- and
\begin{equation}
1=\tr(D)=\sum_k t_k \fin , \quad 1=\|D\ttrn\le\|D\ptrn\le\ptrntr{D}\le\sum_k |t_k| \fin .
\end{equation}
Decomposition~\eqref{decodens} is $\ptrnortr$-norming and, if $D$ is $\ptrnortr$-optimally
decomposable (in particular, Hermitian-optimally decomposable), then it admits a signed
decomposition of the form~\eqref{decodens} such that $\ptrntr{D}=\sum_k |t_k|$ (in particular,
$\ptrntr{D}=\|D\ptrn=\sum_k |t_k|$).

If $D\in\hstahj$ is \emph{not} $\ptrnortr$-optimally decomposable, then each expansion of $D$
of the form~\eqref{decodens} must contain both (strictly) positive and (strictly) negative
scalar coefficients.

In the case where $\emme=\min\{\dim(\hh),\dim(\jj)\}<\infty$, every $D\in\hstahj$ admits a
finite decomposition of the form~\eqref{decodens} consisting of, at most, $4\emme^2$ summands.

For every cross state $D\in\hstahj$, the following conditions are equivalent:

\begin{enumerate}[label=\tt{(D\arabic*)}]

\item \label{dosa}
$D$ admits a (sheer) positive decomposition of the form
\begin{equation} \label{dedepos}
D=\sum_k p_k \tre (\rho_k\otimes \sigma_k) \fin , \quad \mbox{$p_k>0$ $\big(\sum_k\tre p_k=1\big)$,
$\rho_k\in\sta$, $\sigma_k\in\stajj$} \fin .
\end{equation}

\item \label{dosb}
$D$ is Hermitian-optimally decomposable and $\|D\ptrn=1$.

\item \label{dosb-bis}
$D$ is $\ptrnortr$-optimally decomposable and $\ptrntr{D}=1$.

\item \label{dosc}
$D$ is optimally decomposable and $\|D\ptrn=1$.

\item \label{dosd}
$D$ is optimally decomposable and $\ptrntr{D}=\|D\ptrn=1$.

\end{enumerate}
If any of the equivalent conditions~{\ref{dosa}--\ref{dosd}} is satisfied, then the expansion~\eqref{dedepos}
is a (sheer) optimal Hermitian standard decomposition and, thus, it is both an optimal standard decomposition
and a $\ptrnortr$-optimal Hermitian standard decomposition; moreover, every sheer optimal standard decomposition,
and every sheer $\ptrnortr$-optimal Hermitian standard decomposition, of the cross state $D$ --- as well as every
sheer signed decomposition $D=\sum_k\tre t_k\tre(\rho_k\otimes \sigma_k)$ such that $\sum_k |t_k|=1$ --- is
actually a positive decomposition of the form~\eqref{dedepos}, provided that one does not distinguish between
different representations of the same elementary tensor product operator.
\end{proposition}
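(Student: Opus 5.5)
The plan is to obtain Proposition~\ref{prdecdens} as a specialization of Theorem~\ref{thside} to positive unit-trace elements, using Proposition~\ref{prcrost}. By relation~\eqref{eqhstahj}, $\hstahj=\stahj\cap\thsptjs$. Hence a positive trace class operator $D$ is a cross state iff it lies in $\thsptjs$ and has unit trace. By the first part of Theorem~\ref{thside}, this holds iff $D$ admits a signed decomposition~\eqref{decosa} with $\tr(D)=\sum_k t_k=1$. Any zero coefficients can be discarded, so we may take $t_k\in\erres$ as in~\eqref{decodens}. Convexity of $\hstahj$ was already noted in Proposition~\ref{prcrost}.

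The norm chain $1=\|D\ttrn\le\|D\ptrn\le\ptrntr{D}\le\sum_k|t_k|$ follows from the chain in Theorem~\ref{thside} together with $\|D\ttrn=\tr(D)=1$ for a density operator. The same theorem, applied to $C=D$, gives three further facts directly: the $\ptrnortr$-norming property, the existence of an optimal signed decomposition when $D$ is $\ptrnortr$-optimally decomposable (in the Hermitian-optimal case, with $\ptrntr{D}=\|D\ptrn=\sum_k|t_k|$), and the $4\emme^2$ bound when $\emme<\infty$. The claim that a non-$\ptrnortr$-optimally decomposable $D$ forces both signs is also a verbatim consequence of the theorem.

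For the equivalence of \ref{dosa}--\ref{dosd}, I will match each condition with \ref{posa}--\ref{posd} for $\PP=D$. The match uses $\|D\ttrn=1$: the conditions ``$\|\PP\ptrn=\|\PP\ttrn$'' and ``$\ptrntr{\PP}=\|\PP\ttrn$'' become $\|D\ptrn=1$ and $\ptrntr{D}=1$. Positivity is automatic. A positive decomposition~\eqref{decpos} of $D$ has $\sum_k r_k=\tr(D)=1$, and after removing zero terms it is exactly the sheer form~\eqref{dedepos}. The equivalences then follow from those in Theorem~\ref{thside}, and the identity~\eqref{norpode} shows that~\eqref{dedepos} is an optimal Hermitian standard decomposition, hence both optimal and $\ptrnortr$-optimal.

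The final assertion relies on Remark~\ref{reside} and the closing paragraph of the proof of Theorem~\ref{thside}. There, every sheer optimal standard decomposition, and every sheer $\ptrnortr$-optimal Hermitian one, of a positively decomposable $\PP\neq0$ is shown to be a positive decomposition. Moreover, a signed decomposition with $\sum_k|t_k|=\|\PP\ptrn$ has no negative coefficients. For $D$ we have $\|D\ptrn=1$, so a sheer signed decomposition with $\sum_k|t_k|=1$ satisfies $\sum_k t_k=1=\sum_k|t_k|$, which forces $t_k>0$. The main point needing care is this bookkeeping of sheerness and the identification of representations of the same elementary tensor (Lemma~\ref{lemposit}). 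Beyond that, everything reduces to Theorem~\ref{thside}.
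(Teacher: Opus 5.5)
Your proposal is correct and follows the paper's own route. The paper's proof is a single line saying that all claims follow from Theorem~\ref{thside} and Remark~\ref{reside}, and you spell that specialization out in detail: cross states are the positive unit-trace elements of the selfadjoint cross trace class (Proposition~\ref{prcrost}), and the trace norm of a density operator equals~$1$.
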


\begin{proof}
All claims follow immediately from Theorem~\ref{thside} and Remark~\ref{reside}.
\end{proof}

\subsection{The universal bilinear map on trace classes}
\label{ubil}

The application
\begin{equation} \label{defub}
\ub\colon\trc\times\trcjj\ni (S,T) \mapsto S\otimes T\in\trc\prtp\trcjj
\end{equation}
is a  bounded bilinear map, that we will call henceforth the \emph{universal bilinear map} on
$\trc\times\trcjj$. Note that this map shares both the same \emph{domain} (the Cartesian product
$\trc\times\trcjj$) and \emph{range} (the elementary tensor products of trace class operators)
with the natural bilinear map $\nb\colon\trc\times\trcjj\rightarrow\trchj$ introduced in
Subsection~\ref{natens} --- and, moreover, $\nb(S,T)=\ub(S,T)$ --- \emph{but} it has a different
\emph{codomain}, i.e., the Banach space $(\thptj,\ptrnor)$ versus the trace class $(\trchj,\ttrnor)$.
This fact has remarkable consequences; see the next subsection.

Clearly, $\ub\in\bothptj$ --- like the natural bilinear map $\nb$ --- is a bilinear isometry since
\begin{equation}
\|\ub(S,T)\ptrn=\|S\otimes T\ttrn=\|S\trn\tre\|T\trn \fin , \quad
\forall\cinque (S,T)\in\trc\times\trcjj \fin ,
\end{equation}
so that $\|\ub\quattro\norps=1$.

\begin{notation} \label{notensb}
In analogy with our previously introduced Notation~\ref{notensa}, given (nonempty) subsets $\subh$,
$\subj$ of $\trc$ and $\trcjj$, respectively, we put
\begin{equation}
\subh\prtp\subj\defi\clcop\big(\ub(\subh,\subj)\big)=
\clptrnor\big(\subh\altp\subj\big)\subset\trc\prtp\trcjj\subset\trchj \fin ;
\end{equation}
i.e., $\subh\prtp\subj$ is the $\ptrnor$-closed convex hull $\clcop(\ub(\subh,\subj))$ of $\ub(\subh,\subj)$.
Since $\ub(\subh,\subj)=\nb(\subh,\subj)$ is a subset of both the cross trace class $\thptj$ and the bipartite
trace class $\thotj=\trchj$, for the sake of simplicity in the following we will use the notation $\nb(\subh,\subj)$
for this set even in the case where it should be regarded as a subset of $\thptj$. E.g., we will write
$\subh\prtp\subj\defi\clcop(\nb(\subh,\subj))\equiv\clcop\big(\ub(\subh,\subj)\big)$.
\end{notation}

In the case where $\subh=\trc$ and $\subj=\trcjj$, Notation~\ref{notensb} is consistent with our definition
of the Banach space $\trc\prtp\trcjj$; see relation~\eqref{reprtpa} below. Once again, privileging the convex
structures wrt to the linear ones in Notation~\ref{notensb},  as previously done for Notation~\ref{notensa},
is motivated by the applications we have in mind.

\begin{proposition}
Let $\subh$, $\subj$ be linear subspaces of $\trc$ and $\trcjj$, respectively. Then,
we have:
\begin{equation} \label{reprtpa}
\subh\prtp\subj\defi\clcop(\nb(\subh,\subj))=\clispap(\nb(\subh,\subj))\equiv
\clispaptrn(\nb(\subh,\subj)) \defi\clptrnor\big(\lispa(\nb(\subh,\subj))\big) .
\end{equation}
Moreover, the following relations hold:
\begin{equation}
\sta\prtp\stajj\defi\clcop\big(\nb(\sta,\stajj)\big)=
\clcop\big(\nb(\pusta,\pustajj)\big)\ifed\pusta\prtp\pustajj
\end{equation}
and
\begin{equation}
\trc\prtp\trcjj=\clispap\big(\nb(\sta,\stajj)\big)=
\clispap\big(\nb(\pusta,\pustajj)\big) \fin ,
\end{equation}
\begin{equation}
\trc\prtp\trcjj=\clispap\big(\sta\prtp\stajj\big)
=\clispap\big(\pusta\prtp\pustajj\big) \fin .
\end{equation}
\end{proposition}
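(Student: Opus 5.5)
We prove the three displayed relations in order, using the standard decomposition of Corollary~\ref{corcro} and the signed decomposition of Theorem~\ref{thside}, together with elementary facts about closed convex hulls and closed linear spans.

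The first step is relation~\eqref{reprtpa}. If $\subh$, $\subj$ are linear subspaces, the set $\nb(\subh,\subj)$ is stable under multiplication by arbitrary complex scalars, since $c\tre(S\otimes T)=(cS)\otimes T$ with $cS\in\subh$. Moreover, for $A,B\in\nb(\subh,\subj)$ we have $A+B=\tfrac12(2A)+\tfrac12(2B)$. Hence $\co(\nb(\subh,\subj))$ is closed under sums and scalar multiples, and it contains $0=0\otimes T$. Therefore $\co(\nb(\subh,\subj))=\lispa(\nb(\subh,\subj))$, exactly as in Proposition~\ref{procoh} and Corollary~\ref{corlispa}. Taking $\ptrnor$-closures gives~\eqref{reprtpa}, because the closed convex hull is the closure of the convex hull.

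The second step is the relation $\sta\prtp\stajj=\pusta\prtp\pustajj$. The inclusion ``$\supset$'' is trivial. For ``$\subset$'', it suffices to show that $\rho\otimes\sigma\in\clcop(\nb(\pusta,\pustajj))$ for all $\rho\in\sta$ and $\sigma\in\stajj$. Write the spectral decompositions $\rho=\sum_m p_m\pi_m$ and $\sigma=\sum_n q_n\varpi_n$. Then $\rho\otimes\sigma=\sum_{m,n}p_mq_n(\pi_m\otimes\varpi_n)$. This double series is absolutely convergent in $\ptrnor$, since $\|\pi_m\otimes\varpi_n\ptrn=1$ and $\sum p_mq_n=1$, and by Fact~\ref{facdou} its sum is the same $\rho\otimes\sigma$. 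It is a countable convex combination of elements of $\nb(\pusta,\pustajj)$, so by Fact~\ref{infconv}, applied in the Banach space $\thptj$, it lies in the $\ptrnor$-closed convex hull. Taking the closed convex hull of both sides then yields equality.

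The third step is the characterization of $\trc\prtp\trcjj$ as closed linear spans. By Theorem~\ref{thside}, every selfadjoint cross operator $C$ has a pure-state signed decomposition $C=\sum_l s_l(\pi_l\otimes\varpi_l)$ that is absolutely $\ptrnor$-convergent. Hence $C\in\clispap(\nb(\pusta,\pustajj))$. By Proposition~\ref{prolinc}, every $C\in\thptj$ is of the form $C_1+\ima C_2$ with $C_1,C_2$ selfadjoint, so $\thptj\subset\clispap(\nb(\pusta,\pustajj))\subset\clispap(\nb(\sta,\stajj))\subset\thptj$.

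Finally, the statement with $\sta\prtp\stajj$ and $\pusta\prtp\pustajj$ follows from the identity $\clispa(\clco(\subhj))=\clispa(\subhj)$ recalled in Subsection~\ref{lin-con}, now applied in $(\thptj,\ptrnor)$.

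The only mildly delicate point is to make sure that all convergences are taken in $\ptrnor$ rather than in the trace norm. Absolute summability handles this via Remark~\ref{remabs}, so I expect no real obstacle.
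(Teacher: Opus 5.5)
Your proof is correct. The paper states this proposition without proof, as the $\ptrnor$-analogue of Proposition~\ref{procoh} and Corollaries~\ref{corlispa}, \ref{proten}, \ref{corsta}. Your first and fourth steps are exactly that analogue. Your second step is the concrete form of the same argument, namely $\clcop(\nb(\subh,\subj))=\clcop(\nb(\clco(\subh),\clco(\subj)))$ by continuity of $\ub$, combined with $\sta=\clco(\pusta)$.

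You genuinely differ in the third step. The intended route is algebra plus density. By Fact~\ref{prolico}, every $S\otimes T$ is a finite linear combination of products $\rho\otimes\sigma$, so $\thatj=\lispa(\nb(\sta,\stajj))$ (Corollary~\ref{proten}). Since $\thptj$ is by construction the $\ptrnor$-completion of $\thatj$, this gives $\thptj=\clispap(\nb(\sta,\stajj))$ at once. The pure-state version then follows from your second step and $\clispa(\clco(\cdot))=\clispa(\cdot)$.

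You instead use the pure-state signed decomposition of Theorem~\ref{thside} together with Proposition~\ref{prolinc}. This is legitimate: both are proved earlier and do not depend on the present proposition. It also yields more, since each cross trace class operator becomes an absolutely $\ptrnor$-convergent series of pure product states rather than merely a limit of finite combinations. The price is that it rests on the Grothendieck series representation behind Corollary~\ref{corcro}, whereas the density argument needs nothing beyond the definition of the completion.

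One small precision in your second step: identifying the $\ptrnor$-sum of the double series with $\rho\otimes\sigma$ requires the continuity of $\ub$ in each argument, or Remark~\ref{remimme}. Fact~\ref{facdou} alone only equates the double sum with the iterated sums.
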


\subsection{The canonical linearization of bilinear maps on trace classes}
\label{canlin}

The term ``universal bilinear map'' introduced for the application~\eqref{defub} is justified by
the following:

\begin{theorem}[Universal mapping property of the cross trace class]  \label{isothm}
Let $\bs$ be a complex Banach space, and let $\Bimp\colon\trc\prtp\trcjj\rightarrow\bs$
be a bounded linear map. Then,
\begin{equation}
\bim\defi\Bimp\circ\ub\colon\trc\times\trcjj\rightarrow\bs
\end{equation}
is a bounded bilinear map and, moreover, $\|\bim\norps=\|\Bimp\norps$; i.e.,
\begin{align}
\|\bim\norps\defi\sup_{\substack{0\neq S\in\trc \\ 0\neq T\in\trcjj}}
\frac{\|\bim(S,T)\|}{\|S\trn\tre\|T\trn}
& =
\sup_{\substack{0\neq S\in\trc \\ 0\neq T\in\trcjj}}\frac{\|\Bimp(S\otimes T)\|}{\|S\trn\tre\|T\trn}
\nonumber \\
& =
\sup_{0\neq C\in\thptj}\frac{\|\Bimp(C)\|}{\|C\ptrn}\ifed\|\Bimp\norps \fin .
\end{align}

Conversely, let $\bs$ be a complex Banach space and $\bim\colon\trc\times\trcjj\rightarrow\bs$
a bounded bilinear map. Then, there exists a unique bounded linear map
$\Bimp\colon\trc\prtp\trcjj\rightarrow\bs$ such that
\begin{equation} \label{unipro-bis}
\bim (S,T)=\Bimp(S\otimes T) \fin , \quad \forall\cinque S\in\trc \fin ,\  \forall\cinque T\in\trcjj \fin .
\end{equation}
Moreover, $\|\bim\norps=\|\Bimp\norps$.

Therefore, the following isomorphism of
Banach spaces holds:
\begin{equation}
\bo\big(\trc\prtp\trcjj;\bs\big)\cong\bo(\trc,\trcjj;\bs) \fin .
\end{equation}
This isomorphism is implemented by the mapping
\begin{equation}
\bo\big(\trc\prtp\trcjj;\bs\big)\ni\Lambda\mapsto\Lambda\circ\ub\in\bo(\trc,\trcjj;\bs) \fin ,
\end{equation}
where $\ub\colon\trc\times\trcjj\rightarrow\trc\prtp\trcjj$ is the universal bilinear map on
$\trc\times\trcjj$.
\end{theorem}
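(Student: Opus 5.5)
The proof follows the classical argument for the universal property of projective tensor products. The work is done on the algebraic tensor product $\thatj$, which is $\ptrnor$-dense in $\thptj$, and then extended by continuity. The last claim, the isomorphism of Banach spaces, is then just a repackaging of the first two parts.

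\emph{First part.} Given a bounded linear map $\Bimp\colon\thptj\rightarrow\bs$, the map $\bim=\Bimp\circ\ub$ is bilinear because $\ub$ is bilinear. Since $\ub$ is a bilinear isometry, we get $\|\bim(S,T)\|\le\|\Bimp\norps\|S\otimes T\ptrn=\|\Bimp\norps\|S\trn\|T\trn$, hence $\|\bim\norps\le\|\Bimp\norps$.

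For the reverse inequality, take $C\in\thatj$ and a finite decomposition $C=\sum_k S_k\otimes T_k$. Then
\[
\|\Bimp(C)\|\le\sum_k\|\bim(S_k,T_k)\|\le\|\bim\norps\sum_k\|S_k\trn\|T_k\trn .
\]
Taking the infimum over all such decompositions, by definition~\eqref{pronorm}, gives $\|\Bimp(C)\|\le\|\bim\norps\|C\ptrn$. By density of $\thatj$ in $\thptj$ and continuity of $\Bimp$, this bound holds on all of $\thptj$. Alternatively, one can use the standard decomposition of Corollary~\ref{corcro} together with the norm formula~\eqref{pronorm-tris}, since $\Bimp$ commutes with absolutely convergent sums. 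Either way we obtain $\|\Bimp\norps\le\|\bim\norps$, hence equality of the norms.

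\emph{Converse.} Start from a bounded bilinear map $\bim$. The universal property of the algebraic tensor product (Subsection~\ref{algtens}, or Proposition~\ref{probim}) gives a unique linear map $\Bima\colon\thatj\rightarrow\bs$ with $\Bima(S\otimes T)=\bim(S,T)$. The same estimate as above, taken over finite decompositions, shows $\|\Bima(C)\|\le\|\bim\norps\|C\ptrn$ on $\thatj$. Since $\thatj$ is dense in $(\thptj,\ptrnor)$ and $\bs$ is complete, $\Bima$ extends uniquely to a bounded $\Bimp$ with $\|\Bimp\norps\le\|\bim\norps$. The first part then gives equality.

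Uniqueness of $\Bimp$: any bounded linear map satisfying~\eqref{unipro-bis} agrees with $\Bima$ on $\lispa(\nb(\trc,\trcjj))=\thatj$, which is dense, so it coincides with $\Bimp$.

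\emph{Isomorphism.} The map $\Lambda\mapsto\Lambda\circ\ub$ is linear. By the first part it is isometric, and by the converse it is surjective. Hence it is an isometric isomorphism $\bo\big(\thptj;\bs\big)\cong\bo(\trc,\trcjj;\bs)$.

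\emph{Main obstacle.} The only delicate point is identification. The space $\thptj$ is now concretely realized as $\htrchj\subset\trchj$ with norm~\eqref{pronorm-tris}, so one must make sure that this concrete space is still the completion of $\thatj$ with respect to the infimum norm~\eqref{pronorm}. This is guaranteed by Theorem~\ref{thdecop} and Corollary~\ref{corcro}: the two norms agree on $\thatj$, and $\thatj$ is $\ptrnor$-dense. With that settled, the density arguments above are legitimate and every remaining step is routine.
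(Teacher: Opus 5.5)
Your argument is correct and follows essentially the same route as the paper, which simply applies the standard universal property of projective tensor products (Theorem~2.9 of~\cite{Ryan}, Theorem~7.12 of~\cite{Kubrusly}) to $\thptj$; you have written out that standard proof, namely the finite-decomposition estimate on $\thatj$ followed by density and continuous extension. You also note that this is legitimate for the concrete realization $\htrchj$ because, by Theorem~\ref{thdecop} and Corollary~\ref{corcro}, the norm~\eqref{pronorm-tris} restricts to~\eqref{pronorm} on the dense subspace $\thatj$, and this is exactly the identification the paper takes for granted.
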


\begin{proof}
Apply Theorem~{2.9} of~\cite{Ryan}, or Theorem~{7.12} of~\cite{Kubrusly}, to the projective
tensor product $\thptj$.
\end{proof}

\begin{definition} \label{defbibi}
We will call the application $\Bimp\colon\trc\prtp\trcjj\rightarrow\bs$ the \emph{canonical linearization}
of (or the \emph{canonical linear map} induced by) $\bim\colon\trc\times\trcjj\rightarrow\bs$; conversely,
we call $\bim$ the \emph{canonical bilinearization} of $\Bimp$.
\end{definition}

\begin{remark} \label{natcan}
In the case where $\bim\colon\trc\times\trcjj\rightarrow\bs$ is naturally linearizable, by
relations~\eqref{unipro} and~\eqref{unipro-bis} it is clear that the associated induced linear
maps $\Bim$ and $\Bimp$ coincide once restricted to the algebraic tensor product $\trc\altp\trcjj$,
and this restriction is precisely the unique linear map $\Bima\colon\trc\altp\trcjj\rightarrow\bs$
of Proposition~\ref{probim}.
\end{remark}

\begin{corollary} \label{coroptp}
Let $\ptph\colon\trc\rightarrow\trc$, $\ptpj\colon\trcjj\rightarrow\trcjj$ be bounded linear maps.
Then, there is a unique bounded linear map $\ptphj\colon\thptj\rightarrow\thptj$ such that
\begin{equation}
\big(\ptphj\big)(S\otimes T)=\ptph(S)\otimes\ptpj(T) \fin , \quad
\forall\cinque S\in\trc \fin , \forall\cinque T\in\trcjj \fin ,
\end{equation}
and, moreover, $\big\|\ptphj\bnorps=\|\ptph\noro\sei\|\ptpj\noro$.
\end{corollary}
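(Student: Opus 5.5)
We reduce the statement to Theorem~\ref{isothm} by building a suitable bounded bilinear map. Define
\begin{equation*}
\bim\colon\trc\times\trcjj\ni(S,T)\mapsto\ptph(S)\otimes\ptpj(T)=\ub\big(\ptph(S),\ptpj(T)\big)\in\thptj \fin .
\end{equation*}
This map is bilinear, since $\ptph$ and $\ptpj$ are linear and $\ub$ is bilinear. It is also bounded. Indeed, $\ub$ is a bilinear isometry, so
\begin{equation*}
\|\bim(S,T)\ptrn=\|\ptph(S)\trn\tre\|\ptpj(T)\trn\le\|\ptph\noro\sei\|\ptpj\noro\sei\|S\trn\tre\|T\trn \fin ,
\end{equation*}
which gives $\|\bim\norps\le\|\ptph\noro\sei\|\ptpj\noro$. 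We now apply the converse part of Theorem~\ref{isothm} with $\bs=\thptj$. It yields a unique bounded linear map $\Bimp\ifed\ptphj$ on $\thptj$ satisfying $\Bimp(S\otimes T)=\ptph(S)\otimes\ptpj(T)$ for all $S,T$, and moreover $\|\Bimp\norps=\|\bim\norps$. This settles existence.

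Uniqueness also follows from Theorem~\ref{isothm}. Alternatively, one can argue directly: any two bounded linear maps with the prescribed values on elementary tensors agree on $\lispa(\nb(\trc,\trcjj))=\thatj$, and $\thatj$ is $\ptrnor$-dense in $\thptj$, so the two maps coincide.

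For the norm identity, Theorem~\ref{isothm} reduces the claim to showing $\|\bim\norps=\|\ptph\noro\sei\|\ptpj\noro$. The upper bound was obtained above. For the reverse inequality, fix $\epsilon>0$ and choose unit vectors $S\in\trc$, $T\in\trcjj$ with $\|\ptph(S)\trn\ge\|\ptph\noro-\epsilon$ and $\|\ptpj(T)\trn\ge\|\ptpj\noro-\epsilon$. Since $\ub$ is a bilinear isometry,
\begin{equation*}
\|\bim(S,T)\ptrn=\|\ptph(S)\trn\tre\|\ptpj(T)\trn\ge(\|\ptph\noro-\epsilon)(\|\ptpj\noro-\epsilon) \fin .
\end{equation*}
Letting $\epsilon\to0$ gives $\|\bim\norps\ge\|\ptph\noro\sei\|\ptpj\noro$.

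The only point that needs any care is the exact equality of norms. It rests entirely on two facts: the universal bilinear map $\ub$ is a bilinear isometry, and Theorem~\ref{isothm} preserves norms. Apart from that, the argument consists of routine verifications.
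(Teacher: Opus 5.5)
Your proof is correct and follows the paper's route. The paper states the corollary without proof, as an immediate consequence of Theorem~\ref{isothm} with $\bs=\thptj$ applied to $(S,T)\mapsto\ptph(S)\otimes\ptpj(T)$, and the norm identity comes from $\ub$ being a bilinear isometry. One small fix: your lower bound $(\|\ptph\noro-\epsilon)(\|\ptpj\noro-\epsilon)$ requires $\epsilon$ smaller than both norms, so first dispose of the trivial case in which one of the maps vanishes.
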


\begin{corollary}
Suppose that $\hh$ and/or $\jj$ are finite-dimensional Hilbert spaces; precisely,
assume that $\emme=\min\{\dim(\hh),\dim(\jj)\}<\infty$. Then, in this case,
\begin{equation} \label{coinc}
\trc\otimes\trcjj=\trc\prtp\trcjj=\trc\altp\trcjj \fin ,
\end{equation}
a set equality (or an isomorphism of linear spaces) being understood,
and any bounded bilinear map $\bim\colon\trc\times\trcjj\rightarrow\bs$ is naturally linearizable;
moreover --- denoting by $\Bim\colon\trc\otimes\trcjj\rightarrow\bs$ the natural linearization of $\bim$
--- we have that $\Bim(A)=\Bimp(A)$, for all $A\in\trc\altp\trcjj$, and
\begin{equation} \label{compnor}
\|\bim\norps=\|\Bimp\norps\le\|\Bim\norps\le\enne\sei\|\Bimp\norps \fin , \quad
\mbox{where $\enne=\min\big\{4\emme,\emme^2\big\}$.}
\end{equation}
\end{corollary}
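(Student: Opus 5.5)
The argument combines Theorem~\ref{theine}, Proposition~\ref{probim} and Theorem~\ref{isothm}. First, when $\emme<\infty$, Theorem~\ref{theine} gives relation~\eqref{coinc}, since $\trc\otimes\trcjj=\trchj=\thptj=\thatj$ as sets. Next, take a bounded bilinear map $\bim$. Theorem~\ref{isothm} provides its canonical linearization $\Bimp\colon\thptj\rightarrow\bs$, which is bounded with $\|\Bimp\norps=\|\bim\norps$. Proposition~\ref{probim} provides the unique linear map $\Bima$ on $\thatj$ satisfying~\eqref{unipro}. Because $\Bimp(S\otimes T)=\bim(S,T)$, uniqueness forces $\Bima=\Bimp$ on $\thatj$, and here $\thatj$ is the whole space $\trchj$.

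To show natural linearizability, I would check that $\Bima$ is bounded when $\thatj=\trchj$ carries the trace norm. By~\eqref{estgen}, for every $A\in\trchj$,
\begin{equation}
\|\Bima(A)\|=\|\Bimp(A)\|\le\|\Bimp\norps\tre\|A\ptrn\le\enne\tre\|\Bimp\norps\tre\|A\ttrn .
\end{equation}
This is exactly the boundedness criterion in Proposition~\ref{probim}. So $\bim$ is naturally linearizable, and its natural linearization $\Bim$ is $\Bima$ itself, because $\thatj$ is already $\ttrnor$-complete and no extension is needed. In particular $\Bim(A)=\Bimp(A)$ for all $A\in\thatj$, and the displayed estimate gives $\|\Bim\norps\le\enne\tre\|\Bimp\norps$.

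For the middle inequality in~\eqref{compnor}, $\|\Bimp\norps\le\|\Bim\norps$, I would use $\|A\ttrn\le\|A\ptrn$ from~\eqref{inenorm}. Then $\|\Bimp(A)\|=\|\Bim(A)\|\le\|\Bim\norps\|A\ttrn\le\|\Bim\norps\|A\ptrn$, and taking the supremum over $A$ gives the claim. This also agrees with the inequality $\|\bim\norps\le\|\Bim\norps$ from Proposition~\ref{probim}. The first equality in~\eqref{compnor} is simply Theorem~\ref{isothm}.

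No step is a real obstacle. The only care needed is identifying $\Bim$ with $\Bima$ with no completion step, which holds because the three spaces coincide as sets and the two norms are equivalent. Everything else follows directly from~\eqref{estgen}.
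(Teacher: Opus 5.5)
Your proposal is correct and follows essentially the same route as the paper. Both arguments use Theorem~\ref{theine} for the set equalities and~\eqref{estgen}, Theorem~\ref{isothm} together with the uniqueness in Proposition~\ref{probim} to identify $\Bima$ with $\Bimp$ on $\trc\altp\trcjj$, and the two-sided comparison of $\|\Bimp(A)\|$ with $\|A\ptrn$ and $\|A\ttrn$ to obtain~\eqref{compnor}; the paper merely phrases that last step as a chain of ratio inequalities before taking suprema.
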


\begin{proof}
By Theorem~\ref{theine}, relation~\eqref{coinc} holds and, moreover, the norms $\ttrnor$ and $\ptrnor$
are equivalent on $\trc\otimes\trcjj=\thptj=\trc\altp\trcjj$. Therefore, since, according to Theorem~\ref{isothm},
there is a unique bounded (wrt the norms $\ptrnor$ and $\norbs$) linear map $\Bimp\colon\trc\prtp\trcjj\rightarrow\bs$
satisfying
\begin{equation}
\bim (S,T)=\Bimp(S\otimes T) \fin , \quad \forall\cinque S\in\trc \fin ,\  \forall\cinque T\in\trcjj
\end{equation}
--- that must coincide with the (unique) linear map $\Bima\colon\trc\altp\trcjj\rightarrow\bs$ of
Proposition~\ref{probim} --- then there is a unique linear map $\Bm\colon\trc\otimes\trcjj\rightarrow\bs$
such that
\begin{equation}
\Bm(S\otimes T)=\Bima(S\otimes T)=\Bimp(S\otimes T)=\bim(S,T) \fin , \quad \forall\cinque S\in\trc \fin ,
\  \forall\cinque T\in\trcjj \fin ,
\end{equation}
which is bounded (wrt the norms $\ttrnor$ and $\norbs$). Thus, the bilinear map
$\bim\colon\trc\times\trcjj\rightarrow\bs$ is naturally linearizable (Definition~\ref{defnatlin})
and, by Proposition~\ref{probim}, and its (unique) natural linearization $\Bim$ must be precisely
the map $\Bm$. Hence, $\Bim(A)=\Bima(A)=\Bimp(A)$, for all $A\in\trc\altp\trcjj$, and we have that
$\|\Bimp\norps=\|\bim\norps\le\|\Bim\norps$. Moreover, by relation~\eqref{estgen}, for every
bipartite trace class operator $A\in\trc\altp\trcjj=\trchj$, $A\neq 0$, we get the estimate
\begin{equation} \label{nacaine}
\frac{\|\Bimp(A)\|}{\|A\ptrn}\le\frac{\|\Bimp(A)\|}{\|A\ttrn}=\frac{\|\Bim(A)\|}{\|A\ttrn}
\le\enne\sei\frac{\|\Bimp(A)\|}{\|A\ptrn} \fin ,
\end{equation}
where $\enne=\min\big\{4\emme,\emme^2\big\}$, with $\emme=\min\{\dim(\hh),\dim(\jj)\}<\infty$.
Now, the lhs inequality in~\eqref{nacaine} is coherent with the previously obtained relation
$\|\Bimp\norps\le\|\Bim\norps$, while, by the rhs inequality, we find out that
\begin{equation}
\|\Bim\norps=\sup_{A\neq 0}\frac{\|\Bim(A)\|}{\|A\ttrn} \le\enne\sei\sup_{A\neq 0}
\frac{\|\Bimp(A)\|}{\|A\ptrn}=\enne\sei\|\Bimp\norps
\end{equation}
as well, so that relation~\eqref{compnor} holds true.
\end{proof}

In this case where $\min\{\dim(\hh),\dim(\jj)\}<\infty$, the Banach space dual $\big(\thptj\big)^\ast$
of $\thptj$ is characterized by Corollary~\ref{corsemif}; in the general case, we have the following
result:

\begin{corollary} \label{cornorc}
Denoting by $\big(\thptj\big)^\ast=\bo\big(\trc\prtp\trcjj;\ccc\big)$ the Banach space dual of
the cross trace class $\thptj$, the following isomorphism of Banach spaces holds:
\begin{equation}
\big(\thptj\big)^\ast\cong\ftrcjj\equiv\bo(\trc,\trcjj;\ccc) \fin .
\end{equation}
This isomorphism is implemented by the mapping
\begin{equation}
\big(\thptj\big)^\ast\ni\Gamma\mapsto\Gamma\circ\ub\in\ftrcjj \fin ,
\end{equation}
where $\ub\colon\trc\times\trcjj\rightarrow\trc\prtp\trcjj$ is the universal bilinear map on
$\trc\times\trcjj$.
As a consequence, since, for every cross trace class operator $C\in\thptj$, we have that
\begin{equation} \label{forptrn}
\|C\ptrn=\max\big\{|\Gamma(C)|\colon\Gamma\in\big(\thptj\big)^\ast,\ \|\Gamma\norps=1\big\},
\end{equation}
then, given any simple-tensor decomposition of $C$ --- namely, $C=\sum_k S_k\otimes T_k$,
where the sum, whenever not finite, is supposed to converge absolutely wrt the projective
norm $\ptrnor$, i.e., to be such that
$\sum_k\tre\|S_k\otimes T_k\ptrn=\sum_k\tre\|S_k\trn\tre \|T_k\trn<\infty$ ---
the following relation holds:
\begin{equation}
\|C\ptrn=\max\big\{|{\textstyle \sum_k \gamma(S_k,T_k)}|\colon\gamma\in\ftrcjj,
\ \|\gamma\norb=1\big\} \fin .
\end{equation}
\end{corollary}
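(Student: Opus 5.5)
The plan is to obtain the isomorphism as the special case $\bs=\ccc$ of Theorem~\ref{isothm}, and then to read off the two norm formulas from the Hahn--Banach theorem together with the universal property.

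First, I will take $\bs=\ccc$ in Theorem~\ref{isothm}. Then $\bo(\thptj;\ccc)=(\thptj)^\ast$ and $\bo(\trc,\trcjj;\ccc)=\ftrcjj$. The map $\Gamma\mapsto\Gamma\circ\ub$ is clearly linear. It is surjective and injective by the existence and uniqueness of the canonical linearization, and it is isometric because $\|\Gamma\circ\ub\norb=\|\Gamma\norps$. This gives the stated isomorphism, which is in fact isometric.

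Next, relation~\eqref{forptrn} is the standard consequence of the Hahn--Banach theorem (e.g.\ Corollary~{1.9.7} of~\cite{Megginson}). For $C\neq 0$, I will extend the functional $cC\mapsto c\|C\ptrn$, defined on $\ccc C$, to some $\Gamma$ with $\|\Gamma\norps=1$ and $\Gamma(C)=\|C\ptrn$. The inequality $|\Gamma(C)|\le\|C\ptrn$ holds trivially for every $\Gamma$ in the unit sphere, so the supremum is attained and is a maximum. The case $C=0$ is trivial.

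Finally, let $C=\sum_k S_k\otimes T_k$ be absolutely $\ptrnor$-convergent and let $\Gamma\in(\thptj)^\ast$. By continuity and linearity of $\Gamma$ I will write $\Gamma(C)=\sum_k\Gamma(S_k\otimes T_k)=\sum_k\gamma(S_k,T_k)$, where $\gamma=\Gamma\circ\ub$. The series converges absolutely, since $|\gamma(S_k,T_k)|\le\|S_k\trn\|T_k\trn$. Because $\Gamma\leftrightarrow\gamma$ is a norm-preserving bijection, the maximum over the unit sphere of $(\thptj)^\ast$ in~\eqref{forptrn} turns into the maximum over $\gamma\in\ftrcjj$ with $\|\gamma\norb=1$. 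The only point needing a little care is this interchange of $\Gamma$ with the infinite sum, and it is justified by $\ptrnor$-continuity. The value obtained does not depend on which decomposition of $C$ is chosen, since it equals $\Gamma(C)$.
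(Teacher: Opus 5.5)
Your proposal is correct and follows the paper's proof essentially step by step. You obtain the isometric isomorphism from Theorem~\ref{isothm} with $\bs=\ccc$, get the maximum from a Hahn--Banach norming functional, and use $\ptrnor$-continuity of $\Gamma$ to rewrite $\Gamma(C)$ as $\sum_k\gamma(S_k,T_k)$ with $\gamma=\Gamma\circ\ub$.
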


\begin{proof}
The isomorphism between the Banach spaces $\big(\thptj\big)^\ast$ and $\ftrcjj$ is a direct
consequence of Theorem~\ref{isothm} (by taking $\bs=\ccc$). Moreover, for every $C\in\thptj$,
the existence of a \emph{norming functional} $\Gamma_0\in\big(\thptj\big)^\ast$ such that
$\|\Gamma_0\norps=1$ and $\|C\ptrn=\Gamma_0(C)$ is a well-known consequence of the Hahn-Banach
extension theorem (see, e.g., Corollary~{2.3} of~\cite{Fabian}). It follows that we have:
\begin{align}
\|C\ptrn
& =
\max\big\{|\Gamma(C)|\colon\Gamma\in\big(\thptj\big)^\ast,\ \|\Gamma\norps=1\big\}
\nonumber \\
& =
\max\big\{|\Gamma({\textstyle\sum_k S_k\otimes T_k})|\colon\Gamma\in\big(\thptj\big)^\ast,
\ \|\Gamma\norps=1\big\}
\nonumber \\
& =
\max\big\{|{\textstyle\sum_k \Gamma(S_k\otimes T_k)}|\colon\Gamma\in\big(\thptj\big)^\ast,
\ \|\Gamma\norps=1\big\}
\nonumber \\
& =
\max\big\{|{\textstyle \sum_k \gamma(S_k,T_k)}|\colon\gamma\in\ftrcjj,
\ \|\gamma\norb=1\big\} \fin ,
\end{align}
where $C=\sum_k S_k\otimes T_k$ is any (absolutely $\ptrnor$-convergent) simple-tensor decomposition
of the cross trace class operator $C\in\thptj$.
\end{proof}

\section{Separability, pseudo-mixtures and barycentric decompositions}
\label{sepsta}

After recalling some basic facts about the separable states in $\trchj$ --- including the
general definition of this class of bipartite states (see Definition~\ref{defsest} below)
--- we will next prove that they are cross states, and we will highlight some remarkable
connections of separability with the the projective norm $\ptrnor$ on $\trc\prtp\trcjj$
and with the Hermitian projective norm $\ptrnortr$ on $\thsptjs$. In this section, we will
always assume that $\dim(\hh),\dim(\jj)\ge 2$, because otherwise ``there is no entanglement''
and, accordingly, separability is an empty concept.

\subsection{Pseudo-mixtures of product states and discretely separable states}
\label{pseudomi}

The set $\hstahj$ of cross states wrt to the bipartition $\hhjj$ is a convex subset
(of the cross trace class $\thptj$ and) of $\trchj$, characterized, according to
Proposition~\ref{prdecdens}, as the set of all \emph{positive} trace class operators
on the Hilbert space $\hhjj$ that admit a decomposition of the form
\begin{equation} \label{decode}
D=\sum_k t_k \tre (\rho_k\otimes \sigma_k) \fin , \quad
t_k\in\erres \fin , \ \rho_k\in\sta \fin ,\ \sigma_k\in\stajj \fin ,
\end{equation}
where $1=\tr(D)=\sum_k t_k\le\sum_k |t_k|<\infty$, so that the --- possibly, countably infinite
--- sum is absolutely convergent wrt the Hermitian projective norm (equivalently, wrt the projective
norm or the trace norm). Moreover, for every $D\in\hstahj$, we have that
\begin{equation} \label{pronorm-quater}
1=\tr(D)=\|D\ttrn\le\|D\ptrn\le\ptrntr{D}=\inf\big\{{\textstyle \sum_k |t_k|}\colon
{\textstyle\sum_k t_k \tre (\rho_k\otimes\sigma_k)}=D \big\}\le 2\tre\|D\ptrn \fin ,
\end{equation}
where the infimum is taken over all possible expansions of $D$ of the form~\eqref{decode}.

Coherently with well-established literature on this topic~\cite{Vidal},
we call a density operator $D$ on $\hhjj$ of the form~\eqref{decode} a \emph{pseudo-mixture} of
the product states $\{\rho_k\otimes \sigma_k\}\subset\eshsj$; in particular, $D$ is said to be
a (statistical) \emph{mixture} of this collection of product states if it can be expressed in
the form~\eqref{decode} with $\{t_k\}\subset\erreps$ (i.e., if $\{t_k\}$ is a probability distribution).

Therefore, \emph{every cross state on $\hhjj$ can be expressed as a pseudo-mixture of a suitable collection
of product states}; in particular, as a pseudo-mixture of product states of the form $\pi\otimes\varpi$,
with $\pi\in\pusta$ and $\varpi\in\pustajj$. In fact, the following result holds:

\begin{proposition}[The pure-state signed decomposition of a cross state] \label{prdecdens-bis}
Every cross density operator $D\in\hstahj$ admits a pure-state signed decomposition of the form
\begin{equation} \label{pudeco}
D=\sum_l s_l \tre (\pi_l\otimes\varpi_l) \fin , \quad s_l\in\erres \fin , \ \pi_l\in\pusta \fin ,\
\varpi_l\in\pustajj \fin ,
\end{equation}
where the --- possibly countably infinite --- sum is absolutely convergent wrt the Hermitian projective
norm (equivalently, wrt the projective norm or the trace norm), and
\begin{equation}
1=\tr(D)=\sum_l s_l=\|D\ttrn\le\|D\ptrn\le\ptrntr{D}\le\sum_l |s_l| \fin .
\end{equation}
The decomposition~\eqref{pudeco} is $\ptrnortr$-norming and, if $D$ is $\ptrnortr$-optimally decomposable
(in particular, Hermitian-optimally decomposable), then it admits a decomposition of the form~\eqref{pudeco}
satisfying $\ptrntr{D}=\sum_l |s_l|$ (in particular, $\ptrntr{D}=\|D\ptrn=\sum_l |s_l|$).

If the cross state
$D\in\hstahj$ is \emph{not} $\ptrnortr$-optimally decomposable, then each expansion of $D$ of the
form~\eqref{pudeco} must contain both (strictly) positive and (strictly) negative scalar coefficients.

Moreover, for a cross state $D\in\hstahj$, each of conditions~{\ref{dosa}--\ref{dosd}} in
Proposition~\ref{prdecdens} is equivalent to the following:
\begin{enumerate}[label=\tt{(D\arabic*)}]
\setcounter{enumi}{5}

\item \label{dosa-bis}
$D$ is a statistical mixture of tensor products of pure states; i.e., it admits a (sheer) positive
decomposition of the form
\begin{equation} \label{dedepos-bis}
D=\sum_l p_l \tre (\pi_l\otimes\varpi_l) \fin , \quad \mbox{$p_l>0$ $\big(\sum_l\tre p_l=1\big)$,
$\pi_l\in\pusta$, $\varpi_l\in\pustajj$} \fin .
\end{equation}

\end{enumerate}
\end{proposition}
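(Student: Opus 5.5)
The plan is to reduce everything to Proposition~\ref{prdecdens} and to the pure-state refinement already carried out in the proof of Theorem~\ref{thside}, applied to a cross state $D\in\hstahj$. By Proposition~\ref{prdecdens}, $D$ admits a signed decomposition $D=\sum_k t_k\tre(\rho_k\otimes\sigma_k)$ of the form~\eqref{decodens}. Next, take spectral decompositions $\rho_k=\sum_m p_{k;m}\tre\pi_{k;m}$ and $\sigma_k=\sum_n q_{k;n}\tre\varpi_{k;n}$ into pure states. By continuity of the universal bilinear map, each product state is then the absolutely convergent double series~\eqref{expust}, whose coefficients satisfy $\sum_{m,n}p_{k;m}q_{k;n}=1$. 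By Fact~\ref{facdou}, the resulting triple series $\sum_k\sum_{m,n}t_k\tre p_{k;m}\tre q_{k;n}\tre(\pi_{k;m}\otimes\varpi_{k;n})$ is absolutely summable with total absolute mass $\sum_k|t_k|$. Re-indexing it as a single series, and discarding the zero terms (they contribute nothing, and $t_k\ne0$ already), gives~\eqref{pudeco} with $s_l\in\erres$ and $\sum_l|s_l|=\sum_k|t_k|$. The value $\tr(D)=\sum_l s_l=1$ follows from trace-norm convergence, and the chain of inequalities is~\eqref{pronorm-quater} together with $\ptrntr{D}\le\sum_l|s_l|$.

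The $\ptrnortr$-norming property follows because the signed decomposition~\eqref{decodens} is $\ptrnortr$-norming and the refinement preserves $\sum|{\cdot}|$ exactly. Hence the infimum over pure-state signed decompositions is at most the infimum over signed ones, while trivially it is also at least $\ptrntr{D}$. For the same reason, if $D$ is $\ptrnortr$-optimally decomposable, refining a signed decomposition with $\sum_k|t_k|=\ptrntr{D}$ yields one of the form~\eqref{pudeco} with $\sum_l|s_l|=\ptrntr{D}$. In the Hermitian-optimal case we additionally have $\ptrntr{D}=\|D\ptrn$. For the sign statement, argue exactly as in Theorem~\ref{thside}. If $D$ is not $\ptrnortr$-optimal, then every expansion~\eqref{pudeco}, being a particular signed decomposition, satisfies $1=\sum_l s_l\le\|D\ttrn\le\ptrntr{D}<\sum_l|s_l|$. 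This forces both strictly positive and strictly negative coefficients.

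Finally, the equivalence of~\ref{dosa-bis} with~\ref{dosa}--\ref{dosd} goes in two directions. The implication \ref{dosa-bis}$\Rightarrow$\ref{dosa} is trivial, since pure states are density operators. For \ref{dosa}$\Rightarrow$\ref{dosa-bis}, apply the above refinement to a positive decomposition~\eqref{dedepos}: all coefficients $p_k\tre p_{k;m}\tre q_{k;n}$ are nonnegative, and after removing zeros we obtain a sheer mixture of products of pure states with weights summing to $\sum_k p_k=1$. The only delicate point is the bookkeeping in the triple-series re-indexing, which must preserve both the sum and the absolute sum. Fact~\ref{facdou} handles this, exactly as in the proof of Corollary~\ref{corcro}, so I expect no real obstacle.
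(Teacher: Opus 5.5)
Your proposal is correct and matches the paper's own route. The paper gives no separate proof of this proposition: it follows from Proposition~\ref{prdecdens} together with the pure-state refinement \eqref{expust}--\eqref{decpusta} already carried out in the proof of Theorem~\ref{thside}, which preserves the sum and the absolute sum of the coefficients, and your treatment of the norming, optimality, sign and \ref{dosa}$\Leftrightarrow$\ref{dosa-bis} claims follows that reasoning.
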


In particular, if $D\in\hstahj$ is positively decomposable (recall Definition~\ref{deside}), then the
scalar coefficients $\{t_k\}$ in the expansion~\eqref{decode} can be assumed to be \emph{positive};
i.e., $D$ can be expressed as a genuine \emph{statistical mixture} of the form
\begin{equation} \label{podecde}
D=\sum_k p_k\tre (\rho_k\otimes\sigma_k) \fin , \quad \mbox{$p_k>0$ $\big(\sum_k p_k=\tr(D)=1\big)$,
$\rho_k\in\sta$, $\sigma_k\in\stajj$,}
\end{equation}
which provides an optimal Hermitian standard decomposition of $D$. It is clear that \emph{every
trace class operator $D$ of the form}~\eqref{podecde} is a (positively decomposable) cross state
characterized, equivalently, by any of conditions~{\ref{dosb}--\ref{dosd}} in Proposition~\ref{prdecdens}
or, also, by condition~\ref{dosa-bis} in Proposition~\ref{prdecdens-bis}; moreover, every sheer optimal
Hermitian standard decomposition of $D$ --- as well as every sheer signed decomposition
$D=\sum_k\tre t_k\tre(\rho_k\otimes \sigma_k)$ such that $\sum_k |t_k|=1$ --- must be of the
form~\eqref{podecde} (with the convention that we do not distinguish between different representations
of the same elementary tensor product operator in the decomposition).

\begin{definition} \label{definse}
A state $\varsigma\in\stahj$ is said to be \emph{discretely separable}, or \emph{strongly separable},
wrt the bipartition $\hh\otimes\jj$ if it is of the form
\begin{equation} \label{discdec}
\varsigma=\sum_k p_k\tre (\rho_k\otimes\sigma_k) \fin , \quad \mbox{with $p_k>0$, $\sum_k p_k=1$,
$\rho_k\in\sta$, $\sigma_k\in\stajj$,}
\end{equation}
where the sum --- whenever not finite --- converges absolutely wrt each of the norms $\ttrnor$,
$\ptrnor$ and $\ptrnortr$; otherwise stated, if it is a positively decomposable cross state (equivalently,
a cross state satisfying any of conditions~{\ref{dosb}--\ref{dosd}} in Proposition~\ref{prdecdens}, or
condition~\ref{dosa-bis} in Proposition~\ref{prdecdens-bis}) so that it admits an optimal
decomposition in the form of a countable convex combination of product states and
$\ptrntr{\varsigma}=\|\varsigma\ptrn=\|\varsigma\ttrn=\tr(\varsigma)=1$. We will denote
by $\dsta$ the subset of $\stahj$ consisting of all such states.

A discretely separable state $\varsigma\in\dsta$ is said to be \emph{finitely separable} if it is of
the form~\eqref{discdec}, where the sum is \emph{finite}, and we will denote by $\fsta$ the subset of
$\dsta$ consisting of all such states. Otherwise stated, $\fsta\defi\co(\nb(\sta,\stajj))=\sta\altp\stajj$,
where we have used Notation~\ref{notensa}.
\end{definition}

\begin{proposition}
$\dsta$ is a convex subset of the cross trace class $\thptj$. Moreover, we have that
\begin{align}
\thsptjs
& =
\errep\sei\dsta - \errep\sei\dsta
\nonumber \\ \label{lincom-quat}
& =
\big\{r_1\tre\varsigma_1 - r_2\tre\varsigma_2\colon r_1,r_2\in\errep,\ \varsigma_1,\varsigma_2\in\dsta\big\}
\end{align}
and
\begin{align}
\thptj
& =
\thsptjs + \ima\sei\big(\thsptjs\big)
\nonumber \\ \label{lincom-quin}
& =
\errep\sei\dsta - \errep\sei\dsta +\ima\sei\big(\errep\sei\dsta - \errep\sei\dsta\big) .
\end{align}
\end{proposition}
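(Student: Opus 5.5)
The plan is to reduce everything to Proposition~\ref{prolinc-bis}, i.e.\ to the identities $\thsptjs=\pode-\pode$ and $\thptj=\thsptjs+\ima\sei(\thsptjs)$. The proof then has three steps: convexity of $\dsta$, then~\eqref{lincom-quat}, then~\eqref{lincom-quin}.

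\emph{Convexity and inclusion.} By Definition~\ref{definse}, $\dsta\subset\hstahj\subset\thptj$. Take $\varsigma_1=\sum_k p_k(\rho_k\otimes\sigma_k)$ and $\varsigma_2=\sum_j q_j(\tau_j\otimes\omega_j)$ in $\dsta$, and $t\in(0,1)$. Interleave the two index sets, which is possible by Fact~\ref{facdou} since both series are absolutely convergent. This shows that $t\varsigma_1+(1-t)\varsigma_2$ is again a countable convex combination of product states with strictly positive weights summing to one. It is therefore of the form~\eqref{discdec}.

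\emph{Relation~\eqref{lincom-quat}.} The inclusion ``$\supset$'' is immediate: $\dsta\subset\thsptjs$, because each product state is a selfadjoint simple tensor and the series converges absolutely in $\ptrnortr$. Moreover, $\thsptjs$ is a real linear space. For ``$\subset$'' I would start from $C=R_1-R_2$ with $R_1,R_2\in\pode$, as in Proposition~\ref{prolinc-bis}. The key observation is $\pode=\errep\sei\dsta$. Indeed, if $0\neq R\in\pode$ has positive decomposition $R=\sum_k r_k(\rho_k\otimes\sigma_k)$, then Theorem~\ref{thside} gives $\tr(R)=\sum_k r_k>0$. Setting $\varsigma\defi R/\tr(R)$, the coefficients $r_k/\tr(R)$ are nonnegative and sum to one; discarding the zero terms yields a sheer decomposition of the form~\eqref{discdec}, so $\varsigma\in\dsta$. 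If $R=0$, I write $R=0\cdot\varsigma$ for an arbitrary $\varsigma\in\dsta$, which is nonempty since it contains any $\rho\otimes\sigma$. Conversely, $r\varsigma\in\pode$ for $r\ge 0$. Hence $C=r_1\varsigma_1-r_2\varsigma_2$ with $r_i=\tr(R_i)$.

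\emph{Relation~\eqref{lincom-quin}.} Substitute~\eqref{lincom-quat} into Proposition~\ref{prolinc}.

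There is no real obstacle; the only point needing care is the bookkeeping of zero coefficients when normalizing, so that the decomposition obtained is sheer, together with the degenerate case $R=0$.
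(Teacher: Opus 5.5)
Your proof is correct and follows the route the paper leaves implicit: the paper states this proposition without proof, immediately after Proposition~\ref{prolinc-bis}. Normalizing a nonzero positively decomposable operator by its trace gives $\pode=\errep\sei\dsta$, and Propositions~\ref{prolinc-bis} and~\ref{prolinc} then yield both identities; your handling of zero coefficients and of the case $R=0$ covers the only bookkeeping needed.
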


Let us put
\begin{equation}
\ndsta\defi\hstahj\setminus\dsta \fin .
\end{equation}
This set is characterized as the class of all cross states that \emph{cannot} be expressed as a
discrete statistical mixture of product states. Otherwise stated, $\ndsta$ is the collection of
all states whose signed decompositions \emph{must} contain \emph{both} (strictly) positive and
(strictly) negative scalar coefficients. Note that, by Proposition~\ref{prdecdens-bis}, $\ndsta$
contains the set of all cross states that are \emph{not} $\ptrnortr$-optimally decomposable.

\subsection{Characterizing cross states in terms of discretely separable states}
\label{charasep}

Given a cross state $D\in\hstahj$, we can re-elaborate a signed decomposition of $D$ as follows;
we put
\begin{align}
D=\sum_k t_k \tre (\rho_k\otimes\sigma_k)
& =
\sum_{\tkg} t_k \tre (\rho_k\otimes\sigma_k) +
\sum_{\tkl} t_k \tre (\rho_k\otimes\sigma_k)
\nonumber \\ \label{signbis}
& =
\alpha\tre \dpl + (1-\alpha)\tre \dmi ,
\end{align}
--- with $\{t_k\}\subset\erre$, $\{\rho_k\}\subset\sta$, $\{\sigma_k\}\subset\stajj$ --- where
the sums, whenever not finite, converge absolutely wrt each of the norms $\ttrnor$,
$\ptrnor$ and $\ptrnortr$ (and with the obvious convention that an `empty sum' is identically zero),
and we have set
\begin{equation} \label{setalpha}
\alpha\equiv\alpha(\{t_k\})\defi\sum_{\tkg} t_k\ge\sum_{k} t_k=\tr(D)=1 \fin , \quad
1-\alpha=\sum_{\tkl} t_k \le 0 \fin ,
\end{equation}
\begin{equation} \label{setdpl}
\dpl\equiv\dpl(\{t_k\tre(\rho_k\otimes \sigma_k)\})\defi\alpha^{-1}\sum_{\tkg} t_k
\tre (\rho_k\otimes \sigma_k)\in\dsta \fin ,
\end{equation}
\begin{align}
\dsta\sqcup\{0\}\ni\dmi
& \equiv
\dmi(\{t_k\tre(\rho_k\otimes \sigma_k)\})
\nonumber \\ \label{setdmi}
& \defi
\begin{cases}
(1-\alpha)^{-1}\sum_{t_k<0}\tre t_k \tre (\rho_k\otimes \sigma_k) & \mbox{if $\alpha>1$}
\\
0 & \mbox{if $\alpha=1$}
\end{cases} \quad .
\end{align}
Note that $\dpl$ is a discretely separable state, whereas $\dmi\in\pode$ is either a
discretely separable state or, if $\alpha=1$ --- i.e., if $D=\dpl$ --- identically zero.
Moreover, the coefficient $\alpha\equiv\alpha(\{t_k\})\ge 1$, as well as the positively
decomposable cross trace class operators $\dpl\equiv\dpl(\{t_k\tre(\rho_k\otimes \sigma_k)\})$
and $\dmi\equiv\dmi(\{t_k\tre(\rho_k\otimes \sigma_k)\})$, depend on the specific signed
decomposition~\eqref{signbis} chosen for the cross state $D\in\hstahj$. Also note that
by~\eqref{setalpha} we have:
\begin{equation} \label{relalpha}
2\tre\alpha(\{t_k\})-1=\sum_{\tkg} t_k + \sum_{\tkl} (-t_k) = \sum_k |t_k|\ge 1 \fin ,
\end{equation}
where
\begin{equation} \label{alphauno}
\alpha(\{t_k\})=1 \iff D=\dpl(\{t_k\tre(\rho_k\otimes \sigma_k)\})\in\dsta \fin .
\end{equation}

\begin{remark}
If $D\in\ndsta\defi\hstahj\setminus\dsta$, then this cross state must be of the form
$D=\alpha\tre\don + (1-\alpha)\tre\dtw$, for some $\don,\dtw\in\dsta$ and
$\alpha\in(-\infty,0)\cup(1,+\infty)$; namely, $D$ must be a \emph{non-convex affine
combination} of two discretely separable states. Clearly, without loss
of generality, we can assume that $\alpha\in(1,+\infty)$. Moreover, since the (Hermitian)
projective norm of a discretely separable state is equal to $1$, we have that
\begin{equation} \label{relalpha-bis}
\ptrntr{D}\le\alpha\tre\ptrntr{D_1} + (\alpha-1)\tre\ptrntr{D_2}=2\tre\alpha-1 \fin , \quad
\alpha>1 \fin .
\end{equation}
\end{remark}

The previous arguments lead us to formulate the following result:

\begin{proposition} \label{proaffco}
A cross state $D\in\hstahj$ is either a discretely separable state --- i.e., a state the
form~\eqref{discdec} --- or else a non-convex affine combination of two discretely separable
states, i.e.,
\begin{equation}
D=\alpha\tre\don + (1-\alpha)\tre\dtw\in\ndsta , \quad \alpha>1 \fin , \
\don,\dtw\in\dsta \fin .
\end{equation}
Moreover, if $D\in\dsta$, then $\|D\ttrn=\|D\ptrn=\ptrntr{D}=1$. If, instead,
\begin{equation}
D\in\ndsta\defi\hstahj\setminus\dsta
\end{equation}
--- namely, if $D$ is a cross state on $\hhjj$ that is not discretely separable --- then
\begin{align}
1 =\|D\ttrn \le \|D\ptrn & \le \ptrntr{D}
\nonumber \\ \label{relptrntr}
& =
\inf \big\{2\tre\alpha-1>1\colon \alpha\tre\don + (1-\alpha)\tre\dtw=D, \
\don,\dtw\in\dsta\big\} \fin .
\end{align}
\end{proposition}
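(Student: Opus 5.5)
The dichotomy comes straight from the re-elaboration~\eqref{signbis} of a signed decomposition. Let $D\in\hstahj$. By Proposition~\ref{prdecdens}, $D$ admits a signed decomposition $D=\sum_k t_k(\rho_k\otimes\sigma_k)$, and I rewrite it as $D=\alpha\dpl+(1-\alpha)\dmi$, with $\alpha\ge1$ given by~\eqref{setalpha} and $\dpl$, $\dmi$ given by~\eqref{setdpl}--\eqref{setdmi}.

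If some signed decomposition has $\alpha=1$, then by~\eqref{alphauno} $D=\dpl\in\dsta$. If $D\notin\dsta$, then every signed decomposition must have $\alpha>1$: otherwise $D=\dpl$ would be a countable convex combination of product states, hence positively decomposable. In that case $\dmi$ is a genuine element of $\dsta$, and $D=\alpha\don+(1-\alpha)\dtw$ with $\don=\dpl$, $\dtw=\dmi$. For the norm claim when $D\in\dsta$, I would invoke Definition~\ref{definse} and condition~\ref{dosa} of Proposition~\ref{prdecdens}, which give $\|D\ttrn=\|D\ptrn=\ptrntr{D}=1$.

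The substance is the formula~\eqref{relptrntr} for $D\in\ndsta$. The first two inequalities are~\eqref{pronorm-quater}. For the equality I prove two inequalities between $\ptrntr{D}$ and the infimum.

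For $\ptrntr{D}\le\inf$: given any representation $D=\alpha\don+(1-\alpha)\dtw$ with $\don,\dtw\in\dsta$, expand each $D_i$ in a positive decomposition~\eqref{discdec}. Concatenating the two expansions gives a signed decomposition of $D$ with $\sum_k|t_k|=\alpha+(\alpha-1)=2\alpha-1$. Since signed decompositions are $\ptrnortr$-norming (Theorem~\ref{thside}), this yields $\ptrntr{D}\le2\alpha-1$, as in~\eqref{relalpha-bis}. Each such $\alpha$ is automatically $>1$. Indeed, $\alpha$ cannot lie in $[0,1]$, since then $D$ would be a convex combination of elements of $\dsta$ and hence lie in the convex set $\dsta$. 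If $\alpha<0$, I swap the roles of $\don$ and $\dtw$, replacing $\alpha$ by $1-\alpha>1$. The infimum should be read over representations normalized this way, with $\alpha>1$.

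For $\inf\le\ptrntr{D}$: by the $\ptrnortr$-norming property there are signed decompositions with $\sum_k|t_k|$ arbitrarily close to $\ptrntr{D}$. For each of them,~\eqref{relalpha} gives $\sum_k|t_k|=2\alpha-1$, and the previous paragraph supplies $\don=\dpl$, $\dtw=\dmi\in\dsta$ with $\alpha>1$. So the infimum is at most $\ptrntr{D}$.

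The only delicate point is making sure $\dtw=\dmi$ is indeed a state and not $0$. This is exactly the non-membership $D\notin\dsta$, which forces $\alpha>1$ in every signed decomposition.
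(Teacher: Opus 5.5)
Your proof is correct and follows essentially the paper's argument. The dichotomy comes from the rewriting~\eqref{signbis} of a signed decomposition, with $\alpha=1$ iff $D=\dpl\in\dsta$, and formula~\eqref{relptrntr} follows from the $\ptrnortr$-norming property of signed decompositions combined with~\eqref{relalpha} and~\eqref{relalpha-bis}; splitting the paper's chain of equalities into the two inequalities $\ptrntr{D}\le\inf$ and $\inf\le\ptrntr{D}$ only reorganizes the same steps.
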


\begin{proof}
By our previous discussion, we only need to prove the second and the third assertions.
In fact --- as already noted in Definition~\ref{definse} --- if $D\in\dsta$, then,
by the equivalence of conditions~{\ref{dosa}--\ref{dosd}} in Proposition~\ref{prdecdens},
we have, in particular, that $\|D\ttrn=\|D\ptrn=\ptrntr{D}=1$. If, instead, $D\in\ndsta$,
then
\begin{align}
1 =\|D\ttrn \le \|D\ptrn \le & \ptrntr{D}
\nonumber\\
= &
\inf\big\{{\textstyle \sum_k |t_k|}\colon {\textstyle\sum_k
t_k \tre (\rho_k\otimes\sigma_k)}=D \big\}
\nonumber \\
= &
\inf \big\{2\alpha(\{t_k\})-1> 1 \colon
\ {\textstyle\sum_k t_k \tre (\rho_k\otimes\sigma_k)}=D\big\}
\nonumber \\
= & \inf \big\{2\alpha(\{t_k\})-1>1  \colon
\nonumber \\
& \
\alpha(\{t_k\})\tre\dpl(\{t_k\tre(\rho_k\otimes \sigma_k)\}) + (1-\alpha(\{t_k\}))
\tre\dmi(\{t_k\tre(\rho_k\otimes \sigma_k)\})=D\big\}
\nonumber\\ \label{nondis}
= &
\inf\big\{2\tre\alpha-1>1\colon \alpha\tre\don + (1-\alpha)\tre\dtw=D, \
\don,\dtw\in\dsta\big\} \fin .
\end{align}
Note that the second line of~\eqref{nondis} --- i.e., the first equality therein --- follows from
relation~\eqref{pronorm-quater} (the infimum being extended over all possible expansions of $D$
of the form~\eqref{decode}) and the second equality from relation~\eqref{relalpha} (together
with~\eqref{alphauno}); next, the third equality follows from decomposition~\eqref{signbis}
(together with definitions~\eqref{setalpha}--\eqref{setdmi}) and the last one from~\eqref{relalpha-bis}
(where $D=\alpha\tre\don + (1-\alpha)\tre\dtw$, with $\don,\dtw\in\dsta$).
\end{proof}

\begin{remark}
Clearly, if $D=\alpha\tre\don + (1-\alpha)\tre\dtw$ --- with $D,\don,\dtw\in\dsta$ and $\alpha\ge 1$
--- then $1=\ptrntr{D}\le\alpha\tre\ptrntr{D_1}+(1-\alpha)\tre\ptrntr{D_2}=2\tre\alpha-1$, where
the inequality is saturated when $\alpha=1$. By this fact and by Proposition~\ref{proaffco} --- in
particular, observing that if $D\in\ndsta$ and $D=\alpha\tre\don + (1-\alpha)\tre\dtw$,
with $\don,\dtw\in\dsta$, then $\alpha\not\in[0,1]$ (actually, we can assume, without loss of generality,
that $\alpha\in(1,+\infty)$) --- for \emph{every} cross state $D\in\hstahj$ we have:
\begin{equation} \label{nondis-bis}
\ptrntr{D}=\inf\big\{2\tre\alpha-1\ge 1\colon \alpha\tre\don + (1-\alpha)\tre\dtw=D, \
\don,\dtw\in\dsta\big\} \fin .
\end{equation}
We stress that, for $D\in\dsta$, the infimum on the rhs of~\eqref{nondis-bis} is attained as a \emph{minimum}
(i.e., $2\tre\alpha-1=1$), whereas, for $D\in\ndsta\defi\hstahj\setminus\dsta$ such that $\ptrntr{D}=1$, it cannot be
(since $2\tre\alpha-1\in(1,+\infty)$). We will show that the latter class of cross states consists precisely
of those states on $\hhjj$ that are separable --- according to the general definition of separability that will
be recalled below --- \emph{but not} discretely separable.
\end{remark}

\subsection{Separable bipartite states: the general case}
\label{sepgen}

The definition of \emph{discrete separability} introduced in the previous subsection coincides with the
notion of separability \emph{tout court} considered by some authors (see, e.g., Sect.~{9.5} of~\cite{Busch}),
but is a particular case of a more general notion of separability we will now analyze.

We know that every \emph{discretely separable state} --- i.e., every trace class operator $D$ on $\hhjj$ of the
form~\eqref{podecde} --- is a cross state. It is then natural to wonder whether the \emph{most general} separable
state is a cross state, as well.

We start by providing a formal definition of \emph{separability} of bipartite states, a notion that seems to
have been first established in its full generality (and on the physical ground) by Werner~\cite{Werner}.

\begin{definition} \label{defsest}
The set $\sesta\subset\stahj$ of all \emph{separable states} wrt the bipartition $\hh\otimes\jj$ is
defined as
\begin{equation}
\sesta\defi\clttrnor\big\{\varsigma\in\trc\altp\trcjj\colon \mbox{$\varsigma=
\sum_j p_j\tre(\rho_j\otimes\sigma_j)$, $\rho_j\in\sta$, $\sigma_j\in\stajj$}\big\}.
\end{equation}
Here, the decomposition of the density operator $\varsigma\in\trc\altp\trcjj$ ranges over all (finite)
convex combinations of \emph{product states} of the form $\rho\otimes\sigma$, with $\rho\in\sta$
and $\sigma\in\stajj$. Otherwise stated, $\sesta$ coincides with the $\ttrnor$-closed convex hull
of the set of all product states wrt the bipartition $\hh\otimes\jj$; i.e., recalling Notation~\ref{notensa},
\begin{equation}
\sesta=\clco(\nb(\sta,\stajj))\ifed\sta\otimes\stajj \fin.
\end{equation}

A state $\tau\in\stahj\setminus\sesta$ is said to be \emph{entangled}.
\end{definition}

It is quite natural to consider, in this context, an analogous notion of separability where the role
of the trace norm $\trnor$ is played, instead, by the projective norm $\ptrnor$.

\begin{definition} \label{defsest-bis}
The set $\csta\subset\thptj$ of \emph{cross separable states} wrt the bipartition
$\hh\otimes\jj$ is defined as
\begin{equation} \label{defcsta}
\csta\defi\clptrnor\big\{\varsigma\in\trc\altp\trcjj\colon \mbox{$\varsigma=
\sum_j p_j\tre(\rho_j\otimes\sigma_j)$, $\rho_j\in\sta$, $\sigma_j\in\stajj$}\big\}.
\end{equation}
Otherwise stated, $\csta$ is the $\ptrnor$-closed convex hull $\clcop(\nb(\sta,\stajj))$ of the
set of all product states wrt the bipartition $\hh\otimes\jj$; i.e., recalling Notation~\ref{notensb},
\begin{equation}
\csta=\sta\prtp\stajj \fin .
\end{equation}
\end{definition}

\begin{proposition} \label{procsta}
$\csta$ is a closed convex subset of the $\ptrnor$-closed convex set $\hstahj$ of all cross states
wrt the bipartition $\hhjj$, and
\begin{equation} \label{relcdsta}
\dsta\subset\clptrnor(\dsta)=\csta\subset\sesta \fin .
\end{equation}
In particular, in the case where at least one of the Hilbert spaces of the bipartition $\hhjj$ is
finite-dimensional, $\csta=\sesta$.
\end{proposition}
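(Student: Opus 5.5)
The argument splits into four steps: convexity and closedness, the two inclusions, the middle equality, and the finite-dimensional case.

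\emph{Convexity and closedness.} The set $\csta=\clcop(\nb(\sta,\stajj))$ is, by definition, a $\ptrnor$-closed convex subset of $\thptj$. The finite convex combinations $\fsta=\co(\nb(\sta,\stajj))$ lie in $\hstahj$. By Proposition~\ref{prcrost}, $\hstahj$ is a $\ptrnor$-closed convex subset of $\thptj$. It therefore contains the $\ptrnor$-closure of $\fsta$, which is $\csta$.

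\emph{The two inclusions.} First, $\dsta\subset\csta$. Take $\varsigma=\sum_k p_k(\rho_k\otimes\sigma_k)\in\dsta$; the series converges absolutely in $\ptrnor$. Its partial sums, renormalized by $\sum_{k\le n}p_k$, lie in $\fsta$ and $\ptrnor$-converge to $\varsigma$ (equivalently, apply Fact~\ref{infconv} in the Banach space $\thptj$). Second, $\csta\subset\sesta$. The immersion is contractive, $\|C\ttrn\le\|C\ptrn$, so a $\ptrnor$-limit of elements of $\fsta$ is also their $\ttrnor$-limit. Hence $\clptrnor(\fsta)\subset\clttrnor(\fsta)=\sesta$.

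\emph{The equality $\clptrnor(\dsta)=\csta$.} Since $\fsta\subset\dsta\subset\csta$ and $\csta$ is $\ptrnor$-closed, taking closures gives $\csta=\clptrnor(\fsta)\subset\clptrnor(\dsta)\subset\csta$.

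\emph{The finite-dimensional case.} Suppose $\emme=\min\{\dim(\hh),\dim(\jj)\}<\infty$. By Theorem~\ref{theine}, $\thptj=\trchj$ as sets and the norms $\ttrnor$ and $\ptrnor$ are equivalent, with $\|A\ptrn\le\enne\|A\ttrn$. Hence the two closures of $\fsta$ coincide, and $\csta=\sesta$.

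None of these steps is a real obstacle. The only point needing care is that $\csta$ lies inside $\hstahj$: one must use that $\hstahj$ is $\ptrnor$-closed, not merely $\ttrnor$-closed. This is exactly what Proposition~\ref{prcrost} supplies, or it can be seen directly because $\stahj$ is $\ttrnor$-closed and the immersion is continuous.
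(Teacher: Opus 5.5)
Your proof is correct, and it is the routine argument the paper leaves to the reader (it gives no proof of this proposition). The ingredients are: $\ptrnor$-closedness of $\hstahj$ from Proposition~\ref{prcrost}; $\ptrnor$-approximation of discretely separable states by renormalized partial sums; the contraction $\|C\ttrn\le\|C\ptrn$, which gives $\csta\subset\sesta$; and the norm equivalence of Theorem~\ref{theine} for the finite-dimensional case. The equality $\csta=\sesta$ in full generality is proved only later, in Corollary~\ref{corconta}, via barycentric decompositions.
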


\begin{remark} \label{receq}
Recall that the norms $\ptrnor$ and $\ptrnortr$ are equivalent on $\thsptjs$. Therefore, in
the definition of the set $\csta\subset\hstahj\subset\thsptjs$ --- see~\eqref{defcsta} --- the
\emph{closure} can actually be taken wrt either of these norms.
\end{remark}

\begin{proposition} \label{pronose}
For every cross separable state $\varsigma\in\csta$, $\|\varsigma\ptrn=\ptrntr{\varsigma}=1$.
\end{proposition}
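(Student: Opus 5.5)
The plan is a sandwich argument: bound $\ptrntr{\varsigma}$ from above by $1$ using the $\ptrnortr$-continuity of the norm on $\csta$, and bound $\|\varsigma\ptrn$ from below by $1$ using the trace norm. The chain of inequalities~\eqref{inenorms} then pins both norms at $1$.

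We first establish the upper bound. By Definition~\ref{defsest-bis} and Remark~\ref{receq}, every $\varsigma\in\csta$ is a $\ptrnortr$-limit of a sequence $\{\varsigma_n\}$ of finite convex combinations $\varsigma_n=\sum_j p_j^{(n)}\tre(\rho_j^{(n)}\otimes\sigma_j^{(n)})$ of product states. It is legitimate to take the closure with respect to $\ptrnortr$ because this norm is equivalent to $\ptrnor$ on $\thsptjs$. Each $\varsigma_n$ is a finite Hermitian standard decomposition with $\|\rho_j^{(n)}\trn\tre\|\sigma_j^{(n)}\trn=1$, so definition~\eqref{defptrntr} gives $\ptrntr{\varsigma_n}\le\sum_j p_j^{(n)}=1$. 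Since a norm is continuous with respect to its own topology, we get $\ptrntr{\varsigma}=\lim_n\ptrntr{\varsigma_n}\le 1$.

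For the lower bound, Proposition~\ref{procsta} tells us that $\varsigma\in\csta\subset\sesta\subset\stahj$, so $\varsigma$ is a density operator and $\|\varsigma\ttrn=\tr(\varsigma)=1$. Alternatively, one can see this directly: trace-norm convergence follows from~\eqref{inenorms}, and both positivity and unit trace are preserved under $\ttrnor$-limits.

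We then combine the two bounds through~\eqref{inenorms}:
\begin{equation*}
1=\|\varsigma\ttrn\le\|\varsigma\ptrn\le\ptrntr{\varsigma}\le 1 ,
\end{equation*}
so all the quantities in this chain are equal to $1$. The argument has no genuine obstacle. The only point needing care is to take the approximating sequence in the $\ptrnortr$-topology rather than the $\ptrnor$-topology, and Remark~\ref{receq} justifies this. If one prefers to work with $\ptrnor$-limits, the same argument still works: one bounds $\|\varsigma\ptrn\le 1$ directly, and bounds $\ptrntr{\varsigma}$ by using that $\ptrnor$-convergence implies $\ptrnortr$-convergence, by the equivalence $\ptrntr{\cdot}\le 2\tre\|\cdot\ptrn$.
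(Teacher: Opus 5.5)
Your proposal is correct and matches the argument the paper intends. The paper leaves this proposition unproved, but the intended route is the one you take: approximate $\varsigma$ in the projective norm by finite convex combinations of product states, whose Hermitian projective norm is at most $1$; pass to the limit using continuity of the norm; and close the chain~\eqref{inenorms} with $\|\varsigma\ttrn=1$.
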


Recalling relation~\eqref{reldepu}, we have:
\begin{equation} \label{propust}
\sesta=\clco(\nb(\sta,\stajj))=\clco(\nb(\pusta,\pustajj))\ifed\pusta\otimes\pustajj \fin .
\end{equation}
We will denote by $\sestap$ the set of all \emph{separable pure states} on $\hh\otimes\jj$, i.e.,
\begin{equation}
\sestap\defi\pustahj\cap\sesta \fin .
\end{equation}

\subsection{Separable states and barycentric decompositions}
\label{barycentric}

The precise relation between the convex sets $\sesta$ and $\csta$ is not immediately clear
in the genuinely infinite-dimensional setting (i.e., for $\dim(\hh)=\dim(\jj)=\infty$); we
only know that $\csta\subset\sesta$. To further investigate this relation --- specifically,
to check whether $\csta=\sesta$ in the most general case --- we first need to establish some
preliminary results. To this end, we will use, in particular, basic facts regarding the integration
of vector-valued functions; see, e.g., the standard references~\cite{Hytonen,Diestel}. In fact,
we will consider \emph{Bochner integrals} of functions that take values in a \emph{separable}
Banach space (i.e., $\trchj$ or $\trc\prtp\trcjj$). By a well-known result --- the so-called
\emph{Pettis Measurability Theorem}~\cite{Hytonen} --- such a function is \emph{strongly measurable}
iff it is \emph{weakly measurable}; to indicate this type of vector-valued functions, we will therefore
synthetically use the term \emph{measurable}. It is, moreover, a standard technical fact that, whenever
integration wrt a \emph{finite} measure (in particular, a probability measure) is involved, every
\emph{norm-bounded} measurable function is \emph{Bochner-integrable}; see, e.g., Proposition~{1.2.2}
of~\cite{Hytonen}.

\begin{lemma} \label{lemtop}
Let us endow Cartesian product $\sta\times\stajj$ with the product topology --- $\sta$ and $\stajj$
being endowed with their standard topology, see Fact~\ref{statop} --- and the sets $\nb(\sta,\stajj)$,
$\ub(\sta,\stajj)$ with their subspace topology (wrt the natural norm topology of the Banach spaces
$\trc\otimes\trcjj=\trchj$ and $\trc\prtp\trcjj$, respectively). Then, these sets are mutually homeomorphic,
because the bijective maps
\begin{equation} \label{mapa}
\sta\times\stajj\ni (\rho,\sigma)\mapsto\rho\otimes\sigma\in\nb(\sta,\stajj)
\end{equation}
and
\begin{equation} \label{mapb}
\nb(\sta,\stajj)\ni \rho\otimes\sigma\mapsto\rho\otimes\sigma\in\ub(\sta,\stajj)
\end{equation}
are homeomorphisms. Moreover, $\nb(\sta,\stajj)$ and $\nb(\pusta,\pustajj)$ are a norm-closed
subsets of $\trchj$.
\end{lemma}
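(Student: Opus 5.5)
We need to prove Lemma~\ref{lemtop}: map \eqref{mapa} is a homeomorphism, map \eqref{mapb} is a homeomorphism, and the sets $\nb(\sta,\stajj)$, $\nb(\pusta,\pustajj)$ are closed in $\trchj$.

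For the map \eqref{mapa}, bijectivity is Fact~\ref{factdenso}. By Fact~\ref{statop} the standard topology on $\sta$ is the trace-norm topology, so the product topology on $\sta\times\stajj$ is induced by the metric $\trdd$. Continuity then follows from the joint continuity of the bounded bilinear map $\nb$, or directly from
\begin{equation*}
\|\rho\otimes\sigma-\rho'\otimes\sigma'\ttrn\le\|\rho-\rho'\trn+\|\sigma-\sigma'\trn .
\end{equation*}
For the inverse, I use partial traces: $\rho=\tr_{\jj}(\rho\otimes\sigma)$ and $\sigma=\tr_{\hh}(\rho\otimes\sigma)$. The partial trace is a contraction from $\trchj$ to $\trc$, since by Fact~\ref{norsub} one has $|\tr(\tr_\jj(A)B)|=|\tr(A(B\otimes I))|\le\|A\ttrn\|B\nori$. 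So both components of the inverse are continuous, and \eqref{mapa} is a homeomorphism.

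For the map \eqref{mapb}, it is the identity on the underlying set of operators, and $\|\cdot\ttrn\le\|\cdot\ptrn$, so the inverse direction is continuous. For the forward direction I compose with the inverse of \eqref{mapa} and use the bound
\begin{equation*}
\|\rho\otimes\sigma-\rho'\otimes\sigma'\ptrn\le\|\rho-\rho'\trn+\|\sigma-\sigma'\trn ,
\end{equation*}
which follows from writing $\rho\otimes\sigma-\rho'\otimes\sigma'=(\rho-\rho')\otimes\sigma+\rho'\otimes(\sigma-\sigma')$ and using the cross-norm property. So $\ub\circ(\text{map }\eqref{mapa})^{-1}$ is continuous, and the two topologies on the set of product states coincide.

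The step needing the most care is closedness, because $\sta$ is not compact in infinite dimensions. Suppose $\rho_n\otimes\sigma_n\to D$ in $\ttrnor$. Then $D\in\stahj$, since $\stahj$ is closed. By continuity of the partial traces, $\rho_n=\tr_\jj(\rho_n\otimes\sigma_n)\to\tr_\jj D\ifed\rho$ and $\sigma_n\to\tr_\hh D\ifed\sigma$ in trace norm; the limits are density operators because $\sta$ and $\stajj$ are closed. Continuity of $\nb$ then gives $\rho_n\otimes\sigma_n\to\rho\otimes\sigma$, so $D=\rho\otimes\sigma\in\nb(\sta,\stajj)$. For pure states the same argument applies, using that $\pusta$ is trace-norm closed: a limit of rank-one projections is a density operator $\rho$ with $\tr(\rho^2)=\lim\tr(\rho_n^2)=1$, hence a rank-one projection.
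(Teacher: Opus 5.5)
Your proposal is correct and follows the paper's proof essentially step by step. You use bijectivity from Fact~\ref{factdenso}, the trace-norm and projective-norm estimates for $\rho\otimes\sigma-\rho'\otimes\sigma'$, and continuity of the partial traces, both for the inverse of \eqref{mapa} and for closedness of the set of product states. The only minor deviation is the pure-state case: you pass to the factors and use $\tr(\rho^2)=\lim_n\tr(\rho_n^2)=1$, whereas the paper shows the limit $\rho\otimes\sigma$ is idempotent because $(\rho_n\otimes\sigma_n)^2=\rho_n\otimes\sigma_n$; both arguments are valid.
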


\begin{proof}
By Fact~\ref{factdenso}, the maps~\eqref{mapa} and~\eqref{mapb} are bijective. Let us then prove
that, in particular, they are homeomorphisms. First note that the Cartesian product $\sta\times\stajj$,
endowed with the product topology, is a second countable space, because $\sta$, $\stajj$ are endowed with
their (second countable) standard topology, that coincides with  the subspace topology wrt the separable
Banach spaces $\trc$ and $\trcjj$, respectively. Similarly, $\nb(\sta,\stajj)$ and $\ub(\sta,\stajj)$ ---
once endowed with the subspace topology wrt the separable Banach spaces $\trchj$ and $\thptj$, respectively
--- are second countable spaces too. Therefore, \emph{a fortiori} these spaces satisfy the first axiom of
countability, so that in the following we can argue in terms of \emph{sequences} (a map between topological
spaces, with first countable domain, is continuous iff it is sequentially continuous). Now, a sequence
$\{(\rho_n,\sigma_n)\}\subset\sta\times\stajj$ converges to some $(\rho,\sigma)\in\sta\times\stajj$
iff $\lim_n\|\rho-\rho_n\trn=0$ and $\lim_n \|\sigma-\sigma_n\trn=0$. Moreover, we have that
\begin{align}
\|\rho\otimes\sigma-\rho_n\otimes\sigma_n\ttrn
& \le
\|(\rho-\rho_n)\otimes\sigma\ttrn+\|\rho_n\otimes(\sigma-\sigma_n)\ttrn
\nonumber \\
& =
\|\rho-\rho_n\trn\tre\|\sigma\trn+\|\rho_n\trn\tre\|\sigma-\sigma_n\trn
=\|\rho-\rho_n\trn+\|\sigma-\sigma_n\trn \fin .
\end{align}
Therefore, the bijective map~\eqref{mapa} is continuous and, by the boundedness of the partial traces
$\trjj\colon\trc\otimes\trcjj\rightarrow\trc$ and $\trhh\colon\trc\otimes\trcjj\rightarrow\trcjj$,
its inverse --- i.e., the mapping
\begin{equation} \label{mapc}
\nb(\sta,\stajj)\ni\rho\otimes\sigma\mapsto\big(\rho=\trjj(\rho\otimes\sigma),
\sigma=\trhh(\rho\otimes\sigma)\big)\in\sta\times\stajj
\end{equation}
--- is continuous too (the two coordinate maps being continuous). Let us next to show that the
bijection~\eqref{mapb} is a homeomorphism, as well. To this end, note that --- since the trace
norm $\ttrnor$ is majorized by the projective norm $\ptrnor$ --- for every sequence
$\{(\rho_n,\sigma_n)\}\subset\sta\times\stajj$, we have:
\begin{align}
\|\rho\otimes\sigma-\rho_n\otimes\sigma_n\ttrn
& \le
\|\rho\otimes\sigma-\rho_n\otimes\sigma_n\ptrn
\nonumber \\
& \le
\|(\rho-\rho_n)\otimes\sigma\ptrn+\|\rho_n\otimes(\sigma-\sigma_n)\ptrn
\nonumber \\ \label{estrs}
& =
\|\rho-\rho_n\trn\tre\|\sigma\trn+\|\rho_n\trn\tre\|\sigma-\sigma_n\trn
=\|\rho-\rho_n\trn+\|\sigma-\sigma_n\trn \fin .
\end{align}
Hence, $\lim_n\|\rho\otimes\sigma-\rho_n\otimes\sigma_n\ttrn=0\iff\lim_n
\|\rho\otimes\sigma-\rho_n\otimes\sigma_n\ptrn=0$, where for the direct implication
``$\Longrightarrow$'' we have also exploited the continuity of the map~\eqref{mapc}
(otherwise stated, by the second inequality in~\eqref{estrs} the map
$\sta\times\stajj\ni (\rho,\sigma)\mapsto\rho\otimes\sigma\in\ub(\sta,\stajj)$
is continuous, and, composed with the continuous map~\eqref{mapc}, gives the map~\eqref{mapb},
which is then continuous too).

Let us now finally prove that both the sets $\nb(\sta,\stajj)$ and $\nb(\pusta,\pustajj)$
are closed in $\trchj$. In fact, given any convergent sequence of \emph{product states}
$\{\rho_n\otimes\sigma_n\}\subset\trchj$ --- $\ttrnorli_n\rho_n\otimes\sigma_n=D\in\trchj$;
but note that, actually, $D\in\stahj$, because $\stahj$ is a closed subset of $\trchj$ ---
by the continuity of the partial trace we have that
\begin{equation}
\rho\equiv\trjj(D)=\trjj\Big(\ttrnorli_n\rho_n\otimes\sigma_n\Big)=
\trnorli_n\trjj(\rho_n\otimes\sigma_n)=\trnorli_n\rho_n\in\sta
\end{equation}
and, analogously, $\sigma\equiv\trhh(D)=\trnorli_n\sigma_n\in\stajj$. It follows that the sequence
$\{(\rho_n,\sigma_n)\}\subset\sta\times\stajj$ converges to $(\rho,\sigma)\in\sta\times\stajj$;
hence, by the continuity of the map~\eqref{mapa}, we must have that
$\ttrnorli_n\rho_n\otimes\sigma_n=\rho\otimes\sigma\in\nb(\sta,\stajj)$.
Thus, $\nb(\sta,\stajj)$ is a closed subset of $\trchj$.
Moreover, by the continuity
of the product of operators in $\trchj$ (the trace class is a Banach algebra), we see that
\begin{equation}
\ttrnorli_n\rho_n\otimes\sigma_n=\rho\otimes\sigma
\implies \ttrnorli_n (\rho_n\otimes\sigma_n)^2=(\rho\otimes\sigma)^2=
\rho^2\otimes\sigma^2 \fin .
\end{equation}
Therefore, if, in particular, $\{\rho_n\otimes\sigma_n\}\subset\nb(\pusta,\pustajj)$, then
\begin{equation}
\rho\otimes\sigma=\ttrnorli_n\rho_n\otimes\sigma_n=\ttrnorli_n \sei(\rho_n\otimes\sigma_n)^2
=(\rho\otimes\sigma)^2 \fin ,
\end{equation}
so that $\rho\otimes\sigma\in\nb(\pusta,\pustajj)$. Hence, $\nb(\pusta,\pustajj)$ is a closed
subset of $\trchj$ too.
\end{proof}

\begin{remark} \label{rempol}
Endowed with with their relative topology wrt $\trchj$, both $\nb(\sta,\stajj)$ and $\nb(\pusta,\pustajj)$
are Polish spaces, being closed subsets of the Polish space $\trchj$. Moreover, by Lemma~\ref{lemtop},
their Polish space topology coincides with the topology induced by the projective norm $\ptrnor$.
Otherwise stated, we can identify, also topologically, the spaces $\nb(\sta,\stajj)$ and $\ub(\sta,\stajj)$,
as well as the spaces $\nb(\pusta,\pustajj)$ and $\ub(\pusta,\pustajj)$. Therefore, the spaces $\nb(\sta,\stajj)$
and $\ub(\sta,\stajj)$ possess the same \emph{natural Borel structure} (the smallest Borel structure containing
all the open sets of their common topology), wrt which they are (isomorphic) \emph{standard Borel spaces}; see
Subsection~\ref{measures}, and references therein. Analogous facts hold for the spaces $\nb(\pusta,\pustajj)$
and $\ub(\pusta,\pustajj)$ too.
\end{remark}

\begin{lemma} \label{lembar}
Let $\mathcal{C}$ be a closed subset of $\stahj$. Then, denoting by $\prm(\mathcal{C})$ the set of
Borel probability measures on $\mathcal{C}$, we have (Bochner integrals being understood):
\begin{equation}
\clco(\mathcal{C})=\big\{{\textstyle \int_\mathcal{C} \xi\otto\de\mu(\xi)}\colon\mu\in
\prm(\mathcal{C})\big\};
\end{equation}
i.e., the $\ttrnor$-closed convex hull of the set $\mathcal{C}$ coincides with the set of
\emph{barycenters} of all Borel probability measures on $\mathcal{C}$.
\end{lemma}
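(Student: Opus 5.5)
We prove the two inclusions separately. Throughout, $\mathcal{C}$ is regarded as a closed subset of the Polish space $\trchj$. Hence $\mathcal{C}$ is itself Polish, hence second countable, and it is norm-bounded, since every element has trace norm $1$. Consequently the identity map $\mathcal{C}\ni\xi\mapsto\xi\in\trchj$ is continuous and bounded, hence measurable. For every $\mu\in\prm(\mathcal{C})$ it is therefore Bochner integrable, so the barycenter $b(\mu)\defi\int_\mathcal{C}\xi\otto\de\mu(\xi)$ is well defined in $\trchj$.

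\emph{Inclusion ``$\supset$''.} Fix $\mu\in\prm(\mathcal{C})$ and suppose $b(\mu)\notin\clco(\mathcal{C})$. Since $\clco(\mathcal{C})$ is norm-closed and convex, the Hahn--Banach separation theorem applied to the real-linear structure of $\trchj$ gives a bounded operator $B\in\bophj$ and a real number $c$ such that
\[
\repa\tr(b(\mu)B)>c\ge\repa\tr(\xi B)\quad\text{for all }\xi\in\clco(\mathcal{C}).
\]
Bounded linear functionals commute with Bochner integrals, so
\[
\repa\tr(b(\mu)B)=\int_\mathcal{C}\repa\tr(\xi B)\otto\de\mu(\xi)\le c,
\]
which is a contradiction. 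Hence $b(\mu)\in\clco(\mathcal{C})$. An alternative route is to approximate the integral by integrals of simple functions, which are convex combinations of points of $\mathcal{C}$ once the simple functions are chosen to take values in $\mathcal{C}$.

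\emph{Inclusion ``$\subset$''.} Convex combinations are barycenters of finitely supported measures. Indeed, for $\sum_j p_j\xi_j$ with $\xi_j\in\mathcal{C}$ one takes $\mu=\sum_j p_j\delta_{\xi_j}$; this is a Borel probability measure on $\mathcal{C}$ because singletons are Borel in a metric space. It then suffices to show that the set $\mathcal{B}\defi\{b(\mu)\colon\mu\in\prm(\mathcal{C})\}$ is norm-closed, since $\mathcal{B}$ contains $\co(\mathcal{C})$ and $\clco(\mathcal{C})=\cl(\co(\mathcal{C}))$.

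To prove closedness, let $b(\mu_n)\to D$ in trace norm. The first step is to show that $\{\mu_n\}$ is uniformly tight. For this we use compactness criteria for density operators: a trace-norm convergent sequence of states $\{b(\mu_n)\}$ is relatively compact, so for each $\epsilon>0$ there is a finite-rank projection $P$ with
\[
\tr\big(b(\mu_n)(I-P)\big)<\epsilon^2\quad\text{for all }n.
\]
By linearity of the integral this reads $\int\tr(\xi(I-P))\,\de\mu_n(\xi)<\epsilon^2$. Markov's inequality then bounds $\mu_n\{\xi\colon\tr(\xi(I-P))\ge\epsilon\}$ by $\epsilon$. Choosing a sequence of projections $P_k$ and thresholds $\epsilon_k$ summably, the sets $\{\xi\in\mathcal{C}\colon\tr(\xi(I-P_k))\le\epsilon_k\ \forall k\}$ are closed and, by the standard compactness criterion for sets of states (uniform smallness of tails), compact in $\stahj$; intersecting with the closed set $\mathcal{C}$ keeps them compact. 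Their complements have uniformly small $\mu_n$-measure.

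By Prokhorov's theorem a subsequence converges weakly to some $\mu\in\prm(\mathcal{C})$. For every compact operator $K$, the function $\xi\mapsto\tr(\xi K)$ is bounded and continuous on $\mathcal{C}$, so $\tr(b(\mu_n)K)\to\tr(b(\mu)K)$. Hence $\tr(DK)=\tr(b(\mu)K)$ for all $K\in\cphj$. Because compact operators separate trace class operators (Fact~\ref{norsub}), this forces $D=b(\mu)$.

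The main obstacle is this closedness step, specifically establishing the tightness and the compactness of the tail-controlled sets. The approach is to invoke the known characterization of relatively compact subsets of $\stahj$, namely uniform control of $\tr(\xi(I-P))$ over finite-rank projections $P$. This is the route used by Holevo, Shirokov and Werner, whose argument can be followed here.
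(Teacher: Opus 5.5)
Your argument is correct and follows essentially the paper's route. The paper proves the lemma only by citing Lemma~1 of Holevo, Shirokov and Werner, and what you write out is the standard argument behind that result: Hahn--Banach to place every barycenter in $\clco(\mathcal{C})$; for the converse, closedness of the set of barycenters via tightness of $\{\mu_n\}$ (from the compactness criterion for sets of states plus Markov's inequality), then Prokhorov's theorem and identification of the limit. As a minor simplification, in the last step you can test against every $K\in\bophj$, since $\xi\mapsto\tr(\xi K)$ is already bounded and continuous on $\mathcal{C}$; this makes Fact~\ref{norsub} unnecessary.
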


\begin{proof}
This is a well-known result; see Lemma~{1} of~\cite{HoShiWe} (also see~\cite{HoShiWe-bis,Holevo-EBC}).
\end{proof}

\begin{lemma} \label{lemprem}
Let $(\mesp,\mu)$ be a Borel probability measure space, and let $f\colon\mesp\rightarrow\trchj$
be a measurable function such that $f(\mesp)\subset\nb(\sta,\stajj)$. Then, $f$ is Bochner-integrable
and
\begin{equation} \label{intego}
\int_\mesp f(x)\dodici\de\mu(x)\subset\clco(f(\mesp))\subset\clco(\nb(\sta,\stajj))=\sesta \fin ,
\end{equation}
where, on the lhs, a Bochner integral of $\trchj$-valued functions is understood. Analogously, let
$g\colon\mesp\rightarrow\thptj$ be a measurable function such that $g(\mesp)\subset\nb(\sta,\stajj)$;
then, $g$ is Bochner-integrable and
\begin{equation}  \label{integp}
\int_\mesp g(x)\dodici\de\mu(x)\subset\clcop(g(\mesp))\subset\clcop(\nb(\sta,\stajj))=\csta \fin ,
\end{equation}
where, on the lhs, a Bochner integral of $\thptj$-valued functions is understood. Moreover, if
$f(x)=g(x)$, for $\mu$-almost all $x\in\mesp$, then the integrals in~\eqref{intego} and~\eqref{integp}
are equal; i.e., they converge to the same density operator on $\hhjj$.
\end{lemma}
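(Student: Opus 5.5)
The plan is to dispose of Bochner integrability first, then prove the two inclusion chains by a separation argument, and finally compare the two integrals through the immersion map $\imme$.

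\emph{Integrability.} Every element of $\nb(\sta,\stajj)$ has $\|\rho\otimes\sigma\ttrn=\|\rho\otimes\sigma\ptrn=1$. Hence $f$ and $g$ are norm-bounded measurable functions with values in the separable Banach spaces $\trchj$ and $\thptj$ (Corollary~\ref{corsep}), and $\mu$ is a probability measure. By the standard fact recalled at the start of Subsection~\ref{barycentric} (Proposition~1.2.2 of~\cite{Hytonen}), both functions are Bochner-integrable.

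\emph{Barycenter in the closed convex hull.} I would use Hahn--Banach separation. Let $K\defi\clco(f(\mesp))$, which is a closed convex subset of $\trchj$. Suppose $\int f\,\de\mu\notin K$. Then there are a functional $\Gamma\in\trchj^\ast$ and some $c\in\erre$ such that
\[
\repa\,\Gamma\Big(\int f\,\de\mu\Big)>c\ge\repa\,\Gamma(\xi)\quad\text{for all }\xi\in K .
\]
Bounded functionals commute with Bochner integrals, so
\[
\repa\,\Gamma\Big(\int f\,\de\mu\Big)=\int\repa\,\Gamma(f(x))\,\de\mu(x)\le c ,
\]
because $f(x)\in K$ for every $x$ and $\mu(\mesp)=1$. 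This is a contradiction. The inclusion $\clco(f(\mesp))\subset\clco(\nb(\sta,\stajj))=\sesta$ then follows from monotonicity of closed convex hulls and Definition~\ref{defsest}. The same argument carried out in $\thptj$, with the $\ptrnor$-closed convex hull, gives $\int g\,\de\mu\in\clcop(g(\mesp))\subset\clcop(\nb(\sta,\stajj))=\csta$ by Definition~\ref{defsest-bis}. As an alternative I could invoke the classical fact that the Bochner integral of a function with values in a closed convex set $K$, taken with respect to a probability measure, lies in $K$; the separation argument above is exactly its proof.

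\emph{Equality of the integrals.} The one step needing care is that $f$ and $g$ take values in different Banach spaces. I would apply the bounded linear injection $\imme\colon\thptj\rightarrow\trchj$ of Proposition~\ref{prvarlin}, which has norm $1$. Bounded linear operators commute with Bochner integrals, so
\[
\imme\Big(\int g\,\de\mu\Big)=\int\imme\circ g\,\de\mu .
\]
Since $\imme$ is the identity on $\trc\altp\trcjj$, we have $\imme\circ g=g=f$ $\mu$-almost everywhere as $\trchj$-valued functions. Changing a Bochner-integrable function on a null set does not change its integral, so $\int\imme\circ g\,\de\mu=\int f\,\de\mu$.

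With the identification of Remark~\ref{remide}, this says the two integrals are the same operator. It is a density operator: it lies in $\csta\subset\hstahj\subset\stahj$, or, alternatively, in $\sesta\subset\stahj$ since $\stahj$ is closed and convex.
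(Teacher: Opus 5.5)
Your proposal is correct and follows the paper's proof essentially step by step. It obtains Bochner integrability from norm-boundedness on a probability space, places the barycenter in the closed convex hull, and gets equality of the two integrals by applying the bounded immersion $\imme$ together with Hille's theorem. The only difference is that you spell out the Hahn--Banach separation argument for the barycenter fact, which the paper simply cites (Corollary~8 in Section~2 of Chapter~2 of~\cite{Diestel}, or Proposition~1.2.12 of~\cite{Hytonen}).
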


\begin{proof}
Let us prove the first assertion. Since here integration wrt a finite measure $\mu$ is involved, and the
vector-valued function $f$ is norm-bounded, then $f$ is Bochner-integrable. Moreover, by a well-known
result --- see~\cite{Diestel}, Corollary~{8} in Section~{2} of Chapter~{2}, or Proposition~{1.2.12}
of~\cite{Hytonen} --- the Bochner integral $\int_\mesp f(x)\dodici\de\mu(x)$ is contained in the
$\ttrnor$-closed convex hull $\clco(f(\mesp))$. The proof of the second assertion is analogous.
Let us prove the final assertion. To this end, just recall that the linear immersion map
$\imme\colon\thptj\rightarrow\trchj$ is bounded (Proposition~\ref{prvarlin}), and therefore,
since $f(x)=\imme\tre\big(g(x)\big)=\big(\imme\circ g\big)(x)$, for $\mu$-almost all $x\in\mesp$,
by a classical result of Hille (see Theorem~{1.2.4} of~\cite{Hytonen}), we have that
$\int_\mesp f(x)\dodici\de\mu(x)=\int_\mesp\big(\imme\circ g\big)(x)\dodici\de\mu(x)=
\imme\tre\big(\int_\mesp g(x)\dodici\de\mu(x)\big)$; i.e.\ --- omitting, as usual, the immersion map
$\imme$ --- the integrals in~\eqref{intego} and~\eqref{integp} are equal.
\end{proof}

\begin{theorem} \label{thechase}
A state $\varsigma\in\stahj$ is separable --- i.e., $\varsigma\in\sesta=\sta\otimes\stajj$ --- iff there
is a Borel probability measure $\mu$ on $\nb(\sta,\stajj)=
\big\{\rho\otimes\sigma\colon\rho\in\sta,\ \sigma\in\stajj\big\}$ such that
\begin{equation} \label{bardeca}
\varsigma=\int_{\nb(\sta,\stajj)}\mdodici\rho\otimes\sigma\dodici\de\mu(\rho\otimes\sigma) \fin ;
\end{equation}
also, iff there is a Borel probability measure $\nu$ on $\nb(\pusta,\pustajj)=
\big\{\pi\otimes\varpi\colon\pi\in\pusta,\ \varpi\in\pustajj\big\}$ such that
\begin{equation} \label{bardecb}
\varsigma=\int_{\nb(\pusta,\pustajj)}\mdodici\pi\otimes\varpi\dodici\de\nu(\pi\otimes\varpi) \fin .
\end{equation}
In~\eqref{bardeca} and~\eqref{bardecb}, a Bochner integral of $\trchj$-valued functions --- or,
equivalently (i.e., with the integral converging to the same state $\varsigma$), a Bochner integral
of $\trc\prtp\trcjj$-valued functions --- is understood.
\end{theorem}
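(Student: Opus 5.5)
The plan is to apply Lemma~\ref{lembar} with $\mathcal{C}=\nb(\sta,\stajj)$, respectively $\mathcal{C}=\nb(\pusta,\pustajj)$. By Lemma~\ref{lemtop} both sets are $\ttrnor$-closed in $\trchj$, and they consist of density operators, so they are closed subsets of $\stahj$ and the lemma applies. It gives
\begin{equation*}
\clco(\nb(\sta,\stajj))=\Big\{\textstyle\int_{\nb(\sta,\stajj)}\xi\,\de\mu(\xi)\colon\mu\in\prm(\nb(\sta,\stajj))\Big\} ,
\end{equation*}
and the left-hand side is $\sesta$ by Definition~\ref{defsest}. This already yields the first equivalence~\eqref{bardeca}, with the integrand being the identity map (inclusion) $\xi=\rho\otimes\sigma\mapsto\rho\otimes\sigma$. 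For the second equivalence I would use relation~\eqref{propust}, $\sesta=\clco(\nb(\pusta,\pustajj))$, and apply Lemma~\ref{lembar} to the closed set $\nb(\pusta,\pustajj)$, which gives~\eqref{bardecb}. Note that the Borel structure on each $\mathcal{C}$ is the one induced by its subspace topology; by Remark~\ref{rempol} these are standard Borel spaces, so the measures in question are well defined.

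It remains to justify the final sentence, that the integrals may equivalently be read as Bochner integrals in $\trc\prtp\trcjj$. Here I would invoke Lemma~\ref{lemprem} with $\mesp=\mathcal{C}$, taking $f$ to be the inclusion into $\trchj$ and $g$ the inclusion into $\thptj$, i.e.\ the map~\eqref{mapb} composed with the embedding $\ub(\sta,\stajj)\subset\thptj$. Measurability of $g$ requires it to be Borel measurable for the given Borel structure on $\mathcal{C}$. By Lemma~\ref{lemtop}, the $\ttrnor$- and $\ptrnor$-topologies coincide on $\nb(\sta,\stajj)$, so $g$ is continuous, hence Borel. It is also norm-bounded, since $\|\rho\otimes\sigma\ptrn=1$. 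Because $f=g$ pointwise (modulo $\imme$), Lemma~\ref{lemprem} shows that both integrals exist and coincide, and that the $\thptj$-valued integral lies in $\csta$.

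The only genuinely delicate points are the hypotheses of Lemma~\ref{lembar}, namely closedness of $\mathcal{C}$ and the Borel/measurability setup, together with the topological identification needed for $g$. Both are already handled by Lemma~\ref{lemtop} and Remark~\ref{rempol}, so the main obstacle is only bookkeeping. As a side remark, the $\thptj$-integral representation shows, via Lemma~\ref{lemprem}, that every separable state lies in $\csta$, so $\sesta=\csta$.
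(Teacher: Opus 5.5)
Your proposal is correct and follows essentially the paper's proof. Both arguments take closedness from Lemma~\ref{lemtop}, apply Lemma~\ref{lembar} (with relation~\eqref{propust} for the pure-state version), and identify the two Bochner integrals via continuity of the inclusion into $\thptj$ together with the final assertion of Lemma~\ref{lemprem}. The only cosmetic difference is that you read both directions off the equality in Lemma~\ref{lembar}, whereas the paper uses the first assertion of Lemma~\ref{lemprem} for the converse inclusion.
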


\begin{proof}
By the final assertion of Lemma~\ref{lemtop}, $\nb(\sta,\stajj)$ is a closed subset of $\trchj$.
Therefore, we can apply Lemma~\ref{lembar}, with $\mathcal{C}=\nb(\sta,\stajj)$, and conclude that
if a state $\varsigma\in\stahj$ is separable --- i.e., if $\varsigma\in\sesta=\clco(\nb(\sta,\stajj))$
--- then it is of the form~\eqref{bardeca}, for some Borel probability measure $\mu$ on $\nb(\sta,\stajj)$,
where the integral is a Bochner integral of $\trchj$-valued functions. Conversely, if $\varsigma\in\stahj$
is of the form~\eqref{bardeca} --- where a Bochner integral of $\trchj$-valued functions is understood ---
then, by the first assertion of Lemma~\ref{lemprem}, it is contained in $\ttrnor$-closed convex hull
$\clco(\nb(\sta,\stajj))=\sesta$. Let us prove that the integral on the rhs of~\eqref{bardeca} can be
regarded as a Bochner integral of $\trc\prtp\trcjj$-valued functions, as well. Indeed, by the first
assertion of Lemma~\ref{lemtop}, the map
\begin{equation} \label{mapd}
\nb(\sta,\stajj)\ni \rho\otimes\sigma\mapsto\rho\otimes\sigma\in\trc\prtp\trcjj
\end{equation}
is continuous --- hence, \emph{a fortiori}, (strongly) $\mu$-measurable --- and
\begin{equation}
\int_{\nb(\sta,\stajj)}\mdodici\|\rho\otimes\sigma\ptrn\dodici\de\mu(\rho\otimes\sigma)
=\int_{\nb(\sta,\stajj)}\mdodici\|\rho\otimes\sigma\ttrn\dodici\de\mu(\rho\otimes\sigma)=1<\infty \fin .
\end{equation}
Hence, it is Bochner-integrable. Moreover, by the final assertion Lemma~\ref{lemprem}, the Bochner integral
in~\eqref{bardeca} converges --- wrt both the norms $\ttrnor$ and $\ptrnor$ --- to a \emph{unique} state
$\varsigma$.

Analogously, by Lemma~\ref{lemtop}, $\sestap=\nb(\pusta,\pustajj)$ is a closed subset of $\trchj$,
and, restricting the domain of the map~\eqref{mapd}, we see that the mapping
\begin{equation}
\nb(\pusta,\pustajj)\ni\pi\otimes\varpi\mapsto\pi\otimes\varpi\in\trc\prtp\trcjj
\end{equation}
is continuous too. Then arguing as above, we conclude that
\begin{equation}
\varsigma\in\sesta=\clco(\nb(\sta,\stajj))=\clco(\nb(\pusta,\pustajj))
\end{equation}
--- where we have used relation~\eqref{propust} --- iff it is of the form~\eqref{bardecb}, where,
once again, the integral can be regarded as a Bochner integral of $\trchj$-valued functions or,
equivalently, as a Bochner integral of $\trc\prtp\trcjj$-valued functions. Once again, by
Lemma~\ref{lemprem}, the Bochner integral in~\eqref{bardecb} converges --- wrt both the norms
$\ttrnor$ and $\ptrnor$ --- to a \emph{unique} state $\varsigma$.
\end{proof}

We will call the general expression~\eqref{bardeca} --- or~\eqref{bardecb} --- of a separable state
$\varsigma\in\sesta$ a \emph{barycentric decomposition} of $\varsigma$. It is, however, worth observing
that, by Theorem~\ref{thechase}, a separable state also admits a decomposition that may be regarded as
a direct generalization of the (positive) decomposition of a discretely separable state.

\begin{corollary} \label{coronint}
There exists a Borel isomorphism
\begin{equation}
[0,1]\ni x\mapsto\big(\rho\otimes\sigma\big)(x)\in\nb(\sta,\stajj) \fin .
\end{equation}
Given any such a (measurable) map, a state $\varsigma\in\stahj$ is separable iff
it is of the form
\begin{equation} \label{bardeca-bis}
\varsigma=\int_0^1 \big(\rho\otimes\sigma\big)(x)\dodici\de\zeta(x) \fin ,
\end{equation}
for some Borel probability  measure $\zeta$ on the interval $[0,1]$. Here, a Bochner integral of
$\trchj$-valued functions --- or, equivalently, a Bochner integral of $\trc\prtp\trcjj$-valued
functions --- is understood.

Analogously, and with the same twofold possible interpretation of the Bochner integral,
there is a Borel isomorphism
\begin{equation}
[0,1]\ni x\mapsto\big(\pi\otimes\varpi\big)(x)\in\nb(\pusta,\pustajj) \fin ,
\end{equation}
and, given any such a map, a state $\varsigma\in\stahj$ is separable iff
it is of the form
\begin{equation} \label{bardecb-bis}
\varsigma=\int_0^1\big(\pi\otimes\varpi\big)(x)\dodici\de\xi(x) \fin .
\end{equation}
for some a Borel probability measure $\xi$ on $[0,1]$.
\end{corollary}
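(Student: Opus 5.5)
The plan is to combine Theorem~\ref{thechase} with Kuratowski's theorem (Fact~\ref{faposp}) by transporting measures along a Borel isomorphism. By Remark~\ref{rempol}, $\nb(\sta,\stajj)$ and $\nb(\pusta,\pustajj)$ are closed subsets of the Polish space $\trchj$, hence Polish and therefore standard Borel spaces. I will check that both sets have the cardinality of the continuum. Indeed, $\dim(\hh),\dim(\jj)\ge 2$, so $\pusta$ is uncountable. Moreover, by Fact~\ref{factdenso}, $(\pi,\varpi)\mapsto\pi\otimes\varpi$ is injective, so each set is uncountable. On the other hand, each is a subset of the separable metric space $\trchj$, which has cardinality at most the continuum. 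Fact~\ref{faposp} then gives Borel isomorphisms $[0,1]\ni x\mapsto(\rho\otimes\sigma)(x)\in\nb(\sta,\stajj)$ and $[0,1]\ni x\mapsto(\pi\otimes\varpi)(x)\in\nb(\pusta,\pustajj)$. This settles the existence claims.

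Next, fix such an isomorphism $\beta\colon[0,1]\rightarrow\nb(\sta,\stajj)$. Regarded as a map into $\trchj$ (or into $\thptj$, via Lemma~\ref{lemtop}), $\beta$ is Borel measurable, so it is strongly measurable by Pettis' theorem because both target spaces are separable (Corollary~\ref{corsep}). Its values have norm $1$, so $\beta$ is Bochner integrable against any probability measure $\zeta$ on $[0,1]$.

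For the direction ``$\Leftarrow$'', take $\varsigma=\int_0^1\beta\,\de\zeta$. Lemma~\ref{lemprem}, applied with $\mesp=[0,1]$ and $f=\beta$, gives $\varsigma\in\clco(\nb(\sta,\stajj))=\sesta$. The same lemma shows that the $\trchj$-valued and $\thptj$-valued integrals coincide.

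For the direction ``$\Rightarrow$'', Theorem~\ref{thechase} gives $\varsigma=\int_{\nb(\sta,\stajj)}\xi\,\de\mu(\xi)$ for some $\mu$. I set $\zeta\defi\mu\circ\beta$, that is, $\zeta(E)=\mu(\beta(E))$. This is a Borel probability measure on $[0,1]$ because $\beta$ maps Borel sets to Borel sets (Fact~\ref{faposp}) and is bijective. Then $\mu$ is the pushforward $\beta_\ast\zeta$. The change-of-variables formula for Bochner integrals gives
\[
\int_0^1\beta(x)\,\de\zeta(x)=\int_{\nb(\sta,\stajj)}\xi\,\de(\beta_\ast\zeta)(\xi)=\varsigma .
\]
I will verify this formula on simple functions, which is routine, or equivalently by pairing with bounded functionals and using the scalar change of variables together with Hille's theorem. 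The pure-state case is identical, using the measure $\nu$ from \eqref{bardecb} instead of $\mu$.

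The main obstacle is only the cardinality check needed to apply Kuratowski's theorem, namely making sure that both sets are uncountable standard Borel spaces; everything else is a transfer of measures.
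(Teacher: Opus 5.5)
Your proposal is correct and follows the paper's proof. Both arguments apply Kuratowski's theorem to the Polish spaces $\nb(\sta,\stajj)$ and $\nb(\pusta,\pustajj)$, transport Borel probability measures between $[0,1]$ and these spaces by push-forward along the Borel isomorphism, and then combine the change-of-variables formula with Theorem~\ref{thechase}; for the converse direction you call Lemma~\ref{lemprem} directly, which is exactly how that half of Theorem~\ref{thechase} is proved. One small point: ``uncountable and of cardinality at most the continuum'' yields ``the power of the continuum'' only through the trichotomy for standard Borel spaces in Fact~\ref{faposp}, and that trichotomy already gives the conclusion from uncountability alone, so your upper bound is unnecessary.
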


\begin{proof}
Since $\nb(\sta,\stajj)$ and $\nb(\pusta,\pustajj)$ are Polish spaces (Remark~\ref{rempol}),
and they both have the power of the continuum (for $\dim(\hhjj)\ge 2$), they are
Borel isomorphic to the unit interval $[0,1]$ (Fact~\ref{faposp}). Now, let
$[0,1]\ni x\mapsto\varphi(x)\equiv\big(\rho\otimes\sigma\big)(x)\in\nb(\sta,\stajj)$
be any Borel isomorphism. For every Borel probability measure $\zeta$ on $[0,1]$, by the
change-of-variables formula (see, e.g., Proposition~{1.2.6} of~\cite{Hytonen}) we have that
\begin{equation}
\int_0^1 \big(\rho\otimes\sigma\big)(x)\dodici\de\zeta(x)
=\int_{\nb(\sta,\stajj)}\mdodici\rho\otimes\sigma\dodici\de\mu(\rho\otimes\sigma)
\end{equation}
where $\mu=\varphi_{\ast}\tre\zeta$; i.e., $\mu$ is the push-forward of the measure $\zeta$:
$\mu(\bose)\defi\zeta(\varphi^{-1}(\bose))$, for every Borel subset $\bose$
of $\nb(\sta,\stajj)$. Note that, since $\varphi$ is Borel isomorphism, the push-forward
$\varphi_{\ast}$ establishes a bijection between the set of all Borel probability measures on
the interval $[0,1]$ and the set of all Borel probability measures on $\nb(\sta,\stajj)$.
Then, by Theorem~\ref{thechase}, the first assertion follows, and the proof of the second
assertion is analogous.
\end{proof}

\begin{example}  \label{exnonds}
Let us suppose that $\hh=\jj=\elleds\equiv\elledsc$, where $\lem$ is the Lebesgue measure on the
interval $[0,1]$ (i.e., $\de\lem(x)=\de x$), and let $\big\{\phi_k=\eik\colon k\in\itg\big\}$ be the
standard trigonometric basis in $\elleds$, so that the Fourier coefficients $\{c_k(\psi)\}_{k\in\itg}$
of any $\psi\in\elleds$ are given by
$c_k(\psi)\defi\langle\phi_k,\psi\rangle=\int_0^1 \eikx\sei \psi(x) \dodici \de x$, $k\in\itg$.
Now, denoting by $\toro$ the circle group, the mapping
\begin{equation}
\toro\ni z\equiv\eit\mapsto V(z)\defi\sum_{k\in\itg} z^{-k}\sei\hphk
\end{equation}
--- where $\hphk\equiv\phk$, and $V(z)$ is a unitary operator on $\elleds$ (here defined via
its spectral decomposition) --- is a strongly continuous unitary representation of the compact
abelian group $\toro$. Precisely, $\toro\ni z\mapsto V(z)$ is the \emph{regular representation}~\cite{Folland-AA}
of $\toro$ --- the orthogonal sum of exactly one copy of each unitary character of $\toro$ ---
and, in fact, we have that $\big(V\big(\eit\big)\psi\big)(x)=\psi([x-\vartheta])$, where
$\vartheta\in[0,1)$, $[x-\vartheta]\in[0,1]$ and $[x-\vartheta]\equiv x-\vartheta\pmod{1}$. Consider the
associated (diagonal or) \emph{inner tensor product representation}~\cite{Folland-AA} $V\otimes V$ of $\toro$ in
$\hhjj=\elleds\otimes\elleds$. Given any pair of pure states $\pi=\hpu\equiv\hhpu\in\pusta$
and $\varpi=\hpd\in\pustajj$, we can define a separable state $\varsigma\in\sesta$ by setting
\begin{equation} \label{nonds}
\varsigma\defi\int_0^1 \Big(\act\big(\eit\big)(\pi\otimes\varpi)\Big) \dieci \de\vartheta \fin ,
\end{equation}
where, for every trace class operator $A\in\trchj$,
\begin{equation}
\act\big(\eit\big)\sei A \defi
\big(V\otimes V\big)\big(\eit\big)^\ast A \otto\big(V\otimes V\big)\big(\eit\big)\in\trchj \fin ;
\end{equation}
thus, in particular, $\act\big(\eit\big)(\pi\otimes\varpi)\in\nb(\pusta,\pustajj)$.
In formula~\eqref{nonds}, a Bochner integral of $\trchj$-valued functions (equivalently, by the final
assertion of Lemma~\ref{lemprem}, a Bochner integral of $\trc\prtp\trcjj$-valued functions) is understood.
Note that --- by Proposition~{4.1} of~\cite{Aniello_BM} --- the map $\toro\ni z\mapsto\act(z)$ is a
\emph{strongly continuous} isometric representation of the circle group $\toro$ in the Banach space $\trchj$.
As a consequence, the mapping (which depends on both the representation $V$ and the pure state $\pi\otimes\varpi$)
\begin{equation}
[0,1]\ni x\mapsto\varphi(x)\equiv\Big(\act\big(\eix\big)(\pi\otimes\varpi)\Big)\in\nb(\pusta,\pustajj)
\end{equation}
--- where $\nb(\pusta,\pustajj)\subset\trchj$ is endowed with the subspace topology --- is continuous;
in particular, it is measurable. Therefore, one can define the \emph{push-forward measure} $\nu\defi\varphi_{\ast}\tre\lem$
($\de\lem(x)=\de x$), that is a Borel probability measure on $\nb(\pusta,\pustajj)$ (depending on $V$ and
on the state $\pi\otimes\varpi$), and by the change-of-variables formula, we obtain the barycentric decomposition
\begin{equation} \label{nonds-bis}
\varsigma=\int_{\nb(\pusta,\pustajj)}\mdodici\pi\otimes\varpi\dodici\de\nu(\pi\otimes\varpi) \fin ,
\quad \nu=\varphi_{\ast}\tre\lem \fin , \ \varphi\colon [0,1]\rightarrow\nb(\pusta,\pustajj) \fin .
\end{equation}
It can be proved that if, for every $k\in\itg$, $c_k(\psi_1)\defi\langle\phi_k,\psi_1\rangle\neq 0\neq c_k(\psi_2)$
(i.e., if the state vectors $\psi_1\in\hh$ and $\psi_2\in\jj$ have non-vanishing Fourier coefficients), then
the separable state $\varsigma$ --- defined by~\eqref{nonds} with $\pi=\hpu$, $\varpi=\hpd$, or, equivalently,
by~\eqref{nonds-bis} --- is \emph{not} discretely separable (see Theorem~{3} of~\cite{HoShiWe}). Therefore,
if $\dim(\hh)=\dim(\jj)=\infty$, then $\dsta\subsetneq\sesta$.
\end{example}

Let us derive some further consequences of Theorem~\ref{thechase}.

\begin{corollary} \label{corsestap}
A separable state $\varsigma\in\sesta$ is pure --- i.e., a rank-one projection --- iff it is of
the form~\eqref{bardecb}, where the probability measure $\nu$ is a Dirac measure or, equivalently, such that $\supp(\nu)$ is a
singleton set. Therefore, the set $\sestap$ of all separable pure states on $\hh\otimes\jj$ is
given explicitly by
\begin{equation}
\sestap=\nb(\pusta,\pustajj)=
\big\{\pi\otimes\varpi\colon\pi\in\pusta,\ \varpi\in\pustajj\big\} ,
\end{equation}
and $\sesta=\clco(\sestap)$. Moreover, denoting by $\ext(\sesta)$ the set of all extreme points of
the convex set $\sesta$, we have that
\begin{equation} \label{rextre}
\ext(\sesta)=\nb(\pusta,\pustajj)=\sestap \fin .
\end{equation}
\end{corollary}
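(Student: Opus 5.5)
We begin with the characterization of pure separable states, since it drives the rest. One direction is immediate: if $\nu=\delta_{\pi\otimes\varpi}$ in~\eqref{bardecb}, then $\varsigma=\pi\otimes\varpi$. This is a rank-one projection, because $(\pi\otimes\varpi)^2=\pi\otimes\varpi$ and $\tr(\pi\otimes\varpi)=1$. By Fact~\ref{famesp}, applied to the second countable (indeed Polish, Remark~\ref{rempol}) space $\nb(\pusta,\pustajj)$, a measure with singleton support is a Dirac measure. So the two formulations of the condition on $\nu$ agree.

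For the converse, let $\varsigma=P\in\pustahj$ be separable, and use Theorem~\ref{thechase} to write $P=\int\pi\otimes\varpi\,\de\nu$. Take any unit vector $\va$ with $P=\vaa$ and test against the projection $I-P$. This gives
\[
0=\tr\big((I-P)P\big)=\int\tr\big((I-P)(\pi\otimes\varpi)\big)\,\de\nu ,
\]
where exchanging trace and Bochner integral is justified because $\tr(B\,\cdot)$ is a bounded functional, by Hille's theorem. The integrand is continuous and nonnegative. It therefore vanishes $\nu$-a.e., and by continuity it vanishes on all of $\supp(\nu)$, which has full measure by Fact~\ref{famesp}.

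Now fix $\pi\otimes\varpi$ in the support. The condition $\tr((I-P)(\pi\otimes\varpi))=0$ says the unit vector $\phi\otimes\psi$ spanning the range of $\pi\otimes\varpi$ satisfies $\|(I-P)(\phi\otimes\psi)\|=0$. Hence $\phi\otimes\psi$ lies in the range of $P$, so $\pi\otimes\varpi=P$. The support is thus the singleton $\{P\}$, so $\nu=\delta_P$ and $P\in\nb(\pusta,\pustajj)$. This proves $\sestap\subset\nb(\pusta,\pustajj)$. The reverse inclusion is trivial, since product pure states are pure and separable. Then $\sesta=\clco(\sestap)$ is exactly relation~\eqref{propust}.

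For~\eqref{rextre}, first we show $\nb(\pusta,\pustajj)\subset\ext(\sesta)$. Each $\pi\otimes\varpi$ is an extreme point of $\stahj$ by Fact~\ref{extdens}, and it lies in the convex subset $\sesta$. Fact~\ref{factext} then makes it extreme in $\sesta$.

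The reverse inclusion is the main obstacle: an extreme point of $\sesta$ need not obviously be pure. Let $\varsigma\in\ext(\sesta)$ with barycentric measure $\nu$ on $\nb(\pusta,\pustajj)$. If $\supp(\nu)$ contained two distinct points, we could pick a Borel set $\bose$ with $0<\nu(\bose)<1$. A suitable choice is a small open ball around one support point that avoids the other. Splitting along it gives
\[
\varsigma=\nu(\bose)\,\varsigma_1+\big(1-\nu(\bose)\big)\,\varsigma_2 ,
\]
where $\varsigma_1,\varsigma_2$ are the barycenters of the normalized restrictions of $\nu$ to $\bose$ and to its complement. Both lie in $\sesta$ by Lemma~\ref{lemprem}. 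Extremality forces $\varsigma_1=\varsigma_2=\varsigma$.

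To reach a contradiction, we shrink the ball around a support point $\pi_0\otimes\varpi_0$. The barycenter of the restriction to a ball of radius $r$ lies in the closed convex hull of that ball, so it is within $r$ of $\pi_0\otimes\varpi_0$. Since every such barycenter equals $\varsigma$, letting $r\to0$ gives $\varsigma=\pi_0\otimes\varpi_0$. This holds for every support point, contradicting the assumption that the support has two distinct points. Hence $\supp(\nu)$ is a singleton, $\nu$ is a Dirac measure, and $\varsigma$ is a pure product state.

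In fact the shrinking-ball argument alone suffices: applied to any single support point it directly gives $\varsigma=\pi_0\otimes\varpi_0$, so we can present it that way and skip the two-point contradiction.
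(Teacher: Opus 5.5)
Your proof is correct, and in one respect it is more complete than the paper's. The paper treats both halves with a single splitting argument. From two distinct points of $\supp(\nu)$ it takes disjoint open neighborhoods $\mathcal{O},\mathcal{O}'$, splits $\nu$ along $\mathcal{O}$ and $\mathcal{O}^{\mathsf{c}}$ to write $\varsigma=\epsilon D_1+(1-\epsilon)D_2$ with $\epsilon\in(0,1)$ and $D_1,D_2\in\sesta$, and then declares a contradiction with extremality (in $\stahj$ for the purity claim, in $\sesta$ for the extreme-point claim). As written, it never checks that $D_1\neq D_2$, which such a contradiction requires. For a pure $\varsigma$ one in fact always has $D_1=D_2=\varsigma$. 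In the non-pure case, an unlucky choice of $\mathcal{O}$ (for instance a union of small balls around several atoms of $\nu$ whose average is $\varsigma$) also gives $D_1=D_2$.

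Your route differs in both halves. For purity you test the barycentric decomposition against $I-P$. The map $\pi\otimes\varpi\mapsto\tr((I-P)(\pi\otimes\varpi))=\|(I-P)(\phi\otimes\psi)\|^2$ is nonnegative and continuous, so it vanishes on $\supp(\nu)$. Since $\ran P$ is one-dimensional, every support point equals $P$. This gives $\nu=\delta_P$ directly, with no splitting at all.

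For the extreme points you keep the splitting, but use extremality only to obtain $\varsigma_1=\varsigma$, and then localize. The barycenter of the normalized restriction of $\nu$ to a trace-norm ball of radius $r$ lies in the closed convex hull of that ball, hence within $r$ of its center. So $\varsigma$ coincides with every support point. This localization is exactly the ingredient missing from the paper's argument. It would also prove the purity claim, since pure states are extreme in $\stahj$; your $I-P$ argument is shorter but works only in the pure case.

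Two small points:
\begin{itemize}
\item In your single-point version, treat separately the trivial case $\nu(B_r)=1$, where $\varsigma$ is itself the barycenter of the restricted measure and no splitting is needed.
\item The implication from ``Dirac'' to ``singleton support'' uses that the space is Hausdorff (Remark~\ref{remdir}), not only Fact~\ref{famesp}.
\end{itemize}
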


\begin{proof}
First note that a probability measure $\nu$ on $\nb(\pusta,\pustajj)$ is a Dirac measure iff
$\supp(\nu)$ is a singleton set (by Remark~\ref{remdir} and Fact~\ref{famesp}, because $\supp(\nu)$
is a second countable Hausdorff space). Therefore, given any $\varsigma\in\sesta$, let us prove that
$\varsigma$ is pure iff it is of the form~\eqref{bardecb}, where $\nu=\delta_D$, with
$\{D\}=\supp(\nu)\subset\nb(\pusta,\pustajj)$. Clearly, if the latter condition holds, then
$\varsigma=D=\pi_0\otimes\varpi_0$, for some $\pi_0\in\pusta$ and $\varpi_0\in\pustajj$. To prove
the reverse implication, suppose that $\varsigma\in\sestap$ and $\supp(\nu)$ is \emph{not} a singleton set.
Then, since $\supp(\nu)\neq\varnothing$ (again by Fact~\ref{famesp}, $\nb(\pusta,\pustajj)$ being second countable),
there are at least \emph{two different pure states} of the form $\pi_1\otimes\varpi_1$ and $\pi_2\otimes\varpi_2$
in $\supp(\nu)$. It follows that there exist \emph{disjoint open neighborhoods} $\neigh$, $\neighp$ of
$\pi_1\otimes\varpi_1$ and $\pi_2\otimes\varpi_2$, respectively, so that $\nu(\neigh),\nu(\neighp)>0$
and $\neighp\subset\neighc$ (thus, for the closed set $\neighc$, $\nu(\neighc)>0$ too). Hence, we have:
\begin{align}
\varsigma
& =
\int_{\nb(\pusta,\pustajj)}\mdodici\big(\chne(\pi\otimes\varpi) + \chnec(\pi\otimes\varpi)\big)
\sei\pi\otimes\varpi\dodici\de\nu(\pi\otimes\varpi)
\nonumber\\
& =
\int_{\neigh}\msei \pi\otimes\varpi \dodici\de\nu(\pi\otimes\varpi) +
\int_{\neighc}\msei \pi\otimes\varpi \dodici\de\nu(\pi\otimes\varpi)
\nonumber\\ \label{twoints}
& =
\epsilon\tre D_1 + (1-\epsilon)\tre D_2  , \quad \epsilon\in(0,1) \fin , \
D_1, D_2\in\sesta \fin .
\end{align}
In the last line of~\eqref{twoints}, $0<\epsilon=\nu(\neigh)<1$, $0<1-\epsilon=\nu(\neighc)<1$ and
\begin{equation}
D_1=\nu(\neigh)^{-1}\int_{\neigh}\msei \pi\otimes\varpi \dodici\de\nu(\pi\otimes\varpi)
=\int_{\nb(\pusta,\pustajj)}\mdodici\pi\otimes\varpi\dodici\de\mu_1(\pi\otimes\varpi) \fin ,
\end{equation}
\begin{equation}
D_2=\int_{\nb(\pusta,\pustajj)}\mdodici \pi\otimes\varpi \dodici\de\mu_2(\pi\otimes\varpi) \fin ,
\end{equation}
where $\mu_1$, $\mu_2$ are the Borel probability measures on $\nb(\sta,\stajj)$ determined by
\begin{equation}
\de\mu_1(\pi\otimes\varpi)=\nu(\neigh)^{-1}\tre\chne(\pi\otimes\varpi)\sei\de\nu(\pi\otimes\varpi) \fin , \
\de\mu_2(\pi\otimes\varpi)=\nu(\neighc)^{-1}\tre\chnec(\pi\otimes\varpi)\sei\de\nu(\pi\otimes\varpi) \fin .
\end{equation}
Therefore, we find a contradiction, because $\varsigma$ was assumed to be an extreme point of
$\stahj$, and we must conclude that, if $\varsigma$ is a pure state, then the nonempty set $\supp(\nu)$
must be a singleton.

Let us now prove the second assertion. First note that, since $\sesta\subset\stahj$, then
$\ext(\stahj)\cap\sesta\ifed\sestap\subset\ext(\sesta)$ (by Fact~\ref{factext}). Next,
suppose that $\varsigma\in\sesta\setminus\sestap$, and consider a barycentric decomposition
of $\varsigma$ of the form~\eqref{bardecb}. Then, $\nu$ cannot be a Dirac measure and there
must be at least \emph{two different pure states} of the form $\pi_1\otimes\varpi_1$ and
$\pi_2\otimes\varpi_2$ in $\supp(\nu)\neq\varnothing$. Arguing as above, we find that
$\varsigma= \epsilon\tre D_1 + (1-\epsilon)\tre D_2$, $\epsilon\in(0,1)$,
$D_1, D_2\in\sesta$; i.e., $\varsigma$ is not contained in $\ext(\sesta)$ --- hence,
$\ext(\sesta)\subset\sestap$, as well --- so that, actually, $\ext(\sesta)=\sestap$.
\end{proof}

We are now able to generalize the final assertion of Proposition~\ref{procsta}: The cross separable
states wrt to the bipartition $\hhjj$ are precisely the separable states \emph{tout court} in the
genuinely infinite-dimensional setting ($\dim(\hh)=\dim(\jj)=\infty$), as well; in fact, the
following result holds.

\begin{corollary} \label{corconta}
The set $\sesta=\sta\otimes\stajj$ of all separable states is contained in the set $\hstahj$
of cross states and coincides with the set $\csta=\sta\prtp\stajj$ of cross separable states;
in fact, we have:
\begin{equation}
\pusta\otimes\pustajj=\sta\otimes\stajj=\sta\prtp\stajj
=\pusta\prtp\pustajj\subset\hstahj \fin .
\end{equation}
\end{corollary}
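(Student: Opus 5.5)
The inclusion $\sta\prtp\stajj\subset\sta\otimes\stajj$ is already available: it is $\csta\subset\sesta$ from Proposition~\ref{procsta}. The step $\csta\subset\hstahj$ is stated there as well. The equalities $\pusta\otimes\pustajj=\sta\otimes\stajj$ and $\sta\prtp\stajj=\pusta\prtp\pustajj$ are relation~\eqref{reldepu} and its projective analogue from the Proposition following Notation~\ref{notensb}. So the only thing to prove is the reverse inclusion $\sesta\subset\csta$; the inclusion $\sesta\subset\hstahj$ then follows from $\csta\subset\hstahj$.

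For the reverse inclusion I take an arbitrary $\varsigma\in\sesta$ and apply Theorem~\ref{thechase}. This produces a Borel probability measure $\mu$ on $\nb(\sta,\stajj)$ with $\varsigma=\int\rho\otimes\sigma\,\de\mu(\rho\otimes\sigma)$. The key point, already established in that theorem, is that this integral may be read as a Bochner integral of $\thptj$-valued functions with the same value $\varsigma$. I would spell this out as follows. By Lemma~\ref{lemtop} the identity map $g\colon\nb(\sta,\stajj)\to\thptj$ is continuous, hence measurable. It is also norm-bounded, since $\|\rho\otimes\sigma\ptrn=1$, and therefore Bochner-integrable. The final assertion of Lemma~\ref{lemprem}, which rests on the boundedness of the immersion $\imme$ together with Hille's theorem, shows that the $\thptj$-valued integral coincides with the $\trchj$-valued one, namely $\varsigma$.

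Next I apply the second assertion of Lemma~\ref{lemprem}, relation~\eqref{integp}. A Bochner integral of a $\thptj$-valued function with respect to a probability measure lies in the $\ptrnor$-closed convex hull of its range. Here that gives $\varsigma\in\clcop(\nb(\sta,\stajj))=\csta$. Consequently $\sesta\subset\csta$, so $\sesta=\csta$, and the chain of equalities and the inclusion in $\hstahj$ follow from the facts recalled in the first paragraph.

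The main obstacle is making sure the barycenter in the projective space is the same operator as the barycenter in the trace class, since $\thptj\subsetneq\trchj$ in general. This is handled by Lemma~\ref{lemprem}, using that the homeomorphism of Lemma~\ref{lemtop} makes the $\ptrnor$-valued integrand measurable. Given those earlier results, the proof is a short assembly.
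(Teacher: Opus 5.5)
Your proposal is correct and follows the paper's proof. The key step $\sesta\subset\csta$ is obtained exactly as there: you read the barycentric decomposition of Theorem~\ref{thechase} as a $\thptj$-valued Bochner integral and invoke the second assertion of Lemma~\ref{lemprem}. The only minor difference is that for $\sta\prtp\stajj=\pusta\prtp\pustajj$ you cite the proposition following Notation~\ref{notensb}, which the paper states without proof, whereas the paper rederives it from the pure-state decomposition~\eqref{bardecb} and the first inclusion in~\eqref{integp}; either route is fine.
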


\begin{proof}
We have already shown that $\csta\subset\sesta$ (Proposition~\ref{procsta}), and the reverse
inclusion holds too. In fact, by Theorem~\ref{thechase}, if a state $\varsigma\in\stahj$ is
separable, then it is of the form~\eqref{bardeca}, where the integral converges wrt the
projective norm $\ptrnor$ as well (to the same state), and hence, by the second assertion of
Lemma~\ref{lemprem}, if $\varsigma\in\sesta$, then  it is contained in the $\ptrnor$-closed
convex hull $\clcop(\nb(\sta,\stajj))=\csta$ too. Therefore, recalling relation~\eqref{propust},
\begin{align}
\pusta\otimes\pustajj=\sta\otimes\stajj
& =
\sesta
\nonumber \\
& =
\csta=\sta\prtp\stajj\subset\hstahj \fin .
\end{align}
It remains to prove that $\sta\prtp\stajj=\pusta\prtp\pustajj$. Clearly,
$\sta\prtp\stajj\supset\pusta\prtp\pustajj$. Conversely, by Theorem~\ref{thechase}, every
separable state $\varsigma\in\sesta=\sta\otimes\stajj=\sta\prtp\stajj$ is of the
form~\eqref{bardecb}, where a Bochner integral of $\thptj$-valued functions is
understood; hence, by the first inclusion relation in~\eqref{integp} (Lemma~\ref{lemprem}),
$\sta\prtp\stajj\subset\clcop(\nb(\pusta,\pustajj))\ifed\pusta\prtp\pustajj$ too.
\end{proof}

\begin{corollary} \label{corequi}
For every cross state $D\in\hstahj$, we have that
\begin{equation} \label{nondis-tris}
\ptrntr{D}=\inf \big\{2\tre\alpha-1\ge 1\colon \alpha\tre\don + (1-\alpha)\tre\dtw=D, \
\don,\dtw\in\sesta\big\} \fin .
\end{equation}
\end{corollary}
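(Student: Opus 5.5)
The plan is to compare the infimum in~\eqref{nondis-tris} with the one already established in~\eqref{nondis-bis}, where $\don,\dtw$ range over $\dsta$ instead of $\sesta$. Denote by $I_{\mathsf{ds}}(D)$ and $I_{\mathsf{se}}(D)$ these two infima. Since $\dsta\subset\sesta$, every admissible decomposition for $I_{\mathsf{ds}}(D)$ is admissible for $I_{\mathsf{se}}(D)$. Hence $I_{\mathsf{se}}(D)\le I_{\mathsf{ds}}(D)=\ptrntr{D}$, by~\eqref{nondis-bis}.

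For the reverse inequality, I would take any decomposition $D=\alpha\tre\don+(1-\alpha)\tre\dtw$ with $\don,\dtw\in\sesta$ and $2\alpha-1\ge 1$, i.e.\ $\alpha\ge 1$.
\begin{itemize}
\item By Corollary~\ref{corconta}, $\sesta=\csta\subset\hstahj\subset\thsptjs$, so this is an identity in the selfadjoint cross trace class.
\item By Proposition~\ref{pronose}, $\ptrntr{\don}=\ptrntr{\dtw}=1$.
\item The triangle inequality for $\ptrnortr$ then gives
\[
\ptrntr{D}\le\alpha\tre\ptrntr{\don}+(\alpha-1)\tre\ptrntr{\dtw}=2\alpha-1 .
\]
\end{itemize}
Taking the infimum over all such decompositions yields $\ptrntr{D}\le I_{\mathsf{se}}(D)$.

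Two points need checking. First, the admissible set must be nonempty, so that the infimum is not $+\infty$. This is automatic: every cross state has a decomposition with $\dsta$-components by Proposition~\ref{proaffco} and the preceding discussion. Second, I must make sure Proposition~\ref{pronose} applies to $\sesta$ and not only to $\csta$. Corollary~\ref{corconta} is exactly what supplies this, and it is the only nontrivial ingredient.

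The main obstacle was thus the identification $\sesta=\csta$, which has already been obtained via the barycentric decompositions of Theorem~\ref{thechase}. Everything else is a two-line sandwich argument.
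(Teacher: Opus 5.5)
Your proof is correct and has the same structure as the paper's: the bound $\ptrntr{D}\ge$ (infimum over $\sesta$) follows from $\dsta\subset\sesta$ and~\eqref{nondis-bis}, and the reverse bound from $\sesta=\csta$ (Corollary~\ref{corconta}) together with the triangle inequality for $\ptrnortr$. The only difference is that you quote Proposition~\ref{pronose} to get $\ptrntr{\don}=\ptrntr{\dtw}=1$, whereas the paper re-derives this inline by $\ptrnortr$-approximating $\don,\dtw$ with discretely separable states (Proposition~\ref{procsta}) and an $\epsilon$-argument; your shortcut is legitimate (no circularity) and also avoids the paper's $\epsilon/(2(\alpha-1))$ step, which is undefined at $\alpha=1$.
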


\begin{proof}
Since $\dsta\subset\sesta$, by relation~\eqref{nondis-bis}, for every $D\in\hstahj$ we have that
\begin{equation} \label{nondis-quater}
\ptrntr{D}\ge\inf \big\{2\tre\alpha-1\ge 1\colon \alpha\tre\don + (1-\alpha)\tre\dtw=D, \
\don,\dtw\in\sesta\big\} \fin .
\end{equation}
On the other hand, by Corollary~\ref{corconta}, $\sesta=\sta\otimes\stajj=\sta\prtp\stajj=\csta$.
Hence, by Proposition~\ref{procsta}, $\sesta$ coincides with the $\ptrnortr$-closure of the convex
set $\dsta$ (the projective norm and the Hermitian projective norm being equivalent), so that, if
\begin{equation}
D=\alpha\tre\don(\alpha) + (1-\alpha)\tre\dtw(\alpha) \fin , \quad \alpha\ge 1 \fin , \
\don,\dtw\in\sesta \fin ,
\end{equation}
then, for every $\epsilon>0$, there are some $\don(\alpha,\epsilon),\dtw(\alpha,\epsilon)\in\dsta$
such that
\begin{equation} \label{dsapprox}
\ptrntr{\don(\alpha)-\don(\alpha,\epsilon)}<\frac{\epsilon}{2\alpha} \ \ \mbox{and} \ \
\ptrntr{\dtw(\alpha)-\dtw(\alpha,\epsilon)}<\frac{\epsilon}{2(\alpha-1)} \fin .
\end{equation}
It follows that $\ptrntr{D}=\ptrntr{\alpha\tre\don(\alpha) + (1-\alpha)\tre\dtw(\alpha)}
\le\alpha\tre\ptrntr{\don(\alpha)-\don(\alpha,\epsilon)} + \alpha\tre\ptrntr{\don(\alpha,\epsilon)}
+ (\alpha-1)\tre\ptrntr{\dtw(\alpha)-\dtw(\alpha,\epsilon)}+ (\alpha-1)\tre\ptrntr{\dtw(\alpha,\epsilon)}
<2\tre\alpha -1 +\epsilon$, where, for obtaining the second inequality, we have used relations~\eqref{dsapprox}
and the fact that $\ptrntr{\don(\alpha,\epsilon)}=1=\ptrntr{\dtw(\alpha,\epsilon)}$ (by Proposition~\ref{proaffco},
since $\don(\alpha,\epsilon),\dtw(\alpha,\epsilon)\in\dsta$). By the previous estimate, we conclude that
\begin{equation} \label{nondis-quinques}
\ptrntr{D}\le\inf \big\{2\tre\alpha-1\ge 1\colon \alpha\tre\don + (1-\alpha)\tre\dtw=D, \
\don,\dtw\in\sesta\big\} \fin ,
\end{equation}
as well. Hence, by inequalities~\eqref{nondis-quater} and~\eqref{nondis-quinques}, we see that,
actually, relation~\eqref{nondis-tris} holds true.
\end{proof}

\begin{corollary} \label{corimplise}
For every cross state $D\in\hstahj$ we have that $\ptrntr{D}\ge\|D\ptrn\ge 1$, and
\begin{equation} \label{implise}
\varsigma\in\sesta \implies \ptrntr{\varsigma}=\|\varsigma\ptrn = 1 \fin .
\end{equation}
\end{corollary}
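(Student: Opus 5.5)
The statement has two parts: the chain of inequalities for an arbitrary cross state, and the implication \eqref{implise} for separable states. Both follow by assembling results already established, so the plan is to cite them in the right order.

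\textbf{Step 1 (the chain for any cross state).} Take $D\in\hstahj$. By Proposition~\ref{prcrost}, $D$ lies in $\thsptjs$, so the inequalities \eqref{inenorms} of Remark~\ref{remide} apply and give $\|D\ttrn\le\|D\ptrn\le\ptrntr{D}$. Since $D$ is a density operator, $\|D\ttrn=\tr(D)=1$. Hence $\ptrntr{D}\ge\|D\ptrn\ge 1$. This is also recorded in \eqref{pronorm-quater}.

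\textbf{Step 2 (the implication for separable states).} Let $\varsigma\in\sesta$. By Corollary~\ref{corconta}, $\sesta=\csta$, so $\varsigma$ is a cross separable state. Proposition~\ref{pronose} then gives $\|\varsigma\ptrn=\ptrntr{\varsigma}=1$ directly.

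\textbf{Self-contained backup for Step 2.} If Proposition~\ref{pronose} should be justified from scratch, I would argue as follows.
\begin{itemize}
\item Every finitely separable state $\sum_j p_j(\rho_j\otimes\sigma_j)$ has $\ptrntr{\cdot}\le\sum_j p_j=1$ by the definition \eqref{pronorm-sa}.
\item Combined with the lower bound from Step 1, such a state has Hermitian projective norm exactly $1$, and likewise projective norm exactly $1$.
\item By Definition~\ref{defsest-bis} and Remark~\ref{receq}, $\csta$ is the $\ptrnortr$-closure of these finite mixtures.
\item $\ptrnortr$ is continuous with respect to its own topology, so it equals $1$ on all of $\csta$. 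The same holds for $\ptrnor$, which is equivalent to $\ptrnortr$ on $\thsptjs$ by \eqref{inenorms}.
\end{itemize}

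\textbf{Main obstacle.} The only nontrivial input is the inclusion $\sesta\subset\csta$, that is, the passage from trace-norm closure to projective-norm closure. This was handled in Corollary~\ref{corconta} through the barycentric decomposition of Theorem~\ref{thechase}, with the Bochner integral read in $\thptj$. Once that is cited, both parts of the statement reduce to the norm inequalities above.
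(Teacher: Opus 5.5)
Your proposal is correct and matches the paper's own proof: the chain $\ptrntr{D}\ge\|D\ptrn\ge\|D\ttrn=1$ for cross states, then $\sesta=\csta$ (Corollary~\ref{corconta}) combined with Proposition~\ref{pronose}. The paper also gives a second route, bounding $\ptrntr{\varsigma}$ by the integral of $\ptrntr{\rho\otimes\sigma}=1$ in the barycentric decomposition \eqref{bardeca}, read as a Bochner integral in $\thptj$; your self-contained closure argument for Proposition~\ref{pronose} is likewise sound.
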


\begin{proof}
Indeed, for every $D\in\hstahj$ we have: $\ptrntr{D}\ge\|D\ptrn\ge\|D\ttrn=1$. In particular,
for every state $\varsigma\in\sesta=\csta$ (Corollary~\ref{corconta}), by Proposition~\ref{pronose},
$\|\varsigma\ptrn=\ptrntr{\varsigma}=1$. Alternatively, using the characterization~\eqref{bardeca} of
a separable state, we obtain the estimate
\begin{equation}
1\le\|\varsigma\ptrn\le\ptrntr{\varsigma}\le
\int_{\nb(\sta,\stajj)}\mdodici\ptrntr{\rho\otimes\sigma}\dodici\de\mu(\rho\otimes\sigma)=1 \fin ,
\end{equation}
i.e., $\ptrntr{\varsigma}=\|\varsigma\ptrn=1$. Here, we have used the fact that the integral in~\eqref{bardeca}
converges wrt the projective norm $\ptrnor$ too, and a well-known property of the Bochner integral.
\end{proof}

\begin{corollary} \label{fidiset}
Assume that both $\hh$ and $\jj$ are finite-dimensional: $\enne\equiv\dim(\hhjj)<\infty$.
Then, for every density operator $D\in\stahj$, the following facts are equivalent:

\begin{enumerate}[label=\tt{(S\arabic*)}]

\item \label{fisepa}
$D\in\fsta$, i.e., $D$ is finitely separable;

\item \label{fisepb}
$D\in\dsta$, i.e., $D$ is discretely separable;

\item \label{fisepc}
$D\in\sesta$, i.e., $D$ is separable;

\item \label{fisepd}
$\ptrntr{D}= 1$;

\item \label{fisepe}
$\|D\ptrn = 1$.

\end{enumerate}

Therefore, if $\dim(\hhjj)<\infty$, we have:
\begin{align}
\fsta=\dsta=\sesta
& =
\big\{D\in\stahj\colon \ptrntr{D}= 1\big\}
\nonumber \\
&
=
\big\{D\in\stahj\colon \|D\ptrn= 1\big\} .
\end{align}

Moreover, in this case, every separable density operator $\varsigma\in\sesta$ admits a finite convex
decomposition of the form
\begin{equation} \label{finconvde}
\varsigma=\sum_{k=1}^\mathsf{K}\tre p_k\tre (\pi_k\otimes\varpi_k) \fin , \
\mbox{where $\mathsf{K}\le\enne^2$, $p_k>0$ $\big(\sum_{k=1}^\mathsf{K}\tre p_k=1\big)$,
$\pi_k\in\pusta$, $\varpi_k\in\pustajj$} \fin .
\end{equation}
\end{corollary}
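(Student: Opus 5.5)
The chain of implications will be \ref{fisepa}$\implies$\ref{fisepb}$\implies$\ref{fisepc}$\implies$\ref{fisepd}$\implies$\ref{fisepe}$\implies$\ref{fisepa}. The first two steps hold by definition. A finite convex combination of product states is a special case of~\eqref{discdec}, so $\fsta\subset\dsta$. By Proposition~\ref{procsta}, $\dsta\subset\csta\subset\sesta$.

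For \ref{fisepc}$\implies$\ref{fisepd}, apply Corollary~\ref{corimplise}: every $\varsigma\in\sesta$ satisfies $\ptrntr{\varsigma}=\|\varsigma\ptrn=1$. For \ref{fisepd}$\implies$\ref{fisepe}, use~\eqref{inenorms}, which gives $1=\|D\ttrn\le\|D\ptrn\le\ptrntr{D}=1$. Since $\dim(\hhjj)<\infty$, Theorem~\ref{theine} gives $\stahj=\hstahj$, so every $D$ is a cross state and these norms are all defined.

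The substantive step is \ref{fisepe}$\implies$\ref{fisepa}. Proposition~\ref{proexods} says every $D\in\thptj$ is optimally decomposable by a finite standard decomposition. So if $\|D\ptrn=1=\|D\ttrn$, then $D$ is positive, optimally decomposable, and satisfies $\|D\ptrn=\|D\ttrn$; this is condition~\ref{dosc} of Proposition~\ref{prdecdens}. By the final claim of that proposition, every sheer optimal standard decomposition of $D$ is a positive decomposition of the form~\eqref{dedepos}. Applying this to the finite sheer optimal decomposition from Proposition~\ref{proexods} shows that $D$ is a finite convex combination of product states, i.e.\ $D\in\fsta$. The displayed set equalities then follow at once.

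For the bound $\mathsf{K}\le\enne^2$ in~\eqref{finconvde}, I plan a Carath\'eodory argument rather than the $2\enne^2+1$ count of Proposition~\ref{proexods}. First refine the finite decomposition to pure product states by spectrally decomposing each $\rho_k$ and $\sigma_k$; this keeps the sum finite because the dimension is finite. The result is $\varsigma\in\co(\nb(\pusta,\pustajj))$. This convex hull sits inside the affine hyperplane of unit-trace operators in the real space $\bophjsa$, which has dimension $\enne^2-1$. Carath\'eodory's theorem (Theorem~17.2 of~\cite{Rockafellar}) then gives a representation with at most $(\enne^2-1)+1=\enne^2$ points, each a product of pure states, as required. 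The one point to check is that the refinement to pure states preserves finiteness and convexity; this is immediate from the finite spectral decompositions, so the only genuine work is invoking Proposition~\ref{prdecdens} correctly.
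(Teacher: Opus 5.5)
Your argument is correct, but it reaches finiteness and the bound $\mathsf{K}\le\enne^2$ by a different route from the paper.

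The paper uses Proposition~\ref{proexods} and the equivalence \ref{dosa}$\iff$\ref{dosc} of Proposition~\ref{prdecdens} only to obtain \ref{fisepe}$\implies$\ref{fisepb}. Condition \ref{dosa} a priori gives a possibly countable positive decomposition, so this does not yet give finiteness. The paper then proves \ref{fisepc}$\implies$\ref{fisepa} and~\eqref{finconvde} together by convex geometry. In finite dimension $\sesta$ is a compact convex set of affine dimension at most $\enne^2-1$. The Minkowski--Carath\'eodory theorem therefore writes each separable state as a convex combination of at most $\enne^2$ extreme points. These are pure product states because $\ext(\sesta)=\sestap=\nb(\pusta,\pustajj)$ (Corollary~\ref{corsestap}).

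You instead notice that the optimal decomposition from Proposition~\ref{proexods} is already finite and sheer. The final clause of Proposition~\ref{prdecdens} says every sheer optimal standard decomposition is a positive decomposition, so you get \ref{fisepe}$\implies$\ref{fisepa} directly. The count $\enne^2$ then follows from the ordinary Carath\'eodory theorem, applied to $\co(\nb(\pusta,\pustajj))$ inside the unit-trace hyperplane.

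Your version is more economical. It needs neither compactness of $\sesta$, nor Minkowski's theorem, nor the extreme-point characterization of $\sesta$; in the paper that characterization rests on the barycentric decompositions of Theorem~\ref{thechase}. The paper's version, in exchange, derives \ref{fisepc}$\implies$\ref{fisepa} purely from the convex geometry of $\sesta$. It does so independently of the optimal-decomposition theory for the projective norm.
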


\begin{proof}
Clearly, \ref{fisepa}$\implies$\ref{fisepb}$\implies$\ref{fisepc}. Moreover, by Corollary~\ref{corimplise},
\ref{fisepc} implies~\ref{fisepd}, and, since, for every $D\in\hstahj$, $\ptrntr{D}\ge\|D\ptrn\ge\|D\ttrn= 1$,
\ref{fisepd} implies~\ref{fisepe}. Let us prove that \ref{fisepe}$\implies$\ref{fisepc}$\implies$\ref{fisepa},
and hence the facts \ref{fisepa}--\ref{fisepe} are mutually equivalent.

To prove that \ref{fisepe} implies~\ref{fisepc}, recall that, by Proposition~\ref{proexods}, if both $\hh$
and $\jj$ are finite-dimensional, then every linear operator in $\thptj$ is optimally decomposable. Hence,
by the equivalence of conditions~\ref{dosa} and~\ref{dosd} in Proposition~\ref{prdecdens}, we have:
\ref{fisepe}$\implies$\ref{fisepb}$\implies$\ref{fisepc}.

Let us finally prove that \ref{fisepc} implies~\ref{fisepa} and decomposition~\eqref{finconvde}, simultaneously.
In fact, in the case were $\enne\equiv\dim(\hhjj)<\infty$, $\sesta$ is a (norm-bounded and norm-closed, hence)
$\ttrnor$-compact convex subset of the $\enne^2$-dimensional \emph{real} linear space $\bophjsa=\trchjsa$. Then,
by the classical Minkowski-Carath\'eodory theorem --- see, e.g., Theorem~{8.11} of~\cite{Simon_conv}, or
Theorem~{I.6.13} of~\cite{Alfsen} 
--- a separable density operator $\varsigma$ on $\hhjj$ can always be expressed as a convex combination of, at most,
$\enne^2$ points of $\ext(\sesta)$; in this regard, note that the \emph{affine dimension} of the convex set
$\sesta=\clco(\nb(\sta,\stajj))=\clco(\nb(\pusta,\pustajj))=\co(\nb(\sta,\stajj))=\co(\nb(\pusta,\pustajj))$
is not larger than $\enne^2-1$ --- see Theorem~{8.8} of~\cite{Simon_conv}, and the subsequent definition
of dimension of a convex set --- since $\stahj$ is a convex body in the affine hyperplane
$\afhy\defi\{A\in\bophjsa=\trchjsa\colon\tr(A)=1\}$. Now, 
the aforementioned at most $\enne^2$ points must be contained in the set $\sestap=\nb(\pusta,\pustajj)=\ext(\sesta)$
of all separable pure states. Thus, relation~\eqref{finconvde} holds, and~\ref{fisepc} implies~\ref{fisepa}.
\end{proof}

\section{Further characterizations of separable states}
\label{further}

In Subsection~\ref{sepgen} (recall Corollary~\ref{corimplise}), we have shown that if a bipartite
state $\varsigma\in\stahj$ is separable --- i.e., if $\varsigma\in\sesta\subset\thptj$ --- then
$\|\varsigma\ptrn=1$. We will now prove that the reverse implication holds true, as well, so that
the set of all separable states (wrt the bipartition $\hhjj$) is a $\ptrnor$-closed convex subset
of $\thptj$, completely characterized as the intersection of $\stahj$ with a level set of the
projective norm (this result is already proven in the finite-dimensional setting only; recall
Corollary~\ref{fidiset}).

Let us denote by $\stahjb$ the intersection of the set $\stahj$ with the \emph{unit sphere}
$\sph\big(\thptj\big)$ --- equivalently, since $\|D\ptrn\ge\|D\ttrn= 1$, for every $D\in\hstahj$,
with the \emph{unit ball} $\ball\big(\thptj\big)\defi\big\{C\in\thptj\colon\|C\ptrn\le 1\big\}$
--- in $\thptj$:
\begin{align}
\stahjb
& \defi
\stahj\cap\sph\big(\thptj\big)
\nonumber \\
& \dodici =
\big\{D\in\hstahj\colon\|D\ptrn=1\big\}
\nonumber \\
& \dodici =
\big\{D\in\hstahj\colon\|D\ptrn\le 1\big\}
\nonumber \\ \label{basph}
& \dodici =
\stahj\cap\ball\big(\thptj\big)\ifed\shjb .
\end{align}
The third equality in~\eqref{basph} implies that $\stahjb=\shjb$ is a \emph{convex subset} (of
$\hstahj$ and) of $\stahj$. In the following, we will use both notations --- $\stahjb$ or $\shjb$
--- for indicating the same set; the former notation being preferred to emphasize the sharp value
of the projective norm, the latter being chosen to stress that we are dealing with a convex set.
With these notations, our claim is expressed by the relation $\sesta=\stahjb=\shjb$.
We will also prove that $\sesta$ coincides with the intersection of $\stahj$ with the unit sphere
$\sph\big(\thsptjs\big)=\big\{C\in\thsptjs\colon\ptrntr{C}=1\big\}$ in the Hermitian trace
class $\thsptjs$. For technical reasons, to prove these claims it will be convenient to regard
the elements of the cross trace class $\thptj$ as \emph{functionals} on a suitable unital
$\cast$-algebra $\alghj$ of bounded operators on the Hilbert space $\hhjj$. This construction
involves various intermediate steps that will be outlined in Subsection~\ref{strategy}, after
defining the $\cast$-algebra $\alghj$.

\subsection{The relevant operator algebra}
\label{algebra}

In the case where $\dim(\hhjj)=\infty$, we will denote by $\alghj$ the \emph{unitization}~\cite{Murphy} of
the (non-unital) $\cast$-algebra $\cphj$ of all \emph{compact operators} on the Hilbert space $\hhjj$;
namely, we consider the unital $\cast$-algebra
\begin{equation}
\alghj\defi\cphj+\ccc\sei I \fin ,
\end{equation}
endowed with the norm $\|A + z \sei I\|\defi\sup\{\|AB + z\sei B\tnori\colon B\in\cphj,\ \|B\tnori=1\}$,
that coincides with the restriction to $\alghj$ of the standard operator norm $\tnormi$ of $\bophj$.
Therefore, $\alghj$ is, in a natural way, a (unital) $\cast$-subalgebra of the ambient $\cast$-algebra
$\bophj$ (see Example~{10.4.8} of~\cite{Kadison}). In the case where $\elle=\dim(\hhjj)<\infty$, we
simply put $\alghj\equiv\cphj=\bophj$ (say, the unital $\cast$-algebra of $\elle\times\elle$ complex
matrices endowed with the spectral norm).

We can endow the dual $\alghjd$ of the $\cast$-algebra $\alghj$ with its $\wst$-topology
(weak$^\ast$ topology), i.e., with the initial topology induced by the family of maps
\begin{equation}
\big\{\pax\colon K\in\alghj\big\} \fin , \quad \pax\colon\alghjd\rightarrow\ccc ,
\end{equation}
where $\pail K,\xi\pair\equiv \xi(K)\in\ccc$ is the pairing between $K\in\alghj$ and $\xi\in\alghjd$;
equivalently, with the initial topology induced by the family of semi-norms
$\big\{|\pax|\colon K\in\alghj\big\}$, where $|\pax|\colon\alghjd\rightarrow\errep$. Endowed with this
topology --- that will be called the \emph{$\awst$-topology} --- $\alghjd$ becomes a locally convex
topological vector space. $\alghjd$ contains $\trchj$ as a distinguished linear subspace via the
(injective, linear) immersion map
\begin{equation} \label{defident}
\identi\colon\trchj\ni A\mapsto\xia\in\alghjd
\end{equation}
where the pairing between the functional $\xia\in\alghjd$ ($A\in\trchj$) and a vector $K$ of
$\alghj$ is provided by the trace, i.e., $\xia(K)\equiv\pail K,\xia\pair\defi\tr(KA)$. The map
$\identi$ is a linear isometry, because --- by the fact that, for every $A\in\trchj$ and $K\in\alghj$,
$|\xia(K)|\le\|A\ttrn\tre\|K\tnori$, and by Fact~\ref{norsub} (i.e., $\cphj$ is a norming subspace of
$\bophj=\trchjd$) --- we have:
\begin{equation} \label{renorsub-bis}
\|\xia\norps\le\|A\ttrn=\sup\{|\xia(K)|=|\tr(AK)|\colon K\in\cphj, \ \|K\tnori=1\}\le\|\xia\norps \fin .
\end{equation}
Clearly, the rhs inequality is due to the fact that the supremum is taken over the unit ball of
$\cphj\subset\alghj$. Hence, actually, $\|\xia\norps=\|A\ttrn$.

\begin{remark} \label{rehaba}
For every $A\in\trchj$ --- taking into account the equality in relation~\eqref{renorsub-bis}
and the fact that, by the same relation, $\|\xia\norps=\|A\ttrn$ --- we see that the functional
$\identi(A)=\xia\in\alghjd$ is a Hahn-Banach extension of the functional
$\big(\cphj\ni K\mapsto\tr(KA)\in\ccc\big)\in\cphj^\ast$.
\end{remark}

The relative (or subspace) topology on $\trchj\equiv\identi(\trchj)$ wrt the $\awst$-topology
--- i.e., the initial topology on $\trchj$, via the map $\identi$, associated with the
$\awst$-topology on $\alghjd$ --- will be called the \emph{extended $\wst$-topology} or,
in short, the \emph{$\ewst$-topology}. Every $\ewst$-open subset of $\trchj$ is of the form
$\identi^{-1}(\neigh)$, for some $\awst$-open subset $\neigh$ of $\alghjd$. Therefore, endowed
with the $\ewst$-topology, $\trchj$ is a topological space homeomorphic to $\identi(\trchj)$
(equipped with its subspace topology wrt the $\awst$-topology of $\alghj$).  A net $\{A_i\}$ in
$\trchj$ converges to some $A\in\trchj$ wrt the $\ewst$-topology iff $\lim_i\tr(KA_i)=\tr(KA)$,
for all $K\in\alghj$.

\begin{fact} \label{relclo}
By a well-known property of the closure of a set wrt the subspace topology, for every subset
$\subh$ of $\trchj$, we have that
\begin{equation} \label{relaclos}
\clwsub=\clawsub\cap\tre\trchj \fin ,
\end{equation}
where $\clwsub\equiv\ewstcl(\subh)$, $\clawsub\equiv\awstcl(\subh)$ denote the $\ewst$-closure and
the $\awst$-closure of $\subh$, respectively, and all the obvious identifications via the immersion
map $\identi\colon\trchj\rightarrow\alghjd$ are understood; i.e., relation~\eqref{relaclos} is a
shorthand form of $\identi\Big(\clwsub\otto\Big)=\clawsubid\cap\tre\identi(\trchj)$.
\end{fact}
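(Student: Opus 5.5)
This is a general topological fact, so I would prove it directly from the definitions of subspace topology and closure, without using any property specific to trace classes. Write $\identi$ for the injective immersion of $\trchj$ into $\alghjd$, and identify $\subh$ with $\identi(\subh)$. By definition, the $\ewst$-topology is the initial topology induced by $\identi$. So a set $\mathcal{F}\subset\trchj$ is $\ewst$-closed iff $\mathcal{F}=\identi^{-1}(\mathcal{G})$ for some $\awst$-closed set $\mathcal{G}\subset\alghjd$. Because $\identi$ is injective, this is the same as saying $\identi(\mathcal{F})=\mathcal{G}\cap\identi(\trchj)$.

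The inclusion ``$\subset$'' comes first. The set $\clawsubid\cap\identi(\trchj)$ is the trace, on $\identi(\trchj)$, of an $\awst$-closed set. Hence its preimage under $\identi$ is $\ewst$-closed, and it contains $\subh$. Since $\clwsub$ is the smallest $\ewst$-closed set containing $\subh$, we get $\identi\big(\clwsub\big)\subset\clawsubid\cap\identi(\trchj)$.

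For ``$\supset$'', write $\clwsub=\identi^{-1}(\mathcal{G})$ with $\mathcal{G}$ $\awst$-closed. Then $\identi(\subh)\subset\mathcal{G}$, so $\clawsubid\subset\mathcal{G}$. Intersecting both sides with $\identi(\trchj)$ gives a subset of $\mathcal{G}\cap\identi(\trchj)=\identi\big(\clwsub\big)$.

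An alternative is to argue with nets, using the characterisation of $\ewst$-convergence through the pairings $\tr(KA_i)$, $K\in\alghj$. This is equivalent, but the argument above avoids nets altogether. There is no real obstacle here. The only point needing care is to use the injectivity of $\identi$, which lets us pass between $\identi^{-1}(\mathcal{G})$ and $\mathcal{G}\cap\identi(\trchj)$.
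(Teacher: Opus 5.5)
Your proof is correct. It is the standard ``closure in a subspace equals ambient closure intersected with the subspace'' argument, which the paper invokes without proof, transported through the injective map $\identi$ that defines the $\ewst$-topology as an initial topology. Both inclusions are handled properly; note only that $\identi(\identi^{-1}(\mathcal{G}))=\mathcal{G}\cap\identi(\trchj)$ holds for any map, so injectivity is needed just to pass back from $\identi(\mathcal{F})$ to $\mathcal{F}$.
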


\begin{proposition} \label{proclo}
Let $\subh$, $\subj$ be nonempty subsets of $\trchj$ and $\alghjd$, respectively. Then, for the
$\ewst$-closed convex hull of $\subh$ and the $\awst$-closed convex hull of $\subj$, we have that
\begin{equation} \label{clcotworels}
\clcow(\subh)=\ewstcl(\co(\subh)) \fin , \quad \clcoaw(\subj)=\awstcl(\co(\subj)) \fin .
\end{equation}
\end{proposition}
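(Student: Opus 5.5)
The second relation in~\eqref{clcotworels} is just the general fact recalled in Remark~\ref{recloco}, applied to a locally convex space rather than to a dual Banach space with its $\wst$-topology. Since $\alghjd$, endowed with the $\awst$-topology, is a topological vector space, the closure of a convex set is convex. Explicitly, fix $t\in[0,1]$; the map $(\xi,\eta)\mapsto t\xi+(1-t)\eta$ is continuous, so it sends $\awstcl(\co(\subj))\times\awstcl(\co(\subj))$ into the closure of $\co(\subj)$. Hence $\awstcl(\co(\subj))$ is an $\awst$-closed convex set containing $\subj$, and so it contains $\clcoaw(\subj)$. Conversely, every $\awst$-closed convex set containing $\subj$ contains $\co(\subj)$, hence also its closure. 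This gives $\clcoaw(\subj)=\awstcl(\co(\subj))$.

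For the first relation I will argue the same way inside $\trchj$, now with the $\ewst$-topology. This topology is the initial topology induced by the linear injection $\identi$. So a net $\{A_i\}$ is $\ewst$-convergent iff $\tr(KA_i)$ converges for every $K\in\alghj$, which means $\trchj$ with this topology is itself a topological vector space. Addition and scalar multiplication are continuous because each seminorm $|\tr(K\,\cdot\,)|$ behaves well under them; equivalently, $\identi$ is a linear homeomorphism onto a linear subspace of the topological vector space $(\alghjd,\awst)$. The same argument as above, taking closures of convex sets under the continuous affine maps $(A,B)\mapsto tA+(1-t)B$, then yields $\clcow(\subh)=\ewstcl(\co(\subh))$.

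An alternative route to the first relation goes through Fact~\ref{relclo}. We have $\ewstcl(\co(\subh))=\awstcl(\co(\subh))\cap\trchj$, using that $\identi$ is linear, so that $\identi(\co(\subh))=\co(\identi(\subh))$. This set is the intersection of two convex sets: the $\awst$-closure of a convex set and the linear subspace $\identi(\trchj)$. It is therefore a convex, relatively closed set containing $\subh$, and minimality is immediate as before.

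No step is a real obstacle. The only point needing care is checking that the initial topology via $\identi$ makes the vector operations on $\trchj$ continuous, or equivalently that $\identi(\trchj)$ is a linear subspace. Both follow directly from linearity of $\identi$ and of the pairing.
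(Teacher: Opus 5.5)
Your proposal is correct and follows the reasoning the paper relies on. The paper states this proposition without proof, resting on the argument of Remark~\ref{recloco}: $\alghjd$ with the $\awst$-topology, and $\trchj$ with the $\ewst$-topology (linearly homeomorphic via $\identi$ to the subspace $\identi(\trchj)$), are topological vector spaces, so the closure of a convex set is convex and hence equals the closed convex hull.

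One harmless slip: a net $\{A_i\}$ is $\ewst$-convergent to $A$ iff $\tr(KA_i)\to\tr(KA)$ for all $K\in\alghj$. Convergence of each $\tr(KA_i)$ alone does not suffice, and Example~\ref{exaclo} is a counterexample. Your argument never actually uses this claim, because you justify the vector-space structure directly through $\identi$.
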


\begin{example} \label{exaclo}
Assume that $\dim(\hhjj)=\infty$. In this case, the zero vector in $\trchj$ belongs to the closure
$\wstcl(\stahj)$ --- wrt the $\wst$-topology of $\trchj$ regarded as the Banach space dual of $\cphj$
(via the trace functional) --- of the convex set $\stahj$. Indeed, given any orthonormal basis
$\{\vek\}_{k=1}^\infty$ in the Hilbert space $\hhjj$, and putting $D_l\defi l^{-1}\sum_{k=1}^l\veek\in\stahj$,
$l\in\nat$, for every $A\in\trchj$ we have that
\begin{equation} \label{limidl}
\lim_l\sei\tr(D_l A)=
\lim_l\Big(l^{-1}\sei\tr\big({\textstyle (\sum_{k=1}^l\veek) A}\big)\Big)=0  \fin ,
\end{equation}
because $\lim_l \tr\big({\textstyle (\sum_{k=1}^l\veek) A}\big)=
\sum_{k=1}^\infty\langle\vek,A\sei\vek\rangle=\tr(A)$. Since the trace class $\trchj$ is
$\tnormi$-dense in $\cphj$, given any $K\in\cphj$ and $A\in\trchj$, by the estimate
\begin{align}
|\tr(D_l K)|\le |\tr(D_l (K-A))|+ |\tr(D_l A)|
& \le
\|D_l (K-A)\ttrn + \|D_l A \ttrn
\nonumber \\
& \le
\|D_l\ttrn\sei\|K-A\tnori + \|D_l\tnori\sei\|A\ttrn
\nonumber \\
& =
\|K-A\tnori+l^{-1}\sei\|A\ttrn  \fin ,
\end{align}
it follows that relation~\eqref{limidl} actually holds for every $A\in\cphj$. Therefore,
$\wstli_l D_l=0\in\wstcl(\stahj)$. By this fact, it also follows that, for any $D\in\stahj$
and $s\in [0,1]$, putting $D_l(s)\defi s\sei D + (1-s)D_l\in\stahj$, we have that
$\wstli_l D_l(s)=s\sei D\in\wstcl(\stahj)$; hence:
$\co(\stahj\cup\{0\})=\{s\sei D\colon s\in [0,1],\ D\in\stahj\}\subset\wstcl(\stahj)$.

Moreover, by relation~\eqref{limidl}, which holds true for every $A\in\cphj$, one can easily
check that $\awstli_l\identi(D_l)=\Xi\in\alghjd\setminus\identi(\trchj)$, where $\Xi$ is the
positive linear functional on $\alghj$ determined by setting $\Xi(K)=0$, for all $K\in\cphj$,
and $\Xi(I)=1$. As a consequence, the sequence $\{D_l\}_{l=1}^\infty\subset\stahj$ does not
admit a limit in $\trchj$ wrt the $\ewst$-topology.
\end{example}

The following result justifies the term ``extended $\wst$-topology'' introduced above.

\begin{proposition} \label{protwotops}
Suppose that $\dim(\hhjj)=\infty$. Then, the $\ewst$-topology is strictly stronger --- i.e.,
strictly finer --- than the $\wst$-topology of $\trchj$, regarded as the Banach space dual of
$\cphj$ via the trace functional; i.e., the $\wst$-topology is a strict subtopology of the
$\ewst$-topology. Moreover, $\stahj$ is $\ewst$-closed, but not $\wst$-closed; $\identi(\stahj)$
is not $\awst$-closed.
\end{proposition}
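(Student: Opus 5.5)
The statement has three parts: the $\ewst$-topology is finer than the $\wst$-topology, the inclusion is strict, and the closedness claims for $\stahj$ and $\identi(\stahj)$. I will take them in that order.

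\emph{Comparison of topologies.} A net $\{A_i\}$ in $\trchj$ converges in the $\wst$-topology iff $\tr(KA_i)\to\tr(KA)$ for all $K\in\cphj$. It converges in the $\ewst$-topology iff the same holds for all $K\in\alghj=\cphj+\ccc\sei I$. Since $\cphj\subset\alghj$, the $\ewst$-topology is generated by a larger family of seminorms $A\mapsto|\tr(KA)|$. It is therefore finer, so the $\wst$-topology is a subtopology of it.

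\emph{Closedness of $\stahj$.} The claims about $\stahj$ both come from Example~\ref{exaclo}.
\begin{itemize}
\item It is not $\wst$-closed, because $0\in\wstcl(\stahj)\setminus\stahj$.
\item It is $\ewst$-closed. Let $\{D_i\}\subset\stahj$ converge to $A\in\trchj$ in the $\ewst$-topology.
\begin{itemize}
\item Taking $K=I$ gives $\tr(A)=\lim_i\tr(D_i)=1$.
\item Taking $K=\vaa$ for unit vectors $\va$ gives $\langle\va,A\va\rangle=\lim_i\langle\va,D_i\va\rangle\ge 0$, so $A\ge0$.
\end{itemize}
Hence $A\in\stahj$.
\end{itemize}
This also gives strictness. $\stahj$ is closed in the $\ewst$-topology but not in the $\wst$-topology, so the two topologies differ, and the subtopology relation is strict.

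\emph{$\identi(\stahj)$ is not $\awst$-closed.} Example~\ref{exaclo} shows that $\identi(D_l)$ converges in the $\awst$-topology to $\Xi$, which lies outside $\identi(\trchj)$. So $\Xi$ is in the $\awst$-closure of $\identi(\stahj)$ but not in $\identi(\stahj)$.

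The only step needing care is the positivity argument in the $\ewst$-closedness. It uses that rank-one projections lie in $\cphj\subset\alghj$, and that positivity of a trace class operator is detected by its diagonal matrix elements. Both facts are immediate.
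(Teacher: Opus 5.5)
Your proof is correct and follows essentially the route the paper sets up. The paper states this proposition without a separate proof. It relies on Example~\ref{exaclo} for the $\wst$-limit $0$ and for the $\awst$-limit $\Xi\notin\identi(\trchj)$. For the $\ewst$-closedness of $\stahj$, it uses the same trace and positivity test you give (testing against $K=I$ and rank-one projections in $\alghj$), which appears in the proof of Lemma~\ref{lemthclcow}. Your step deducing strictness from the fact that $\stahj$ is closed in one topology but not the other is a sound way to conclude.
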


In spite of the fact that, when $\dim(\hhjj)=\infty$, the $\ewst$-topology is strictly stronger than
the $\wst$-topology of $\trchj$, the associated subspace topologies on $\stahj$ \emph{do coincide}.

\begin{proposition} \label{protop}
The relative topologies on $\stahj$ wrt the $\wst$-topology and the $\ewst$-topology of $\trchj$
both coincide with the standard topology --- {\rm recall Fact~\ref{statop}} --- of $\stahj$.
\end{proposition}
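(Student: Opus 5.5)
Since the $\wst$-topology on $\trchj$ (as dual of $\cphj$) is coarser than the $\ewst$-topology (Proposition~\ref{protwotops}), and both are coarser than the trace-norm topology, we have the chain: standard topology of $\stahj$ (trace norm, Fact~\ref{statop}) $\supseteq$ relative $\ewst$-topology $\supseteq$ relative $\wst$-topology. The plan is to close this chain by proving that on $\stahj$ the relative $\wst$-topology is already at least as fine as the trace-norm topology. It is enough to show that the identity map $(\stahj,\wst)\to(\stahj,\ttrnor)$ is continuous at every point. This cannot be done with sequences unless we invoke metrizability, so I will argue with nets.

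Take a net $\{D_i\}\subset\stahj$ with $\wstli_i D_i=D\in\stahj$, i.e.\ $\tr(KD_i)\to\tr(KD)$ for every $K\in\cphj$. The key observation is that the constraint $\tr(D_i)=1=\tr(D)$ prevents the mass escaping to infinity, which was exactly the phenomenon in Example~\ref{exaclo}.

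First I will show that $D_i\to D$ in the weak operator topology. For vectors $\va,\vc$, the operator $\hvac$ is finite rank, hence compact, so $\langle\vc,D_i\va\rangle=\tr(\hvac D_i)\to\tr(\hvac D)=\langle\vc,D\va\rangle$.

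Next I will invoke Fact~\ref{statop}, which says that on $\stahj$ the weak operator topology coincides with the trace-norm topology. This immediately gives $\|D_i-D\ttrn\to0$.

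The step that carries the real content is the one hidden in Fact~\ref{statop}, namely that weak convergence of states to a \emph{state} forces trace-norm convergence. Since Fact~\ref{statop} may be quoted, that step is covered. If a direct argument is preferred, the route is as follows.
\begin{itemize}
\item Choose a finite-rank projection $P$ with $\tr(PD)>1-\epsilon$.
\item Weak convergence on the finite-dimensional range of $P$ gives $\tr(PD_i)\to\tr(PD)$, hence $\tr((I-P)D_i)<\epsilon$ eventually.
\item The standard estimate $\|D_i-PD_iP\ttrn\le 2\sqrt{\tr((I-P)D_i)}$, together with the same estimate for $D$, reduces the problem to the finite-dimensional compressions $PD_iP\to PDP$, which converge in norm.
\end{itemize}
This is the only delicate point, and it relies crucially on $D$ having unit trace. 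Without that, the limit could be $sD$ with $s<1$, as Example~\ref{exaclo} shows, but such a limit is excluded because the limit is assumed to lie in $\stahj$.

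With continuity established at every point, all three relative topologies coincide.
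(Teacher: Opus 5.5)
Your proof is correct and follows essentially the same route as the paper, which also places the relative $\ewst$- and $\wst$-topologies between the trace-norm topology and the relative weak operator topology, and then closes the chain with Fact~\ref{statop}; the relative weak operator topology is coarser than the relative $\wst$-topology precisely because the finite-rank operators $\vac$ belong to $\cphj$. Your net formulation, and the optional direct argument via a finite-rank projection and the gentle-measurement estimate, are a more explicit rendering of those same inclusions.
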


\begin{proof}
By Proposition~\ref{protwotops}, the relative $\ewst$-topology on $\stahj$ is stronger than
the relative $\wst$-topology on $\stahj$ inherited from $\trchj$. The latter, in turn, is
stronger than the relative topology induced on $\stahj$ by the weak operator topology of $\bophj$;
i.e., the standard topology. Conversely, the relative topology on
$\stahj$ wrt the trace norm (the Schatten $1$-norm $\ttrnor$) topology --- i.e., again,
the standard topology of $\stahj$ --- is stronger than the relative $\ewst$-topology,
which is stronger than the relative $\wst$-topology. Hence, all these topologies must
coincide with the standard topology of $\stahj$.
\end{proof}

By Proposition~\ref{protop}, we get to an important point, i.e., Theorem~\ref{thclcow} below.

\begin{lemma} \label{lemthclcow}
With the obvious identifications $\clcow(\nb(\pusta,\pustajj))\equiv\identi\big(\clcow(\nb(\pusta,\pustajj))\big)$,
$\nb(\pusta,\pustajj)\equiv\identi(\nb(\pusta,\pustajj))$ and $\stahj\equiv\identi(\stahj)$,
the following relation holds:
\begin{equation} \label{relemthcl}
\clcow(\nb(\pusta,\pustajj))=\clcoaw(\nb(\pusta,\pustajj))\cap\stahj \fin .
\end{equation}
\end{lemma}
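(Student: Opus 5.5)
The plan is to reduce the statement to Fact~\ref{relclo} and Proposition~\ref{proclo}. Put $\subh\defi\nb(\pusta,\pustajj)\subset\trchj$. By Proposition~\ref{proclo}, $\clcow(\subh)=\ewstcl(\co(\subh))$ and $\clcoaw(\identi(\subh))=\awstcl(\co(\identi(\subh)))$. Since $\identi$ is linear, $\co(\identi(\subh))=\identi(\co(\subh))$. Applying Fact~\ref{relclo} to the set $\co(\subh)$ then gives
\begin{equation*}
\identi\big(\clcow(\subh)\big)=\awstcl\big(\identi(\co(\subh))\big)\cap\identi(\trchj)=\clcoaw(\identi(\subh))\cap\identi(\trchj) \fin .
\end{equation*}
So the whole task is to replace $\identi(\trchj)$ on the right by the smaller set $\identi(\stahj)$. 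The two sides agree once we know that every element of $\clcoaw(\subh)\cap\identi(\trchj)$ already lies in $\identi(\stahj)$.

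For this we show that $\clcow(\subh)\subset\stahj$. The set $\co(\subh)$ is contained in $\stahj$, so it suffices that $\stahj$ is $\ewst$-closed in $\trchj$. That closedness is part of Proposition~\ref{protwotops}. It can also be checked by hand. Suppose $\{A_i\}\subset\stahj$ converges in $\ewst$ to some $A\in\trchj$. Testing against $K=I\in\alghj$ gives $\tr(A)=\lim_i\tr(A_i)=1$. Testing against the rank-one projections $K=\vaa\in\cphj$ gives $\langle\va,A\va\rangle=\lim_i\langle\va,A_i\va\rangle\ge 0$, so $A\ge 0$ and hence $A\in\stahj$.

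It follows that $\clcow(\subh)\subset\stahj$, and therefore
\begin{equation*}
\identi\big(\clcow(\subh)\big)=\identi\big(\clcow(\subh)\big)\cap\identi(\stahj)=\clcoaw(\identi(\subh))\cap\identi(\trchj)\cap\identi(\stahj)=\clcoaw(\identi(\subh))\cap\identi(\stahj) \fin .
\end{equation*}
This is \eqref{relemthcl}. The last step uses only $\identi(\stahj)\subset\identi(\trchj)$ and the injectivity of $\identi$.

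The only substantive point is the $\ewst$-closedness of $\stahj$. In the infinite-dimensional case this relies on $I\in\alghj$, which makes the trace $\ewst$-continuous. The plain $\wst$-topology would fail here, as Example~\ref{exaclo} shows. Everything else is bookkeeping with the identifications made via $\identi$.
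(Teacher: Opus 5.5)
Your proposal is correct and takes essentially the same route as the paper. You reduce, via Fact~\ref{relclo} and Proposition~\ref{proclo}, to $\clcoaw(\nb(\pusta,\pustajj))\cap\trchj$, and then test a limit of states against $I$ and rank-one projections to show it is again a density operator; the only cosmetic difference is that you phrase this last step as the $\ewst$-closedness of $\stahj$, whereas the paper applies the same test directly to an $\awst$-convergent net in $\co(\nb(\pusta,\pustajj))$ with limit in $\trchj$.
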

\begin{proof}
By Fact~\ref{relclo} and relations~\eqref{clcotworels}, and with suitable identifications via
the immersion map $\identi$, we have:
\begin{align}
\clcow(\nb(\pusta,\pustajj))
& =
\ewstcl\big(\co(\nb(\pusta,\pustajj))\big)
\nonumber \\
& =
\awstcl\big(\co(\nb(\pusta,\pustajj))\big)\cap\trchj
\nonumber \\
& =\clcoaw(\nb(\pusta,\pustajj))\cap\trchj \fin .
\end{align}
We need to prove that $\clcoaw(\nb(\pusta,\pustajj))\cap\trchj=\clcoaw(\nb(\pusta,\pustajj))\cap\stahj$.
In fact, let $\{D_i\}\subset\co(\nb(\pusta,\pustajj))\subset\sesta$ be a net of (finitely) separable states
converging --- wrt the $\awst$-topology --- to $A\equiv\paxa\in\trchj\equiv\identi(\trchj)\subset\alghjd$
(with obvious identifications via the immersion map $\identi$); i.e., $\lim_i\tr(KD_i)=\lim_i\pail K,\xidi
\pair=\pail K,\xia\pair=\tr(KA)$, for all $K\cinque\in\alghj$. Now, by imposing this condition for every
$K\cinque\in\alghj$, with $K\ge 0$ (in particular, for every rank-one projection on $\hhjj$), we conclude
that $A\ge 0$, and, moreover, with $K=I$, we see that $\tr(A)=1$. Thus, $A=\awstli_i D_i\in\stahj$, and
the proof is complete.
\end{proof}

\begin{theorem} \label{thclcow}
The convex set of all separable states wrt the bipartition $\hhjj$ can be described as a closed convex hull
wrt the $\ewst$-topology of $\trchj$, i.e.,
\begin{equation} \label{inclrela}
\sesta=\clcow(\nb(\pusta,\pustajj)) \fin .
\end{equation}
\end{theorem}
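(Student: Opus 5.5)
We must show that $\sesta$, the $\ttrnor$-closed convex hull of $\nb(\pusta,\pustajj)$, coincides with the $\ewst$-closed convex hull of the same set. The plan is to prove the two inclusions separately, using Proposition~\ref{protop} to pass between topologies on $\stahj$.

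For the inclusion $\sesta\subset\clcow(\nb(\pusta,\pustajj))$, we note that the $\ewst$-topology on $\trchj$ is weaker than the trace norm topology. Indeed, every map $A\mapsto\tr(KA)$ with $K\in\alghj\subset\bophj$ is $\ttrnor$-continuous, since $|\tr(KA)|\le\|K\tnori\|A\ttrn$. Hence every $\ewst$-closed set is $\ttrnor$-closed, so $\clcow(\nb(\pusta,\pustajj))$ is a $\ttrnor$-closed convex set containing $\nb(\pusta,\pustajj)$. It therefore contains $\clco(\nb(\pusta,\pustajj))=\sesta$, by relation~\eqref{propust}.

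For the reverse inclusion, we use Proposition~\ref{proclo} and write $\clcow(\nb(\pusta,\pustajj))=\ewstcl(\co(\nb(\pusta,\pustajj)))$. By Lemma~\ref{lemthclcow}, this set is contained in $\stahj$. Now take $A$ in this closure. Then $A\in\stahj$, and $A$ is the $\ewst$-limit of a net $\{D_i\}\subset\co(\nb(\pusta,\pustajj))\subset\stahj$.

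The main point is that all of these objects live inside $\stahj$, where Proposition~\ref{protop} applies. It says that the relative $\ewst$-topology on $\stahj$ coincides with the standard topology, i.e.\ the relative $\ttrnor$-topology. Since $A$ is the limit, in this relative topology, of a net taken from the subset $\co(\nb(\pusta,\pustajj))$ of $\stahj$, the point $A$ belongs to the relative $\ttrnor$-closure of that subset within $\stahj$. Because $\stahj$ is $\ttrnor$-closed, this relative closure equals the $\ttrnor$-closure in $\trchj$. Hence $A\in\clco(\nb(\pusta,\pustajj))=\sesta$.

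The step most in need of care is this one: the net convergence takes place in the subspace topology of $\stahj$. That is legitimate only because both the limit $A$ and all the $D_i$ lie in $\stahj$, and Lemma~\ref{lemthclcow} is exactly what guarantees this for the limit. With that point secured, the two inclusions together give relation~\eqref{inclrela}.
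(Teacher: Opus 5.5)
Your proof is correct and is essentially the paper's argument: Lemma~\ref{lemthclcow} keeps the $\ewst$-closed convex hull inside $\stahj$, and Proposition~\ref{protop} then identifies $\ewst$-closure with $\ttrnor$-closure there. The only cosmetic difference is that you prove the inclusion $\sesta\subset\clcow(\nb(\pusta,\pustajj))$ directly by comparing topologies, whereas the paper obtains both inclusions at once from the identity $\clcow(\nb(\pusta,\pustajj))\cap\stahj=\clco(\nb(\pusta,\pustajj))$.
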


\begin{proof}
By relation~\eqref{propust}, we have: $\sesta=\clco(\nb(\sta,\stajj))=\clco(\nb(\pusta,\pustajj))$.
Let us prove that $\clco(\nb(\pusta,\pustajj))=\clcow(\nb(\pusta,\pustajj))$. In fact, by
Proposition~\ref{protop}, the relative topology on $\stahj$ wrt the $\ewst$-topology of $\trchj$
coincides with the standard topology of $\stahj$, and hence with the relative topology on $\stahj$
wrt the trace norm topology of $\trchj$. Therefore, since
$\clcow(\nb(\pusta,\pustajj))=\ewstcl\big(\co(\nb(\pusta,\pustajj))\big)$ (recall the first of
relations~\eqref{clcotworels}), by a well-known property of the closure of a set wrt the subspace
topology, we have that
$\clcow(\nb(\pusta,\pustajj))\cap\stahj=\clco(\nb(\pusta,\pustajj))\cap\stahj=\clco(\nb(\pusta,\pustajj))$.
It just remains to notice that, by Lemma~\ref{lemthclcow}, $\clcow(\nb(\pusta,\pustajj))\subset\stahj$.
\end{proof}

\subsection{Our strategy}
\label{strategy}

For the reader's convenience, we now briefly outline our plan for deriving, in the rest of
this section, a complete characterization of the set $\sesta$ in terms of the projective norms.
As a byproduct, we will also obtain various interesting characterizations of the cross trace class
$\thptj$.

\begin{enumerate}

\item As a starting point, we know that $\sesta\subset\stahjb=\shjb$ (see Corollary~\ref{corimplise}
and relation~\eqref{basph}), and we want to prove, among other facts, that the reverse inclusion
relation holds too. To this end, we need to establish some further technical results.

\item We then introduce two mutually related norms $\norf$ and $\norg$ --- respectively, on the spaces
$\thptj$ and $\bophj$ --- and we show that these norms are dominated, respectively, by the projective
norm $\ptrnor$ (Proposition~\ref{threno} below) and by the standard operator norm $\tnormi$
(Proposition~\ref{pronorg}). We next obtain more precise characterizations of these norms.

\item It is immediate to realize that, in the finite-dimensional setting, $\norf$ is
\emph{precisely} (another way of defining) the projective norm $\ptrnor$ of $\thptj$;
see Remark~\ref{remfind} below. With no assumption on the dimension of the Hilbert spaces
$\hh$ and $\jj$, the norm $\norg$ can be regarded as a natural extension to $\bophj$ of the
\emph{injective norm}~\cite{Defant,Ryan,DFS,Kubrusly} of the algebraic tensor product
$\cph\altp\cpj$, or, equivalently, of the injective norm of $\bhabj$; see Remark~\ref{retwonors}.
Exploiting standard duality relations between the injective and the projective norms, we then
prove that the projective tensor product $\thptj$ is isomorphic to the \emph{dual} of the
Banach space completion of the algebraic tensor product $\cph\altp\cpj$ wrt to the injective
norm $\norg$ (i.e., the injective tensor product $\cph\injtp\cpj$ of $\cph$ and $\cpj$).
This fact entails that $\norf=\ptrnor$ in the infinite-dimensional setting too; see
Theorem~\ref{neothe} below. Therefore, the norm $\norf$ may be regarded as a
`useful reformulation' of the projective norm.

\item As previously observed, the restriction of the norm $\norg$ to the algebraic tensor
product $\bop\altp\bopjj$ coincides with the injective norm $\innormi$ of $\bop\altp\bopjj$
(Remark~\ref{retwonors}). Then, further exploring the duality relations involving the cross
trace class, it turns out that the injective tensor product $\bop\injtp\bopjj$ --- namely,
the Banach space completion of the normed space $(\bop\altp\bopjj,\norg=\innormi)$ --- is
(isomorphic to) a closed subspace of the Banach space dual $\big(\thptj\big)^\ast$ of $\thptj$;
see Theorem~\ref{proreno} below. In particular, in the case where at least one of the Hilbert
spaces $\hh$ and $\jj$ is finite-dimensional, the norms $\ttrnor$ and $\norf=\ptrnor$ are
equivalent on $\trchj=\thptj$ (a set equality being understood); moreover, in this case,
the linear space $\bophj$, endowed with the norm $\norg$, is a Banach space --- which is
isomorphic to the injective tensor product $\bop\injtp\bopjj$, and a \emph{renorming}~\cite{Guirao}
of the standard Banach space $(\bophj,\tnormi)$ of bounded operators --- and can be identified
with the Banach space dual of the cross trace class $\thptj$.

\item  Let us now restrict our focus on the cross trace class regarded as a set of bounded
functionals. The technical reason why this picture is of interest to us is related to the
peculiar properties of a weak$^\ast$  algebra~\cite{Megginson}. As previously noted, the
elements of $\thptj$ can be identified with the bounded functionals on the injective tensor
product $\cph\injtp\cpj$. This fact is crucial, e.g., for proving the coincidence of the norms
$\norf$ and $\ptrnor$. But, according to Remark~\ref{remnocl} and Proposition~\ref{procwst} below,
if $\dim(\hhjj)=\infty$, the mentioned identification is \emph{not} suitable to reach our final target
of proving that $\sesta=\stahjb$.

\item Recall that, in Subsection~\ref{algebra}, we have regarded the elements of the
trace class $\trchj$ --- hence, in particular, of the cross trace class $\thptj$ ---
as bounded functionals on the $\cast$-algebra $\alghj$, via the linear isometry
$\identi\colon\trchj\ni A\mapsto\xia\in\alghjd$. In addition, by means of the norms
$\norf$ and $\norg$ (and of the identification of the norms $\norf$ and $\ptrnor$),
the elements of the cross trace class $\thptj$ can also be regarded as \emph{special}
bounded functionals on $\alghj$. In fact, the Banach space $\thptj$ can be identified
with (i.e., is isomorphic to) a linear subspace $\halghj$ of the dual $\alghjd$ of $\alghj$,
provided that this subspace is equipped with a suitable norm $\nordg$; i.e., there is an
isomorphism of Banach spaces $\iden\colon\thptj\rightarrow\halghj\subset\alghjd$, that is
a (domain and codomain) restriction of the immersion map $\identi\colon\trchj\rightarrow\alghjd$.
Moreover, it turns out that the Banach space $\big(\halghj,\nordg\big)\cong\thptj$ is a
closed subspace of the dual $\alghjgd\equiv\big(\alghj,\norg\big)^\ast$ of the normed space
$\alghjg\equiv\big(\alghj,\norg\big)$; in particular, the norm $\nordg$ is the restriction
to the subspace $\halghj$ of the dual norm $\dunorg$ of $\norg$, the latter being regarded
as a norm on the linear space $\alghj$ (see Proposition~\ref{pronor} below). Summarizing,
we have \emph{two} further natural characterizations of the elements of the cross trace
class $\thptj$ as bounded functionals.

\item At this stage, one may ask whether we should consider the \emph{cross states} $\hstahj$
--- i.e., our main objects of interest --- as (positive) functionals on the $\cast$-algebra
$\alghj$ or, rather, on the closely related normed space $\alghjg$. In this regard, a remarkable
fact is the following (see Remark~\ref{refuns} below): $\alghjgd$ is, in a natural way, a linear
subspace of $\alghjd$, and, moreover, the $\sigma\big(\alghjgd,\alghjg\big)$-topology coincides
with the subspace topology on $\alghjgd\subset\alghjd$ wrt the $\sigma(\alghjd,\alghj)$-topology
(the $\awst$-topology). Therefore, for our main purposes, we may think of the elements of $\hstahj$
as functionals in $\halghj$ (i.e., as elements of the set $\iden(\hstahj)\subset\halghj\subset\alghjgd$)
or, \emph{equivalently}, as functionals in $\alghjd$ (i.e., as elements of $\identi(\stahj)\subset\alghjd$),
because, eventually, the $\awst$-topology, and the associated subspace topology on
$\trchj\equiv\identi(\trchj)$ (the $\ewst$-topology), will turn out to be the relevant mathematical tool.

\item By the previous reasoning, the convex set $\sesta\subset\thptj$ of all separable states
wrt the bipartition $\hhjj$ can eventually be identified with a convex subset of $\alghjd$. We can
then use the topological properties of the space $\alghjd$ as the dual of the $\cast$-algebra
$\alghj$ to prove that, denoting by $\clcow(\nb(\pusta,\pustajj))$ the $\ewst$-closed convex
hull of $\nb(\pusta,\pustajj)$ (equivalently, by the first of relations~\eqref{clcotworels},
the closure of $\co(\nb(\pusta,\pustajj))$ wrt the $\ewst$-topology), we have:
$\stahjb=\shjb\subset\clcow(\nb(\pusta,\pustajj))$, and hence --- recalling also that
$\sesta=\clcow(\nb(\pusta,\pustajj))$ (Theorem~\ref{thclcow}) --- actually, $\sesta=\stahjb$
(see Theorem~\ref{chasest} below). As a byproduct, it turns out that $\sesta$ can also be described as
the intersection of the set $\hstahj$ of all cross states with the unit sphere in the \emph{Hermitian}
trace class $\thsptjs$; see Corollary~\ref{chasest-bis} below.

\end{enumerate}

It is worth stressing, however, that regarding the elements of the cross trace class $\thptj$
as bounded functionals on the normed space $\alghjg\equiv\big(\alghj,\norg\big)$ --- or, also,
on the normed space $\bophjg\equiv\big(\bophj,\norg\big)$ (see Remark~\ref{redigub} below) ---
will be fundamental, in Section~\ref{entanglement}, for introducing the so-called
\emph{entanglement function}.

\subsection{Two useful norms and duality relations for the cross trace class}
\label{norms}

With every bounded operator $L\in\bophj$, we can associate a linear functional $\fu$ on the Banach
space $\thptj$, defined by
\begin{equation}
\fu\colon\thptj\ni C\mapsto\tr(CL)\in\ccc \fin .
\end{equation}
The functional $\fu$ is bounded, because $|\tr(CL)|\le\|C\ttrn\sei\|L\tnori\le\|C\ptrn\sei\|L\tnori$,
and we have:
\begin{equation} \label{denogamma}
\|\fu\norps\defi\big\{|\tr(CL)|\colon C\in\thptj,\ \|C\ptrn=1\big\}\le\|L\tnori \fin ;
\end{equation}
in particular,
\begin{equation}
\fu(S\otimes T)=\tr((S\otimes T)\tre L)=\bif(S,T) \fin , \quad S\in\trc \fin , \ T \in\trcjj \fin ,
\end{equation}
where $\bif\colon\trc\times\trcjj\rightarrow\ccc$, $\bif\in\ftrcjj$, is the canonical bilinearization
of the bounded linear functional $\fu\in\big(\thptj\big)^\ast$ (Definition~\ref{defbibi}, with
$\bs=\ccc$). Note that, by Theorem~\ref{isothm}, the norm $\|\fu\norps$ of the linear functional
$\fu\colon\thptj\rightarrow\ccc$ satisfies the following important relation:
\begin{equation} \label{relfu}
\|\fu\norps=\|\bif\norb=\sup\{|\tr((S\otimes T)\tre L)|\colon S\otimes T\in\nb(\trc,\trcjj),
\ \|S\trn=\|T\trn=1\}
\fin .
\end{equation}

For every cross trace class operator $C\in\thptj$, we can now introduce the quantity
\begin{equation} \label{defnorf}
\|C\nof\defi\sup\{|\fk(C)|=|\tr(CK)|\colon K\in\alghj, \ \|\fk\norps=1\} \fin .
\end{equation}

\begin{proposition} \label{threno}
Formula~\eqref{defnorf} defines a norm $\norf$ on the cross trace class $\thptj$, which dominates (the
restriction of) the trace norm and is dominated by the projective norm; i.e., for every $C\in\thptj$,
\begin{align}
\|C\ttrn\le\|C\nof
& \le
\|C\ptrn
\nonumber \\ \label{inenorf}
& =
\sup\big\{|\Gamma(C)|\colon\Gamma\in\big(\thptj\big)^\ast,\ \|\Gamma\norps=1\big\} .
\end{align}
\end{proposition}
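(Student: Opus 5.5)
We split the proof of Proposition~\ref{threno} into three parts. First we check that \eqref{defnorf} defines a seminorm. Then we establish the two inequalities in \eqref{inenorf}. Definiteness then follows from the lower bound $\|C\ttrn\le\|C\nof$. The final equality in \eqref{inenorf} is simply \eqref{forptrn} in Corollary~\ref{cornorc}, so nothing needs to be proved there.

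For the seminorm property, fix $K\in\alghj$ with $\|\fk\norps=1$. The map $C\mapsto|\fk(C)|=|\tr(CK)|$ is a seminorm on $\thptj$. A supremum of a family of seminorms is again a seminorm, as long as it is finite. Finiteness comes from the upper bound. For every admissible $K$ we have $|\tr(CK)|=|\fk(C)|\le\|\fk\norps\|C\ptrn=\|C\ptrn$. Taking the supremum gives $\|C\nof\le\|C\ptrn$.

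The lower bound $\|C\ttrn\le\|C\nof$ is the only step with real content. By Fact~\ref{norsub}, $\|C\ttrn=\sup\{|\tr(CK)|\colon K\in\cphj,\ \|K\tnori=1\}$. Fix such a $K$. By \eqref{denogamma} we have $\|\fk\norps\le\|K\tnori=1$, and $\cphj\subset\alghj$.

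The obstacle is that the supremum in \eqref{defnorf} is restricted to $\|\fk\norps=1$ exactly. So we rescale. If $\fk\neq0$, set $K'=K/\|\fk\norps\in\alghj$. Then $\|\Gamma_{K'}\norps=1$ and
\begin{equation}
|\tr(CK')|=|\tr(CK)|/\|\fk\norps\ge|\tr(CK)|.
\end{equation}
If $\fk=0$, then $\tr(CK)=0$ and there is nothing to bound. This case can only occur if $K$ annihilates all elementary tensors, which by Theorem~\ref{pronatp} forces $K=0$; in any event its contribution is zero. It follows that $\|C\nof\ge\sup_K|\tr(CK)|=\|C\ttrn$.

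Finally, if $\|C\nof=0$ then $\|C\ttrn=0$, so $C=0$. Hence $\norf$ is a norm, and the chain \eqref{inenorf} holds.
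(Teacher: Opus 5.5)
Your proof is correct and follows essentially the same route as the paper: the upper bound comes from $|\fk(C)|\le\|\fk\norps\,\|C\ptrn$ (equivalently \eqref{forptrn}), and the lower bound comes from Fact~\ref{norsub} combined with $\|\fk\norps\le\|K\tnori$ from \eqref{denogamma}. The differences are cosmetic: you get definiteness from the lower bound rather than by testing against rank-one operators, and you make explicit the rescaling $K\mapsto K/\|\fk\norps$ that the paper uses tacitly when it passes from $\|\fk\norps\le 1$ to $\|\fk\norps=1$.
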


\begin{proof}
Let us prove that formula~\eqref{defnorf} defines a norm. Clearly, $\|C\nof\ge 0$ and,
for every $c\in\ccc$, $\|c\sei C\nof=|c|\sei\|C\nof$. Moreover, $\|C\nof=0\iff C=0$,
because, if $\|C\nof=0$, then, in particular, $0=|\tr(C\tre\vac)|=|\langle\vc, C\tre\va\rangle|$,
for all $\va,\vc\in\hhjj$ ($\vac\in\cphj$); hence, $C=0$. It remains to prove the
triangle inequality; in fact, for all $C_1,C_2\in\thptj$, we have:
\begin{align}
\|C_1 + C_2\nof
& =
\sup\{|\fk(C_1 +C_2)|=|\tr((C_1 + C_2)\tre K)|\colon K\in\alghj, \ \|\fk\norps=1\}
\nonumber \\
& \le
\sup\{|\fk(C_1)| + |\fk(C_2)|\colon K\in\alghj, \ \|\fk\norps=1\}
\nonumber \\
& \le
\sup\{|\fk(C_1)|\colon K\in\alghj, \ \|\fk\norps=1\}
\nonumber \\
& +
\sup\{|\fk(C_2)|\colon K\in\alghj, \ \|\fk\norps=1\} = \|C_1\nof + \|C_2\nof \fin .
\end{align}
Let us now show that $\|C\ttrn\le\|C\nof$. Indeed, by Fact~\ref{norsub} and, next, by the
inequality $\|\fk\norps\le\|K\tnori$ (for any $K\in\cphj$), we have that
\begin{align}
\|C\ttrn
& =
\sup\{|\tr(CK)|\colon K\in\cphj, \ \|K\tnori=1\}
\nonumber \\
& =
\sup\{|\fk(C)|=|\tr(C\tre K)|\colon K\in\cphj, \ \|K\tnori\le 1\}
\nonumber \\
& \le
\sup\{|\fk(C)|\colon K\in\cphj, \ \|\fk\norps\le 1\}
\nonumber \\
& \le
\sup\{|\fk(C)|\colon K\in\alghj, \ \|\fk\norps\le 1\}
= \|C\nof \fin .
\end{align}
Finally, recalling relation~\eqref{forptrn}, it is clear that $\|C\nof\le\|C\ptrn$, as well, and
relation~\eqref{inenorf} holds.
\end{proof}

Note that, for every bounded operator $L\in\bophj$, we can put
\begin{equation} \label{defnog}
\|L\nog\defi\|\fu\norps\le\|L\tnori \fin .
\end{equation}

\begin{proposition} \label{pronorg}
The map $\norg\colon\bophj\ni L\mapsto\|\fu\norps\in\errep$ is a norm --- dominated by the operator
norm $\tnormi$ --- and, for every $L\in\bophj$, we have that
\begin{align}
\|L\nog
& \defi
\sup\big\{|\tr(CL)|\colon C\in\thptj,\ \|C\ptrn=1\big\}
\nonumber \\
& \dodici =
\sup\{|\tr((S\otimes T)\tre L)|\colon S\otimes T\in\nb(\trc,\trcjj), \ \|S\trn=\|T\trn=1\}
\nonumber \\
& \dodici =
\sup\{|\langle\phi\otimes\psi,L(\eta\otimes\chi)\rhj|\colon
\phi,\eta\in\hh,\ \psi,\chi\in\jj, \ \|\phi\|=\|\eta\|=\|\psi\|=\|\chi\|=1\}
\nonumber \\ \label{fornog}
& \dodici =
\|L^\ast\nog  \fin .
\end{align}
Moreover, for every $A\in\bop$ and $B\in\bopjj$, we have that $\|A\otimes B\nog=\|A\nori\tre\|B\nori=
\|A\otimes B\tnori$, and, for every pair $U$, $V$ of unitary operators on $\hh$ and $\jj$, respectively,
\begin{equation} \label{invlocuni}
\|L\nog=\|(U\otimes V)\tre L\tre (U\otimes V)^\ast\nog \fin .
\end{equation}

Finally, the algebraic tensor product $\frh\altp\frj$ is $\norg$-dense in $\cph\altp\cpj$.
\end{proposition}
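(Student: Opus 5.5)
We start from the definition $\|L\nog=\|\fu\norps$ and from relation~\eqref{relfu}, which already gives the second line of~\eqref{fornog}. For the norm property, homogeneity and the triangle inequality are inherited from the dual norm of $\big(\thptj\big)^\ast$, since $L\mapsto\fu$ is linear. Domination $\|L\nog\le\|L\tnori$ is~\eqref{defnog}. Definiteness will follow once the third line of~\eqref{fornog} is in place. If $\|L\nog=0$, then $\langle\phi\otimes\psi,L(\eta\otimes\chi)\rangle=0$ for all unit product vectors. Product vectors span a dense subspace of $\hhjj$, so $L=0$.

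The third line comes from reducing $S,T$ to rank-one operators. We will use the singular value decompositions (Remark~\ref{sivadeb}) $S=\sum_j s_j|\eta_j\rangle\langle\phi_j|$ and $T=\sum_l t_l|\chi_l\rangle\langle\psi_l|$, with $\sum_j s_j=\sum_l t_l=1$. These give
\[
\tr((S\otimes T)L)=\sum_{j,l}s_jt_l\,\langle\phi_j\otimes\psi_l,L(\eta_j\otimes\chi_l)\rangle ,
\]
by trace-norm convergence and continuity of $C\mapsto\tr(CL)$. This is a convex combination, so the supremum over $S,T$ is bounded by the supremum over unit vectors. The reverse inequality holds because $|\eta\rangle\langle\phi|\otimes|\chi\rangle\langle\psi|$ is an admissible choice. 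The identity $\|L\nog=\|L^\ast\nog$ follows from the third line, using $\overline{\langle a,Lb\rangle}=\langle b,L^\ast a\rangle$ and swapping the roles of the vector pairs.

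For $A\otimes B$, the third line factorizes as $|\langle\phi,A\eta\rangle|\,|\langle\psi,B\chi\rangle|$. The supremum of this product is $\|A\nori\|B\nori$ by Fact~\ref{fabop}, and this equals $\|A\otimes B\tnori$. For the unitary invariance~\eqref{invlocuni}, conjugation gives $\langle\phi\otimes\psi,(U\otimes V)L(U\otimes V)^\ast(\eta\otimes\chi)\rangle=\langle U^\ast\phi\otimes V^\ast\psi,L(U^\ast\eta\otimes V^\ast\chi)\rangle$. The maps $U^\ast,V^\ast$ permute the sets of unit vectors, so the supremum is unchanged.

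Finally, we need $\frh\altp\frj$ to be $\norg$-dense in $\cph\altp\cpj$. Since $\norg\le\tnormi$, it suffices to approximate a single elementary tensor $K_1\otimes K_2$ with $K_i$ compact. The operators $\frh$ are $\normi$-dense in $\cph$, and likewise $\frj$ in $\cpj$. Choosing $F_i\to K_i$, we get
\[
\|K_1\otimes K_2-F_1\otimes F_2\nog\le\|K_1-F_1\nori\|K_2\nori+\|F_1\nori\|K_2-F_2\nori\to0 ,
\]
by the cross-norm identity just proved. Linearity then extends this to finite sums. The only step needing some care is the interchange of the double SVD series with the trace in the third line. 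Absolute convergence (Fact~\ref{facdou}) handles it, so we do not expect a genuine obstacle.
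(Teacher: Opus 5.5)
Your proof is correct and follows essentially the same route as the paper: you get the second line from \eqref{relfu} and the third from the singular value decompositions of $S$ and $T$ (a convex-combination upper bound plus the rank-one lower bound). From the third line you then read off the $A\otimes B$ factorization, the adjoint symmetry and the unitary invariance, and you get density from the cross-norm estimate. The only cosmetic difference is that you obtain definiteness from the third line together with the density of the span of product vectors, whereas the paper appeals directly to the injectivity of $L\mapsto\fu$.
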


\begin{proof}
From definition~\eqref{defnog}, it is clear that $\norg$ is a semi-norm. Moreover, since the linear
mapping
\begin{equation}
\bophj\ni L\mapsto\fu\in\big(\thptj\big)^\ast
\end{equation}
is injective, this semi-norm is also point-separating; i.e., actually, a norm.  For the characterization
of $\|L\nog$ on the second line of~\eqref{fornog}, recall relation~\eqref{relfu}. Let us next derive the
characterization of $\|L\nog$ on the third line of~\eqref{fornog}. For any $S\in\sph(\trc)$ and
$T\in\sph(\trcjj)$, by the singular value decomposition of a trace class operator (Remark~\ref{sivadeb}),
we have that $S=\trnsum_k s_k\tre\epk$ and $T=\trnsum_l t_l\tre\cpl$, where $s_k,t_l>0$
($\sum_k s_k=\|S\trn=1=\|T\trn=\sum_l t_l$), and $\{\eta_k\},\{\phi_k\}\subset\hh$,
$\{\chi_l\},\{\psi_l\}\subset\jj$ are orthonormal systems. Hence, for every $L\in\bophj$, we obtain
the estimate
\begin{align}
|\tr((S\otimes T)L)|
& =
\big|{\textstyle \sum_{kl}}\sei s_k\tre t_l\sei\langle\phi_k\otimes\psi_l,L\tre(\eta_k\otimes\chi_l)\rhj\big|
\nonumber \\
& \le
{\textstyle \sum_{kl}}\sei s_k\tre t_l\sei|\langle\phi_k\otimes\psi_l,L\tre(\eta_k\otimes\chi_l)\rhj|
\nonumber \\
& \le
{\textstyle \sup_{kl}}\tre|\langle\phi_k\otimes\psi_l,L\tre(\eta_k\otimes\chi_l)\rhj|
\nonumber \\
& \le
\sup\{|\langle\phi\otimes\psi,L(\eta\otimes\chi)\rhj|\colon
\phi,\eta\in\hh,\ \psi,\chi\in\jj,
\nonumber \\ \label{estten}
& \phantom{\le\sup\{}
\|\phi\|=\|\eta\|=\|\psi\|=\|\chi\|=1\} \fin , \quad
\forall\cinque S\in\sph(\trc) \fin , \ \forall\cinque T\in\sph(\trcjj) \fin .
\end{align}
Moreover, recalling relation~\eqref{relfu}, we have that
\begin{align}
\|L\nog
& =
\|\fu\norps
\nonumber \\
& =
\sup\{|\tr((S\otimes T)L)|\colon S\otimes T\in\nb(\trc,\trcjj), \|S\trn=\|T\trn=1\}
\nonumber \\
& \ge
\sup\{|\langle\phi\otimes\psi,L(\eta\otimes\chi)\rhj|\colon
\phi,\eta\in\hh,\ \psi,\chi\in\jj, \ \|\phi\|=\|\eta\|=\|\psi\|=\|\chi\|=1\}
\nonumber \\
& \ge \label{samese}
\sup\{|\tr((S\otimes T)L)|\colon S\otimes T\in\nb(\sph(\trc),\sph(\trcjj))\}
\fin ,
\end{align}
where the first inequality holds by set restriction to the rank-one partial isometries of the form
$\ep\otimes\chp=\ecpp$, whereas the second one follows immediately
from the estimate~\eqref{estten}. But the supremum on the second line of~\eqref{samese} and the
supremum on the fourth line are actually taken over the same set, so that relation~\eqref{fornog}
holds true.

By any of the various equivalent expressions in~\eqref{fornog} of $\|L\nog$, it is also clear that
$\|L\nog=\|L^\ast\nog$, and by the final expression relation~\eqref{invlocuni} follows immediately.

Therefore, for every $A\in\bop$ and $B\in\bopjj$, we have:
\begin{align}
\|A\otimes B\nog
& =
\sup\{|\langle\phi\otimes\psi, (A\otimes B)(\eta\otimes\chi)\rhj|\colon
\|\phi\|=\|\eta\|=\|\psi\|=\|\chi\|=1\}
\nonumber \\
& =
\sup\{|\langle\phi,A\eta\rhh|\sei |\langle\psi, B\chi\rjj|\colon
\|\phi\|=\|\eta\|=\|\psi\|=\|\chi\|=1\}
\nonumber \\
& =
\sup\{|\langle\phi,A\eta\rhh|\colon \|\phi\|=\|\eta\|=1\} \sei
\sup\{|\langle\psi, B\chi\rjj|\colon\|\psi\|=\|\chi\|=1\}
\nonumber \\ \label{aotb}
& =
\|A\nori\tre\|B\nori \fin ,
\end{align}
where the last equality follows by Fact~\ref{fabop}.

Finally, let $\{A_k\}_{k=1}^\ka,\{E_k\}_{k=1}^\ka\subset\bop$, $\{B_k\}_{k=1}^\ka,\{F_k\}_{k=1}^\ka
\subset\bopjj$ ($\ka\in\nat$) be $\ka$-tuples of bounded operators. Note that $\|\sum_k A_k\otimes B_k-
\sum_k E_k\otimes F_k\nog\le\sum_k\|A_k\otimes B_k- E_k\otimes F_k\nog\le\sum_k(\|(A_k-E_k)\otimes B_k\nog
+\|E_k\otimes(B_k- F_k)\nog)=\sum_k(\|A_k-E_k\nori\sei\|B_k\nori+\|E_k\nori\sei\|B_k- F_k\nori)$, where
for the last equality we have used relation~\eqref{aotb}. By this estimate, it is clear that --- since
the linear subspaces of finite rank operators $\frh$, $\frj$ are $\normi$-dense in $\cph$ and $\cpj$,
respectively --- the algebraic tensor product $\frh\altp\frj$ is $\norg$-dense in $\cph\altp\cpj$.
\end{proof}

\begin{remark} \label{remfind}
In the case where $\dim(\hhjj)<\infty$, we have that the finite-dimensional linear space
$\alghj=\cphj=\bophj$ --- once endowed  with the appropriate norm, that is,
$\bophj\ni L\mapsto\|L\nog\equiv\|\fu\norps$ --- can be identified with the Banach space
dual $\thptjd$ of the projective trace class (recall~\eqref{denogamma}), and hence, by
definition~\eqref{defnorf} and by the final equation in~\eqref{inenorf}, we see that
$\norf=\ptrnor$. We will soon prove that the equality of the two norms actually holds in
the infinite-dimensional setting too (see the first assertion of Theorem~\ref{neothe} below).
\end{remark}

\begin{remark}
Identifying $\trc$, $\trcjj$ with the Banach space duals of $\cph$ and $\cpj$, respectively,
from the expression of the norm $\norg$ on the second line of~\eqref{fornog} it is clear that
this norm can be regarded as a natural extension to $\bophj$ of the \emph{injective norm} --- see,
e.g., Chapter~{I} of~\cite{Defant}, Chapter~{3} of~\cite{Ryan}, or Chapter~{7} of~\cite{Kubrusly} ---
defined on the algebraic tensor product $\cph\altp\cpj$.
\end{remark}

\begin{remark} \label{retwonors}
The restriction of the norm $\norg\colon\bophj\rightarrow\errep$ to the algebraic tensor
product $\bop\altp\bopjj$ of the Banach spaces $\bop$ and $\bopjj$ coincides with the
injective norm $\innormi$ of $\bop\altp\bopjj$. In fact, as noted in Remark~\ref{reminjec},
for every $\sum_j A_j\otimes B_j\in\bop\altp\bopjj$,
we have that
\begin{align}
\big\|{\textstyle\sum_j A_j\otimes B_j}\binnori
& \defi
\sup\big\{\big|{\textstyle\sum_j\funs (A_j)\tre\funt(B_j)}\big|\colon \funs\in\dbop,\ \funt\in\dbopjj,
\ \|\funs\norps=\|\funt\norps=1\big\}
\nonumber \\
& \dodici =
\sup\big\{\big|{\textstyle\sum_j \tr(SA_j)\tre\tr(TB_j)}\big|\colon S\in\trc,\ T\in\trcjj,
\nonumber \\ \label{norinje-tris}
& \hspace{15.4mm}
\|S\trn=\|T\trn=1\big\}=\big\|{\textstyle \sum_j A_j\otimes B_j}\big\nog \fin ,
\end{align}
where we have used the fact that the unit balls $\ball(\trc)$ and $\ball(\trcjj)$ ---
regarded as subsets of $\trc^{\ast\ast}$ and $\trcjj^{\ast\ast}$ --- are \emph{norming sets}
for $\trc^{\ast}=\bop$ and $\trcjj^{\ast}=\bopjj$, respectively (see Subsection~{4.1}
in Chapter~{I} of~\cite{Defant}, or formula~{(3.3)} in Section~{3.1} of~\cite{Ryan}),
and next the characterization of $\big\|\sum_j A_j\otimes B_j\big\nog$ as on the second
line of~\eqref{fornog}. Therefore, the Banach space completion of $\bop\altp\bopjj$ wrt the norm
$\norg$ is precisely the \emph{injective tensor product}~\cite{Defant,Ryan,DFS,Kubrusly}
$\bop\injtp\bopjj$. Since the norm $\norg$ is majorized by the standard operator norm
$\tnormi$ (recall~\eqref{defnog}), it turns out that $\bhobj\subset\bop\injtp\bopjj$,
where $\bhobj$ is the natural (or spatial) tensor product of $\bop$ and $\bopjj$, i.e.,
the Banach space completion of the algebraic tensor product $\bhabj$ wrt the norm $\tnormi$
of $\bophj$, and $\bhabj=\bhobj=\bophj$ iff $\min\{\dim(\hh),\dim(\jj)\}<\infty$ (recall
Remark~\ref{rebop}, and references therein). Also note that the injective norm of $\cph\altp\cpj$
coincides with the restriction of the injective norm $\innormi$ of $\bop\altp\bopjj$ and (thus)
of the norm $\norg$. This fact can be verified by comparing relation~\eqref{norinje-tris} with the
definition of the injective norm of $\cph\altp\cpj$ (see Chapter~{3} of~\cite{Ryan}, or
Chapter~{7} of~\cite{Kubrusly}), where  the duals of $\cph$, $\cpj$ are identified via the
trace with the Banach spaces $\trc$ and $\trcjj$, respectively. Alternatively, one can use
the fact that the injective tensor product $\cph\injtp\cpj$ is isomorphic to a closed subspace
of $\bop\injtp\bopjj$ --- by virtue of Proposition~{3.2} of~\cite{Ryan} (as observed in the
discussion following this result therein), or by Theorem~{7.19} of~\cite{Kubrusly} --- because
$\cph$, $\cpj$ are (closed) subspaces of $\bop$ and $\bopjj$, respectively, where the isomorphism
is obtained by continuously extending the natural immersion of $\cph\altp\cpj$, endowed with its
natural injective norm, into $\bop\injtp\bopjj$.
\end{remark}

In the light of the preceding two remarks, with a slight abuse, we will call the norm $\norg$ the
\emph{injective norm} of $\bophj$.

By combining the previous results with fundamental duality relations between the \emph{projective}
and the \emph{injective} norms~\cite{Defant,Ryan}, we obtain the following important fact:

\begin{theorem} \label{neothe}
The norm $\norf$ on $\thptj$, defined by~\eqref{defnorf}, coincides with the projective norm $\ptrnor$,
and, in fact, for every $C\in\thptj$ we have:
\begin{align}
\|C\nof
& \defi
\sup\{|\tr(CK)|\colon K\in\alghj, \ \|K\nog=1\}
\nonumber \\
& \dodici =
\sup\{|\tr(CK)|\colon K\in\cphj, \ \|K\nog=1\}
\nonumber \\
& \dodici =
\sup\{|\tr(CK)|\colon K\in\cph\altp\cpj, \ \|K\nog=1\}
\nonumber \\
& \dodici =
\sup\{|\tr(CF)|\colon F\in\frh\altp\frj, \ \|F\nog=1\}
\nonumber \\
& \dodici =
\sup\{|\tr(CL)|\colon L\in\bophj, \ \|L\nog=1\}
\nonumber \\ \label{varexps}
& \dodici =
\sup\big\{|\Gamma(C)|\colon\Gamma\in\big(\thptj\big)^\ast,\ \|\Gamma\norps=1\big\}
= \|C\ptrn \fin .
\end{align}
Moreover, the injective norm of $\cph\altp\cpj$ coincides with the (restriction of the) norm $\norg$
defined by~\eqref{defnog}, and the isomorphism of Banach spaces
\begin{equation} \label{isoba}
\thptj\cong\Big(\cph\intp\cpj\Big)^\ast
\end{equation}
holds, where $\cph\intp\cpj\equiv\cph\injtp\cpj$ is the injective tensor product of the Banach spaces
$\cph$ and $\cpj$, i.e., the Banach space completion of the algebraic tensor product $\cph\altp\cpj$
wrt (the restriction of) the norm $\norg$. This isomorphism is implemented by the mapping
\begin{equation} \label{isdua}
\thptj\ni C\mapsto\kac\in\Big(\cph\intp\cpj\Big)^\ast,
\end{equation}
where $\kac\colon\cph\intp\cpj\rightarrow\ccc$ is a bounded linear functional determined by
$\kac(K)=\tr(CK)$, for every $K\in\cph\altp\cpj$.
\end{theorem}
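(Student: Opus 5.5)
The proof rests on the classical duality $(\cph\injtp\cpj)^\ast\cong\trc\prtp\trcjj$, which holds because $\cph^\ast\cong\trc$ and $\cpj^\ast\cong\trcjj$ via the trace, and because $\trc$ has the approximation property and the Radon--Nikod\'ym property (Fact~\ref{appran}); see Theorem~5.33 of~\cite{Ryan}. The plan is to first identify the injective norm with $\norg$, then obtain the isomorphism~\eqref{isoba} concretely through the trace pairing, and finally read off the chain of equalities~\eqref{varexps} by a sandwich argument.

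First, by Remark~\ref{retwonors}, the injective norm of $\cph\altp\cpj$ coincides with the restriction of $\norg$ (second line of~\eqref{fornog}). Hence $\cph\intp\cpj$ is the $\norg$-completion of $\cph\altp\cpj$. By the abstract duality just recalled, every $\Phi\in(\cph\intp\cpj)^\ast$ is induced by a unique element $\sum_k S_k\otimes T_k$ of $\trc\prtp\trcjj$, through $\Phi(K\otimes K')=\sum_k\tr(S_kK)\tr(T_kK')$, and the dual norm equals the projective norm. Under the identification of Corollary~\ref{corcro} and Remark~\ref{remide}, this element is the cross operator $C=\sum_k S_k\otimes T_k$, and $\Phi(K)=\tr(CK)$ for $K\in\cph\altp\cpj$, because $\tr((S\otimes T)(K\otimes K'))=\tr(SK)\tr(TK')$ and the series converges in trace norm. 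This is exactly the map~\eqref{isdua}, and it is an isometric isomorphism, so $\|C\ptrn=\sup\{|\tr(CK)|\colon K\in\cph\altp\cpj,\ \|K\nog=1\}$. The main obstacle is precisely this step. Invoking the abstract duality requires the Radon--Nikod\'ym/approximation hypotheses, and one must check carefully that the abstract pairing agrees with the trace pairing on the concrete cross trace class. A fallback is to prove the inequality $\|C\ptrn\le$ (sup over $\cph\altp\cpj$) directly. For this, take a norming functional $\Gamma$ from~\eqref{forptrn}, with bilinear form $\gamma\in\ftrcjj$. By the isomorphism $\ftrcjj\cong\bo(\trc;\bopjj)$, approximate $\gamma$ on a fixed optimal-ish decomposition of $C$ by finite-rank truncations $\gamma(P_n\cdot P_n,Q_n\cdot Q_n)$, with $P_n,Q_n$ finite-rank projections. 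These truncations are represented by elements of $\frh\altp\frj$ of $\norg$-norm at most $\|\gamma\norb$ (they are given by finite matrices, and Proposition~\ref{pronorg} applies). Convergence then follows from dominated convergence along the absolutely summable decomposition of $C$.

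Next comes the sandwich. All the suprema in~\eqref{varexps} are bounded above by the last one, $\sup_\Gamma|\Gamma(C)|=\|C\ptrn$, since each $L$ with $\|L\nog=1$ gives $\fu$ with $\|\fu\norps=1$ by definition~\eqref{defnog}. The index sets are nested: $\frh\altp\frj\subset\cph\altp\cpj\subset\cphj\subset\alghj\subset\bophj$. The fourth line equals the third by the $\norg$-density of $\frh\altp\frj$ in $\cph\altp\cpj$ (Proposition~\ref{pronorg}) together with the continuity of $K\mapsto\tr(CK)$ in $\norg$, since $|\tr(CK)|\le\|K\nog\|C\ptrn$. The third line equals $\|C\ptrn$ by the previous step. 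Hence all lines coincide with $\|C\ptrn$, and in particular $\norf=\ptrnor$ (Proposition~\ref{threno} already gives one inequality). Note that the condition $\|\fk\norps=1$ in~\eqref{defnorf} is exactly $\|K\nog=1$.
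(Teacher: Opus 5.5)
Your proposal is correct and follows essentially the paper's proof. It uses the same classical duality $(\cph\injtp\cpj)^\ast\cong\thptj$ (via the approximation and Radon--Nikod\'ym properties of the trace classes) and the same identification of the injective norm with $\norg$ through Remark~\ref{retwonors}. It also makes the same continuity check that the abstract pairing is $K\mapsto\tr(CK)$: you pass to the limit along a trace-norm convergent series, the paper along a $\ptrnor$-convergent sequence in $\thatj$. Finally, it closes with the same nesting-plus-density sandwich combined with Proposition~\ref{threno}.

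Your fallback truncation argument is a valid, more elementary way to get the norm identities in~\eqref{varexps} without the Radon--Nikod\'ym machinery. Note, however, that it only shows the map~\eqref{isdua} is isometric, not onto, so the isomorphism~\eqref{isoba} itself still rests on the cited duality theorem.
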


\begin{proof}
The trace classes $\trc$, $\trcjj$ can be identified --- via the trace functional ---
with the Banach space duals of $\cph$ and $\cpj$, respectively; moreover, the Banach
spaces $\trc$ and $\trcjj$ have both the approximation property and the Radon-Nikod\'ym
property (recall Fact~\ref{appran}). Therefore, by a classical result --- see, e.g., the
theorem in Subsection~{16.6}, Chapter~{II} of~\cite{Defant}, or Theorem~{5.33} of~\cite{Ryan}
--- the cross trace class $\thptj$, regarded as a projective tensor product, is isomorphic to
the Banach space dual of the injective tensor product $\cph\injtp\cpj$ (the Banach space
completion of the algebraic tensor product $\cph\altp\cpj$ wrt the injective norm). Note that,
by Remark~\ref{retwonors}, the injective norm of $\cph\altp\cpj$ coincides with the restriction
of both the injective norm $\innormi$ of $\bop\altp\bopjj$ and the norm $\norg$, i.e., the
so-called injective norm of $\bophj$. The mentioned isomorphism of Banach spaces is induced by
the natural pairing --- see Subsection~{16.6}, Chapter~{II} of~\cite{Defant} ---
$\tr(SA)\tr(TB)=\tr((S\otimes T)(A\otimes B))$ between the elementary tensors $S\otimes T$ and
$A\otimes B$, with $S\in\trc=\cph^\ast$, $S\in\trcjj=\cpj^\ast$, $A\in\cph$ and $B\in\cpj$; i.e.,
by the mapping
\begin{equation}
\big(\thatj,\ptrnor\big)\ni{\textstyle \sum_k}\tre S_k\otimes T_k \mapsto
{\textstyle \sum_k}\tre \tr((S_k\otimes T_k)\argo)\in\big(\cph\altp\cpj,\norg\big)^\ast,
\end{equation}
where, with a slight abuse of notation, $\norg$ is the restriction to the algebraic tensor product
$\cph\altp\cpj$ of the injective norm of $\bophj$ (i.e., the injective norm of $\cph\altp\cpj$) and
\begin{equation}
(\cph\altp\cpj,\norg)^\ast=\Big(\cph\intp\cpj\Big)^\ast\equiv\big(\cph\injtp\cpj\big)^\ast .
\end{equation}
Therefore, denoting by $\kac$ the bounded linear functional on $\cph\intp\cpj$, corresponding
to the cross trace class operator $C$ via the isomorphism~\eqref{isdua}, we have that
\begin{equation} \label{realt}
\kac(K)=\tr(CK) \fin , \quad \forall\cinque C\in\thatj, \ \forall\cinque K\in\cph\altp\cpj \fin .
\end{equation}
Let us prove that this relation extends to \emph{every} cross trace class operator $C\in\thptj$.

Indeed, if $\{C_n\}_{n\in\nat}\subset\thatj$ is any $\ptrnor$-convergent sequence --- $\ptrnorli_n C_n=C$,
for some $C\in\thptj$ --- then
\begin{equation} \label{repro}
\kac(K)=\lim_n\kacn(K)=\lim_n\tr(C_n K)=\tr(CK) \fin , \quad
\forall\cinque K\in\cph\altp\cpj \fin .
\end{equation}
Here, the first equality holds because of the isomorphism~\eqref{isdua}, the second one by
relation~\eqref{realt}, while for the last equality we have:
$\lim_n\tr(C_n K)=\tr\big(\ttrnorli_n C_n K\big)=\tr\big(\ptrnorli_n C_n K\big)=\tr(CK)$
(the trace norm $\ttrnor$ being majorized by the projective norm $\ptrnor$).

By relation~\eqref{repro}, for every $C\in\thptj$, the following relation holds:
\begin{equation}
\|C\ptrn=\sup\{|\kac(K)|=|\tr(CK)|\colon K\in\cph\altp\cpj, \ \|K\nog=1\}\le\|C\nof \fin .
\end{equation}
Here, the inequality follows from definition~\eqref{defnorf}, taking into account that $\|K\nog\defi\|\fk\norps$.
Now, by Proposition~\ref{threno}, $\|C\ptrn\ge\|C\nof$, as well, so that, actually,
$\ptrnor=\norf$; in particular, for every $C\in\thptj$ we have:
\begin{equation} \label{thexp}
\|C\nof=\sup\{|\kac(K)|=|\tr(CK)|\colon K\in\cph\altp\cpj, \ \|K\nog=1\}=\|C\ptrn \fin .
\end{equation}

Finally, by relation~\eqref{thexp} --- and taking into account that $\alghj\supset\cphj\supset\cph\altp\cpj$,
that $\frh\altp\frj$ is $\norg$-dense in $\cph\altp\cpj$ (by the final assertion of Proposition~\ref{pronorg})
and that $\cph\altp\cpj\subset\bophj\subset(\thptj\big)^\ast$ --- it is also clear that the various equivalent
expressions~\eqref{varexps} of the norm $\norf=\ptrnor$ hold true.
\end{proof}

\begin{remark} \label{reneothe}
In the case where $C\in\thsptjs$, in the various equivalent expressions~\eqref{varexps}
of $\|C\nof=\|C\ptrn$ we can assume that the operators $K$, $F$ and $L$ therein (belonging
to their respective spaces) are \emph{selfadjoint}. Indeed, e.g., putting
$\kre\defi\frac{1}{2}(K+K^\ast)$, we have that
\begin{align}
\|C\ptrn=\|C\nof
& \defi
\sup\{|\tr(CK)|\colon K\in\alghj, \ \|K\nog=1\}
\nonumber \\
& \dodici =
\sup\{|\repa(\tr(CK))|\colon K\in\alghj, \ \|K\nog=1\}
\nonumber \\
& \dodici =
\sup\{|\tr(C\kre)|\colon K\in\alghj, \ \|K\nog=1\}
\nonumber \\
& \dodici =
\sup\{|\tr(CK)|\colon K\in\alghj, \ K=K^\ast, \ \|K\nog=1\}
\nonumber \\
& \dodici =
\sup\{\tr(CK)\colon K\in\alghj, \ K=K^\ast, \ \|K\nog=1\} \fin .
\end{align}
Here, the second equality is due to the fact that $\|K\nog=\|zK\nog$, for every $z\in\toro$,
while, for the third equality, notice that $\repa(\tr(CK))=\frac{1}{2}(\tr(CK)+\tr(CK)^\ast)=
\frac{1}{2}(\tr(CK)+\tr(CK^\ast))=\tr(C\kre)$, where we have used the fact that $C=C^\ast$ and
the cyclic property of the trace. Next, for the fourth equality, note that
$\|\kre\nog\le\frac{1}{2}(\|K\nog+\|K^\ast\nog)=\|K\nog$, which would imply that the supremum
on the third line is smaller that the supremum on the fourth line; but, comparing the latter with
the supremum on the first line, we see that the reverse inequality holds too, and we actually have
an equality. Finally, the fifth line obtains noting that $|\tr(CK)|=\max\{\tr(CK),\tr(C(-K))\}$ therein.
\end{remark}

In the light of Theorem~\ref{neothe}, we can now endow the Banach space $\thptj$ with the
$\sigma\big(\thptj,\cph\injtp\cpj\big)$-topology, i.e., the weak$^\ast$ topology pertaining
to the Banach space dual $\thptj$ of the injective tensor product $\cph\injtp\cpj$. In the
following, we will call this topology the \emph{cross $\wst$-topology} of $\thptj$ or, in short,
the \emph{$\cwst$-topology}. Since $\cph\altp\cpj$ is norm-dense in $\cph\injtp\cpj$, by
Fact~\ref{wstatop}, a \emph{$\ptrnor$-bounded} net $\{C_i\}\subset\thptj$ will converge to some
$C\in\thptj$ wrt the $\cwst$-topology iff $\lim_i\tr(C_iK)=\tr(CK)$, for all $K\in\cph\altp\cpj$;
or, equivalently, iff
\begin{equation}
\lim_i\tr(C_i(K_1\otimes K_2))=\tr(C(K_1\otimes K_2)) \fin , \quad \forall\cinque K_1\in\cph, \
\forall\cinque K_2\in\cpj \fin .
\end{equation}

\begin{remark} \label{remnocl}
Assume that $\dim(\hhjj)=\infty$. In this case, the zero vector belongs to the $\cwst$-closure
\begin{equation}
\cwstclb\equiv\cwstcl(\shjb)
\end{equation}
of the convex set $\shjb=\stahjb$. In fact, given any pair of orthonormal bases $\{\phi_m\}_{m\in\indm}$,
$\{\phi_n\}_{n\in\indn}$ in $\hh$ and $\jj$, respectively ---
where $\indm$, $\indn$ are countable index sets, at least one of which can be chosen to be $\nat$ ---
let $\{\vek\}_{k=1}^\infty$ denote the orthonormal basis in the Hilbert space $\hhjj$
obtained by any ordering of the set $\{\phi_m\otimes\psi_n\colon (m,n)\in\indm\times\indn\}$.
Then, arguing as in Example~\ref{exaclo}, we put
\begin{align}
\varsigma_l\defi l^{-1}\sum_{k=1}^l\veek
& \in
\co(\nb(\pusta,\pustajj))
\nonumber \\
& \subset
\fsta\defi\co(\nb(\sta,\stajj))\subset\shjb \fin , \quad l\in\nat \fin ,
\end{align}
and for every $A\in\trchj$ we have that
\begin{equation} \label{limitre}
\lim_l\sei\tr(\varsigma_l A)=
\lim_l\Big(l^{-1}\sei\tr\big({\textstyle (\sum_{k=1}^l\veek) A}\big)\Big)=0  \fin ,
\end{equation}
because $\lim_l \tr\big({\textstyle (\sum_{k=1}^l\veek) A}\big)=
\sum_{k=1}^\infty\langle\vek,A\sei\vek\rangle=\tr(A)$. Since the trace class $\trchj$ is
$\tnormi$-dense in $\cphj$, given any $K\in\cphj$ and $A\in\trchj$, by the estimate
\begin{align}
|\tr(\varsigma_l K)|\le |\tr(\varsigma_l (K-A))|+ |\tr(\varsigma_l A)|
& \le
\|\varsigma_l (K-A)\ttrn + \|\varsigma_l A \ttrn
\nonumber \\
& \le
\|\varsigma_l\ttrn\sei\|K-A\tnori + \|\varsigma_l\tnori\sei\|A\ttrn
\nonumber \\
& =
\|K-A\tnori+l^{-1}\sei\|A\ttrn  \fin ,
\end{align}
it follows that relation~\eqref{limitre} actually holds for every $A\in\cphj$; hence, in particular,
for every $A\in\cph\altp\cpj$. Therefore,
$\cwstli_l \varsigma_l=0\in\clcocws(\nb(\pusta,\pustajj))\subset\cwstcl(\shjb)$. By this fact, it also
follows that, for any $D\in\shjb$ and $s\in [0,1]$, putting $C_l\defi s\sei D + (1-s)\varsigma_l$,
we have that $\cwstli_l C_l=s\sei D$; hence: $\co(\shjb\cup\{0\})=\{s\sei D\colon s\in [0,1],
\ D\in\shjb\}\subset\cwstcl(\shjb)$.
\end{remark}

\begin{proposition} \label{procwst}
The convex set $\shjb$ is a
$\cwst$-precompact subset of $\thptj$, that is closed --- hence, compact --- iff $\dim(\hhjj)<\infty$.
In the case where $\dim(\hhjj)=\infty$, the $\cwst$-compact set $\cwstcl(\shjb)$ satisfies the relation
\begin{align}
\sesta\subsetneq\clcocws(\nb(\pusta,\pustajj))
& \subset
\clcocws(\shjb\cup\{0\})
\nonumber\\
& =
\cwstcl(\shjb)\subset\balp \fin ,
\end{align}
where $\balp\defi\big\{C\in\thptj\colon C\ge 0, \ \|C\ptrn\le 1\big\}$ is the convex set of all
positive cross trace class operators belonging to the unit ball of $\thptj$.
\end{proposition}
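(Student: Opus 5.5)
We split the argument into the precompactness part, the closedness dichotomy, and the chain of inclusions for the infinite-dimensional case.

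\textbf{Precompactness.} By Theorem~\ref{neothe}, $\thptj\cong\big(\cph\injtp\cpj\big)^\ast$ isometrically. The $\cwst$-topology is the corresponding weak$^\ast$ topology. Since $\shjb\subset\ball\big(\thptj\big)$, the Banach--Alaoglu theorem shows that the $\cwst$-closure of $\shjb$ is compact. Hence $\shjb$ is $\cwst$-precompact, and $\cwstcl(\shjb)$ is $\cwst$-compact.

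\textbf{Closedness when $\dim(\hhjj)<\infty$.} Here all Hausdorff vector topologies on the finite-dimensional space $\thptj=\bophj$ coincide. So $\cwst$-closedness is the same as norm-closedness. Now $\shjb=\stahj\cap\ball\big(\thptj\big)$ is norm-closed, being the intersection of two closed sets. It is therefore compact.

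\textbf{Failure of closedness when $\dim(\hhjj)=\infty$.} By Remark~\ref{remnocl}, $0\in\cwstcl(\shjb)$, while $0\notin\shjb$ because every state has trace $1$. So $\shjb$ is not $\cwst$-closed.

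\textbf{The chain of inclusions.} Assume $\dim(\hhjj)=\infty$.

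First, $\clcocws(\shjb\cup\{0\})=\cwstcl(\shjb)$. Remark~\ref{remnocl} gives $\co(\shjb\cup\{0\})\subset\cwstcl(\shjb)$. Conversely, $\cwstcl(\shjb)$ is convex, as the closure of a convex set in a locally convex space, and it contains $\shjb\cup\{0\}$. Using the analogue of Remark~\ref{recloco}, this yields the equality.

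Second, $\sesta\subset\clcocws(\nb(\pusta,\pustajj))$. By Corollary~\ref{corconta}, $\sesta=\clcop(\nb(\pusta,\pustajj))$. The $\cwst$-topology is weaker than the $\ptrnor$-topology, so every $\ptrnor$-closed convex hull is contained in the corresponding $\cwst$-closed convex hull.

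Third, $\clcocws(\nb(\pusta,\pustajj))\subset\clcocws(\shjb\cup\{0\})$. This is immediate from $\nb(\pusta,\pustajj)\subset\shjb$.

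Fourth, the inclusion $\sesta\subset\clcocws(\nb(\pusta,\pustajj))$ is strict. Remark~\ref{remnocl} shows that $0\in\clcocws(\nb(\pusta,\pustajj))$, whereas $0\notin\sesta$.

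\textbf{The last inclusion, $\cwstcl(\shjb)\subset\balp$.} We show that $\balp$ is $\cwst$-closed; since it contains $\shjb$, the inclusion follows.
\begin{itemize}
\item The ball $\ball\big(\thptj\big)$ is weak$^\ast$-closed, being the dual unit ball.
\item Positivity is preserved under limits. Suppose a net $C_i\ge 0$ converges to $C$ in the $\cwst$-topology. Then $\tr\big(C(\vaa)\big)=\lim_i\tr\big(C_i(\vaa)\big)\ge 0$ for every product vector $\va=\phi\otimes\psi$, because the projection $\vaa$ lies in $\cph\altp\cpj$.
\end{itemize}

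\textbf{Main obstacle.} The delicate point is that positivity on product vectors alone does not imply $C\ge 0$. To get full positivity we must test against $\vaa$ for arbitrary $\va\in\hhjj$; such projections lie in $\cphj$ but not in $\cph\altp\cpj$.

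The plan for this step is as follows. Approximate $\va$ in norm by vectors $\va_n$ of finite Schmidt rank. Each projection $\hva_n\defi|\va_n\rangle\hspace{-0.3mm}\langle\va_n|$ belongs to $\frh\altp\frj$, and $\|\hva_n-\vaa\nog\le\|\hva_n-\vaa\tnori\to0$. The net $\{C_i\}$ is uniformly $\ptrnor$-bounded, and the functional $\Xi\mapsto\tr(C\Xi)$ extends continuously to $\cph\injtp\cpj$ with $|\tr(CK)|\le\|C\ptrn\|K\nog$. This lets us pass to the limit:
\[
\tr\big(C\,\vaa\big)=\lim_n\tr\big(C\,\hva_n\big)\ge 0 .
\]
Hence $C\ge 0$, so $\balp$ is $\cwst$-closed and contains $\cwstcl(\shjb)$.
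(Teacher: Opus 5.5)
Your proof is correct and follows the paper's argument essentially step by step. The steps are Banach--Alaoglu for precompactness, coincidence of topologies in finite dimension, Remark~\ref{remnocl} for $0$ lying in the closure, and weak$^\ast$ versus norm closed convex hulls. Positivity is then tested on finite-Schmidt-rank vectors $\va\in\hh\altp\jj$ and extended to all of $\hhjj$ by density. The only cosmetic difference is that you show $\balp$ itself is $\cwst$-closed, whereas the paper tracks nets coming from $\shjb$ directly.
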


\begin{proof}
To prove the first claim, observe that $\shjb$ is a subset of the unit ball in $\thptj$,
which, by the Banach-Alaoglu theorem (Theorem~{2.6.18} of~\cite{Megginson}), is compact wrt
its weak$^\ast$ topology, i.e., the cross $\wst$-topology. Therefore, $\shjb$ is precompact wrt
the $\cwst$-topology. In the case where $\dim(\hhjj)<\infty$, the $\cwst$-topology coincides with
any norm topology on the finite-dimensional vector space $\thptj$, and hence $\shjb$ is a $\cwst$-closed
set, because it is the intersection of the norm-closed sets $\stahj$ and $\ball\big(\thptj\big)$.

Assume now that $\dim(\hhjj)=\infty$. By Remark~\ref{remnocl}, $\shjb$ is not $\cwst$-closed,
and its $\cwst$-closure contains $\clcocws(\nb(\pusta,\pustajj))$, which, since the
$\cwst$-topology is a subtopology of the $\ptrnor$-topology, contains
$\clcop(\nb(\sta,\stajj))=\sesta$ (Corollary~\ref{corconta}); moreover, $0\in\clcocws(\nb(\pusta,\pustajj))$.
Thus, $\sesta\subsetneq\clcocws(\nb(\pusta,\pustajj))\subset\clcocws(\shjb\cup\{0\})$, where the last
inclusion relation is due to the fact that $\nb(\pusta,\pustajj)\subset\shjb$. In Remark~\ref{remnocl},
we also argue that $\co(\shjb\cup\{0\})\subset\cwstcl(\shjb)$; hence, we have:
$\cwstcl(\shjb)\subset\cwstcl(\co(\shjb\cup\{0\}))=\clcocws(\shjb\cup\{0\})\subset\cwstcl(\shjb)$, where
the first inclusion relation is obvious and the equality holds by Remark~\ref{recloco}.
Let us finally prove that the compact subset $\cwstcl(\shjb)$ of the unit ball in $\thptj$ consists of
\emph{positive} operators.
To this end, it is sufficient to to observe that, if $\{D_i\}$ is a net in $\shjb$ that converges, wrt the
$\cwst$-topology, to some $C\in\thptj$ --- i.e., $\lim_i\tr(D_i K)=\tr(C K)$, for all $K\in\cph\altp\cpj$ ---
then, in particular, we have:
\begin{equation}
0\le\lim_i\sei\langle\va,D_i\tre\va\rhj=\lim_i\sei\tr(D_i\tre\vaa)=\tr(C\tre\vaa)=\langle\va,C\tre\va\rhj ,
\quad \forall\cinque\va\in\hh\altp\jj \fin ;
\end{equation}
hence, $\langle\va,C\tre\va\rhj\ge 0$, for all $\va\in\hhjj$. Thus, $\cwstcl(\shjb)\subset\balp$.
\end{proof}

According to Remark~\ref{remnocl} and Proposition~\ref{procwst}, in the case where $\dim(\hhjj)=\infty$,
the cross $\wst$-topology is too weak to be useful for our purpose of suitably characterizing the set
$\sesta$; e.g., $0\in\clcocws(\nb(\pusta,\pustajj))\supsetneq\sesta$. It will turn out that, instead,
the extended $\wst$-topology of $\trchj$ --- or, equivalently, the standard topology of $\stahj$ ---
does the job. In this regard, as it will be clear soon, a crucial point is that, for our main purposes,
we may consider the elements of the cross trace class $\thptj$ as bounded functionals on the
$\cast$-algebra $\big(\alghj,\tnormi\big)$ (that allows us to define the $\ewst$-topology of $\trchj$)
--- or, equivalently, on the normed space $(\alghj,\norg)$ --- rather than on the Banach space
$\cph\injtp\cpj$ (as it would be natural in the light of Theorem~\ref{neothe}); see Proposition~\ref{pronor}
and Remark~\ref{refuns} below.

Before moving forward in this direction, for the sake of completeness, it is worth observing that
the injective tensor product $\bop\injtp\bopjj$ can be identified
with a closed subspace of the Banach space dual of the cross trace class $\thptj$. Moreover --- in the
case where at least one of the Hilbert spaces of the bipartition $\hhjj$ is finite-dimensional --- the
norms $\ttrnor$ and $\norf=\ptrnor$ on $\trchj=\thptj$ are mutually equivalent (recall Theorem~\ref{theine}),
and, analogously, the norms $\norg$ and $\tnormi$ on $\bophj$ are equivalent; in addition, in this case we
can identify the dual space of $\thptj$ with $\bop\injtp\bopjj$, and the latter space can be realized precisely
as the renorming $(\bophj,\norg)$ of the Banach space $(\bophj,\tnormi)$. In fact, the following result holds:

\begin{theorem} \label{proreno}
The injective tensor product $\bop\injtp\bopjj$ is (isomorphic to) a closed subspace of the Banach
space dual $\big(\thptj\big)^\ast$ of $\thptj$.

Assume now that $\emme=\min\{\dim(\hh),\dim(\jj)\}<\infty$.

Then, the norms $\ttrnor$, and $\norf=\ptrnor$ are equivalent on $\trchj=\thptj$ (a set equality
being understood). Indeed, for every $C\in\trchj=\thptj$, the following inequalities hold:
\begin{equation} \label{estgen-b}
\|C\ttrn\le\|C\nof=\|C\ptrn\le\enne\sei\|C\ttrn , \quad
\mbox{where $\enne=\min\big\{4\emme,\emme^2\big\}$;}
\end{equation}
in particular, for the selfadjoint trace class operators and for the density operators on $\hhjj$,
the following stronger estimates hold:
\begin{equation} \label{estsa-b}
\|C\ttrn\le\|C\nof=\|C\ptrn\le 2\emme\sei\|C\ttrn , \quad \forall\cinque C\in\trchjsa=\thsptjs \fin ,
\end{equation}
\begin{equation} \label{estdens-b}
1=\|D\ttrn\le\|D\nof=\|D\ptrn\le\emme \fin , \quad \forall\cinque D\in\stahj=\hstahj \fin .
\end{equation}

Moreover, the algebraic, the natural (or spatial) and the injective tensor products of $\bop$ and
$\bopjj$ all coincide, as linear spaces, with $\bophj$, i.e.,
\begin{equation} \label{coinclin}
\bhabj=\bhobj=\bop\injtp\bopjj=\bophj \fin ,
\end{equation}
and the injective norm $\norg$ of $\bophj$ is equivalent to the the operator norm $\tnormi$; precisely,
for every bounded operator $L\in\bophj$,
\begin{equation} \label{estgen-bis}
\|L\nog\le\|L\tnori\le\enne\sei\|L\nog , \quad
\mbox{where $\enne=\min\big\{4\emme,\emme^2\big\}$.}
\end{equation}
The dual space $\big(\thptj\big)^\ast$ of $\thptj$ can be identified --- as a linear space and
via the trace functional --- with $\bophj=\bhabj$, endowed with the injective norm $\norg$; in
particular, every functional $\Gamma\in\big(\thptj\big)^\ast$ is of the form
\begin{equation} \label{forbofu-bis}
\Gamma(C)=\sum_{j=1}^{\je} \tr\big(C\tre (A_j\otimes B_j)\big) \fin , \quad
\forall\cinque C\in\trchj=\thptj \fin ,
\end{equation}
for some $\je$-tuples of bounded operators $\{A_j\}_{j=1}^{\je}\subset\bop$ and
$\{B_j\}_{j=1}^{\je}\subset\bopjj$, with $\je\le\ka=\emme^2$.

Finally, for every $C\in\trchj=\thptj$, we have:
\begin{equation} \label{idedua}
\|C\ptrn=\max\{|\tr(CL)|\colon L\in\bophj, \ \|L\nog=1\} \fin .
\end{equation}
{\rm (In comparison with the fourth line of~\eqref{varexps}, here we have a \emph{maximum}
rather than a \emph{supremum}.)}
\end{theorem}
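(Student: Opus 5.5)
The plan is to deduce nearly everything from Theorem~\ref{theine}, Remark~\ref{reminjec}, Remark~\ref{retwonors}, and Theorem~\ref{neothe}, plus one genuinely new duality argument for the first claim.

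\emph{First claim.} By Remark~\ref{retwonors}, $\bop\injtp\bopjj$ is the $\norg$-completion of $\bop\altp\bopjj$. Consider the linear map $\bop\altp\bopjj\ni L\mapsto\fu\in\big(\thptj\big)^\ast$. By~\eqref{relfu} and the second line of~\eqref{fornog}, it satisfies $\|\fu\norps=\|L\nog$, so it is an isometry for $\norg$. It therefore extends to an isometric embedding of the completion $\bop\injtp\bopjj$ into the Banach space $\big(\thptj\big)^\ast$. The range of an isometry from a complete space is complete, hence closed, which proves the first assertion.

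\emph{Finite-dimensional case: trace class side.} Assume $\emme<\infty$. The estimates~\eqref{estgen-b}--\eqref{estdens-b} are exactly~\eqref{estgen},~\eqref{estsa},~\eqref{estdens} of Theorem~\ref{theine}, rewritten using $\norf=\ptrnor$ from Theorem~\ref{neothe}.

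\emph{Finite-dimensional case: operator side.} For~\eqref{coinclin}, Remark~\ref{rebop} gives $\bhabj=\bhobj=\bophj$. For~\eqref{estgen-bis}:
\begin{itemize}
\item The inequality $\|L\nog\le\|L\tnori$ is~\eqref{defnog}.
\item For the reverse, by Fact~\ref{norsub}, $\|L\tnori=\sup\{|\tr(CL)|\colon C\in\trchj,\ \|C\ttrn=1\}$.
\item Since $\|C\ptrn\le\enne\|C\ttrn$, we get $|\tr(CL)|\le\|L\nog\,\|C\ptrn\le\enne\|L\nog$ for such $C$.
\end{itemize}
Hence $\norg$ and $\tnormi$ are equivalent on $\bophj=\bhabj$. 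It follows that $(\bhabj,\norg)$ is already complete, so the $\norg$-completion adds nothing and $\bop\injtp\bopjj=\bophj$.

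\emph{Identification of the dual.} Every $\Gamma\in\big(\thptj\big)^\ast$ is $\ttrnor$-bounded, by the norm equivalence. Hence $\Gamma=\tr(\argo L)$ for a unique $L\in\bophj=\trchjd$, with $\|\Gamma\norps=\|\fu\norps=\|L\nog$. So the map $L\mapsto\fu$ is a surjective isometry from $(\bophj,\norg)$ onto $\big(\thptj\big)^\ast$. The representation~\eqref{forbofu-bis} with $\je\le\emme^2$ then follows as in Corollary~\ref{corsemif}.

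\emph{Maximum in~\eqref{idedua}.} This is the one point needing care: the supremum in~\eqref{varexps} is not obviously attained. By Hahn--Banach (as in Corollary~\ref{cornorc}), there is a norming $\Gamma_0$ with $\|\Gamma_0\norps=1$ and $\Gamma_0(C)=\|C\ptrn$. By the identification just proved, $\Gamma_0=\Gamma_{L_0}$ for some $L_0\in\bophj$ with $\|L_0\nog=1$. Then $\tr(CL_0)=\|C\ptrn$, so the supremum is attained and is a maximum.
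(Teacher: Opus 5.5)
Your proposal is correct and takes essentially the same route as the paper: the isometric extension of $L\mapsto\fu$ for the first claim, and Theorem~\ref{theine} together with Theorem~\ref{neothe} for the trace-class estimates. The remaining steps also match: duality for~\eqref{estgen-bis} and the resulting completeness of $(\bhabj,\norg)$; the identification of the dual, which you get directly from $\trchjd=\bophj$ instead of via Corollary~\ref{corsemif}; and a Hahn--Banach norming functional to obtain the maximum in~\eqref{idedua}.

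One small citation slip: the identity $\|L\tnori=\sup\{|\tr(CL)|\colon \|C\ttrn=1\}$ comes from the duality $\trchjd=\bophj$ (or Fact~\ref{fabop} applied to rank-one $C$). It does not come from Fact~\ref{norsub}, which concerns the reverse pairing.
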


\begin{proof}
Let us prove the first assertion. By Remark~\ref{retwonors}, the restriction of the norm $\norg$
to the algebraic tensor product $\bop\altp\bopjj$ coincides with the injective norm $\innormi$ of
$\bop\altp\bopjj$. Therefore, for every $L\in\bop\altp\bopjj$,
\begin{equation}
\|L\innori=\|L\nog=\|\fu\norps\defi\big\{|\tr(CL)|\colon C\in\thptj,\ \|C\ptrn=1\big\}.
\end{equation}
Hence, the map $(\bop\altp\bopjj,\innormi)\ni L\mapsto\tr(\argo L)\in\big(\thptj\big)^\ast$
is a linear isometry that extends uniquely to an isometry of $\bop\injtp\bopjj$ into
$(\thptj\big)^\ast$. Otherwise stated, $\bop\injtp\bopjj$ is isomorphic to a closed subspace of
$\big(\thptj\big)^\ast$.

Suppose that, in particular, $\emme=\min\{\dim(\hh),\dim(\jj)\}<\infty$.

Let us prove the second assertion. In fact, in the case where $\emme=\min\{\dim(\hh),\dim(\jj)\}<\infty$,
by Theorem~\ref{theine}, the linear spaces $\thptj$ and $\trchj$ coincide, and the associated norms
$\ptrnor$ and $\ttrnor$ are equivalent; specifically --- depending on which case applies --- they
satisfy relations~\eqref{estgen}, \eqref{estsa} or~\eqref{estdens}. Moreover, by Theorem~\ref{neothe},
for every trace class operator $C\in\trchj=\thptj$, we have that $\|C\ttrn\le\|C\nof=\|C\ptrn$. Hence,
relations~\eqref{estgen-b}, \eqref{estsa-b} and~\eqref{estdens-b} hold true.

To prove the third assertion, note that, by Corollary~\ref{corsemif} (recall also Remark~\ref{rebop}),
in this case the dual space $\big(\thptj\big)^\ast$ can be identified --- as a linear space --- with
$\bhabj=\bhobj=\bophj$ (and every bounded linear functional $\Gamma$ on $\thptj$ can be expressed in
the form~\eqref{forbofu-bis}), endowed with a norm that is equivalent to the operator norm $\tnormi$,
and, by definition ($\|L\nog\defi\|\fu\norps$), is precisely the norm $\norg$, which is the injective
norm of $\bhabj=\bophj$ (Remark~\ref{retwonors}). Now, since, for every $C\in\trchj=\thptj$, we have
that $\|C\ttrn\le\|C\ptrn\le\enne\sei\|C\ttrn$ --- by standard Banach space polarity arguments (see,
e.g., Sect.~{3.1} of~\cite{Guirao}) --- the corresponding (equivalent) dual norms $\tnormi$ and $\norg$
must verify relation~\eqref{estgen-bis}. Moreover, since the two norms $\tnormi$ and $\norg$ are equivalent,
the algebraic tensor product $\bhabj$ must be complete wrt the injective norm $\norg$ too; hence, we have
that $\bhabj=\bop\injtp\bopjj$ (set equality), as well, and relation~\eqref{coinclin} holds true.

Finally, by our previous identification of the dual space $\big(\thptj\big)^\ast$ with a renorming
of the Banach space $\bophj$, relation~\eqref{idedua} holds too.
\end{proof}

\subsection{The cross trace class operators as suitable functionals}
\label{functionals}

The norm $\norg$ allows us to relate the projective norm $\norf=\ptrnor$ of $\thptj$ to a norm
on a suitable class of bounded functionals on $\alghj$. Precisely, we select those functionals
of $\alghjd$ that are induced by cross trace class operators; namely, we introduce the linear
subspace $\halghj$ of $\alghjd$ defined by
\begin{equation}
\halghj\defi\big\{\digu\equiv\padigu\in\alghjd\colon C\in\thptj\big\} .
\end{equation}
with $\pail K,\digu\pair\equiv\digu(K)=\tr(CK)\in\ccc$ denoting the canonical pairing between
$K\in\alghj$ and $\digu\in\alghjd$. Recalling the isometry $\identi\colon\trchj\rightarrow\alghjd$
defined by~\eqref{defident}, we see that $\halghj=\identi\big(\thptj\big)$.

For every $C\in\thptj$, by definitions~\eqref{defnorf} and~\eqref{defnog}, and by Theorem~\ref{neothe},
we have that
\begin{align}
\|C\ptrn=\|C\nof
& =
\sup\{|\fk(C)|=|\tr(CK)|\colon K\in\alghj, \ \|\fk\norps=1\}
\nonumber \\
& = \label{defnordg}
\sup\{|\digu(K)|=|\tr(CK)|\colon K\in\alghj, \ \|K\nog=1\}\ifed\|\digu\nodg \fin .
\end{align}
We then obtain a norm $\nordg$ on the linear space $\halghj=\identi\big(\thptj\big)\subset\alghjd$
of all bounded functionals on $\alghj$ of the form $\digu\equiv\padigu$, with $C\in\thptj$; see
Proposition~\ref{pronor} below. Note that, since the norm $\norg$ is dominated by the operator norm
$\tnormi$ (and $\identi$ is an isometry),
\begin{align}
\|\digu\nodg\ge\|\digu\norps
& \defi
\sup\big\{|\digu(K)|=|\tr(CK)|=\big|\big(\identi(C)\big)(K)\big|\colon K\in\alghj, \ \|K\tnori=1\big\}
\nonumber \\
& \dodici =
\|C\ttrn \fin , \quad \forall\cinque \digu\in\halghj\subset\alghjd .
\end{align}

\begin{proposition} \label{pronor}
By setting $\|\digu\nodg\defi\sup\{|\digu(K)|=|\tr(CK)|\colon K\in\alghj, \ \|K\nog=1\}$,
for every $C\in\thptj$, we obtain a norm $\nordg$ on the linear space $\halghj\subset\alghjd$.
This norm satisfies the following relation:
\begin{equation} \label{redig}
\|C\ttrn\le\|\digu\nodg=\|C\nof=\|C\ptrn \fin , \quad \forall\cinque C\in\thptj \fin .
\end{equation}
Then, by the rhs equality above, the bijective linear map
\begin{equation} \label{liboco}
\iden\colon\big(\thptj,\ptrnor\big)\ni C\mapsto\digu\in\big(\halghj,\nordg\big)
\end{equation}
is an isometry, and, as a consequence, $\big(\halghj,\nordg\big)$ is a Banach space, isomorphic to
the cross trace class $\thptj\cong\big(\cph\injtp\cpj\big)^\ast$; precisely, it is a closed subspace
of the Banach space
\begin{equation}
\alghjgd\equiv\big(\alghj,\norg\big)^\ast,
\end{equation}
i.e., the dual of the normed space $\alghjg\equiv\big(\alghj,\norg\big)$. In particular, the norm $\nordg$
is the restriction to the subspace $\halghj$ of the dual norm $\dunorg$ of $\norg$.

In the case where $\emme=\min\{\dim(\hh),\dim(\jj)\}<\infty$, $\alghjg\equiv\big(\alghj,\norg\big)$
is --- up to normalization of the norm, i.e., replacing the norm $\norg$ with $\enne^2\norg$, where
$\enne=\min\big\{4\emme,\emme^2\big\}$ --- a (unital) Banach algebra, which is a renorming of the
$\cast$-algebra $\big(\alghj,\tnormi\big)$. Moreover, in this case, $\halghj=\identi(\trchj)$
(set equality), and the Banach subspace $\big(\halghj,\nordg\big)$ of
$\alghjgd\equiv\big(\alghj,\norg\big)^\ast$ is a renorming of the closed subspace
\begin{equation}
\identi(\trchj)\cong\trchj\cong\cphj^\ast
\end{equation}
of $\alghjd$; specifically, for every $C\in\trchj=\thptj$ (set equality), we have:
\begin{equation} \label{relreno}
\|\digu\norps=\|C\ttrn\le\|\digu\nodg=\|C\nof=\|C\ptrn\le\enne\sei\|C\ttrn=\enne\sei\|\digu\norps
\fin ,
\end{equation}
where $\|\digu\norps$ is the norm of $\digu$ as a functional on $\big(\alghj,\tnormi\big)$ and
$\enne=\min\big\{4\emme,\emme^2\big\}$.
\end{proposition}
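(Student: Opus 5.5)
The plan is to take each claim of Proposition~\ref{pronor} in turn and derive it from Theorem~\ref{neothe} and Proposition~\ref{pronorg}.

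\emph{Norm and isometry.} By definition~\eqref{defnordg} and Theorem~\ref{neothe}, $\|\digu\nodg=\|C\nof=\|C\ptrn$. The bound $\|C\ttrn\le\|C\ptrn$ is relation~\eqref{inenorm}. The map $\iden$ is the restriction of the injective linear isometry $\identi$, so it is a linear bijection of $\thptj$ onto $\halghj$. Hence $\nordg$ is just the pushforward of $\ptrnor$ under $\iden$; it is automatically a norm, and $\iden$ is an isometry. Completeness of $(\halghj,\nordg)$ then follows from completeness of $\thptj$, and Theorem~\ref{neothe} gives the isomorphism with $(\cph\injtp\cpj)^\ast$.

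\emph{Embedding into $\alghjgd$.} For $C\in\thptj$ and $K\in\alghj$,
\[
|\digu(K)|=|\tr(CK)|\le\|C\ptrn\,\|K\nog ,
\]
using~\eqref{defnog}. So $\digu$ is bounded on $\alghjg$, and its dual norm $\|\digu\dunog$ is exactly the supremum in~\eqref{defnordg}, namely $\|\digu\nodg$. Thus $\halghj$ sits isometrically inside $\alghjgd$. Being complete, it is closed there, and $\nordg$ is the restriction of $\dunorg$.

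\emph{Finite-dimensional case, $\emme<\infty$.} Theorem~\ref{proreno}, relation~\eqref{estgen-bis}, gives $\|L\nog\le\|L\tnori\le\enne\|L\nog$ on $\bophj\supset\alghj$. Set $\notri=\enne^2\norg$. Then for $K,L\in\alghj$,
\[
\enne^2\|KL\nog\le\enne^2\|KL\tnori\le\enne^2\|K\tnori\|L\tnori\le\enne^4\|K\nog\|L\nog=\ntr{K}\,\ntr{L},
\]
so $\notri$ is submultiplicative. Because the two norms are equivalent, $\alghj$ is complete for $\notri$, which makes $(\alghj,\notri)$ a renorming Banach algebra of the $\cast$-algebra. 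The unit has norm $\enne^2\|I\nog=\enne^2\ge1$; the plan is to accept this normalization caveat as stated rather than force $\ntr{I}=1$.

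By Theorem~\ref{theine}, $\thptj=\trchj$, so $\halghj=\identi(\trchj)$. Relation~\eqref{relreno} then combines three facts:
\begin{itemize}
\item $\|\digu\norps=\|C\ttrn$, which is the isometry of $\identi$ from~\eqref{renorsub-bis};
\item the isometry of $\iden$ established above;
\item the bound~\eqref{estgen-b}.
\end{itemize}
Finally, $\identi(\trchj)\cong\trchj\cong\cphj^\ast$ holds because $\identi$ is an isometry and $\trchj=\cphj^\ast$ via the trace. Together with the equivalence just shown, this identifies $(\halghj,\nordg)$ as a renorming of this closed subspace of $\alghjd$.

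The main obstacle is the bookkeeping of which norm lives on which space, together with the Banach-algebra claim and its constant. The dual-norm identification itself is immediate from~\eqref{defnordg}.
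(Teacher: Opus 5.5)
Your proposal is correct and takes essentially the same route as the paper. Like the paper, you transport $\norf=\ptrnor$ through the injective isometry $\identi$, identify $\nordg$ with the restriction of $\dunorg$ via $\|K\nog=\|\fk\norps$, and, for $\emme<\infty$, obtain the Banach-algebra and renorming claims from \eqref{estgen-bis} and \eqref{estgen-b}. Two of your refinements are also correct: you justify the closedness of $\halghj$ in $\alghjgd$ by completeness, and you cite \eqref{estgen-b}, which gives the constant $\enne$ in \eqref{relreno} for general $C$, rather than the selfadjoint estimate \eqref{estsa-b} that the paper cites.
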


\begin{proof}
Since, by~\eqref{defnordg}, $\|\digu\nodg=\|C\nof=\|C\ptrn$, $C\in\thptj$, and, moreover, the mapping
$\thptj\ni C\mapsto\digu\equiv\padigu=\identi(C)\in\halghj=\identi\big(\thptj\big)\subset\alghjd$ is a
linear bijection, we conclude that $\nordg\colon\halghj\rightarrow\errep$ is indeed a norm, and the
bijective linear map~\eqref{liboco} is an isometry. Note that, by~\eqref{inenorm}, relation~\eqref{redig}
holds. By~\eqref{defnordg}, it is also clear that $\big(\halghj,\nordg\big)$ is a (closed) subspace of the
Banach space $\big(\alghj,\norg\big)^\ast$, and the norm $\nordg$ is precisely the restriction to
the subspace $\halghj$ of the dual norm $\dunorg$ of $\norg$.

Assume now that at least one of the Hilbert spaces $\hh$ and $\jj$ is finite-dimensional,
i.e., that $\emme=\min\{\dim(\hh),\dim(\jj)\}<\infty$. Then, by the third assertion of
Theorem~\ref{proreno}, the injective norm $\norg$ of $\bophj$ is equivalent to the operator
norm $\tnormi$. Hence, the restrictions of these norms to $\alghj$ are equivalent norms too, so
that $\alghjg\equiv\big(\alghj,\norg\big)$ is a Banach space, because $\big(\alghj,\tnormi\big)$
is a Banach space. Specifically, $\alghjg$ is --- up to normalization of its norm --- a (unital)
Banach algebra since $\|K_1\tre K_2\nog\le\|K_1\tnori\tre\|K_2\tnori\le\enne^2\sei\|K_1\nog\tre\|K_2\nog$,
with $\enne=\min\big\{4\emme,\emme^2\big\}$, and hence $\big(\alghj,\enne^2\norg\big)$ is a
unital Banach algebra and a renorming of the $\cast$-algebra $\big(\alghj,\tnormi\big)$. Moreover,
by the second assertion of Theorem~\ref{proreno}, we have that $\thptj=\trchj$ (set equality),
and the norms $\ttrnor$ and $\norf=\ptrnor$ on $\trchj$ are equivalent (precisely, they satisfy
relation~\eqref{estsa-b}). The Banach space $\trchj\equiv\big(\trchj,\ttrnor\big)$ can be
identified, via the trace, with the dual of $\cphj$, and the Banach space $\trchj\cong\cphj^\ast$
can be identified with a closed subspace of $\alghjd$; in fact, recalling Remark~\ref{rehaba},
we have: $\trchj\cong\cphj^\ast\cong\identi(\trchj)$. In conclusion, if
$\min\{\dim(\hh),\dim(\jj)\}<\infty$, then $\halghj=\identi\big(\thptj\big)=\identi(\trchj)$, and
$\big(\halghj=\identi(\trchj),\nordg\big)\cong\big(\trchj,\norf=\ptrnor\big)$ is a renorming of
the closed subspace $\identi(\trchj)\cong\trchj\cong\cphj^\ast$ of $\alghjd$; precisely,
by~\eqref{estsa-b} and by the fact that $\identi\colon\trchj\rightarrow\alghjd$ is an isometry,
for every $C\in\trchj=\thptj$, relation~\eqref{relreno} holds.
\end{proof}

We stress that the isomorphism of Banach spaces~\eqref{liboco} is a map restriction (wrt both the domain
and the codomain) of the linear isometry $\identi\colon\trchj\rightarrow\alghjd$, see~\eqref{defident}.

\begin{remark} \label{redigub}
For every $C\in\thptj$ the linear functional $\digu\in\halghj\subset\alghjgd$ admits a natural extension
to a bounded linear functional $\digub$ on the normed space $\bophjg\equiv(\bophj,\norg)$, defined by
$\digub(L)\defi\tr(CL)$, for all $L\in\bophj$. In fact, the functional $\digub$ is bounded, because
$|\tr(CL)|\le\|C\ttrn\sei\|L\tnori\le\|C\ptrn\sei\|L\tnori$, and, moreover, by relation~\eqref{varexps},
$\|\digub\norps\defi\sup\{|\tr(CL)|\colon L\in\bophj, \ \|L\nog=1\}=\|C\ptrn=\|\digu\nodg=\|\digu\dunog$.
\end{remark}

\begin{remark} \label{refuns}
We are now regarding the elements of the cross trace class $\thptj$ as bounded functionals
on the normed space $\alghjg\equiv(\alghj,\norg)$, rather than on the Banach space $\cph\injtp\cpj$
(as it would be natural according to Theorem~\ref{neothe}), or on the $\cast$-algebra
$\alghj\equiv\big(\alghj,\tnormi\big)$ (that allows us to introduce the $\awst$-topology). However,
by Fact~\ref{fanonotri}, with the obvious identifications $\bsb\equiv\alghj$, $\no\equiv\norg$ and
$\notri\equiv\tnormi$, we have that $\alghjgd$ is, in a natural way, a linear subspace of $\alghjd$,
and the $\sigma\big(\alghjgd,\alghj\big)$-topology --- or, with a more precise notation, the
$\sigma\big(\alghjgd,\alghjg\big)$-topology --- coincides with the subspace topology on
$\alghjgd\subset\alghjd$ wrt the $\sigma(\alghjd,\alghj)$-topology, i.e., the $\awst$-topology.
Therefore, in the following, we may think of the cross states $\hstahj$ as functionals in $\halghj$
(i.e., as elements of the set $\iden(\hstahj)\subset\halghj$) or, equivalently, as functionals in
$\alghjd$ (i.e., as elements of the set $\identi(\stahj)\subset\alghjd$), because the $\awst$-topology
will turn out to be the relevant one.
\end{remark}

\subsection{The main result: characterization of separable states}
\label{main}

Let us now further consider the intersection $\shjb$ of the convex set $\hstahj$ of all cross states
with the unit ball in $\thptj$; see~\eqref{basph}. As previously noted, $\shjb$ is a convex set and,
by the the fact that $\norf=\ptrnor$, by third equality in~\eqref{basph} and by Corollary~\ref{corimplise},
we have:
\begin{equation} \label{relshj}
\shjb\defi\big\{D\in\hstahj\colon\|D\nof=\|D\ptrn\le 1\big\}=\stahjb\supset\sesta \fin .
\end{equation}
By the definition of the norms $\norf$ and $\nordg$ (see~\eqref{defnorf} and~\eqref{defnordg},
respectively), and by the fact that the norm $\nordg$ is the restriction of the dual norm
$\dunorg$ of $\norg$ to the Banach subspace $\halghj$ of $\alghjgd$ (Proposition~\ref{pronor}),
we have:
\begin{align}
\shj
& \defi
\iden(\shjb)
\nonumber \\
& \dodici =
\big\{\xid\in\iden(\hstahj)\colon\|D\ptrn\le 1\big\}
\nonumber \\
& \dodici =
\big\{\xid\in\iden(\hstahj)\colon\|\xid\nodg=\|\xid\dunog\le 1\big\}
\nonumber \\
& \dodici =
\big\{\xid\in\iden(\hstahj)\colon\mbox{$|\xid(K)|\defi|\tr(DK)|\le\|K\nog$,
$\forall K\cinque\in\alghj$}\big\}
\nonumber \\  \label{defishj}
& \dodici \subset
\halghj\subset\alghjgd\subset\alghjd ,
\end{align}
where, for the last inclusion relation, recall Remark~\ref{refuns}.

We will also consider the following convex subset of $\alghjd$:
\begin{align}
\shjt
& \defi
\big\{\xi\in\alghjd\colon\mbox{$\xi\ge 0$, $\xi(I)=1$, and $|\xi(K)|\le\|K\nog$,
$\forall\cinque K\in\alghj$}\big\}
\nonumber\\  \label{defshjt}
& \hspace{1.2mm} =
\big\{\xi\in\alghjgd\colon\mbox{$\xi\ge 0$, $\xi(I)=1$ and $\|\xi\dunog\le 1$}\big\}
\supset\shj \fin .
\end{align}
Here, $\xi\ge 0$ (positivity) means that $\xi(K)\ge 0$, for all positive $K\in\alghj$.
Clearly, $\shjt$ is a convex subset of the convex set of all normalized, positive
functionals on the $\cast$-algebra $\alghj$. Regarding definition~\eqref{defshjt}, we
stress the following fact. For every $\xi\in\alghjd$, the conditions $\xi\ge 0$ and
$\xi(I)=1$ imply that $\|\xi\norps\defi\sup\{|\xi(K)|\colon K\in\alghj,\ \|K\tnori=1\}=1$
--- see, e.g., Corollary~{3.3.4} of~\cite{Murphy} --- but not, in general, that
$|\xi(K)|\le\|K\nog$, for all $K\in\alghj$ (or even that $\xi\in\alghjgd$, in the case
where $\dim(\hh)=\dim(\jj)=\infty$). Indeed, the latter condition is satisfied precisely
by those (positive) functionals on $\alghj$ that are contained in the unit ball of the
linear subspace $\alghjgd$ of $\alghjd$.

In the following, for the sake of notational conciseness, it will be convenient to identify
the set $\shjb\subset\hstahj$ with the convex subset $\shj$ of $\alghjd$; i.e., with the image,
via the isomorphism of Banach spaces $\iden\colon\thptj\rightarrow\halghj$ (see~\eqref{liboco}),
or via the isometry $\identi$ (see~\eqref{defident}), of the convex set $\shjb$. More generally,
(any subset of) $\trchj$ will be identified with (a corresponding subset of) the linear subspace
$\identi(\trchj)$ of $\alghjd$.

At this point, it is also worth recalling that the $\ewst$-topology on $\trchj$ is defined as
the initial topology induced by the map $\identi\colon\trchj\rightarrow\alghjd$, $\alghjd$
being endowed with the $\awst$-topology (see Subsection~\ref{algebra}).

\begin{remark}
By Fact~\ref{wstatop-bis} --- with the identifications: $\bs\equiv\alghjg$, regarded as a
normed space endowed with the norm $\norg$, which is obviously a $\norg$-dense linear subspace
of its Banach space completion $\bsb\equiv\alghjgc$, and  $\mathfrak{B}\equiv\shjt$, which is
a $\dunorg$-bounded subset of $\bs^\ast\equiv\alghjgd=\alghjgcd\equiv\bsb^\ast$ --- we conclude
that the relative topology on $\shjt$ wrt the $\sigma\big(\alghjgd,\alghjgc\sei\big)$-topology
coincides with the relative topology on $\shjt$ wrt the $\sigma(\alghjgd,\alghjg)$-topology;
i.e., by Remark~\ref{refuns}, with the relative topology on $\shjt$ wrt the $\awst$-topology
of $\alghjd$. All of which is to say that, as far as topological arguments are concerned in
the following, the functionals in $\shjt$ can be regarded, interchangeably, as bounded functionals
on the spaces $\alghj$ (endowed with the norm $\tnormi$), $\alghjg$ or $\alghjgc$.
\end{remark}

To prove the main theorem, it remains to establish a further technical fact.

\begin{lemma} \label{lemcoco}
The convex set $\shj$ is a precompact subset of $\alghjd$ wrt the $\awst$-topology, and, in fact,
with the obvious identifications via the immersion map $\identi$, we have that
\begin{align}
\shj\equiv\shjb
& \subset
\clcow(\nb(\pusta,\pustajj))
\nonumber \\
& \equiv
\identi\big(\clcow(\nb(\pusta,\pustajj))\big)
\nonumber \\ \label{relaw}
& \subset
\clcoaw(\nb(\pusta,\pustajj))=\shjt \fin ,
\end{align}
where the $\awst$-closed convex hull $\clcoaw(\nb(\pusta,\pustajj))=\shjt$ of the set
\begin{equation}
\nb(\pusta,\pustajj)\equiv\identi(\nb(\pusta,\pustajj))
\end{equation}
is a $\awst$-compact convex subset of $\alghjd$. Moreover, identifying $\trchj$, $\stahj$ with
$\identi(\trchj)$ and $\identi(\stahj)$, respectively, we have:
\begin{align}
\clcow(\nb(\pusta,\pustajj))
& \equiv
\identi\big(\clcow(\nb(\pusta,\pustajj))\big)
\nonumber \\
& =
\clcoaw(\nb(\pusta,\pustajj))\cap\trchj
\nonumber \\  \label{alsorec}
& =
\clcoaw(\nb(\pusta,\pustajj))\cap\stahj
=\shjt\cap\stahj \fin .
\end{align}
\end{lemma}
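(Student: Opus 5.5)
The plan is to prove first the central identity $\clcoaw(\nb(\pusta,\pustajj))=\shjt$, and then obtain everything else from Lemma~\ref{lemthclcow} and Fact~\ref{relclo}.

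\emph{Step 1: properties of $\shjt$.} The set $\shjt$ is convex and $\awst$-closed, because each defining condition ($\xi(K)\ge 0$ for $K\ge 0$, $\xi(I)=1$, $|\xi(K)|\le\|K\nog$) is a pointwise condition on the evaluations $\xi\mapsto\xi(K)$. Positivity together with $\xi(I)=1$ gives $\|\xi\norps=1$. Hence $\shjt$ is a closed subset of the unit ball of $\alghjd$, and by Banach--Alaoglu it is $\awst$-compact. We also have $\nb(\pusta,\pustajj)\subset\shjt$: a product pure state is positive, has unit trace, and satisfies $|\tr((\pi\otimes\varpi)K)|\le\|\pi\otimes\varpi\ptrn\|K\nog=\|K\nog$ by~\eqref{varexps}. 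The same estimate, now using $\|D\ptrn\le1$, gives $\shj\subset\shjt$. Therefore $\clcoaw(\nb(\pusta,\pustajj))\subset\shjt$.

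\emph{Step 2: the reverse inclusion (the main obstacle).} Suppose some $\xi\in\shjt$ lies outside $\clcoaw(\nb(\pusta,\pustajj))$. The dual of $(\alghjd,\awst)$ is $\alghj$, so the Hahn--Banach separation theorem gives $K\in\alghj$ with
\[
\repa\,\xi(K)>\sup\big\{\repa\tr((\pi\otimes\varpi)K)\colon \pi\in\pusta,\ \varpi\in\pustajj\big\}.
\]
Because $\xi$ and the product states are positive, hence Hermitian functionals, we may replace $K$ by $\kre$ and assume $K=K^\ast$. Set $m\defi\sup_{\|\phi\|=\|\psi\|=1}\langle\phi\otimes\psi,K\tre\phi\otimes\psi\rhj$, so the separation reads $\xi(K)>m$. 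Choose $c\ge\|K\tnori$, so that $L\defi K+cI\ge 0$ and $L\in\alghj$. For a positive $L$, the Cauchy--Schwarz inequality $|\langle\va,L\vc\rangle|\le\langle\va,L\va\rangle^{1/2}\langle\vc,L\vc\rangle^{1/2}$ combined with the third expression in~\eqref{fornog} gives
\[
\|L\nog=\sup\langle\phi\otimes\psi,L\tre\phi\otimes\psi\rhj=m+c .
\]
Then $\xi(K)+c=\xi(L)\le\|L\nog=m+c$, which contradicts $\xi(K)>m$. Hence $\shjt=\clcoaw(\nb(\pusta,\pustajj))$, and this set is $\awst$-compact by Step~1. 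This positivity trick, which converts the bound $|\xi(K)|\le\|K\nog$ into a bound by the supremum over diagonal product vectors, is the step I expect to carry the proof.

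\emph{Step 3: remaining relations.} Lemma~\ref{lemthclcow} gives $\clcow(\nb(\pusta,\pustajj))=\clcoaw(\nb(\pusta,\pustajj))\cap\stahj$. Fact~\ref{relclo}, together with~\eqref{clcotworels}, identifies this set with $\clcoaw(\nb(\pusta,\pustajj))\cap\trchj$. Substituting Step~2 yields~\eqref{alsorec}, and in particular the middle inclusion in~\eqref{relaw}. For the first inclusion in~\eqref{relaw}, note that $\shj\equiv\shjb\subset\shjt\cap\stahj$ by Step~1, and the right-hand side equals $\clcow(\nb(\pusta,\pustajj))$ by~\eqref{alsorec}. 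Finally, $\shj$ is contained in the $\awst$-compact set $\shjt$, so it is $\awst$-precompact.
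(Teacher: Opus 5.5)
Your proof is correct, but at the key step it takes a different route from the paper.

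\textbf{What the paper does.} The paper does not prove the identity $\shjt=\clcoaw(\nb(\pusta,\pustajj))$ or the $\awst$-compactness of $\shjt$ itself. It imports both from Theorem~6.3 of Arveson's paper, applied with $V$ the set of unit product vectors $\phi\otimes\psi$. It then combines this with Fact~\ref{relclo} and Lemma~\ref{lemthclcow} to obtain~\eqref{alsorec} and~\eqref{relaw}, exactly as in your Step~3.

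\textbf{What you do instead.} You prove the identity directly, and the steps check out:
\begin{enumerate}
\item Positivity and $\xi(I)=1$ put the $\awst$-closed set $\shjt$ inside the unit ball, so Banach--Alaoglu gives its compactness.
\item Weak$^\ast$ Hahn--Banach separation produces a separating $K\in\alghj$.
\item Hermiticity of positive functionals lets you replace $K$ by $\kre\in\alghj$.
\item For the shift $L=K+cI\geq 0$, Cauchy--Schwarz shows that $\|L\nog$ is already attained on diagonal product vectors. The constraint $|\xi(L)|\le\|L\nog$ then becomes exactly the inequality that contradicts the separation.
\end{enumerate}
This last step is where the unitization of $\cphj$ is genuinely needed.

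\textbf{Comparison.} Your route is self-contained and shows which ingredients the crucial equality actually uses: the unit and $\ast$-invariance of $\alghj$, positivity of the functionals, and the third expression in~\eqref{fornog}. It also makes clear that the argument would work verbatim for any unital $\mathrm{C}^\ast$-subalgebra of $\mathcal{B}(\hh\otimes\jj)$. The paper's citation is shorter and places the statement inside Arveson's more general framework (general sets $V$ of unit vectors, multipartite systems). The cost is that the mechanism behind $\shjt=\clcoaw(\nb(\pusta,\pustajj))$ is left to the reference.
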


\begin{proof}
As previously noted (see~\eqref{defishj} and~\eqref{defshjt}), $\shj\defi\iden(\shjb)=\identi(\shjb)$
is contained in the convex set of all \emph{states} $\xi$ of the $\cast$-algebra $\alghj$ such that
$|\xi(K)|\le\|K\nog$, for all $K\in\alghj$; i.e., we have:
\begin{align}
\shj
& =
\big\{\xid\in\iden(\hstahj)\colon\mbox{$|\xid(K)|\defi|\tr(DK)|\le\|K\nog$,
$\forall K\cinque\in\alghj$}\big\}
\nonumber \\
& =
\{\xid\in\alghjd\colon\mbox{$D\in\hstahj$, $|\xid(K)|\le\|K\nog$,
$\forall K\cinque\in\alghj$}\}
\nonumber \\
& \subset
\{\xi\in\alghjd\colon\mbox{$\xi\ge 0$, $\xi(I)=1$, and $|\xi(K)|\le\|K\nog$,
$\forall K\cinque\in\alghj$}\}
\nonumber \\
& \ifed
\shjt \fin .
\end{align}

By Theorem~{6.3} of~\cite{Arveson} --- also see the proof of this theorem {\it ibidem},
and suitably define the subset $V$ of $\hhjj$ therein as $V\defi\{\phi\otimes\psi\colon
\phi\in\hh, \ \psi\in\jj, \ \|\phi\|=\|\psi\|=1\}$ --- the convex subset $\shjt$ of
$\alghjd$ is $\awst$-compact and coincides with the $\awst$-closed convex hull of the set
$\nb(\pusta,\pustajj)\equiv\identi(\nb(\pusta,\pustajj))$, so that
\begin{equation} \label{crucincl}
\shj\subset\shjt=\clcoaw\big(\identi(\nb(\pusta,\pustajj))\big) \fin .
\end{equation}
Thus, the $\awst$-closure of the set $\shj$ is contained in the $\awst$-compact set $\shjt$;
hence, it is $\awst$-compact too, and $\shj$ is a precompact subset of $\alghjd$ wrt the
$\awst$-topology.

Now, by Fact~\ref{relclo}, $\clcow(\nb(\pusta,\pustajj))=\clcoaw(\nb(\pusta,\pustajj))\cap\trchj$, and
actually, by Lemma~\ref{lemthclcow}, $\clcow(\nb(\pusta,\pustajj))=\clcoaw(\nb(\pusta,\pustajj))\cap\stahj$,
so that relation~\eqref{alsorec} holds true. To complete the proof, just note that
$\shjb\subset\clcow(\nb(\pusta,\pustajj))$.

In fact, by relations~\eqref{alsorec} and~\eqref{crucincl}, we argue that
\begin{align}
\shj\equiv\shjb
&=
\shjb\cap\stahj
\nonumber \\
& \subset
\shjt\cap\stahj
\nonumber\\
& =
\clcoaw(\nb(\pusta,\pustajj))\cap\stahj= \clcow(\nb(\pusta,\pustajj)) \fin ,
\end{align}
all obvious identifications via the immersion map $\identi$ being understood. Hence, we finally obtain
relation~\eqref{relaw}.
\end{proof}

At this point, eventually, the convex set $\sesta$ can be characterized as the intersection of the
convex set $\hstahj$ of all cross states with the unit ball wrt the norm $\ptrnor=\norf$.

\begin{theorem} \label{chasest}
The set $\sesta$ admits the following characterization:
\begin{align}
\sesta=\stahjb
& =
\shjb
\nonumber \\  \label{relsesta}
& =
\iden^{-1}(\shj)=\big\{D\in\hstahj\colon\|\xid\dunog=1\big\} \fin .
\end{align}
Moreover, $\sesta\equiv\iden(\sesta)=\identi(\sesta)=\shj$ is a $\awst$-precompact subset of $\alghjd$, because
\begin{align}
\sesta=\clcow(\nb(\pusta,\pustajj))
& =
\clcoaw(\nb(\pusta,\pustajj))\cap\trchj
\nonumber \\
& =
\clcoaw(\nb(\pusta,\pustajj))\cap\stahj
\nonumber \\ \label{furels}
& =
\shjt\cap\stahj \fin
\end{align}
--- the obvious identifications $\nb(\pusta,\pustajj)\equiv\identi(\nb(\pusta,\pustajj))$,
$\trchj\equiv\identi(\trchj)$ and $\stahj\equiv\identi(\stahj)$ being understood --- where
\begin{equation}
\clcoaw(\nb(\pusta,\pustajj))=\shjt
\end{equation}
is a $\awst$-compact convex subset of $\alghjd$.
\end{theorem}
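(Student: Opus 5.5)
The plan is a sandwich argument. The two needed inclusions are already in place, so the proof should mostly be a matter of assembling them.

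The first inclusion is $\sesta\subset\shjb=\stahjb$. This holds by Corollary~\ref{corimplise}, since every separable state is a cross state with $\|\varsigma\ptrn=1$; it is recorded in relation~\eqref{relshj}.

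The reverse inclusion comes from Lemma~\ref{lemcoco}. Relation~\eqref{relaw} gives $\shjb\subset\clcow(\nb(\pusta,\pustajj))$, with the identifications via $\identi$ understood. Theorem~\ref{thclcow} identifies this $\ewst$-closed convex hull with $\sesta$. Combining these yields $\sesta\subset\shjb\subset\clcow(\nb(\pusta,\pustajj))=\sesta$, so $\sesta=\shjb=\stahjb$. The equality $\stahjb=\shjb$ is just relation~\eqref{basph}.

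Next, $\shjb=\iden^{-1}(\shj)$ holds by the definition $\shj\defi\iden(\shjb)$ and the bijectivity of $\iden$ in~\eqref{liboco}. For the last description in~\eqref{relsesta}, I would use Proposition~\ref{pronor}: for a cross state $D$,
\[
\|\xid\dunog=\|\xid\nodg=\|D\ptrn .
\]
So the condition $\|\xid\dunog=1$ is exactly $\|D\ptrn=1$. Because $\|D\ptrn\ge 1$ on $\hstahj$, this condition is also the same as $\|D\ptrn\le1$.

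For the second block of claims,~\eqref{furels} is just relation~\eqref{alsorec} of Lemma~\ref{lemcoco} combined with $\sesta=\clcow(\nb(\pusta,\pustajj))$. Lemma~\ref{lemcoco} also supplies the remaining topological facts:
\begin{itemize}
\item $\clcoaw(\nb(\pusta,\pustajj))=\shjt$, and this set is $\awst$-compact;
\item $\shj=\identi(\sesta)$ is $\awst$-precompact, since its $\awst$-closure lies inside the compact set $\shjt$.
\end{itemize}
The identity $\iden(\sesta)=\identi(\sesta)$ holds because $\iden$ is the restriction of $\identi$.

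The genuine difficulty is not in this proof. It sits in the inclusion $\shjb\subset\clcow(\nb(\pusta,\pustajj))$, which rests on Arveson's result that $\shjt$ is the $\awst$-closed convex hull of the product pure states, together with the coincidence of the $\ewst$ and norm topologies on $\stahj$ (Proposition~\ref{protop}). Both are already established above, so here the only care needed is to track the identifications via $\identi$ and $\iden$ consistently.
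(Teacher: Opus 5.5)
Your proposal is correct and is essentially the paper's own proof. It uses the same sandwich $\sesta\subset\shjb\subset\clcow(\nb(\pusta,\pustajj))=\sesta$ from relation~\eqref{relshj}, Lemma~\ref{lemcoco} and Theorem~\ref{thclcow}, with Proposition~\ref{pronor} giving the last equality in~\eqref{relsesta} and relation~\eqref{alsorec} giving~\eqref{furels} and the compactness claims; you are also right that the substantive work (Arveson's theorem and Proposition~\ref{protop}) sits in the earlier lemmas rather than in this proof.
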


\begin{proof}
Taking into account relation~\eqref{relshj}, to prove~\eqref{relsesta} it is sufficient to show
that the reverse inclusion $\stahjb=\shjb=\iden^{-1}(\shj)\subset\sesta$ holds too. In fact, by
Lemma~\ref{lemcoco} (see, in particular, relation~\eqref{relaw}), the convex set $\shj$ is contained
in the $\ewst$-closed convex hull of the set $\nb(\pusta,\pustajj)$,
so that
\begin{equation}
\shj\equiv\shjb\subset\clcow(\nb(\pusta,\pustajj))=\sesta \fin .
\end{equation}
Here, the equality after the inclusion relation holds by Theorem~\ref{thclcow}, and
natural identifications via the linear isometry $\identi\colon\trchj\rightarrow\alghjd$
--- or, equivalently, via the isomorphism of Banach spaces $\iden$ --- are understood.
At this point, regarding the last equality in relation~\eqref{relsesta}, just note that
$\shj\defi\iden(\shjb)=\iden(\sesta)\equiv\sesta$, and hence
\begin{align}
\sesta
& =
\iden^{-1}(\shj)
\nonumber \\
& =
\big\{D\in\hstahj\colon\|\xid\nodg\le 1\big\}
\nonumber \\
& =
\big\{D\in\hstahj\colon\|\xid\nodg=1\big\}=\big\{D\in\hstahj\colon\|\xid\dunog=1\big\},
\end{align}
where,
for the last two equalities, recall that $1=\|D\ttrn\le\|D\ptrn=\|\xid\nodg=\|\xid\dunog$
(Proposition~\ref{pronor}). By relation~\eqref{alsorec}, the characterization~\eqref{furels}
of $\sesta=\clcow(\nb(\pusta,\pustajj))$ follows immediately. Finally, recall that,
by Lemma~\ref{lemcoco}, $\clcoaw(\nb(\pusta,\pustajj))=\shjt$ --- with $\nb(\pusta,\pustajj)
\equiv\identi(\nb(\pusta,\pustajj))$ --- is a \emph{$\awst$-compact} convex subset of $\alghjd$.
\end{proof}

\begin{corollary} \label{chasest-bis}
The set $\sesta$ admits the following further characterization, in terms of the Hermitian
projective norm:
\begin{equation} \label{relsesta-bis}
\sesta=\big\{D\in\hstahj\colon\ptrntr{D}=1\big\} \fin .
\end{equation}
\end{corollary}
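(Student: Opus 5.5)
We prove the two inclusions separately, relying on Theorem~\ref{chasest} and on the comparison between the two projective norms on selfadjoint cross trace class operators.

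For the inclusion $\subset$, take $\varsigma\in\sesta$. By Corollary~\ref{corconta}, $\varsigma$ belongs to the set $\hstahj$ of cross states. Corollary~\ref{corimplise} then gives $\ptrntr{\varsigma}=\|\varsigma\ptrn=1$ directly. So $\varsigma$ lies in the right-hand side of~\eqref{relsesta-bis}.

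For the inclusion $\supset$, take $D\in\hstahj$ with $\ptrntr{D}=1$. By Proposition~\ref{prcrost}, $D$ lies in $\thsptjs$, so the inequalities~\eqref{inenorms} apply and give
\begin{equation*}
1=\|D\ttrn\le\|D\ptrn\le\ptrntr{D}=1 .
\end{equation*}
Hence $\|D\ptrn=1$, that is, $D\in\stahjb$. Theorem~\ref{chasest} states that $\stahjb=\sesta$, so $D\in\sesta$.

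No step here is a genuine obstacle. The substantive ingredient is Theorem~\ref{chasest}, namely $\|D\ptrn=1\Rightarrow D\in\sesta$, and it has already been established. The only point requiring care is confirming that every cross state is selfadjoint and lies in $\thsptjs$, so that $\ptrntr{D}$ is defined and dominates $\|D\ptrn$; this is exactly what Proposition~\ref{prcrost} and~\eqref{inenorms} provide.
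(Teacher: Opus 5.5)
Your proof is correct and is essentially the paper's argument. The paper proves the chain $\sesta\subset\{D\in\hstahj\colon\ptrntr{D}=1\}\subset\stahjb=\sesta$ using Corollary~\ref{corimplise}, the domination of $\ptrnor$ by $\ptrnortr$ on $\thsptjs$, and Theorem~\ref{chasest}, which are exactly the ingredients you use in your two inclusions.
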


\begin{proof}
Observe, indeed, that
\begin{align}
\sesta
& \subset
\stahj\cap\sph\big(\thsptjs\big)
\nonumber \\
& =
\big\{D\in\hstahj\colon\ptrntr{D}=1\big\}
\nonumber \\
& =
\stahj\cap\ball\big(\thsptjs\big)
\nonumber \\
& \subset
\stahj\cap\ball\big(\thptj\big)
=\stahjb=\sesta \fin .
\end{align}
In the previous argument, the first inclusion relation, follows from Corollary~\ref{corimplise},
and the second equality is a direct consequence of the fact that $\ptrntr{D}\ge 1$, for all
$D\in\hstahj$; whereas the second inclusion relation follows from the fact that
$\ball\big(\thsptjs\big)\subset\ball\big(\thptj\big)$, because the projective norm $\ptrnor$
is dominated on $\thsptjs$ by the Hermitian projective norm $\ptrnortr$. Finally, we have used
relation~\eqref{relsesta} in Theorem~\ref{chasest}.
\end{proof}

\subsection{The cross norm criterion in any Hilbert space dimension}
\label{ecnc}

It is now worth collecting some of our main findings concerning the characterization
of separable states in a unique statement. As previously noted, in the case where
$\min\{\dim(\hh),\dim(\jj)\}<\infty$ our results can be compared with the results
obtained by Rudolph~\cite{Rudolph,Rudolph-further} (who considered the finite-dimensional
setting only) and by Arveson~\cite{Arveson} (who considered multipartite systems, as well).
Here, we have extended these results to include the genuinely infinite-dimensional setting
--- i.e., the case where $\dim(\hh)=\dim(\jj)=\infty$ --- and therefore we call the associated
separability criterion the \emph{Extended Cross Norm Criterion} (ECNC).

\begin{theorem}[The Extended Cross Norm Criterion]  \label{xextcnc}
Let us suppose, at first, that at least one of the Hilbert spaces $\hh$, $\jj$
is finite-dimensional: $\emme\equiv\min\{\dim(\hh),\dim(\jj)\}<\infty$. Then, we have that
\begin{equation} \label{xcoispa}
\trchj=\thptj=\thatj \fin ,
\end{equation}
\begin{equation} \label{xcoispa-bis}
\trchjsa=\thsptjs=\thsatjs  \quad \mbox{and} \quad
\stahj=\hstahj
\end{equation}
--- where all equalities have to be understood as set equalities --- and the norms $\ttrnor$
(trace norm) and $\ptrnor$ (projective norm) are equivalent on $\trchj$, whereas their restrictions
to the real Banach space $\trchjsa$ are equivalent to the Hermitian projective norm $\ptrnortr$.
These norms, evaluated on the density operators $\stahj$ --- i.e., the $\ttrnor$-normalized positive
trace class operators --- satisfy the relations
\begin{equation} \label{xrenors}
1=\|D\ttrn\le\|D\ptrn\le\emme \quad \mbox{and} \quad
1\le\|D\ptrn\le\ptrntr{D}\le2\tre\|D\ptrn\le 2\emme \fin , \quad \forall\cinque D\in\stahj \fin .
\end{equation}

Moreover, given a density operator $D\in\stahj$, the following facts are equivalent:
\begin{itemize}

\item $D\in\sesta$, i.e., $D$ is a separable state.

\item $\|D\ptrn=1$, i.e., $D$ is normalized wrt the projective norm $\ptrnor$.

\item $\ptrntr{D}=1$, i.e., $D$ is normalized wrt the Hermitian projective norm $\ptrnortr$.

\end{itemize}

In particular, if $\enne\equiv\dim(\hhjj)<\infty$, then every separable density operator
$\varsigma\in\sesta$ admits a finite convex decomposition of the form
\begin{equation} \label{xappcara}
\varsigma=\sum_{k=1}^\mathsf{K}\tre p_k\tre (\pi_k\otimes\varpi_k) \fin , \
\mbox{where $\mathsf{K}\le\enne^2$, $p_k>0$ $\big(\sum_{k=1}^\mathsf{K}\tre p_k=1\big)$,
$\pi_k\in\pusta$, $\varpi_k\in\pustajj$}.
\end{equation}

Suppose now that $\dim(\hh)=\dim(\jj)=\infty$. Then, we have that
\begin{equation} \label{xnoncoispa}
\trchj\supsetneq\thptj\supsetneq\thatj \fin ,
\end{equation}
\begin{equation} \label{xnoncoispa-bis}
\trchjsa\supsetneq\thsptjs\supsetneq\thsatjs \quad \mbox{and} \quad
\stahj\supsetneq\hstahj \fin ,
\end{equation}
--- where all relations have to be understood merely as strict set containments --- and the norms
$\ptrnor$ and $\ptrnortr$ are unbounded on the convex set $\hstahj$ of all cross states on $\hhjj$.

Moreover, given a density operator $D\in\stahj$, the following facts are equivalent:
\begin{itemize}

\item $D\in\sesta$, i.e., $D$ is a separable state.

\item $D\in\hstahj$ and $\|D\ptrn=1$, i.e., $D$ is a cross state, and normalized wrt the
projective norm $\ptrnor$.

\item $D\in\hstahj$ and $\ptrntr{D}=1$, i.e., $D$ is a cross state, and normalized wrt the
Hermitian projective norm $\ptrnortr$.

\end{itemize}

Finally, for any dimension of the Hilbert spaces $\hh$ and $\jj$, and for every entangled cross state
$D\in\hstahj$ --- i.e., for every $D\in\hstahj\setminus\sesta$ --- $\ptrntr{D}\ge\|D\ptrn>1$.

\end{theorem}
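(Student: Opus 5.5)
This theorem is essentially a synthesis of results already established, so the proof will consist of assembling the earlier statements in the right order and filling the few gaps.

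\textbf{Case $\emme<\infty$.} Relations~\eqref{xcoispa} and~\eqref{xcoispa-bis} are exactly~\eqref{coinca} and~\eqref{coincb} of Theorem~\ref{theine}, together with Proposition~\ref{profoc}. The equality $\stahj=\hstahj$ is also stated in Theorem~\ref{theine}. Equivalence of the norms, and the bounds~\eqref{xrenors}, follow from~\eqref{estgen}, \eqref{estdens}, \eqref{estsa-bis} and~\eqref{estdens-bis}.

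For the equivalence of the three bullet conditions I will use Theorem~\ref{chasest} and Corollary~\ref{chasest-bis}. These give $\sesta=\{D\in\hstahj\colon\|D\ptrn=1\}=\{D\in\hstahj\colon\ptrntr{D}=1\}$, and since $\hstahj=\stahj$ here, the conditions coincide. The finite decomposition~\eqref{xappcara} is Corollary~\ref{fidiset}.

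\textbf{Case $\dim(\hh)=\dim(\jj)=\infty$.} The inclusions $\trchj\supsetneq\thptj$ and $\stahj\supsetneq\hstahj$ are Corollary~\ref{corsubne}. The inclusion $\thptj\supsetneq\thatj$ is the non-completeness statement of Proposition~\ref{profoc}.

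For the selfadjoint versions I will use three facts:
\begin{itemize}
\item $\thsptjs=\thptj\cap\trchjsa$, by~\eqref{eqhtrchjs};
\item the density operator $D\notin\hstahj$ constructed in Corollary~\ref{corsubne} is selfadjoint, which gives $\trchjsa\supsetneq\thsptjs$;
\item a selfadjoint element of $\thptj\setminus\thatj$ exists. Since $\thptj=\thsptjs+\ima\thsptjs$ (Proposition~\ref{prolinc}) and $\thatj=\thsatjs+\ima\thsatjs$, any element of $\thptj\setminus\thatj$ has a selfadjoint part lying outside $\thsatjs$, which gives $\thsptjs\supsetneq\thsatjs$.
\end{itemize}

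Unboundedness of $\ptrnor$ and $\ptrnortr$ on $\hstahj$ is Corollary~\ref{unbonor}. The three equivalent bullets again come from Theorem~\ref{chasest}, Corollary~\ref{chasest-bis} and Corollary~\ref{corconta}; the last of these ensures $\sesta\subset\hstahj$, so that the first bullet implies the membership condition in the other two.

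\textbf{Final claim.} For any cross state $D$ we have $\ptrntr{D}\ge\|D\ptrn\ge\|D\ttrn=1$, by Corollary~\ref{corimplise}. If $\|D\ptrn=1$, then $D\in\sesta$ by Theorem~\ref{chasest}. Hence every entangled cross state satisfies $\|D\ptrn>1$.

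\textbf{Main obstacle.} The only step not already stated verbatim is the strict inclusion $\thsptjs\supsetneq\thsatjs$. The argument through the real/imaginary decomposition above should suffice; as a fallback, the selfadjoint cross state $\Den$-type sums built from $\Dl$, with norms $2^{2l}$, lie in $\thsptjs$ but have unbounded Schmidt-type rank, so they are not finite sums.
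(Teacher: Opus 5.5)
Your proposal is correct and is essentially the paper's own proof, which likewise just assembles Theorem~\ref{theine}, Theorem~\ref{chasest}, Corollary~\ref{chasest-bis}, Corollary~\ref{fidiset}, Corollaries~\ref{corsubne} and~\ref{unbonor}, and Corollary~\ref{corimplise}; your extra attention to the strict inclusions is a genuine improvement, since Corollary~\ref{corsubne} by itself does not yield $\thptj\supsetneq\thatj$ or $\thsptjs\supsetneq\thsatjs$. Both of these are in fact already stated in Proposition~\ref{profoc} (its second paragraph covers the real case), so your real/imaginary-part argument is dispensable, though it is valid once you say ``real or imaginary part'' rather than ``selfadjoint part''; your fallback should be dropped, since each $\Den$ is a finite sum lying in $\thatj$ while the infinite sum $D$ is not in $\thsptjs$ at all.
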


\begin{proof}
For the first assertion of the theorem, see Theorem~\ref{theine}, and, for the
inequalities~\eqref{xrenors}, see, in particular, relations~\eqref{estdens}
and~\eqref{estdens-bis} therein.
For the equivalence of all the various characterizations of the set $\sesta$ of
separable states, see Theorem~\ref{chasest} and Corollary~\ref{chasest-bis},
that hold for any dimension of the Hilbert spaces $\hh$, $\jj$ of the bipartition.
In particular, for $\enne\equiv\dim(\hhjj)<\infty$, see Corollary~\ref{fidiset}.
Focusing now on the case where $\dim(\hh)=\dim(\jj)=\infty$, relations~\eqref{xnoncoispa}
and~\eqref{xnoncoispa-bis} follow from Corollary~\ref{corsubne}, and, for the assertion
concerning the unboundedness of the norms $\ptrnor$ and $\ptrnortr$ on $\hstahj$,
see Corollary~\ref{unbonor}.
Finally, with no assumption on the dimension of the Hilbert spaces $\hh$ and $\jj$, for every
cross state $D\in\hstahj$, $\ptrntr{D}\ge\|D\ptrn\ge 1$ (Corollary~\ref{corimplise}), and,
for every \emph{entangled} cross state $D\in\hstahj$, $\|D\ptrn\neq 1$; hence, for every
$D\in\hstahj\setminus\sesta$, $\ptrntr{D}\ge\|D\ptrn>1$.
\end{proof}

\section{The projective norm and the entanglement function}
\label{entanglement}

For every density operator $D\in\stahj$, let us define the quantity
\begin{equation}  \label{dentfun}
\ent(D)\defi\sup\{|\tr(DL)|\colon L\in\bophj, \ \|L\nog=1\} \fin ,
\end{equation}
where $\norg$ is the injective norm of $\bophj$; see~\eqref{fornog}. Considering the case where $L=I$
on the rhs of~\eqref{dentfun}, we see that $1\le\ent(D)(\le +\infty)$. We then obtain an extended
real-valued function
\begin{equation} \label{entfun}
\ent\colon\stahj\ni D\mapsto\ent(D)\in[1,+\infty] \fin ,
\end{equation}
that will be called the \emph{entanglement function}. We will prove that the minimum value $\ent(D)=1$
is actually attained by the function $\ent$ on certain density operators $D$, and --- in the genuinely
infinite-dimensional setting where $\dim(\hh)=\dim(\jj)=\infty$ --- the value $+\infty$ is attained too.
Therefore, in the following, where appropriate, standard arithmetics and ordering of the extended real
number system should be understood. In particular, as usual in the context of convex analysis, we will
adopt the following conventions: $(+\infty)+(+\infty)=+\infty$, $0\sei(+\infty)=0$ and, for any $r>0$,
$r\sei(+\infty)=+\infty$.
The reader will also notice that, for every cross state  $D\in\hstahj$, $\ent(D)$ is precisely the norm
$\|\digud\norps=\|D\ptrn$ of the bounded linear functional $\digud\colon\bophjg\ni L\mapsto\tr(DL)\in\ccc$,
where $\bophjg$ denotes the normed space $\big(\bophj,\norg\big)$ (recall Remark~\ref{redigub}).
Specifically --- in the case where $\min\{\dim(\hh),\dim(\jj)\}<\infty$, and hence, by Theorem~\ref{theine},
$\stahj=\hstahj$ --- the entanglement function $\ent$ is just the restriction to $\stahj$ of the
projective norm $\ptrnor$. If $\dim(\hh)=\dim(\jj)=\infty$, instead, $\hstahj\subsetneq\stahj$
(Corollary~\ref{corsubne}), and $\ent$ coincides with $\ptrnor$ on the cross states $\hstahj$,
whereas it takes the value $+\infty$ on all those states that are \emph{not} cross states.
In fact, the following result holds:

\begin{theorem} \label{theoentfu}
The extended real-valued entanglement function $\ent\colon\stahj\rightarrow[1,+\infty]$
satisfies the following properties:
\begin{enumerate}[label=\tt{(E\arabic*)}]

\item \label{enta}
it is a convex function, i.e.,
$\ent(t D_1+ (1-t) D_2)\le t\sei\ent(D_1) + (1-t)\sei \ent(D_2)$, for all
$t\in[0,1]$ and $D_1, D_2\in\stahj$;

\item \label{entb}
$\ent(D)=\|D\ptrn$, for all $D\in\hstahj$;

\item \label{entc}
$\ent(D)=+\infty$, for all $D\in\stahj\setminus\hstahj$, i.e., for all density operators $D$
on $\hhjj$ that are not cross states (which form a nonempty set iff $\dim(\hh)=\dim(\jj)=\infty$);

\item \label{entd}
$\ent$ is lower semicontinuous wrt the standard topology of $\stahj$;

\item \label{ente}
$\ent(D)=1$ if $D\in\sesta$ {\rm ($D$ separable)}, whereas $\ent(D)>1$ if $D\in\stahj\setminus\sesta$
{\rm ($D$ entangled)};

\item \label{entf}
$\ent(D)=\sup\{\tr(DF)\colon F\in\frh\altp\frj, \ F=F^\ast, \ \|F\nog=1\}$;

\item \label{entg}
for every $D\in\stahj$, and every pair $U$, $V$ of unitary operators on $\hh$ and $\jj$, respectively,
\begin{equation} \label{invlocuni-bis}
\ent(D)=\ent((U\otimes V)\tre D\tre (U\otimes V)^\ast) \fin ;
\end{equation}

\item \label{enth}
for every $r\in[1,+\infty)$, the sublevel set $\stahjr\defi\{D\in\stahj\colon\ent(D)\le r\}$ of $\ent$ is
closed in $\stahj$, wrt the standard topology of $\stahj$.

\end{enumerate}
\end{theorem}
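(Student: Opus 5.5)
The plan is to view $\ent$ as a supremum of continuous affine functions and derive most properties from that; the one genuinely new ingredient is property \ref{entc}. Write $\ent(D)=\sup_{L}|\tr(DL)|$ over $L\in\bophj$ with $\|L\nog=1$.

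First, \ref{entf}. For $D\in\stahj$ and any $L$ with $\|L\nog=1$, split $L=L_1+\ima L_2$ into selfadjoint parts. Each has $\|L_i\nog\le 1$, and since $\tr(DL_i)\in\erre$ we get $|\tr(DL)|=\max_{z\in\toro}\repa(z\tr(DL))=\max_z\tr(D(zL)_{\Re})$, exactly as in Remark~\ref{reneothe}. So the supremum can be taken over selfadjoint $L$ with $\|L\nog\le1$ and without absolute value. Next approximate a selfadjoint $L$ by selfadjoint elements of $\frh\altp\frj$. Take $F_n=(P_n\otimes Q_n)L(P_n\otimes Q_n)$, with $P_n,Q_n$ finite-rank projections increasing strongly to $I$. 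Then $F_n\in\bo(P_n\hh)\altp\bo(Q_n\jj)\subset\frh\altp\frj$ (finite dimension), $F_n$ is selfadjoint, and $\|F_n\nog\le\|L\nog$ by the vector formula in \eqref{fornog}, since the compressions map product unit vectors to product vectors of norm $\le1$. Finally $\tr(DF_n)=\tr((P_n\otimes Q_n)D(P_n\otimes Q_n)L)\to\tr(DL)$, because the compressed $D$ converges in trace norm. Hence $\ent(D)=\sup\{\tr(DF)\}$ over the stated set. Rescaling to $\|F\nog=1$ only increases positive values, and the value $\ge1$ is already attained at $F=I$ restricted to finite blocks. So \ref{entf} holds.

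From \ref{entf}, each map $D\mapsto\tr(DF)$ with $F$ finite rank, or $F=I$, is affine and continuous in trace norm, hence in the standard topology (Fact~\ref{statop}). A supremum of such maps is convex and lower semicontinuous, which gives \ref{enta}, \ref{entd}, and the closedness of sublevel sets in \ref{enth}. Property \ref{entg} follows from \eqref{invlocuni} via the substitution $L\mapsto (U\otimes V)^\ast L(U\otimes V)$. Property \ref{entb} is Remark~\ref{redigub}, i.e.\ the fifth line of \eqref{varexps}. Property \ref{ente} then follows from \ref{entb} and Theorem~\ref{xextcnc}, once \ref{entc} shows that separable and entangled non-cross states are handled correctly: separable states are cross states (Corollary~\ref{corconta}), and non-cross states get the value $+\infty>1$.

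The main obstacle is \ref{entc}: showing $\ent(D)<\infty$ forces $D\in\thptj$. The plan is to use the duality in Theorem~\ref{neothe}. If $\ent(D)=r<\infty$, then $|\tr(DK)|\le r\|K\nog$ for all $K\in\cph\altp\cpj$. So $K\mapsto\tr(DK)$ extends to a bounded functional on $\cph\injtp\cpj$, and by \eqref{isoba} there is $C\in\thptj$ with $\tr(CK)=\tr(DK)$ for all $K\in\cph\altp\cpj$. Taking $K=\hep\otimes\hcp$ and using Lemma~\ref{lefr}, these elements span a trace-norm-dense set of finite-rank operators, so they separate points of $\trchj$. Hence $C=D$, and $D$ is a cross state. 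The non-emptiness statement is Corollary~\ref{corsubne} together with Theorem~\ref{theine}.
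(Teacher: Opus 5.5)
Your proposal is correct, but it proves \texttt{(E6)} by a different route from the paper. For \texttt{(E3)} you follow the paper's argument: boundedness on $\cph\altp\cpj$, extension to $\cph\injtp\cpj$, the duality \eqref{isoba}, and the identification $C=D$. You also obtain \texttt{(E2)} and \texttt{(E5)} as the paper does.

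The paper proves \texttt{(E6)} by cases.
\begin{itemize}
\item For $D\in\hstahj$, it reads off the $\frh\altp\frj$ line of \eqref{varexps} and then applies the selfadjoint reduction of Remark~\ref{reneothe}. That line rests on Theorem~\ref{neothe} and on the $\norg$-density of $\frh\altp\frj$ in $\cph\altp\cpj$ (Proposition~\ref{pronorg}).
\item For $D\notin\hstahj$, it uses the unboundedness of $K\mapsto\tr(DK)$ on $\cph\altp\cpj$, obtained in the proof of \texttt{(E3)}. This unboundedness transfers to the dense subspace $\frh\altp\frj$.
\end{itemize}

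Your compression $F_n=(P_n\otimes Q_n)L(P_n\otimes Q_n)$ treats all density operators at once. It needs only two facts: the product-vector formula in \eqref{fornog}, which gives $\|F_n\nog\le\|L\nog$, and trace-norm convergence of $(P_n\otimes Q_n)D(P_n\otimes Q_n)$ to $D$. What this buys you is that \texttt{(E6)} no longer depends on \texttt{(E3)} or on the duality theorem. It also hands you \texttt{(E1)}, \texttt{(E4)} and \texttt{(E8)} at once, as properties of a supremum of continuous affine functionals.

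The remaining differences are cosmetic. The paper gets \texttt{(E1)} and \texttt{(E4)} directly from the defining supremum over $\bophj$, which is just as short. You get \texttt{(E7)} from \eqref{invlocuni} via the substitution $L\mapsto(U\otimes V)^\ast L(U\otimes V)$ on all of $\bophj$, whereas the paper passes through \texttt{(E6)}; both work.

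One minor wording point: $\tr(D(P_n\otimes Q_n))$ only tends to $1$, so the value $1$ is approached rather than attained at $F=P_n\otimes Q_n$. That is all your rescaling step needs.
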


\begin{proof}
Property~\ref{enta} holds by the fact that, for every $t\in[0,1]$, every $L\in\bophj$ and any
$D_1,D_2\in\stahj$, we have: $|\tr((t D_1+ (1-t) D_2)L)| \le t\sei|\tr(D_1L)| + (1-t)\sei|\tr(D_2L)|$;
hence, if $D=t D_1+ (1-t) D_2$, then
\begin{align}
\ent(D)
& \defi
\sup\{|\tr(DL)|\colon L\in\alghj, \ \|L\nog=1\}
\nonumber\\
& \dodici \le
\sup\{t\sei|\tr(D_1L)| + (1-t)\sei|\tr(D_2L)|\colon L\in\bophj, \ \|L\nog=1\}
\nonumber\\
& \dodici \le
t\sei\ent(D_1) + (1-t)\sei \ent(D_2)
\fin .
\end{align}

Next, by definition~\eqref{dentfun}, and by Theorem~\ref{neothe} (see relation~\eqref{varexps}),
property~\ref{entb} holds too.

Let us now prove~\ref{entc}. We reason by contradiction. Supposing that $\dim(\hh)=\dim(\jj)=\infty$
--- otherwise, by Theorem~\ref{theine}, $\stahj=\hstahj$ --- let $D\in\stahj\setminus\hstahj$, and
assume that $\ent(D)<+\infty$. Then, since $\cph\altp\cpj\subset\bophj$, we have:
\begin{align}
+\infty>\ent(D)
& \defi
\sup\{|\tr(DL)|\colon L\in\bophj, \ \|L\nog=1\}
\nonumber\\
& \dodici \ge
\sup\{|\tr(DK)|\colon K\in\cph\altp\cpj, \ \|K\nog=1\} \fin .
\end{align}
Therefore, the mapping $\cph\altp\cpj\ni K\mapsto\tr(DK)\in\ccc$ is a bounded functional on the normed
space $(\cph\altp\cpj,\norg)$, that (uniquely) extends to a bounded functional on the the Banach space
completion of this normed space, namely, to the injective tensor product $\cph\injtp\cpj$. By
Theorem~\ref{neothe} --- see~\eqref{isoba} --- the isomorphism of Banach spaces
$\thptj\cong\big(\cph\injtp\cpj\big)^\ast$ holds, this isomorphism being implemented by the map $\kac$
(definition~\eqref{isdua}). Thus, there must be a cross trace class operator $C\in\thptj$ such that
$\tr(DK)=\kac(K)=\tr(CK)$, for all $K\in\cph\altp\cpj$. But this conclusion would imply that, actually,
$D=C$  --- e.g., because $\tr(D(\phikl\otimes\psimn))=\tr(C(\phikl\otimes\psimn))=
\langle\phi_l\otimes\psi_n, C(\phi_k\otimes\psi_m)\rangle$, where $\{\phi_k\}_{k\in\nat}$,
$\{\phi_m\}_{m\in\nat}$ are orthonormal bases in $\hh$ and $\jj$, respectively --- which leads us to
a contradiction wrt our initial assumption that $D\not\in\thptj$.

To prove~\ref{entd}, it is sufficient to note that the entanglement function
$\ent\colon\stahj\rightarrow[1,+\infty]$ is defined as the pointwise supremum over a collection of
functions of the form $\stahj\ni D\mapsto|\tr(DL)|\in\erre$, $L\in\bophj$. Since $\bophj=\trchj^\ast$,
all these functions are continuous --- hence, lower semicontinuous --- wrt the relative topology on
$\stahj$ induced by the (trace) norm topology of $\trchj$, i.e., wrt the standard topology of $\stahj$
(Fact~\ref{statop}). Therefore, by a well-known result (see, e.g., Proposition~{7.11} of~\cite{Folland-RA}),
$\ent$ is lower semicontinuous wrt the standard topology of $\stahj$.

Let us prove~\ref{ente}. By Theorem~\ref{chasest}, $\sesta=\stahjb$, i.e., for $D\in\hstahj$,
$\ent(D)=\|D\ptrn=1$ iff $D\in\sesta$. Therefore, $\ent(D)>1$ iff $D\in\stahj\setminus\sesta$;
in particular,  $\ent(D)=\|D\ptrn>1$, if $D\in\hstahj\setminus\sesta$ (see Theorem~\ref{xextcnc}), and
$\ent(D)=+\infty$, if $D\in\stahj\setminus\hstahj$ (by property~\ref{entc} of $\ent$).

We next prove property~\ref{entf}. Arguing as in Remark~\ref{reneothe}, but with $\alghj$ replaced
with $\frh\altp\frj$, one shows that~\ref{entf} holds true for every $D\in\hstahj$, because, in this case,
$\ent(D)=\|D\ptrn$. Let us now suppose that, instead, $D\in\stahj\setminus\hstahj$. By the proof
of~\ref{entc}, we know that the linear functional $\cph\altp\cpj\ni K\mapsto\tr(DK)\in\ccc$ is
\emph{not} bounded in this case, and therefore --- recalling final assertion of Proposition~\ref{pronorg}
--- it cannot be bounded on the $\norg$-dense linear subspace $\frh\altp\frj$ of $\cph\altp\cpj$
(and hence of $\cph\altp\cpj$), as well. Therefore, we have:
\begin{equation}
\sup\{|\tr(DF)|\colon F\in\frh\altp\frj, \ \|F\nog=1\}=+\infty=\ent(D) \fin .
\end{equation}
At this point, we can reason as in Remark~\ref{reneothe}; i.e., considering, for every
$F\in\frh\altp\frj$, the selfadjoint component $\fre\defi\frac{1}{2}(F+F^\ast)\in\frh\altp\frj$
of $F$, we have that
\begin{align}
+\infty=\ent(D)
& =
\sup\{|\tr(DF)|\colon F\in\frh\altp\frj, \ \|F\nog=1\}
\nonumber \\
& \dodici =
\sup\{|\repa(\tr(DF))|\colon F\in\frh\altp\frj, \ \|F\nog=1\}
\nonumber \\
& \dodici =
\sup\{|\tr(D\fre)|\colon F\in\frh\altp\frj, \ \|F\nog=1\}
\nonumber \\
& \dodici =
\sup\{|\tr(DF)|\colon F\in\frh\altp\frj, \ F=F^\ast, \ \|F\nog=1\}
\nonumber \\
& \dodici =
\sup\{\tr(DF)\colon F\in\frh\altp\frj, \ F=F^\ast, \ \|F\nog=1\} \fin .
\end{align}
Here, the second equality follows from the fact that $\|F\nog=\|zF\nog$, for every $z\in\toro$,
while, for the third equality, notice that $\repa(\tr(DF))=\frac{1}{2}(\tr(DF)+\tr(DF)^\ast)=
\frac{1}{2}(\tr(DF)+\tr(DF^\ast))=\tr(D\fre)$, where we have used the fact that $D=D^\ast$ and
the cyclic property of the trace. For the fourth equality, observe that
$\|\fre\nog\le\frac{1}{2}(\|F\nog+\|F^\ast\nog)=\|F\nog$, which would imply that the supremum
on the third line is smaller that the supremum on the fourth line; but, comparing the latter with
the supremum on the first line, we see that the reverse inequality holds too, and then we have
an equality. The fifth line obtains just noting that $|\tr(DF)|=\max\{\tr(DF),\tr(D(-F))\}$ therein.

Let us prove property~\ref{entg}. Since $\tr((U\otimes V)\tre D\tre (U\otimes V)^\ast F)=
\tr(D\tre (U\otimes V)^\ast F\tre(U\otimes V))$, for every $F\in\frh\altp\frj$, and, moreover, the
set $\{F\in\frh\altp\frj\colon F=F^\ast, \ \|F\nog=1\}$ is invariant wrt the local unitary transformation
$F\mapsto(U\otimes V)^\ast F\tre(U\otimes V)$ (recall that, by relation~\eqref{invlocuni},
$\|F\nog=\|(U\otimes V)^\ast F\tre(U\otimes V)\nog$), this is a straightforward consequence of
property~\ref{entf}.

Let us finally prove property~\ref{enth}. In fact, by~\ref{entd}, the entanglement function
$\ent\colon\stahj\rightarrow[1,+\infty]$ is lower semicontinuous wrt the standard topology
of $\stahj$; i.e., for every $r\in[1,+\infty)$, the superlevel set $\{D\in\stahj\colon\ent(D)>r\}$
of $\ent$ is open, or, equivalently, its complement --- the sublevel set $\stahjr$ --- is closed in
$\stahj$, wrt the standard topology of $\stahj$.
\end{proof}

\begin{remark}
Note that the entanglement function $\ent\colon\stahj\rightarrow[1,+\infty]$ can be extended,
in a natural way to the whole bipartite trace class $\trchj$. To the best of our knowledge,
the first proposal of an entanglement measure based on the projective norm is due to
Rudolph~\cite{Rudolph-measures}, who considered a finite-dimensional setting. A general
entanglement function of the kind we are considering here was defined, later on, by Arveson
--- see Part~{2} of~\cite{Arveson} --- who also proved that, in the case where at most one
of the Hilbert spaces of the bipartition (or, more generally, of a multipartition) is
infinite-dimensional, the entanglement function coincides with the projective norm. Here,
we have proved that, in the genuinely infinite-dimensional setting, the entanglement function
coincides with the projective norm \emph{when restricted to cross states}, whereas it assumes
the value $+\infty$ \emph{precisely} on all other states.
\end{remark}

\begin{remark} \label{renewenta}
Note that $\{F\in\frh\altp\frj\colon F=F^\ast\}=\frsah\altp\frsaj$, where $\frsah$, $\frsaj$
are the linear spaces of finite-rank selfadjoint operators on $\hh$ and $\jj$. In fact, the
set on the lhs of the equality contains the set on the rhs. Conversely, if
$F=\sum_k G_k\otimes H_k\in\frh\altp\frj$ and $F=F^\ast$, then
$F=\frac{1}{2}(F+F^\ast)=\frac{1}{4}\sum_k\big(G_k+G_k^\ast\big)\otimes\big(H_k+H_k^\ast\big)+\frac{1}{4}
\sum_k\big(\ima\big(G_k-G_k^\ast\big)\big)\otimes\big(-\ima\big(H_k-H_k^\ast\big)\big)\in\frsah\altp\frsaj$.
Therefore, by property~\ref{entf} in Theorem~\ref{theoentfu}, we have:
\begin{equation} \label{newenta}
\ent(D)=\sup\{\tr(DF)\colon F\in\frsah\altp\frsaj, \ \|F\nog=1\} \fin , \quad
\forall\cinque D\in\stahj \fin .
\end{equation}
\end{remark}

\begin{remark} \label{remalg}
Observe that, for every density operator $D\in\stahj$, we have:
\begin{align}
\ent(D)
& =
\sup\{|\tr(DK)|\colon K\in\alghj, \ \|K\nog=1\}
\nonumber \\
& =
\sup\{|\tr(DK)|\colon K\in\alghj, \ K=K^\ast, \ \|K\nog=1\}
\nonumber \\
& =
\sup\{\tr(DK)\colon K\in\alghj, \ K=K^\ast, \ \|K\nog=1\} \fin .
\end{align}
These relations follow immediately from definition~\eqref{dentfun} and property~\ref{entf}
in Theorem~\ref{neothe}, simply noting that $\{F\in\frh\altp\frj\colon F=F^\ast\}\subset
\{K\in\alghj\colon K=K^\ast\}\subset\alghj\subset\bophj$, and, for obtaining the third equality,
that, if $K=K^\ast$, then $|\tr(DK)|=\max\{\tr(DK),\tr(D(-K))\}$. By the same reasoning, the supremum
in~\eqref{dentfun} can be taken over the selfadjoint part $\bophjsa$ of $\bophj$ only, and,
in this case, one may replace $|\tr(DL)|$ with $\tr(DL)$ therein.
\end{remark}

\begin{remark}
Let $\st(\bophj)$ be the convex set of \emph{all states} of the $\cast$-algebra $\bophj$,
i.e., the set of all normalized, positive (hence, bounded) functionals on
$\bophj$~\cite{Holevo,Emch,Moretti,Murphy}:
\begin{equation}
\stbhj)\defi\{\stal\in\dbophj\colon\stal\ge 0,\ \stal(I)=1\} \fin .
\end{equation}
The entanglement function $\ent\colon\stahj\rightarrow[1,+\infty]$ admits a natural extension, i.e.,
\begin{equation}
\entex\colon\stbhj\ni \stal\mapsto\entex(\stal)\in[1,+\infty] \fin ,
\end{equation}
where, for every $\stal\in\stbhj$, we set
\begin{align}
\entex(\stal)
& \defi
\sup\{|\stal(B)|\colon B\in\bophj, \ \|B\nog=1\}
\nonumber \\
& \dodici =
\sup\{|\stal(B)|\colon B\in\bophjsa, \ \|B\nog=1\} \fin .
\end{align}
The second line above obtains arguing as in Remark~\ref{reneothe}, and taking into account
the well-known fact that $\stal(B^\ast)=\stal(B)^\ast$, for every positive functional on
the $\cast$-algebra $\bophj$ (see, e.g., Theorem~{3.3.2} of~\cite{Murphy}). It is worth
observing that, if $\dim(\hhjj)=\infty$, in the definition of the extended entanglement
function $\entex$ one may \emph{not} replace the $\cast$-algebra $\bophj$ with, say, the
smaller algebra $\cast$-algebra $\alghj$, as it can be done for the entanglement function
$\ent$ (recall Remark~\ref{remalg}). Indeed, there are states of the $\cast$-algebra $\bophj$
--- which form the convex set $\stbhjs$ of \emph{singular states} --- that vanish identically
on the compact operators $\cphj$. Precisely, every state $\stal\in\stbhj$ can be uniquely
decomposed into a convex combination $\stal=p\sei\stald + (1-p)\tre\stals$, $p\in[0,1]$,
where $\stald=\tr(D\argo)$, for some $D\in\stahj$, is a normal state of the $\cast$-algebra
$\bophj$, whereas $\stals\in\stbhjs$ is a singular state; see Section~{3}, in Chapter~{8}
of~\cite{Emch}, and Proposition~{10.4.3} of~\cite{Kadison}.
\end{remark}

\begin{remark}
Along lines similar to those followed for introducing the entanglement function $\ent$, one can
define an \emph{Hermitian entanglement function} $\enthe\colon\stahj\rightarrow[1,+\infty]$, which
is an extended real-valued entanglement function such that $\enthe(D)=\ptrntr{D}$, for all cross
states $D\in\hstahj$, and $\enthe(D)=+\infty$, for all other states $D\in\stahj\setminus\hstahj$.
\end{remark}

\begin{corollary}
The convex subset $\hstahj$ of $\stahj$ admits the following characterization:
\begin{equation}
\hstahj=\big\{D\in\stahj\colon\xid=\tr(D\argo)\in\alghjgd\big\} \fin .
\end{equation}
Moreover, for every $D\in\hstahj$, $\|\xid\dunog=\ent(D)=\|D\ptrn$, and hence
\begin{equation} \label{sestafun}
\sesta=\big\{D\in\stahj\colon\mbox{\rm $\xid\in\alghjgd$ and $\|\xid\dunog=1$}\big\} \fin .
\end{equation}
\end{corollary}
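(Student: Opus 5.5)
The plan is to characterize membership of $\xid=\tr(D\argo)$ in $\alghjgd$ through the entanglement function. The key identity is Remark~\ref{remalg}, which holds for every $D\in\stahj$:
\[
\ent(D)=\sup\{|\tr(DK)|\colon K\in\alghj,\ \|K\nog=1\}.
\]
Now regard $\xid$ as a linear functional on the normed space $\alghjg=(\alghj,\norg)$. Its dual norm with respect to $\norg$ is, by definition, this same supremum. Hence $\xid\in\alghjgd$ (that is, $\xid$ is $\norg$-bounded) iff $\ent(D)<+\infty$. If it is bounded, we also get $\|\xid\dunog=\ent(D)$.

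First inclusion: let $D\in\hstahj$. By property~\ref{entb} of Theorem~\ref{theoentfu}, $\ent(D)=\|D\ptrn<\infty$. Alternatively, $\xid=\iden(D)\in\halghj\subset\alghjgd$ with $\|\xid\nodg=\|D\ptrn$ by Proposition~\ref{pronor}. Either route gives $\xid\in\alghjgd$ and $\|\xid\dunog=\ent(D)=\|D\ptrn$.

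Reverse inclusion: let $D\in\stahj$ with $\xid\in\alghjgd$. Then $\ent(D)=\|\xid\dunog<+\infty$. By property~\ref{entc} of Theorem~\ref{theoentfu}, a state that is not a cross state has $\ent=+\infty$, so $D\in\hstahj$. The only point needing care, and the one step I expect to require real checking, is that the supremum in Remark~\ref{remalg} really runs over $\alghj$ rather than $\bophj$. This is justified by the proof of~\ref{entc}: finiteness of the supremum over the smaller set $\cph\altp\cpj\subset\alghj$ already forces $D$ to be a cross state, via the duality $\thptj\cong(\cph\injtp\cpj)^\ast$ of Theorem~\ref{neothe}. So I would cite that argument directly.

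Finally, \eqref{sestafun} follows by combining the two parts. Given the first characterization and the norm identity, the condition "$\xid\in\alghjgd$ and $\|\xid\dunog=1$" is equivalent to "$D\in\hstahj$ and $\|D\ptrn=1$". By Theorem~\ref{chasest} (equivalently, property~\ref{ente}), this is exactly $D\in\sesta$.
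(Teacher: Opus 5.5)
Your proposal is correct and follows the paper's own proof. The paper likewise combines Remark~\ref{remalg}, which identifies $\|\xid\dunog$ with $\ent(D)$ for every state, with properties~\ref{entb}--\ref{entc} to characterize $\hstahj$, and then uses property~\ref{ente} (equivalently Theorem~\ref{chasest}) to get~\eqref{sestafun}. The point you flag is settled in the paper by sandwiching the supremum over $\alghj$ between the supremum over selfadjoint elements of $\frh\altp\frj$ (property~\ref{entf}) and the supremum over $\bophj$, so your detour through the proof of~\ref{entc} is valid but not needed.
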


\begin{proof}
By Remark~\ref{remalg}, and by points~\ref{entb} and~\ref{entc} in
Theorem~\ref{theoentfu}, given a density operator $D\in\stahj$, the associated
functional $\xid\colon\alghjg\ni K\mapsto\tr(DK)\in\ccc$ is bounded iff $D\in\hstahj$;
moreover, for every $D\in\hstahj$, $\|\xid\dunog=\ent(D)=\|D\ptrn$, and hence, by
point~\ref{ente} in Theorem~\ref{theoentfu}, the characterization~\eqref{sestafun}
of $\sesta$ holds.
\end{proof}

In the case where at least one of the Hilbert spaces $\hh$ and
$\jj$ is finite-dimensional, the characterizations~\eqref{relsesta} and~\eqref{sestafun} of
$\sesta$ are equivalent, because, in this case, $\hstahj=\stahj$ and, moreover,  the Banach
space $\alghjgd\equiv\big(\alghj,\norg\big)^\ast$ is a renorming of the closed subspace
$\identi(\trchj)\cong\trchj$ of $\alghjd$; whereas, in the genuinely infinite-dimensional
case --- i.e., if $\dim(\hh)=\dim(\jj)=\infty$ --- $\hstahj\subsetneq\stahj$, and, for every
$D\in\stahj\setminus\hstahj$, the linear functional $\xid\colon\alghjg\rightarrow\ccc$ is
\emph{not} bounded. Therefore, in the latter case, relation~\eqref{sestafun} provides a stronger
characterization of separable states.

The entanglement function has a regular behavior wrt to restriction to (or extension from)
subspaces of the Hilbert spaces of the bipartition:

\begin{proposition}
Let $\vv$, $\ww$ be closed subspaces of the Hilbert spaces $\hh$ and $\jj$, respectively, and
let $\entvw\colon\stavw\rightarrow[1,+\infty]$ be the entanglement function on $\stavw$. Then,
for every $D\in\stavw$,
\begin{equation}
\entvw(D)=\ent(D) \fin ,
\end{equation}
where, with a slight abuse, we have identified the density operator $D$ on $\vv\otimes\ww$ with its
image via the natural (isometric) embedding of $\trcvw$ into $\trchj$.
\end{proposition}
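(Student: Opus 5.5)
We prove the two inequalities $\entvw(D)\le\ent(D)$ and $\ent(D)\le\entvw(D)$ separately. Throughout, $\pi\in\bop$ and $\varpi\in\bopjj$ are the orthogonal projections onto $\vv$ and $\ww$. The embedded operator $D$ then satisfies $D=(\pi\otimes\varpi)\tre D\tre(\pi\otimes\varpi)$.

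The cleanest route is through property~\ref{entb} and property~\ref{entc} of Theorem~\ref{theoentfu}, together with Proposition~\ref{proimm}. Suppose first that $D$ is a cross state wrt $\vv\otimes\ww$, i.e.\ $D\in\trcvv\prtp\trcww$. Proposition~\ref{proimm} gives a linear isometry $\isot$ of $\trcvv\prtp\trcww$ into $\thptj$, so $D$ is a cross state wrt $\hhjj$ and $\vevw D\ptrn=\|D\ptrn$. Hence $\entvw(D)=\vevw D\ptrn=\|D\ptrn=\ent(D)$.

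Conversely, suppose $D$ is a cross state wrt $\hhjj$, i.e.\ $D\in\thptj$. Since $D=(\pi\otimes\varpi)D(\pi\otimes\varpi)$, the characterization of $\ran(\isot)$ in Proposition~\ref{proimm} shows that $D\in\ran(\isot)$. Thus $D$ is also a cross state wrt $\vv\otimes\ww$, and the previous case applies. It follows that $D$ is a cross state for one bipartition iff it is for the other.

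If $D$ is a cross state for neither bipartition, property~\ref{entc} applied on both sides gives $\entvw(D)=+\infty=\ent(D)$. This covers every case.

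A direct alternative avoids the cross-state dichotomy and works with the supremum formula~\eqref{dentfun}. Given $L$ with $\|L\nog=1$ on $\hhjj$, compress it to $L'=(\pi\otimes\varpi)L(\pi\otimes\varpi)$, viewed on $\vv\otimes\ww$. Then $\tr(DL)=\tr(DL')$. By the vector formula in~\eqref{fornog}, the injective norm of $L'$ on $\vv\otimes\ww$ is a supremum over unit product vectors in the smaller spaces, so it is at most $\|L\nog$. This gives $\ent(D)\le\entvw(D)$.

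For the reverse inequality, extend an operator $L'$ on $\vv\otimes\ww$ by zero. The product-vector supremum is unchanged, because for any unit $\phi,\eta\in\hh$ and $\psi,\chi\in\jj$ one has $\langle\phi\otimes\psi,L'(\eta\otimes\chi)\rangle=\langle\pi\phi\otimes\varpi\psi,L'(\pi\eta\otimes\varpi\chi)\rangle$, and projections do not increase norms. After rescaling by the norms of the projected vectors, each such value is a value on $\vv\otimes\ww$. Hence the extension has the same injective norm, and $\entvw(D)\le\ent(D)$.

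The only delicate step in this alternative is the equality of the injective norms of $L'$ and its zero extension, which is exactly the rescaling argument just given.
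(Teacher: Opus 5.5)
Your proof is correct. Your main argument takes a different route from the paper, while the paper's own proof is essentially your ``direct alternative''. The paper identifies $\bopvw$ with the operators $L\in\bophj$ satisfying $L=(\pi\otimes\varpi)\tre L\tre(\pi\otimes\varpi)$. Using the product-vector formula in~\eqref{fornog}, it observes that the injective norm of $\bopvw$ is the restriction of the injective norm of $\bophj$, and that compression by $\pi\otimes\varpi$ does not increase $\norg$. It then compares the two suprema in~\eqref{dentfun} directly, in a single chain of (in)equalities.

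That argument is elementary. It never needs to know whether $D$ is a cross state, and it uses nothing beyond~\eqref{fornog}.

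Your main route goes instead through properties~\ref{entb} and~\ref{entc} of Theorem~\ref{theoentfu} and the isometric embedding $\isot$ of Proposition~\ref{proimm}. This leans on much heavier machinery: \ref{entb} and~\ref{entc} rest on the duality $\thptj\cong\big(\cph\injtp\cpj\big)^\ast$ of Theorem~\ref{neothe}. In return, it makes the structure transparent. It also yields explicitly that $D$ is a cross state wrt $\vv\otimes\ww$ iff it is one wrt $\hhjj$, with $\vevw D\ptrn=\|D\ptrn$ in that case.

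You should state one point rather than leave it tacit. You need $\isot(C)$ to coincide with the image of $C\in\trcvv\prtp\trcww\subset\trcvw$ under the natural embedding $\trcvw\hookrightarrow\trchj$. This holds on elementary tensors. It extends to any standard decomposition because $\ptrnor$ dominates $\ttrnor$, so both series converge in trace norm to the same operator. This compatibility, together with injectivity of the embedding, is what lets you pass in both directions between ``$D\in\ran(\isot)$'' and ``$D$ is a cross state wrt $\vv\otimes\ww$''.
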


\begin{proof}
Let $\pi\in\bop$, $\varpi\in\bopjj$ be the orthogonal projections onto $\vv$ and $\ww$.
The Banach space $\bopvw$ can be identified with the closed subspace
$\big\{L\in\bophj\colon L=(\pi\otimes\varpi)\tre L\tre (\pi\otimes\varpi)\big\}$ of $\bophj$,
i.e., the range of the continuous projection
$\bophj\ni L\mapsto(\pi\otimes\varpi)\tre L\tre (\pi\otimes\varpi)\in\bophj$
(which is a norm-closed by a well-known result; see, e.g., Corollary~{3.2.10} of~\cite{Megginson}).
Let us denote by $\norg$ the injective norm of $\bopvw$ (defined as in~\eqref{defnog}); namely, for
every $L\in\bopvw$, $\|L\nog\defi\sup\big\{|\tr(CL)|\colon C\in\trcvv\prtp\trcww,\ \vevw C\ptrn=1\big\}$,
where $\vwptrnor$ denotes the norm of the projective tensor product $\trcvv\prtp\trcww$ (recall that,
by Proposition~\ref{proimm}, $\vevw C\ptrn=\|C\ptrn$). Observe that $\norg$ coincides with the restriction
to $\bopvw$ of the injective norm of $\bophj$ (which justifies our use of a unique symbol for denoting
both norms). This fact can be easily checked, e.g., recalling relation~\eqref{fornog}, and observing
that, for every $L\in\bopvw$ --- i.e., for every $L\in\bophj$ such that
$L=(\pi\otimes\varpi)\tre L\tre (\pi\otimes\varpi)$ --- we have:
\begin{align}
\|L\nog
& =
\sup\{|\langle\phi\otimes\psi,L(\eta\otimes\chi)\rangle|\colon
\phi,\eta\in\vv,\ \psi,\chi\in\ww, \ \|\phi\|=\|\eta\|=\|\psi\|=\|\chi\|=1\}
\nonumber \\
& =
\sup\{|\langle\phi\otimes\psi,L(\eta\otimes\chi)\rangle|\colon
\phi,\eta\in\hh,\ \psi,\chi\in\jj, \ \|\phi\|=\|\eta\|=\|\psi\|=\|\chi\|=1\} \fin .
\end{align}
By the previous facts, it follows that, for every $D\in\stavw$,
\begin{align}
\entvw(D)
& =
\sup\{|\tr(DL)|\colon L\in\bophj, \ L=(\pi\otimes\varpi)\tre L\tre (\pi\otimes\varpi),
\ \|L\nog=1\}
\nonumber \\
& \le
\ent(D)
\nonumber \\
& =
\sup\{|\tr((\pi\otimes\varpi)\tre D\tre (\pi\otimes\varpi)\tre L)|\colon L\in\bophj, \ \|L\nog=1\}
\nonumber \\
& =
\sup\{|\tr(D\tre (\pi\otimes\varpi)\tre L\tre(\pi\otimes\varpi))|\colon L\in\bophj, \ \|L\nog=1\}
\nonumber \\
& \le
\sup\{|\tr(D\tre (\pi\otimes\varpi)\tre L\tre(\pi\otimes\varpi))|\colon L\in\bophj, \
\|(\pi\otimes\varpi)\tre L\tre(\pi\otimes\varpi)\nog=1\}
\nonumber \\
& =
\sup\{|\tr(DL)|\colon L\in\bophj, \ L=(\pi\otimes\varpi)\tre L\tre (\pi\otimes\varpi), \ \|L\nog=1\}
= \entvw(D) \fin .
\end{align}
Here, the first (and the last) equality, and the first inequality, hold by the fact that the injective
norm of $\bopvw$ coincides with the restriction to $\bopvw$ of the injective norm of $\bophj$, while,
for the second inequality, we have used the fact that
\begin{align}
\|(\pi\otimes\varpi)\tre L\tre(\pi\otimes\varpi)\nog
& =
\sup\{|\langle\phi\otimes\psi,L(\eta\otimes\chi)\rangle|\colon
\phi,\eta\in\vv,\ \psi,\chi\in\ww, \ \|\phi\|=\cdots=\|\chi\|=1\}
\nonumber \\
& \le\|L\nog \fin , \quad \forall\cinque\in\bophj \fin .
\end{align}
Therefore, in conclusion, for every density operator $D\in\stavw$, $\entvw(D)=\ent(D)$.
\end{proof}

The entanglement function $\ent$ is well-behaved wrt the action of suitable `local quantum maps'; i.e.,
the action of such a map \emph{does not increase} the entanglement of a cross state. To illustrate this fact,
let us recall that a any \emph{positive, trace-preserving linear map} $\ptph\colon\trc\rightarrow\trc\in\ltrc$
--- in short, a PTP map --- is bounded and \emph{mildly contractive}, i.e., $\|\ptph\noro=1$ (see Proposition~{1}
of~\cite{Aniello_CSP}). The following result that further characterizes the entanglement function $\ent$ holds:

\begin{proposition} \label{proptp}
Let $\ptph\colon\trc\rightarrow\trc$, $\ptpj\colon\trcjj\rightarrow\trcjj$ be positive, trace-preserving
linear maps. Then, there is a unique bounded linear map $\ptphj\colon\thptj\rightarrow\thptj$ such that
$\big(\ptphj\big)(S\otimes T)=\ptph(S)\otimes\ptpj(T)$, for all $S\in\trc$ and $T\in\trcjj$. The map
$\ptphj$ is trace preserving, i.e., $\tr\big(\big(\ptphj\big)(C)\big)=\tr(C)$, for all $C\in\thptj$.
If it is also positive --- namely, if $C\ge 0\implies\big(\ptphj\big)(C)\ge 0$ --- then we have:
\begin{itemize}

\item $\big(\ptphj\big)(\hstahj)\subset\hstahj$.

\item For every $D\in\hstahj$, $\ent\big(\big(\ptphj\big)(D)\big)\le\ent(D)$.

\end{itemize}
\end{proposition}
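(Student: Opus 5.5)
The existence and uniqueness of the bounded linear map $\ptphj$ is immediate from Corollary~\ref{coroptp}, since every PTP map is bounded. Moreover, $\|\ptph\noro=\|\ptpj\noro=1$ by the mild contractivity of PTP maps, so $\big\|\ptphj\bnorps=1$. For trace preservation I would first check elementary tensors: $\tr(\ptph(S)\otimes\ptpj(T))=\tr(\ptph(S))\tr(\ptpj(T))=\tr(S)\tr(T)=\tr(S\otimes T)$, and then extend by linearity to $\thatj$. To pass to all of $\thptj$, take a standard decomposition $C=\sum_k r_k(X_k\otimes Y_k)$. This series converges absolutely in $\ptrnor$; by continuity of $\ptphj$ the image series $\sum_k r_k\,\ptph(X_k)\otimes\ptpj(Y_k)$ converges in $\ptrnor$, hence in $\ttrnor$ by Remark~\ref{remabs}. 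The trace is $\ttrnor$-continuous and therefore $\ptrnor$-continuous (because $|\tr(C)|\le\|C\ttrn\le\|C\ptrn$), so applying it termwise gives $\tr((\ptphj)(C))=\sum_k r_k\tr(X_k)\tr(Y_k)=\tr(C)$.

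Now assume $\ptphj$ is positive. If $D\in\hstahj$, then $(\ptphj)(D)\in\thptj$ because the map has codomain $\thptj$ (identified with $\htrchj\subset\trchj$). It is positive by hypothesis and has unit trace by trace preservation. Hence it is a density operator lying in $\thptj$, i.e.\ a cross state, which proves the first bullet.

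For the second bullet, property~\ref{entb} of Theorem~\ref{theoentfu} reduces the claim to $\|(\ptphj)(D)\ptrn\le\|D\ptrn$. This follows from $\big\|\ptphj\bnorps=\|\ptph\noro\|\ptpj\noro=1$. A direct check is also available: for any standard decomposition $D=\sum_k r_k(X_k\otimes Y_k)$ we get $(\ptphj)(D)=\sum_k r_k\,\ptph(X_k)\otimes\ptpj(Y_k)$ with $\|\ptph(X_k)\trn\|\ptpj(Y_k)\trn\le1$. Taking the infimum over such decompositions in \eqref{pronorm-tris} (after renormalizing each term) gives the bound.

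The only subtle point is the identification of the abstract extension $\ptphj$ on the projective tensor product with the concrete operator obtained by summing the series in $\trchj$; this is handled by Remark~\ref{remimme} and the continuity of the immersion $\imme$. I also need the norm equality $\|\ptph\noro=1$ for PTP maps, which is cited from earlier work and which I take as given. Otherwise the argument is routine.
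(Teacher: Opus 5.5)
Your proposal is correct and follows essentially the same route as the paper: existence, uniqueness and $\big\|\ptphj\bnorps=1$ come from Corollary~\ref{coroptp} together with the mild contractivity of PTP maps, and trace preservation comes from evaluating the trace termwise along an absolutely $\ptrnor$-convergent simple-tensor decomposition. The entanglement bound then follows from property~\ref{entb} combined with the contractivity of $\ptphj$; your extra remarks (identifying the abstract extension with the series sum in $\trchj$, and the direct infimum check of the norm bound) are sound and merely make explicit what the paper leaves implicit.
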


\begin{proof}
The linear maps $\ptph\colon\trc\rightarrow\trc$ and $\ptpj\colon\trcjj\rightarrow\trcjj$,
being positive and trace-preserving, are bounded, and $\|\ptph\noro=\|\ptpj\noro=1$.
By Corollary~\ref{coroptp}, there is a unique bounded linear map
$\ptphj\colon\thptj\rightarrow\thptj$ such that $\big(\ptphj\big)(S\otimes T)=\ptph(S)\otimes\ptpj(T)$,
for all $S\in\trc$ and $T\in\trcjj$, and, moreover, $\big\|\ptphj\bnorps=\|\ptph\noro\sei\|\ptpj\noro=1$.
Let us prove that this map is trace-preserving. In fact, for every $C\in\thptj$, there is a decomposition
of the form $C=\ptrnsum_k S_k\otimes T_k$, with $T_k\otimes S_k\in\ethtj$ (Corollary~\ref{corcro}).
Then, we have that
\begin{equation}
\tr\big(\big(\ptphj\big)(C)\big)=\tr\big({\textstyle \ptrnsum_k\big(\ptphj\big)(S_k\otimes T_k)}\big)
={\textstyle \sum_k \tr(\ptph(S_k))\sei\tr(\ptpj(T_k))}=\tr(C) \fin ,
\end{equation}
where, for the last equality, we have used the fact that $\ptph$ and $\ptpj$ are trace-preserving.
Therefore, if $\ptphj$ is also positive, then $\big(\ptphj\big)(\hstahj)\subset\hstahj$. Moreover,
since $\big\|\ptphj\bnorps=1$, then for every $D\in\hstahj$,
$\ent\big(\big(\ptphj\big)(D)\big)=\big\|\big(\ptphj\big)(D)\bptrn\le\|D\ptrn=\ent(D)$.
\end{proof}

We stress that the hypothesis that the map $\ptphj$ in Proposition~\ref{proptp} be positive is nontrivial,
and is realized, e.g., in the case where both the Hilbert spaces $\hh$, $\jj$ are finite-dimensional, and,
moreover, the PTP maps $\ptph$ and $\ptpj$ therein are, in particular, \emph{completely positive}~\cite{Paulsen,Stormer}.

By Theorem~\ref{theoentfu}, we will now prove that the entanglement of any given state
$D\in\stahj$ can be detected by a special type of quantum observables, a so-called
\emph{entanglement witness} (for the entangled state $D$)~\cite{Horodecki-bis,Terhal}.
Precisely, for every \emph{entangled} state $D\in\stahj$, there is a quantum observable
$E$ on $\hhjj$, such that the expectation value of $E$ is strictly larger than $1$ ---
and, in particular, by a suitable choice of $E$, arbitrarily close to $\ent(D)>1$, if
$\ent(D)<+\infty$ (i.e., if $D$ is a cross state) --- when measuring the observable on $D$,
whereas it is \emph{not} larger than $1$, not even in modulus, when measuring the observable
on \emph{any separable state} $\varsigma\in\sesta$. To make a long story short, a state is
entangled iff it admits a witness observable.

\begin{corollary}[Existence of suitable entanglement witnesses] \label{corentawi}
Let $D$ be an entangled state on the bipartite Hilbert space $\hhjj$; i.e., let
$D\in\stahj\setminus\sesta$. Then, there exists a bounded selfadjoint operator $E$ on
$\hhjj$ (an entanglement witness for $D$) satisfying the conditions
\begin{equation} \label{condsew}
E\in\frsah\altp\frsaj \fin , \ \|E\nog=1
\end{equation}
--- where $\frsah$ is the linear space of finite-rank selfadjoint operators on $\hh$ ---
and such that
\begin{equation}
\tr(DE)>1 \fin , \ \mbox{whereas} \ \; |\tr(\varsigma E)|\le 1 \fin , \
\forall\cinque\varsigma\in\sesta \fin .
\end{equation}
Moreover, if $D\in\hstahj\setminus\sesta$ --- i.e., if $D$ is an entangled cross state ---
then, for every $\epsilon>0$, the finite-rank selfadjoint operator $E\in\frsah\altp\frsaj$
can be chosen in such a way that
\begin{equation} \label{xcondent}
\|D\ptrn-\epsilon<\tr(DE)\le\|D\ptrn=\ent(D) \fin .
\end{equation}
If, instead, $D\in\stahj\setminus\hstahj$ --- i.e., if $D$ is not a cross state, and hence
$\ent(D)=+\infty$ --- then, for every $n\in\nat$, the finite-rank selfadjoint operator
$E\in\frsah\altp\frsaj$ can be chosen in such a way that
\begin{equation} \label{xcondent-bis}
\tr(DE)>n  \fin .
\end{equation}
\end{corollary}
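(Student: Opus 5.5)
The plan is to extract the witness directly from the variational formula for the entanglement function in Remark~\ref{renewenta}, namely
\begin{equation*}
\ent(D)=\sup\{\tr(DF)\colon F\in\frsah\altp\frsaj,\ \|F\nog=1\} ,
\end{equation*}
valid for every $D\in\stahj$. The argument splits into three steps.

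First, I dispose of the condition on separable states, which holds for any admissible $E$. If $E\in\frsah\altp\frsaj$ with $\|E\nog=1$ and $\varsigma\in\sesta$, then $\varsigma$ is a cross state with $\|\varsigma\ptrn=1$ (Corollary~\ref{corconta} and Corollary~\ref{corimplise}). The defining formula~\eqref{fornog}, or equivalently~\eqref{varexps} in Theorem~\ref{neothe}, gives $|\tr(CL)|\le\|C\ptrn\sei\|L\nog$ for all $C\in\thptj$ and $L\in\bophj$. Hence $|\tr(\varsigma E)|\le 1$. The same estimate applied to a cross state $D$ gives the upper bound $\tr(DE)\le\|D\ptrn=\ent(D)$ in~\eqref{xcondent}.

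Second, I handle an entangled cross state $D\in\hstahj\setminus\sesta$. By properties~\ref{entb} and~\ref{ente} of Theorem~\ref{theoentfu}, $\ent(D)=\|D\ptrn>1$ is finite. Fix $\epsilon>0$, and replace it if necessary by $\min\{\epsilon,\|D\ptrn-1\}$. By the definition of supremum in the displayed formula there is an $E\in\frsah\altp\frsaj$ with $\|E\nog=1$ and $\tr(DE)>\|D\ptrn-\epsilon\ge 1$. This gives both $\tr(DE)>1$ and~\eqref{xcondent}.

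Third, I handle a non-cross state $D\in\stahj\setminus\hstahj$. Then $\ent(D)=+\infty$ by property~\ref{entc}, so the same supremum is $+\infty$. For each $n\in\nat$ there is therefore an admissible $E$ with $\tr(DE)>\max\{n,1\}$, which yields~\eqref{xcondent-bis} and $\tr(DE)>1$ together. Every entangled $D$ falls into one of these two cases, which proves the existence claim.

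The only delicate point is already settled earlier: the supremum may be taken over selfadjoint finite-rank elementary combinations, without absolute values, and it still recovers $\ent(D)$ even when $\ent(D)=+\infty$. This is property~\ref{entf} together with Remark~\ref{renewenta}, which rests on the $\norg$-density of $\frh\altp\frj$ and on the duality of Theorem~\ref{neothe}. No further obstacle is expected.
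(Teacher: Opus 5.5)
Your proposal is correct and follows essentially the same route as the paper. Both extract $E$ from the variational formula of Remark~\ref{renewenta} (property~\ref{entf}) and split into the entangled cross-state case (via~\ref{entb} and~\ref{ente}) and the non-cross case (via~\ref{entc}); both also bound $|\tr(\varsigma E)|$ by $\|\varsigma\ptrn\,\|E\nog=1$, and your replacement of $\epsilon$ by $\min\{\epsilon,\|D\ptrn-1\}$ is just a slightly tidier way of securing $\tr(DE)>1$ than the paper's restriction to $0<\epsilon<\|D\ptrn-1$.
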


\begin{proof}
Let $D\in\stahj\setminus\sesta$. By properties~\ref{ente} and~\ref{entf}
in Theorem~\ref{theoentfu}, and taking into account Remark~\ref{renewenta}, we have:
\begin{equation} \label{relent}
1<\ent(D)=\sup\{\tr(DF)\colon F\in\frsah\altp\frsaj, \ \|F\nog=1\} \fin .
\end{equation}

Thus, if $D\in\hstahj\setminus\sesta$, then by property~\ref{entb} in Theorem~\ref{theoentfu},
$\ent(D)=\|D\ptrn$, and, for every $\epsilon>0$, there is some $E\in\frsah\altp\frsaj$ ---
with $\|E\nog=1$ --- such that $\|D\ptrn-\epsilon<\tr(DE)\le\|D\ptrn=\ent(D)$. Assume that,
in particular, $0<\epsilon<\|D\ptrn-1$, and hence
\begin{equation}
1<\|D\ptrn-\epsilon<\tr(DE)\le\|D\ptrn=\ent(D) \fin .
\end{equation}
Therefore, we can and do choose $E$ in such a way that $\tr(DE)>1$ and condition~\eqref{xcondent}
is satisfied.

If, instead, $D\in\stahj\setminus\hstahj$ --- i.e., if $D$ is not a cross state --- then
$\ent(D)=+\infty$, and hence, by~\eqref{relent}, for every $n\in\nat$, there is some
$E\in\frsah\altp\frsaj$ --- with $\|E\nog=1$ --- such that $\tr(DE)>n\ge 1$ (thus, also
condition~\eqref{xcondent-bis} is satisfied).

In both cases, $\tr(DE)>1$, and, by Theorem~\ref{chasest} (or by property~\ref{ente}
in Theorem~\ref{theoentfu}), for every separable state $\varsigma\in\sesta$, we have
that $\ent(\varsigma)=\|\varsigma\ptrn=1$; hence, taking into account that $\|E\nog=1$,
we have: $|\tr(\varsigma E)|\le\ent(\varsigma)=\|\varsigma\ptrn=1$, for all $\varsigma\in\sesta$.
Thus, the bounded operator $E$ satisfies all the required conditions, and the proof is complete.
\end{proof}

\begin{remark} \label{rementawi}
To conform to the convention adopted in the literature~\cite{Horodecki-bis,Terhal}, we should
redefine our entanglement witness by considering the bounded selfadjoint operator $W=I-E\in\alghj$,
where $E\in\frsah\altp\frsaj$ is as in Corollary~\ref{corentawi}, so that the expectation value of
$W$ is strictly smaller than zero when measuring the witness on a given entangled state $D$ (whose
entanglement we wish to detect), whereas it is \emph{not} smaller than zero when measuring the
observable on any separable state. The standard proof relies on a suitable geometric version of the
Hahn-Banach theorem~\cite{Horodecki}. The advantage of our present result (and definition of the
entanglement witness) is twofold. On the one hand, we prove that a witness can always be chosen,
for every dimension of the local Hilbert spaces $\hh$ and $\jj$, in the linear space $\frsah\altp\frsaj$
of all finite linear combinations of finite-rank \emph{local observables}. On the other hand, we
show that this witness can be chosen in such a way as to estimate (by measuring it), the value
$\ent(D)>1$ of the entanglement function on the entangled state $D$ we are interested to detect,
provided that $D$ is a cross state, i.e., $\ent(D)<+\infty$.
\end{remark}

\begin{definition}[Entanglement witnesses]
We call a bounded selfadjoint operator $E\in\bophjsa$ an \emph{entanglement witness} if
$|\tr(DE)|>1$, for some entangled state $D\in\stahj\setminus\sesta$, whereas, for every
separable state $\varsigma\in\sesta$, $|\tr(\varsigma E)|\le 1$; in such a case, we say
that $E$ is \emph{an entanglement witness for $D$}. Moreover, we say that an entangled
state $D\in\stahj\setminus\sesta$ is \emph{detected} by a certain entanglement witness
$E\in\bophjsa$ if $|\tr(DE)|>1$.
\end{definition}

\begin{proposition} \label{proentwi}
If $E\in\bophjsa$ is an entanglement witness, then $\|E\tnori>1$. A bounded selfadjoint operator
$E\in\bophjsa$ is an entanglement witness if one of the following two equivalent conditions is
satisfied:
\begin{enumerate}[label=\tt{(W\arabic*)}]

\item \label{enwia}
$\|E\nog=1$ and $|\tr(DE)|>1$, for some state $D\in\stahj$ (which in then entangled).

\item  \label{enwib}
$\|E\tnori>\|E\nog=1$.

\end{enumerate}
Every entangled state $D\in\stahj\setminus\sesta$ is detected by some entanglement witness;
in particular, it can be detected by a finite-rank entanglement witness $E$ such that
$\|E\nog=1$, and, whenever $D$ is a cross state, one can further require that the expectation
value $\tr(DE)$ of the observable $E$ when measuring $D$ is arbitrarily close to $\ent(D)$,
with $1<\tr(DE)\le\ent(D)<+\infty$.
\end{proposition}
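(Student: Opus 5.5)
The plan is to handle the four claims in order: the norm bound $\|E\tnori>1$, sufficiency of \ref{enwia}, the equivalence \ref{enwia}$\iff$\ref{enwib}, and finally detection of every entangled state.

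For the first claim, let $E$ be an entanglement witness for some entangled $D$. Since $\|D\ttrn=1$, we get $1<|\tr(DE)|\le\|D\ttrn\tre\|E\tnori=\|E\tnori$. Next, assume \ref{enwia}. For every separable $\varsigma$, Theorem~\ref{chasest} gives $\|\varsigma\ptrn=1$, hence
\begin{equation}
|\tr(\varsigma E)|\le\|\varsigma\ptrn\tre\|E\nog=1 .
\end{equation}
Here I use the bound $|\tr(CL)|\le\|C\ptrn\tre\|L\nog$, which is immediate from the definition~\eqref{fornog} of $\norg$. If $|\tr(DE)|>1$ for some state $D$, then $D$ cannot be separable, so $D$ is entangled and $E$ is a witness.

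For \ref{enwia}$\Rightarrow$\ref{enwib}, apply the first claim to get $\|E\tnori>1=\|E\nog$. For \ref{enwib}$\Rightarrow$\ref{enwia}, since $E$ is selfadjoint, Fact~\ref{fabop-bis} gives
\begin{equation}
\|E\tnori=\sup\{|\tr(DE)|\colon D\in\stahj\}>1 ,
\end{equation}
so some state $D$ satisfies $|\tr(DE)|>1$. (This direction does not even need pure states.) Together these give the equivalence and show that either condition makes $E$ a witness.

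The final assertion is essentially Corollary~\ref{corentawi}. For entangled $D$, it supplies $E\in\frsah\altp\frsaj$, which is finite-rank and selfadjoint, with $\|E\nog=1$ and $\tr(DE)>1$; hence $E$ satisfies \ref{enwia} and detects $D$. When $D$ is a cross state, Corollary~\ref{corentawi} (relation~\eqref{xcondent}) with small $\epsilon$ gives
\begin{equation}
1<\ent(D)-\epsilon<\tr(DE)\le\ent(D)<+\infty ,
\end{equation}
using $\ent(D)=\|D\ptrn$ from property~\ref{entb}.

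I expect no real obstacle, since the substantial work is already contained in Theorem~\ref{chasest} and Corollary~\ref{corentawi}. The one point needing care is justifying $|\tr(\varsigma E)|\le\|\varsigma\ptrn\tre\|E\nog$, which follows from the definition of $\norg$ as the dual norm on $\thptj$.
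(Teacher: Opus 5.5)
Your proof is correct and takes essentially the same route as the paper. It bounds $|\tr(\varsigma E)|$ for separable $\varsigma$ via the duality between the projective norm and the injective norm of $\bophj$, using that separable states have unit projective norm, and it uses Fact~\ref{fabop-bis} for the equivalence of \ref{enwia} and \ref{enwib} and Corollary~\ref{corentawi} for the detection statement. The only cosmetic difference is that you get $\|E\|_{\infty}>1$ from $|\tr(DE)|\le\|D\|_{1}\|E\|_{\infty}$ instead of quoting Fact~\ref{fabop-bis}, which is equivalent.
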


\begin{proof}
If $E\in\bophjsa$ is an entanglement witness, then $|\tr(DE)|>1$, for some entangled state
$D\in\stahj\setminus\sesta$, so that, by Fact~\ref{fabop-bis}, $\|E\tnori>1$. Let us now
prove the second assertion. In fact, if $\|E\nog=1$, then $|\tr(\varsigma E)|\le 1$, for
every separable state $\varsigma\in\sesta$, because $\sup\{|\tr(\varsigma E)|\colon
\varsigma\in\sesta\}\le\sup\big\{|\tr(CE)|\colon C\in\thptj, \ \|C\ptrn=1\big\}\ifed\|E\nog=1$,
where the inequality holds by the fact that $\varsigma\in\sesta\implies\|\varsigma\ptrn=1$.
Thus, if, in addition, $|\tr(DE)|>1$, for some state $D\in\stahj$, then this state must be
entangled, and hence $E$ is an entanglement witness; moreover, by Fact~\ref{fabop-bis},
$\|E\tnori>1=\|E\nog$, and hence condition~\ref{enwia} implies~\ref{enwib}. Now, if
$E\in\bophjsa$ is such that $\|E\tnori>1$, then, again by Fact~\ref{fabop-bis}, $|\tr(DE)|>1$,
for some state $D\in\stahj$; hence, condition~\ref{enwib} implies~\ref{enwia}, and both conditions
imply that $E$ is an entanglement witness. This proves the second assertion.

The third assertion is essentially a rephrasing of Corollary~\ref{corentawi}.
\end{proof}

Let us now study the projective norm $\ptrnor$ and the entanglement function on the \emph{pure states}
$\pustahj$ (the rank-one projections on $\hhjj$). We will denote by
\begin{equation}
\hpustahj\defi\pustahj\cap\hstahj=\pustahj\cap\thptj
\end{equation}
the set of all pure states on $\hhjj$ that are \emph{cross} states.

\begin{example} \label{xexchapu}
Let $\va\in\hhjj$ be a normalized nonzero vector --- $\|\va\|=1$ --- and let
\begin{equation} \label{xschdeca-bis}
\va=\sum_{l=1}^\sra a_l\tre(\neel\otimes\nffl) \fin , \ \
1\le\sra=\srank(\va)\le\min\{\dim(\hh),\dim(\jj)\} \fin , \
\{a_l\}_{l=1}^\sra=\scfs(\va)\subset\erreps ,
\end{equation}
be the Schmidt decomposition of $\va$ (recall Fact~\ref{schmdec}). As usual, we set
$\hva\equiv\hvaa\defi\vaa\in\pustahj$. Then, we have the expansion
\begin{equation} \label{xexphva}
\hva=\sum_{k,\tre l=1}^\sra a_k\tre a_l \Big(\nhekl\otimes\nhfkl\Big), \ \
\nhekl\otimes\nhfkl\equiv\phikl\otimes\psikl=\phipsikl \fin ,
\end{equation}
which converges absolutely wrt the projective norm $\ptrnor$ if
$\sum_{k,\tre l=1}^\sra a_k\tre a_l=\sum_{k=1}^\sra a_k\sum_{l=1}^\sra a_l<\infty$ ---
i.e., if $\sum_{l=1}^\sra a_l<\infty$ --- and, otherwise, convergence wrt the trace norm
$\ttrnor$ should be understood, namely, $\hva=\ttrnsum_{k,\tre l=1}^\sra a_k\tre a_l\sei\phipsikl$.
Precisely, $\ttrnor$-convergence \textit{\`a la} Pringsheim holds; i.e., for every $\epsilon>0$,
there is some $\nep\in\nat$ such that
$\|\hva-\sum_{k=1}^m \sum_{l=1}^n a_k\tre a_l\tre(\phipsikl)\ttrn<\epsilon$, for all $m,n>\nep$.

Let us also consider the related vector
$\vc_\enne\equiv\vc_\enne(\va)=\sum_{l=1}^\enne \neel\otimes\nffl\in\hhjj$,
--- where $1\le\enne\le\sra$, if $\sra=\srank(\va)<\infty$, and $\enne\in\nat$, otherwise
--- and the associated rank-one positive operator
\begin{equation} \label{xropop}
E_\enne\equiv E_\enne(\hva)\defi\vccn=\sum_{k,\tre l=1}^\enne \nhekl\otimes\nhfkl\in
\big(\frh\altp\frj\big)\cap\big(\erreps\sei\pustahj\big) .
\end{equation}

Recalling relation~\eqref{fornog}, we easily conclude that $\|E_\enne\nog=1$, because
$\langle\phi_1\otimes\psi_1, E_\enne(\phi_1\otimes\psi_1)\rangle=1$ and, moreover, for every
$\phi,\eta\in\hh$, $\psi,\chi\in\jj$, with $\|\phi\|=\|\eta\|=\|\psi\|=\|\chi\|=1$, we have
\begin{align}
1\le|\langle\phi\otimes\psi, E_\enne(\eta\otimes\chi)\rangle|
=|\langle\phi\otimes\psi,\vc_\enne\rangle\langle\vc_\enne,\eta\otimes\chi\rangle|
& =
\big|{\textstyle \sum_{k,\tre l=1}^\enne \langle\phi\otimes\psi,\phi_k\otimes\psi_k\rangle
\langle\phi_l\otimes\psi_l,\eta\otimes\chi\rangle}\big|
\nonumber \\
& =
\big|{\textstyle \sum_{k,\tre l=1}^\enne \langle\phi,\phi_k\rangle \langle\psi,\psi_k\rangle
\langle\phi_l,\eta\rangle \langle\psi_l,\chi\rangle}\big|
\nonumber \\ \label{estinorg}
& =
{\textstyle \big|\sum_{k=1}^\enne\overline{\alpha_k}\tre\beta_k\big|
\big|\sum_{l=1}^\enne\overline{\gamma_l}\tre\delta_l\big|}\le 1 \fin .
\end{align}
Here, $\alpha_k\equiv\langle\phi_k,\phi\rangle$, $\beta_k\equiv\langle\psi,\psi_k\rangle$,
$\gamma_l\equiv\langle\eta,\phi_l\rangle$ and $\delta_l\equiv\langle\psi_l,\chi\rangle$
--- with $\sum_{k=1}^\enne|\alpha_k|^2\le\|\phi\|^2=1$, $\sum_{k=1}^\enne|\beta_k|^2\le\|\psi\|^2=1$,
$\sum_{l=1}^\enne|\gamma_l|^2\le\|\eta\|^2=1$ and $\sum_{l=1}^\enne|\delta_l|^2\le\|\chi\|^2=1$ ---
so that the final inequality follows directly from the Cauchy-Schwarz inequality.

Notice that $\|E_\enne\tnori=\|\vc_\enne\|^2=\enne$; hence, for every $\enne>1$,
$\|E_\enne\tnori>\|E_\enne\nog=1$, so that $E_\enne$ is an entanglement witness
(recall the sufficient condition~\ref{enwib} in Proposition~\ref{proentwi}).

Let us assume, at first, that $\va\in\hhjj$ is such that $\sra=\srank(\va)<\infty$; hence,
$\hva\in\hpustahj$, because, by~\eqref{xexphva}, $\hva\in\frh\altp\frj$. Exploiting the fact that,
by Theorem~\ref{neothe} (see relation~\eqref{varexps}),
\begin{equation}
\|\hva\ptrn=\sup\big\{|\tr(\hva\tre F)|=|\langle\va,F\va\rangle|\colon F\in\frh\altp\frj, \ \|F\nog=1\big\}
\end{equation}
and the triangle inequality, one concludes that, for every $\enne\le\sra$,
\begin{align}
{\textstyle \big(\sum_{l=1}^\enne a_l\big)^2}=\langle\va,\vc_\enne\rangle^2
=|\langle\va,\vc_\enne\rangle|^2=\tr(\hva\sei E_\enne)
& \le
\|\hva\ptrn
\nonumber \\
& \le
{\textstyle \sum_{k,\tre l=1}^\sra a_k\tre a_l \big\|\nhekl\otimes\nhfkl\bptrn}
\nonumber \\ \label{xesthva}
& =
{\textstyle \big(\sum_{l=1}^\sra a_l\big)^2} ,
\end{align}
where $E_\enne\in\big(\frh\altp\frj\big)\cap\big(\erreps\sei\pustahj\big)$, with $\|E_\enne\nog=1$,
is the rank-one positive operator defined by~\eqref{xropop}. Then, the estimate~\eqref{xesthva},
for $\enne=\sra=\srank(\va)$, provides a simple expression of $\|\hva\ptrn$ in terms of the Schmidt
coefficients of $\va$, i.e.,
\begin{equation} \label{pnhva}
\|\hva\ptrn=\big({\textstyle \sum_{l=1}^\sra a_l\big)^2 = 1+ \sum_{k\neq l} a_k\tre a_l}  \fin ,
\end{equation}
where, for obtaining the second equality, we have used the fact that $\sum_{k=1}^\sra a_k^2=1$.
It is clear that, if $\min\{\dim(\hh),\dim(\jj)\}<\infty$ (hence, $\sra=\srank(\va)<\infty$), this
is actually the general expression of the projective norm of a pure state in $\pustahj=\hpustahj$.

Let us now focus on the genuinely infinite-dimensional case where $\dim(\hh)=\dim(\jj)=\infty$.
We claim that, in this case, $\pustahj\supsetneq\hpustahj$; specifically, given a normalized
nonzero vector $\va\in\hhjj$ --- with $\sra=\srank(\va)\le\infty$ --- we have that
\begin{equation} \label{xclaima}
\hva\in\hpustahj \iff \sum_{l=1}^\sra a_l<\infty
\end{equation}
and, for every $\hva\in\hpustahj$, $\|\hva\ptrn=\big(\sum_{l=1}^\sra a_l\big)^2$.
Otherwise stated, $\hva\in\hpustahj$ iff either $\srank(\va)<\infty$, or $\srank(\va)=\infty$
and the sequence its Schmidt coefficients $\{a_l\}_{l\in\nat}$ is contained in $\elluno\subsetneq\elldue$
(i.e., $\{a_l\}_{l\in\nat}$ is, up to normalization, a probability distribution). As a consequence,
we have:
\begin{equation}
\pustahj\setminus\hpustahj=\big\{\hva\in\pustahj\colon\mbox{$\srank(\va)=\infty$ and
$\{a_l\}_{l\in\nat}\in\elldue\setminus\elluno$}\big\}\neq\varnothing \fin .
\end{equation}
Moreover, relation~\eqref{pnhva} actually provides the most general expression of the projective
norm of a pure state $\hva$ in $\hpustahj$.

Let us prove relation~\eqref{xclaima}. We first consider the ``if part'' of the claim.
Then, let $\va\in\hhjj$ be a normalized nonzero vector, with a Schmidt decomposition of the
form~\eqref{xschdeca-bis}, and such that $\sum_{l=1}^\sra a_l<\infty$. In this case, as previously
noted, the expansion~\eqref{xexphva} is absolutely convergent wrt the projective norm $\ptrnor$
if $\sum_{k,\tre l=1}^\sra a_k\tre a_l=\sum_{k=1}^\sra a_k\sum_{l=1}^\sra a_l<\infty$,
i.e., if $\sum_{l=1}^\sra a_l<\infty$; hence, $\hva\in\hpustahj$. Consider, now, the
``only if part'' of the claim. It is sufficient to suppose that $\hva\in\hpustahj$,
$\sra=\srank(\va)=\infty$ and $\sum_{l=1}^\infty a_l=\infty$, and show that this assumption
leads to a contradiction. To this end, let us put, for every $n\in\nat$,
\begin{equation} \label{xdefhvan}
\hva_n\defi\vaan=\sum_{k,\tre l=1}^n a_k\tre a_l \Big(\nhekl\otimes\nhfkl\Big)\in\thatj \fin , \ \
\mbox{where}\ \va_n\defi\sum_{l=1}^n a_l\tre(\neel\otimes\nffl) \fin .
\end{equation}
As previously noted, $\hva=\ttrnorli_{m,n} \vamn=\ttrnorli_n\hva_n$, whereas --- since,
by relation~\eqref{pnhva} (that, of course, admits a direct generalization to any rank-one
positive operator),
\begin{equation} \label{xliminfi}
\|\hva_n\ptrn=\big({\textstyle \sum_{l=1}^n a_l\big)^2}
\implies\lim_n\|\hva_n\ptrn=\infty
\end{equation}
--- the sequence $\big\{\hva_n\big\}_{n\in\nat}$ cannot converge to $\hva$ wrt the
projective norm. This fact, however, does not \emph{automatically} imply that
$\hva\not\in\hpustahj$. Let us prove that this is indeed the case; i.e., that we find
a contradiction wrt our initial assumption. Consider the orthogonal projection operator
\begin{equation}
P_n\defi {\textstyle \Big(\sum_{k=1}^n \nhekk\Big)\otimes\Big(\sum_{l=1}^n \nhfll\Big)}
= \sum_{k,\tre l=1}^n \nhekk\otimes\nhfll
\end{equation}
onto the finite-dimensional subspace $\lispa\{\neek\otimes\nffl\colon k,l=1,\ldots, n\}$
of $\hhjj$. Note that
\begin{equation}
\va_n=P_n\va \quad \mbox{and} \quad \hva_n=P_n\hva\sei P_n \fin ,
\end{equation}
and, moreover, $P_n=\pi_n\otimes\varpi_n$, where $\pi_n=\sum_{k=1}^n\nhekk$,
$\varpi_n=\sum_{l=1}^n\nhfll$ are orthogonal projections onto
$\lispa\{\neek\colon k=1,\ldots, n\}\subset\hh$ and $\lispa\{\nffl\colon l=1,\ldots, n\}\subset\jj$,
respectively. Hence, by Proposition~\ref{proproj}
(see relation~\eqref{reproj-bis}),
\begin{equation} \label{xlenorm}
\|\hva_n\ptrn\le\|\hva\ptrn \fin , \quad \forall\cinque n\in\nat \fin .
\end{equation}
Therefore, by~\eqref{xliminfi} and~\eqref{xlenorm}, we must conclude that if
$\srank(\va)=\infty$ and $\sum_{l=1}^\infty a_l=\infty$, then
$\hva=\ttrnorli_n\hva_n\not\in\hpustahj$.

Having proved our first claim~\eqref{xclaima}, let us next prove that, for
\emph{every} pure state $\hva\in\hpustahj$, relation~\eqref{pnhva} holds true.
By what previously proved, we know that $\sum_{l=1}^\infty a_l<\infty$. At this point,
defining the rank-one positive operator $\hva_n$ as in~\eqref{xdefhvan}, in this case we
have that $\hva=\ptrnorli_n\hva_n$ and
\begin{equation} \label{xliminfi-bis}
\|\hva_n\ptrn=\big({\textstyle \sum_{l=1}^n a_l\big)^2}\implies
\|\hva\ptrn=\lim_n\|\hva_n\ptrn=\big({\textstyle \sum_{l=1}^\infty a_l\big)^2} .
\end{equation}

Summarizing, we have now shown --- with no assumption on the dimension of the Hilbert
spaces $\hh$ and $\jj$ ---  that a pure state $\hva\in\pustahj$ belongs to
$\hpustahj\defi\pustahj\cap\hstahj$ iff $\sum_{l=1}^\sra a_l<\infty$, where
$\{a_l\}_{l=1}^\sra$ is the set of the Schmidt coefficients of $\va$, and, moreover,
for every $\hva\in\hpustahj$,
$\|\hva\ptrn=\big({\textstyle \sum_{l=1}^\sra} a_l\big)^2= 1+ \sum_{k\neq l} a_k\tre a_l$.

It is worth observing that the expansion~\eqref{xexphva} of $\hva\in\hpustahj$, which converges
absolutely wrt the projective norm $\ptrnor$, is an \emph{optimal standard decomposition} ---
recall Definition~\ref{destanda} --- because
\begin{equation}
\big\|\nhekl\otimes\nhfkl\bptrn=\big\|\nhekl\trnb\sei\big\|\nhfkl\trnb=1
\end{equation}
and $\|\hva\ptrn=\sum_{k,\tre l=1}^\sra a_k\tre a_l$.

By the previous results, and by Theorem~\ref{theoentfu}, the value of the entanglement
function~\eqref{entfun} on a pure state is now completely determined, i.e.,
\begin{equation}
\ent(\hva)=
\begin{cases}
\|\hva\ptrn=\big({\textstyle \sum_{l=1}^\sra a_l\big)^2}
&
\mbox{if $\hva\in\hpustahj\iff\sum_{l=1}^\sra a_l<\infty$}
\\
+\infty
&
\mbox{if $\hva\in\pustahj\setminus\hpustahj\iff\sum_{l=1}^\sra a_l=\infty$}
\end{cases} \quad .
\end{equation}

Note that, in the case where $\emme\min\{\dim(\hh),\dim(\jj)\}<\infty$ (hence,
$\sra=\srank(\va)\le\emme<\infty$), $\hva$ is \emph{maximally entangled} iff
$\sra=\emme$ and, moreover, $a_1=\cdots=a_\sra=1/\sqrt{s}$, and, in such a case,
$\ent(\hva)=\|\hva\ptrn=s=\emme$.

Even if the fact that the entanglement function is not finite on those pure states that
are not cross states is a consequence of Theorem~\ref{theoentfu}, it is interesting to
see a direct proof of this fact.

For every $\hva\in\pustahj$, we have:
\begin{equation} \label{xresufro}
\ent(\hva)\defi\sup\{|\tr(\hva L)|\colon L\in\bophj, \ \|L\nog=1\} \fin .
\end{equation}
Assume that $\dim(\hh)=\dim(\jj)=\infty$ and $\hva\in\pustahj\setminus\hpustahj$
(hence, $\sra=\srank(\va)=\infty$). We claim that, in this case, $\ent(\hva)=+\infty$.
In fact, by relation~\eqref{xresufro}, and by the fact that $E_\enne$, $\enne\in\nat$,
is a positive operator in $\frh\altp\frj$ such that $\|E_\enne\nog=1$, we obtain
the estimate
\begin{equation} \label{hvaesti}
\ent(\hva)\ge\tr(\hva\sei E_\enne)=|\langle\va,\vc_\enne\rangle|^2
=\langle\va,\vc_\enne\rangle^2 ={\textstyle \big(\sum_{l=1}^\enne a_l\big)^2} \fin ,
\quad \forall\cinque\enne\in\nat \fin .
\end{equation}
At this point, since
$\hva\in\pustahj\setminus\hpustahj\implies\lim_\enne\sum_{l=1}^\enne a_l=
\sum_{l=1}^\infty a_l=\infty$, we conclude that $\ent(\hva)=+\infty$, coherently with
the property~\ref{entc} in Theorem~\ref{theoentfu}.

Finally, observe that, if the pure state $\hva\in\pustahj$ is entangled ---
i.e., if $\ent(\hva)>1$ (or, equivalently, if $\sra=\srank(\va)>1$) --- then, by
the estimates~\eqref{xesthva} and~\eqref{hvaesti}, for $\enne>1$ sufficiently large,
the rank-one positive operator
$E_\enne\in\big(\frh\altp\frj\big)\cap\big(\erreps\sei\pustahj\big)$, with $\|E_\enne\nog=1$,
defined by~\eqref{xropop}, is an \emph{entanglement witness} for $\hva$, because
$\tr(\hva\sei E_\enne)>1$ (recall the sufficient condition~\ref{enwia} in
Proposition~\ref{proentwi}).
\end{example}

In Example~\ref{xexchapu}, we have actually proved the following result, which completely
characterizes the projective norm and the entanglement of pure states on $\hhjj$:

\begin{proposition} \label{xproveca-bis}
Let $\va\in\hhjj$ be a nonzero vector --- with $\|\va\|=1$ --- and let
\begin{equation} \label{xschdeca}
\va=\sum_{l=1}^\sra a_l\tre(\neel\otimes\nffl) \fin , \ \
1\le\sra=\srank(\va)\le\min\{\dim(\hh),\dim(\jj)\} \fin , \
\{a_l\}_{l=1}^\sra=\scfs(\va)\subset\erreps ,
\end{equation}
be the Schmidt decomposition of $\va$, where $\sra=\srank(\va)$ denotes the Schmidt rank of $\va$.
Let us set $\hva\equiv\hvaa\defi{\vaa}$. Then, $\hva\in\hpustahj\defi\pustahj\cap\hstahj$ iff
$\sum_{l=1}^\sra a_l<\infty$.

In the case where $\emme\equiv\min\{\dim(\hh),\dim(\jj)\}<\infty$, the projective norm and the
entanglement of the pure state $\hva\in\pustahj=\hpustahj$ are provided by the expression
\begin{equation} \label{xpnhva}
\ent(\hva)=\|\hva\ptrn={\textstyle \big(\sum_{l=1}^\sra a_l\big)^2
=1+ \sum_{k\neq l} a_k\tre a_l} \fin .
\end{equation}
Hence, in this case, the projective norm and the entanglement of any pure state $\hva\in\pustahj$
satisfy the relations
\begin{equation} \label{xrelveca}
1\le\ent(\hva)=\|\hva\ptrn\le\sra=\srank(\va)\le\emme \fin ,
\end{equation}
where all inequalities can be saturated. In particular, $\hva$ is separable iff
$\ent(\hva)=\|\hva\ptrn=\sra=1$, whereas it is entangled iff
$1<\ent(\hva)=\|\hva\ptrn\le\sra=\srank(\va)\le\emme$.

Suppose now that $\dim(\hh)=\dim(\jj)=\infty$. Then, the set
\begin{equation}
\hpustahj\subsetneq\pustahj
\end{equation}
of those pure states on $\hhjj$ that are cross states consists of those states of the form
$\hva\defi\vaa$, where $\va\in\hhjj$ --- $\|\va\|=1$ --- is such that either $\srank(\va)<\infty$,
or $\srank(\va)=\infty$ and the sequence of its Schmidt coefficients $\{a_l\}_{l\in\nat}$ is contained
in $\elluno\subsetneq\elldue$ {\rm (i.e., $\sum_{l=1}^\infty a_l<\infty$ and $\sum_{l=1}^\infty a_l^2=1$)}.
Moreover, for every $\hva\in\hpustahj$, and given the Schmidt decomposition~\eqref{xschdeca},
we have that
\begin{equation} \label{xrenohva}
{\textstyle 1\le\ent(\hva)=\|\hva\ptrn=\big(\sum_{l=1}^\sra a_l\big)^2}<\infty  \fin ,
\end{equation}
whereas, for every $\hva\in\pustahj\setminus\hpustahj\neq\varnothing$, $\ent(\hva)=+\infty$.
In this case, $\hva$ is separable iff $\ent(\hva)=\|\hva\ptrn=\sra=1$, whereas it is entangled
iff $1<\ent(\hva)\le\sra=\srank(\va)\le\infty$.

Finally, every element of $\hpustahj$, for any dimension of the Hilbert spaces $\hh$
and $\jj$, is $\ptrnor$-optimally decomposable.
\end{proposition}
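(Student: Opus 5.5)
The proof will collect the computations of Example~\ref{xexchapu}, organized as follows. I fix the Schmidt decomposition~\eqref{xschdeca} and the expansion~\eqref{xexphva} of $\hva$ in the rank-one partial isometries $\nhekl\otimes\nhfkl$, each of projective norm~$1$. The lower bound comes from the witness $E_\enne=|\vc_\enne\rangle\langle\vc_\enne|$ of~\eqref{xropop}, with $\vc_\enne=\sum_{l=1}^\enne\neel\otimes\nffl$. First I show $\|E_\enne\nog=1$ via the third line of~\eqref{fornog}: the matrix element factors as $\big(\sum_k\overline{\alpha_k}\beta_k\big)\big(\sum_l\overline{\gamma_l}\delta_l\big)$, each factor is at most $1$ by Cauchy--Schwarz, and the value $1$ is attained at $\phi_1\otimes\psi_1$. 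Since $E_\enne\in\frh\altp\frj$, Theorem~\ref{neothe} (the $\frh\altp\frj$ line of~\eqref{varexps}) and definition~\eqref{dentfun} give
\[
\Big(\textstyle\sum_{l=1}^\enne a_l\Big)^2=\tr(\hva E_\enne)\le \ent(\hva)
\]
for every admissible $\enne$, together with $\tr(\hva E_\enne)\le\|\hva\ptrn$ whenever $\hva$ is a cross state.

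For the characterization of $\hpustahj$, the ``if'' direction is immediate. If $\sum_l a_l<\infty$, then~\eqref{xexphva} is a standard decomposition with total weight $\big(\sum_l a_l\big)^2<\infty$. By Corollary~\ref{corcro} it follows that $\hva\in\thptj$ and $\|\hva\ptrn\le(\sum a_l)^2$. Combining this with the lower bound as $\enne\to\sra$ yields~\eqref{xpnhva} and~\eqref{xrenohva}, and it also shows that~\eqref{xexphva} is an optimal standard decomposition, which gives the final assertion. For the ``only if'' direction, suppose $\sum_l a_l=\infty$; this forces $\sra=\infty$ and hence $\dim(\hh)=\dim(\jj)=\infty$. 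The lower bound then gives $\ent(\hva)=+\infty$, so $\hva\notin\hstahj$ by property~\ref{entb} of Theorem~\ref{theoentfu}. As an alternative, one can use the compressions $P_n=\pi_n\otimes\varpi_n$ and Proposition~\ref{proproj}: they give $\|\hva\ptrn\ge\|P_n\hva P_n\ptrn=(\sum_{l\le n}a_l)^2\to\infty$, a contradiction. Nonemptiness of $\pustahj\setminus\hpustahj$ follows by choosing $a_l\propto 1/l$, which lies in $\elldue\setminus\elluno$.

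The remaining inequalities in~\eqref{xrelveca} are elementary. Since $\sum a_l^2=1$, we have $(\sum a_l)^2=1+\sum_{k\ne l}a_ka_l\ge1$. By Cauchy--Schwarz, $(\sum_{l=1}^\sra a_l)^2\le \sra\sum a_l^2=\sra$. Both bounds are saturated: the first for $\sra=1$, the second for equal coefficients $a_l=1/\sqrt{\sra}$ with $\sra=\emme$. Finally, equality $(\sum a_l)^2=1$ holds iff $\sra=1$. By Corollary~\ref{corsestap} this is exactly separability of $\hva$, and otherwise Theorem~\ref{xextcnc} and property~\ref{ente} give $\ent(\hva)>1$. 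The main obstacle is the exact computation $\|E_\enne\nog=1$, on which every lower bound rests. The remaining steps are bookkeeping with earlier results.
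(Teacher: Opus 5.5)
Your proposal is correct and follows essentially the same route as the paper, whose proof is the computation in Example~\ref{xexchapu}. There, the witness $E_\enne$ with $\|E_\enne\nog=1$ gives the lower bound, the expansion~\eqref{xexphva} gives the upper bound and optimality, and the ``only if'' direction is handled by the compressions $P_n\hva P_n$ and Proposition~\ref{proproj} (the paper's primary argument and your alternative), with the witness bound forcing $\ent(\hva)=+\infty$ also given in the paper as a direct check; the only minor difference is that for infinite Schmidt rank you get $\|\hva\ptrn=(\sum_l a_l)^2$ by letting $\enne\to\infty$ in the witness bound, whereas the paper uses $\hva=\ptrnorli_n\hva_n$ and continuity of the norm.
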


The preceding results allow us to prove the following important facts:

\begin{lemma} \label{leminfen}
Supposing that $\dim(\hh)=\dim(\jj)=\infty$, let $P\in\pustahj$ be such that $\ent(P)=+\infty$; i.e.,
let $P\in\pustahj\setminus\hpustahj$. Then, for every $D_0\in\stahj$ and every $p\in(0,1]$, the density
operator $D=p\tre P+(1-p)\tre D_0$ is such that $\ent(D)=+\infty$; i.e., $D\in\stahj\setminus\hstahj$.
\end{lemma}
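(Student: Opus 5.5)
The plan is to argue by contradiction, using the witnesses $E_\enne$ from Example~\ref{xexchapu} and the positivity of $D_0$. Write $P=\hva$ with $\va=\sum_{l=1}^\infty a_l(\neel\otimes\nffl)$ its Schmidt decomposition. Since $P\notin\hpustahj$, Proposition~\ref{xproveca-bis} gives $\srank(\va)=\infty$ and $\sum_l a_l=\infty$. For each $\enne\in\nat$, consider the rank-one positive operator $E_\enne=|\vc_\enne\rangle\langle\vc_\enne|\in\frh\altp\frj$ defined in~\eqref{xropop}. Example~\ref{xexchapu} shows that $\|E_\enne\nog=1$ and that
\[
\tr(P E_\enne)=\Big(\sum_{l=1}^\enne a_l\Big)^2 .
\]

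The key step is linearity of the trace together with $E_\enne\ge 0$. Indeed,
\[
\tr(D E_\enne)=p\,\tr(PE_\enne)+(1-p)\,\tr(D_0E_\enne)\ge p\Big(\sum_{l=1}^\enne a_l\Big)^2 ,
\]
because $\tr(D_0E_\enne)=\langle\vc_\enne,D_0\vc_\enne\rangle\ge 0$. Since $E_\enne$ is admissible in definition~\eqref{dentfun} (it lies in $\bophj$ and has $\|E_\enne\nog=1$), we get $\ent(D)\ge p(\sum_{l=1}^\enne a_l)^2$ for every $\enne$. Letting $\enne\to\infty$ and using $p>0$ gives $\ent(D)=+\infty$.

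Finally, $D$ is a density operator, being a convex combination of density operators. By property~\ref{entb} of Theorem~\ref{theoentfu}, $\ent$ is finite (equal to $\|D\ptrn$) on $\hstahj$, so $\ent(D)=+\infty$ forces $D\in\stahj\setminus\hstahj$; equivalently one may cite~\ref{entc}. There is no real obstacle. The one point to watch is that convexity of $\ent$ (property~\ref{enta}) only gives an upper bound, so the lower bound has to come from a single witness sequence on which the $D_0$ term has a sign. The positivity of the $E_\enne$ supplies exactly that.
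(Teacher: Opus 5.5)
Your proof is correct and follows the paper's argument essentially step for step: you use the same rank-one positive operators $E_\enne$ with $\|E_\enne\nog=1$, and the same bound $\ent(D)\ge\tr(DE_\enne)\ge p\big(\sum_{l=1}^\enne a_l\big)^2$ obtained from $\tr(D_0E_\enne)\ge 0$. You also conclude via Theorem~\ref{theoentfu} as the paper does, and you rightly note that property~\ref{entb} is what supplies the needed direction, whereas~\ref{entc} by itself gives only the converse.
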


\begin{proof}
Suppose that $\dim(\hh)=\dim(\jj)=\infty$ and the rank-one projection $P=\hva\equiv\vaa$ ---
for some normalized nonzero vector $\va\in\hhjj$ --- \emph{does not} belong to $\hpustahj$, and let
\begin{equation} \label{xysdeva}
\va=\sum_{l=1}^\sra a_l\tre(\neel\otimes\nffl) \fin , \quad
\sra=\srank(\va) \fin , \ \{a_l\}_{l=1}^\sra=\scfs(\va)\subset\erreps ,
\end{equation}
be the Schmidt decomposition of $\va$. Then, by Proposition~\ref{xproveca-bis}, we must have
that $\sra=\srank(\va)=\infty$ and, moreover,
$\hva\not\in\hpustahj \implies \sum_{l=1}^\infty a_l=\infty$.
Now, by property~\ref{entf} in Theorem~\ref{theoentfu}, we have:
\begin{equation} \label{xyzresufro}
\ent(D)=\sup\{\tr(DF)\colon F\in\frh\altp\frj, \ F=F^\ast, \ \|F\nog=1\} \fin .
\end{equation}
To exploit this useful expression of $\|D\ptrn$, let us consider, as in Example~\ref{xexchapu},
the vector $\vc_\enne=\sum_{l=1}^\enne \neel\otimes\nffl\in\hhjj$, $\enne\in\nat$, and the
associated rank-one positive operator
\begin{equation} \label{deranko}
E_\enne\defi\vccn=\sum_{k,\tre l=1}^\enne \nhekl\otimes\nhfkl\in
\big(\frh\altp\frj\big)\cap\big(\erreps\sei\pustahj\big) .
\end{equation}
As argued in Example~\ref{xexchapu} (see~\eqref{estinorg}), $\|E_\enne\nog=1$. Therefore,
by relation~\eqref{xyzresufro}, and by the fact that $E_\enne$ is a positive operator in
$\frh\altp\frj$ such that $\|E_\enne\nog=1$, for the density operator $D=p\tre P+(1-p)\tre D_0$
--- with $D_0\in\stahj$ and $p\in(0,1]$ --- we obtain the estimate
\begin{align}
\ent(D)\ge\tr(D\tre E_\enne)
& =
p\sei\tr(P\tre E_\enne)+(1-p)\tre\tr(D_0\tre E_\enne)
\nonumber \\ \label{xestenne}
& \ge
p\sei\tr(P\tre E_\enne)=p\sei|\langle\va,\vc_\enne\rangle|^2
=p\sei\langle\va,\vc_\enne\rangle^2
={\textstyle p\sei\big(\sum_{l=1}^\enne a_l\big)^2} \fin , \quad
\forall\cinque\enne\in\nat \fin .
\end{align}
At this point, since $\lim_\enne\sum_{l=1}^\enne a_l=\sum_{l=1}^\infty a_l=\infty$ and
$p>0$, by the previous estimate we see that $\ent(D)=+\infty$; i.e., by property~\ref{entc}
in Theorem~\ref{theoentfu}, $D\in\stahj\setminus\hstahj$.
\end{proof}

\begin{theorem} \label{xthespec}
Let $D=\sum_j p_j\tre P_j$ --- with $p_j>0$, $\sum_j p_j=1$ and $P_j\in\pustahj$ --- be the spectral
decomposition of a cross state $D\in\hstahj$, expressed in terms of minimal projections (and converging
wrt the trace norm $\ttrnor$). Then, we have that
\begin{equation}
\{P_j\}\subset\hpustahj\defi\pustahj\cap\hstahj .
\end{equation}
\end{theorem}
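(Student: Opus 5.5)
The plan is to argue by contradiction, using Lemma~\ref{leminfen} directly. By Theorem~\ref{pronatp} the cases of interest are those with $\dim(\hh)=\dim(\jj)=\infty$. Indeed, if $\min\{\dim(\hh),\dim(\jj)\}<\infty$, Theorem~\ref{theine} gives $\pustahj=\hpustahj$, and then the claim is trivial. So we assume $\dim(\hh)=\dim(\jj)=\infty$, and we suppose that some minimal projection $P_{j_0}$ appearing in the spectral decomposition $D=\sum_j p_j\tre P_j$ is not a cross state, i.e., $P_{j_0}\in\pustahj\setminus\hpustahj$.

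Set $p\defi p_{j_0}\in(0,1]$. If $p=1$, then $D=P_{j_0}$, which already contradicts $D\in\hstahj$. Otherwise, define
\begin{equation}
D_0\defi(1-p)^{-1}\sum_{j\neq j_0}p_j\tre P_j \fin .
\end{equation}
This is a density operator: the series converges in trace norm, its terms are positive, and its trace equals $(1-p)^{-1}\sum_{j\neq j_0}p_j=1$. Since $\stahj$ is closed, $D_0\in\stahj$. We then have $D=p\tre P_{j_0}+(1-p)\tre D_0$, so Lemma~\ref{leminfen} applies and yields $\ent(D)=+\infty$, i.e., $D\in\stahj\setminus\hstahj$. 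On the other hand, by property~\ref{entb} of Theorem~\ref{theoentfu}, $\ent(D)=\|D\ptrn<\infty$ because $D\in\hstahj$. This is a contradiction, so $P_{j_0}\in\hpustahj$, and since $j_0$ was arbitrary, $\{P_j\}\subset\hpustahj$.

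The only substantive ingredient is Lemma~\ref{leminfen}, which is already available; everything else is bookkeeping. In particular, I do not expect a real obstacle. The one point that needs checking is that positivity of the remaining spectral part makes $D_0$ a legitimate state, which is what allows us to drop the term $(1-p)\tre\tr(D_0E_\enne)\ge 0$ in the lemma's estimate. Degenerate eigenvalues cause no trouble, since the decomposition into minimal projections is used as given.
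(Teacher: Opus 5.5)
Your proof is correct and follows the paper's own argument. You reduce to $\dim(\hh)=\dim(\jj)=\infty$ via Theorem~\ref{theine}, write $D=p\tre P_{j_0}+(1-p)\tre D_0$ for an offending rank-one spectral projection, and invoke Lemma~\ref{leminfen} to get $\ent(D)=+\infty$, which contradicts $D\in\hstahj$. Your separate handling of $p=1$, where the paper's $D_0$ would divide by zero, is a small improvement; the opening reference to Theorem~\ref{pronatp} should simply be to Theorem~\ref{theine}.
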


\begin{proof}
Our first claim is obvious in the case where $\min\{\dim(\hh),\dim(\jj)\}<\infty$, because,
in this case, $\stahj=\hstahj$ and $\pustahj=\hpustahj$.

Let us then suppose that $\dim(\hh)=\dim(\jj)=\infty$, and let $D=\sum_j p_j\tre P_j\in\hstahj
\subsetneq\stahj$, where the set $\{P_j\}$ consists of minimal --- i.e., rank-one --- spectral
projections of $D$. Reasoning by contradiction, assume that there is some rank-one projection
$P_m=\hva\equiv\vaa\in\{P_j\}$ (for some normalized nonzero vector $\va\in\hhjj$), that
\emph{does not} belong to $\hpustahj$. Then, we have that
$D=p_m\tre P_m + (1-p_m)\tre D_0$, where $D_0=\sum_{j\neq m} \frac{p_j}{1-p_m}\tre P_j\in\stahj$.
By Lemma~\ref{leminfen}, it follows that $\ent(D)=+\infty$; i.e., $D\in\stahj\setminus\hstahj$.
Thus, we are led to a contradiction wrt our initial assumption that $D\in\hstahj$ and there exists
some element $P_m$ of the set of rank-one projections $\{P_j\}$ such that $P_m\not\in\hpustahj$. Thus,
actually, $\{P_j\}\subset\hpustahj$.
\end{proof}

We are now able to characterize the extreme points of the convex set $\hstahj$.

\begin{corollary}
$\hpustahj\defi\pustahj\cap\hstahj=\ext(\hstahj)$.
\end{corollary}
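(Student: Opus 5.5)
We prove the two inclusions $\hpustahj\subset\ext(\hstahj)$ and $\ext(\hstahj)\subset\hpustahj$ separately.

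For the first inclusion we apply Fact~\ref{factext} with $\subh=\hstahj$ and $\subj=\stahj$. Both are convex, and $\hstahj\subset\stahj$. Fact~\ref{extdens} gives $\ext(\stahj)=\pustahj$. Hence
\[
\ext(\stahj)\cap\hstahj=\pustahj\cap\hstahj\ifed\hpustahj\subset\ext(\hstahj).
\]
Concretely, if a cross pure state were a nontrivial convex combination of two cross states, it would also be a nontrivial convex combination of two states in $\stahj$. This contradicts the fact that it is an extreme point of $\stahj$.

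For the reverse inclusion we argue by contraposition using Theorem~\ref{xthespec}. Let $D\in\hstahj$ be a cross state that is not pure, i.e., of rank at least two. Take its spectral decomposition $D=\sum_j p_j\tre P_j$ into rank-one projections, with $p_j>0$, $\sum_j p_j=1$, and at least two indices. By Theorem~\ref{xthespec}, every $P_j$ lies in $\hpustahj\subset\hstahj$. Fix one index $m$. Then $0<p_m<1$, and we write
\[
D=p_m\tre P_m+(1-p_m)\tre D_0, \qquad D_0\defi\sum_{j\neq m}\frac{p_j}{1-p_m}\tre P_j .
\]
We need $D_0\in\hstahj$. If the spectral decomposition is finite, $D_0$ is a finite convex combination of cross states, so it lies in the convex set $\hstahj$. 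If it is infinite, we use Theorem~\ref{xthespec} differently. Since $D$ is a cross state, $\ent(D)<\infty$. The subtraction $(1-p_m)D_0=D-p_mP_m$ is a difference of two elements of $\thptj$, hence lies in $\thptj$; therefore $D_0\in\stahj\cap\thptj=\hstahj$.

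Finally, $D_0\neq P_m$, because $D_0$ is supported on the orthogonal complement of $\ran(P_m)$. So $D$ is a nontrivial convex combination of two distinct cross states and is not an extreme point of $\hstahj$. Consequently $\ext(\hstahj)\subset\pustahj\cap\hstahj=\hpustahj$.

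The only delicate point is showing $D_0\in\hstahj$ in the infinite-rank case. This is settled by the linear-subspace argument above, which relies on Theorem~\ref{xthespec} to place $P_m$ in $\thptj$. The rest follows directly from the earlier results.
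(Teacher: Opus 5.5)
Your proof is correct and follows the paper's argument essentially verbatim. The first inclusion comes from Fact~\ref{factext} together with Fact~\ref{extdens}. For the reverse inclusion you split off one rank-one spectral projection $P_m$, which lies in $\hpustahj$ by Theorem~\ref{xthespec}, and note that $D_0=(1-p_m)^{-1}(D-p_mP_m)$ lies in $\stahj\cap\thptj$ because $\thptj$ is a linear space.

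The differences are cosmetic. The paper applies the linear-subspace argument to all ranks at once, so your split into finite and infinite spectral decompositions is unnecessary, and the remark about $\ent(D)<\infty$ is superfluous. You also make explicit the check $D_0\neq P_m$, which the paper leaves implicit.
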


\begin{proof}
First note that, since $\hstahj\subset\stahj$, then, by Fact~\ref{factext}, we have:
$\hpustahj\defi\pustahj\cap\hstahj=\ext(\stahj)\cap\hstahj\subset\ext(\hstahj)$.
Then, to prove our claim, we need to prove the reverse containment relation, i.e.,
that $\hpustahj\supset\ext(\hstahj)$, as well. In fact, let $D\in\hstahj\setminus\hpustahj$,
and let $D=\sum_j p_j\tre P_j$ --- with $p_j>0$, $\sum_j p_j=1$ and $P_j\in\pustahj$ --- be
the spectral decomposition of the cross state $D$, expressed in terms of rank-one projections.
Note that, here, the set $\{P_j\}$ must contain at least two elements, because, otherwise, we
would have that $D=P_1\in\hpustahj$, which would contradict our preceding assumption. Thus, we
can write $D=p_1\tre P_1 + (1-p_1)\tre D_1$, where $D_1=\sum_{j> 1} \frac{p_j}{1-p_1}\tre P_j\in\stahj$.
But actually, since $D$ is a cross state, then, by Theorem~\ref{xthespec}, $\{P_j\}\subset\hpustahj$.
Thus, in particular, $P_1\in\hstahj$, and hence
\begin{equation}
D_1=\frac{1}{1-p_1}\sei D+\frac{p_1}{p_1-1}\sei P_1\in\hstahj=\stahj\cap\thptj \fin ,
\end{equation}
as well. Thus, $D\not\in\ext(\hstahj)$, because it can be expressed as a nontrivial convex
combination of the cross states $P_1$ and $D_1$, and hence $\ext(\hstahj)\subset\hpustahj$.
\end{proof}


\end{document}